\def\mathclap#1{\text{\hbox to 0pt{\hss$\mathsurround=0pt#1$\hss}}}
\newcommand{\Z}{\mathbb{Z}}
\newcommand{\N}{\mathbb{N}}
\newcommand{\R}{\mathbb{R}}
\newcommand{\C}{\mathbb{C}}
\newcommand{\CH}{\mathcal{CH}^+}
\newcommand{\Hp}{\mathcal{H}^+}
\newcommand{\rred}{r_\mathrm{red}}
\newcommand{\T}{\mathcal{T}}
\newcommand{\M}{\underline{\overline{\mathcal{M}}}}
\newcommand{\Mun}{\underline{\mathcal{M}}}
\newcommand{\vol}{\mathrm{vol}}
\newcommand{\Si}{\mathscr{S}}
\newcommand{\Hpl}{\mathcal{H}^+_\ell}
\newcommand{\Hpr}{\mathcal{H}^+_r}
\newcommand{\Rea}{\mathfrak{Re}}
\newcommand{\Sp}{\mathbb{S}}
\newcommand{\Li}{\mathcal{L}}
\newcommand{\SWs}{\mathscr{I}_{[s]}^\infty(\Sp^2)} 
\newcommand{\vols}{\mathrm{vol}_{\mathbb{S}^2}}
\newcommand{\hp}{\hat{\psi}}
\newcommand{\eai}{\underset{a.i.}{=}}
\newcommand{\swl}{\mathring{\slashed{\Delta}}_{[s]}}
\newcommand{\zt}{\widetilde{Z}}
\newcommand{\ered}{r_{\mathrm{ered}}}
\newcommand{\Sml}{S^{[s]}_{ml}}
\newcommand{\mms}{\max(|m|, |s|)}
\newcommand{\Smlp}{S^{[s]}_{ml'}}
\newcommand{\rd}{\partial}
\newcommand{\da}{\dot{\alpha}}
\begin{document}

\numberwithin{equation}{section}
\newtheorem{theorem}[equation]{Theorem}
\newtheorem{remark}[equation]{Remark}
\newtheorem{assumption}[equation]{Assumption}
\newtheorem{claim}[equation]{Claim}
\newtheorem{lemma}[equation]{Lemma}
\newtheorem{definition}[equation]{Definition}
\newtheorem{corollary}[equation]{Corollary}
\newtheorem{proposition}[equation]{Proposition}
\newtheorem*{theorem*}{Theorem}
\newtheorem*{conjecture}{Strong cosmic censorship conjecture}

\setcounter{tocdepth}{3}

\title{Instability of the Kerr Cauchy horizon under linearised gravitational perturbations}
\author{Jan Sbierski\thanks{School of Mathematics, 
University of Edinburgh,
James Clerk Maxwell Building,
Peter Guthrie Tait Road, 
Edinburgh, 
EH9 3FD,
United Kingdom}}
\date{\today}

\maketitle

\begin{abstract}
This paper establishes a mathematical proof of the blue-shift instability at the sub-extremal Kerr Cauchy horizon for the linearised vacuum Einstein equations. More precisely, we exhibit conditions on the $s=+2$ Teukolsky field, consisting of suitable integrated upper and lower bounds on the decay along the event horizon, that ensure that the Teukolsky field, with respect to a frame that is regular at the Cauchy horizon, becomes singular. The conditions are in particular satisfied by solutions of the Teukolsky equation arising from generic and compactly supported initial data by the recent work \cite{MaZha21} of Ma and Zhang for slowly rotating Kerr. 
\end{abstract}

\tableofcontents

\section{Introduction}

The sub-extremal Kerr solution of the vacuum Einstein equations 
$$Ric(g) = 0$$
models a stationary and rotating black hole, devoid of any gravitational radiation. While we expect that the exterior is stable if small gravitational radiation is taken into account\footnote{See \cite{KlaiSzef17}, \cite{DafHolRodTay21}, \cite{KlaiSze21} for recent results on the black hole stability problem.}, heuristics going back to Penrose \cite{Pen68} indicate that the interior is subject to a blue-shift instability: gravitational radiation entering the black hole builds up at the Cauchy horizon $\CH$ and leads to the formation of a singularity. 
\begin{figure}[h]
  \centering
 \def\svgwidth{8cm}
    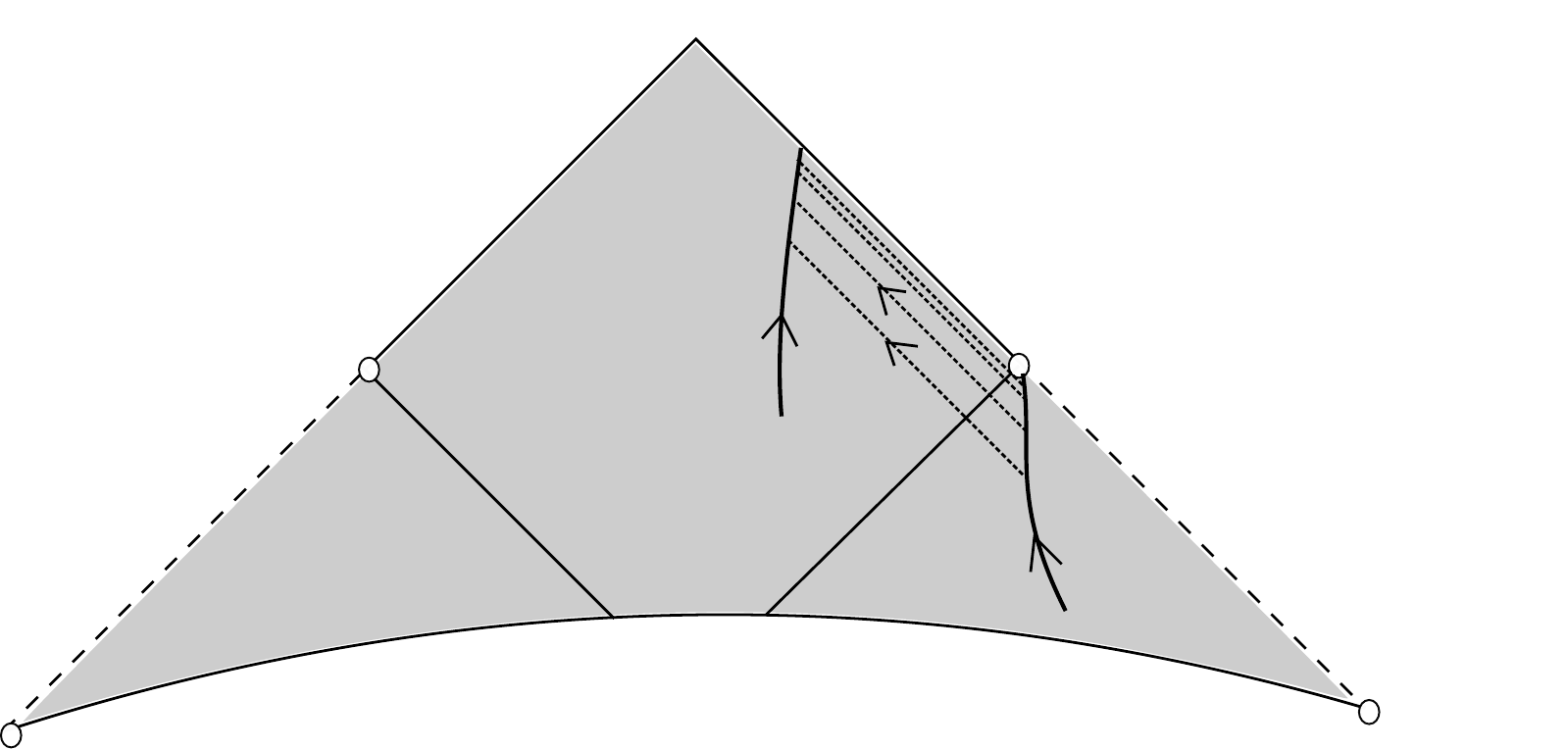
      \caption{The blue-shift effect. For observer $A$ an infinite time passes, while observer $B$ reaches the Cauchy horizon in finite time; signals sent by $A$ are received by $B$ shifted to the blue.} \label{FigHeu}
\end{figure}
Although the full resolution of this conjecture is still open, a large body of research concerning simplified models has since lent support to the validity of this scenario. The first class of simplified models we would like to mention here concerns non-linear spherically symmetric perturbations of the sub-extremal Reissner-Nordstr\"om black hole, which also possesses a Cauchy horizon in its interior that is subject to a blue-shift instability. The works by Hiscock \cite{His81}, Poisson-Israel \cite{PoiIs89}, \cite{PoiIs90}, and Ori \cite{Ori91} investigate and prove this blue-shift instability for the spherically symmetric Einstein-Maxwell-null dust system and the works of Dafermos \cite{Daf03}, \cite{Daf05a} and of Luk-Oh \cite{LukOh19I}, \cite{LukOh19II}\footnote{See also \cite{LukOh15} for the linearised case.}  do so for the spherically symmetric Einstein-Maxwell-scalar field system. The second class of simplified models  are linear models on a Kerr background -- and in particular the linear scalar wave equation which serves as a ``poor man's linearisation" of the vacuum Einstein equations. The study was initiated by McNamara \cite{McNam78}, who indeed also considers gravitational perturbations. Results of a similar nature for the scalar wave equation were proven by Dafermos-Shlapentokh--Rothman \cite{DafShla17} and in \cite{Sbie13b}. These results all have in common that they only ensure the abstract existence of solutions that become singular at the Cauchy horizon, but they do not provide explicit criteria that ensure that a particular solution becomes singular. This gap was filled for the scalar wave equation in collaboration with Luk in \cite{LukSbi15}, which shows that under the assumption of suitable upper and lower bounds on the decay along the event horizon, the energy of the scalar field becomes unbounded at the Cauchy horizon. (The wave itself remains bounded \cite{Hintz17}, \cite{Fra20}.) It was later shown by Hintz \cite{Hintz20} and Angelopoulos-Aretakis-Gajic \cite{AAG21} that the assumed bounds on the event horizon are generically satisfied. 

The present work makes the step from the scalar wave equation to linearised gravitational perturbations in the form of the Teukolsky field \cite{Teuk73}. Analogously to \cite{LukSbi15} we exhibit conditions on the Teukolsky field along the event horizon, consisting of integrated upper and lower bounds on the decay, which ensure the blow-up of the Teukolsky field at the Cauchy horizon. More precisely, we show
\begin{theorem}\label{ThmInt} 
Assume $\psi$ satisfies the Teukolsky equation with $s=+2$ and, along the event horizon $\Hp$,
\begin{itemize}
\item assume that there exists $p \in \N$ s.t.\ $\int_{\Hp \cap \{v_+ \geq 1\}} v_+^{2p} |\psi|^2 \, \vols dv_+ = \infty\;.$ 
Let $p_0$ be the smallest such integer and assume $p_0 \geq 2$,
\item $\int_{\Hp \cap \{v_+ \geq 1\}} v_+^{2p_0} |\psi_{S(m_0l_0)}|^2 \, dv_+ = \infty$ for some $m_0 \in \mathbb{Z}$, $l_0 \geq \max\{2, |m_0|\}$, where $\psi_{S(ml)}$ denotes the projection of $\psi$ on the $(m,l)$ spin $2$-weighted \emph{spherical} harmonic,
\item $\int_{\Hp \cap \{v_+ \geq 1\}} v_+^{2p_0} | \rd_{v_+} \psi|^2 \, \vols dv_+ < \infty$,
\item $\int_{\Hp \cap \{v_+ \geq 1\}} v_+^{q_r} |\partial^k \psi|^2 \, \vols dv_+ < \infty$ for some $2 < q_r < 2p_0$ with $q_r \in \R$ and for all $k = 0, 1, \ldots, 7$.\footnote{See assumption \eqref{AssumpDecayRHp} on page \pageref{AssumpDecayRHp} for the precise statement.} 
\end{itemize}
It then follows that 
\begin{equation}
\label{EqConCCCC}
\int_{\Sigma \cap \{v_+ \geq 1\}} v_+^{2p_0} | \psi|^2 \, \vols dv_+ = \infty\;,
\end{equation} where $\Sigma$ is a hypersurface transversal to $\CH$ as in Figure \ref{FigInt}.
\end{theorem}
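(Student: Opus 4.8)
The plan is to transport the blow-up from the event horizon to the transversal hypersurface $\Sigma$ via an energy-current / integration-by-parts argument along the characteristic region between $\Hp$, $\CH^+$, and $\Sigma$. The heart of the matter is the *blue-shift* mechanism: the Teukolsky equation, written in a frame regular at the Cauchy horizon, carries exponentially growing weights as one approaches $\CH^+$, and the strategy is to prove that the weighted energy flux through $\Sigma$ must be at least as large (up to lower-order corrections and constants) as the weighted flux through $\Hp$. Concretely, I would fix a single mode $\psi_{S(m_0 l_0)}$ — the projection onto the $(m_0, l_0)$ spin-weighted *spherical* harmonic, which by the second hypothesis carries the full $v_+^{2p_0}$-divergence — and reduce the problem to a $1{+}1$-dimensional transport/wave estimate for this mode. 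Let me sketch the main steps.

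\textbf{Step 1 (mode reduction and separation).} Decompose $\psi$ into spin-weighted spherical harmonics and isolate the critical mode $\psi_{S(m_0 l_0)}$. The subtlety, flagged by the distinction between \emph{spheroidal} and \emph{spherical} harmonics in the hypotheses, is that the Teukolsky operator separates in spheroidal harmonics, not spherical ones; so I would either work with a fixed azimuthal number $m_0$ (which \emph{is} preserved) and control the coupling between nearby $\ell$'s, or establish that the spherical projection $\psi_{S(m_0 l_0)}$ inherits the divergence from the finitely-many nearby spheroidal modes. \textbf{Step 2 (upper-bound transport from $\Hp$ to $\Sigma$).} Using the upper bounds — the third hypothesis controlling $\int v_+^{2p_0}|\partial_{v_+}\psi|^2$ and the fourth giving $q_r$-weighted bounds on $\psi$ and up to seven derivatives — propagate decay estimates for the transversal derivative and the angular/lower-order data into the interior. \textbf{Step 3 (the blue-shift lower bound).} Integrate the Teukolsky equation for the critical mode along ingoing null directions from $\Hp$ toward $\CH^+$ and then toward $\Sigma$, exhibiting the exponential blue-shift factor; the lower bound $\int v_+^{2p_0}|\psi_{S(m_0 l_0)}|^2 = \infty$ forces the weighted energy on $\Sigma$ to diverge, \emph{provided} the upper-bound error terms from Step 2 can be absorbed.

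\textbf{The main obstacle} I expect is Step 3 combined with the harmonic-coupling issue of Step 1: one must show that the contribution of the diverging mode genuinely survives transport to $\Sigma$ and is \emph{not} cancelled by the finitely-many other modes or by the transversal-derivative terms. This is exactly why the hypotheses are structured as a competition between a \emph{single} lower bound (the $v_+^{2p_0}$-divergence of $\psi$ and of the critical spherical mode) and \emph{uniform} upper bounds (the $v_+^{2p_0}$-finiteness of $\partial_{v_+}\psi$ and the sub-threshold $q_r < 2p_0$ control of all low derivatives): the upper bounds guarantee that every quantity appearing as an error in the energy identity — in particular the full angular Laplacian coupling and the ingoing-derivative flux — is integrable after transport, so that only the lower-bounded term can account for the infinite flux. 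The technical crux is therefore a careful \emph{weighted energy estimate} in which the exponential growth of the regular-frame rescaling near $\CH^+$ is matched against the polynomial $v_+$-weights, ensuring the blue-shift amplifies rather than merely preserves the divergence. I would organise the argument so that the contradiction is reached by assuming \eqref{EqConCCCC} fails, deriving from the finite $\Sigma$-flux (via the transport estimate run in reverse) a finite $\Hp$-flux for the critical mode, and contradicting the second hypothesis.
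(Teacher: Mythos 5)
Your proposal takes a genuinely different route from the paper --- a physical-space weighted-energy/transport argument --- but as described it has a gap at its central step, and the gap is not a technicality. Energy currents and integration by parts give \emph{upper} bounds; they cannot by themselves convert the lower bound on $\Hp_r$ into a lower bound on $\Sigma$, because nothing in such an estimate prevents the slowly-decaying part of the horizon data from being annihilated (or strongly damped) by the evolution before it reaches the Cauchy horizon. The paper's proof is scattering-theoretic precisely to supply the missing non-cancellation ingredient: after separating variables (Fourier in $v_+$ plus spin-weighted \emph{spheroidal} harmonics), the slow decay on $\Hp_r$ becomes a loss of $\omega$-regularity of the coefficient $a_{\Hp_r,m_0l_0}$ at $\omega=0$, and this singularity survives transport to $\CHl$ \emph{only because} the zero-frequency transmission coefficient satisfies $\mathfrak{T}_{\Hp_r,m_0l_0}(0)\neq 0$ --- proven by reducing the $\omega=0$ radial ODE to a hypergeometric equation for $m_0\neq 0$, and via the Teukolsky--Starobinsky conservation law (together with $\mathfrak{R}_{\Hp_l,0l}(0)=0$) for $m_0=0$. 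Your Step 3 has no counterpart to this; the phrase ``the blue-shift forces the weighted energy on $\Sigma$ to diverge'' is exactly the statement that needs a proof, and it is false for a generic equation with the same principal part (one can cook up transmission coefficients vanishing at $\omega=0$, in which case the divergence does \emph{not} propagate). Your fallback contradiction scheme --- run the transport estimate in reverse from a hypothetically finite $\Sigma$-flux to deduce a finite $\Hp_r$-flux --- fails for the same coercivity reason, and additionally because the Teukolsky equation cannot be solved sideways (the paper notes this is available only in spherical symmetry), so the past development of $\Sigma$-data does not control the relevant portion of $\Hp_r$.

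Two further points of friction with the actual structure of the problem. First, the sign of the shift: for $s=+2$ the field $\psi$ (regular at $\Hp_r$) experiences an \emph{effective red-shift for energy} at $\CHl$, not a blue-shift --- the paper exploits this to prove \emph{upper} bounds near $r=r_-$ (Section on estimates near the Cauchy horizons); so the exponential amplification you propose to match against the polynomial weights simply is not there at the level of $\psi$, and the divergence \eqref{EqConCCCC} is a polynomial-weight (zero-frequency) phenomenon, not an exponential one. Second, the spherical-versus-spheroidal issue is not a matter of ``finitely many nearby modes'': in physical space the spherical-harmonic projection does not come close to diagonalising the Teukolsky operator (the couplings carry $\rd_{v_+}$-derivatives), and the paper handles this only in the frequency domain, where spheroidal and spherical harmonics coincide at $\omega=0$ and $v_+$-weights can be traded for $\omega$-derivatives (Propositions \ref{PropEstimatesDerivativesEigenfunctions}, \ref{PropCharSlowDecay}, \ref{PropFourierAssump}). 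A physical-space proof in the spirit you outline may well be possible (its analogue for the wave equation on Reissner--Nordstr\"om exists, using the $\rd_t$-conservation law), but it would require identifying a physical-space substitute for the conservation-law/transmission-coefficient input, which is the one idea your sketch is missing.
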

\begin{figure}[h]
  \centering
 \def\svgwidth{2.7cm}
    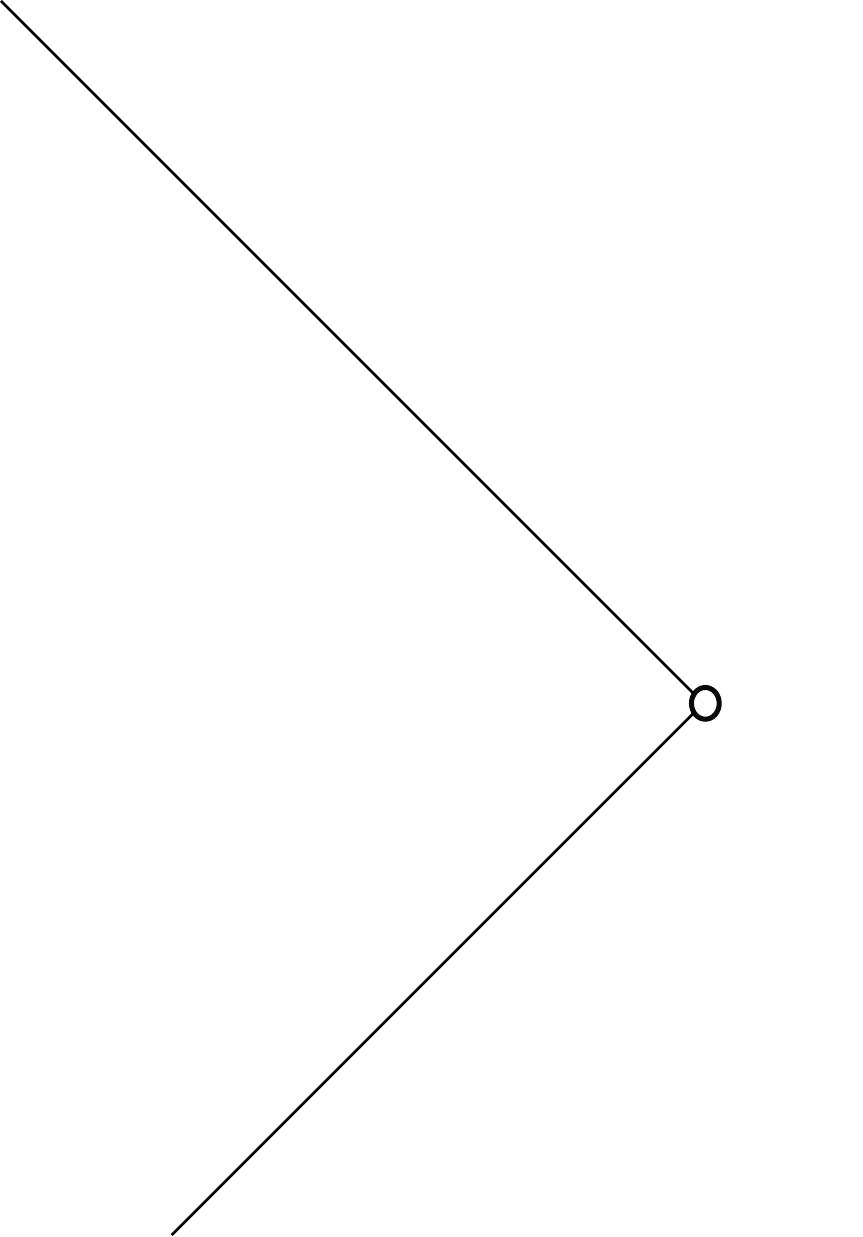
      \caption{The statement of Theorem \ref{ThmInt}. } \label{FigInt}
\end{figure}
Here, $v_+ = t + r^*$ and $\rd_{v_+}$ is the Killing vector field which is a time-translation at spatial infinity, see also Section \ref{SecManifoldMetric}. We also refer the reader to Theorem \ref{Thm2} in Section \ref{SecAssumptions} for the precise statement of Theorem \ref{ThmInt}.

We would like to bring to the reader's attention that the coordinate $v_+$ is not regular at the Cauchy horizon. There, $V^+_{r_-} = - e^{\kappa_- v_+}$ is a regular boundary defining function with $\{V^+_{r_-} = 0\}$ being the Cauchy horizon. The constant $\kappa_- <0$ is the surface gravity of $\CH$. Moreover, the regular Teukolsky field $\hat{\psi}$ at $\CH$, i.e., the linearisation of the Teukolsky $s=+2$ curvature component with respect to a regular frame at $\CH$, is given by $e^{-2 \kappa_- v_+} \psi = \frac{1}{(V^+_{r_-})^2} \psi$, modulo a regular factor which remains bounded away from zero (and infinity) at $\CH$. We thus obtain that the conclusion \eqref{EqConCCCC} of Theorem \ref{ThmInt} with respect to regular quantities at $\CH$ reads
\begin{equation}\label{EqRegCH}
\int_{\Sigma \cap \{v_+ \geq 1\}} \big[\log(-V^+_{r_-})\big]^{2p_0} (-V^+_{r_-})^3 |\hat{\psi}|^2 \, \vols dV^+_{r_-} = \infty \;,
\end{equation}
which makes manifest the blow-up of the  Teukolsky field with respect to a regular frame at the Cauchy horizon.

Moreover, we note that in the slowly rotating case  the assumptions made in Theorem \ref{ThmInt} were recently shown to be satisfied generically (\cite{MaZha21} and \cite{DafHolRod17}, \cite{Ma20}) for solutions arising from compactly supported initial data on a global Cauchy hypersurface $\Sigma_0$ as in Figure \ref{FigHeu} with $p_0 = 7$, $l_0 =2$ and $m_0 \in \{-2,-1,1,2\}$. The parameter $q_r$ can be chosen to be anything strictly less than $13$. See also Remark \ref{RemMainThm} for further discussion.

Let us also remark that we expect Theorem \ref{ThmInt} to be an important ingredient in the analysis of the blue-shift instability at the Cauchy horizon for the full non-linear vacuum Einstein equations.

\subsection{The case of the full non-linear Einstein equations}

Standard energy estimates entail that  solutions of \emph{linear} equations arising from regular initial Cauchy data can at most become singular at the (\emph{null}) boundary of the black hole interior, i.e., at the Cauchy horizon of Kerr -- but not earlier inside the black hole. For the vacuum Einstein equations, however, which are non-linear, it is a priori conceivable that the non-linearities amplify the blow-up and lead to the formation of a  singularity in the black hole interior which is everywhere \emph{spacelike}. Whether this happens or not has been contentious for a long time.

For the spherically symmetric Einstein-Maxwell-scalar field system numerical evidence was presented in \cite{BraSmith95} which indicated that the non-linearities do not amplify the blow-up in the sense that one always has a piece of a null singularity emanating from timelike infinity in the Penrose diagram\footnote{Which can later on collapse to a spacelike singularity, see also \cite{Ori91} and the recent \cite{VdM19}.}. This scenario in spherical symmetry was later rigorously confirmed in the works \cite{Daf03}, \cite{Daf05a},  \cite{LukOh19I}, \cite{LukOh19II}. Indeed, if one only considers sufficiently small perturbations of two-ended sub-extremal Reissner-Nordstr\"om initial data, then the singularity only occurs along the bifurcate Cauchy horizon, i.e., there is no piece of the singularity which is spacelike, see \cite{Daf14}.

Concerning the vacuum Einstein equations Dafermos and Luk established the following seminal result:
\begin{theorem}[Dafermos-Luk, \cite{DafLuk17}] \label{ThmDafLuk}
Consider a suitable spacelike hypersurface $\Sigma$ in the interior of a sub-extremal Kerr black hole, see Figure \ref{FigDafLuk1}, and consider small perturbations of the induced initial data which decay towards $i^+$ with a rate that is in particular compatible with what is expected to arise dynamically  from small perturbations of exact sub-extremal Kerr initial data on a global Cauchy hypersurface $\Sigma_0$ as in Figure \ref{FigHeu}. Then the maximal globally hyperbolic development of the perturbed initial data contains a region which is $C^0$-close to, and the Penrose diagram of which is given by, the darker shaded region of the unperturbed sub-extremal Kerr spacetime as in Figure \ref{FigDafLuk1}.
\end{theorem}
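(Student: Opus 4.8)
The plan is to reformulate the stability problem as a characteristic (rather than spacelike) initial value problem and to run a large bootstrap argument in a double-null gauge adapted to the black-hole interior. First I would foliate a neighbourhood of the relevant portion of $\CH$ by two transversal families of null hypersurfaces, writing the metric in double-null coordinates $(u,v,\theta^A)$ as
$$g = -2\Omega^2 (du \otimes dv + dv \otimes du) + \slashed{g}_{AB}(d\theta^A - b^A dv)(d\theta^B - b^B dv)\;,$$
and I would take as unknowns the differences between the Ricci coefficients (null second fundamental forms, torsion, the lapse $\Omega$, the metric $\slashed{g}$, the shift $b^A$) and Weyl curvature components of the solution and those of the reference sub-extremal Kerr metric. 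Working with these renormalised quantities turns the vacuum equations $Ric(g)=0$ into a coupled system of null structure (transport) equations and Bianchi equations for the curvature, in which the Kerr background enters only through bounded, explicit coefficients, and in which the most singular contributions at $\CH$ have been subtracted off.

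Next I would set up weighted $L^2$-based norms on the null hypersurfaces, with weights in $u$ and $v$ chosen to encode the polynomial decay of the data towards $i^+$ that is assumed in the statement. The core of the argument is a bootstrap: assuming these weighted norms are bounded, I would re-derive improved bounds by integrating the transport equations along the two null directions and by performing energy estimates for the Bianchi-paired curvature system. Two geometric mechanisms are in tension here—the redshift associated with the event horizon $\Hp$, which I would exploit to extract smallness and decay in the part of the interior adjacent to $\Hp$, and the blueshift at $\CH$ (with surface gravity $\kappa_- < 0$), which tends to amplify energy as one approaches the Cauchy horizon. The crucial quantitative point is that the assumed decay rate of the data is strong enough that, even after the blueshift amplification, the weighted energies remain finite; once the Ricci coefficients are controlled in $L^2$ along the null hypersurfaces, Sobolev embedding on the spheres together with integration of the transport equations for $\Omega$, $\slashed{g}$ and $b$ yields uniform $C^0$ control of the metric components and their convergence across $\CH$, which is precisely what $C^0$-closeness to the Kerr template requires.

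The hard part will be closing the top-order estimates without loss of derivatives in the full tensorial, non-spherically-symmetric setting: unlike the Reissner-Nordström case, which can be reduced by symmetry, here the angular operators genuinely couple the hierarchy of curvature components, and one must exploit the precise null structure—the signs in the Bianchi pairings and the renormalisations that remove the leading singular terms—to prevent the blueshift from overwhelming the polynomial decay. I expect the main technical obstacle to be the construction of a sufficiently robust double-null gauge that remains regular up to $\CH$, together with the simultaneous control of the nonlinear error terms; in particular one must show that the quadratic interactions of curvature and connection coefficients are integrable in the weighted norms all the way to the Cauchy horizon, so that the nonlinearity does not push the singularity to a spacelike location. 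Having closed the bootstrap, the continuous extendibility of the metric and the identification of the Penrose diagram of the resulting region with the shaded portion of the Kerr spacetime then follow by direct comparison with the background foliation.
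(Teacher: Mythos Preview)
This theorem is not proved in the paper: it is a result of Dafermos--Luk, cited from \cite{DafLuk17}, and is stated here purely as background for the discussion of the nonlinear vacuum Einstein equations. The paper contains no proof of it, so there is nothing to compare your proposal against.

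That said, your sketch is a reasonable high-level description of the strategy actually employed in \cite{DafLuk17}: double-null gauge adapted to the interior, renormalised Ricci coefficients and curvature components measured against the Kerr background, weighted energy estimates for the Bianchi system, exploitation of the redshift near $\Hp$, and careful control of the blueshift at $\CH$ so that $C^0$-closeness of the metric survives. But this is a sketch, not a proof: the content of \cite{DafLuk17} lies precisely in the construction of the gauge, the choice of renormalisations and weights, and the closing of the top-order nonlinear estimates, none of which you have carried out. For the purposes of the present paper you should simply cite the result.
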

This result in particular entails that also for the vacuum Einstein equations, and under the assumptions of their theorem, the non-linearities do not amplify the blow-up to create a spacelike singularity emanating from timelike infinity in the Penrose diagram (cf.\ in Figure \ref{FigDafLuk2}). The result is only compatible with a null singularity emanating from timelike infinity  (i.e.\ the Cauchy horizon becoming singular) as in the spherically symmetric case. But whether  the Cauchy horizon is indeed generically singular is not established in \cite{DafLuk17}. The result obtained in this paper is a first step in this direction.
\begin{figure}[h]
\centering
\begin{minipage}{.5\textwidth}
  \centering
 \def\svgwidth{3.7cm}
    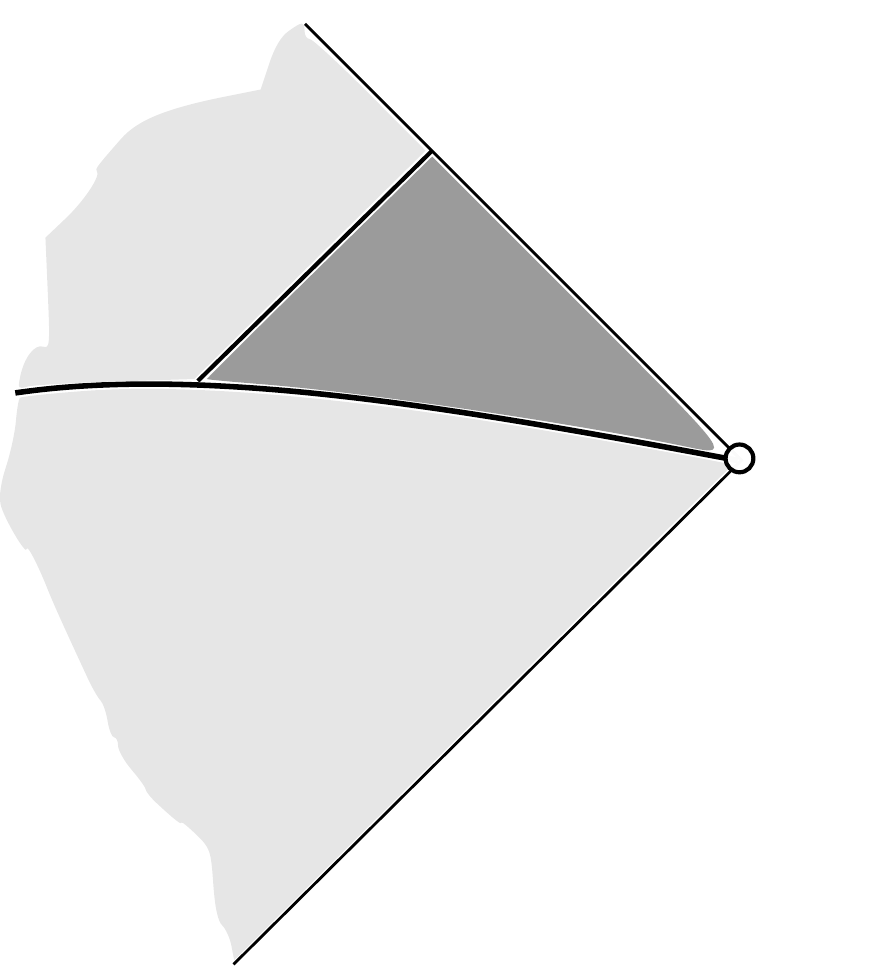
      \caption{The statement of Theorem \ref{ThmDafLuk}} \label{FigDafLuk1}
\end{minipage}%
\begin{minipage}{.5\textwidth}
  \centering
  \def\svgwidth{3.7cm}
    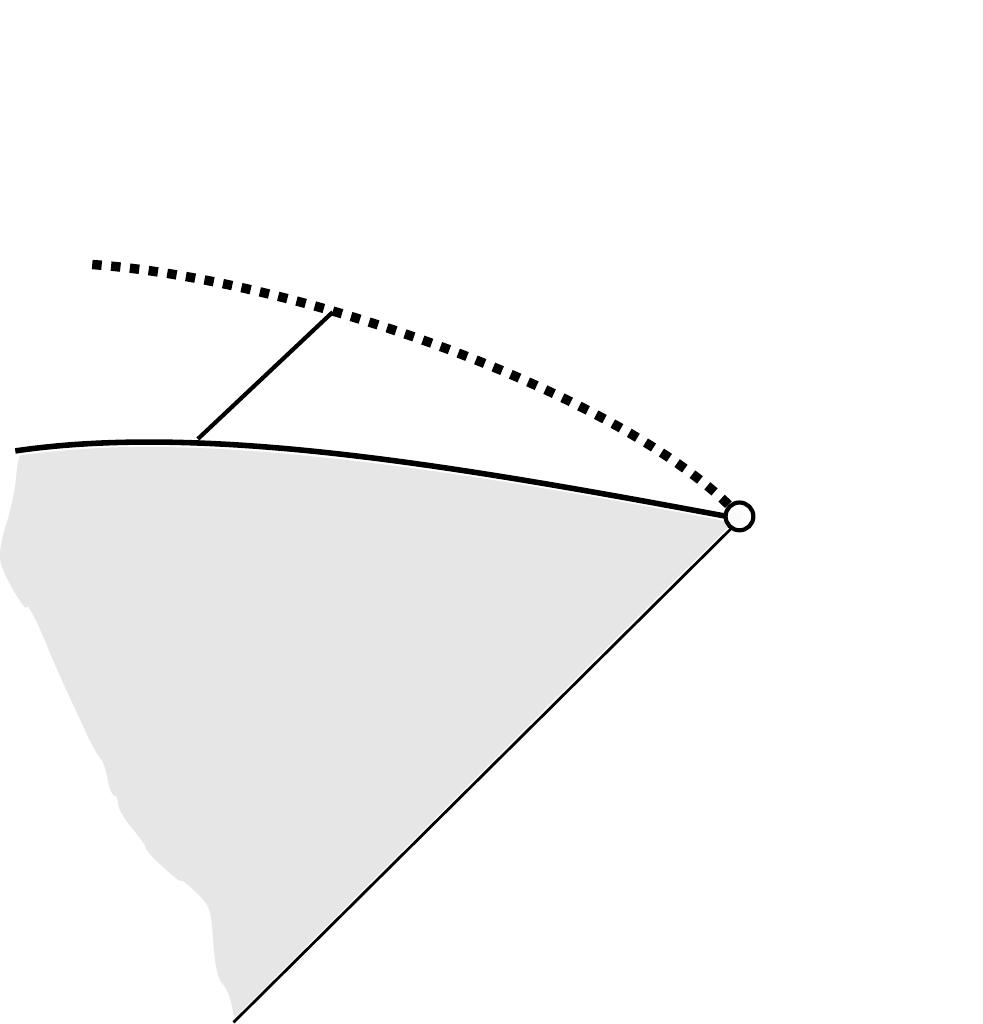
      \caption{Spacelike singularity emanating from $i^+$ is ruled out. Picture \emph{cannot} occur.} \label{FigDafLuk2}
\end{minipage}
\end{figure}

Note that Theorem \ref{ThmDafLuk} also shows that the metric remains continuous up to and including the Cauchy horizon. Thus, if a singularity forms, it is not at the level of the metric itself, as is the case for example for the Schwarzschild singularity (see \cite{Sbie15}, \cite{Sbie18}), but we expect that it is the connection which will generically become singular. This expectation is mainly based on the spherically symmetric models discussed earlier for which one also obtains that the metric extends continuously to the Cauchy horizon but the connection becomes unbounded \cite{Daf03}, \cite{Daf05a},  \cite{LukOh19I}, \cite{LukOh19II}, \cite{Sbie20}. Such singularities have been termed `weak  null singularities'. 
The \emph{construction} of weak null singularities in vacuum spacetimes without any symmetry was achieved in \cite{Luk18}, where it was also shown that they \emph{propagate} (for some finite time). We expect that such weak null singularities as given in \cite{Luk18} do generically \emph{form}  at the Cauchy horizon of perturbed Kerr. 

\subsection{Relation to the strong cosmic censorship conjecture}

Going back to the result of this paper in the form of \eqref{EqRegCH}, and \emph{if one trusts the naive expectation} that there is a linearised Christoffel symbol which is better than $\hat{\psi}$ by a power of $V^+_{r_-}$, i.e., of order $V^+_{r_-} \hat{\psi}$, then \eqref{EqRegCH}  shows that this linearised Christoffel symbol is not in $L^2_{\mathrm{loc}}$ at the Cauchy horizon with respect to the differentiable structure of the background. This makes contact with the modern formulation of the strong cosmic censorship conjecture:
\begin{conjecture}
The maximal globally hyperbolic development arising from generic asymptotically flat initial data for the vacuum Einstein equations is inextendible as a Lorentzian manifold with a continuous metric and locally square integrable Christoffel symbols.
\end{conjecture}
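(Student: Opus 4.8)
The plan is to assemble a proof of the conjecture in a modular fashion, with the present paper's linear instability result serving as the model input for the crucial genericity-of-singularity step. Since the modern formulation asks for inextendibility as a Lorentzian manifold with continuous metric and locally square-integrable Christoffel symbols, and since Theorem \ref{ThmDafLuk} of Dafermos--Luk already shows that the metric of generically perturbed Kerr extends continuously across $\CH$, the entire burden falls on ruling out extensions at the level of $L^2_{\mathrm{loc}}$ Christoffel symbols --- that is, on showing that the connection generically fails to be locally square-integrable at the Cauchy horizon. The overall architecture I would follow is: (i) exterior stability of Kerr with precise asymptotics; (ii) propagation of those asymptotics into the interior up to $\CH$; (iii) a generic \emph{lower} bound forcing a genuine nonlinear singularity of the connection at $\CH$; and (iv) a diffeomorphism-invariant inextendibility criterion turning that blow-up into the desired rigidity.

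For step (i) I would invoke the resolution of the black hole stability problem (in the slowly rotating regime, via the program of Klainerman--Szeftel and collaborators), which supplies a maximal development that settles down to a nearby member of the Kerr family in the exterior and yields quantitative decay of the curvature, in particular of the relevant Teukolsky components, along the event horizon $\Hp$. The precise integrated upper and lower decay bounds needed downstream are exactly of the type whose \emph{genericity} has been established in the linear theory: for the scalar field by Hintz and by Angelopoulos--Aretakis--Gajic, and for the $s=+2$ Teukolsky field in the slowly rotating regime by Ma--Zhang, Dafermos--Holzegel--Rodnianski and Ma. The content of step (iii) is to show that these event-horizon bounds propagate to a blow-up at $\CH$ \emph{for the nonlinear equations}. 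Here the present paper's Theorem \ref{ThmInt}, and its regular-frame restatement \eqref{EqRegCH}, provide the linearised prediction: the regular Teukolsky field $\hp$ fails a weighted $L^2$ bound at $\CH$, so that the naively-associated linearised Christoffel symbol of size $V^+_{r_-}\hp$ fails to lie in $L^2_{\mathrm{loc}}$ with respect to the background differentiable structure.

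The hardest part, and the true open problem, is to upgrade this linear blow-up into a genuine nonlinear one --- to show that the nonlinearities of $Ric(g)=0$ do not conspire to cancel the blue-shift amplification of the connection at the Cauchy horizon, and that the singularity persists for the full development. Conceptually this is the vacuum analogue of the results of Luk--Oh and of Dafermos for the spherically symmetric Einstein--Maxwell--scalar field system, combined with the construction and propagation of weak null singularities in vacuum due to Luk; one would need to couple a characteristic scheme near $\CH$ with a nonlinear lower bound that tracks the linearised mechanism through the quadratic (and higher) error terms. A delicate point is to make rigorous the ``naive expectation'' flagged in the discussion above, namely the passage from blow-up of the curvature component $\hp$ to blow-up of an actual connection coefficient: in the nonlinear theory these are coupled, and one must control the full system rather than a single scalar.

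Finally, step (iv) requires a diffeomorphism-invariant inextendibility statement, since \eqref{EqRegCH} is computed in a fixed regular coordinate system whereas inextendibility with $L^2_{\mathrm{loc}}$ Christoffel symbols must hold against \emph{all} admissible extensions. I would adapt the coordinate-independent obstruction techniques developed for $C^0$-inextendibility (as in Sbierski's treatment of the Schwarzschild singularity) to the lower-regularity setting, showing that no change of variables can absorb a genuine connection singularity while keeping the metric continuous. The principal obstacle, I expect, remains step (iii): the other ingredients are either known or plausible adaptations of existing machinery, whereas the nonlinear persistence of the blue-shift-driven connection singularity for generic vacuum data is precisely what is not yet understood, and is exactly the gap that this paper's linear theorem is designed to illuminate.
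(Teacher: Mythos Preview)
The statement you are addressing is not a theorem of the paper but its formulation of the \emph{strong cosmic censorship conjecture}; the paper does not offer a proof, and explicitly presents it as open. In particular, the paper states that the result does not contribute to methods showing inextendibility \emph{independent of the coordinate system}, calling this ``an open problem,'' and describes Theorem~\ref{ThmInt} only as ``a first step'' towards generic curvature blow-up at $\CH$ in the nonlinear setting.

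Your write-up is therefore not a proof, and you acknowledge this yourself: you correctly flag step (iii), the nonlinear persistence of the blue-shift singularity, and step (iv), the diffeomorphism-invariant inextendibility criterion at the $C^0 \cap W^{1,2}_{\mathrm{loc}}$ level, as genuinely open. Your architecture (i)--(iv) is a reasonable summary of what a full resolution would require and is consistent with the discussion surrounding the conjecture in the paper, including the Dafermos--Luk result, the role of the Ma--Zhang asymptotics, and the heuristic passage from $\hat\psi$ to a linearised Christoffel symbol of order $V^+_{r_-}\hat\psi$. But as a proof proposal it has an irreducible gap: the two steps you identify as open \emph{are} the content of the conjecture in this setting, and nothing in the proposal closes them. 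In short, you have written an accurate status report rather than a proof, and the paper itself does not claim otherwise.
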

The strong cosmic censorship conjecture was originally conceived by Penrose \cite{Pen79}, the formulation given here in terms of the initial value problem and the conjectured breakdown of the regularity goes back to Christodoulou \cite{Chris09} and Chrusciel \cite{Chrus91}. The inextendibility as a Lorentzian manifold with $g \in C^0$ and $\partial g \in L^2_{\mathrm{loc}}$ in particular rules out the extension of the maximal globally hyperbolic development as a weak solution\footnote{See for example \cite{HawkEllis} or the introduction of \cite{Sbie20}.}. We note that for exact sub-extremal Kerr initial data the maximal globally hyperbolic development is given in Figure \ref{FigHeu} and \emph{is in fact extendible in various ways} across the Cauchy horizon even as a smooth solution: determinism is violated.
However, as we discussed earlier, for generic small perturbations of exact sub-extremal Kerr initial data we expect the blue-shift instability to turn the Cauchy horizon into a weak null singularity and in this way preventing non-unique extensions as weak solutions. Determinism would thus be restored generically. 

The result obtained in this paper can be thought of as a first step towards establishing the generic divergence of curvature at the Cauchy horizon of non-linearly perturbed sub-extremal Kerr -- and thus the generic inextendibility as a Lorentzian manifold with $g \in C^2$. And with the earlier naive expectation that there is a (linearised) Christoffel symbol of order $V^+_{r_-} \hat{\psi}$ it is also a first step towards showing that the metric cannot be extended with $g \in C^0$ and $\partial g \in L^2_{\mathrm{loc}}$ \emph{in a particular natural-looking coordinate system}. However, the result does not contribute to developing methods which show that \emph{no matter what coordinate system is chosen for the extension}, the metric cannot be extended in $g \in C^0$ and $\partial g \in L^2_{\mathrm{loc}}$. This is an open problem. For recent progress in this direction we refer the reader to \cite{Sbie20}.



\subsection{Related results and directions concerning the interior of black holes}

The studies mentioned earlier on perturbations of sub-extremal Reissner Nordstr\"om under the spherically symmetric Einstein-Maxwell-scalar field system were extended in \cite{VdM18}  to the spherically symmetric Einstein-Maxwell-massive and charged scalar field system. This matter model in particular allows for asymptotically flat one-ended spherically symmetric black hole solutions which possess a Cauchy horizon and is thus a good model to understand the contraction and breakdown of weak null singularities in the interior of black holes \cite{VdM19}.

For the behaviour of  linear waves and of axisymmetric and polarized perturbations in the interior of \emph{non-rotating} (Schwarzschild) black holes see  \cite{FouSbi20}, \cite{AleFourno20}.

Another interesting direction of research concerns the interior of extremal black holes where the blue-shift instability at the Cauchy horizon is much weaker than in the sub-extremal case. For results concerning linear waves see \cite{Gajic17}, \cite{Gajic18} and for non-linear results in spherical symmetry see \cite{GaLuk19}.

Finally, for the investigation of the blue-shift instability in the presence of a cosmological constant $\Lambda$ we refer the reader to \cite{Daf14}, \cite{HinVas17}, \cite{CaJoHi18}, \cite{DiEpReSa18}, \cite{DafShla18}  for $\Lambda >0$ and to \cite{Kehle21}, \cite{Kehle21a} for $\Lambda <0$ as well as to the references given in those papers.

\subsection{Outline of proof}

A good, simple, and instructive model problem for gravitational perturbations in the interior of a subextremal rotating Kerr black hole  is the spherically symmetric scalar wave equation in the interior of a subextremal charged Reissner-Nordstr\"om black hole. The blue-shift instability in this scenario is well-established and various results along with various methods of proof have been developed: the methods in  \cite{McNam78}, \cite{DafShla17} are based on the scattering map from characteristic initial data on the right even horizon $\Hp_r$ (past null infinity $\mathcal{I}^-$) to the trace of the wave on the left Cauchy horizon $\CH_l$, making crucial use of the time-translation invariance of this map. See Figure \ref{FigRN} below for the notation. The $C^1$-instability results in \cite{ChandHart82}, \cite{KehleShla19} are also obtained via scattering theory together with meromorphic continuation. One can also use the geometric optics (Gaussian beam) approximation together with an application of the closed graph theorem, see \cite{Sbie13b} and the introduction of \cite{LukSbi15}, to capture a formulation of the blue-shift instability. In \cite{LukOh15} a neat argument by contradiction is given, using that one can solve the linear wave equation in spherical symmetry sideways. A proof in physical space using energy estimates and at the heart of which is the conservation law associated to the spacelike Killing vector field $\rd_t$ is presented in \cite{LukSbi15}. And finally, in \cite{LukOhShlaForth}, Luk, Oh, and Shlapentokh--Rothman give another scattering theoretic proof of the blue-shift instability at the Cauchy horizon. It is this last method of proof which is being taken up in this paper and being implemented for the Teukolsky equation on Kerr. In the following we shall first outline the argument from \cite{LukOhShlaForth} in spherical symmetry and then discuss the main differences to the proof in this paper.

\subsubsection{Spherically symmetric scalar waves on Reissner-Nordstr\"om} \label{SecRN}

The \emph{interior of a charged subextremal Reissner-Nordstr\"om black hole} is the Lorentzian manifold\footnote{The definitions of symbols made here are only valid in this section. In the rest of the paper we will use $\mathcal{M}, g, r_+, r_-,$ etc.\ to refer to objects and quantities on Kerr.} $(\mathcal{M},g)$, where $\mathcal{M} := \R \times (r_-, r_+) \times \Sp^2$ with standard $(t,r, \theta, \varphi)$-coordinates and $r_\pm := M \pm \sqrt{M^2 - e^2}$, where $0 < |e| < M$ are real parameters modelling the charge and the mass of the black hole, respectively. The Lorentzian metric $g$ is given by $$g := - \frac{\Delta}{r^2} \,dt^2 + \frac{r^2}{\Delta} \,dr^2 + r^2 \,(d \theta^2 + \sin^2 \theta \, d \varphi^2)\;,$$
with $\Delta := r^2 - 2Mr + e^2$. 
The \emph{spherically symmetric scalar wave equation} $\Box_g \phi = 0$, where $\phi : \mathcal{M} \to \C$ is only a function of $t$ and $r$, takes in the above coordinates the form
\begin{equation}\label{EqWaveEqRN}
0 = \Box_g \phi = - \frac{r^2}{\Delta} \rd_t^2 \phi + \frac{1}{r^2} \rd_r (\Delta \rd_r \phi) \;.
\end{equation}
Let $r^*(r)$ be a function with $\frac{dr^*}{dr} = \frac{r^2}{\Delta}$ and then introduce the null coordinates $v := r^* + t$ and $u:= r^* - t$. 
We define $\kappa_\pm := \frac{r_\pm - r_\mp}{2r_\pm^2}$ and use those to introduce the Kruskal-like null coordinates\footnote{See also \cite{HawkEllis} for a more detailed discussion of the Reissner-Nordstr\"om spacetime.} $V_{r_+} := e^{\kappa_+ v}$ and $U_{r_+} := e^{\kappa_+ u}$ in which the Lorentzian manifold $(\mathcal{M},g)$ extends analytically to $r=r_+$ ($r$ as a function of $V_{r_+}, U_{r_+}$) and similarly $V_{r_-} := -e^{\kappa_- v}$ and $U_{r_-} := -e^{\kappa_- u}$ in which the Lorentzian manifold extends analytically to $r=r_-$. The boundary null hypersurface $\{V_{r_+} = 0\} =: \Hp_l$, at which we have $r = r_+$, is called the \emph{left event horizon}, the boundary null hypersurface $\{U_{r_+} = 0\} =: \Hp_r$, at which we also have $r=r_+$, the \emph{right event horizon}, and the boundary sphere $\{V_{r_+} = U_{r_+} = 0\} =:\Sp^2_b$ is the \emph{bottom bifurcation sphere}. Moreover, we call the boundary null hypersurface $\{U_{r_-} = 0 \} =: \CH_l$ the \emph{left Cauchy horizon}, the boundary null hypersurface $\{V_{r_-} = 0 \} =: \CH_r$ the \emph{right Cauchy horizon}, and the boundary sphere $\{V_{r_-} = U_{r_-} = 0\} =: \Sp_t^2$ the \emph{top bifurcation sphere}. A Penrose diagram of $(\mathcal{M},g)$ with the boundaries attached is given in Figure \ref{FigRN} below.
\begin{figure}[h]
 \centering
 \def\svgwidth{6cm}
    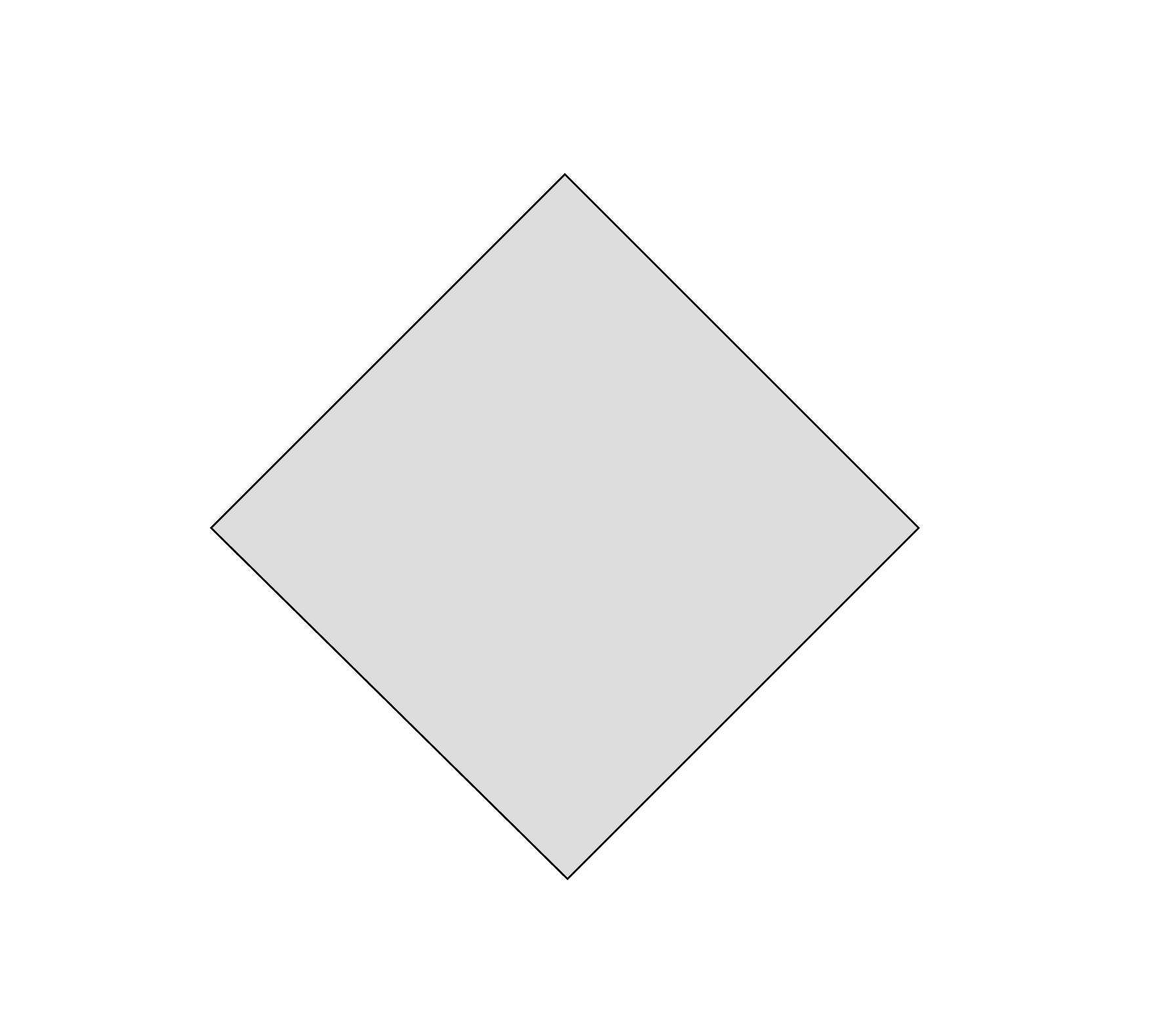
      \caption{The interior of subextremal Reissner-Nordstr\"om} \label{FigRN}
\end{figure}
In \cite{LukOhShlaForth} the following theorem is shown
\begin{theorem}[Corollary 4.2 in \cite{LukOhShlaForth}] \label{ThmRN1}
Consider the region $\big(\mathcal{M} \cup \Hp_r \big) \cap \{v \geq v_0\} \cap \{u \leq u_1\}$ for some $v_0, u_1 \geq 1$ and let $\phi$ be a smooth solution of the spherically symmetric wave equation \eqref{EqWaveEqRN} in this region, which, moreover, satisfies
\begin{equation}\label{EqRNAssump1}
\lim\limits_{v \to \infty} \phi|_{\Hp_r}(v) = 0 \qquad \textnormal{ and } \qquad \int\limits_{v_0}^\infty v^2 |\rd_v \phi|_{\Hp_r}|^2 \,  dv < \infty
\end{equation}
and there exists $\N \ni p_0 \geq 2$ such that
\begin{equation}\label{EqRNAssump2}
\int\limits_{v_0}^\infty v^{2p_0} |\rd_v \phi|_{\Hp_r}|^2 \, dv = \infty
\end{equation}
holds. We further assume that $p_0$ is the smallest such integer with this property. And finally we assume 
\begin{equation}\label{EqRNAssump3}
\int\limits_{v_0}^\infty v^{2p_0}|\rd_v^2 \phi|_{\Hp_r}|^2 \,dv < \infty \;.
\end{equation}
Then for any $u_2 \leq u_1$ we have
\begin{equation}\label{EqRNConcThm1}
\int\limits_{v_0}^\infty v^{2p_0} \big|\rd_v \phi\big|^2(u_2,v) \, dv = \infty \;.
\end{equation}
\end{theorem}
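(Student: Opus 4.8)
\emph{Reduction to a $1{+}1$ problem.} The plan is first to recast \eqref{EqWaveEqRN} in the standard radial form. Setting $\psi := r\phi$ and passing to the null coordinates $u,v$, the equation \eqref{EqWaveEqRN} becomes a $1{+}1$ dimensional wave equation
\[
\partial_u\partial_v\psi = \mathcal{V}\,\psi, \qquad \mathcal{V} = \mathcal{V}(r(u,v)),
\]
whose potential $\mathcal{V}$ carries a factor of $\Delta$ and hence decays exponentially (in $r^*$, and thus in $v$ at fixed $u$ and in $-u$ at fixed $v$) towards both $\Hpr$ and $\CHr$; in particular $\int_{-\infty}^{u_2}|\mathcal{V}(r(u',v))|\,du'$ is finite and bounded uniformly in $v$. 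Along $\Hpr$ one has $r\equiv r_+$, so the hypotheses \eqref{EqRNAssump1}–\eqref{EqRNAssump3} transfer verbatim (up to the constant $r_+$) to $\psi$, and since $r(u_2,v)\to r_-$ while $\partial_v r=\Delta/(2r^2)$ is exponentially small at $\CHr$, the conclusion \eqref{EqRNConcThm1} for $\partial_v\phi$ on $\{u=u_2\}$ is equivalent to the same statement for $\partial_v\psi$. Integrating the equation in $u$ from $\Hpr$ to $u_2$ yields the master identity
\[
\partial_v\psi(u_2,v) = \partial_v\psi|_{\Hpr}(v) + \int_{-\infty}^{u_2} \mathcal{V}(r(u',v))\,\psi(u',v)\,du' .
\]

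\emph{Strategy.} I would argue by contradiction, in the spirit of solving the equation sideways. Assume $\int_{v_0}^\infty v^{2p_0}|\partial_v\psi(u_2,v)|^2\,dv<\infty$ and aim to derive the same finiteness on $\Hpr$, contradicting \eqref{EqRNAssump2}. The first step is a set of a priori bulk estimates propagated off $\Hpr$: from \eqref{EqRNAssump1} (decay of $\phi|_{\Hpr}$ together with the order-one weighted bound on $\partial_v\phi|_{\Hpr}$) and the integrability of $\mathcal{V}$, a Gr\"onwall argument along characteristics gives that $\psi$ is bounded throughout the rectangle and, crucially, that it decays as one approaches the corner $\Hpr\cap\CHr$ (right timelike infinity). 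The second step is the main weighted estimate: under the contradiction hypothesis one wants to show
\[
\int_{v_0}^\infty v^{2p_0}\Big|\int_{-\infty}^{u_2}\mathcal{V}(r(u',v))\,\psi(u',v)\,du'\Big|^2\,dv < \infty ,
\]
so that the master identity forces $\int_{v_0}^\infty v^{2p_0}|\partial_v\psi|_{\Hpr}|^2\,dv<\infty$, the desired contradiction.

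\emph{Carrying out the weighted estimate.} To control the interaction integral I would use Minkowski's inequality to bring the weighted $L^2_v$ norm inside the $du'$ integral, exploiting that at each fixed $u'$ the potential $\mathcal{V}(r(u',\cdot))$ decays exponentially in $v$. Since $\psi$ itself does not decay in $v$ (it tends to a finite limit at $\CHr$), the naive bound $|\mathcal{V}|\cdot\|\psi\|_\infty$ is far too lossy; instead I would integrate by parts in $v$, trading the top-order weight $v^{2p_0}$ for a $v$-derivative landing on $\psi$ and on $\mathcal{V}$. This is exactly where the order-$p_0$ bound \eqref{EqRNAssump3} on $\partial_v^2\phi$, together with the minimality of $p_0$ (which supplies the finite order-$(p_0-1)$ bound on $\partial_v\phi$), enters: it lets the dangerous top-order term be re-expressed through $\Phi:=\partial_v\psi$, whose weighted norm is finite on $\{u=u_2\}$ by the contradiction hypothesis and which one would then seek to propagate to all $u\le u_2$ by integrating the transport equation $\partial_u\Phi=\mathcal{V}\psi$ — provided the resulting interaction is itself controlled, which brings us to the main difficulty.

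\emph{Main obstacle.} The heart of the matter, and the step I expect to be hardest, is that the interaction term is \emph{not} a lower-order perturbation. Viewed in the null rectangle, $\mathcal{V}(r(u',v))$ is concentrated along $r^*\approx\mathrm{const}$, so at fixed $v$ its peak in $u'$ migrates to $u'\approx -v$, i.e.\ towards right timelike infinity, as $v\to\infty$ — precisely the region that the weight $v^{2p_0}$ amplifies. A crude Gr\"onwall propagation of $\|\Phi\|$ in $u$ diverges for exactly this reason. Controlling the contribution therefore requires a weighted estimate finely adapted to the geometry near the corner, in which the exponential localisation of $\mathcal{V}$ and the decay of $\psi$ towards $\Hpr\cap\CHr$ (inherited from the horizon decay of the first step) are played off against the polynomial weight so as to tame the migrating peak. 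Arranging the bookkeeping to close — so that only the assumed finite quantities, namely the order-one weighted norm of $\partial_v\phi|_{\Hpr}$ and the order-$p_0$ weighted norm of $\partial_v^2\phi|_{\Hpr}$, appear on the right-hand side, with no net loss of powers of $v$ — is the crux of the argument.
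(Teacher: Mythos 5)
Your reduction to the $1{+}1$ equation $\partial_u\partial_v\psi=\mathcal{V}\psi$ and the master identity are sound, but the proposal has a genuine gap, and it is not merely the bookkeeping difficulty you flag at the end: the weighted estimate you are aiming for is \emph{false}. The interaction integral is not an error term; it is of exactly the same strength as the boundary term. To see this, pass to frequency space: writing $a_{\Hp_r}(\omega)$ for the Fourier transform of the horizon data and $\mathfrak{T}_{\Hp_r},\mathfrak{R}_{\Hp_r}$ for the transmission and reflection coefficients of the interior scattering problem, the Fourier transform of $\partial_v\psi(u_2,\cdot)-\partial_v\psi|_{\Hp_r}$ has, near $\omega=0$, the singular part $\big(\mathfrak{T}_{\Hp_r}(0)-1\big)\cdot\big(\omega\, a_{\Hp_r}(\omega)\big)_{\mathrm{sing}}$ plus terms regular at $\omega=0$. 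Hence your desired bound on the interaction term is equivalent to $\mathfrak{T}_{\Hp_r}(0)=1$, and this fails: at $\omega=0$ the radial ODE \eqref{EqRNRadialODE} reduces to $\partial_r(\Delta\,\partial_r\widecheck{\phi})=0$ with fundamental system $\{1,\int dr/\Delta\}$, whose second element has slope $1/r_\pm^2$ in $r^*$ at $r=r_\pm$; matching the $O(\omega)$ expansions of the Frobenius solutions at the two horizons gives $\mathfrak{T}_{\Hp_r}(0)+\mathfrak{R}_{\Hp_r}(0)=1$ and $\mathfrak{T}_{\Hp_r}(0)-\mathfrak{R}_{\Hp_r}(0)=r_+^2/r_-^2$, so $\mathfrak{T}_{\Hp_r}(0)=\tfrac12\big(1+r_+^2/r_-^2\big)>1$. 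Consequently the interaction term has \emph{infinite} $v^{p_0}$-weighted norm whenever the data does, and no Minkowski inequality, Gr\"onwall scheme or integration by parts in $v$ can make it finite. Your contradiction variant does not escape this: under the contradiction hypothesis, finiteness of the interaction term is equivalent to finiteness of the weighted norm of $\partial_v\psi|_{\Hp_r}$, i.e.\ to the very statement to be proved, so no reduction has been achieved.

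There is also a structural reason your toolkit cannot close. All the estimates you invoke use only $|\mathcal{V}|$ and its localisation; they are insensitive to the sign and phase structure of $\mathcal{V}$. But it is precisely that structure which makes the theorem true: for perturbations of the equation with the same decay but without the self-adjoint structure (e.g.\ complex potentials, or added first-order terms), $\mathfrak{T}_{\Hp_r}(0)$ can vanish, the slowly decaying tail on $\Hp_r$ is then entirely reflected into a regular contribution, and \eqref{EqRNConcThm1} genuinely fails. What rules this out for \eqref{EqWaveEqRN} is the $\partial_t$/Wronskian conservation law, which yields $|\mathfrak{T}_{\Hp_r}|^2-|\mathfrak{R}_{\Hp_r}|^2=1$ and hence $\mathfrak{T}_{\Hp_r}(\omega)\neq0$; this non-vanishing is exactly the non-cancellation statement your interaction estimate is attempting to prove by brute force. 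The paper's route (following \cite{LukOhShlaForth}) is correspondingly different and global: Theorem \ref{ThmRN1} is first reduced to the global Theorem \ref{ThmRN2} by solving sideways and extending the characteristic data, Theorem \ref{ThmRN2} is then proved in frequency space, where $\mathfrak{T}_{\Hp_r}(0)\neq0$ (obtained from the conservation law or from special functions) transfers the limited $\omega$-regularity of $\omega\,a_{\Hp_r}$ at $\omega=0$ to the Cauchy horizon $\CH_l$, and finally the singularity is propagated to $\{u=u_2\}$ by energy estimates. A physical-space proof does exist (cf.\ \cite{LukSbi15}), but its engine is again the $\partial_t$-conservation law; your proposal contains no ingredient playing that role, and without one the crux you identify cannot be overcome.
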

This is the local statement that is the analogue of Theorem \ref{Thm1} (or \ref{ThmInt}) for Teukolsky. It is inferred from the following global statement, which is the analogue of Theorem \ref{Thm2} for Teukolsky, by an extension procedure of the solution.
\begin{theorem}[Theorem 4.1 in \cite{LukOhShlaForth}] \label{ThmRN2}
Let $\phi$ be a smooth solution of the spherically symmetric wave equation \eqref{EqWaveEqRN} on $\mathcal{M} \cup \Hp_r \cup \Hp_l$. Suppose that in addition to \eqref{EqRNAssump1}, \eqref{EqRNAssump2}, \eqref{EqRNAssump3} (for some $v_0 \geq 1$) we also have that there exists $v_* \in \R$ such that $\phi|_{\Hp_r}(v) = 0$ for $v \leq v_*$ and there exists $u_* \in \R$ such that $\phi|_{\Hp_l}(u) = 0$ for $u \geq u_*$.\footnote{The  important properties here are that $\phi$ vanishes at the bottom bifurcation sphere $\Sp^2_b$ and decays sufficiently fast along $\Hp_l$. The first one is not strictly necessary, but simplifies the proof.} Then \eqref{EqRNConcThm1} holds for any $u_2 \in \R$ (and any $v_0 \in \R$).
\end{theorem}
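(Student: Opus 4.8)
The plan is to reduce \eqref{EqWaveEqRN} to a $1+1$-dimensional problem, set up the scattering-theoretic framework of \cite{LukOhShlaForth}, and run a contradiction argument in which the time-translation invariance of the scattering map transmits the prescribed polynomial growth from $\Hp_r$ into the interior. First I would pass to the null coordinates $(u,v)$ and the rescaled field $\psi := r\phi$, for which \eqref{EqWaveEqRN} takes the form $\rd_u\rd_v\psi + V\psi = 0$, where the potential $V = V(r^*)$ depends only on $r^* = \tfrac12(u+v)$ — in particular it is invariant under the time translation $(u,v)\mapsto(u-s,v+s)$ generated by $\rd_t$ — and decays exponentially as $r\to r_+$ (i.e.\ $r^*\to-\infty$, like $e^{2\kappa_+ r^*}$) and as $r\to r_-$ (i.e.\ $r^*\to+\infty$, like $e^{2\kappa_- r^*}$). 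Since $r$ stays between $r_-$ and $r_+$ and $\rd_v r = \tfrac{\Delta}{2r^2}\to0$ exponentially at $\CH_r$, the weighted $L^2$ norms of $\rd_v\psi$ and $\rd_v\phi$ are comparable there, so it suffices to prove \eqref{EqRNConcThm1} with $\psi$ in place of $\phi$.

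Next I would record two preliminary facts. \emph{A priori boundedness}: from the limit and the finite weighted flux in \eqref{EqRNAssump1}, Cauchy--Schwarz gives $\int_{v_0}^\infty |\rd_v\psi|_{\Hp_r}|\,dv<\infty$, so $\psi|_{\Hp_r}$ is uniformly bounded; propagating this inward by integrating $\rd_u\rd_v\psi=-V\psi$ in $u$, and using that $\int_{-\infty}^{u_2}|V(\tfrac{u'+v}{2})|\,du'\leq 2\|V\|_{L^1(\mathrm{d}r^*)}$ is bounded uniformly in $v$ (together with the vanishing of $\psi$ on $\Hp_l$ for $u\geq u_*$ and at $\Sp^2_b$), a bootstrap yields $\sup|\psi|<\infty$ on the relevant region. \emph{Time-translation equivariance}: because $V$ is $t$-independent, $\psi_s(u,v):=\psi(u-s,v+s)$ solves the same equation, and the scattering map carrying data on $\Hp_r\cup\Hp_l$ to the trace on $\CH_r\cup\CH_l$ intertwines this $\R$-action. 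Equivalently, after a Fourier transform in the horizon parameter $v$ (dual to the time-translation), the map is a multiplier given by transmission/reflection coefficients, and the weighted-in-$v$ assumptions \eqref{EqRNAssump2}, \eqref{EqRNAssump3} become statements about the regularity of the data in this frequency variable near zero frequency, where the polynomial late-time tails live.

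Then I would argue by contradiction. Assume $\int_{v_0}^\infty v^{2p_0}|\rd_v\psi|^2(u_2,v)\,dv<\infty$ and set $E(u):=\int_{v_0}^\infty v^{2p_0}|\rd_v\psi|^2(u,v)\,dv$. Differentiating in $u$, substituting the equation, and integrating gives the exact identity
\[
E(-\infty) = E(u_2) + 2\int_{-\infty}^{u_2}\!\int_{v_0}^\infty v^{2p_0}\,V\,\psi\,\rd_v\psi\,\,dv\,du \,,
\]
so a contradiction with \eqref{EqRNAssump2} follows once the double integral is shown to be finite. Here the smallest-integer hypothesis on $p_0$, together with the finiteness of the lower-order weighted fluxes in \eqref{EqRNAssump1} and \eqref{EqRNAssump3}, is used to split $\psi$ and $\rd_v\psi$ into a single borderline piece carrying all of the divergence and a genuinely summable remainder; the time-translation structure then identifies the borderline contribution on $\{u=u_2\}$ with that on $\Hp_r$ up to a non-degenerate transmission factor, which is the mechanism forcing \eqref{EqRNConcThm1}.

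The main obstacle is precisely the control of the potential-error term $\int\!\int v^{2p_0}\,V\,\psi\,\rd_v\psi$. A naive bound that factors out $\sup|\psi|$ and uses only the $L^1_{r^*}$-smallness of $V$ \textbf{fails}: along a slice of constant $u$ the potential is peaked where $r^*=\tfrac12(u+v)=O(1)$, i.e.\ at $v\approx|u|$, so the weight there is $\approx|u|^{2p_0}$, and after integrating in $u$ down to $\Hp_r$ the estimate diverges — only logarithmically, reflecting the sharpness of the threshold $p_0$. Overcoming this is the technical heart: one must track the borderline late-time asymptotics exactly rather than estimate it crudely, which is what the frequency-space form of the time-translation-invariant scattering map, and the induction encoded by the integer $p_0$, are designed to do, the lower-order assumptions guaranteeing that every term except the single transmitted borderline tail is absorbable.
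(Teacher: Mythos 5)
There is a genuine gap here, and your own closing paragraph names it: the whole argument funnels into the finiteness of the bulk term $\int\!\!\int v^{2p_0}\,V\,\psi\,\rd_v\psi\,dv\,du$, you then explain why every estimate you propose for it fails at the borderline weight, and the remedy you appeal to (``track the borderline late-time asymptotics exactly'' via the frequency-space scattering map and an ``induction encoded by $p_0$'') is never carried out. That remedy is not a patch to your energy identity; in the actual argument it replaces it entirely. The proof sketched in the paper never forms a spacetime bulk integral: it (i) shows $\phi$ is given by the Fourier representation \eqref{EqRNFormalSol} with $a_{\Hp_r},a_{\Hp_l}$ identified with the transforms \eqref{EqRNInitialData} of the horizon data; (ii) converts \eqref{EqRNAssump1}--\eqref{EqRNAssump3} into the zero-frequency regularity statements \eqref{EqRNID} and \eqref{EqRNID2} --- this is where minimality of $p_0$ enters, through \eqref{EqRN1}--\eqref{EqRN3}, and no induction on $p_0$ appears anywhere; (iii) passes to the limit $r\to r_-$ in \eqref{EqRNSepCH} and expands $\rd_\omega^{p_0}(\widecheck{\rd_v\phi|_{\CH_l}})$ by Leibniz as in \eqref{EqRNFinal}, where analyticity of $\mathfrak{T}_{\Hp_r},\mathfrak{R}_{\Hp_r}$ together with \eqref{EqRNID}, \eqref{EqRNID2} shows every term except $\mathfrak{T}_{\Hp_r}(\omega)\,\rd_\omega^{p_0}\big(\omega a_{\Hp_r}(\omega)\big)$ lies in $L^2$ near $\omega=0$; and (iv) concludes by Plancherel, by \eqref{EqRNCauchy} to place the divergence at $v\to+\infty$ rather than $v\to-\infty$, and by backwards propagation along $\CH_r$ to reach an arbitrary $u_2$.

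Beyond the unexecuted main step, two substantive inputs are missing even from your plan. First, the non-vanishing $\mathfrak{T}_{\Hp_r}(0)\neq 0$: you invoke ``a non-degenerate transmission factor,'' but this is a fact requiring proof (via the $\rd_t$-conservation law or hypergeometric special functions), and without it the transmitted borderline tail could be annihilated and no blow-up follows. Second, the reflection/left-horizon term: in \eqref{EqRNFinal} the contribution of $\mathfrak{R}_{\Hp_r}(\omega)$ paired with $\omega a_{\Hp_l}$ must be shown square-integrable near $\omega=0$, which is exactly where the hypothesis that $\phi|_{\Hp_l}$ vanishes for $u\geq u_*$ (hence \eqref{EqRNID2}) is used; your contradiction argument never engages the left-horizon data beyond a sup bound, so even with the bulk term repaired it would not reproduce the structure that makes the conclusion sharp at weight $2p_0$. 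Your preliminary steps (reduction to $\psi=r\phi$, a priori boundedness, equivariance of the scattering map under time translation) are fine, but they are the easy part; the step you leave open is the theorem.
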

Before we discuss the structure of the proof, let us recall the \emph{formal} separation of the spherically symmetric wave equation \eqref{EqWaveEqRN}. By taking the Fourier transform 
\begin{equation} \label{EqRNFT}
\widecheck{\phi}(r; \omega) := \frac{1}{\sqrt{2 \pi}} \int\limits_\R \phi(t,r) e^{i \omega t} \, dt 
\end{equation} 
of $\phi$ in $t$ one obtains that formally  $\phi$ satisfies \eqref{EqWaveEqRN} if, and only if, $\widecheck{\phi}(r; \omega)$ satisfies\footnote{Equation \eqref{EqRNRadialODE} should be compared with \eqref{EqTeukStandard}. For the Kerr case we will do the separation in the analogue of $(v,r)$-coordinates on Reissner-Nordstr\"om, which gives the radial ODE \eqref{EqThmSepVar3} which has solutions with slightly different asymptotics. But this is not essential.}
\begin{equation}\label{EqRNRadialODE}
0 = \frac{r^4 \omega^2}{\Delta^2} \widecheck{\phi}(r; \omega) + \frac{\rd_r \Delta}{\Delta} \rd_r \widecheck{\phi}(r; \omega) + \rd^2_r \widecheck{\phi}(r;\omega) \;.
\end{equation}
This ODE has two regular singular points at $r= r_+$ and $r=r_-$; for all $\omega \neq 0$ we can find a fundamental system of solutions with asymptotics\footnote{We introduce the following notation: for $f,g : \R \supseteq I \to \C$ the notation $f \sim g$ for $I \ni x \to x_0 \in \R $ stands for $\lim_{x \to x_0} \frac{f(x)}{g(x)} = 1$. When obvious which limit point is considered, we may just write $f \sim g$.}
\begin{equation}\label{EqRNAsyA}
A_{\Hp_r}(r; \omega) \sim e^{-i \omega r^*} \qquad \textnormal{ and } \qquad A_{\Hp_l}(r; \omega) \sim e^{i \omega r^*}
\end{equation}
for $r \to r_+$ and another fundamental system of solutions with asymptotics
\begin{equation} \label{EqRNAsyB}
B_{\CH_l}(r; \omega) \sim e^{-i \omega r^*} \qquad \textnormal{ and } \qquad B_{\CH_r}(r; \omega) \sim e^{i \omega r^*}
\end{equation}
for $r \to r_-$, where $r^*(r)$ is as defined earlier. Since any three solutions have to be linearly dependent, we can write 
\begin{equation*}
A_{\Hp_r}( r; \omega) = \mathfrak{T}_{\Hp_r}(\omega) B_{\CH_l}(r; \omega) + \mathfrak{R}_{\Hp_r}(\omega) B_{\CH_r}(r; \omega)
\end{equation*}
 and 
\begin{equation}\label{EqRNRem}
A_{\Hp_l}( r; \omega) = \mathfrak{T}_{\Hp_l}(\omega) B_{\CH_r}(r; \omega) + \mathfrak{R}_{\Hp_l}(\omega) B_{\CH_l}(r; \omega) \;,
\end{equation}
where $\mathfrak{T}_{\Hp_r}(\omega)$, $\mathfrak{R}_{\Hp_r}(\omega)$ and $\mathfrak{T}_{\Hp_l}(\omega)$, $\mathfrak{R}_{\Hp_l}(\omega)$ are the \emph{transmission and reflection coefficients} of the right event horizon and left event horizon, respectively. A priori they are only defined for $\omega \in \R \setminus \{0\}$, but it can be shown that they extend analytically to all of $\R$. A key ingredient needed for the proof of Theorem \ref{ThmRN2} is that $\mathfrak{T}_{\Hp_r}(0) \neq 0$, which can be shown using the $\rd_t$-conservation law (see for example \cite{LukOhShlaForth}, \cite{KehleShla19}) or by direct computation using special functions (see for example \cite{GurselEtAl79}, \cite{KehleShla19}).

For $\omega \neq 0$ we can thus expand any solution of \eqref{EqRNRadialODE} as $\widecheck{\phi}(r; \omega) = a_{\Hp_r}(\omega) A_{\Hp_r}(r; \omega) + a_{\Hp_l}(\omega) A_{\Hp_l}(r; \omega)$ with $a_{\Hp_r}, a_{\Hp_l} : \R \setminus \{0\} \to \C$  and thus, at least formally,
\begin{equation}\label{EqRNFormalSol}
\phi(t,r) = \frac{1}{\sqrt{2\pi}} \int\limits_{\R} \big(a_{\Hp_r}(\omega) A_{\Hp_r}(r; \omega) + a_{\Hp_l}(\omega) A_{\Hp_l}(r; \omega)\big) e^{-i \omega t} \, d \omega
\end{equation}
is a solution of \eqref{EqWaveEqRN}.

We now discuss the reduction of Theorem \ref{ThmRN1} to Theorem \ref{ThmRN2}. Let $\phi : (\mathcal{M}  \cup \Hp_r )\cap \{v \geq v_0\} \cap \{u \leq u_1\} \to \C$ be as in Theorem \ref{ThmRN1}. One extends the induced initial data on $\Hp_r \cap \{v \geq v_0\}$ smoothly to all of $\Hp_r$ in such a way that $\phi|_{\Hp_r}(v) = 0$ for $v \leq v_0 -1$. Using that we are in spherical symmetry, we can now solve the wave equation sideways to extend $\phi$ to the region $(\mathcal{M} \cup \Hp_r \cup \Hp_l) \cap \{u \leq u_1\}$. Again we extend the induced initial data on $\Hp_l \cap \{u \leq u_1\}$ to all of $\Hp_l$ such that $\phi|_{\Hp_l}(u) = 0$ for $u \geq u_1 +1$ and solve the wave equation forwards to get a global solution in $\mathcal{M} \cup \Hp_r \cup \Hp_l$ which satisfies the assumptions in Theorem \ref{ThmRN2}. This is the reduction of Theorem \ref{ThmRN1} to Theorem \ref{ThmRN2} by extension of $\phi$.

We now turn towards the sketch of a proof of Theorem \ref{ThmRN2}. 
One first shows that the solution $\phi$ is indeed (i.e., not just formally) given by \eqref{EqRNFormalSol} with 
\begin{equation}\label{EqRNInitialData}
a_{\Hp_r}(\omega) = \frac{1}{\sqrt{2\pi}} \int\limits_{\R} \phi|_{\Hp_r}(v) e^{- i \omega v} \, dv \qquad \textnormal{ and } \qquad a_{\Hp_l}(\omega) = \frac{1}{\sqrt{2 \pi}} \int\limits_{\R}\phi|_{\Hp_l}(u) e^{i \omega u} \, du
\end{equation} 
being the (inverse) Fourier transforms of the characteristic  initial data. This can be established in (at least) two ways: one way is to start from the expression \eqref{EqRNFormalSol} with the coefficients $a_{\Hp_r}, a_{\Hp_l}$ given by \eqref{EqRNInitialData} and to show by direct computation that it solves the wave equation \eqref{EqWaveEqRN} and attains the prescribed initial data when $r \to r_+$ and $u$ or $v$, respectively, are fixed. By the uniqueness of the characteristic initial value problem we thus obtain that \eqref{EqRNFormalSol} with \eqref{EqRNInitialData} is indeed the wanted solution. Another possibility, which will be implemented in this paper for Teukolsky on Kerr, is to first prove via energy estimates decay of $\phi(t,r)$ in $t$ for all $r \in (r_-,r_+)$ which one uses to justify that the Fourier transform \eqref{EqRNFT} is well-defined for all $r \in (r_-,r_+)$ and that it satisfies \eqref{EqRNRadialODE}. One then infers that $\phi$ must be given by \eqref{EqRNFormalSol} with \emph{some} $a_{\Hp_r}$, $a_{\Hp_l}$, which one then determines by passing the expression \eqref{EqRNFormalSol} to the limit $r \to r_+$ for either fixed $u$ or fixed $v$. Since, as will become clear below, we only use the  frequency regime around $\omega = 0$ of the wave to prove the blow-up, this second approach, in contrast to the first one, allows us to completely ignore the behaviour of the other frequency regimes in the separated picture. Let us also remark that since $\phi$ vanishes at the bifurcation sphere, we do have exponential decay  of $\phi$ in $v,u$ along $\Hp_r, \Hp_l$, respectively, when approaching the bifurcation sphere and thus $a_{\Hp_r}$ and $a_{\Hp_l}$ are in particular in $L^2_\omega(\R)$. If $\phi$ did not vanish at the bifurcation sphere, the coefficients $a_{\Hp_r}$, $a_{\Hp_l}$ would have additional poles at zero frequency which encode the constant at the bifurcation sphere.

We now investigate the regularity of the coefficient functions \eqref{EqRNInitialData} around $\omega = 0$. 
First note that by \eqref{EqRNAssump1} and a Hardy inequality\footnote{Recall that $\phi|_{\Hp_r}$ has exponential decay towards  the bifurcation sphere.} we have $\widecheck{\phi|_{\Hp_r}} \in L^2(\R)$. Furthermore \eqref{EqRNAssump2} and \eqref{EqRNAssump3} imply
\begin{align}
&\rd_\omega^p(\omega \widecheck{\phi|_{\Hp_r}}) \in L^2_\omega(\R) \qquad \textnormal{ for all } \N \ni p < p_0 \label{EqRN1} \\
&\rd_\omega^{p_0}(\omega \widecheck{\phi|_{\Hp_r}}) \notin L^2_\omega(\R) \label{EqRN2} \\
&\rd_\omega^{p_0}(\omega^2 \widecheck{\phi|_{\Hp_r}}) \in L^2_\omega(\R) \;. \label{EqRN3}
\end{align}
It follows from \eqref{EqRN1} and \eqref{EqRN3} that $\omega \cdot \rd_\omega^{p_0} (\omega \widecheck{\phi|_{\Hp_r}}) \in L^2_{\omega}(\R)$. Together with \eqref{EqRN2} this now implies
$\rd_\omega^{p_0}(\omega \widecheck{\phi|_{\Hp_r}}) \notin L^2_\omega\big((-\varepsilon, \varepsilon) \big)$
for any $\varepsilon >0$. By \eqref{EqRNInitialData} we thus obtain  for any $\varepsilon >0$
\begin{equation}
\label{EqRNID}
\rd_\omega^{p_0}(\omega a_{\Hp_r}) \notin L^2_\omega\big((-\varepsilon, \varepsilon) \big)\qquad \textnormal{ and } \qquad \rd_\omega^p (\omega a_{\Hp_r}) \in L^2_\omega\big((-1, 1)\big)  \quad \textnormal{ for all } \N \ni p < p_0 \;.
\end{equation}
Furthermore we straightforwardly obtain 
\begin{equation}
\label{EqRNID2}
\rd_\omega^p(\omega a_{\Hp_l}) \in L^2_\omega\big((-1, 1) \big) \qquad \textnormal{ for all } \N \ni p \leq p_0 \;.
\end{equation}

We now move on to the analysis of the wave near the Cauchy horizon at $r=r_-$. 
Using for example energy estimates one shows that the wave $\phi$ extends (even continuously) to the Cauchy horizon $\CH_l$\footnote{It also extends continuously to $\CH_r$.}  and satisfies
\begin{equation}
\label{EqRNCauchy}
\int\limits_{\R} \chi(v) \big|\rd_v \phi|_{\CH_l}\big|^2(v) \, dv < \infty \;,
\end{equation}
where $\chi(v) : \R \to (0, \infty)$ is a positive function with $\chi(v) \simeq |v|^{2p_0}$ for\footnote{For $v \to - \infty$ one can replace $|v|^{2p_0}$ by  $|v|^q$ for any $q \in \N$. For the definition of the notation $\simeq$ we refer the reader to the very beginning of Section \ref{SecInteriorKerr}.} $v \to -\infty$ and $\chi(v) \simeq |v|^{2(p_0 - 1)}$ for $v \to + \infty$. In particular one can take the Fourier transform of $\rd_v \phi|_{\CH_l}$ in $L^2$. Using the language of the transmission and reflection coefficients introduced earlier we can rewrite \eqref{EqRNFormalSol} as
\begin{equation}\label{EqRNSepCH}
\begin{split}
\phi(t,r) = \frac{1}{\sqrt{2\pi}} \int\limits_\R \Big( &\big[ \mathfrak{T}_{\Hp_r}(\omega) a_{\Hp_r}(\omega) + \mathfrak{R}_{\Hp_l}(\omega) a_{\Hp_l}(\omega) \big] B_{\CH_l}(r; \omega) \\
&+ \big[ \mathfrak{T}_{\Hp_l}(\omega) a _{\Hp_l}(\omega) + \mathfrak{R}_{\Hp_r}(\omega) a_{\Hp_r}(\omega) \big] B_{\CH_r}(r ; \omega) \Big) e^{-i \omega t} \, d \omega \;.
\end{split}
\end{equation}
Noticing that the Killing vector field $\rd_t$ equals $\rd_v$ on $\CH_l$ and using the asymptotics \eqref{EqRNAsyB} of $B_{\CH_l}(r; \omega)$ and $B_{\CH_r}(r; \omega)$, we can pass \eqref{EqRNSepCH} to the limit $r \to r_-$ for fixed $v$ to obtain
\begin{equation*}
\rd_v \phi|_{\CH_l} = \frac{1}{\sqrt{2\pi}} \int\limits_{\R} \underbrace{(-i) \big[ \mathfrak{T}_{\Hp_r}(\omega) \cdot \big(\omega a_{\Hp_r}(\omega)\big) + \mathfrak{R}_{\Hp_l}(\omega) \cdot \big( \omega a_{\Hp_l}(\omega) \big) \big]}_{= \widecheck{(\rd_v \phi|_{\CH_l})}} e^{- i \omega v} \, d \omega \;,
\end{equation*}
i.e., a Fourier representation of $\rd_v \phi_{\CH_l}$ in terms of the Fourier representations of the characteristic initial data and the transmission and reflection coefficients.\footnote{Let us remark that a fully fledged scattering theory for the wave equation in the interior of a Reissner-Nordstr\"om black hole has been presented in \cite{KehleShla19}.} We can now investigate the decay of $\rd_v \phi|_{\CH_l}$ in $v$ by considering the regularity of $\widecheck{(\rd_v \phi|_{\CH_l})}  $ at $\omega = 0$: 
\begin{equation}\label{EqRNFinal}
\begin{split}
i \rd_\omega^{p_0} ( \widecheck{\rd_v \phi|_{\CH_l}}) &= \mathfrak{T}_{\Hp_r}(\omega) \cdot \rd_\omega^{p_0} \big(\omega a_{\Hp_r}(\omega)\big) + \sum_{p=1}^{p_0} \binom{p_0}{p} \rd_\omega^p \mathfrak{T}_{\Hp_r}(\omega) \cdot \rd_\omega^{p_0 - p} \big(\omega a_{\Hp_r}(\omega)\big) \\
&\qquad \qquad +\sum_{p=0}^{p_0} \binom{p_0}{p} \rd_\omega^p \mathfrak{R}_{\Hp_r}(\omega) \cdot \rd_\omega^{p_0 - p} \big(\omega a_{\Hp_l}(\omega) \big) \;.
\end{split}
\end{equation}
The last two terms (the two sums) on the right hand side are in $L^2_\omega\big((-1,1)\big)$ by the analyticity of the transmission and reflection coefficients $\mathfrak{T}_{\Hp_r}(\omega), \mathfrak{R}_{\Hp_r}(\omega)$ and by \eqref{EqRNID2} and the second property in \eqref{EqRNID}. It now follows from the first property in \eqref{EqRNID} together with $\mathfrak{T}_{\Hp_r}(0) \neq 0$ (cf.\ remark below \eqref{EqRNRem}), applied to the first term on the right hand side of \eqref{EqRNFinal} that $\rd_\omega^{p_0} ( \widecheck{\rd_v \phi|_{\CH_l}}) \notin L^2_\omega \big((-\varepsilon, \varepsilon)\big)$ for any $\varepsilon >0$. Plancherel now implies
$$ \int\limits_{\R} v^{2 p_0} \big|\rd_v \phi|_{\CH_l}(v)\big|^2 \, dv = \infty \;.$$
This, however, does not tell us yet whether the slow decay of $\rd_v \phi|_{\CH_l}$ in $v$ is for $v \to + \infty$ or for $v \to - \infty$. However, with \eqref{EqRNCauchy} we can finally infer
$$\int\limits_1^\infty v^{2 p_0} \big|\rd_v \phi|_{\CH_l}(v)\big|^2 \, dv = \infty \;.$$
The statement \eqref{EqRNConcThm1} of Theorem \ref{ThmRN2} then follows by propagating the singularity backwards along $\CH_r$, using energy estimates. This is a standard propagation of regularity result. We have now concluded the sketch of a proof of Theorem \ref{ThmRN2} and will discuss next how this method of proof changes for the Teukolsky field on Kerr.

\subsubsection{Comparison to Teukolsky on Kerr}

We will mainly use the $(v_+,r, \theta, \varphi_+)$-coordinate system on Kerr, which can be thought of as the analogue of the $(v,r, \theta, \varphi)$-coordinate system on Reissner-Nordstr\"om. However, $v_+$ is not a null coordinate any more, but its level sets are timelike. The Teukolsky equation takes the form\footnote{We refer the reader to Section \ref{SecInteriorKerr} for the Kerr-related terminology. Here $\Delta = r^2 - 2Mr + a^2$ and $\mathring{\slashed{\Delta}}_{[s]}$ is the spin $s$-weighted spherical Laplacian, see Section \ref{SecSpinWeightedLaplacian}.} 
\begin{equation}\label{EqTeukIntro}
\begin{split}
0=\mathcal{T}_{[s]} \psi := & a^2 \sin^2 \theta \,\partial_{v_+}^2\psi + 2a \,\partial_{v_+}\partial_{\varphi_+} \psi + 2(r^2 + a^2)\, \partial_{v_+}\partial_r \psi 
+2 a\, \partial_{\varphi_+}\partial_r \psi + \Delta \,\partial_r^2 \psi \\ &+  2\Big( r(1-2s) - isa\cos \theta\Big)\, \partial_{v_+} \psi +\dashuline{2(r-M)(1-s) \,\partial_r \psi} + \mathring{\slashed{\Delta}}_{[s]} \psi - 2s \psi 
 \;,
\end{split}
\end{equation}
where the Teukolsky field $\psi$ is with respect to an algebraically special frame which is regular at the right event horizon $\Hp_r$, cf.\ Sections \ref{SecPrincipalNull} and \ref{SecTeukAndSpin}.  For $s=+2$, the case we are concerned with, the frame component entering the Teukolsky field degenerates near $\Hp_l$ and thus a regular Teukolsky field $\psi$ vanishes on the left event horizon including at the bifurcation sphere.

Let us begin by discussing the differences between the energy estimates for Teukolsky and the linear wave equation. As is well-known, the spacetime geometry near the event horizons is such that localised energy of linear waves decays exponentially. This is usually referred to as the `red-shift effect'; it helps the analyst to close energy estimates. The name of course derives from a shift in frequency, which is also present at the event horizons. The shift in frequency and the decay of energy are not one and the same thing -- indeed, they decouple for the Teukolsky equation. We give a detailed discussion in Remark \ref{RemRedShift}. For the energy estimates it is of course the decay of localised energy which is most relevant -- let us refer to this effect as the `red-shift effect for energy' in order to keep in touch with standard terminology. For the Teukolsky field $\psi$ (and for $s = +2$) we now have an \emph{effective} \textbf{blue}-shift for the energy at the right event horizon $\Hp_r$. This can be seen from the dashed term in \eqref{EqTeukIntro}. It is effective in the sense that it turns into a red-shift for the energy after two commutations with $\rd_r$. It is thus at this level that we close the energy estimate for the Teukolsky field near $\Hp_r$. The Teukolsky equation for $\hat{\psi} := \Delta^{-s} \psi$, which is the Teukolsky field with respect to a frame that is regular at the left event horizon\footnote{Recall that $\psi$ degenerates (vanishes) at $\Hp_l$.} $\Hp_l$, does still have a red-shift for energy near $\Hp_l$; so there, the energy estimates can be closed at the level of $\hat{\psi}$ as for the wave equation.

On the other hand, the blue-shift for energy for the wave equation near the Cauchy horizon turns into an effective \textbf{red}-shift for energy for the Teukolsky field $\psi$ near the left Cauchy horizon $\CH_l$. This makes the energy estimates for \eqref{EqTeukIntro} near $r=r_-$ in a sense even easier than for the wave equation (disregarding the more technical nature of implementing the energy estimates for Teukolsky). It is again `effective' in the sense that after two commutations with $\rd_r$ we have again a blue-shift for energy.

We now discuss the formal separation. Denoting with $Y^{[s]}_{ml}(\theta, \varphi ; \omega) = S^{[s]}_{ml}( \cos \theta ; \omega) e^{im \varphi} $ the spin $s$-weighted spheroidal harmonics (see Section \ref{SecSpin2Harmonics}; we have $\N \ni l \geq \max\{|s|, |m|\}$, $m \in \Z$), the Teukolsky transform of $\psi$ is given by
\begin{equation}\label{EqIntroTeukTrafo}
\widecheck{\psi}_{ml}(r; \omega) = \frac{1}{\sqrt{2\pi}} \int_\R \int_{\Sp^2} \psi(v_+, r ,\theta, \varphi_+) e^{i \omega v_+} \overline{Y^{[s]}_{ml}(\theta, \varphi_+; \omega) }\, dv_+ \vols \;.
\end{equation}
Formally, $\psi$ satisfies the Teukolsky equation \eqref{EqTeukIntro} if, and only if, $\widecheck{\psi}_{ml}(r; \omega)$ satisfies\footnote{Here, $\lambda_{ml}^{[s]}(\omega)$ denotes the eigenvalue associated to the eigenfunction $Y^{[s]}_{ml}(\cdot; \omega)$ of the spin $2$-weighted spheroidal Laplacian, see Section \ref{SecSpin2Harmonics}.}
\begin{equation}\label{EqIntroODE}
\begin{split}
\Delta \frac{d^2}{dr^2} \widecheck{\psi}_{ml}(r; \omega) &+ 2 \Big( - (r^2 + a^2) i \omega + i a m + (r-M) (1-s)\Big) \frac{d}{dr} \widecheck{\psi}_{ml}(r; \omega) \\
&+ \Big( \lambda_{ml}^{[s]}(\omega) - (a \omega)^2 + 2 \omega m a - 2i \omega r (1-2s) - 2s\Big) \widecheck{\psi}_{ml}(r; \omega) = 0  \;.
\end{split}
\end{equation}
Like \eqref{EqRNRadialODE}, the radial ODE \eqref{EqIntroODE} has two regular singular points at $r=r_-$ and $r=r_+$. Let $\omega_\pm = \frac{a}{2Mr_\pm}$ and fix $s=+2$. For $\omega \neq \omega_+m$ we can find a fundamental system of solutions with asymptotics
\begin{equation*}
A_{\Hp_r, ml} (r; \omega) \sim 1 \qquad \textnormal{ and } \qquad A_{\Hp_l, ml}(r; \omega) \sim \Big(\frac{r_+ - r}{r_+ - r_-} \Big)^{2 + \frac{4iMr_+}{r_+ -r_-}(\omega - \omega_+m)}
\end{equation*}
for $r \to r_+$ and for $\omega \neq \omega_- m$ another fundamental system of solutions with asymptotics
\begin{equation*}
B_{\CH_l, ml} (r; \omega) \sim 1 \qquad \textnormal{ and } \qquad B_{\CH_r, ml}(r; \omega) \sim \Big(\frac{r - r_-}{r_+ - r_-} \Big)^{2 -\frac{4iMr_-}{r_+ -r_-}(\omega - \omega_-m)} 
\end{equation*}
for $r \to r_-$. The fact that $A_{\Hp_r, ml}$ and $B_{\CH_l, ml}$ do not have oscillating phases as for the wave equation in \eqref{EqRNAsyA} and \eqref{EqRNAsyB} is due to our choice of $(v_+,r, \theta, \varphi_+)$-coordinates. If we had used Boyer-Lindquist coordinates $(t,r,\theta, \varphi)$ for the separation, both branches would be oscillatory. Note, however, the difference in the $r$-weights between the two branches, which is related to $\psi$ being regular at $\Hp_r$ and degenerate at $\Hp_l$, and similarly for the Cauchy horizons. Another important difference is that while the branches $A_{\Hp_l, ml}$ and $B_{\CH_r, ml}$ extend analytically to $\omega = \omega_+m$ and $\omega = \omega_- m$, respectively, the branches $A_{\Hp_r, ml}$ and $B_{\CH_l, ml}$ become singular at $\omega = \omega_+m$ and $\omega = \omega_- m$, respectively. This should be contrasted with both branches $A_{\Hp_r}$ and $A_{\Hp_l}$ in \eqref{EqRNAsyA} for the wave equation having a regular (and indeed identical) limit $\omega \to 0$ (similarly for the other two branches in \eqref{EqRNAsyB}). This difference impacts a priori on relating the coefficients in the separated picture to the Teukolsky transform of the characteristic initial data (more about this later) and also on the regularity of the transmission and reflection coefficients: as before we can write
\begin{equation*}
\begin{split}
A^{[s]}_{\Hp_r, ml}(r; \omega) &= \mathfrak{T}^{[s]}_{\Hp_r, ml}(\omega) \cdot B^{[s]}_{\CH_l, ml}(r; \omega) + \mathfrak{R}^{[s]}_{\Hp_r, ml} (\omega) \cdot B^{[s]}_{\CH_r, ml} (r; \omega) \\
A^{[s]}_{\Hp_l, ml}(r; \omega) &= \mathfrak{T}^{[s]}_{\Hp_l, ml}(\omega) \cdot B^{[s]}_{\CH_r, ml}(r; \omega) + \mathfrak{R}^{[s]}_{\Hp_l, ml} (\omega) \cdot B^{[s]}_{\CH_l, ml} (r; \omega) \;,
\end{split}
\end{equation*}
where the transmission and reflection coefficients are a priori only defined and analytic on $\R \setminus \{\omega_+m, \omega_-m\}$. Recall that the structure of the blow-up argument only requires information on the frequency regime near $\omega = 0$. So for $m \neq 0$ we know that the transmission and reflection coefficients are analytic in a neighbourhood of $\omega = 0$. Moreover, for non-vanishing $m$ we show by direct computation that $\mathfrak{T}_{\Hp_r, ml}(0) \neq 0$, where we use that for $\omega = 0$ the radial ODE \eqref{EqIntroODE} turns into a hypergeometric equation. 
For $m=0$, however, the potentially problematic frequency at $\omega = 0$ cannot be avoided. We show that $\mathfrak{T}_{\Hp_l, 0l}, \mathfrak{R}_{\Hp_l,0l}, \mathfrak{T}_{\Hp_r,0l}$ all extend analytically to $\omega = 0$, but for the reflection coefficient of the right event horizon we only show that $\omega \cdot \mathfrak{R}_{\Hp_r, 0l}$ extends analytically to $\omega = 0$.\footnote{The analyticity of $\mathfrak{T}_{\Hp_l, 0l}$ at $\omega = 0$ is of no relevance to this paper and has not been explicitly stated, but is also proven as a side-result in the proof of Proposition \ref{PropTRM0}. And while we do not show that $\mathfrak{R}_{\Hp_r, 0l}$ has indeed a pole at $\omega = 0$, this is what we would expect -- and it can be decided by a longer and direct computation. However, this is of no relevance to this paper.} 

It can also be shown by direct computation for $m = 0$ that $\mathfrak{T}_{\Hp_r, 0l}(0) \neq 0$. However, this is more complicated than in the case $m \neq 0$, because it cannot be inferred alone from the $\omega \to 0$ limit of \eqref{EqIntroODE}, which is a hypergeometric equation, but we also need to get information on the $\omega$-derivatives of solutions to \eqref{EqIntroODE} at $\omega = 0$. We take this as an opportunity to implement and demonstrate a second approach to showing the non-vanishing of the transmission coefficients at $\omega = 0$, namely by making use of the Teukolsky-Starobinsky conservation law, which can be thought of as the equivalent to using the conservation law associated to the Killing vector field $\rd_t$ in the case of spherically symmetric waves on Reissner-Nordstr\"om mentioned in Section \ref{SecRN}. It is for this implementation where we need that $\omega \mathfrak{R}_{\Hp_r, 0l}$ extends continuously to $\omega = 0$.  Let us mention that we also show how the Teukolsky-Starobinsky conservation law can be used to obtain $\mathfrak{T}_{\Hp_r, ml}(0) \neq 0$ for $m \neq 0$, but in this case, which gives the leading blow-up at the Cauchy horizon, the direct computation is much easier. 

Finally, we also mention at this point that for a reason to be explained below we also need in the case $m=0$ the vanishing of $\mathfrak{R}_{\Hp_l, 0l}(0)$ in order to implement the blow-up argument. Again, this is shown by direct computation.

For $\omega \neq \omega_+m$ we can expand any solution of \eqref{EqIntroODE} as $$\widecheck{\psi}_{ml}(r; \omega) = a_{\Hp_r, ml}(\omega) A_{\Hp_r, ml}(r; \omega) + a_{\Hp_l, ml}(\omega) A_{\Hp_l, ml}(r; \omega)$$ with $a_{\Hp_r, ml}, a_{\Hp_l, ml} : \R \setminus \{ \omega_+m\} \to \C$ and thus, at least formally, we obtain that
\begin{equation} \label{EqIntroSepTeuk}
\psi(v_+,y, \theta, \varphi_+) = \frac{1}{\sqrt{2\pi}} \int_\R \sum_{m,l} \big[ a_{\Hp_r, ml}(\omega) A_{\Hp_r, ml}(y; \omega) + a_{\Hp_l, ml}(\omega) A_{\Hp_l, ml}(y; \omega) \big]  Y^{[s]}_{ml}(\theta, \varphi_+; \omega) e^{-i\omega v_+} \, d\omega 
\end{equation}
is a solution to \eqref{EqTeukIntro}.

In a similar way to how the local Theorem \ref{ThmRN1} for the spherically symmetric wave equation on Reissner-Nordstr\"om is reduced to the global Theorem \ref{ThmRN2}, we also reduce the local Theorem \ref{ThmInt} (or Theorem \ref{Thm1}) for the Teukolsky field of Kerr to a global theorem, see Theorem \ref{Thm2}. In spherical symmetry we extended the local solution to a global one by first solving sideways and in this way  ensuring that the extended solution vanishes at the bottom bifurcation sphere. For the Teukolsky equation we can no longer solve sideways, but, by solving two initial value problems, we can still extend the local solution to a global one which is compactly supported on $\Hp_l \cup \Sp^2_b$. This is done in Theorem \ref{ThmExt} in Section \ref{SecExtension}, see also Figure \ref{FigExt}. However, we can no longer ensure that the regular Teukolsky field vanishes at the bottom bifurcation sphere, which entails that we have to deal with what is the analogue  of the poles  in the Fourier expansion coefficients $a_{\Hp_r}$ and $a_{\Hp_l}$ in the spherically symmetric case, cf.\ discussion above \eqref{EqRN1}.\footnote{\label{FootnoteBSP} To be slightly more precise here, recall that $\psi$ vanishes automatically at the bottom bifurcation sphere because of the degeneration of the frame chosen. The Teukolsky analogue of the vanishing of the scalar field at $\Sp^2_b$, which avoids poles in the Fourier expansion coefficients, is the vanishing of $\rd_r^2 \psi$, which is non-degenerate at $\Sp^2_b$ due to the blow-up of $\rd_r$ in $(v_+,r, \theta, \varphi_+)$-coordinates at $\Sp^2_b$. 

Being confronted with a non-trivial field at the bottom bifurcation sphere one might still entertain the following approach, which can easily be implemented for the wave equation: by decomposing the initial data, we write the solution $\psi$ obtained by the above extension procedure as  a superposition of a solution $\psi_1$, the initial data of which is supported on $\Hp_l \cup \Hp_r$ only in a compact neighbourhood of $\Sp^2_b$, and another solution $\psi_2$ that vanishes on $\Hp_l$ including at $\Sp^2_b$ (and agrees with $\psi$ on $\Hp_r$ for late affine times). One can now run the desired argument for $\psi_2$ to obtain the singularity at the Cauchy horizon and then use   standard energy estimates  for $\psi_1$ to see that $\psi_1$ is much more regular at the Cauchy horizon -- and can essentially be neglected. However, one runs into difficulty when trying to implement this strategy for Teukolsky due to the effective blue-shift effect on $\Hp_r$ mentioned earlier. The reader can see directly from \eqref{EqTeukIntro} that the transversal derivative $\rd_r \psi_1$ of the solution $\psi_1$, whose trace on $\Hp_r$ vanishes for late affine times,  will in general grow exponentially along $\Hp_r$ -- thus prohibiting the stability estimates. (For the wave equation, due to the red-shift effect, the transversal derivative decays exponentially.) For this reason our proof of the blow-up of the Teukolsky field at the Cauchy horizon is more global in nature than for the wave equation.}  

We now discuss the implementation of the proof of  the global Theorem \ref{ThmRN2} to Teukolsky on Kerr (i.e., the proof of  Theorem \ref{Thm1} in Section \ref{SecAssumptions}). 
In Sections \ref{SecEERed} and \ref{SecEENoShift} we prove the energy estimates needed  to establish the representation \eqref{EqIntroSepTeuk}. The coefficients $a_{\Hp_r, ml}$ and $a_{\Hp_l, ml}$ are being determined in Section \ref{SecDetA}.  Keeping the $v_+$ coordinate fixed one can pass to the limit $r \to r_+$ in an analogous manner as for the wave equation to establish that 
\begin{equation*}
a_{\Hp_l, ml}(\omega) = \widecheck{\psi|_{\Hp_r}}_{ml}(\omega) \;,
\end{equation*}
where $\widecheck{(\cdot)}_{ml}$ is the Teukolsky transform \eqref{EqIntroTeukTrafo}. Note that because of the exponential decay in $v_+$ of $\psi|_{\Hp_r}$ towards $\Sp_b^2$ we have that $\psi|_{\Hp_r}$ is in particular in $L^2_{v_+}L^2(\Sp^2)$, so no poles are present. Because $\psi$ vanishes on $\Hp_l$, we first go over to the quantity $\rd_r^2 \psi$, which is regular at $\Hp_l$ due to the blow-up of $\rd_r$ near $\Hp_l$ (see also Footnote \ref{FootnoteBSP}). We thus take two $r$-derivatives of \eqref{EqIntroSepTeuk} and then pass to the limit $r \to r_+$ with fixed $v_-$.\footnote{See Section \ref{SecManifoldMetric} for the definition of $v_-$. It can be thought of as the analogue of $u$ in Reissner-Nordstr\"om.} However, it is clear from the preceding discussion that one cannot hope to establish an $L^2$-limit, since $\rd_r^2 \psi$ does not vanish at $\Sp_b^2$. We take a limit in the sense of distributions to recover that $a_{\Hp_l, ml}$ is related to the Teukolsky transform of $\rd_r^2 \psi|_{\Hp_l}$ (modulo a delta distribution). The support of the Teukolsky field at $\Sp_b^2$ implies that $a_{\Hp_l, ml}$ has a pole at $\omega = \omega_+m$. 
For $m \neq 0$ we can ignore this pole, since it is disjoint from a neighbourhood of $\omega = 0$ which is important for the argument. But for $m = 0$ the pole potentially interferes with our argument which is based on exploiting the limited regularity of the Fourier coefficients at $\omega = 0$. It is for this reason that $\mathfrak{R}_{\Hp_l, 0l}(0) = 0$ is needed later, which cancels the pole.

Recall how we inferred for the spherically symmetric wave the limited regularity \eqref{EqRNID} of $a_{\Hp_r}$ at $\omega = 0$ from the decay assumptions of $\psi$ along $\Hp_r$. In spherical symmetry we only had one mode -- the spherically symmetric one -- for Teukolsky we want to work with the $m_0l_0$-mode for which we assume slow decay in Theorem \ref{ThmInt} (or Theorem \ref{Thm1}). Note, however, that in the assumptions the $m_0l_0$-mode is with respect to spin $2$-weighted \emph{spherical} harmonics and not spin $2$-weighted \emph{spheroidal} harmonics. 

So we would like to obtain for any $\varepsilon >0$
\begin{equation} \label{EqIntroKerrSlowD}
\rd_\omega^{p_0} a_{\Hp_r, m_0l_0} \notin L^2_\omega \big((-\varepsilon, \varepsilon)\big) \qquad \textnormal{ and } \qquad \rd_\omega^p a_{\Hp_r, m_0l_0} \in L^2_\omega \big(-1,1\big) \quad \textnormal{ for all } \N \ni p < p_0 \;.
\end{equation}
Note that the derivation of \eqref{EqRNID}  used at its heart that $v$-weights translate in the Fourier picture as $\omega$-derivatives. Since the spin weighted spheroidal harmonics in the Teukolsky transform \eqref{EqIntroTeukTrafo} depend on $\omega$, this correspondence does not hold true any more for Kerr.\footnote{This is not an issue arising from considering Teukolsky versus the wave equation, but already appears when considering the wave equation on Kerr.} Exploiting, however, that for $\omega = 0$ the spin weighted \emph{spheroidal} harmonics agree with the spin weighted \emph{spherical} harmonics, we can still obtain \eqref{EqIntroKerrSlowD}, see Proposition \ref{PropFourierAssump}.

In an analogous way as for the spherically symmetric wave equation (see \eqref{EqRNSepCH}) we can now express \eqref{EqIntroSepTeuk} in terms of the fundamental solutions $B_{\CH_l, ml}$, $B_{\CH_r, ml}$ normalised at the Cauchy horizons and the transmission and reflection coefficients and prove energy estimates which allow us to pass to the limit $r \to r_-$ with fixed $v_+$ to obtain that 
\begin{equation*}
\psi|_{\CH_l}(v_+, \theta, \varphi_+) = \frac{1}{\sqrt{2 \pi}} \int_{\R} \sum_{m,l} \underbrace{\big[\mathfrak{R}_{\Hp_l, ml}(\omega) a_{\Hp_l, ml}(\omega) + \mathfrak{T}_{\Hp_r, ml}(\omega) a_{\Hp_r, ml}(\omega) \big]}_{= \widecheck{(\psi|_{\CH_l})}_{ml}(\omega)} Y^{[s]}_{ml}(\theta, \varphi_+; \omega) e^{-i \omega v_+} \, d \omega \;.
\end{equation*}
As before, and using $\mathfrak{R}_{\Hp_l, 0l}(0) = 0$ in the case $m_0 = 0$, we deduce that $\rd_\omega^{p_0} \widecheck{(\psi|_{\CH_l})}_{ml}(\omega) \notin L^2_\omega \big((-\varepsilon, \varepsilon) \big)$ for any $\varepsilon >0$. When converting this into the statement
\begin{equation} \label{EqIntorNearlyDone}
\int_{\R} \int_{\Sp^2} |v_+^{p_0} \psi|_{\CH_l}(v_+,  \theta, \varphi_+)|^2 \, \vols dv_+ = \infty 
\end{equation}
we again have to address the complication that $v$-weights do not exactly correspond to $\omega$-derivatives. This is done by proving bounds on $\rd_\omega^k Y^{[s]}_{ml}(\theta, \varphi_+; \omega)$, see Propositions \ref{PropEstimatesDerivativesEigenfunctions} and \ref{PropCharSlowDecay}. As for the spherically symmetric model we prove energy estimates in Section \ref{SecEECauchy} to  show that the infinitude of the integral in \eqref{EqIntorNearlyDone} is due to the behaviour of $\psi$ for large positive $v_+$ and also that we can propagate the singularity backwards. This concludes the outline of the proof.

\subsection{Outline of paper}

In Section \ref{SecInteriorKerr} we begin by introducing the interior of the Kerr black hole, then recall briefly the derivation of the Teukolsky equation, and we define spin weighted functions on the sphere as well as on spacetime. Moreover, we show that the Teukolsky field has the regularity of such a spin weighted function on spacetime and we record the form of the Teukolsky equation in various coordinate systems for later reference. Section \ref{SecAssumptions} formulates the main theorems of this paper and their assumptions. The proof of the main theorems begins in Section \ref{SecEE} where we establish the energy estimates required and record some corollaries which are needed later for the limits $r \to r_\pm$, the separation of the solution, the extension to the Cauchy horizon, and the backwards propagation of the singularity. In Section \ref{SecTeukSep} we recall the spin weighted spheroidal harmonics, establish a couple of results which are needed for the translation of $v_+$-weights to $\omega$-derivatives, and then use the energy estimates to give the separation of the Teukolsky field. We continue in Section \ref{SecAnaHeun} with the analysis of the radial Teukolsky ODE, introduce the fundamental systems of solutions we work with, and prove the required properties of the transmission and reflection coefficients. Section \ref{SecDetA} is concerned with the passing to the limit $r \to r_+$ and the determination of the Fourier coefficients in terms of the characteristic initial data. And finally in Section \ref{SecPfMainThm} we conclude the proofs of the main theorems. Appendix \ref{CoordTransformation} records the form of the Teukolsky equation in coordinates which are regular near the bottom bifurcation sphere and discusses the initial value problem for Teukolsky, which is needed for the extension procedure which reduces the local Theorem \ref{Thm1} to the global Theorem \ref{Thm2}.  The  Appendices \ref{AppendixCommTeukHat}, \ref{AppendixTeukStar}, and \ref{AppendixBL} collect commutator expressions required for the energy estimates in Section \ref{SecEE}.

\subsection*{Acknowledgements}

I would like to acknowledge that this project started out as a collaboration with Jonathan Luk which had to be discontinued because of time constraints. Several ideas in this paper have been developed together with him and I am also grateful to Jonathan for bringing the Teukolsky-Starobinsky conservation law to my attention. Moreover, I would like to thank Yakov Shlapentokh-Rothman for a helpful discussion. A large body of this work was completed while I was supported through the Sylvester Research Fellowship at the University of Oxford. I now acknowledge support through the Royal Society University Research Fellowship URF\textbackslash R1\textbackslash 211216. There are no competing interests.

\section{The interior of sub-extremal Kerr and gravitational perturbations} \label{SecInteriorKerr}

This section presents the set-up of this paper. We first introduce the geometry of the interior of a sub-extremal Kerr black hole and then recall the derivation of the  Teukolsky equation along with the notion of spin weighted functions. We also show that the geometrically arising Teukolsky field  is indeed such a spin weighted function.

We also introduce the following \emph{notation}: for a function $f$ and a non-negative function $g$ the notation $f \lesssim g$  means that there exists a constant $C>0$ such that $|f(x)| \leq C \cdot g(x)$ holds for all points $x$ for which both functions are defined. If we say `$f \lesssim g$ on $A$', where $A$ is a subset of the domains of definition of $f$ and $g$, then this means that there exists a constant $C>0$ such that $|f| \leq C \cdot g$ holds on $A$. Similarly, `$f \lesssim g$ for $x \to x_0$' means that there exists a neighbourhood $A$ of $x_0$ such that $f \lesssim g$ on $A$ . Here, $x_0$ may also be $\infty$. The notations `$f \lesssim g$ as $x \to x_0$' and `$f = \mathcal{O}(g)$ as $x \to x_0$' have the same meaning.\footnote{The reason we use both notations is that we find it convenient to use the $\mathcal{O}$ notation within equations: an equation of the form $f = t \cdot \mathcal{O}(g) + h$ has to be read as `$f = t \cdot u + h$ with $u = \mathcal{O}(g)$'. The limit associated with the $\mathcal{O}$ notation is often understood from the context and not mentioned explicitly.} Finally, if both $f$ and $g$ are non-negative, then the notation $f \simeq g$ stands for `$f \lesssim g$ and $g \lesssim f$', i.e., there exists a constant $C >0$ such that $\frac{1}{C} f \leq g \leq C \cdot g$. Again, we may specify a region or a limit in which $f \simeq g$ is supposed to hold.

\subsection{The manifold and metric of the interior of sub-extremal Kerr} \label{SecManifoldMetric}

We consider the standard $(t,r,\theta, \varphi)$ coordinates on the smooth manifold $\mathcal{M} = \R \times (r_-,r_+) \times \mathbb{S}^2$, where $r_- = M - \sqrt{M^2 - a^2}$, $r_+ =M + \sqrt{M^2 - a^2}$, and $0 < |a| < M $  are constants which later represent the angular momentum per unit mass and the mass of the black hole, respectively. A Lorentzian metric $g$ on $\mathcal{M}$ is defined by
\begin{equation}\label{Kerr.metric.BL}
g = g_{tt} \, dt^2 + g_{t\varphi}\,(dt \otimes d\varphi + d\varphi \otimes dt) + \frac{\rho^2}{\Delta} \, dr^2 + \rho^2 \, d\theta^2 + g_{\varphi \varphi} \, d\varphi^2 \;,
\end{equation}
where
\begin{equation*}
\begin{aligned}
&\rho^2 = r^2 + a^2 \cos^2\theta\;,  \qquad \qquad &&g_{tt} = -1 + \frac{2Mr}{\rho^2}\;{,} \\
&\Delta = r^2 -2Mr + a^2\;,  &&g_{t\varphi} = -\frac{2Mra\sin^2\theta}{\rho^2}\;{,} \\
& &&g_{\varphi \varphi} = \big[ r^2 + a^2 +\frac{2Mra^2 \sin^2\theta}{\rho^2}\big] \sin^2\theta \;.
\end{aligned}
\end{equation*}
Note that $r_- <  r_+$  are the roots of $\Delta$. We also compute $\det g = -\rho^4 \sin^2 \theta$ for later convenience. We fix a time orientation on the Lorentzian manifold $(\mathcal{M},g)$ by stipulating that $-\partial_r$ is future directed. The time oriented Lorentzian manifold $(\mathcal{M},g)$ is called the \emph{interior of a sub-extremal Kerr black hole} and the coordinates $(t,r,\theta, \varphi)$ are called \emph{Boyer--Lindquist coordinates}.  Moreover, let us fix an orientation by stipulating that the Lorentzian volume form $\vol = \rho^2 \sin \theta \,dt \wedge dr \wedge d\theta \wedge {d}\varphi$ is positive. A longer computation yields that $(\mathcal{M},g)$ is a solution to the vacuum Einstein equations $\mathrm{Ric}(g)=0$.

For later reference we note that the inverse metric $g^{-1}$ in the \emph{Boyer--Lindquist coordinates} $(t, \varphi, r, \theta)$ is given by
\begin{equation}
\label{gInverse}
g^{-1} = \begin{pmatrix}
-\frac{g_{\varphi \varphi}}{\Delta \sin^2 \theta} & \frac{g_{t \varphi}}{\Delta \sin^2 \theta} & 0 & 0 \\
\frac{g_{t \varphi}}{ \Delta \sin^2 \theta} & -\frac{g_{tt}}{\Delta \sin^2 \theta} & 0 & 0 \\
0 & 0 & \frac{\Delta}{\rho^2} &0 \\
0 & 0 & 0& \frac{1}{\rho^2}
\end{pmatrix} \;.
\end{equation}

In the following we will attach boundaries to $\mathcal{M}$. 
Let $r^*(r)$ be a function on $(r_-,r_+)$ satisfying $\frac{dr^*}{dr} = \frac{r^2 + a^2}{\Delta}$ and $\overline{r}(r)$ a function on $(r_-,r_+)$ satisfying $\frac{d\overline{r}}{dr} = \frac{a}{\Delta}$. We now define the following functions on $\mathcal M$:
\begin{align*}
v_+ := t + r^* \quad &, \qquad \varphi_+ := \varphi + \overline{r}\quad  \mathrm{mod}\; 2\pi\;,\\
v_- := r^* - t \quad &, \qquad \varphi_- := \varphi - \overline{r} \quad \mathrm{mod}\; 2\pi \;.
\end{align*}
It is easy to check that $(v_+, \varphi_+, r, \theta)$ and $(v_-, \varphi_-, r, \theta)$ are coordinate systems for $\mathcal M$. The metric $g$ in these coordinates takes the following form:
\begin{equation*}
\begin{split}
g&= g_{tt} \, dv_+^2 + g_{t\varphi} \, \big( dv_+ \otimes d\varphi_+ + d\varphi_+ \otimes dv_+\big) + g_{\varphi \varphi} \, d\varphi^2_+  +\big(dv_+ \otimes dr + dr \otimes dv_+\big) \\[2pt] &\qquad - a\sin^2\theta \, \big( dr \otimes d\varphi_+ + d\varphi_+ \otimes dr\big) + \rho^2 \, d\theta^2 \\[7pt]
&= g_{tt} \, dv_-^2 - g_{t\varphi} \, \big( dv_- \otimes d\varphi_- + d\varphi_- \otimes dv_-\big) + g_{\varphi \varphi} \, d\varphi^2_-  + \big(dv_- \otimes dr + dr \otimes dv_-\big)\\[2pt]
&\qquad + a\sin^2\theta \, \big( dr \otimes d\varphi_- + d\varphi_- \otimes dr\big) + \rho^2 \, d\theta^2 \;.
\end{split}
\end{equation*}
A simple computation shows that those expressions define non-degenerate (and analytic) Lorentzian metrics for all positive values of $r$.
We now set
\begin{equation*}
\kappa_\pm = \frac{r_\pm - r_\mp}{2(r_\pm^2 + a^2)}
\end{equation*} 
and define the Kruskal-like coordinate functions
\begin{align*}
V_{r_+}^+ &:= e^{\kappa_+ v_+} \\
V_{r_+}^- &:= e^{\kappa_+ v_-} \\
\Phi_{r_+} &:= \varphi - \frac{at}{r_+^2 + a^2} \;.
\end{align*}
The Kruskal-like coordinates $(V_{r_+}^+, V_{r_+}^-, \theta, \Phi_{r_+})$ map $\mathcal{M}$ onto $(0, \infty) \times (0, \infty) \times \mathbb{S}^2$.
It can be shown (see \cite{ONeillKerr}, Chapter 3.5) that the Kerr metric \eqref{Kerr.metric.BL} extends, under this mapping, regularly to the manifold $[0, \infty) \times [0,\infty) \times \mathbb{S}^2$. We call the null hypersurface $\{0\} \times [0, \infty) \times \mathbb{S}^2 =: \Hpl$ the \emph{(left) event horizon} and the null hypersurface $[0, \infty) \times \{0\} \times \mathbb{S}^2 =: \Hpr$ the \emph{(right) event horizon}. The sphere $\{0\} \times \{0\} \times \mathbb{S}^2 = \Hpr \cap \Hpl =:\mathbb{S}^2_b$ is called the \emph{(bottom) bifurcation sphere}.

In order to extend $\mathcal{M}$ to $r = r_-$, we define another set of Kruskal-like coordinate functions by
\begin{align*}
V_{r_-}^+ &:= -e^{\kappa_- v_+} \\
V_{r_-}^- &:= -e^{\kappa_- v_-} \\
\Phi_{r_-} &:= \varphi - \frac{at}{r_-^2 + a^2} \;.
\end{align*}
The Kruskal-like coordinates $(V_{r_-}^+, V_{r_-}^-, \theta, \Phi_{r_-})$ map $\mathcal{M}$ onto $(-\infty, 0) \times (-\infty, 0) \times \mathbb{S}^2$ and in the same way it can be shown that the Kerr metric \eqref{Kerr.metric.BL} extends in these coordinates regularly to $(-\infty, 0] \times (-\infty, 0] \times \Sp^2$. We call the null hypersurface $\{0\} \times (-\infty, 0] \times \Sp^2 =: \CH_r$ the \emph{(right) Cauchy horizon} and the null hypersurface $(-\infty,0] \times \{0\} \times \Sp^2 =: \CH_l$ the \emph{(left) Cauchy horizon}. The sphere $\{0\} \times \{0\} \times \Sp^2 = \CH_r \cap \CH_l =: \Sp_t$ is called the \emph{(top) bifurcation sphere}.

Using the two Kruskal-like coordinate systems we define the manifold with corners $\M := \mathcal{M} \cup \Hp_l \cup \Hp_r \cup \CH_l \cup \CH_r$, which is depicted in a Penrose-style diagram\footnote{To be more precise, depicted is a slice of constant $0<\theta < \pi$ and each point represents an $\Sp^1$.} in Figure \ref{FigInt1}. Figure \ref{FigInt2} shows the behaviour and range of the functions $t, r, v_-,$ and $v_+$. We also define the manifolds with corners $\underline{\mathcal{M}} := \mathcal{M} \cup \Hp_l \cup \Hp_r$ and $\overline{\mathcal{M}} := \mathcal{M} \cup \CH_l \cup \CH_r$.

\begin{figure}[h]
\centering
\begin{minipage}{.5\textwidth}
  \centering
 \def\svgwidth{6cm}
    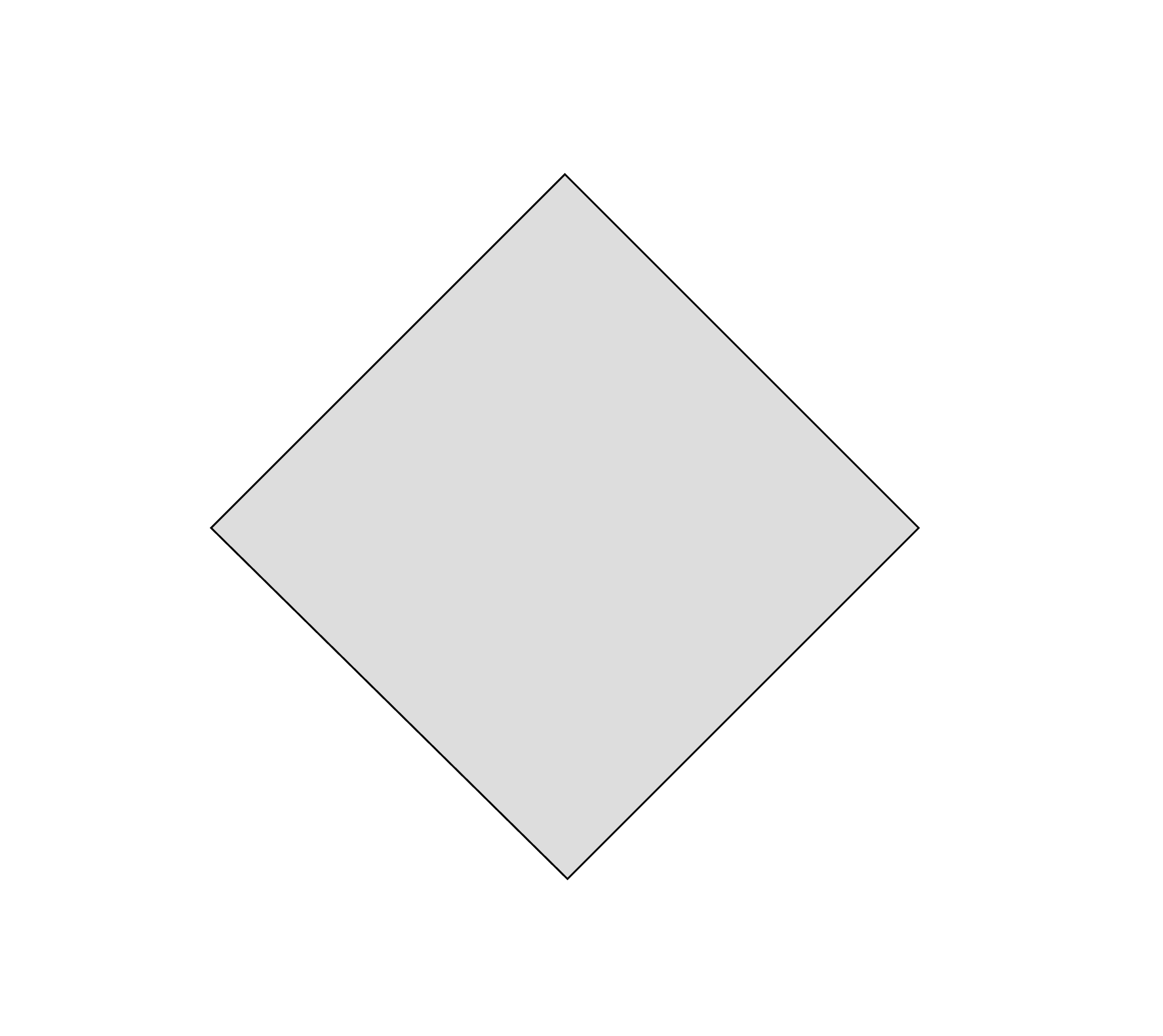
      \caption{The interior of sub-extremal Kerr} \label{FigInt1}
\end{minipage}%
\begin{minipage}{.5\textwidth}
  \centering
  \def\svgwidth{6cm}
    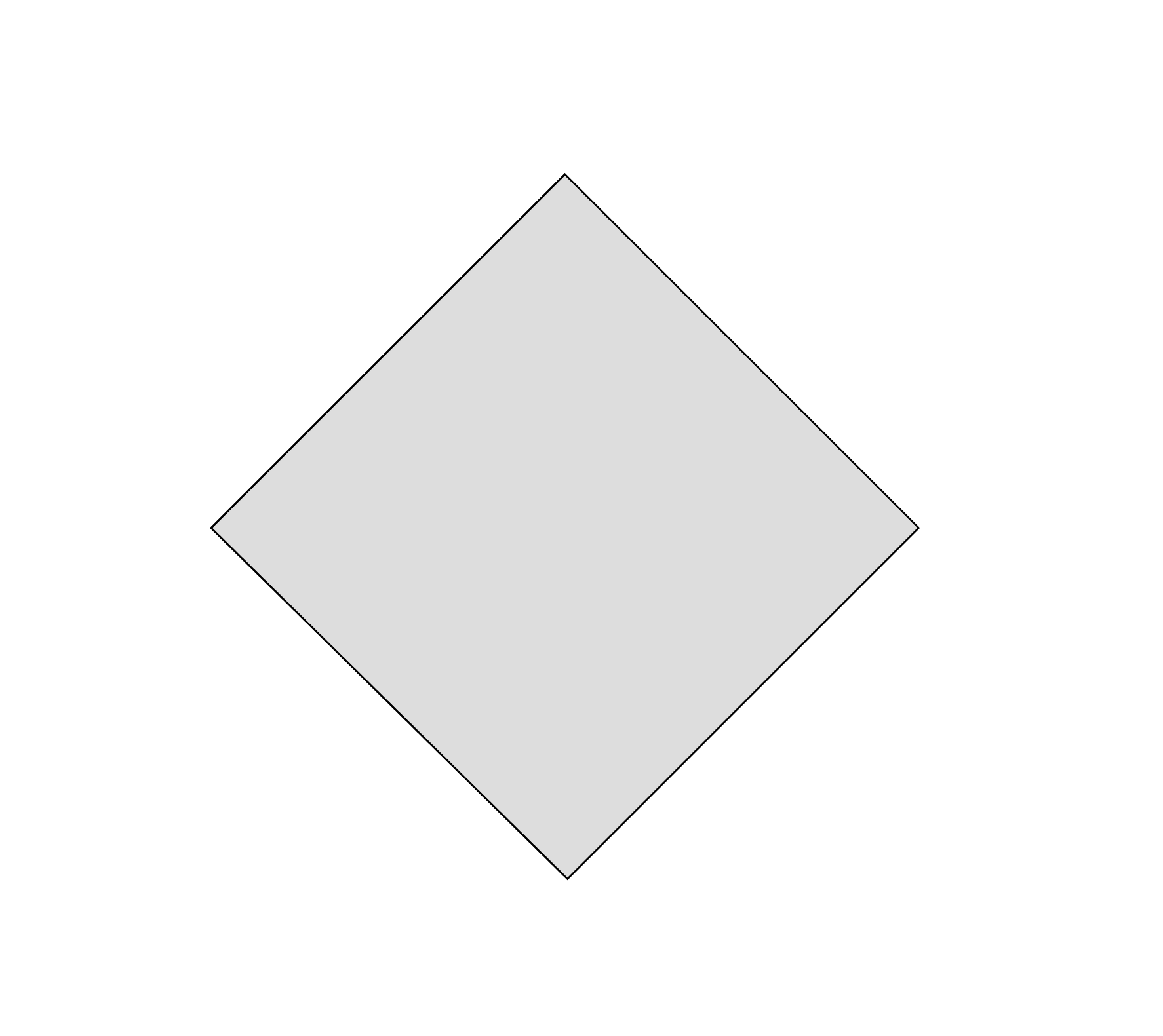
      \caption{The coordinate functions $t,r,v_-$, and $v_+$} \label{FigInt2}
\end{minipage}
\end{figure}
We also note that the coordinates $\{v_+, \varphi_+, r, \theta\}$ cover $\mathcal{M} \cup (\Hp_r \setminus \Sp^2_b) \cup (\CH_l \setminus \Sp^2_t)$. For later reference we express the Boyer-Lindquist coordinate vector fields (on the left) in terms of the $\{v_+, \varphi_+, r, \theta\}$ coordinate vector fields (on the right):
\begin{equation}
\label{CoordTrafoStar}
\begin{aligned}
\frac{\partial}{\partial r} \Big|_t &= \frac{r^2 + a^2}{\Delta} \partial_{v_+} + \frac{a}{\Delta} \partial_{\varphi_+} + \partial_r \qquad \quad &&\frac{\partial}{\partial t} = \partial_{v_+} \\
\frac{\partial}{\partial \varphi} &= \partial_{\varphi_+} &&\frac{\partial}{\partial \theta} = \partial_\theta \;.
\end{aligned}
\end{equation}
We also note that the volume form in $\{v_+, \varphi_+, r, \theta\}$-coordinates is given by $\vol = \rho^2 \sin \theta \, dv_+ \wedge dr \wedge d\theta \wedge d\varphi_+$.

Similarly we express the Boyer-Lindquist coordinate vector fields (on the left) in terms of the $\{v_-, \varphi_-, r, \theta\}$ coordinate vector fields (on the right):
\begin{equation}
\label{CoordTrafoStarMinus}
\begin{aligned}
\frac{\partial}{\partial r} \Big|_t &= \frac{r^2 + a^2}{\Delta} \partial_{v_-} - \frac{a}{\Delta} \partial_{\varphi_-} + \partial_r \qquad \quad &&\frac{\partial}{\partial t} = -\partial_{v_-} \\
\frac{\partial}{\partial \varphi} &= \partial_{\varphi_-} &&\frac{\partial}{\partial \theta} = \partial_\theta \;.
\end{aligned}
\end{equation}
We also note that the volume form in $\{v_-, \varphi_-, r, \theta\}$-coordinates is given by $\vol = \rho^2 \sin \theta \, dv_- \wedge dr \wedge d\theta \wedge d\varphi_-$.

Note that $<dv_+, dv_+> = <dv_-, dv_-> = \frac{a^2 \sin^2 \theta}{\rho^2}$, thus showing that for $a >0$ the level sets of $v_+$ and $v_-$ are timelike hypersurfaces away from the axis. 

We now define the functions $f^+ := v_+ -r + r_+$ and $f^- := v_- - r + r_-$. An easy computation gives 
\begin{equation}
\label{NormF}
<df^+, df^+> = <df^-, df^-> =  \frac{a^2 \sin^2 \theta}{\rho^2} + \frac{\Delta}{\rho^2} - \frac{2(r^2 + a^2)}{\rho^2}
\end{equation}
which shows that the level sets of $f^+$ and $f^-$ are spacelike hypersurfaces, cf.\ Figure \ref{FigInt2}. Moreover, it is immediate that the level sets of $r$ are spacelike hypersurfaces.

\subsubsection{Relation of $\Phi_{r_+}$ and $\varphi_+$  on $\Hp_r$ -- and similarly for $\Phi_{r_+}, \varphi_-$ on $\Hp_l$} \label{SecRelTwoPhis}

We define $\omega_\pm := \frac{a}{r_\pm^2 + a^2}$ and set $$\phi_\pm(r) := \omega_\pm r^* - \overline{r} = \frac{a}{r_\pm^2 + a^2} r^* - \overline{r} \;.$$  This defines  smooth functions for $r \in (r_-, r_+)$. Moreover, $\phi_+$ extends smoothly to $r_+$ and $\phi_-$ extends smoothly to $r_-$: for $\phi_+$ this follows from $$\frac{d}{dr} \phi_+(r) = \frac{a}{r_+^2 + a^2} \frac{dr^*}{dr} - \frac{d\overline{r}}{dr} = \frac{a}{r_+^2 + a^2} \frac{r^2 + a^2}{ \Delta} - \frac{a}{\Delta} = \frac{a}{ \Delta}\Big( \frac{r^2 + a^2}{r_+^2 + a^2} - 1\Big) \;,$$
where the right hand side clearly extends smoothly to $r=r_+$. We denote $\lim_{r \to r_+} \phi_+(r) =: \phi_+(r_+)$. Similarly for $\phi_-$ and we denote $\lim_{r \to r_-} \phi_-(r) =: \phi_-(r_-)$.

We will also need to relate the angular functions $\Phi_{r_+}$ and $\varphi_+$ ($\varphi_-$) on the right (left) event horizon, where they are both defined. 
For $r \in (r_-, r_+)$ we have
\begin{equation*}
\begin{split}
\Phi_{r_+} &= \varphi - \frac{at}{r_+^2 + a^2} \\
&= \varphi_+ - \frac{a}{r_+^2 + a^2} v_+ + \phi_+(r) \\
&= \varphi_- + \frac{a}{r_+^2 + a^2} v_- - \phi_+(r) \;.
\end{split}
\end{equation*}
On $\Hp_r$ we thus have
\begin{equation*}
\Phi_{r_+} = \varphi_+ - \omega_+ v_+ + \phi_+(r_+) \;,
\end{equation*}
while on $\Hp_l$ we have
\begin{equation*}
\Phi_{r_+} = \varphi_- + \omega_+ v_- - \phi_+(r_+) \;.
\end{equation*}

\subsubsection{Estimates for $r^*$ near $r = r_\pm$} \label{SecDelta}

We write $\frac{dr^*}{dr} = \frac{r^2 + a^2}{(r - r_+)(r-r_-)} = \frac{r_+^2 + a^2}{(r-r_+)(r_+ - r_-)} + f_+(r)$ with a function $f_+ : (r_-, r_+) \to \R$ that extends regularly to $r_+$. Integration gives
\begin{equation} \label{EqRStarDelta}
r^*(r) = \frac{1}{2\kappa_+}\log (r_+-r) + F_+(r)
\end{equation}
with a function $F_+ : (r_-,r_+) \to \R$ that extends regularly to $r_+$.
Recalling $r^* = \frac{1}{2}(v_+ + v_-)$ we obtain
\begin{equation} \label{r-r_+}
V^+_{r_+} V^-_{r_+} = e^{\kappa_+(v_+ + v_-)} = (r_+ - r)e^{2\kappa_+F_+(r)} = \mathcal{O}(r_+-r) \;.
\end{equation}

Similarly, we obtain
\begin{equation}\label{EqRStarCauchy}
r^*(r) = \frac{1}{2\kappa_-} \log (r - r_-) + F_-(r)
\end{equation}
with a function $F_- : (r_-, r_+) \to \R$ that extends regularly to $r_-$.

\subsection{The principal null frame} \label{SecPrincipalNull}
For convenience we introduce the abbreviations $\mathscr{S} = \sin \theta$ and $\mathscr{C}=\cos \theta$. Moreover, using the Boyer--Lindquist coordinates, we define
\begin{equation*}
V= (r^2 + a^2) \partial_t + a \partial_\varphi  \qquad \textnormal{ and } \qquad W= \partial_\varphi + a \Si^2 \partial_t \;.
\end{equation*}
A \emph{principal null frame} is then given by
\begin{equation*}
\begin{aligned}
&e_1 := \frac{1}{\rho} \partial_\theta \;,  && \hat{e}_3 := \frac{\Delta}{\rho^2} \partial_r - \frac{1}{\rho^2} V \;,\\
&e_2 := \frac{W}{|W|} = \frac{1}{\rho \Si}(\partial_\varphi + a \Si^2 \partial_t)\;, \qquad \qquad
&&\hat{e}_4 := -\partial_r - \frac{1}{\Delta}V \;.
\end{aligned}
\end{equation*}
The vector fields $\hat{e}_3$ and $\hat{e}_4$ are null and future directed and satisfy $\langle \hat{e}_3,\hat{e}_4\rangle  = -2$. Let us denote the distribution spanned by $\hat{e}_3$ and $\hat{e}_4$ by $\Pi$ and the distribution orthogonal to $\Pi$  by $\Pi^\perp$. The vector fields $e_1$ and $e_2$ are not defined on the axis, but where defined they form an orthonormal basis for $\Pi^\perp$.  

Note that in $(v_-,r,\theta,\varphi_-)$-coordinates we have\footnote{In the following $\big{|}_\pm$ indicates a partial derivative in the $(v_\pm, r, \theta, \varphi_\pm)$ coordinate system.} 
\begin{equation*}
\hat{e}_3 = \frac{\Delta}{\rho^2} \frac{\partial}{\partial r} \Big|_- - \frac{2}{\rho^2} V \qquad \textnormal{ and } \qquad \hat{e}_4 = - \frac{\partial}{\partial r}\Big|_-\;,
\end{equation*}
while in $(v_+, r , \theta, \varphi_+)$-coordinates we have
\begin{equation*}
\hat{e}_3 = \frac{\Delta}{\rho^2} \frac{\partial}{\partial r}\Big|_+ \qquad \textnormal{ and } \qquad \hat{e}_4 = - \frac{\partial}{\partial r} \Big|_+ - \frac{2}{\Delta} V \;. 
\end{equation*}
Hence, the null vectors $\hat{e}_3$ and $\hat{e}_4$ are regular at the left event horizon $\Hp_l$ and at the right Cauchy horizon $\CH_r$, but not at the right event horizon $\Hp_r$ and at the left Cauchy horizon $\CH_l$. There, the vector fields 
\begin{equation*}
e_3 := -\frac{1}{\Delta}\, \hat{e}_3 \qquad \textnormal{ and } \qquad  e_4 := - \Delta \, \hat{e}_4
\end{equation*}
are regular.



\subsection{The Teukolsky equation and spin-weighted functions}
\label{SecTeukAndSpin}

\subsubsection{Gravitational perturbations in the Newman-Penrose formalism}\label{SecGravPerturb}

In the following we recall the basic steps in the derivation of the Teukolsky equation for gravitational perturbations of Kerr, see \cite{Teuk73}. We start by clarifying that our convention for the Riemann curvature tensor is
\begin{equation*}
\begin{split}
R^{\mu}_{\;\,\nu \rho \sigma} &= dx^\mu\big(R(\partial_\rho, \partial_\sigma)\partial_\nu\big) = dx^\mu\Big(\nabla_{\partial_\rho} \nabla_{\partial_\sigma} \partial_\nu - \nabla_{\partial_\sigma} \nabla_{\partial_\rho} \partial_\nu\Big)\\
&= \partial_\rho \Gamma^\mu_{\nu \sigma} - \partial_\sigma \Gamma^\mu_{\nu \rho} + \Gamma^\mu_{\kappa \rho}\Gamma^\kappa_{\nu \sigma} - \Gamma^\mu_{\kappa \sigma} \Gamma^\kappa_{\nu \rho}
\end{split}
\end{equation*}
where $x^\mu$ denotes a local coordinate system. 

We now make contact with and follow \cite{Teuk73} by setting
\begin{equation}\label{NPFrame}
l =- \hat{e}_4\, ,\qquad \quad n = -\frac{1}{2} \hat{e}_3\, , \qquad \quad {m_a} = \frac{1}{\sqrt{2}}\cdot \frac{\rho}{r + i a \cos \theta}(e_1 + i \cdot e_2) \;.
\end{equation}
With respect to this complex principal null frame we have\footnote{See \cite{Teuk73} or \cite{ChandBlackHole}, taking into account that they consider Lorentzian metrics of signature $(+, -, -, -)$, i.e., $R(-g)_{\mu \nu \rho \sigma} = -R(g)_{\mu \nu \rho \sigma}$.}
\begin{equation}\label{EqCurvatureAlgSpFrame}
\begin{aligned}
\Psi_0 &= R(l,{m_a},l,{m_a}) = 0 \\
\Psi_1 &= R(l,n,l,{m_a}) = 0 \\
\Psi_2 &= R(l,{m_a},\overline{{m_a}},n) = - \frac{M}{(r - ia \cos \theta)^3} \\
\Psi_3 &= R(l,n,\overline{{m_a}}, n) = 0 \\
\Psi_4 &= R(n, \overline{{m_a}}, n, \overline{{m_a}}) = 0 \;.
\end{aligned}
\end{equation}
Let now $g(s)$, $s \in [0, \varepsilon)$, be a smooth family of Lorentzian metrics defined on $\mathcal{M} \cup \Hp_l \cup \Hp_r$ satisfying the vacuum Einstein equations $\mathrm{Ric}\big(g(s)\big) = 0$ and such that $g(0)$ is the metric \eqref{Kerr.metric.BL} of sub-extremal Kerr. Moreover, let $l(s)$, $n(s)$, ${m_a}(s)$, $\overline{{m_a}}(s)$ be a complex frame field (not necessarily null) such that for $s=0$ they agree with \eqref{NPFrame} and define $\Psi_i(s)$ in analogy with \eqref{EqCurvatureAlgSpFrame} for all $s$. It now follows from \eqref{EqCurvatureAlgSpFrame} that
\begin{equation*}
\dot{\Psi}_0(0) =  \frac{d}{ds}\Big|_{s=0} \Big(R\big(g(s)\big)\big(l(s), {m_a}(s), l(s), {m_a}(s)\big) \Big) \overset{!}{=} \Big(\frac{d}{ds}\Big|_{s=0}  R\big(g(s)\big)\Big)\big(l(0), {m_a}(0), l(0), {m_a}(0)\big) \;,
\end{equation*}
i.e., $\dot{\Psi}_0(0)$ is in fact independent of the continuation of the complex principal null frame \eqref{NPFrame} for $s>0$. Moreover, because $\Psi_0$ is a vanishing scalar, $\dot{\Psi}_0(0)$ is also gauge invariant. The same observations hold for $\dot{\Psi}_4(0)$. In \cite{Teuk73}, Teukolsky derived the following equation, now called the \emph{Teukolsky equation},
\begin{equation} \label{TeukolskyEquation1}
\begin{split}
-\Big[\frac{(r^2 + a^2)^2}{\Delta} &- a^2 \sin^2 \theta\Big] \partial_t^2\hat{\psi}_s - \frac{4Mar}{ \Delta} \partial_t \partial_\varphi \hat{\psi}_s - \Big[\frac{a^2}{\Delta} - \frac{1}{\sin^2 \theta}\Big] \partial^2_\varphi \hat{\psi}_s \\ 
&+ \Delta^{-s} \partial_r (\Delta^{s+1} \partial_r \hat{\psi}_s) + \frac{1}{ \sin \theta}\partial_\theta(\sin \theta \partial_\theta \hat{\psi}_s) + 2s\Big[\frac{a(r-M)}{\Delta} + \frac{i \cos\theta}{\sin^2\theta}\Big] \partial_\varphi \hat{\psi}_s \\ 
&+ 2s\Big[\frac{M(r^2 -a^2)}{\Delta} - r - ia\cos \theta\Big] \partial_t \hat{\psi}_s -\Big[\frac{s^2 \cos^2 \theta}{\sin^2\theta} -s\Big] \hat{\psi}_s = 0 \;,
\end{split}
\end{equation}
which is satisfied for $s = +2$ by $\hat{\psi}_2 = \dot{\Psi}_0(0)$ and for $s=-2$ by $\hat{\psi}_{-2} = (r - ia \cos \theta)^4 \cdot \dot{\Psi}_4(0)$.


\subsubsection{The Teukolsky equation for a regular field near $\Hp_r$} \label{SecAnotherFormTeuk}

We recall that $l = -\hat{e}_4$ blows up at the right event horizon $\Hp_r$ and that $\Delta l = e_4$ is regular at $\Hp_r$. Hence, the curvature component $\alpha = R(e_4, {m_a} , e_4,{m_a}) = \Delta^2 \Psi_0$ is regular at $\Hp_r$ (and vanishes at $\Hp_l$) and for its linearisation $\dot{\alpha}$ we obtain $\dot{\alpha} = \Delta^2 \dot{\Psi}_0$. This motivates to set $\psi_s := \Delta^s \hat{\psi}_s$.
It now follows that if $\hat{\psi}_s$ satisfies \eqref{TeukolskyEquation1}, then $\psi_s$ satisfies
\begin{equation} \label{TeukolskyEquation2}
\begin{split}
\mathcal{T}_{[s]} \psi_s := &-\Big[\frac{(r^2 + a^2)^2}{\Delta} - a^2 \sin^2 \theta\Big] \partial_t^2\psi_s - \frac{4Mar}{ \Delta} \partial_t \partial_\varphi \psi_s - \Big[\frac{a^2}{\Delta} - \frac{1}{\sin^2 \theta}\Big] \partial^2_\varphi \psi_s \\ 
&+ \Delta^{-s} \partial_r (\Delta^{s+1} \partial_r \psi_s) + \frac{1}{ \sin \theta}\partial_\theta(\sin \theta \partial_\theta \psi_s) + 2s\Big[\frac{a(r-M)}{\Delta} + \frac{i \cos\theta}{\sin^2\theta}\Big] \partial_\varphi \psi_s \\ 
&+ 2s\Big[\frac{M(r^2 -a^2)}{\Delta} - r - ia\cos \theta\Big] \partial_t \psi_s -\Big[\frac{s^2 \cos^2 \theta}{\sin^2\theta} +s\Big] \psi_s -4s(r-M) \partial_r \psi_s = 0 \;,
\end{split}
\end{equation}
where we have used
\begin{equation*}
\Delta^s \cdot  \Delta^{-s} \partial_r (\Delta^{s+1} \partial_r \hat{\psi}_s)  =\Delta^{-s} \partial_r (\Delta^{s+1} \partial_r \psi_s) -4s(r-M) \partial_r \psi_s  - 2s \psi_s \;.
\end{equation*}
\emph{In particular, the quantity we are most interested in, $\da =\psi_2$, satisfies \eqref{TeukolskyEquation2} for $s=+2$.}

Using the definition of the wave operator
\begin{equation*}
\Box_g \psi  = \frac{1}{\sqrt{-\det g}} \partial_\mu (g^{\mu \nu} \sqrt{- \det g} \, \partial_\nu \psi) \;,
\end{equation*}
we can rewrite \eqref{TeukolskyEquation2} as
\begin{equation} \label{TeukolskyWave}
\begin{split}
\frac{1}{\rho^2}\mathcal{T}_{[s]} \psi_s = \Box_g \psi_s &- \frac{2s}{\rho^2}(r-M) \partial_r \psi_s + \frac{2s}{\rho^2}\Big(\frac{a(r-M)}{\Delta} + i \frac{\cos \theta}{\sin^2 \theta}\Big) \partial_\varphi \psi_s \\
&+ \frac{2s}{\rho^2}\Big( \frac{M(r^2 - a^2)}{\Delta} - r -ia \cos \theta\Big) \partial_t \psi_s - \frac{1}{\rho^2}(s + s^2 \frac{\cos^2 \theta}{\sin^2 \theta}) \psi_s = 0 \;.
\end{split}
\end{equation}

\subsubsection{Spin $s$-weighted functions on $\Sp^2$} \label{SecSpinWeightedFunctions}

In the following we will exhibit the appropriate function space on which the Teukolsky equation \eqref{TeukolskyEquation2} is defined -- and in particular which function spaces $\da$ and $\dot{\Psi}_0(0)$ belong to (it is immediate from their definition that they are not regular at $\theta = 0, \pi$). We begin by discussing spin $s$-weighted functions on the $2$-sphere which arise by expressing tensors on $\Sp^2$ with respect to a (necessarily) non-global frame field. We consider the standard $(\theta, \varphi)$ coordinate system on $\Sp^2$ in which the round metric takes the form $g_{\Sp^2} = d\theta^2 + \sin^2 \theta \, d\varphi^2$ and choose as an orthonormal frame field $E_1 = \partial_\theta$ and $E_2 =\frac{1}{\sin \theta}\partial_\varphi$, which are defined away from the north pole at $\{\theta = 0\}$ and the south pole at $\{\theta = \pi\}$. We combine this frame field into a single complex vector
\begin{equation*}
m = \frac{1}{\sqrt{2}}(\partial_\theta + \frac{i}{\sin \theta} \partial_\varphi) \;.
\end{equation*}
Consider now the space $\Gamma^\infty(S^2T^*\Sp^2)$ of all smooth symmetric  $2$-covariant tensor fields on $\Sp^2$ and define a map 
\begin{equation}\label{EqMapIota}
\iota_m : \Gamma^\infty(S^2T^*\Sp^2) \to C^\infty(\Sp^2 \setminus \{\theta = 0, \pi\}) \cap L^\infty(\Sp^2)
\end{equation}
by $\iota_m(\alpha) = \alpha(m, m)$. 
\begin{definition}
The space of smooth spin $2$-weighted functions on $\Sp^2$ is defined as 
\begin{equation*}
\mathscr{I}_{[2]}^\infty(\Sp^2) := \iota_m\big(\Gamma^\infty(S^2T^*\Sp^2)\big) \subseteq C^\infty(\Sp^2 \setminus \{\theta = 0, \pi\}) \cap L^\infty(\Sp^2)
\end{equation*}
\end{definition}
\begin{remark} \label{RemSymTfTen}
For  $\alpha \in S^2T^*\Sp^2$ we compute
\begin{equation*}
\alpha(m,m) = \frac{1}{2}\big(\alpha_{\theta \theta} + \frac{i}{\sin \theta} \alpha_{\theta \varphi} + \frac{i}{\sin \theta} \alpha_{\varphi \theta} - \frac{1}{\sin^2 \theta} \alpha_{\varphi \varphi}\big) = \frac{1}{2}\big(\alpha_{\theta \theta} - \frac{1}{\sin^2 \theta} \alpha_{\varphi \varphi}\big) + \frac{i}{\sin \theta} \alpha_{\theta \varphi} \;
\end{equation*}
where we have used the symmetry of $\alpha$. It follows that $g_{\Sp^2}(m,m) =0$ and thus the kernel of $\iota_m$ contains $g_{\Sp^2} \cdot C^\infty(\Sp^2)$. We now show that the kernel of $\iota_m$ equals $g_{\Sp^2} \cdot C^\infty(\Sp^2)$. We note that $\nicefrac{\Gamma^\infty(S^2T^*\Sp^2) }{g_{\Sp^2} \cdot C^\infty(\Sp^2)} \simeq \Gamma^\infty(S^2_{\mathrm{tf}}T^*\Sp^2) $, where $\Gamma^\infty(S^2_{\mathrm{tf}}T^*\Sp^2) $ denotes the space of all smooth symmetric and trace-free $2$-covariant tensor fields on $\Sp^2$.
For $\alpha \in  \Gamma^\infty(S^2_{\mathrm{tf}}T^*\Sp^2)$ we have $\alpha_{\theta \theta} + \frac{1}{\sin^2 \theta} \alpha_{\varphi \varphi} = 0$ and thus
$$\alpha(m,m) =  \alpha_{\theta \theta} + \frac{i}{\sin \theta} \alpha_{\theta \varphi} \;,$$
which shows that $\alpha(m,m) $ characterises $\alpha$ uniquely. This shows that $\iota_m\big|_{\Gamma^\infty(S^2_{\mathrm{tf}}T^*\Sp^2)} : \Gamma^\infty(S^2_{\mathrm{tf}}T^*\Sp^2) \to \mathscr{I}_{[2]}^\infty(\Sp^2)$ is an isomorphism.\footnote{Indeed, one could have defined the space $\mathscr{I}_{[2]}^\infty(\Sp^2)$ as the image of $\Gamma^\infty(S^2_{\mathrm{tf}}T^*\Sp^2)$ under $\iota_m$. However, for the proof of Proposition \ref{PropAlphaDotSpinW} we will need that $\alpha(m,m) \in \mathscr{I}_{[2]}^\infty(\Sp^2)$ even if $\alpha$ is not trace-free.}
\end{remark}

We also remark that the space of smooth spin $-2$-weighted functions is defined as the image of $\Gamma^\infty(S^2T^*\Sp^2)$ under $\iota_{\overline{m}}$ in $C^\infty(\Sp^2 \setminus \{\theta = 0, \pi\})$, where $\overline{m}$ is the complex conjugate of $m$. Although not needed in this paper, we also briefly remark that smooth spin ($\pm 1)$-weighted functions are defined as the images (under $\iota_m$ and $\iota_{\overline{m}}$) of all smooth one-forms on $\Sp^2$. We also remark that it follows directly from the definition that the spaces of smooth spin weighted functions are invariant under multiplication by smooth functions on $\Sp^2$.


We now give an intrinsic characterisation of the spin $s$-weighted functions on $\Sp^2$. We define
\begin{equation}\label{ZTilde}
\begin{aligned}
\tilde{Z}_1 &= - \sin \varphi \, \partial_\theta + \cos \varphi(-is \frac{1}{\sin \theta} - \frac{\cos \theta}{\sin \theta} \partial_\varphi) \\
\tilde{Z}_2 &= -\cos \varphi \, \partial_\theta - \sin \varphi(-is \frac{1}{\sin \theta} - \frac{\cos \theta}{\sin \theta} \partial_\varphi) \\
\tilde{Z}_3 &= \partial_\varphi \;.
\end{aligned}
\end{equation}
These first order differential operators satisfy $[\tilde{Z}_1 , \tilde{Z}_2] = \tilde{Z}_3$, $[\tilde{Z}_2, \tilde{Z}_3] = \tilde{Z}_1$, and $[\tilde{Z}_3, \tilde{Z}_1] = \tilde{Z}_2$.

\begin{proposition} \label{PropFirstCharSpinWeighted}
$f \in C^\infty(\Sp^2\setminus \{\theta = 0, \pi\})$ lies in $\mathscr{I}^\infty_{[s]}(\Sp^2)$ if, and only if, $e^{is\varphi}(\tilde{Z}_1)^{k_1}(\tilde{Z}_2)^{k_2}(\tilde{Z}_3)^{k_3} f$ extends continuously to the north pole $\theta = 0$ and $e^{-is\varphi}(\tilde{Z}_1)^{k_1}(\tilde{Z}_2)^{k_2}(\tilde{Z}_3)^{k_3} f$ extends continuously to the south pole $\theta = \pi$ for all $0 \leq k_1 + k_2 + k_3 < \infty$, $k_i \in \N_0$.
\end{proposition}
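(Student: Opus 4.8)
The plan is to recognise $\tilde Z_1,\tilde Z_2,\tilde Z_3$ as the spin-weighted lifts of the three rotation Killing fields of $\Sp^2$ and to transfer the conditions at the poles, through the singular frame $m$, into honest smoothness statements in a genuine chart. Throughout I write $\iota_m(\alpha)=\alpha(m,\dots,m)$ ($s$ slots) and let $X_i$ denote the vector-field part of $\tilde Z_i$, i.e.\ its $s=0$ specialisation; these are the standard rotation generators on $\Sp^2$. Since every rotation is an orientation-preserving isometry it acts holomorphically, whence $\mathcal{L}_{X_i}m=i\lambda_i\,m$ for real $\lambda_1=\tfrac{\cos\varphi}{\sin\theta}$, $\lambda_2=-\tfrac{\sin\varphi}{\sin\theta}$, $\lambda_3=0$. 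A one-line Leibniz computation then yields the intertwining identity
\begin{equation*}
\tilde Z_i\big(\iota_m(\alpha)\big)=\iota_m\big(\mathcal{L}_{X_i}\alpha\big),
\end{equation*}
valid for all smooth symmetric $\alpha$, precisely because $\tilde Z_i=X_i-is\lambda_i$ absorbs the $s$ frame terms $\alpha(\dots,\mathcal{L}_{X_i}m,\dots)$; in particular $\iota_m\big(\Gamma^\infty(S^sT^*\Sp^2)\big)$ is preserved by the $\tilde Z_i$.

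For the local model near the north pole I would use the smooth chart $x=\sin\theta\cos\varphi$, $y=\sin\theta\sin\varphi$ with smooth complex frame $w=\tfrac1{\sqrt2}(\partial_x+i\partial_y)$, and compute the exact expansion
\begin{equation*}
m=\tfrac{1+\cos\theta}{2}\,e^{-i\varphi}\,w+\tfrac{\cos\theta-1}{2}\,e^{i\varphi}\,\overline w .
\end{equation*}
The key observation is that after multiplication by $e^{is\varphi}$ the only apparently singular factors are the powers $e^{2ik\varphi}\big(\tfrac{1-\cos\theta}{2}\big)^{k}=(x+iy)^{2k}/\big(2^k(1+\cos\theta)^k\big)$, which are in fact \emph{smooth} at the pole. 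Hence for any smooth symmetric $\beta$ the function $e^{is\varphi}\iota_m(\beta)$ is a finite sum of smooth functions and extends smoothly across $\theta=0$. Combined with the intertwining identity and the fact that $\mathcal{L}_{X_{i_1}}\cdots\mathcal{L}_{X_{i_k}}\alpha$ is again smooth and symmetric, this settles \textbf{necessity}: $e^{is\varphi}(\tilde Z_1)^{k_1}(\tilde Z_2)^{k_2}(\tilde Z_3)^{k_3}f=e^{is\varphi}\iota_m\big(\mathcal{L}_{X_1}^{k_1}\mathcal{L}_{X_2}^{k_2}\mathcal{L}_{X_3}^{k_3}\alpha\big)$ extends continuously to the north pole, and the conjugate computation with $\overline m$ handles the south pole.

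For \textbf{sufficiency} I would set $g:=e^{is\varphi}f$ and conjugate, obtaining $e^{is\varphi}\tilde Z_{i_1}\cdots\tilde Z_{i_k}f=\tilde Z^{\mathrm c}_{i_1}\cdots\tilde Z^{\mathrm c}_{i_k}g$ with $\tilde Z^{\mathrm c}_i:=e^{is\varphi}\tilde Z_i e^{-is\varphi}=X_i-is(X_i\varphi+\lambda_i)$. The decisive point is that the zeroth-order coefficients $X_i\varphi+\lambda_i$ equal $\tfrac{x}{1+\cos\theta},\ \tfrac{-y}{1+\cos\theta},\ 1$ — all smooth at the pole — while near the pole $X_1=-\cos\theta\,\partial_y$ and $X_2=-\cos\theta\,\partial_x$, so $\{X_1,X_2\}$ is a smooth frame there. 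Thus each $\tilde Z^{\mathrm c}_i$ is a smooth first-order operator whose symbol spans $T\Sp^2$ at $\theta=0$. Using a Poincar\'e--Birkhoff--Witt reduction to pass from the prescribed orderings to all iterated derivatives, the hypotheses give continuity of every $\tilde Z^{\mathrm c}$-derivative of $g$, and a standard bootstrap (continuity of $g$ and of $X_1g,X_2g$ yields $g\in C^1$; continuity of all second frame-derivatives yields $C^2$; and so on) upgrades $g$ to $C^\infty$ near $\theta=0$, with $e^{-is\varphi}f\in C^\infty$ near $\theta=\pi$ symmetrically. To reconstruct the tensor I then take the unique \emph{trace-free} symmetric $s$-tensor $\alpha$ on $\Sp^2\setminus\{\theta=0,\pi\}$ with $\iota_m(\alpha)=f$ (the trace-free fibre being one-dimensional over $\C$ with coordinate $\alpha(m,\dots,m)$, exactly as in Remark \ref{RemSymTfTen}); writing $A_s:=\alpha(w^{\otimes s})$ and using the expansion of $m$ with trace-freeness gives $f=P^sA_s+Q^s\overline{A_s}$, which after multiplication by $e^{is\varphi}$ becomes a $2\times2$ linear system for $(A_s,\overline{A_s})$ in terms of $(g,\overline g)$ with smooth coefficients and determinant tending to $1$ at the pole. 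Inverting, $g\in C^\infty$ forces $A_s\in C^\infty$, so $\alpha$ extends smoothly across both poles, yielding a global $\alpha\in\Gamma^\infty(S^sT^*\Sp^2)$ with $\iota_m(\alpha)=f$, i.e.\ $f\in\SWs$.

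The step I expect to be the main obstacle is the sufficiency bootstrap: one must check that phase conjugation turns the $\tilde Z_i$ into smooth first-order operators with tangent-spanning symbols at the pole — precisely the computations $X_i\varphi+\lambda_i\in C^\infty$ and $X_1\wedge X_2\neq 0$ at $\theta=0$ — and then carry out the elliptic-type regularity argument that promotes continuity of all iterated derivatives to genuine smoothness. By contrast the two remaining ingredients, the intertwining identity and the explicit frame expansion of $m$, are short computations.
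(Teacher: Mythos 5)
Your proposal is correct, and its backbone is the same as the paper's: both proofs hinge on the intertwining identity $\tilde{Z}_i\big(\alpha(m,\dots,m)\big)=(\mathcal{L}_{Z_i}\alpha)(m,\dots,m)$ (the paper's \eqref{LieZTilde}, iterated in \eqref{EqContDepSpin}), on the fact that the rotation fields $Z_i$ are smooth and span $T\Sp^2$, and on taming the singularity of the frame $m$ at the poles through the phases $e^{\pm i\varphi}$. The differences lie in the execution. For necessity the paper simply rewrites $\sqrt{2}\,e^{i\varphi}m$ in the $Z_i$ basis to see that it is continuous at the north pole and concludes continuity of the contraction $\big((\mathcal{L}_{Z_1})^{k_1}(\mathcal{L}_{Z_2})^{k_2}(\mathcal{L}_{Z_3})^{k_3}\alpha\big)(e^{i\varphi}m,e^{i\varphi}m)$; your chart computation with $w=\tfrac{1}{\sqrt{2}}(\partial_x+i\partial_y)$ and the identity $e^{2ik\varphi}(1-\cos\theta)^k=(x+iy)^{2k}(1+\cos\theta)^{-k}$ yields the stronger statement that $e^{is\varphi}\iota_m(\beta)$ is smooth there. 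For sufficiency the paper stays tensorial and compresses the converse into one sentence (defining $\alpha$ from $f$ via Remark \ref{RemSymTfTen} and asserting smooth extension ``as before'' from \eqref{EqContDepSpin}), implicitly invoking the very principle you spell out; you instead conjugate by the phase, check that the operators $\tilde{Z}^{\mathrm{c}}_i$ are smooth with tangent-spanning symbols at the pole (via $X_1=-\cos\theta\,\partial_y$, $X_2=-\cos\theta\,\partial_x$ and smoothness of $X_i\varphi+\lambda_i$), reduce arbitrary orderings using $[\tilde{Z}_i,\tilde{Z}_j]=\varepsilon_{ijk}\tilde{Z}_k$, bootstrap $g=e^{is\varphi}f$ to $C^\infty$, and then recover the trace-free tensor from an explicitly invertible $2\times 2$ system. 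What your route buys is a fully scalar, self-contained argument that makes explicit exactly the step the paper leaves implicit, and it produces smoothness rather than mere continuity at each stage; what the paper's route buys is brevity, since by never leaving the tensorial picture it avoids both the phase-conjugation bookkeeping and the reconstruction step. Your intermediate formulas (the expansion of $m$ in $w,\overline{w}$, the coefficients $X_i\varphi+\lambda_i$, and the frame identities for $X_1,X_2$ near the pole) all check out.
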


Before we give the proof we recall that the vector fields
\begin{equation}\label{DefEqVectorFieldZ}
\begin{split}
Z_1 &= - \sin \varphi \, \partial_\theta - \cos \varphi \frac{\cos \theta}{\sin \theta} \, \partial_\varphi \\
Z_2 &= - \cos \varphi \, \partial_\theta + \sin \varphi \frac{\cos \theta}{\sin \theta} \, \partial_\varphi \\
Z_3 &= \partial_\varphi
\end{split}
\end{equation}
are smooth on $\Sp^2$, span $T\Sp^2$ at each point of $\Sp^2$, and satisfy $[Z_1, Z_2] = Z_3$, $[Z_2, Z_3] = Z_1$, and $[Z_3, Z_1] = Z_2$.

\begin{proof}
We observe that
\begin{equation*}
\begin{split}
\sqrt{2} e^{i \varphi} m &= (\cos \varphi \,\partial_\theta - \frac{\sin \varphi}{\sin \theta} \partial_\varphi) + i ( \sin \varphi\, \partial_\theta + \frac{\cos \varphi}{\sin \theta} \partial_\varphi) \\
&= (-Z_2 + \frac{\sin \varphi}{\sin \theta}[\cos \theta - 1] Z_3) + i( - Z_1 - \frac{\cos \varphi}{\sin \theta}[\cos \theta - 1] Z_3)
\end{split}
\end{equation*}
is continuous at the north pole $\theta = 0$ and, similarly,
\begin{equation*}
\begin{split}
\sqrt{2} e^{-i \varphi} m &= (\cos \varphi \, \partial_\theta + \frac{ \sin \varphi}{\sin \theta} \, \partial_\varphi) + i ( \frac{ \cos \varphi}{\sin \theta} \partial_\varphi - \sin \varphi\, \partial_\theta) \\
&= (-Z_2 + \frac{\sin \varphi}{\sin \theta} [\cos \theta + 1] Z_3) + i(Z_1 + \frac{\cos \varphi}{\sin \theta}[\cos \theta + 1] Z_3)
\end{split}
\end{equation*}
is continuous at the south pole $\theta = \pi$. Moreover, we compute
\begin{equation*}
\begin{aligned}
\Li_{Z_1} m &= i \frac{\cos \varphi}{\sin\theta} \cdot m \\
\Li_{Z_2} m &= -i \frac{\sin \varphi}{\sin \theta} \cdot m \\
\Li_{Z_3} m &= 0 \;.
\end{aligned}
\end{equation*}
For $s = +2$ and $\alpha \in \Gamma^\infty(S^2T^*\Sp^2)$ we now compute
\begin{equation}\label{LieZTilde}
(\Li_{Z_i}\alpha)(m,m) = \Li_{Z_i}\big(\alpha(m,m)\big) - 2\alpha(\Li_{Z_i}m,m) = \tilde{Z}_i\big(\alpha(m,m)\big) \;.
\end{equation}
Iteratively, we obtain
\begin{equation} \label{EqContDepSpin}
\big((\Li_{Z_1})^{k_1}(\Li_{Z_2})^{k_2}(\Li_{Z_3})^{k_3} \alpha \big)(m,m) = (\tilde{Z}_1)^{k_1}(\tilde{Z}_2)^{k_2}(\tilde{Z}_3)^{k_3} \big(\alpha(m,m)\big)
\end{equation}
for $0 \leq k_1 + k_2 +k_3 < \infty$.

Given now $f = \alpha(m,m) \in \mathscr{I}^\infty_{[s]}(\Sp^2)$, it follows from \eqref{EqContDepSpin} together with the above observations that
\begin{equation*}
\begin{split}
e^{i2\varphi}(\tilde{Z}_1)^{k_1}(\tilde{Z}_2)^{k_2}(\tilde{Z}_3)^{k_3} f &= e^{2i \varphi} \big((\Li_{Z_1})^{k_1}(\Li_{Z_2})^{k_2}(\Li_{Z_3})^{k_3} \alpha \big)(m,m) \\
&= \big((\Li_{Z_1})^{k_1}(\Li_{Z_2})^{k_2}(\Li_{Z_3})^{k_3} \alpha \big)(e^{i\varphi} m,e^{i \varphi} m)
\end{split}
\end{equation*}
extends continuously to the north pole. The analogous computation shows the claim for the south pole. 

Vice versa, let $f \in C^\infty(\Sp^2\setminus\{\theta = 0, \pi\})$ satisfy the continuity properties stated in the proposition. By Remark \ref{RemSymTfTen} $\alpha(m,m) := f$ defines a  smooth symmetric and trace-free two-covariant tensor field (over $\R$) on $\Sp^2\setminus\{\theta = 0, \pi\})$. It now follows as before from \eqref{EqContDepSpin} that this tensor field extends smoothly to the north and south pole. 

The statement of the proposition for $s = -2$ (as well as for $s = \pm 1$) follows analogously.
\end{proof}


Now we introduce spin weighted Sobolev spaces. Some properties of those will later be needed for the energy estimates and Sobolev embeddings of spin weighted functions.
\begin{definition} 
The \emph{spin $s$-weighted Sobolev space} $H^m_{[s]}(\Sp^2)$ is defined by
\begin{equation*}
H^m_{[s]}(\Sp^2) := \{f \in L^2(\Sp^2) \; | \; (\tilde{Z}_1)^{k_1}(\tilde{Z}_2)^{k_2}(\tilde{Z}_3)^{k_3} f \in L^2(\Sp^2) \textnormal{ for all } 0 \leq k_1 + k_2 + k_3 \leq m\;, k_i \in \N_0\} \;.
\end{equation*}
\end{definition}

We denote with $\Sp^2_+$ the (closed) northern hemisphere of $\Sp^2$ and with $\Sp^2_-$ the (closed) southern hemisphere.

\begin{lemma} \label{LemSobolev}
If $f \in H^j_{[s]}(\Sp^2)$, then $e^{is\varphi} f \in H^j(\Sp^2_+)$ and $e^{-is\varphi}f \in H^j(\Sp^2_-)$ for $j = 1,2$.
\end{lemma}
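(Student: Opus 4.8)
The plan is to conjugate the spin-weighted operators $\tilde{Z}_i$ by the multiplication operator $e^{\pm is\varphi}$: I claim this turns each $\tilde{Z}_i$ into the genuine (globally smooth) vector field $Z_i$ from \eqref{DefEqVectorFieldZ} plus a zeroth order term whose coefficient is smooth up to the relevant pole. Granting this, the lemma reduces to the standard equivalence between $H^j$-regularity and the $L^2$-boundedness of iterated $Z_i$-derivatives, together with a removable-singularity argument at the pole. I treat only the north pole $N$ and $e^{is\varphi}$; the south pole and $e^{-is\varphi}$ are entirely analogous.

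\emph{Step 1 (conjugation identity).} Writing $M_\sigma$ for multiplication by $e^{i\sigma\varphi}$ and comparing \eqref{ZTilde} with \eqref{DefEqVectorFieldZ}, one reads off $\tilde{Z}_1 = Z_1 - is\cos\varphi/\sin\theta$, $\tilde{Z}_2 = Z_2 + is\sin\varphi/\sin\theta$ and $\tilde{Z}_3 = Z_3$. A direct computation of $M_s\,\tilde{Z}_i\,M_{-s}$ on $\Sp^2\setminus\{\theta=0,\pi\}$ (differentiating the factors $e^{\mp is\varphi}$ through $Z_i$) gives
\begin{equation*}
M_s\,\tilde{Z}_i\,M_{-s} = Z_i + c_i^+ , \qquad c_1^+ = -is\cos\varphi\tan\tfrac{\theta}{2},\quad c_2^+ = is\sin\varphi\tan\tfrac{\theta}{2},\quad c_3^+ = -is .
\end{equation*}
The crucial point is that each $c_i^+$ \emph{extends smoothly to the closed northern hemisphere} $\Sp^2_+$: in terms of the ambient Cartesian coordinates $(x,y,z)$ one has $\cos\varphi\tan\tfrac{\theta}{2} = x/(1+z)$ and $\sin\varphi\tan\tfrac{\theta}{2} = y/(1+z)$, with $1+z\in[1,2]$ on $\Sp^2_+$.

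\emph{Step 2 (reduction to vector-field characterizations and induction).} Since the $\tilde{Z}_i$ satisfy $[\tilde{Z}_1,\tilde{Z}_2]=\tilde{Z}_3$ etc.\ with \emph{constant} structure constants, every word $\tilde{Z}_{i_1}\cdots\tilde{Z}_{i_k}$ is a constant-coefficient combination of the ordered monomials $(\tilde{Z}_1)^{a}(\tilde{Z}_2)^{b}(\tilde{Z}_3)^{c}$ with $a+b+c\le k$; hence $f\in H^j_{[s]}(\Sp^2)$ is equivalent to $\tilde{Z}_{i_1}\cdots\tilde{Z}_{i_k}f\in L^2(\Sp^2)$ for all words of length $k\le j$. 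Likewise, as $Z_1,Z_2,Z_3$ are smooth and span $T\Sp^2$ at every point, $H^j(\Sp^2_+)$ is the space of $h\in L^2(\Sp^2_+)$ all of whose $Z$-words of length $\le j$ lie in $L^2(\Sp^2_+)$. Put $h:=e^{is\varphi}f=M_sf\in L^2(\Sp^2)$. On the punctured open hemisphere $U:=\Sp^2_+\setminus\{N\}$, where $M_{\pm s}$ are smooth and nonvanishing, Step 1 together with $M_{-s}M_s=\mathrm{id}$ yields, in $\mathcal{D}'(U)$,
\begin{equation*}
\Big(\prod_{r=1}^{k}(Z_{i_r}+c_{i_r}^+)\Big)h = M_s\big(\tilde{Z}_{i_1}\cdots\tilde{Z}_{i_k}f\big).
\end{equation*}
The right-hand side is $L^2$ (as $|M_s|=1$), so the left-hand side lies in $L^2(U)$. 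Expanding the product into $Z_{i_1}\cdots Z_{i_k}h$ plus $Z$-words of length $<k$ with coefficients smooth (hence bounded) on $\Sp^2_+$, an induction on $k=0,1,\dots,j$ shows every $Z$-word of $h$ of length $\le j$ lies in $L^2(U)$, i.e.\ $h\in H^j(U)$.

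\emph{Step 3 (removing the pole; the main obstacle).} The only delicate point is to upgrade $h\in H^j(U)$ to $h\in H^j(\Sp^2_+)$, i.e.\ to show the interior point $N$ carries no singular contribution. Work in a coordinate ball $B\subset\R^2$ around $N\leftrightarrow 0$. For $j=1$, the distribution $\partial_kh-v_k$ on $B$, where $v_k\in L^2(B)$ is the classical derivative on $B\setminus\{0\}$, is supported at $\{0\}$ and lies in $H^{-1}(B)$; since $\delta_0\notin H^{-1}(\R^2)$ (and a fortiori its derivatives), no nonzero distribution supported at a point lies in $H^{-1}(\R^2)$, so $\partial_kh=v_k$ and $h\in H^1(B)$. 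For $j=2$, applying this to each $\partial_kh\in H^1(B\setminus\{0\})\cap L^2(B)$ gives $\partial_kh\in H^1(B)$, i.e.\ $h\in H^2(B)$; patching $B$ with $U$ yields $e^{is\varphi}f\in H^j(\Sp^2_+)$ for $j=1,2$. The statement $e^{-is\varphi}f\in H^j(\Sp^2_-)$ follows verbatim after replacing $s\mapsto -s$ and $\tan\tfrac{\theta}{2}\mapsto\cot\tfrac{\theta}{2}=(1+\cos\theta)/\sin\theta$, which is smooth near the south pole.
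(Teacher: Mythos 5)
Your proof is correct and is essentially the paper's own argument: your conjugation identity $M_s\tilde{Z}_iM_{-s}=Z_i+c_i^+$ is exactly the paper's commutation formula \eqref{EqCommuteExp} read in conjugated form (your $c_i^+$ are the negatives of the paper's coefficients $a_-,b_-$, which the paper likewise observes are regular on the closed hemisphere), and your induction on the length of $Z$-words is the paper's iteration to second order. The only genuine addition is your Step 3: the paper passes from $L^2$-control of the $Z$-derivatives on the punctured hemisphere directly to membership in $H^j(\Sp^2_+)$ without comment, whereas you justify crossing the pole by the (correct) observation that a nonzero distribution supported at a point cannot lie in $H^{-1}(\R^2)$ — a worthwhile piece of extra rigor that the paper leaves implicit.
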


\begin{proof}
We compute
\begin{equation} \label{EqCommuteExp}
\begin{aligned}
Z_1(e^{\pm is\varphi}f) &= e^{\pm is\varphi}\big(\tilde{Z}_1 f + \underbrace{is \frac{\cos \varphi}{\sin \theta}( 1 \mp \cos \theta)}_{=:a_\mp} f\big) \\
Z_2(e^{\pm is \varphi}f) &= e^{\pm is \varphi} \big(\tilde{Z}_2 f - \underbrace{is \frac{\sin \varphi}{\sin \theta}(1 \mp \cos \theta)}_{=:b_\mp}f\big) \\
Z_3(e^{\pm is \varphi}f) &= e^{\pm is \varphi}\big(\tilde{Z}_3 f \pm is f\big) \;.
\end{aligned}
\end{equation}
Let us now restrict to the upper sign and to the northern hemisphere. It then follows that $a_-(\theta,\varphi) = is \cos \varphi \cdot \mathcal{O}(\theta) \in C^{0,1}(\Sp^2_+)$, and similarly $b_- \in C^{0,1}(\Sp^2_+)$. Thus all the terms in \eqref{EqCommuteExp} are in $L^2(\Sp^2_+)$. Moreover, it now follows easily that $Z_i\big(Z_j( e^{is\varphi} f)\big) \in L^2(\Sp^2_+)$ for all $i,j \in \{1,2,3\}$. For example we have
\begin{equation*}
\begin{split}
Z_2\big(Z_1(e^{is \varphi} f)\big) &=Z_2(e^{is\varphi}\tilde{Z}_1 f) + Z_2(a_- \cdot e^{is \varphi} f) \\
&= e^{is \varphi}\Big(\tilde{Z}_2(\tilde{Z}_1 f) + a_- \tilde{Z}_1 f\Big) + (Z_2 a_-) \cdot e^{is\varphi} f + a_- Z_2(e^{is\varphi}f) \;. 
\end{split}
\end{equation*}
Similarly for the lower sign and the southern hemisphere.
\end{proof}

\begin{proposition} \label{PropSmoothSW}
We have $\mathscr{I}^\infty_{[s]}(\Sp^2) = \bigcap_{0 \leq m < \infty} H^m_{[s]}(\Sp^2)$
\end{proposition}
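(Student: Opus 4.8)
The plan is to prove the two inclusions separately, using the intrinsic pointwise characterisation of $\mathscr{I}^\infty_{[s]}(\Sp^2)$ from Proposition \ref{PropFirstCharSpinWeighted} together with the Sobolev control provided by Lemma \ref{LemSobolev}. The key dictionary is this: membership in $\mathscr{I}^\infty_{[s]}(\Sp^2)$ is phrased as \emph{continuity} of the twisted derivatives $e^{\pm is\varphi}(\tilde{Z}_1)^{k_1}(\tilde{Z}_2)^{k_2}(\tilde{Z}_3)^{k_3} f$ at the poles, whereas membership in $\bigcap_m H^m_{[s]}(\Sp^2)$ is an $L^2$-statement on all such derivatives. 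The bridge from ``$L^2$ of all derivatives'' to ``$C^0$'' is Sobolev embedding on the two-dimensional hemispheres, and it is precisely this bridge that makes the two descriptions coincide.

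For the inclusion $\mathscr{I}^\infty_{[s]}(\Sp^2) \subseteq \bigcap_m H^m_{[s]}(\Sp^2)$ I would take $f \in \mathscr{I}^\infty_{[s]}(\Sp^2)$ and invoke Proposition \ref{PropFirstCharSpinWeighted}: for every multi-index $(k_1,k_2,k_3)$ the function $e^{is\varphi}(\tilde{Z}_1)^{k_1}(\tilde{Z}_2)^{k_2}(\tilde{Z}_3)^{k_3} f$ extends continuously to the north pole, and the analogous expression with $e^{-is\varphi}$ to the south pole. Since $|e^{\pm is\varphi}| = 1$, this forces $(\tilde{Z}_1)^{k_1}(\tilde{Z}_2)^{k_2}(\tilde{Z}_3)^{k_3} f$ to be bounded near both poles; away from the poles it is smooth, hence bounded on the remaining compact part of the sphere, and thus in $L^2(\Sp^2)$. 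As the multi-index was arbitrary this gives $f \in H^m_{[s]}(\Sp^2)$ for every $m$.

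The reverse inclusion is the substantive one. Given $f \in \bigcap_m H^m_{[s]}(\Sp^2)$ and a fixed multi-index $k$, I set $g := (\tilde{Z}_1)^{k_1}(\tilde{Z}_2)^{k_2}(\tilde{Z}_3)^{k_3} f$ and aim to verify the continuity conditions of Proposition \ref{PropFirstCharSpinWeighted} for $g$. First I would show $g \in H^2_{[s]}(\Sp^2)$: applying up to two further operators $\tilde{Z}_i$ to $g$ and using the commutation relations $[\tilde{Z}_1,\tilde{Z}_2]=\tilde{Z}_3$, $[\tilde{Z}_2,\tilde{Z}_3]=\tilde{Z}_1$, $[\tilde{Z}_3,\tilde{Z}_1]=\tilde{Z}_2$ to reorder into the canonical monomial order (each commutator lowering the total order by one), one expresses the result as a finite constant-coefficient combination of ordered monomials $(\tilde{Z}_1)^{a_1}(\tilde{Z}_2)^{a_2}(\tilde{Z}_3)^{a_3} f$ of degree $\le |k|+2$, all of which are in $L^2$ because $f \in H^{|k|+2}_{[s]}(\Sp^2)$. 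Then Lemma \ref{LemSobolev} with $j=2$ gives $e^{is\varphi} g \in H^2(\Sp^2_+)$ and $e^{-is\varphi} g \in H^2(\Sp^2_-)$; since $\dim \Sp^2 = 2 < 2\cdot 2$, the Sobolev embedding $H^2(\Sp^2_\pm)\hookrightarrow C^0(\Sp^2_\pm)$ yields continuity of $e^{is\varphi} g$ at the north pole (an interior point of the closed hemisphere $\Sp^2_+$) and of $e^{-is\varphi} g$ at the south pole. As $k$ is arbitrary, Proposition \ref{PropFirstCharSpinWeighted} gives $f \in \mathscr{I}^\infty_{[s]}(\Sp^2)$.

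The main obstacle to watch is the reordering step. The spaces $H^m_{[s]}(\Sp^2)$ are defined using only the single canonical ordering of the operators $\tilde{Z}_i$, so to conclude that each $\tilde{Z}_i$ maps $H^{m+1}_{[s]}(\Sp^2)$ into $H^m_{[s]}(\Sp^2)$ — and hence that $g$ above lies in $H^2_{[s]}(\Sp^2)$ — one must check that reordering an arbitrary composition into canonical form produces only terms of equal or lower order with bounded coefficients. This is the Poincar\'e--Birkhoff--Witt mechanism for the Lie algebra generated by $\tilde{Z}_1,\tilde{Z}_2,\tilde{Z}_3$ and is routine once stated, but it is the only place where some care is genuinely required; the sole additional point is confirming that the poles are interior points of the respective closed hemispheres, so that the boundary-inclusive embedding $H^2 \hookrightarrow C^0$ delivers continuity exactly at the poles.
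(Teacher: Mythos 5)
Your proposal is correct and follows essentially the same route as the paper: Proposition \ref{PropFirstCharSpinWeighted} gives the inclusion $\subseteq$, and for the reverse inclusion one notes that each canonical monomial $(\tilde{Z}_1)^{k_1}(\tilde{Z}_2)^{k_2}(\tilde{Z}_3)^{k_3} f$ lies in $H^2_{[s]}(\Sp^2)$, applies Lemma \ref{LemSobolev} together with the Sobolev embedding on the hemispheres to get continuity of the twisted derivatives at the poles, and concludes again by Proposition \ref{PropFirstCharSpinWeighted}. The only difference is that you spell out the commutator reordering argument needed to justify $(\tilde{Z}_1)^{k_1}(\tilde{Z}_2)^{k_2}(\tilde{Z}_3)^{k_3} f \in H^2_{[s]}(\Sp^2)$, a step the paper's proof states without comment.
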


\begin{proof}
The inclusion ``$\subseteq$'' follows directly from Proposition \ref{PropFirstCharSpinWeighted}. For the reverse inclusion let $f \in \bigcap_{0 \leq m < \infty} H^m_{[s]}(\Sp^2)$ and note that for $0 \leq k_1 + k_2 + k_3$ we have $(\tilde{Z}_1)^{k_1}(\tilde{Z}_2)^{k_2}(\tilde{Z}_3)^{k_3} f \in H^2_{[s]}(\Sp^2)$.  It now follows from Lemma \ref{LemSobolev} together with the standard Sobolev embedding that $e^{is\varphi} (\tilde{Z}_1)^{k_1}(\tilde{Z}_2)^{k_2}(\tilde{Z}_3)^{k_3} f $ is continuous at the north pole $\theta = 0$ while $e^{-is\varphi} (\tilde{Z}_1)^{k_1}(\tilde{Z}_2)^{k_2}(\tilde{Z}_3)^{k_3} f $ is continuous at the south pole $\theta = \pi$. The conclusion now follows again from Proposition \ref{PropFirstCharSpinWeighted}.
\end{proof}

Let us denote the standard volume form on $\Sp^2$ by $\vols = \sin \theta d\theta \wedge d\varphi$. We now derive an integration by parts formula for spin weighted functions.

\begin{proposition} \label{PropIntPartsSphere}
For $f,h \in \mathscr{I}^\infty_{[s]}(\Sp^2)$ and $i \in \{1,2,3\}$ we have
\begin{equation*}
\int_{\Sp^2} \tilde{Z}_i f \cdot \overline{h} \, \vols = -\int_{\Sp^2} f \cdot \overline{\tilde{Z}_i h} \, \vols \;.
\end{equation*}
\end{proposition}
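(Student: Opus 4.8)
The plan is to recast the claimed identity as $\int_{\Sp^2}\big(\tilde{Z}_i f\cdot\overline{h} + f\cdot\overline{\tilde{Z}_i h}\big)\,\vols = 0$ and to recognise the integrand as $Z_i(f\overline{h})$ for a genuine (smooth) rotation field $Z_i$, after which the vanishing of the integral follows from the divergence theorem on the closed manifold $\Sp^2$. First I would make explicit the relation between the spin-weighted operators $\tilde{Z}_i$ of \eqref{ZTilde} and the smooth vector fields $Z_i$ of \eqref{DefEqVectorFieldZ}: comparing the two definitions yields
\begin{equation*}
\tilde{Z}_i = Z_i + i V_i, \qquad V_1 = -s\,\tfrac{\cos\varphi}{\sin\theta}, \quad V_2 = s\,\tfrac{\sin\varphi}{\sin\theta}, \quad V_3 = 0,
\end{equation*}
where each $V_i$ is a \emph{real}-valued multiplication operator carrying a factor of $s$. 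Writing $\tilde{Z}_i^{[s]}$ for the operator attached to spin $s$, this reads $\tilde{Z}_i^{[s]} = Z_i + iV_i$ and hence, since $V_i$ is proportional to $s$, also $\tilde{Z}_i^{[-s]} = Z_i - iV_i$.

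Next I would record two elementary facts. Since $\partial_\theta,\partial_\varphi$ are real and the only imaginary coefficient in $\tilde{Z}_i^{[s]}$ is the $-is/\sin\theta$ term, complex conjugation flips the spin, $\overline{\tilde{Z}_i^{[s]}h} = \tilde{Z}_i^{[-s]}\overline{h}$, where $\overline{h}\in\mathscr{I}^\infty_{[-s]}(\Sp^2)$ is a smooth spin-$(-s)$ weighted function. Then, applying the Leibniz rule to the \emph{smooth} vector field $Z_i$ acting on the product $f\overline{h}$ and substituting $Z_i f = \tilde{Z}_i^{[s]}f - iV_i f$ and $Z_i\overline{h} = \tilde{Z}_i^{[-s]}\overline{h} + iV_i\overline{h}$, the two singular $V_i$-terms cancel and one finds
\begin{equation*}
Z_i(f\overline{h}) = (\tilde{Z}_i^{[s]}f)\,\overline{h} + f\,(\tilde{Z}_i^{[-s]}\overline{h}) = \tilde{Z}_i f\cdot\overline{h} + f\cdot\overline{\tilde{Z}_i h}.
\end{equation*}
Thus the integrand in question is exactly $Z_i(f\overline{h})$, which is the key structural point.

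It remains to argue $\int_{\Sp^2}Z_i(f\overline{h})\,\vols = 0$. Here I would use that $f\overline{h}$, being the product of a spin-$s$ and a spin-$(-s)$ weighted function, is a genuine smooth function on all of $\Sp^2$: writing $f\overline{h} = (e^{is\varphi}f)\,\overline{(e^{is\varphi}h)}$, the oscillatory factors $e^{\pm is\varphi}$ cancel, and by Proposition \ref{PropFirstCharSpinWeighted} both factors together with all their $\tilde{Z}$-derivatives extend continuously to the poles, so $f\overline{h}\in\mathscr{I}^\infty_{[0]}(\Sp^2)=C^\infty(\Sp^2)$. Since $Z_1,Z_2,Z_3$ generate rotations they are Killing fields of the round metric $g_{\Sp^2}$, hence divergence-free (which one may also check directly from $\mathrm{div}\,X = \tfrac{1}{\sin\theta}\big[\partial_\theta(\sin\theta\,X^\theta)+\partial_\varphi(\sin\theta\,X^\varphi)\big]$); therefore $Z_i(f\overline{h}) = \mathrm{div}(f\overline{h}\,Z_i)$ and its integral over the closed manifold $\Sp^2$ vanishes by the divergence theorem. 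The main obstacle, and the only place requiring genuine care, is precisely this pole analysis: verifying the smoothness of $f\overline{h}$ (equivalently, that no boundary contributions survive when the integration by parts is localised away from $\theta=0,\pi$ and the excised polar caps are shrunk, the flux of $f\overline{h}\,Z_i$ through $\{\theta=\varepsilon\}$ being $O(\varepsilon)$ since $f\overline{h}$ is bounded and $ds\simeq \sin\theta\,d\varphi$). The case $i=3$ is immediate from $2\pi$-periodicity in $\varphi$ and serves as a sanity check of the uniform argument.
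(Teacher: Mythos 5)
Your proof is correct, but it follows a genuinely different route from the paper's own argument. The paper lifts $f$ and $h$ to symmetric $2$-covariant tensor fields $\alpha,\beta$ with $\alpha(m,m)=f$, $\beta(m,m)=h$ and works entirely at the tensorial level: using $(\mathcal{L}_{Z_i}\alpha)(m,m)=\tilde{Z}_i\big(\alpha(m,m)\big)$ (equation \eqref{LieZTilde}), the Killing property in the form $\mathcal{L}_{Z_i}(m\otimes\overline{m})=0$ (equation \eqref{KillingM}), and the smoothness of $m\otimes\overline{m}$, the integration by parts is carried out for Lie derivatives of smooth tensor fields on the closed manifold $\Sp^2$, so the poles never enter explicitly. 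You instead work directly with the coordinate expressions: the decomposition $\tilde{Z}_i=Z_i+iV_i$ with real $V_i$ proportional to $s$, the resulting cancellation of the singular terms in $Z_i(f\overline{h})=\tilde{Z}_if\cdot\overline{h}+f\cdot\overline{\tilde{Z}_ih}$, and the divergence theorem for the Killing (hence divergence-free) fields $Z_i$, with the pole behaviour handled by hand — smoothness of $f\overline{h}$ via Proposition \ref{PropFirstCharSpinWeighted}, backed up by the flux estimate through $\{\theta=\varepsilon\}$. It is worth noting that the paper itself records precisely your identity $Z_i(f\overline{h})=(\tilde{Z}_if)\overline{h}+f\overline{\tilde{Z}_ih}$, together with the smoothness of $f\overline{h}$, in the remark immediately following the proposition as an alternative derivation; your proof fills in exactly that sketch, except that you deduce the smoothness of $f\overline{h}$ from the characterisation of spin-weighted functions rather than from the smoothness of $m\otimes\overline{m}$. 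What the paper's tensorial route buys is that it needs no explicit pole excision and transfers verbatim to the spacetime setting of Section \ref{SubSecSpinWFctSpacetime}; what your route buys is that it is more elementary and self-contained, needing only the coordinate formulae \eqref{ZTilde} and \eqref{DefEqVectorFieldZ}, the divergence theorem, and your careful $\mathcal{O}(\varepsilon)$ boundary estimate at the excised polar caps.
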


\begin{proof}
We give the proof for $s = +2$, but the other cases are analogous.
We begin by noticing that
\begin{equation}\label{EqMMbarMetric}
m \otimes \overline{m} = E_1 \otimes E_1 + E_2 \otimes E_2 - i (E_1 \otimes E_2 - E_2 \otimes E_1) = g_{\Sp^2} - i \varepsilon \;,
\end{equation}
where $\varepsilon = \vols^\sharp \in \Gamma^\infty(\Lambda^2T^*\Sp^2)$ is the raised volume form. Note that $m \otimes \overline{m}$ is a \emph{smooth} tensor on $\Sp^2$. In particular, since the vector fields $Z_i$ are Killing vector fields, we obtain
\begin{equation}\label{KillingM}
\mathcal{L}_{Z_i} (m \otimes \overline{m}) = 0 \;.
\end{equation}
Let now $\alpha, \beta \in \Gamma^\infty(S^2T^*\Sp^2)$ with $\alpha(m, m) = f$ and $\beta(m,m) = h$. Using \eqref{LieZTilde}, \eqref{KillingM}, and the smoothness of $m \otimes \overline{m}$ we compute
\begin{equation*}
\begin{aligned}
\int_{\Sp^2} \tilde{Z}_i f \cdot \overline{h} \, \vols &= \int_{\Sp^2} \tilde{Z}_i\big(\alpha(m,m)\big) \overline{\beta(m,m)} \, \vols \\
&= \int_{\Sp^2} (\mathcal{L}_{Z_i} \alpha)(m,m) \cdot \beta(\overline{m}, \overline{m}) \, \vols \\
&= - \int_{\Sp^2} \alpha(m,m) \cdot (\mathcal{L}_{Z_i}\beta)(\overline{m}, \overline{m}) \, \vols \\
&= - \int_{\Sp^2} f \overline{\tilde{Z}_i h} \, \vols \;.
\end{aligned}
\end{equation*}
\end{proof}

\begin{remark}
Note that the smoothness of $m \otimes \overline{m}$ implies that if $f,h \in \mathscr{I}^\infty_{[s]}(\Sp^2)$, then $f \overline{h} \in C^\infty(\Sp^2)$. The above can now also be derived from observing $Z_i(f \overline{h}) = (Z_i f) \overline{h} + f Z_i \overline{h} =(\tilde{Z}_i f)\overline{h} + f \overline{\tilde{Z}_i h}$.
\end{remark}

\subsubsection{The spin $s$-weighted Laplacian} \label{SecSpinWeightedLaplacian}

The spin $s$-weighted Laplacian $\mathring{\slashed\Delta}^{[s]}$ on $\Sp^2$ is defined for $f \in \SWs$ in standard $(\theta, \varphi)$ coordinates by\footnote{This differs from the spin $s$-weighted Laplacian in \cite{DafHolRod17} by an overall minus sign.}
\begin{equation} \label{EqSpinWeightedLap}
\mathring{\slashed\Delta}_{[s]} f = \frac{1}{\sin\theta} \partial_\theta\big(\sin \theta \, \partial_\theta f\big) + \frac{1}{\sin^2 \theta} \partial_\varphi^2f + 2si\frac{\cos \theta}{\sin^2 \theta}\partial_\varphi f -\big( s^2\frac{\cos^2\theta}{\sin^2\theta} -s\big)f \;.
\end{equation}
We note that 
\begin{equation}\label{EqSWLV}
(\mathring{\slashed\Delta}_{[s]} - s -s^2)f = \big((\tilde{Z}_1)^2 + (\tilde{Z}_2)^2 + (\tilde{Z}_3)^2\big)f\;,
\end{equation}
such that it follows easily from Proposition \ref{PropFirstCharSpinWeighted} that \eqref{EqSpinWeightedLap} is a smooth operator on $\SWs$.

It follows directly from Proposition \ref{PropIntPartsSphere} and \eqref{EqSWLV} that for $f \in \SWs$ we have
\begin{equation}\label{EqSWLIntParts}
\begin{split}
-\int_{\Sp^2} \mathring{\slashed\Delta}_{[s]} f  \cdot \overline{f} \; \vols &= - \int_{\Sp^2} \Big( \sum_{i =1}^3 \zt_i^2 f + (s + s^2) f\Big) \cdot \overline{f} \; \vols \\ 
&= \int_{\Sp^2} \sum_{i=1}^3 |\zt_i f|^2 \; \vols - \int_{\Sp^2} (s + s^2) |f|^2 \; \vols \;.
\end{split}
\end{equation}

Note that for $s=0$ the right hand side of \eqref{EqSWLIntParts} is equal to $\int_{\Sp^2} \big( |\rd_\theta f|^2 + \frac{1}{\sin^2 \theta} |\rd_\varphi f|^2\big) \; \vols$, which gives non-degenerate control of the $\rd_\varphi$ derivative towards the north and south pole of $\Sp^2$. For $s \neq 0$, however, $\frac{1}{\sin \theta} \rd_\varphi f$ has in general a pole in $\theta$ at $\theta = 0, \pi$ and thus, in particular, is not square integrable on $\Sp^2$. The next lemma gives the appropriate generalisation, which is needed in Sections \ref{SecEENoShift} and \ref{SecEECauchy}.

\begin{lemma}\label{LemStrongerBoundPhiDer}
For $f \in \SWs$ we have 
\begin{equation}\label{LemLInfty}
\begin{aligned}
&\rd_\theta f \in L^\infty(\Sp^2)\\
&\frac{1}{\sin \theta} (is \cos \theta  + \rd_\varphi) f \in L^\infty(\Sp^2)
\end{aligned}
\end{equation} 
and the following holds:
\begin{equation}\label{LemEqStrongerBoundPhiDer}
\sum_{i=1}^3 |\zt_i f|^2 = |\rd_\theta f|^2 + \frac{1}{\sin^2 \theta} |is \cos\theta \cdot f + \rd_\varphi f|^2 + s^2 |f|^2 \;.
\end{equation}
\end{lemma}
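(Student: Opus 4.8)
The plan is to establish the pointwise identity \eqref{LemEqStrongerBoundPhiDer} first and then read off the two $L^\infty$ bounds \eqref{LemLInfty} from it, exploiting that all three summands on the right-hand side of \eqref{LemEqStrongerBoundPhiDer} are manifestly non-negative.

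First I would compute $\sum_{i=1}^3 |\zt_i f|^2$ directly from the definitions \eqref{ZTilde}. Writing $A := \rd_\theta f$ and $B := -\frac{1}{\sin\theta}(is + \cos\theta\,\rd_\varphi)f$, the first two operators read $\zt_1 f = -\sin\varphi\, A + \cos\varphi\, B$ and $\zt_2 f = -\cos\varphi\, A - \sin\varphi\, B$, so that upon squaring the mixed terms involving $\Rea(A\overline B)$ carry opposite signs and cancel, while the coefficients of $|A|^2$ and of $|B|^2$ recombine through $\sin^2\varphi + \cos^2\varphi = 1$. Hence $|\zt_1 f|^2 + |\zt_2 f|^2 = |A|^2 + |B|^2$, with all $\varphi$-dependence dropping out. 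Adding $|\zt_3 f|^2 = |\rd_\varphi f|^2$ then yields $\sum_{i=1}^3|\zt_i f|^2 = |\rd_\theta f|^2 + \frac{1}{\sin^2\theta}|is f + \cos\theta\,\rd_\varphi f|^2 + |\rd_\varphi f|^2$.

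It then remains to rearrange this into the form asserted in \eqref{LemEqStrongerBoundPhiDer}. Expanding both the expression just obtained and the claimed right-hand side, the cross terms reduce to $-\frac{2s\cos\theta}{\sin^2\theta}\Ima\!\big(f\,\overline{\rd_\varphi f}\big)$ on each side and agree verbatim, while the coefficients of $|f|^2$ and of $|\rd_\varphi f|^2$ match after one application of $\cos^2\theta + \sin^2\theta = 1$ in each case (namely $\frac{s^2}{\sin^2\theta} = \frac{s^2\cos^2\theta}{\sin^2\theta} + s^2$ for the $|f|^2$ term, and $\frac{\cos^2\theta}{\sin^2\theta} + 1 = \frac{1}{\sin^2\theta}$ for the $|\rd_\varphi f|^2$ term). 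This proves the identity \eqref{LemEqStrongerBoundPhiDer}.

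Finally, for the $L^\infty$ bounds I would note that $\SWs \subseteq L^\infty(\Sp^2)$ by definition and that this space is invariant under each $\zt_i$: writing $f = \alpha(m,m)$ with $\alpha \in \Gamma^\infty(S^2T^*\Sp^2)$, equation \eqref{LieZTilde} gives $\zt_i f = (\Li_{Z_i}\alpha)(m,m)$, and $\Li_{Z_i}\alpha$ is again a smooth symmetric $2$-tensor, so $\zt_i f \in \SWs \subseteq L^\infty(\Sp^2)$. Consequently $\sum_{i=1}^3|\zt_i f|^2 \in L^\infty(\Sp^2)$, and since each of $|\rd_\theta f|^2$, $\frac{1}{\sin^2\theta}|is\cos\theta\, f + \rd_\varphi f|^2$, and $s^2|f|^2$ is non-negative and hence bounded by this sum through \eqref{LemEqStrongerBoundPhiDer}, both assertions in \eqref{LemLInfty} follow at once. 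The only genuine work lies in the first two steps; the crux is the cancellation of the $\varphi$-dependence in $|\zt_1 f|^2 + |\zt_2 f|^2$ together with careful bookkeeping of the complex conjugations, after which the $L^\infty$ conclusions are immediate.
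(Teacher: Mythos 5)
Your proof is correct, and the computation at its core is the same one the paper performs for \eqref{LemEqStrongerBoundPhiDer}: both you and the paper pass through the intermediate formula $\sum_{i}|\zt_i f|^2 = |\rd_\theta f|^2 + \frac{1}{\sin^2\theta}|is f + \cos\theta\,\rd_\varphi f|^2 + |\rd_\varphi f|^2$ (you derive it explicitly via the orthogonal recombination of $\zt_1 f,\zt_2 f$; the paper asserts it as the first line of its computation) and then rearrange, your expansion-and-comparison being algebraically equivalent to the paper's completing-the-square chain. Where you genuinely differ is in the treatment of \eqref{LemLInfty} and the logical order. The paper proves the $L^\infty$ bounds first and independently of the identity: it inverts the linear relation between $(\zt_1 f,\zt_2 f)$ and the pair $\big(\rd_\theta f, \frac{1}{\sin\theta}(is+\cos\theta\,\rd_\varphi)f\big)$ through the explicit combinations $-\sin\varphi\,\zt_1 f - \cos\varphi\,\zt_2 f$ and $\sin\varphi\,\zt_2 f - \cos\varphi\,\zt_1 f$, bounded by Proposition \ref{PropFirstCharSpinWeighted}, and then assembles $\frac{1}{\sin\theta}(is\cos\theta+\rd_\varphi)f$ by adding $\sin\theta\,\rd_\varphi f$. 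You instead prove the identity first and obtain both bounds at once from the non-negativity of the three summands, using that $\sum_i|\zt_i f|^2$ is bounded; the boundedness of the $\zt_i f$ you justify via the $\Li_{Z_i}$-invariance of $\SWs$ coming from \eqref{LieZTilde} together with the inclusion $\SWs \subseteq L^\infty(\Sp^2)$ built into the definition \eqref{EqMapIota}. Both mechanisms rest on the same underlying fact, but your positivity argument avoids any explicit inversion and makes the $L^\infty$ statements a one-line corollary of the identity, at the cost of making \eqref{LemLInfty} logically dependent on \eqref{LemEqStrongerBoundPhiDer} (which is harmless, since nothing in the lemma's proof uses \eqref{LemLInfty} to prove the identity). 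One small point of care: \eqref{LieZTilde} is stated in the paper only for $s=+2$, so for the lemma as phrased (general spin weight $s$) you should either invoke Proposition \ref{PropFirstCharSpinWeighted} or Proposition \ref{PropSmoothSW} for the boundedness of $\zt_i f$, or remark that the analogous Lie-derivative identity holds for the other spin weights via $\iota_{\overline{m}}$; this is cosmetic, not a gap.
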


\begin{proof}
In order to prove \eqref{LemLInfty} we note that by Proposition \ref{PropFirstCharSpinWeighted}  we have
\begin{align}
-\sin \varphi \cdot \zt_1 f - \cos \varphi \cdot \zt_2 f &= \rd_\theta f \in L^\infty(\Sp^2) \notag \\ 
 \sin \varphi \cdot \zt_2 f - \cos \varphi \cdot \zt_1 f &= \frac{1}{\sin \theta} (is + \cos \theta \cdot \rd_\varphi)f \in L^\infty(\Sp^2) \;. \label{EqSpinorialDer1}
 \end{align}
Multiplying \eqref{EqSpinorialDer1} by $\cos \theta$ and adding $\frac{1}{\sin \theta} \sin^2 \theta \cdot \rd_{\varphi} f$, which is clearly also bounded on $\Sp^2$, we obtain the second claim in \eqref{LemLInfty}.
The proof of \eqref{LemEqStrongerBoundPhiDer} is a direct computation:
\begin{equation*}
\begin{split}
\sum_{i=1}^3 |\zt_i f|^2 &= |\rd_\theta f|^2  + \frac{1}{\sin^2 \theta} |is f + \cos \theta \cdot \rd_\varphi f|^2 + |\rd_\varphi f|^2 \\
&= |\rd_\theta f|^2 + \frac{1}{\sin^2 \theta} | is \cos \theta \cdot f + \cos^2 \rd_\varphi f|^2 + |is f + \cos \theta \cdot \rd_\varphi f|^2 \\ &\qquad \qquad + \frac{1}{\sin^2 \theta}|\sin^2\theta \cdot \rd_\varphi f|^2 + \frac{\cos^2 \theta}{\sin^2 \theta} | \sin \theta \cdot \rd_\varphi f|^2 \\
&=|\rd_\theta f|^2 +  \frac{1}{\sin^2 \theta} | is \cos \theta \cdot f + \rd_\varphi f|^2 -\frac{1}{\sin^2 \theta} 2 \Rea\big( (is \cos \theta \cdot f + \cos^2 \theta \cdot \rd_\varphi f)\overline{(\sin^2 \theta \cdot \rd_\varphi f)}\big) \\
&\qquad \qquad + |is f + \cos \theta \cdot \rd_\varphi f|^2 + \cos^2 \theta | \rd_\varphi f|^2 \\ 
&= |\rd_\theta f|^2 + \frac{1}{\sin^2 \theta} |is \cos\theta \cdot  f + \rd_\varphi f|^2 + s^2 |f|^2 \;.
\end{split}
\end{equation*}
\end{proof}

\begin{lemma} \label{LemSecondAngularDer}
For $f \in \SWs$ we have
\begin{equation*}
\swl f \overline{\swl f} \eai \sum_{i,j =1}^3 |\zt_j \zt_i f|^2  - 2(s+s^2) \sum_{i = 1}^3 |\zt_i f|^2 + (s+s^2)^2 |f|^2 \;,
\end{equation*}
where $\eai$ denotes equality after integration over the sphere.
\end{lemma}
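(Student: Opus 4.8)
The plan is to reduce the whole computation to the Casimir identity \eqref{EqSWLV} together with the integration-by-parts formula of Proposition \ref{PropIntPartsSphere}. Writing $L := \sum_{i=1}^3 \zt_i^2$, equation \eqref{EqSWLV} says that on $\SWs$ we have $\swl = L + (s+s^2)\,\id$. Since $s$ is an integer, $s+s^2 \in \R$, so I would first expand the integrand pointwise as
\begin{equation*}
\swl f \,\overline{\swl f} = |Lf|^2 + 2(s+s^2)\,\Rea\big(Lf\,\overline{f}\big) + (s+s^2)^2 |f|^2 \;.
\end{equation*}
The last term integrates directly to the claimed $(s+s^2)^2 |f|^2$ contribution, so only the first two terms require work. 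Note that all the objects appearing ($\zt_i f$, $\zt_j\zt_i f$, etc.) again lie in $\SWs$ by the characterisation of Proposition \ref{PropFirstCharSpinWeighted}, so Proposition \ref{PropIntPartsSphere} is applicable throughout.

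For the cross term I would apply Proposition \ref{PropIntPartsSphere} twice (i.e.\ use that each $\zt_i$ is anti-self-adjoint in $L^2(\Sp^2)$) to obtain $\int_{\Sp^2} Lf\,\overline{f}\,\vols = -\sum_{i=1}^3 \int_{\Sp^2} |\zt_i f|^2\,\vols$. This is real, hence equals its real part, and multiplying by $2(s+s^2)$ reproduces exactly the middle term $-2(s+s^2)\sum_i |\zt_i f|^2$ of the asserted identity.

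The heart of the argument is the fourth-order term, where I claim $\int_{\Sp^2} |Lf|^2\,\vols = \sum_{i,j=1}^3 \int_{\Sp^2} |\zt_j \zt_i f|^2\,\vols$. Starting from the right-hand side, for fixed $i$ one integration by parts gives $\sum_{j} \int_{\Sp^2} |\zt_j \zt_i f|^2\,\vols = -\int_{\Sp^2} \zt_i f \,\overline{L \zt_i f}\,\vols$. The decisive step is that $L$ commutes with each $\zt_i$: this is a purely algebraic consequence of the relations $[\zt_1,\zt_2]=\zt_3$, $[\zt_2,\zt_3]=\zt_1$, $[\zt_3,\zt_1]=\zt_2$ recorded after \eqref{ZTilde}, since $L$ is the quadratic Casimir of the $\mathfrak{so}(3)$ algebra they generate. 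Granting $L\zt_i f = \zt_i L f$, a second integration by parts turns $-\int_{\Sp^2}\zt_i f\,\overline{\zt_i Lf}\,\vols$ into $\int_{\Sp^2}\zt_i^2 f\,\overline{Lf}\,\vols$, and summing over $i$ collapses the sum to $\int_{\Sp^2} Lf\,\overline{Lf}\,\vols = \int_{\Sp^2}|Lf|^2\,\vols$. Combining the three pieces yields precisely the stated equality after integration.

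The main obstacle is the commutation $[L,\zt_i]=0$; everything else is bookkeeping with Proposition \ref{PropIntPartsSphere}. One can either verify $[L,\zt_i]=0$ directly from the three bracket relations (e.g.\ $[L,\zt_1]=\zt_2[\zt_2,\zt_1]+[\zt_2,\zt_1]\zt_2+\zt_3[\zt_3,\zt_1]+[\zt_3,\zt_1]\zt_3 = 0$) or invoke the standard fact that the quadratic Casimir of $\mathfrak{so}(3)$ is central. A minor point to keep in mind is justifying the repeated integration by parts: this is legitimate because $\SWs$ is preserved under the operators $\zt_i$, so every factor to which Proposition \ref{PropIntPartsSphere} is applied is a genuine smooth spin-weighted function.
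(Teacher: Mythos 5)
Your proposal is correct, and it reaches the identity by a cleaner route than the paper, though both ultimately rest on the same $\mathfrak{so}(3)$ commutation relations. The two arguments start identically: expand $\swl = \sum_i \zt_i^2 + (s+s^2)\,\id$ via \eqref{EqSWLV}, and handle the zeroth-order and cross terms with Proposition \ref{PropIntPartsSphere} exactly as you do. The difference is in the fourth-order term. The paper works directly on $\sum_{i,j}\zt_i^2 f\,\overline{\zt_j^2 f}$ by a chain of integrations by parts interleaved with commuting \emph{individual} operators $\zt_i$ past $\zt_j$; this produces two error sums weighted by $\varepsilon_{ijk}$, which after one more integration by parts combine into $\sum_{i,j,k}\varepsilon_{ijk}\,\zt_i f\,(\overline{\zt_j\zt_k f + \zt_k\zt_j f})$ and vanish because the antisymmetric symbol is paired against a symmetric expression. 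You instead commute the whole operator $L=\sum_i\zt_i^2$ past a single $\zt_i$ at once, invoking (and correctly verifying) the centrality of the quadratic Casimir, $[L,\zt_i]=0$; this absorbs all of the paper's error-term bookkeeping into one algebraic identity, and then two applications of Proposition \ref{PropIntPartsSphere} finish the job. The content is the same — the paper's cancellation-by-antisymmetry step is precisely the Casimir identity in disguise — but your packaging is more conceptual and shorter, while the paper's is a self-contained term-by-term computation that never needs to name $L$ or recognise it as a Casimir. Your side remark on justifying the repeated integration by parts is also sound: $\SWs$ is stable under each $\zt_i$ (this follows from Proposition \ref{PropFirstCharSpinWeighted}, or from Proposition \ref{PropSmoothSW} since $\zt_i$ maps $H^m_{[s]}(\Sp^2)$ into $H^{m-1}_{[s]}(\Sp^2)$), so Proposition \ref{PropIntPartsSphere} applies at every stage.
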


\begin{proof}
Using Proposition \ref{PropIntPartsSphere} we compute
\begin{equation*}
\swl f \overline{\swl f} = \Big( \sum_{i =1}^3 \zt_i^2 f + (s+s^2) f\Big) \Big(\sum_{j=1}^3 \overline{\zt_j^2 f +(s + s^2) f}\Big) \eai \sum_{i,j = 1}^3 \zt_i^2 f \overline{\zt_j^2 f} +(s+s^2)^2 |f|^2 - 2(s + s^2) \sum_{i = 1}^3 |\zt_i f|^2 \;.
\end{equation*}
Moreover, using the commutation relations $[\zt_i, \zt_j] = \varepsilon_{ijk} \zt_k$, we further compute
\begin{equation*}
\begin{split}
\sum_{i,j = 1}^3 \zt_i^2 f \overline{\zt_j^2 f} &\eai - \sum_{i,j = 1}^3 \zt_i f \overline{\zt_i \zt_j^2 f} \\
&= - \sum_{i,j = 1}^3 \zt_i f \overline{ \zt_j \zt_i \zt_j f} - \sum_{i,j,k = 1}^3 \varepsilon_{ijk} \zt_i f \overline{\zt_k \zt_j f} \\
&\eai \sum_{i,j = 1}^3 \zt_j \zt_i f \overline{\zt_i \zt_j f} - \sum_{i,j,k =1}^3 \varepsilon_{ijk} \zt_i f \overline{\zt_k \zt_j f} \\
&= \sum_{i,j = 1}^3 |\zt_j \zt_i f|^2 + \sum_{i,j,k = 1}^3 \varepsilon_{ijk} \zt_j \zt_i f \overline{\zt_k f} - \sum_{i,j,k =1}^3 \varepsilon_{ijk} \zt_i f \overline{\zt_k \zt_j f}  \\
&\eai \sum_{i,j = 1}^3 |\zt_j \zt_i f|^2 - \underbrace{\sum_{i,j,k = 1}^3 \varepsilon_{ijk} \big( \zt_i f ( \overline{ \zt_j \zt_k f + \zt_k \zt_j f})\big)}_{=0} \;.
\end{split}
\end{equation*}
\end{proof}

\subsubsection{The Teukolsky equation in Boyer-Lindquist coordinates}

Using \eqref{EqSpinWeightedLap} and $\Delta^{-s} \partial_r (\Delta^{s+1} \partial_r \psi_s) = 2(r-M)(s+1)\partial_r\psi_s + \Delta \partial_r^2 \psi_s$  we can rewrite the Teukolsky equation \eqref{TeukolskyEquation2} as
\begin{equation} \label{TeukolskyEquation3}
\begin{split}
\mathcal{T}_{[s]} \psi_s := &-\Big[\frac{(r^2 + a^2)^2}{\Delta} - a^2 \sin^2 \theta\Big] \partial_t^2\psi_s - \frac{4Mar}{ \Delta} \partial_t \partial_\varphi \psi_s - \frac{a^2}{ \Delta}\partial^2_\varphi \psi_s \\ 
&+ \Delta \partial_r^2 \psi_s +2(r-M)(1-s) \partial_r \psi_s  + 2s\frac{a(r-M)}{\Delta} \partial_\varphi \psi_s \\ 
&+ 2s\Big[\frac{M(r^2 -a^2)}{\Delta} - r - ia\cos \theta\Big] \partial_t \psi_s 
+\mathring{\slashed\Delta}_{[s]} \psi_s - 2s\psi_s = 0\;.
\end{split}
\end{equation}

\subsubsection{The Teukolsky equation in $\{v_+, \varphi_+, r,\theta\}$ coordinates}

Using \eqref{CoordTrafoStar} we rewrite \eqref{TeukolskyEquation3} in terms of $\{v_+, \varphi_+, r,\theta\}$ coordinates, which are regular at the right event horizon $\Hp_r$, to  obtain
\begin{equation}
\label{TeukolskyStar}
\begin{split}
\mathcal{T}_{[s]} \psi_s := & a^2 \sin^2 \theta \,\partial_{v_+}^2\psi_s + 2a \,\partial_{v_+}\partial_{\varphi_+} \psi_s + 2(r^2 + a^2)\, \partial_{v_+}\partial_r \psi_s \\
&+2 a\, \partial_{\varphi_+}\partial_r \psi_s + \Delta \,\partial_r^2 \psi_s + 2\Big( r(1-2s) - isa\cos \theta\Big)\, \partial_{v_+} \psi_s \\
&+2(r-M)(1-s) \,\partial_r \psi_s + \mathring{\slashed{\Delta}}_{[s]} \psi_s - 2s \psi_s 
= 0 \;.
\end{split}
\end{equation}

\subsubsection{The Teukolsky equation in $\{v_-, \varphi_-, r,\theta\}$ coordinates}

We express the Teukolsky equation \eqref{TeukolskyEquation1} for $\hat{\psi}_s$ (which is regular at $\Hp_l$) in terms of $\{v_-,\varphi_-,r,\theta\}$ coordinates (which are also regular at $\Hp_l$), using \eqref{CoordTrafoStarMinus}, to obtain
\begin{equation}\label{TeukolskyEquationHat}
\begin{split}
 \hat{\mathcal{T}}_{[s]} \hat{\psi}_s := & a^2 \sin^2 \theta \,\partial_{v_-}^2\hat{\psi}_s - 2a \,\partial_{v_-}\partial_{\varphi_-} \hat{\psi}_s + 2(r^2 + a^2)\, \partial_{v_-}\partial_r \hat{\psi}_s \\
&-2 a\, \partial_{\varphi_-}\partial_r \hat{\psi}_s + \Delta \,\partial_r^2 \hat{\psi}_s + 2\Big( r(1+2s) + isa\cos \theta\Big)\, \partial_{v_-} \hat{\psi}_s \\
&+2(r-M)(1+s) \,\partial_r \hat{\psi}_s + \mathring{\slashed{\Delta}}_{[s]} \hat{\psi}_s 
= 0 \;.
\end{split}
\end{equation}

\subsubsection{Spin weighted functions on spacetime}
\label{SubSecSpinWFctSpacetime}

We consider $\mathcal{M}$ and observe that the vector field $m = \frac{1}{\sqrt{2}}(\partial_\theta + \frac{i}{\sin \theta} \partial_\varphi)$, given in Boyer Lindquist coordinates, extends smoothly to $\M \setminus\{\theta = 0, \pi\}$ by virtue of $\partial_\varphi = \partial_{\Phi_{r_+}} = \partial_{\Phi_{r_-}}$. We consider the space $\Gamma^\infty(S^2(T^*\underline{\overline{\mathcal{M}}}))$ of all smooth and symmetric sections of $T^*\M \otimes T^*\M$ and the map $\iota_m$ which acts on an element $\alpha$ of $\Gamma^\infty(S^2(T^*\underline{\overline{\mathcal{M}}}))$ by $\iota_m \alpha = \alpha(m,m)$.

\begin{definition}
The space $\mathscr{I}^\infty_{[2]}(\M)$ of smooth spin $2$-weighted functions on $\M$ is defined as the image of $\Gamma^\infty(S^2(T^*\underline{\overline{\mathcal{M}}}))$ under $\iota_m$, i.e.
\begin{equation*}
\mathscr{I}^\infty_{[2]}(\M) := \iota_m\Big(\Gamma^\infty(S^2(T^*\underline{\overline{\mathcal{M}}}))\Big) \subseteq C^\infty(\M \setminus \{\theta = 0,\pi\}, \C) \;.
\end{equation*}
\end{definition}

\begin{remark} \label{RemDefSpinSpacetime}
\begin{enumerate}
\item As before, the space of smooth spin $-2$-weighted functions is defined as the image of $\Gamma^\infty(S^2(T^*\underline{\overline{\mathcal{M}}}))$ under $\iota_{\overline{m}}$ and the spin $\pm 1$-weighted functions are defined as the images of the space of smooth one-forms on $\M$.
\item It follows from the definition of the spin weighted spaces that they are invariant under multiplication by elements in $C^\infty(\M, \C)$. To see  this we note that multiplication by $i$ of a smooth spin $1$-weighted function corresponds to a concatenation of the one-covector field by a rotation of $\frac{\pi}{2}$ (with respect to the oriented frame field $\{\partial_\theta, \frac{1}{\sin\theta}\partial_\varphi\}$) while for smooth spin $2$-weighted functions it corresponds to a concatenation of the symmetric two covector field with a rotation of $\frac{\pi}{4}$.
\end{enumerate}
\end{remark}
Let us define the distribution $D \subset T\M$ which is annihilated by $\{dV^+_{r_+}, dV^-_{r_+}, dV^+_{r_-}, dV^-_{r_-}\}$ (where defined). Its integral manifolds in the interior of $\M$ are exactly the Boyer-Lindquist spheres of constant $t$ and $r$. We note that $m$ lies in the complexification of $D$.  Moreover, we denote the dual bundle of $D$ by $D^*$.
\begin{remark}\label{RemSpinWSpacetimeS2Tensor}
\begin{enumerate}
\item Given a subset $A \subseteq \M$ with the property that the integral manifolds of $D$ restricted to $A$ are complete spheres, we define the spin weighted spaces  $\mathscr{I}^\infty_{[s]}(A)$ analogously. For example we will choose $A = \Mun$ later.
\item  We define an auxiliary round metric $\slashed{g}_{\Sp^2}$ on the integral manifolds of $D$ by the symmetric part of $m \otimes \overline{m}$, cf.\ \eqref{EqMMbarMetric}. The kernel of the map $\iota_m : \Gamma^\infty(S^2(T^*\underline{\overline{\mathcal{M}}})) \to C^\infty(\M \setminus \{\theta = 0,\pi\}, \C)$ is the span of all those symmetric two-tensor fields that, when restricted to $D$, vanish or are proportional to $\slashed{g}_{\Sp^2}$. Thus, the space $\mathscr{I}^\infty_{[2]}(\M)$ of smooth spin $2$-weighted functions on $\M$ is isomorphic to the space  $\Gamma^\infty\big(S^2(D^* \to \M)\big)$, the space of all smooth, symmetric, and trace-free (with respect to $\slashed{g}_{\Sp^2}$)  sections of $D^* \otimes D^* \to \M$.
\item Given the above, a convenient realisation of the space $\mathscr{I}^\infty_{[2]}(\Mun)$ is as all those elements in $\Gamma^\infty(S^2(T^*\underline{\mathcal{M}}))$ that
\begin{itemize}
\item  vanish if $\rd_{V^+_{r_+}}$ is inserted in one of the slots
\item vanish if $\rd_{V^-_{r_+}}$ is inserted in one of the slots
\item are trace-free with respect to $\slashed{g}_{\Sp^2}$.
\end{itemize}
We will call such an element a \emph{symmetric and trace-free $\Sp^2$ $2$-covariant tensor field}. On this subset of $\Gamma^\infty(S^2(T^*\underline{\mathcal{M}}))$, $\iota_m$ is an isomorphism.
\end{enumerate}
\end{remark}

As before we can characterise the spin weighted functions on $\M$ among the elements of $C^\infty(\M\setminus\{\theta = 0, \pi\},\C)$. We do this in regions on which we have global coordinate charts. For example on $\Mun$ we introduce the first order differential operators $\tilde{Z}_{i,r_+}$, $i = 1,2,3$, which are defined as in \eqref{ZTilde} but with respect to the $\{V^+_{r_+}, V^-_{r_+}, \theta, \Phi_{r_+}\}$ coordinate system, i.e., we replace $\varphi$ in \eqref{ZTilde} by $\Phi_{r_+}$.  We obtain
\begin{proposition} \label{PropCharacterisationSpinSpacetime}
$f \in C^\infty(\Mun \setminus \{\theta = 0,\pi\}),\C)$ lies in $\mathscr{I}^\infty_{[s]}(\Mun)$ if, and only if, 
\begin{equation*}
e^{is\Phi_{r_+}} (\partial_{V^+_{r_+}})^{l_1}(\partial_{V^-_{r_+}})^{l_2} (\tilde{Z}_{1,r_+})^{k_1} (\tilde{Z}_{2,r_+})^{k_2} (\tilde{Z}_{3,r_+})^{k_3} f
\end{equation*} 
extends continuously to $\Mun \setminus \{\theta = \pi\}$ and 
\begin{equation*}
e^{-is\Phi_{r_+}} (\partial_{V^+_{r_+}})^{l_1}(\partial_{V^-_{r_+}})^{l_2} (\tilde{Z}_{1,r_+})^{k_1} (\tilde{Z}_{2,r_+})^{k_2} (\tilde{Z}_{3,r_+})^{k_3} f
\end{equation*} 
extends continuously to $\Mun \setminus \{\theta =0\})$ for all $l_1, l_2, k_1, k_2, k_3 \in \N_0$.
\end{proposition}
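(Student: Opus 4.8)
The plan is to replay the proof of Proposition \ref{PropFirstCharSpinWeighted} in the angular variables while treating the two transverse Kruskal directions $V^+_{r_+}, V^-_{r_+}$ as spectators. First I would introduce on $\Mun$ the vector fields $Z_{1,r_+}, Z_{2,r_+}, Z_{3,r_+}$ defined exactly as in \eqref{DefEqVectorFieldZ} but with $\varphi$ replaced by $\Phi_{r_+}$. Since $(V^+_{r_+}, V^-_{r_+}, \theta, \Phi_{r_+})$ is a global chart on $\Mun$, these fields are smooth across $\Hp_l, \Hp_r$ and $\Sp^2_b$, span the distribution $D$ at each point away from the axis, and together with the coordinate fields $\partial_{V^+_{r_+}}, \partial_{V^-_{r_+}}$ span $T\Mun$. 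The two computations I need are the analogue $\Li_{Z_{i,r_+}} m = c_{i,r_+}\, m$ of the relation $\Li_{Z_i} m = c_i m$ used in the proof of Proposition \ref{PropFirstCharSpinWeighted} (with the same smooth functions, now expressed through $\Phi_{r_+}$), together with the new relation
\begin{equation*}
\Li_{\partial_{V^+_{r_+}}} m = \Li_{\partial_{V^-_{r_+}}} m = 0 \;.
\end{equation*}
This last identity is immediate once one observes, as in Section \ref{SubSecSpinWFctSpacetime}, that fixing $(V^+_{r_+}, V^-_{r_+}, \Phi_{r_+})$ fixes $(t,r,\varphi)$, so that the Kruskal $\partial_\theta$ agrees with the Boyer--Lindquist $\partial_\theta$ and $\partial_{\Phi_{r_+}} = \partial_\varphi$; hence $m = \tfrac{1}{\sqrt2}(\partial_\theta + \tfrac{i}{\sin\theta}\partial_{\Phi_{r_+}})$ has coefficients depending on $\theta$ alone and commutes with $\partial_{V^\pm_{r_+}}$.

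With these relations in hand I would establish, exactly as in \eqref{LieZTilde}--\eqref{EqContDepSpin}, the contraction identity: for every $\alpha \in \Gamma^\infty(S^2(T^*\Mun))$,
\begin{equation*}
\Big((\partial_{V^+_{r_+}})^{l_1}(\partial_{V^-_{r_+}})^{l_2}(\Li_{Z_{1,r_+}})^{k_1}(\Li_{Z_{2,r_+}})^{k_2}(\Li_{Z_{3,r_+}})^{k_3}\alpha\Big)(m,m) = (\partial_{V^+_{r_+}})^{l_1}(\partial_{V^-_{r_+}})^{l_2}(\tilde{Z}_{1,r_+})^{k_1}(\tilde{Z}_{2,r_+})^{k_2}(\tilde{Z}_{3,r_+})^{k_3}\big(\alpha(m,m)\big) \;.
\end{equation*}
The inductive step for the $Z_{i,r_+}$ is identical to \eqref{LieZTilde}, while for each transverse factor one uses that $\Li_{\partial_{V^\pm_{r_+}}}$ acts on the scalar $\alpha(m,m)$ as $\partial_{V^\pm_{r_+}}$ and that the correction $-2\alpha(\Li_{\partial_{V^\pm_{r_+}}}m, m)$ vanishes by the relation above; peeling operators off from the outside preserves the ordering. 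I would also record, by the same computation as in the proof of Proposition \ref{PropFirstCharSpinWeighted}, that $e^{i\Phi_{r_+}} m$ extends continuously to $\Mun\setminus\{\theta = \pi\}$ and $e^{-i\Phi_{r_+}} m$ to $\Mun\setminus\{\theta = 0\}$, since $m$ is there expressed through the smooth fields $Z_{i,r_+}$ with coefficients involving only the factor $\tfrac{\cos\theta \mp 1}{\sin\theta}$, which is bounded near the respective pole. The forward implication then follows: writing $f = \alpha(m,m)$ with $\alpha$ smooth and abbreviating $\beta := (\partial_{V^+_{r_+}})^{l_1}\cdots(\Li_{Z_{3,r_+}})^{k_3}\alpha$, the left-hand expression in the proposition equals $\beta(e^{i\Phi_{r_+}}m, e^{i\Phi_{r_+}}m)$, a contraction of a smooth tensor with a frame field continuous on $\Mun\setminus\{\theta=\pi\}$, hence continuous there; the south-pole statement is analogous.

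For the converse I would proceed as in Remark \ref{RemSpinWSpacetimeS2Tensor}(3) and Remark \ref{RemSymTfTen}: given $f$ satisfying the stated continuity properties, $\alpha(m,m) := f$ determines a unique smooth symmetric and trace-free $\Sp^2$ $2$-covariant tensor field $\alpha$ on $\Mun\setminus\{\theta=0,\pi\}$. The contraction identity shows that for every multi-index the contraction of $(\partial_{V^+_{r_+}})^{l_1}\cdots(\Li_{Z_{3,r_+}})^{k_3}\alpha$ with $(e^{\pm i\Phi_{r_+}}m, e^{\pm i\Phi_{r_+}}m)$ equals $e^{\pm is\Phi_{r_+}}$ times one of the expressions assumed continuous at the poles; since $e^{\pm i\Phi_{r_+}}m$ is a continuous non-degenerate section of the complexified $D$ near the north/south pole, this forces all derivatives $(\partial_{V^+_{r_+}})^{l_1}\cdots(\Li_{Z_{3,r_+}})^{k_3}\alpha$ to extend continuously across the axis. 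Arguing as in Proposition \ref{PropSmoothSW} (via Lemma \ref{LemSobolev} and Sobolev embedding, now applied on slices of constant $V^\pm_{r_+}$ and including the transverse derivatives) this upgrades to joint smoothness of $\alpha$ on all of $\Mun$, so that $f \in \mathscr{I}^\infty_{[s]}(\Mun)$. I expect the only genuinely new point, and hence the main thing to get right, to be this interplay of the transverse directions with the axis: one must verify that continuity of all the mixed $\partial_{V^\pm_{r_+}}$-and-$\tilde{Z}_{i,r_+}$ derivative expressions yields smoothness of $\alpha$ jointly in all four coordinates across $\{\theta=0,\pi\}$, rather than merely along each fixed transverse slice. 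The relation $\Li_{\partial_{V^\pm_{r_+}}}m=0$, which decouples the transverse derivatives from the spin-weight structure, is precisely what makes this step go through.
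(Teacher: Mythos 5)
Your proposal is correct and takes essentially the same route as the paper: the paper's proof consists precisely of repeating the proof of Proposition \ref{PropFirstCharSpinWeighted}, noting $\partial_\varphi = \partial_{\Phi_{r_+}}$, $\mathcal{L}_{\partial_{V^+_{r_+}}} m = \mathcal{L}_{\partial_{V^-_{r_+}}} m = 0$, and the realisation of $\mathscr{I}^\infty_{[2]}(\Mun)$ as symmetric trace-free $\Sp^2$ tensor fields from the last point of Remark \ref{RemSpinWSpacetimeS2Tensor} --- exactly the three ingredients you isolate and develop.
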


\begin{proof}
This is the same as the proof of Proposition \ref{PropFirstCharSpinWeighted}, noticing that we have $\partial_\varphi = \partial_{\Phi_{r_+}}$, $\mathcal{L}_{\partial_{V^+_{r_+}}} m = 0$,  $\mathcal{L}_{\partial_{V^-_{r_+}}} m = 0$, and also the last point in Remark \ref{RemSpinWSpacetimeS2Tensor}.
\end{proof}

Similarly we choose $\{V^+_{r_-}, V^-_{r_-}, \theta, \Phi_{r_-}\}$ coordinates on $\overline{\mathcal{M}}$ and define the operators $\tilde{Z}_{i,r_-}$, $i=1,2,3$, by replacing $\varphi$ in \eqref{ZTilde} by $\Phi_{r_-}$. We obtain an analogous characterisation of elements in $\mathscr{I}^\infty_{[s]}(\overline{\mathcal{M}})$. Taken together, this gives a characterisation of elements in $\mathscr{I}^\infty_{[s]}(\M)$ among those of $C^\infty(\M \setminus \{\theta = 0,\pi\}),\C)$.

We will also need to define the operators $\tilde{Z}_{i,+}$, $i=1,2,3$, with respect to the $\{v_+, \varphi_+, r, \theta\}$ coordinate system, i.e., we replace $\varphi$ in \eqref{ZTilde} by $\varphi_+$. Similarly we define the operators $\tilde{Z}_{i,-}$, $i=1,2,3$, with respect to the $\{v_-, \varphi_-, r, \theta\}$ coordinate system. We obtain analogous characterisations to Proposition \ref{PropCharacterisationSpinSpacetime} in the regions covered by each of these coordinate systems.

It now follows from \eqref{EqSWLV} (which obviously holds for any of the sets of $\tilde{Z}$ defined), the second part of Remark \ref{RemDefSpinSpacetime}, and Proposition \ref{PropCharacterisationSpinSpacetime} that the Teukolsky operator $\mathcal{T}_{[s]}$, defined in \eqref{TeukolskyStar}, is a smooth operator on $\mathscr{I}^\infty_{[s]}\big(\mathcal{M} \cup (\Hp_r \setminus \Sp^2_b) \cup (\CH_l \setminus \Sp^2_t)\big)$.  Similalry, the Teukolsky operator $\hat{\mathcal{T}}_{[s]}$, defined in \eqref{TeukolskyEquationHat}, is a smooth operator on $\mathscr{I}^\infty_{[s]}\big(\mathcal{M} \cup (\Hp_l \setminus \Sp^2_b) \cup (\CH_r \setminus \Sp^2_t)\big)$.

\subsubsection{The spin weighted Carter operator}

\begin{definition} \label{DefCarterOp}
We define the spin $s$-weighted Carter operator $\mathcal{Q}_{[s]}$ by
\begin{equation*}
\begin{split}
\mathcal{Q}_{[s]} :&= a^2 \sin^2 \theta \, \rd_{v_+}^2 - 2isa\cos \theta \, \rd_{v_+} + \swl \\
&= a^2 \sin^2 \theta\,  \rd_{v_-}^2 + 2isa\cos \theta \, \rd_{v_-} + \swl
\end{split}
\end{equation*}
\end{definition}
Note that it follows directly from \eqref{TeukolskyStar} and \eqref{TeukolskyEquationHat} that the spin $s$-weighted Carter operator commutes with the Teukolsky operator, i.e.\ we have $[\mathcal{T}_{[s]}, \mathcal{Q}_{[s]}] = 0 = [\hat{\mathcal{T}}_{[s]}, \mathcal{Q}_{[s]}]$.

\subsubsection{The regularity of $\dot{\alpha}$ as defined in Section \ref{SecGravPerturb} and \ref{SecAnotherFormTeuk}}

Our following global theorem will concern spin $2$-weighted functions, satisfying the Teukolsky equation, that satisfy the following smoothness properties.

\begin{assumption} \label{AssumptionReg}
 \begin{itemize}
 \item $\psi \in  \mathscr{I}^\infty_{[2]}(\Mun)$.
 \item $\frac{1}{(V^+_{r_+})^2}\psi \in \mathscr{I}^\infty_{[2]}(\Mun)$. Note that this implies, using \eqref{r-r_+}, that $\hat{\psi}:=\frac{1}{\Delta^2} \psi \in \mathscr{I}^\infty_{[2]}\big(\mathcal{M} \cup (\Hp_l \setminus \Sp^2_b)\big)$.
 \item $\psi$ satisfies $\mathcal{T}_{[2]}\psi = 0$ in $\mathcal{M} \cup (\Hp_r \setminus \Sp^2_b)$. Note that this implies that $\hat{\psi}$ satisfies $\hat{\T}_{[2]} \hat{\psi} = 0$ in $\mathcal{M} \cup (\Hp_l \setminus \Sp^2_b)$.
 \end{itemize}
\end{assumption}

We have dropped here the subscript $2$ from $\psi$ to shorten notation. No confusion can arise, since the remainder of the paper is only concerned with spin $2$-weighted functions.

We investigate what the above regularity assumptions imply for $\rd_r \psi$ and $\rd_r^2 \psi$, where the partial derivative is with respect to the $(v_+, r ,\theta, \varphi_+)$ coordinate system. By \eqref{KruskalCoordTrafo} and \eqref{r-r_+} we have $\rd_r = \frac{1}{V^+_{r_+}} f_1(r) \rd_{V^-_{r_+}} + f_2(r) \rd_{\Phi_{r_+}}$, where $f_1(r)$ and $f_2(r)$ are functions which extend smoothly to $r = r_+$. It now follows from Assumption \ref{AssumptionReg} that $\psi$ decays at least like $(V^+_{r_+})^2$ for $V^+_{r_+} \to 0$, $\rd_r\psi$ at least like $V^+_{r_+}$, and $\rd_r^2 \psi$ does in general not decay but is a regular smooth spin $2$-weighted function on $\Mun$.

We now show that $\dot{\alpha}$, as defined in Section \ref{SecGravPerturb} and \ref{SecAnotherFormTeuk}, satisfies the above smoothness assumptions \ref{AssumptionReg}.\footnote{Recall that $m_a$ differs from $m$ by a term proportional to $\partial_t$ -- thus the claim that $\dot{\alpha}$ is spin $2$-weighted is not trivial.}

\begin{proposition}\label{PropAlphaDotSpinW}
The quantity $\psi :=\dot{\alpha}$ from Section \ref{SecAnotherFormTeuk} satisfies the Assumptions \ref{AssumptionReg}. 
\end{proposition}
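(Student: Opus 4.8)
The plan is to represent $\psi=\dot\alpha$ as a contraction of the \emph{linearised} Riemann tensor with the \emph{background} principal null frame, and then to recognise the outcome as $\iota_m$ applied to a genuinely smooth symmetric spacetime tensor. Since $\alpha=R(e_4,m_a,e_4,m_a)=\Delta^2\Psi_0$ vanishes on Kerr (cf.\ \eqref{EqCurvatureAlgSpFrame}), the same argument yielding the frame-independence of $\dot\Psi_0(0)$ in the derivation preceding \eqref{TeukolskyEquation1} gives
\[
\psi=\dot\alpha=\dot{R}(e_4,m_a,e_4,m_a),\qquad \dot{R}:=\tfrac{d}{ds}\big|_{s=0}R(g(s)),
\]
with $e_4,m_a$ taken on the background. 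As $g(s)$ is a smooth family on $\Mun$, $\dot R$ is a smooth section of $\bigotimes^4T^*\Mun$ carrying the algebraic symmetries of the Riemann tensor; hence $T:=\dot R(e_4,\,\cdot\,,e_4,\,\cdot\,)$ is a smooth \emph{symmetric} $2$-covariant tensor field on $\Mun$. With this in hand the third bullet of Assumption~\ref{AssumptionReg} costs nothing new: $\psi=\psi_2$ solves \eqref{TeukolskyEquation2} (equivalently \eqref{TeukolskyStar}) by Teukolsky's computation recalled in Section~\ref{SecAnotherFormTeuk}, and once $\psi\in\mathscr{I}^\infty_{[2]}(\Mun)$ is known this holds on $\mathcal{M}\cup(\Hp_r\setminus\Sp^2_b)$ because $\T_{[2]}$ is a smooth operator there (end of Section~\ref{SubSecSpinWFctSpacetime}).

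For the first bullet I decompose the frame \eqref{NPFrame} as $m_a=h\,m+\Si\,Y$, where $h:=\tfrac{1}{r+ia\Co}$ is smooth and nonvanishing on $\M$, $Y:=\tfrac{ia}{\sqrt2\,(r+ia\Co)}\partial_t$ is a smooth vector field on $\Mun$ (recall $\partial_t=\partial_{v_+}$ is a smooth Killing field), and $m$ is the spacetime spin frame of Section~\ref{SubSecSpinWFctSpacetime}. Expanding and using the symmetry of $T$,
\[
\dot\alpha=h^2\,T(m,m)+2h\,\Si\,T(m,Y)+\Si^2\,T(Y,Y).
\]
The first term is $\iota_m(h^2T)$, hence lies in $\mathscr{I}^\infty_{[2]}(\Mun)$ because $\iota_m$ of \emph{any} smooth symmetric spacetime tensor (not only a trace-free one) is spin $2$-weighted, cf.\ the footnote to Remark~\ref{RemSymTfTen}. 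For the other two terms the crucial observation is that the $\partial_t$-contamination always carries a factor $\Si$, which can be absorbed into the one-form $\eta:=\sqrt2\,\Si\,d\theta=-\sqrt2\,d\Co$ --- a \emph{smooth} sphere one-form with $\eta(m)=\Si$. Writing $\beta:=T(\,\cdot\,,Y)$ (a smooth one-form) one gets $\Si\,T(m,Y)=\beta(m)\eta(m)$ and $\Si^2\,T(Y,Y)=T(Y,Y)\,\eta(m)^2$, so that
\[
\psi=\dot\alpha=\iota_m\Theta,\qquad \Theta:=h^2T+h\,(\beta\otimes\eta+\eta\otimes\beta)+T(Y,Y)\,\eta\otimes\eta,
\]
with $\Theta$ a smooth symmetric $2$-covariant spacetime tensor on $\Mun$. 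Thus $\psi\in\mathscr{I}^\infty_{[2]}(\Mun)$. This reassembly is the conceptual heart of the proof and is precisely the non-triviality flagged in the footnote to the proposition.

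For the second bullet the weight $\tfrac{1}{(V^+_{r_+})^2}$ is smooth and nonvanishing on $\Mun\setminus\Hp_l$, so there nothing beyond the first bullet is needed; the work is the smooth extension across $\Hp_l$ and the corner $\Sp^2_b$. I use that $e_4=-\Delta\hat e_4=\Delta\,\partial_r\big|_-$ (Section~\ref{SecPrincipalNull}) vanishes to \emph{exactly} first order in $V^+_{r_+}$: by \eqref{r-r_+} and \eqref{EqRStarDelta} we have $\Delta=V^+_{r_+}V^-_{r_+}\cdot(\text{smooth, nonzero})$, while in the Kruskal chart $(V^+_{r_+},V^-_{r_+},\theta,\Phi_{r_+})$ the field $\partial_r\big|_-$ has a simple pole $\propto (V^-_{r_+})^{-1}\partial_{V^+_{r_+}}$ (modulo smooth vector fields), so the two powers of $V^-_{r_+}$ cancel and $e_4=V^+_{r_+}\,E$ with $E$ a smooth vector field on $\Mun$. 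Consequently $\psi=(V^+_{r_+})^2\,\dot R(E,m_a,E,m_a)$, and
\[
\tfrac{1}{(V^+_{r_+})^2}\psi=\dot R(E,m_a,E,m_a)\in\mathscr{I}^\infty_{[2]}(\Mun)
\]
by the reassembly of the previous paragraph with $E$ in place of $e_4$.

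I expect the last cancellation --- that the $(V^-_{r_+})^{-1}$ singularity of $\partial_r\big|_-$ is cancelled \emph{exactly} by the $V^-_{r_+}$ in $\Delta$, leaving a frame vector $E$ that is smooth (and reassembles with $m_a$) up to and including the bifurcation corner $\Sp^2_b$, where the $(v_\pm,r,\theta,\varphi_\pm)$-charts degenerate --- to be the main technical obstacle, requiring the explicit Kruskal extension near $\Sp^2_b$. The spin-weight reassembly of the second paragraph, by contrast, is algebraic and uniform over all of $\Mun$; it is the conceptual, rather than technical, crux.
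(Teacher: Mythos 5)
Your proof is correct and follows essentially the same route as the paper: you represent $\dot\alpha$ as a contraction of the smooth linearised curvature with the background frame, split $m_a$ into a smooth multiple of $m$ plus a $\sin\theta$-weighted $\partial_t$-piece, absorb the $\sin\theta$ factors into the smooth one-form $\sin\theta\,d\theta$ (your $\eta$, the paper's $\gamma$), and for the second bullet use that $\tfrac{1}{V^+_{r_+}}e_4$ extends smoothly to $\Mun$ (the paper establishes this same fact by the explicit Kruskal computation in Appendix \ref{CoordTransformation}, which is what your pole-cancellation argument amounts to). The only cosmetic difference is that you package everything into a single tensor $\Theta$ with $\iota_m\Theta=\psi$, whereas the paper verifies the summands separately using closure of $\mathscr{I}^\infty_{[2]}(\Mun)$ under addition and multiplication by smooth functions.
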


\begin{proof}
Recall that $e_4$ is a smooth vector field on $\Mun$, vanishing at $\Hp_l$. Also recall that
\begin{equation*}
\dot{\alpha} = \dot{R}(e_4(0), m_a(0), e_4(0), m_a(0)) =: \beta(m_a(0),m_a(0)) \;,
\end{equation*}
where we have defined $\beta$, a smooth and symmetric tensor field on $\Mun$. By \eqref{NPFrame} we have
\begin{equation*}
\beta(m_a(0),m_a(0)) = \frac{1}{(r + ia \cos \theta)^2} \beta(m + \frac{ia \sin\theta}{\sqrt{2} } \partial_t, m + \frac{ia \sin\theta}{\sqrt{2}} \partial_t) \;.
\end{equation*}
By the second point in Remark \ref{RemDefSpinSpacetime} it suffices to show that $\beta(m + \frac{ia \sin\theta}{\sqrt{2} } \partial_t, m + \frac{ia \sin\theta}{\sqrt{2}} \partial_t) \in \mathscr{I}^\infty_{[2]}(\Mun)$. By definition of the spin weighted spaces we have $\beta(m,m) \in \mathscr{I}^\infty_{[2]}(\Mun)$ and thus it remains to establish that $\sin \theta \cdot \beta(m,\partial_t) \in  \mathscr{I}^\infty_{[2]}(\Mun)$ and $\sin^2 \theta \in \mathscr{I}^\infty_{[2]}(\Mun)$. Let us define the smooth one-form $\gamma := \sin \theta \, d \theta$ on $\Mun$. Then $\gamma \otimes \gamma$ is a symmetric two-covector field with $(\gamma \otimes \gamma)(m, m) = \frac{1}{2} \sin^2 \theta$, which lies in $\mathscr{I}^\infty_{[2]}(\Mun)$. Similarly, defining the symmetric two-covector field $\gamma \otimes \beta(\cdot, \partial_t) + \beta(\cdot, \partial_t) \otimes \gamma$ shows that $\sin \theta \cdot \beta(m,\partial_t) \in  \mathscr{I}^\infty_{[2]}(\Mun)$. This shows the first point. The second point follows analogously recalling from Section \ref{CoordTransformation} that $\frac{1}{V^+_{r_+}} e_4$ is a smooth vector field on $\Mun$. The last two points were established in the previous sections.
\end{proof}

\section{Assumptions on the event horizon and the main theorem} \label{SecAssumptions}
In addition to the smoothness assumptions in Assumption \ref{AssumptionReg} we make the following assumptions on $\psi$ along the event horizons: 

Along the right event horizon $\Hp_r$:
Assume that there exists a $p \in \N$ such that $\int_{\Hp_r \cap \{v_+ \geq 1\}} v_+^{2p} |\psi|^2 \, \vols\,dv_+ = + \infty$. Let $p_0\in \mathbb N$ be the smallest integer such that this holds, i.e., we have
\begin{equation} \label{AssumpP0}
\int_{\Hp_r \cap \{v_+ \geq 1\}} v_+^{2p_0} |\psi|^2 \,\vols\,dv_{+} = +\infty\;.
\end{equation}
Assume that $p_0 \geq 2$. Moreover, we assume that there is $m_0 \in \Z$ and $ \N \ni l_0 \geq \max\{2, |m_0|\}$ such that
\begin{equation}\label{AssumpP0SphHarm}
\int_{v_+ \geq 1} v_+^{2p_0} | (\psi|_{\Hp_r})_{S(m_0l_0)}|^2 \, dv_+ = + \infty \;,
\end{equation}
where $(\psi|_{\Hp_r})_{S(ml)}(v_+) =\int_{\Sp^2} \psi|_{\Hp_r} (v_+, \theta, \varphi_+) \overline{Y^{[+2]}_{ml}(\theta, \varphi_+;0)} \, \vols$ denotes the projection of $\psi|_{\Hp_r}$ onto the spin $2$-weighted \emph{spherical} harmonic $Y^{[+2]}_{ml}(\theta, \varphi_+;0) $, cf.\ Section \ref{SecSpin2Harmonics}.
 We also assume that
\begin{equation}\label{AssumpP0Better}
\int_{\Hp_r \cap \{v_+ \geq 1\}} v_+^{2p_0} |\rd_{v_+} \psi|^2 \,\vols\,dv_{+} < +\infty
\end{equation}
holds and that for some $2 <q_r <2p_0$, $q_r \in \R$, we have\footnote{We have made no attempt in this paper to keep the number of derivatives required as low as possible,  one can certainly improve on it.  It is also likely that one can improve on the requirement $2 < q_r$ and thus also on the lower bound on $2 \leq p_0$. The bound $2 <q_r$ is used in Theorem \ref{ThmSeparationVariables} (via Corollary \ref{CorollaryNoShift} -- for which we also use all the derivatives assumed) to derive the radial ODE \eqref{EqThmSepVar3}. }
\begin{equation}\label{AssumpDecayRHp}
\sum_{0\leq i_1+i_2+i_3+j \leq 1} \int_{\Hp_r \cap \{v_+ \geq 1\}} v_+^{q_r} |\widetilde Z_{1,+}^{i_1}\widetilde Z_{2,+}^{i_2}\widetilde Z_{3,+}^{i_3}\rd_{v_+}^{j} f|^2 \,\vols\,dv_{+} < +\infty
\end{equation}
with $f \in \{\rd_{v_+}^a\rd_{\varphi_+}^b\rd_r^c\psi, \rd_{v_+} \rd_{v_+}^a\rd_{\varphi_+}^b\rd_r^c\psi, \mathcal{Q}_{[s]} \rd_{v_+}^a\rd_{\varphi_+}^b\rd_r^c \psi, \}$, $0 \leq a + b \leq 2$, $c = 0,1,2$.

Along the left event horizon $\Hp_l$: Assume that
\begin{equation}\label{AssumpLHp}
\hat{\psi} \textnormal{ is compactly supported on } \Hp_l \cup \Sp_b,
\end{equation}
i.e., there exists a $v_0 \in \R$ such that $\hat{\psi}$ vanishes in $\Hp_l \cap \{v_- \geq v_0\}$. However, all our results remain true if we replace \eqref{AssumpLHp} by the much weaker
\begin{equation}\label{AssumpLHpWeaker}
\sum_{0\leq i_1+i_2+i_3+j \leq 1 } \int_{\Hp_l \cap \{v_- \geq 1\}} v_-^{q_l} |\widetilde Z_{1,-}^{i_1}\widetilde Z_{2,-}^{i_2}\widetilde Z_{3,-}^{i_3}\rd_{v_-}^{j}f|^2 \,\vols\,dv_{-} < +\infty
\end{equation}
with  $f \in \{\rd_{v_-}^a \rd_{\varphi_-}^b \rd_r^c \hp, \rd_{v_-} \rd_{v_-}^a \rd_{\varphi_-}^b \rd_r^c \hp, \mathcal{Q}_{[s]} \rd_{v_-}^a \rd_{\varphi_-}^b \rd_r^c \hp \}$, $0 \leq a+b+c \leq 2$, $a,b,c \in \N_0$ and $\R \ni q_l \geq 2p_0$.\footnote{The asymmetry between the number of derivatives assumed on the left and right event horizons can be traced back to the necessity to close the energy estimate near $\Hp_r$ at the level of $(\rd_r|_+)^2 \psi$ while near $\Hp_l$ we close it at the level of $\hp$. The higher number of derivatives assumed on $\Hp_r$ allows us to ease the presentation of the proof of Proposition \ref{PropRightHP} in Step 6. However, one can certainly improve on that.} To see that \eqref{AssumpLHp} implies \eqref{AssumpLHpWeaker} for $f = \rd_r \hat{\psi}$, we notice that the Teukolsky equation \eqref{TeukolskyEquationHat} reduces in the region $\Hp_l \cap \{v_- \geq v_0\}$, where $\hp$ vanishes, to
$\big(2(r^2 + a^2) \rd_{v_-} - 2a \rd_{\varphi_-} \big) \rd_r \hp + 2(r-M)(1+s) \rd_r \hp =0$. This shows that $\rd_r \hp$  decays exponentially along $\Hp_l$ -- a manifestation of the red-shift effect. Further commutations with $\rd_r$ even improve the red-shift.

For the statements of the intermediate results in the main body of the paper we will often use the phrase `\emph{under the assumptions of Section \ref{SecAssumptions}}'. Let us make explicit that by this we mean the Assumption \ref{AssumptionReg} together with \eqref{AssumpP0}, \eqref{AssumpP0SphHarm}, \eqref{AssumpP0Better}, \eqref{AssumpDecayRHp}, and \eqref{AssumpLHpWeaker}. However, usually not all of these assumptions are required for the specific partial result proven. 

With the exception of Section \ref{SecM0}, where we briefly consider the case $s=-2$, this paper is only concerned with the case $s= +2$. However, we will not replace the $s$ in the Teukolsky equation by $2$ so that the reader can follow the importance of the value of $s$ for the validity of our estimates. With the exception of Section \ref{SecM0} the convention in this paper is that $s=+2$.

\begin{theorem} \label{Thm1}
Let $\psi$ satisfy the  Assumptions \ref{AssumptionReg},  \eqref{AssumpP0}, \eqref{AssumpP0SphHarm}, \eqref{AssumpP0Better}, \eqref{AssumpDecayRHp}, and  \eqref{AssumpLHpWeaker}.  Let $v_2 \in \R$ and consider the spacelike hypersurface $\Sigma := \{ f^- = v_2\}$ which is transversal to $\CH_r$. We then have 
\begin{equation}\label{EqThm1}
\int\limits_{\Sigma \cap \{v_+ \geq 1\}} v_+^{2p_0} |\psi|^2 \, \vols dv_+ = \infty \;.
\end{equation}
\end{theorem}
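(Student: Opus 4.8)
The plan is to follow the frequency-space (scattering-theoretic) strategy, transcribing the Reissner--Nordström argument sketched in Section~\ref{SecRN} to the Teukolsky field on Kerr. First I would establish, via energy estimates in the interior, enough decay of $\psi$ in $v_+$ on every hypersurface of constant $r$ to justify taking the Teukolsky transform $\widecheck{\psi}_{ml}(r;\omega)$ of \eqref{EqIntroTeukTrafo} in $L^2$, and to show that it solves the radial ODE \eqref{EqIntroODE} for each $(m,l)$. Expanding $\widecheck{\psi}_{ml}$ in the fundamental system $\{A_{\Hp_r,ml}, A_{\Hp_l,ml}\}$ normalised at the event horizon then yields the separated representation \eqref{EqIntroSepTeuk} with coefficients $a_{\Hp_r,ml}, a_{\Hp_l,ml}$. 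The delicate point here is that $s=+2$ produces an \emph{effective blue-shift for energy} at $\Hp_r$, visible in the dashed term of \eqref{EqTeukIntro}; I would close the estimates only after two commutations with $\rd_r$ (at which level the blue-shift turns into a red-shift), while near $\Hp_l$ they close directly at the level of $\hp=\Delta^{-2}\psi$.

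Next I would identify the coefficients by passing to the limit $r\to r_+$. Keeping $v_+$ fixed gives $a_{\Hp_l,ml}(\omega)=\widecheck{(\psi|_{\Hp_r})}_{ml}(\omega)$; since $\psi$ vanishes on $\Hp_l$ I would instead work with $\rd_r^2\psi$ (regular there owing to the blow-up of $\rd_r$ near $\Hp_l$) and pass to a distributional limit with $v_-$ fixed to relate $a_{\Hp_r,ml}$ to the Teukolsky transform of $\rd_r^2\psi|_{\Hp_l}$, up to a $\delta$-contribution producing a pole at $\omega=\omega_+m$. For $m_0\neq 0$ this pole is harmless since only a neighbourhood of $\omega=0$ matters. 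The crucial input is then the translation of the growth/decay hypotheses \eqref{AssumpP0}--\eqref{AssumpDecayRHp} into the limited regularity
\begin{equation*}
\rd_\omega^{p_0} a_{\Hp_r,m_0l_0}\notin L^2_\omega\big((-\varepsilon,\varepsilon)\big), \qquad \rd_\omega^{p} a_{\Hp_r,m_0l_0}\in L^2_\omega\big((-1,1)\big)\ \text{for}\ \N\ni p<p_0 .
\end{equation*}
Because the spin-weighted spheroidal harmonics in \eqref{EqIntroTeukTrafo} depend on $\omega$, the clean correspondence ``$v_+$-weight $\leftrightarrow$ $\omega$-derivative'' of the scalar case fails; I would recover it by exploiting that at $\omega=0$ the spheroidal harmonics coincide with the \emph{spherical} ones (which is exactly how \eqref{AssumpP0SphHarm} is phrased), together with quantitative bounds on $\rd_\omega^k Y^{[s]}_{ml}$.

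Then I would rewrite \eqref{EqIntroSepTeuk} in the basis $\{B_{\CH_l,ml},B_{\CH_r,ml}\}$ normalised at the Cauchy horizon, introducing the transmission and reflection coefficients, and prove the energy estimates permitting the limit $r\to r_-$ at fixed $v_+$, obtaining
\begin{equation*}
\widecheck{(\psi|_{\CH_l})}_{ml}(\omega)=\mathfrak{R}_{\Hp_l,ml}(\omega)\,a_{\Hp_l,ml}(\omega)+\mathfrak{T}_{\Hp_r,ml}(\omega)\,a_{\Hp_r,ml}(\omega).
\end{equation*}
The scattering coefficients are analytic near $\omega=0$ for $m_0\neq0$; applied to the $(m_0,l_0)$-mode, the limited regularity of $a_{\Hp_r,m_0l_0}$ together with $\mathfrak{T}_{\Hp_r,m_0l_0}(0)\neq0$ forces $\rd_\omega^{p_0}\widecheck{(\psi|_{\CH_l})}_{m_0l_0}\notin L^2_\omega\big((-\varepsilon,\varepsilon)\big)$ for every $\varepsilon>0$. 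In the exceptional case $m_0=0$ one must additionally use $\mathfrak{R}_{\Hp_l,0l}(0)=0$ to cancel the $\omega_+m$-pole of $a_{\Hp_l}$, and establish $\mathfrak{T}_{\Hp_r,0l}(0)\neq0$ via the Teukolsky--Starobinsky conservation law, for which $\omega\,\mathfrak{R}_{\Hp_r,0l}$ must extend continuously to $0$. Converting back to physical space by Plancherel, again using the bounds on $\rd_\omega^k Y^{[s]}_{ml}$ to handle the weight/derivative mismatch, yields $\int_{\R}\int_{\Sp^2}v_+^{2p_0}|\psi|_{\CH_l}|^2\,\vols\,dv_+=\infty$.

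Finally, a weighted energy estimate near the Cauchy horizon --- where for $s=+2$ the would-be blue-shift for the wave becomes an \emph{effective red-shift} for $\psi$ --- shows first that this infinitude comes from $v_+\to+\infty$ (so it survives on $\{v_+\geq1\}$), and then, by a forward $v_+$-weighted energy estimate from $\Sigma$ towards the Cauchy horizon (equivalently, an argument by contradiction: finiteness of the integral on $\Sigma$ would force finiteness on $\CH_l$), transfers the blow-up to the transversal hypersurface $\Sigma=\{f^-=v_2\}$, giving \eqref{EqThm1}. I expect the principal difficulties to be twofold: closing the Teukolsky energy estimates against the effective blue-shift at $\Hp_r$ (requiring the two $\rd_r$-commutations and the many derivatives assumed in \eqref{AssumpDecayRHp}), and controlling the $\omega=0$ behaviour of the transmission coefficient for the $m_0=0$ mode, where the radial ODE degenerates and $\mathfrak{T}_{\Hp_r,0l}(0)\neq0$ cannot be read off from the limiting hypergeometric equation alone.
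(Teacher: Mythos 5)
Your proposal is correct and follows essentially the same route as the paper's own proof: energy estimates closed after two $\rd_r$-commutations near $\Hp_r$ (and at the level of $\hp$ near $\Hp_l$) to justify the separation; identification of the expansion coefficients by the limit $r\to r_+$ at fixed $v_+$ and, distributionally via $\rd_r^2\psi$, at fixed $v_-$; translation of \eqref{AssumpP0}--\eqref{AssumpP0Better} into limited $\omega$-regularity at $\omega=0$ using that spheroidal and spherical harmonics coincide there, together with bounds on $\rd_\omega^k Y^{[s]}_{ml}$; the scattering-coefficient input $\mathfrak{T}_{\Hp_r,ml}(0)\neq0$ (hypergeometric for $m\neq0$, Teukolsky--Starobinsky with $\omega\,\mathfrak{R}_{\Hp_r,0l}$ continuous for $m=0$, plus $\mathfrak{R}_{\Hp_l,0l}(0)=0$); and finally the weighted estimates at $\CH_l$ and the backwards propagation to $\Sigma$. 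The one slip is a label swap in a single sentence --- the $v_+$-fixed limit identifies $a_{\Hp_r,ml}$ (not $a_{\Hp_l,ml}$) with $\widecheck{(\psi|_{\Hp_r})}_{ml}$, while it is $a_{\Hp_l,ml}$ that is recovered distributionally from $\rd_r^2\psi|_{\Hp_l}$ --- but since you subsequently attach the limited regularity and the transmission factor $\mathfrak{T}_{\Hp_r,ml}$ to $a_{\Hp_r,ml}$, consistently with the correct convention, this slip (which also appears in the paper's own outline) does not affect the argument.
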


The above theorem is global in nature, it concerns solutions of the Teukolsky equation defined in all of the interior of asymptotically flat two-ended Kerr black holes. As stated, it is not a useful ingredient for treating realistic one-ended rotating black holes. In the following we give a version of Theorem \ref{Thm1} localised to a neighbourhood of timelike infinity.

\begin{theorem} \label{Thm2}
Consider a patch of $\Mun$ given by $\Mun \cap \{f_- \leq v_1\} \cap \{f_+ \geq v_0\}$ for some $v_1, v_2 \in \R$, see also Figure \ref{FigExt} on page \pageref{FigExt}. Let $\psi \in \mathscr{I}^\infty_{[2]}(\Mun\cap \{f_- \leq v_1\} \cap \{f_+ \geq v_0\} )$ satisfy the Teukolsky equation $\mathcal{T}_{[2]}\psi = 0$ and\footnote{with the integration $\Hp_r \cap \{v_+ \geq 1\}$ replaced by $\Hp_r \cap \{v_+ \geq v_0\}$} \eqref{AssumpP0}, \eqref{AssumpP0SphHarm}, \eqref{AssumpP0Better}, and \eqref{AssumpDecayRHp} . Let $v_2 \leq v_1$ and consider the spacelike hypersurface $\Sigma := \{ f^- = v_2\}$ which is transversal to $\CH_r$. We then have 
\begin{equation} \label{EqThmConc}
\int\limits_{\Sigma \cap \{v_+ \geq 1\}} v_+^{2p_0} |\psi|^2 \, \vols dv_+ = \infty \;.
\end{equation}
\end{theorem}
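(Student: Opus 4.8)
The plan is to deduce Theorem \ref{Thm2} from the global Theorem \ref{Thm1} by an extension procedure, exactly mirroring the way the local Reissner--Nordstr\"om statement (Theorem \ref{ThmRN1}) is inferred from the global one (Theorem \ref{ThmRN2}) in Section \ref{SecRN}. Write $P := \Mun \cap \{f^- \leq v_1\} \cap \{f^+ \geq v_0\}$ for the patch on which $\psi$ is given. The goal is to produce a global solution $\tilde\psi \in \mathscr{I}^\infty_{[2]}(\Mun)$ of $\mathcal{T}_{[2]}\tilde\psi = 0$ which (i) agrees with $\psi$ on $P$, (ii) still satisfies the slow--decay and upper--bound assumptions \eqref{AssumpP0}, \eqref{AssumpP0SphHarm}, \eqref{AssumpP0Better}, \eqref{AssumpDecayRHp} along $\Hp_r$, and (iii) is compactly supported on $\Hp_l \cup \Sp^2_b$ in the sense of \eqref{AssumpLHp}, so that \eqref{AssumpLHpWeaker} holds. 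Theorem \ref{Thm1} then applies to $\tilde\psi$ and yields the blow-up \eqref{EqThm1} on $\Sigma = \{f^- = v_2\}$, and a short domain-of-dependence argument transfers this back to $\psi$.

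First I would construct the extension $\tilde\psi$ (this is to be carried out in Theorem \ref{ThmExt}, cf.\ Figure \ref{FigExt}). Since one can no longer solve sideways as in spherical symmetry, the extension is obtained by solving two initial value problems: one extends the characteristic data $\psi|_{\Hp_r}$ past $\{v_+ = v_0\}$ down towards $\Sp^2_b$ so that it vanishes for $v_+ \leq v_0 - 1$, and a second solves forward after prescribing compactly supported data on $\Hp_l$; together they fill out $\Mun$ while leaving $\psi$ unchanged on $P$, so by uniqueness $\tilde\psi = \psi$ on $P$. The $\Hp_r$--assumptions are preserved because the construction alters the data only on $\Hp_r \cap \{v_+ < v_0\}$, where the added piece is smooth and compactly supported: the divergent integrals \eqref{AssumpP0}, \eqref{AssumpP0SphHarm} stay divergent after adding a finite quantity, while the finite integrals \eqref{AssumpP0Better}, \eqref{AssumpDecayRHp} stay finite. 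Here I would invoke the footnote reading of the hypotheses, namely that in the local setting they hold on $\Hp_r \cap \{v_+ \geq v_0\}$, which is precisely the part of $\Hp_r$ untouched by the extension.

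The transfer step is then immediate. On the interior $\mathcal{M}$ one has $r < r_+$, hence $f^+ = v_+ - r + r_+ > v_+$ everywhere on $\Sigma$; together with $v_2 \leq v_1$ this gives $\Sigma \cap \{v_+ \geq v_0\} \subseteq \{f^- \leq v_1\} \cap \{f^+ \geq v_0\} = P$, where $\tilde\psi = \psi$. Setting $V := \max\{1, v_0\}$, the slice $\Sigma \cap \{1 \leq v_+ \leq V\}$ is a relatively compact subset of $\mathcal{M}$ (there $v_-$ is bounded since $v_- = v_2 + r - r_-$, hence $r^* = \tfrac12(v_++v_-)$ is bounded and $r$ stays in a compact subinterval of $(r_-,r_+)$), on which $\tilde\psi$ is smooth and bounded; its contribution to $\int_{\Sigma \cap \{v_+ \geq 1\}} v_+^{2p_0}|\tilde\psi|^2 \, \vols dv_+$ is therefore finite. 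Since this integral is infinite by Theorem \ref{Thm1}, the tail $\int_{\Sigma \cap \{v_+ \geq V\}} v_+^{2p_0}|\tilde\psi|^2 \, \vols dv_+$ is infinite; but on $\Sigma \cap \{v_+ \geq V\}$ we have $\tilde\psi = \psi$, whence \eqref{EqThmConc}.

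The genuine difficulty is concentrated entirely in the extension: realising the two initial value problems so that $\tilde\psi$ is a bona fide smooth spin $2$-weighted solution on all of $\Mun$ (satisfying Assumption \ref{AssumptionReg}, including smoothness of $(V^+_{r_+})^{-2}\tilde\psi$) and genuinely compactly supported on $\Hp_l \cup \Sp^2_b$. As the text emphasises (see the footnote in Section \ref{SecRN}), one cannot shortcut this by the naive splitting $\tilde\psi = \tilde\psi_1 + \tilde\psi_2$ into a piece supported near $\Sp^2_b$ and a piece vanishing there, because the effective blue-shift at $\Hp_r$ makes the transversal derivative $\rd_r \tilde\psi_1$ grow exponentially along $\Hp_r$ and destroys the stability estimate. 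This is what forces the more global extension argument, which is the analytic heart of the reduction; everything else amounts to bookkeeping of where the supports and the divergences of the various integrals sit.
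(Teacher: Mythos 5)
Your overall architecture is exactly the paper's: reduce Theorem \ref{Thm2} to the global Theorem \ref{Thm1} by extending $\psi$ to a solution on all of $\Mun$ that is compactly supported on $\Hp_l \cup \Sp^2_b$ (this is Theorem \ref{ThmExt}), and then transfer the blow-up back to $\psi$; your transfer step (the inclusion $\Sigma \cap \{v_+ \geq \max(1,v_0)\} \subseteq P$ with $P := \Mun \cap \{f^- \leq v_1\} \cap \{f^+ \geq v_0\}$, plus finiteness of the integral over the relatively compact piece of $\Sigma$) is correct and indeed more explicit than the paper's ``immediate''. The gap is in the extension itself, and as written it is fatal. You propose to extend the characteristic data $\psi|_{\Hp_r}$ below $v_+ = v_0$ so that it vanishes for $v_+ \leq v_0 - 1$, to prescribe compactly supported data on $\Hp_l$, and to solve forwards, concluding ``by uniqueness $\tilde\psi = \psi$ on $P$''. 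This uniqueness claim is false, because no point of the interior $\mathcal{M}$ lies in the domain of dependence of $\Hp_r \cap \{v_+ \geq v_0\}$; in fact $D^+(A) \cap \mathcal{M} = \emptyset$ for \emph{any} subset $A \subseteq \Hp_r$. Concretely, from any $q \in \mathcal{M}$ follow the past-directed principal null curve generated by $-\hat{e}_4 = \rd_r|_-$, along which $v_-, \theta, \varphi_-$ are constant and $r$ increases: as $r \to r_+$ one has $V^-_{r_+} = e^{\kappa_+ v_-}$ fixed and positive while $V^+_{r_+} = e^{\kappa_+(2r^* - v_-)} \to 0$, so this curve terminates on $\Hp_l$ and never meets $\Hp_r$. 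Hence the value of your forward solution at \emph{every} interior point --- in particular everywhere on $P$ and on the far tail of $\Sigma$ --- genuinely depends on the arbitrarily chosen $\Hp_l$ data and on your extension of the $\Hp_r$ data below $v_0$, and uniqueness gives you nothing: there is no $V$ for which you can assert $\tilde\psi = \psi$ on $\Sigma \cap \{v_+ \geq V\}$, so the blow-up of Theorem \ref{Thm1} for $\tilde\psi$ does not transfer to $\psi$.

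The paper's Theorem \ref{ThmExt} avoids this by reversing the logic: one never prescribes data on $\Hp_r$ and never re-solves on $P$, so agreement with $\psi$ on $P$ holds by construction (patching), not by a uniqueness argument. Explicitly, working with the regular field $(V^+_{r_+})^{-2}\psi$ and the regular equation \eqref{TeukEqRegKruskal}, one first solves \emph{backwards} with data given by the induced data on the spacelike hypersurface $\{f^+ = v_0\} \cap \{f^- \leq v_1\}$ together with a smooth extension across $\{f^+ \leq v_0\}$ of the induced data on $\{f^- = v_1\}$; this fills the region $\{f^+ \leq v_0\} \cap \{f^- \leq v_1\}$ and thereby \emph{determines} (rather than prescribes) whatever data results on $\Hp_r \cap \{v_+ \leq v_0\}$, $\Sp^2_b$ and $\Hp_l \cap \{f^- \leq v_1\}$. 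One then solves \emph{forwards} from $\{f^- = v_1\}$ together with compactly supported characteristic data on $\Hp_l \cap \{f^- \geq v_1\}$ extending the induced data there, filling $\{f^- \geq v_1\}$. Patching the three regions yields a global solution equal to $\psi$ on $P$, satisfying Assumption \ref{AssumptionReg} and \eqref{AssumpLHp}; the well-posedness of these (backward and mixed spacelike--characteristic) initial value problems is the content of Appendix \ref{SecAppendixIVPTeuk}. With the extension constructed this way, the remainder of your argument --- preservation of the $\Hp_r$ assumptions and the transfer along $\Sigma$ --- goes through as you wrote it.
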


\begin{remark} \label{RemMainThm}
\begin{enumerate}
\item The proof of Theorems \ref{Thm1} and \ref{Thm2} contains a crucial Fourier-theoretic component. To obtain the instability at the Cauchy horizon we use that there is a smallest $p_0 \in \N$ and $m_0 \in \Z$ and $\N \ni l_0 \geq \max\{2, |m_0|\}$ such that 
\begin{equation}
\label{EqAlternativeAssump}
\rd_\omega^{p_0} (\widecheck{\psi|_{\Hp_r}})_{m_0l_0}(\omega) \notin L^2_\omega(-\varepsilon, \varepsilon) \; \textnormal{ for any } \;\varepsilon >0\;.
\end{equation}
Here $ (\widecheck{\psi|_{\Hp_r}})_{m_0l_0}$ results from $\psi|_{\Hp_r}$ by taking the Fourier transform in $v_+$ and subsequent projection on the $m_0,l_0$ spin $2$-weighted \emph{spheroidal} harmonic, cf.\ Section \ref{SecTeukExp}. The physical space assumptions \eqref{AssumpP0}, \eqref{AssumpP0SphHarm}, \eqref{AssumpP0Better} are only used to guarantee \eqref{EqAlternativeAssump}, see Proposition \ref{PropFourierAssump}. In particular the above Theorems remain true if \eqref{AssumpP0}, \eqref{AssumpP0SphHarm}, \eqref{AssumpP0Better}  are replaced by \eqref{EqAlternativeAssump}. Note that the assumption \eqref{AssumpDecayRHp} implies that $2p_0 $ must be greater than $q_r$.
\item Having dropped the subscript $s$ from $\psi$, we introduce the notation $$\psi_m(v_+,r,\theta, \varphi_+) := \int_{\Sp^1} \psi(v_+, r, \theta, \varphi_+') \cdot e^{-im\varphi_+'} \, d \varphi_+' \cdot e^{im \varphi_+}$$ for the projection on the $m$-th azimuthal mode, $m \in \Z$, and also $\psi_{\neq 0} := \psi - \psi_0$. Note that if $\psi$ solves the Teukolsky equation then so does $\psi_m$. We can thus apply the above theorems also to the projections $\psi_m$ individually to obtain statements which, through the ensuing  $m$-dependent parameter $p_0$, depend on $m$. 
\item It was shown recently in \cite{MaZha21} (see also \cite{DafHolRod17}, \cite{Ma20} and also \cite{BarOri99}) that for slowly rotating\footnote{One expects that these results remain true in the full sub-extremal range, see \cite{MaZha21}.} sub-extremal Kerr and for compactly supported initial data for the Teukolsky equation, posed on a spacelike hypersurface connecting the event horizon with spacelike infinity, one has 
\begin{equation*}
\big| \rd_{v_+}^j \psi_{m \neq 0}|_{\Hp_r} - \sum_{m = \pm 1, \pm 2}Q_{m,2} Y^{[+2]}_{m2}(\theta, \varphi_+; 0) v_+^{-7-j} \big| \lesssim v_+^{-7-j - \varepsilon} \;,
\end{equation*}
where $\varepsilon>0$ and $Q_{m,2} \in \C$ is  \emph{generically} non-vanishing, and
\begin{equation*}
\big| \rd_{v_+}^j \psi_0|_{\Hp_r} - Q_{0,2} Y^{[+2]}_{02}(\theta, \varphi_+; 0) v_+^{-8-j} \big| \lesssim v_+^{-8-j - \varepsilon} \;,
\end{equation*}
where again $Q_{0,2} \in \C$ is \emph{generically} non-vanishing.
For $v_+$ large enough we thus obtain for $m = \pm 1, \pm 2$ generically $|(\psi|_{\Hp_r})_{S(m2)}| \geq c v_+^{-7}$ with $c>0$ and for $m = 0$ generically $|(\psi|_{\Hp_r})_{S(02)}| \geq c v_+^{-8}$ with $c>0$. Hence for $m_0 = \pm 1, \pm 2$ the assumptions made in this section are generically satisfied with $l_0 = 2 $ and $p_0 = 7$ and for $m_0 = 0$ with $l_0 = 2$ and $p_0 = 8$. If we do not decompose into azimuthal modes the assumptions are generically satisfied with $p_0 = 7$, $l_0=2$ and $m_0 \in \{-2,-1,1,2\}$. The parameter $q_r$ can  be chosen to be anything strictly less than $13$. 

If we do not assume the initial data to be compactly supported, but still to be smooth with respect to the conformal compactification at future null infinity, we expect the generic decay rates to be slower by a power of $v_{+}^{-1}$, see also \cite{AAG21}. There is evidence that the assumption of smoothness at future null infinity is not satisfied in many physically interesting situations (see \cite{Chris02}, \cite{Kehrb21a}) and that this impacts the late time tails \cite{Kehrb21b}. There is also evidence that tails arising on dynamical black hole exteriors differ from those on stationary exteriors \cite{LukOhForth}.
\item We rewrite \eqref{EqThmConc} in terms of quantities that are regular at $\CH_r$: we first recall that $\hat{\psi} = \frac{1}{\Delta^2} \psi$ is the linearisation of the curvature component with respect to the algebraically special frame that is regular at $\CH_r$ and that we have $\Delta^2 \simeq e^{2 \kappa_-(v_+ + v_-)} \simeq e^{2 \kappa_- v_+}$ along $\Sigma$ for $v_+ \to \infty$, where we have used \eqref{EqRStarCauchy}. Moreover, we have $V^{+}_{r_-} = - e^{\kappa_- v_+}$ and thus $\log(-V^+_{r_-}) = \kappa_- v_+$ and $dv_+ = \frac{1}{\kappa_- V^+_{r_-}}dV^+_{r_-}$. We thus find
$v_+^{2p_0} |\psi|^2  \simeq \big[\log(-V^+_{r_-})\big]^{2p_0} (-V^+_{r_-})^4 |\hat{\psi}|^2  $
along $\Sigma$ for $v_+ \to \infty$ and hence \eqref{EqThmConc} is equivalent to 
\begin{equation}
\label{EqRegularFinal}
\int_{\Sigma \cap \{v_+ \geq 1\}} \big[\log(-V^+_{r_-})\big]^{2p_0} (-V^+_{r_-})^3 |\hat{\psi}|^2 \, \vols dV^+_{r_-} = \infty \;.
\end{equation}
\item As a side result we also prove
\begin{equation*}
\int\limits_{\Sigma \cap \{v_+ \geq 1\}} v_+^{q_r} |\psi|^2 \, \vols dv_+ < \infty \;,
\end{equation*}
see \eqref{EqAddOn}. Hence, the integral in \eqref{EqRegularFinal} with $2p_0$ replaced by $q_r$ is finite. Recall that we said that in particular for compactly supported initial data $q_r$ can be chosen to be anything strictly less that $2p_0 - 1$.
\end{enumerate}

\end{remark}

\section{Energy estimates for the Teukolsky equation: upper bounds} \label{SecEE}

In this section we prove stability estimates which are being used to justify Teukolsky's separation of variables, to pass to the limits $r \to r_\pm$, and to propagate the singularity backwards along $\CH_r$.


\subsection{Estimates near the event horizons} \label{SecEERed}

We begin with the semi-global estimates near the left event horizon, since they are the simplest and thus the structure is easier to understand here.

\begin{proposition} \label{PropLeftHp}
Under the assumptions of Section \ref{SecAssumptions} there exists an $\rred \in (r_-, r_+)$ and a $C>0$ such that
\begin{equation} \label{EqPropLeftH}
\begin{split}
\sup_{r' \in [r_\mathrm{red}, r_+]} &\sum_{0 \leq i_1 + i_2 + i_3 + j + k \leq 1} \int\limits_{\{r=r'\} \cap \{v_- \geq 1\}} v_-^{q_l} |\Delta|^k|\widetilde Z_{1,-}^{i_1}\widetilde Z_{2,-}^{i_2}\widetilde Z_{3,-}^{i_3}\rd_{v_-}^{j}\rd_r^{k}f|^2 \,\vols\,dv_{-}  \\
&+ \sum_{0 \leq i_1 + i_2 + i_3 + j + k \leq 1}\int\limits_{\{\rred \leq r \leq r_+\} \cap \{v_- \geq 1\}}     v_-^{q_l} |\widetilde Z_{1,-}^{i_1}\widetilde Z_{2,-}^{i_2}\widetilde Z_{3,-}^{i_3}\rd_{v_-}^{j}\rd_r^{k}f|^2 \,\vols\,dv_{-} dr
\leq C
\end{split}
\end{equation}
holds for $f \in \{\rd_{v_-}^a \rd_{\varphi_-}^b \rd_r^c \hp, \rd_{v_-} \rd_{v_-}^a \rd_{\varphi_-}^b \rd_r^c \hp, \mathcal{Q}_{[s]} \rd_{v_-}^a \rd_{\varphi_-}^b \rd_r^c \hp \}$, $0 \leq a+b+c \leq 2$, $a,b,c \in \N_0$.\footnote{Note that away from $r_\pm$ we have $\mathrm{span}\{\rd_r|_+, \rd_{v_+}, \rd_{\varphi_+}\} = \mathrm{span}\{\rd_r|_-, \rd_{v_-}, \rd_{\varphi_-}\} = \mathrm{span}\{\rd_r, \rd_{t}, \rd_{\varphi}\}$. Since we carry out the different energy estimates in different coordinates, it is convenient to always consider this combination of derivatives.}
\end{proposition}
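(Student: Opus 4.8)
The plan is to prove \eqref{EqPropLeftH} by a weighted vector-field (red-shift) energy estimate for the equation $\hat{\mathcal{T}}_{[s]}\hat\psi = 0$ of \eqref{TeukolskyEquationHat}, carried out in the region $R_{r'} := \{r' \le r \le r_+\} \cap \{v_- \ge 1\}$ and then taking the supremum over $r' \in [\rred, r_+]$. I would foliate $R_{r'}$ by the level sets of $r$, which are spacelike by the remark following \eqref{NormF}, and use a future-directed timelike multiplier $X$ adapted to the red-shift, built from the frame $\hat{e}_3, \hat{e}_4 = -\rd_r|_-$ of Section \ref{SecPrincipalNull} (both regular at $\Hp_l$), weighted by $v_-^{q_l}$. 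The boundary of $R_{r'}$ consists of the null piece $\Hp_l \cap \{v_- \ge 1\}$, on which the flux is \emph{data} controlled by \eqref{AssumpLHpWeaker}; the spacelike slice $\{r = r'\}$, on which the flux is coercive and is precisely the quantity to be bounded; the past timelike piece $\{v_- = 1\} \cap \{r' \le r \le r_+\}$, which is compact and stays away from $\Sp^2_b$, hence carries finite flux by the smoothness of $\hat\psi$ in Assumption \ref{AssumptionReg}; and the lateral future boundary as $v_- \to \infty$.

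For $\hat\psi$ itself I would apply the divergence theorem to the weighted current $v_-^{q_l}\, T_{\mu\nu}X^\nu$. Since $\hat{\mathcal{T}}_{[s]}\hat\psi = 0$ removes the principal symbol, the bulk term splits into: the red-shift contribution coming from the friction term $2(r-M)(1+s)\rd_r\hat\psi$ in \eqref{TeukolskyEquationHat}, which for $s=+2$ has the favorable sign near $r_+$ and yields a coercive spacetime integral (this is exactly the red-shift for $\hat\psi$ near $\Hp_l$ recorded in the paper); the deformation-tensor terms of $X$, which are lower order; and the weight-derivative term of size $q_l v_-^{q_l-1}$. The latter two, together with the lateral flux through the timelike $v_-$-slices, carry no a priori sign, and the mechanism of the proof is to absorb them into the coercive red-shift bulk by choosing $\rred$ close enough to $r_+$; this is what fixes $\rred$. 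A truncation at $v_- = V$ with $V \to \infty$ handles the future boundary, its lateral flux again being dominated by the coercive bulk.

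The commutations come next. As $\rd_{v_-}$ and $\rd_{\varphi_-}$ are Killing and the coefficients in \eqref{TeukolskyEquationHat} are independent of $v_-,\varphi_-$, the quantities $\rd_{v_-}^a\rd_{\varphi_-}^b\hat\psi$ again solve the Teukolsky equation; and since $[\hat{\mathcal{T}}_{[s]},\mathcal{Q}_{[s]}] = 0$ while $\mathcal{Q}_{[s]}$ is $r$-independent (so also $[\mathcal{Q}_{[s]},\rd_r]=0$), so does $\mathcal{Q}_{[s]}\rd_{v_-}^a\rd_{\varphi_-}^b\hat\psi$. The estimate of the previous paragraph therefore applies verbatim to all of these, with data again supplied by \eqref{AssumpLHpWeaker}. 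To include the $\rd_r$-derivatives I would commute the equation with $\rd_r$ (once and twice) using the commutator $[\hat{\mathcal{T}}_{[s]},\rd_r]$ recorded in Appendix \ref{AppendixCommTeukHat}; the key structural fact is that commutation with $\rd_r$ \emph{improves} the red-shift, so the estimate still closes and the lower-order commutator errors are absorbed inductively into the coercive spacetime-bulk integral already established for the lower-order quantities. The weights $|\Delta|^k$ attached to $\rd_r^k$ in \eqref{EqPropLeftH} reflect that $\rd_r$ is characteristic at $\Hp_l$, where $\Delta = 0$.

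Finally, the first-order spin-weighted angular derivatives $\zt_{i,-}$ on the left of \eqref{EqPropLeftH} are recovered algebraically rather than by a separate multiplier. Writing $\mathcal{Q}_{[s]} = a^2\sin^2\theta\,\rd_{v_-}^2 + 2isa\cos\theta\,\rd_{v_-} + \swl$ and invoking the spin-weighted integration-by-parts identities \eqref{EqSWLIntParts}, Proposition \ref{PropIntPartsSphere} and Lemma \ref{LemSecondAngularDer}, the $L^2(\Sp^2)$-norm of $\sum_i \zt_{i,-}f$ is controlled by the norms of $\mathcal{Q}_{[s]}f$, $\rd_{v_-}f$, $\rd_{v_-}^2 f$ and $f$, all of which are members of the already-estimated list (this is precisely why $\mathcal{Q}_{[s]}f$ and the extra $\rd_{v_-}$ enter the definition of the admissible $f$'s). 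Integrating these pointwise-in-$r$ identities against $v_-^{q_l}$ over the relevant slices then yields the $\zt_{i,-}$ contributions to \eqref{EqPropLeftH}. The main obstacle I anticipate is not the sign of the red-shift — which is favorable for $\hat\psi$ at $\Hp_l$ and only improves under $\rd_r$-commutation — but rather propagating the growing polynomial weight $v_-^{q_l}$ across the \emph{timelike} level sets of $v_-$ (recall $\langle dv_-, dv_-\rangle = \tfrac{a^2\sin^2\theta}{\rho^2} > 0$ away from the axis), whose lateral fluxes have no definite sign; dominating these and the weight-derivative terms by the coercive red-shift bulk for $\rred$ sufficiently close to $r_+$ is the crux, after which the commutation and angular-recovery steps are essentially bookkeeping.
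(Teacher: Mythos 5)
Your overall strategy---a $v_-^{q_l}$-weighted red-shift multiplier estimate for \eqref{TeukolskyEquationHat} near $\Hp_l$, exploiting the favourable sign of the first-order term $2(r-M)(1+s)\rd_r\hp$ for $s=+2$, then commuting with $\rd_{v_-}$, $\rd_{\varphi_-}$, $\mathcal{Q}_{[s]}$ and with $\rd_r$ (under which the red-shift improves), and fixing $\rred$ by absorption---is exactly the paper's. However, three steps in your execution have genuine gaps. First, the truncation: you cut off the region along the \emph{timelike} hypersurfaces $\{v_-=V\}$ and propose that their indefinite fluxes be ``dominated by the coercive bulk''; a boundary integral cannot be absorbed into a spacetime bulk integral, so as stated this step fails. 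The paper instead truncates along the \emph{spacelike} level sets of $f^-=v_--r+r_-$ and proves coercivity of the corresponding flux $B-A$ (see \eqref{EqLowBoundAB}), so the future truncation flux has a sign and can simply be dropped before passing to the limit. Second, zeroth-order terms: for $s=+2$ the natural flux through $\{r=r'\}$ contains the wrong-sign term $-\tfrac12 v_-^{q_l}(1+\lambda\Delta)(s+s^2)|\hp|^2$, and the bulk contains wrong-sign $|\hp|^2$ terms that are even leading order in $\lambda$; since your coercive red-shift bulk controls only derivatives, these cannot be absorbed into it. The paper adds the auxiliary exact current $\rd_r(v_-^{q_l}\mu e^{\eta r}|\hp|^2)-v_-^{q_l}\mu\eta e^{\eta r}|\hp|^2-2v_-^{q_l}\mu e^{\eta r}\Rea(\overline{\hp}\,\rd_r\hp)$ in \eqref{EqMultRedLeft}, with $\eta$ chosen large, precisely to generate coercive $|\hp|^2$ terms in both flux and bulk; your sketch has no substitute for this mechanism.

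Third, and most seriously, your algebraic recovery of the angular derivatives does not close at top order. You control $\sum_i\|\zt_{i,-}f\|^2_{L^2}$ by the norms of $\mathcal{Q}_{[s]}f$, $\rd_{v_-}f$, $\rd_{v_-}^2f$ and $f$; but for $f=\rd_{v_-}^a\rd_{\varphi_-}^b\rd_r^c\hp$ with $a+b+c=2$ the quantity $\rd_{v_-}^2f$ is a fourth-order derivative of $\hp$ that is \emph{not} in the admissible list, and for $f=\mathcal{Q}_{[s]}\rd_{v_-}^a\rd_{\varphi_-}^b\rd_r^c\hp$ your identity requires $\mathcal{Q}_{[s]}^2$-quantities. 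Neither is covered by the data assumption \eqref{AssumpLHpWeaker}, which is formulated for exactly the same list of $f$'s, so you cannot repair this by simply running further energy estimates. The paper avoids the recovery step altogether: because the equation contains $\swl\hp$, the multiplier identity---after integration by parts over the spheres via Proposition \ref{PropIntPartsSphere} and \eqref{EqSWLIntParts}---produces $\sum_i|\zt_{i,-}\hp|^2$ directly in the bulk \eqref{EqEstLambda} and in the boundary terms $B$ and $B-A$ (cf.\ \eqref{EqLowBoundB}); applied to each admissible $f$ separately, this yields all angular terms in \eqref{EqPropLeftH} at once. This spin-weighted formulation is also what rescues the energy estimate itself from the fact that a naive stress-energy flux would involve $\frac{1}{\sin^2\theta}|\rd_{\varphi_-}\hp|^2$, which is not even finite for a spin-weighted field.
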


Here, and in the following propositions and corollaries throughout Section \ref{SecEE}, the constant $C$ depends in particular on the initial data of the Teukolsky field on $\Hp_l$ and $\Hp_r$, on $q_l$ and $q_r$, on the black hole parameters, and, in general, on the region in which the estimate holds. The exact dependency and the optimal value of the constant is, however, of no interest to this paper. We only need the qualitative statement that the quantity in question is finite.

\begin{proof}
\underline{\bf{Step 1: The multiplier.}}
In the following we restrict to $v_- \geq 1$.
We start from the following multiplier identity, where $\lambda, \eta, \mu>0$ are constants to be chosen:
\begin{equation}\label{EqMultRedLeft}
\begin{split}
0&=\Rea\Big( \hat{\mathcal{T}}_{[s]} \hp \cdot v_-^{q_l} \big( -(1 + \lambda \Delta)\partial_r + (1+ \lambda \Delta)\rd_{v_-}\big) \overline{\hp}\Big) \\
&\qquad + \underbrace{\rd_r(v_-^{q_l}\mu e^{\eta r}|\hp|^2) - v_-^{q_l} \mu \eta e^{\eta r} |\hp|^2 - 2v_-^{q_l} \mu e^{\eta r} \Rea(\overline{\hp}\rd_r \hp)}_{=0}
\end{split}
\end{equation}
Here, for $\lambda>0$ suitably, the vector field  $ -(1 + \lambda \Delta)\partial_r + (1+ \lambda \Delta)\rd_{v_-}$ is a choice of the redshift vector field of Dafermos and Rodnianski, \cite{DafRod09a}, \cite{DafRod08}, and the underbraced term is added in order to control the zeroth order terms, as will become clear in the following. After integration over the spheres, and using the form \eqref{TeukolskyEquationHat}    of the Teukolsky equation, the right hand side of \eqref{EqMultRedLeft} is the sum of
\begin{enumerate}
\item the sum of all the  terms on the right hand sides of \ref{ApHatR} and \ref{ApHatV}
\item the real part of the terms
\begin{equation*}\begin{split}
&2(r(1+2s) + isa\cos \theta) \partial_{v_-} \hp v_-^{q_l} \big( -(1 + \lambda \Delta)\partial_r + (1+ \lambda \Delta)\rd_{v_-}\big) \overline{\hp} \\
&\qquad - \dashuline{v_-^{q_l}(1+ \lambda \Delta) 2(r-M)(1+s)|\rd_r \hp|^2} + v_-^{q_l}(1+ \lambda \Delta)2(r-M)(1+s)\rd_r \hp \overline{\rd_{v_-} \hp}
\end{split}
\end{equation*}
\item the underbraced term in \eqref{EqMultRedLeft}.
\end{enumerate}
As will become clear later, we can derive a boundedness statement if the bulk terms (those terms which are not total derivatives) are \emph{negative}. Recall that $\rd_r \Delta (r_+) = 2(r_+ -M) >0$.

\underline{\bf{Step 2: Estimating all bulk terms that are quadratic in derivatives.}}

The two dashed terms, which are the most important terms, combine to give a negative contribution for $r$ close enough (depending on $\lambda$) to $r_+$. Indeed, for $s=0$ this is the familiar red-shift for the wave equation and we see that for $s=+2$ we even get an improved red-shift for the energy.\footnote{Note that the structure for $s=+2$ is the following:  for $\hat{\psi}$ strong red-shift for the energy at $\Hp_l$, strong blue-shift at $\CH_r$; for $\psi$ blue-shift at $\Hp_r$, red-shift at $\CH_l$. This is the reason why the estimate for $\psi$ at $\Hp_r$ is slightly more complicated.} 

We now investigate all the $\uwave{\textnormal{terms}}$ with a wavy underline, which are all those that are leading order in $\lambda$. The first of those terms in \ref{ApHatR} is negative. The second of those terms in \ref{ApHatR}, which indeed appears again from the fourth equation in \ref{ApHatV}, can be controlled as follows:
\begin{equation}\label{EqEstLeadTermHpl}
|2av_-^{q_l} \lambda (\rd_r \Delta) \Rea(\rd_{\varphi_-} \hp \overline{\rd_{v_-} \hp})| \leq v_-^{q_l} \lambda (\rd_r \Delta)\Big( \frac{1}{2} \alpha |\widetilde{Z}_{3,-} \hp|^2 + \alpha^{-1} 2a^2 | \rd_{v_-} \hp |^2\Big)
\end{equation} 
Note that $|a| < M < r_+$ so that there exists $0< \alpha <1$ close to $1$ such that $r_+^2 + a^2 > 2 \alpha^{-1} a^2$. Hence, \eqref{EqEstLeadTermHpl} can be estimated uniformly in $\lambda$ by the last wavily underlined term in \ref{ApHatR} and the one in \ref{ApHatV}. In summary, all the wavily underlined terms and the dashed terms are estimated from above by
\begin{equation}\label{EqEstLambda}
-v_-^{q_l} f(r,\lambda)\big( \lambda (\sum_i | \tilde{Z}_{i,-} \hp|^2 + |\rd_{v_-} \hp|^2) + |\rd_r \hp|^2\big)\;,
\end{equation}
where $f(r_+, \lambda)>0$ is independent of $\lambda$. All the other bulk terms which are quadratic in derivatives of $\hp$ can now be controlled in absolute value by $-\frac{1}{2}\times\textnormal{ \eqref{EqEstLambda}}$ by choosing first $\lambda>0$ big enough and then restricting to  $r \in [r_\mathrm{red}, r_+]$, $v_- \geq v_0$  with $r_\mathrm{red} < r_+$ close enough to $r_+$ and $v_0 >1$ large enough.\footnote{We need to choose $v_0$ large enough to control the second term on the right hand side of the third multiplier expression computed in \ref{ApHatR}. This one is quadratic in $\rd_r \hat{\psi}$, has a positive sign, but a sub-leading $v_-$-weight.}

\underline{\bf{Step 3: Estimating boundary terms.}}

We gather all the total derivatives appearing on the right hand side of \eqref{EqMultRedLeft}. They are $\partial_{v_-} \big(A(v_-,r,\hp, \rd\hp)\big)$ and $\rd_r\big( B(v_-,r,\hp, \rd \hp)\big)$ with
\begin{equation*}
\begin{split}
A(v_-,r,\hp,\rd \hp) &= -a^2 \sin^2\theta v_-^{q_l} (1 + \lambda \Delta)\Rea(\rd_{v_-} \hp \overline{\rd_r\hp}) + 2av_-^{q_l}(1+ \lambda \Delta)\Rea(\rd_{\varphi_-}\hp \overline{\rd_r\hp}) - v_-^{q_l}(1+ \lambda \Delta)(r^2 + a^2) |\rd_r\hp|^2 \\
&+\frac{1}{2}v_-^{q_l}(1+ \lambda \Delta)a^2 \sin^2\theta |\rd_{v_-}\hp|^2 - \frac{1}{2} v_-^{q_l}(1+ \lambda \Delta) \Delta |\rd_r\hp|^2 + \frac{1}{2} v_-^{q_l}(1+ \lambda \Delta))(s + s^2) |\hp|^2 \\
&-\frac{1}{2} v_-^{q_l}(1+ \lambda \Delta) \sum_i | \tilde{Z}_{i,-} \hp|^2
\end{split}
\end{equation*}
and
\begin{equation}\label{EqBBdry}
\begin{split}
B(v_-,r,\hp,\partial \hp) &= \frac{1}{2} a^2 \sin^2 \theta v_-^{q_l} (1 + \lambda \Delta)|\rd_{v_-}\hp|^2 - 2av_-^{q_l} (1 + \lambda \Delta)\Rea(\rd_{\varphi_-}\hp \overline{\rd_{v_-}\hp}) - \frac{1}{2} v_-^{q_l} (1 + \lambda \Delta) \Delta |\rd_r \hp|^2 \\
&-\frac{1}{2}v_-^{q_l} (1 + \lambda \Delta)(s+s^2) |\hp|^2 + \frac{1}{2} v_-^{q_l} (1 + \lambda \Delta) \sum_i | \tilde{Z}_{i,-} \hp|^2 + v_-^{q_l} (1 + \lambda \Delta)(r^2 +a^2)|\rd_{v_-}\hp|^2 \\
&+v_-^{q_l} (1 + \lambda \Delta) \Delta \Rea(\rd_r \hp \overline{\rd_{v_-} \hp}) + v_-^{q_l} \mu e^{\eta r} |\hp|^2 \;.
\end{split}
\end{equation}
We begin by establishing coercivity of $B$ for $r$ close enough to $r_+$. The second term in \eqref{EqBBdry} can be absorbed by the fifth and sixth term as follows
\begin{equation*}
2|a|v_-^{q_l} (1 + \lambda \Delta)|\rd_{\varphi_-}\hp \overline{\rd_{v_-}\hp}| \leq v_-^{q_l} (1 + \lambda \Delta)(\frac{1}{2}\alpha |\tilde{Z}_{3,-} \hp|^2 + 2a^2 \alpha^{-1} |\rd_{v_-} \hp|^2)
\end{equation*}
where $0< \alpha <1$ and we argue as in \eqref{EqEstLeadTermHpl}. The seventh term in \eqref{EqBBdry} is estimated by the third and sixth term by
\begin{equation*}
v_-^{q_l} (1 + \lambda \Delta) |\Delta| |\rd_r \hp \overline{\rd_{v_-} \hp}| \leq v_-^{q_l} (1 + \lambda \Delta) |\Delta|(\frac{1}{2} \alpha |\rd_r\hp|^2 + \frac{1}{2} \alpha^{-1}|\rd_{v_-}\hp|^2)\;,
\end{equation*}
for $0 < \alpha < 1$, where we note that the additional $|\Delta|$ allows us to absorb the $|\rd_{v_-} \hp|^2$ term. Finally we choose $\mu>0$ as a function of $\eta>0$ (to be determined later) such that $\mu(\eta) e^{\eta r_+}  = 2(s+s^2)$. It thus follows that for $r_\mathrm{red}<r_+$ close enough to $r_+$ we have
\begin{equation}\label{EqLowBoundB}
B(v_-,r,\hp, \rd \hp) \gtrsim v_-^{q_l} (|\Delta||\rd_r \hp|^2 + |\rd_{v_-}\hp|^2 + \sum_i | \tilde{Z}_{i,-} \hp|^2  +|\hp|^2)
\end{equation}
for $r \in [r_\mathrm{red}, r_+]$.

Next we establish the coercivity of $B(v_-,r, \hp, \rd \hp) - A(v_-,r, \hp, \rd \hp)$. We first compute
\begin{equation*}
\begin{split}
B(v_-,r,\hp, \partial \hp) - A(v_-,r,\hp, \rd \hp) &=  -2av_-^{q_l}(1 + \lambda \Delta) \Rea(\rd_{\varphi_-} \hp \overline{\rd_{v_-} \hp}) - v_-^{q_l}(1 + \lambda \Delta)(s+s^2)|\hp|^2  \\
&+v_-^{q_l}(1 + \lambda \Delta) \sum_i|\tilde{Z}_{i,-} \hp|^2 + v_-^{q_l}(1 + \lambda \Delta)(r^2 + a^2) |\rd_{v_-} \hp|^2 \\
&+v_-^{q_l}(1 + \lambda \Delta) \Delta \Rea(\rd_r \hp \overline{\rd_{v_-} \hp}) + v^{-{q_l}}\mu e^{\eta r} |\hp|^2 \\
&+ \underline{a^2 \sin^2 \theta v_-^{q_l}(1 + \lambda \Delta) \Rea(\rd_{v_-} \hp \overline{\rd_r \hp})} - 2av_-^{q_l}(1 + \lambda \Delta) \Rea(\rd_{\varphi_-} \hp \overline{\rd_r \hp}) \\
&+ v_-^{q_l}(1 + \lambda \Delta)(r^2 + a^2) |\rd_r \hp|^2 \;.
\end{split}
\end{equation*}
In particular completing the square for the underlined term gives
\begin{equation}\label{EqLowBoundAB}
\begin{split}
B(v_-,r,\hp, \partial \hp) - A(v_-,r,\hp, \rd \hp) &=  -2av_-^{q_l}(1 + \lambda \Delta) \Rea(\rd_{\varphi_-} \hp \overline{(\rd_{v_-} +\rd_{r}) \hp}) - v_-^{q_l}(1 + \lambda \Delta)(s+s^2)|\hp|^2  \\
&+v_-^{q_l}(1 + \lambda \Delta) \sum_i|\tilde{Z}_{i,-} \hp|^2 +v_-^{q_l}(1 + \lambda \Delta) \Delta \Rea(\rd_r \hp \overline{\rd_{v_-} \hp})  \\
&+ v^{-{q_l}}\mu e^{\eta r} |\hp|^2   + \frac{1}{2} v_-^{q_l}(1 + \lambda \Delta) (r^2 + a^2 + \frac{1}{2}a^2 \sin^2 \theta) |(\rd_{v_-} + \rd_r) \hp|^2 \\
&+\frac{1}{2} v_-^{q_l}(1 + \lambda \Delta)(r^2 + \frac{1}{2}(a^2 +a^2 \cos^2 \theta))|(\rd_{v_-} - \rd_r)\hp|^2 \;.
\end{split}
\end{equation}
The first term is estimated in the same way as before now by the third and sixth term. Note that the fourth term vanishes at $r= r_+$. Thus, with our choice of $\eta$ from above we obtain
\begin{equation*}
B(v_-,r,\hp, \partial \hp) - A(v_-,r,\hp, \rd \hp)  \gtrsim v_-^{q_l} (|\rd_r \hp|^2 + |\rd_{v_-}\hp|^2 + \sum_i | \tilde{Z}_{i,-} \hp|^2  +|\hp|^2)
\end{equation*}
for $r \in [r_\mathrm{red}, r_+]$ with $r_\mathrm{red}$ close enough to $r_+$.

\underline{\bf{Step 4: Estimating the remaining bulk terms:}}

The last two terms in \eqref{EqMultRedLeft} are estimated by
\begin{equation}\label{EqEstAuxTerms}
-v_-^{q_l} \eta \mu e^{\eta r} |\hp| -2v_-^{q_l} \mu e^{\eta r} \Rea(\overline{\hp}\rd_r \hp) \leq -\frac{1}{2} v_-^{q_l} \eta \mu e^{\eta r} |\hp|^2 + 2v_-^{q_l} \eta^{-1} \mu e^{\eta r} |\rd_r \hp|^2 \;.
\end{equation}
We can now choose $\eta>0$ sufficiently large such that the last term can be controlled by $-\frac{1}{4}\times \textnormal{\eqref{EqEstLambda}}$ and such that the first term controls the zeroth order terms arising in the bulk from \eqref{ApHatR} and \eqref{ApHatV} (those have an overall `bad' positive sign and need to be controlled).

\underline{\bf{Step 5: Putting it all together:}}

We thus obtain after integration over the spheres
\begin{equation*}
\rd_{v_-} \big(A(v_-,r,\hp,\rd \hp)\big) + \rd_r\big(B(v_-,r,\hp, \rd \hp)\big) \underset{{\mathrm{a.i.}}}{\gtrsim} v_-^{q_l}\big(|\rd_r \hp|^2 + |\rd_{v_-}\hp|^2 + \sum_i|\tilde{Z}_{i,-} \hp|^2 + |\hp|^2\big)
\end{equation*}
for $v_- \geq v_0$ and $r_\mathrm{red} \leq r \leq r_+$. Let $r' \in [\rred, r_+)$. We integrate over the region $\{2v_0 \leq f^- \leq v_1 \} \cap \{r' \leq r \leq r_+\}$ with respect to $dv_-\wedge dr \wedge \vols = \frac{1}{\rho^2} \vol$ and use that on a level set of $f^-$ we have $dr = dv_-$ to obtain
\begin{equation*}
\begin{split}
\int\limits_{\mathclap{\substack{\{r=r'\} \\ \cap \{2v_0 \leq f^- \leq v_1\}}}} &B \, \vols dv_- 
+ \int\limits_{\mathclap{\substack{\{f^-=v_1\} \\ \cap \{r' \leq r \leq r_+\}}}} (B - A) \, \vols dv_- + c \int\limits_{\mathclap{\substack{\{2v_0 \leq f^- \leq v_1\} \\ \cap \{r' \leq r \leq r_+\}}}} v_-^{q_l}\big(|\rd_r \hp|^2 + |\rd_{v_-}\hp|^2 + \sum_i|\tilde{Z}_{i,-} \hp|^2 + |\hp|^2\big) \, \vols dv_-dr \\
&\leq \int\limits_{\mathclap{\Hp_l \cap \{2v_0 \leq f^- \leq v_1\}}} B \, \vols dv_-  + 
\int\limits_{\mathclap{\substack{\{f^-=2v_0\} \\ \cap \{r' \leq r \leq r_+\}}}} (B - A) \, \vols dv_-  \;,
\end{split}
\end{equation*}
where $c>0$. Using now the lower bounds \eqref{EqLowBoundB} and \eqref{EqLowBoundAB}, the trivial upper bounds on $A$ and $B$, the assumption \eqref{AssumpLHpWeaker} on $\hp$ as well as the regularity Assumption \ref{AssumptionReg} for the  boundary term on $\{f^- = 2v_0\}$, and letting $v_1 \to \infty$ we obtain
\begin{equation*}
\begin{split}
\sup_{r' \in [\rred, r_+]}&\int\limits_{\{r=r'\} \cap \{f^- \geq 2v_0\}} v_-^{q_l}\big(|\Delta||\rd_r \hp|^2 + |\rd_{v_-}\hp|^2 + \sum_i|\tilde{Z}_{i,-} \hp|^2 + |\hp|^2\big) \, \vols dv_- \\
&+ \int\limits_{\mathclap{\{2v_0 \leq f^- \}  \cap \{\rred \leq r \leq r_+\}}} v_-^{q_l}\big(|\rd_r \hp|^2 + |\rd_{v_-}\hp|^2 + \sum_i|\tilde{Z}_{i,-} \hp|^2 + |\hp|^2\big) \, \vols dv_-dr
\leq C \;,
\end{split}
\end{equation*}
 for some $C>0$, where, in a second step we have also taken the limit $r' \to \rred$ to obtain the bulk term.
Together with the regularity Assumption \ref{AssumptionReg} used for the remaining compact spacetime region this shows \eqref{EqPropLeftH} with $f = \hp$.

\underline{\bf{Step 6: Estimating higher derivatives:}}

Recall that $\partial_{v_-}$, $\rd_{\varphi_-}$, and $\mathcal{Q}_{[s]}$ commute with $\hat{\mathcal{T}}_{[s]}$. By our assumptions on the left event horizon \eqref{AssumpLHpWeaker} we can thus repeat the above argument now with $\hp$ replaced by $\mathcal{Q}_{[s]}^d \partial_{v_-}^a \rd_{\varphi_-}^b  \hp$  to obtain \eqref{EqPropLeftH} for $f = \mathcal{Q}_{[s]}^d \partial_{v_-}^a \rd_{\varphi_-}^b \hp$, for $0 \leq a + b \leq 2$, $d = 0,1$.

We now commute the Teukolsky equation with $\partial_r$:
\begin{equation}\label{EqDiffTeukMinus}
\begin{split}
0 = \rd_r  \hat{\mathcal{T}}_{[s]} \hp &= a^2 \sin^2 \theta \rd_{v_-}^2 \rd_r \hp - 2a \rd_{v_-} \rd_{\varphi_-} \rd_r \hp + 2(r^2 + a^2) \rd_{v_-}\rd_r^2 \hp \\
&-2a \rd_{\varphi_-} \rd_r^2 \hp + \Delta \rd_r^3 \hp + 2\big(r(3+2s) +isa \cos\theta\big) \rd_{v_-}\rd_r \hp +  \dashuline{2(r-M)(2+s) \rd_r^2 \hp }\\
&+ \swl \rd_r \hp + \underline{2(1+2s)\rd_{v_-} \hp} + 2(1+s) \rd_r \hp \;.
\end{split}
\end{equation}
Of course, the principal part is unchanged. Also note that the dashed red-shift term is even \emph{improved}. Thus, the same vector field multiplier (with $\hp$ replaced by $\rd_r \hp$) can be used to control all bulk terms quadratic in derivatives of $\rd_r \hp$. 
Also the same modification, i.e., the underbraced term in \eqref{EqMultRedLeft} with $\hp$ replaced by $\rd_r \hp$, can be used to generate an arbitrarily large bulk term quadratic in $\rd_r \hp$ of the `good' negative sign.
The boundary terms are exactly of the same form with $\hp$ replaced by $\rd_r \hp$. So the only qualitatively new term we need to estimate is the underlined term, which is neither $\rd_r \hp$ nor derivatives of it. This term can be estimated either by a modification similarly to the one we used above, now with $\hp$ replaced by $\partial_{v_-} \hp$, or, more straightforwardly, we can use  directly the bulk term in \eqref{EqPropLeftH}. Thus, we obtain, after possibly choosing $\rred$ closer to $r_+$
\begin{equation*}
\begin{split}
\sup_{r' \in [\rred, r_+]}&\int\limits_{\{r=r'\} \cap \{f^- \geq 2v_0\}} v_-^{q_l}\big(|\Delta||\rd^2_r \hp|^2 + |\rd_{v_-} \rd_r\hp|^2 + \sum_i|\tilde{Z}_{i,-} \rd_r\hp|^2 + |\rd_r\hp|^2\big) \, \vols dv_- \\
&+ \int\limits_{\mathclap{\{2v_0 \leq f^- \}  \cap \{\rred \leq r \leq r_+\}}} v_-^{q_l}\big(|\rd_r^2 \hp|^2 + |\rd_{v_-}\rd_r\hp|^2 + \sum_i|\tilde{Z}_{i,-} \rd_r\hp|^2 + |\rd_r\hp|^2\big) \, \vols dv_-dr
\leq C \;,
\end{split}
\end{equation*}
for some $C>0$, which is \eqref{EqPropLeftH} with $f = \rd_r \hp$. Again, we can in addition commute with the Killing vector fields $\rd_{v_-}$, $\rd_{\varphi_-}$, as well as with $\mathcal{Q}_{[s]}$.

Differentiating \eqref{EqDiffTeukMinus} once more in $r$ we obtain
\begin{equation*}
\begin{split}
0 = \rd_r^2  \hat{\mathcal{T}}_{[s]} \hp &= a^2 \sin^2 \theta \rd_{v_-}^2 \rd_r^2 \hp - 2a \rd_{v_-} \rd_{\varphi_-} \rd_r^2 \hp + 2(r^2 + a^2) \rd_{v_-}\rd_r^3 \hp \\
&-2a \rd_{\varphi_-} \rd_r^3 \hp + \Delta \rd_r^4 \hp + 2\big(r(5+2s) +isa \cos\theta\big) \rd_{v_-}\rd_r^2 \hp +  \dashuline{2(r-M)(3+s) \rd_r^3 \hp }\\
&+ \swl \rd_r^2 \hp + 8(1+s)\rd_{v_-} \rd_r\hp + 2(3+2s) \rd_r^2 \hp \;.
\end{split}
\end{equation*}
The dashed red-shift term is even further improved and no qualitatively new terms compared to \eqref{EqDiffTeukMinus} (with $\hp$ replaced by $\rd_r \hp$) have appeared.
This completes the proof.
\end{proof}

We now continue with the red shift estimate near the right event horizon. 

\begin{proposition}\label{PropRightHP}
Under the assumptions of Section \ref{SecAssumptions} there exists an $\rred \in (r_-, r_+)$ and a $C>0$ such that
\begin{equation}
\label{EqPropRightH}
\begin{split}
\sup_{r' \in [r_\mathrm{red}, r_+]} &\sum_{0 \leq i_1 + i_2 + i_3 + j + k \leq 1} \int\limits_{\{r=r'\} \cap \{v_+ \geq 1\}} v_+^{q_r}|\Delta|^k |\widetilde Z_{1,+}^{i_1}\widetilde Z_{2,+}^{i_2}\widetilde Z_{3,+}^{i_3}\rd_{v_+}^{j}\rd_r^{k}f|^2 \,\vols\,dv_{+}  \\
&+\sum_{0 \leq i_1 + i_2 + i_3 + j + k \leq 1} \int\limits_{\{\rred \leq r \leq r_+\} \cap \{v_+ \geq 1\}} v_+^{q_r}|\widetilde Z_{1,+}^{i_1}\widetilde Z_{2,+}^{i_2}\widetilde Z_{3,+}^{i_3}\rd_{v_+}^{j}\rd_r^{k}f|^2 \,\vols\,dv_{+} dr
\leq C
\end{split}
\end{equation}
holds for $f \in \{\rd_{v_+}^a\rd_{\varphi_+}^b\rd_r^c\psi, \rd_{v_+} \rd_{v_+}^a\rd_{\varphi_+}^b\rd_r^c\psi, \mathcal{Q}_{[s]} \rd_{v_+}^a\rd_{\varphi_+}^b\rd_r^c \psi, \}$, with $0 \leq a + b +c \leq 2$.
\end{proposition}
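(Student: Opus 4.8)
The plan is to follow the structure of the proof of Proposition \ref{PropLeftHp}, now working in the $\{v_+, \varphi_+, r, \theta\}$ coordinate system and with the form \eqref{TeukolskyStar} of the Teukolsky equation, which is regular at $\Hp_r$. Concretely, I would contract $\mathcal{T}_{[s]}\psi = 0$ with a red-shift multiplier of the form $v_+^{q_r}\big(-(1+\lambda\Delta)\rd_r + (1+\lambda\Delta)\rd_{v_+}\big)\overline{(\cdot)}$ together with the zeroth-order correction $\rd_r\big(v_+^{q_r}\mu e^{\eta r}|\cdot|^2\big)$, in direct analogy with the multiplier identity \eqref{EqMultRedLeft}, and then run the analogues of Steps 1--5 of Proposition \ref{PropLeftHp}: estimate the bulk terms quadratic in derivatives by choosing $\lambda$ large and restricting to $r\in[\rred,r_+]$ with $\rred$ close to $r_+$ and $v_+\ge v_0$ large, establish coercivity of the boundary quantities $A,B$, and control the remaining zeroth-order bulk via $\eta,\mu$, before integrating over $\{2v_0\le f^+\le v_1\}\cap\{r'\le r\le r_+\}$ and sending $v_1\to\infty$.

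The crucial difference, and the main obstacle, is the dashed term $2(r-M)(1-s)\rd_r\psi$ in \eqref{TeukolskyStar}: contracted with the $-(1+\lambda\Delta)\rd_r$ part of the multiplier it produces the bulk contribution $-2(r-M)(1-s)(1+\lambda\Delta)|\rd_r\psi|^2$, which for $s=+2$ has the \emph{wrong} (blue-shift) sign near $r_+$ and cannot be absorbed. The remedy is to commute \eqref{TeukolskyStar} twice with $\rd_r$ before applying the multiplier: a Leibniz computation shows that in $\rd_r^c\mathcal{T}_{[s]}\psi$ the coefficient of the transport term $\rd_r^{c+1}\psi$ equals $2(r-M)(c+1-s)$, which for $s=+2$ vanishes at $c=1$ and becomes the genuine red-shift coefficient $2(r-M)>0$ at $c=2$. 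I would therefore first prove \eqref{EqPropRightH} for the top-order quantity $f=\rd_r^2\psi$ by applying the multiplier to the twice-commuted equation, treating $\rd_r^2\psi$ as the principal variable; the $\Delta$-weighted control of $|\Delta||\rd_r^3\psi|^2$ is produced automatically, as in \eqref{EqLowBoundB}.

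Having controlled $\rd_r^2\psi$ (and, after commuting with the Killing fields $\rd_{v_+},\rd_{\varphi_+}$ and with $\mathcal{Q}_{[s]}$, also $\rd_r^2$ of those) on every slice $\{r=r'\}$ and in the bulk $[\rred,r_+]$, I would recover the lower-order quantities $f$ with $c=0,1$ by the fundamental theorem of calculus in $r$: writing $\rd_r\psi(v_+,r',\cdot)=\rd_r\psi|_{\Hp_r}(v_+,\cdot)-\int_{r'}^{r_+}\rd_r^2\psi\,d\tilde r$ and similarly for $\psi$, Minkowski's inequality bounds the weighted slice and bulk norms of $\psi$ and $\rd_r\psi$ (and of their $\widetilde Z_{i,+},\rd_{v_+},\mathcal{Q}_{[s]}$ derivatives) by the $\Hp_r$ data supplied by \eqref{AssumpDecayRHp} plus the already-established bulk bound on $\rd_r^2\psi$. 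This yields the $c=0,1$ cases without ever needing a favourable sign at those levels, which is the point at which the blue-shift is circumvented rather than fought.

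The delicate point -- corresponding to Step 6 of Proposition \ref{PropLeftHp} -- is that the twice-commuted equation contains source terms that are not derivatives of the principal variable $\rd_r^2\psi$ but are genuinely lower order in $\rd_r$ (schematically $\rd_{v_+}\rd_r\psi$ and $\rd_{v_+}\psi$, the analogues of the underlined terms in \eqref{EqDiffTeukMinus}). The hard part will be absorbing these without circularity: after a Young inequality the worst pieces are split off as $|\rd_{v_+}\rd_r\psi|^2$ and $|\rd_{v_+}\psi|^2$, which I would control by the reconstruction bounds of the previous paragraph applied to $\rd_{v_+}\psi$ in place of $\psi$, so that they enter only as already-bounded inputs rather than as part of the multiplier bootstrap. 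It is precisely here that the generous number of derivatives assumed on $\Hp_r$ in \eqref{AssumpDecayRHp} is used, guaranteeing that the data for every source term is available and allowing the hierarchy to close in a fixed finite number of steps. Finally, commuting the whole argument with $\rd_{v_+}$, $\rd_{\varphi_+}$ and $\mathcal{Q}_{[s]}$ -- all of which commute with $\mathcal{T}_{[s]}$ -- extends the estimate to the full list of $f$ with $0\le a+b+c\le 2$, completing the proof.
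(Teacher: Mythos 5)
Your overall strategy coincides with the paper's: commute \eqref{TeukolskyStar} twice with $\rd_r$ so that the transport coefficient $2(r-M)(c+1-s)$ becomes a genuine red-shift term at $c=2$, run the red-shift multiplier of Proposition \ref{PropLeftHp} with $\rd_r^2\psi$ as the principal variable, recover the $c=0,1$ quantities by the fundamental theorem of calculus and Minkowski's inequality from the horizon data \eqref{AssumpDecayRHp}, and commute with $\rd_{v_+}$, $\rd_{\varphi_+}$, $\mathcal{Q}_{[s]}$ at the end; all of this is exactly Steps 1--6 of the paper's proof. (A small inaccuracy: in the twice-commuted equation \eqref{EqTwiceDiffTeukP} the only genuinely lower-order source is $8(1-s)\rd_{v_+}\rd_r\psi$ together with the zeroth-order term $6(1-s)\rd_r^2\psi$; there is no $\rd_{v_+}\psi$ term, since it gets differentiated into $\rd_{v_+}\rd_r\psi$.)

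However, your treatment of that source term -- which you rightly single out as the delicate point -- does not close as stated. You propose to bound the bulk integral of $v_+^{q_r}|\rd_{v_+}\rd_r\psi|^2$ by ``reconstruction applied to $\rd_{v_+}\psi$'', i.e.\ by horizon data plus the bulk integral of $v_+^{q_r}|\rd_{v_+}\rd_r^2\psi|^2$, the latter entering as an ``already-bounded input''. But a bulk bound on $\rd_{v_+}\rd_r^2\psi=\rd_r^2(\rd_{v_+}\psi)$ can only come from running the same energy estimate for $\rd_{v_+}\psi$, and since $\rd_{v_+}$ is Killing, that estimate has the structurally identical source term $8(1-s)\rd_{v_+}^2\rd_r\psi$, which by your scheme requires the estimate for $\rd_{v_+}^2\psi$, and so on: the ladder ascends one $\rd_{v_+}$-derivative at every step and never terminates. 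No finite number of derivatives assumed on $\Hp_r$ can help, because horizon data by itself never produces a bulk bound -- only an energy estimate does, and each energy estimate demands the next one up. The claim that ``the hierarchy closes in a fixed finite number of steps'' is therefore incorrect. The paper resolves this \emph{within a single multiplier identity}: in \eqref{EqPfPropRHPMultiplier} a second auxiliary Leibniz identity, $\rd_r(v_+^{q_r}\mu e^{\eta r}|\rd_{v_+}\rd_r \psi|^2) - v_+^{q_r}\mu\eta e^{\eta r}|\rd_{v_+}\rd_r\psi|^2 - 2v_+^{q_r}\mu e^{\eta r}\Rea(\overline{\rd_{v_+}\rd_r\psi}\,\rd_{v_+}\rd_r^2\psi)=0$, is added; for $\eta$ large this generates a good bulk term in $|\rd_{v_+}\rd_r\psi|^2$ (absorbing the Young-split source) and a good boundary contribution in $B$, at the price of a cross term controlled by the principal good bulk term in $|\rd_{v_+}\rd_r^2\psi|^2$. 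Alternatively, your reconstruction idea could be salvaged as an \emph{absorption} argument: the FTC estimate carries a factor $(r_+-\rred)^2$ in front of the bulk integral of $|\rd_{v_+}\rd_r^2\psi|^2$, which can be absorbed into the left-hand side of the same estimate by taking $\rred$ sufficiently close to $r_+$ (after truncating to a compact region so all integrals are finite); but this smallness mechanism must be invoked explicitly, and it contradicts your insistence that these terms enter ``rather than as part of the multiplier bootstrap''.
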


The symmetry between left and right event horizon for the wave equation is broken for the Teukolsky equation because of a choice of frame field. Indeed, near the right event horizon $\Hp_r$ we do have a \emph{blue-shift} for the energy of the Teukolsky field $\psi$. This is the reason why in the following proof we need to commute twice with $\rd_r$ in order to get a red-shift near $\Hp_r$. 

\begin{proof}
Many elements of the proof are the same as those of the proof of Proposition \ref{PropLeftHp}. For this reason we will be more concise here and highlight the essential differences. We begin by observing that the crucial seventh term of \eqref{TeukolskyStar} has a `bad' sign for $s=2$. Multiplying by $-v_+^{q_r} (1 + \lambda \Delta) \overline{\rd_r \psi}$ as we did before would give a bulk term in $|\rd_r \psi|^2$ of positive sign -- but we recall that for stability we needed the good negative sign. We differentiate \eqref{TeukolskyStar} in $r$ to obtain
\begin{equation*}
\begin{split}
0 = \partial_r \mathcal{T}_{[s]} \psi &= a^2 \sin^2 \theta \rd_{v_+}^2 \rd_r \psi + 2a \rd_{v_+} \rd_{\varphi_+} \rd_r \psi + 2(r^2 + a^2) \rd_{v_+} \rd_r^2 \psi + 2a\rd_{\varphi_+} \rd_r^2 \psi + \Delta \rd_r^3 \psi\\
&\quad  + \swl \rd_r \psi +2\big(r(3-2s) - isa \cos \theta\big) \rd_{v_+} \rd_r \psi + 2(r-M)(2-s) \rd_r^2 \psi \\
&\quad +2(1-2s) \rd_r \psi +2(1-2s)\rd_{v_+} \psi \;.
\end{split}
\end{equation*}
Differentiating once more we obtain
\begin{equation}\label{EqTwiceDiffTeukP}
\begin{split}
0 = \rd_r^2 \mathcal{T}_{[s]} \psi &= a^2 \sin^2 \theta \rd_{v_+}^2 \rd_r^2 \psi + 2a \rd_{v_+} \rd_{\varphi_+} \rd_r^2 \psi + 2(r^2 + a^2) \rd_{v_+} \rd_r^3 \psi + 2a\rd_{\varphi_+} \rd_r^3 \psi + \Delta \rd_r^4 \psi \\
&\quad + \swl \rd_r^2 \psi +2\big(r(5-2s) - isa \cos \theta\big) \rd_{v_+} \rd_r^2 \psi + 2(r-M)(3-s) \rd_r^3 \psi  \\
&\quad +6(1-s) \rd_r^2 \psi +8(1-s)\rd_{v_+}\rd_r \psi \;.
\end{split}
\end{equation}
\underline{\bf{Step 1: The multiplier:}} We restrict in the following to $v_+ \geq 1$. We consider the following multiplier identity, where $\lambda, \eta, \mu >0$ are constants to be chosen:
\begin{equation}\label{EqPfPropRHPMultiplier}
\begin{split}
0 &= \Rea\Big(\rd_r^2 \mathcal{T}_{[s]} \psi \cdot v_+^{q_r}\big(-(1 + \lambda \Delta)\rd_r +(1 + \lambda \Delta)\rd_{v_+}\big) \overline{\rd_r^2\psi}\Big) \\
&\qquad + \underbrace{\rd_r(v_+^{q_r}\mu e^{\eta r}|\rd_r^2 \psi|^2) - v_+^{q_r} \mu \eta e^{\eta r} |\rd_r^2 \psi|^2 - 2v_+^{q_r} \mu e^{\eta r} \Rea(\overline{\rd_r^2 \psi}\rd_r^3 \psi)}_{=0} \\
&\qquad + \underbrace{\rd_r(v_+^{q_r}\mu e^{\eta r}|\rd_{v_+}\rd_r \psi|^2) - v_+^{q_r} \mu \eta e^{\eta r} |\rd_{v_+}\rd_r \psi|^2 - 2v_+^{q_r} \mu e^{\eta r} \Rea(\overline{\rd_{v_+}\rd_r \psi}\rd_{v_+}\rd_r^2 \psi)}_{=0}
\end{split}
\end{equation}
After integration over the spheres, the right hand side of \eqref{EqPfPropRHPMultiplier} is the sum of
\begin{enumerate}
\item the sum of all the terms on the right hand sides of \ref{ApR} and \ref{ApV} with $\chi(v_+) = v_+^{q_r}$ and $\psi$ replaced by $\rd_r^2 \psi$
\item the real part of the terms
\begin{equation*}
\begin{split}
&2\big(r(5-2s) - isa \cos \theta\big) \rd_{v_+} \rd_r^2 \psi \cdot v_+^{q_r}\big(-(1 + \lambda \Delta)\rd_r +(1 + \lambda \Delta)\rd_{v_+}\big) \overline{\rd_r^2\psi} \\[4pt]
&\quad \dashuline{-v_+^{q_r}(1 + \lambda \Delta) \cdot 2(r-M)(3-s) |\rd_r^3 \psi|^2} + v_+^{q_r}(1 + \lambda \Delta)2(r-M)(3-s)\rd_r^3\psi \overline{\rd_{v_+} \rd_r^2 \psi} \\[4pt]
&\quad +\big[6(1-s)\rd_r^2 \psi + \uuline{8(1-s)\rd_{v_+} \rd_r \psi}\big] \cdot v_+^{q_r}\big(-(1 + \lambda \Delta)\rd_r +(1 + \lambda \Delta)\rd_{v_+}\big) \overline{\rd_r^2\psi}  
\end{split}
\end{equation*}
\item the underbraced terms in \eqref{EqPfPropRHPMultiplier}.
\end{enumerate}
The second underbraced term in \eqref{EqPfPropRHPMultiplier} has been added to control the double underlined term above. As before, we can derive a boundedness statement if the bulk terms are negative. We proceed as before:

\underline{\bf{Step 2: Estimating all bulk terms that are quadratic in derivatives of $\rd_r^2 \psi$.}}

As before the two dashed terms combine to give a negative definite contribution in $|\rd_r^3 \psi|^2$ for $r$ close enough to $r_+$. Next, we look at the wavily underlined terms which are all those that are leading order in $\lambda$. Again, the first of those terms in \ref{ApR} has a good negative sign, the second can be controlled by the third one in \ref{ApR} and by the one in \ref{ApV} (as in \eqref{EqEstLeadTermHpl}) for $r$ close enough to $r_+$ so that the wavily underlined terms and the dashed terms combined can be estimated from above by
\begin{equation}\label{EqPfRedShiftRight}
-v_+^{q_r} f(r, \lambda) \big(\lambda (\sum_i|\zt_{i,+}\rd_r^2\psi|^2 + |\rd_{v_+} \rd_r^2 \psi|^2) + |\rd_r^3 \psi|^2\big) \;,
\end{equation}
where $f(r_+, \lambda)>0$ is independent of $\lambda$. Choosing now $\lambda >0$ and $v_0\geq 1$ large enough and $\rred < r_+$ close enough to $r_+$, all other bulk terms that are quadratic in derivatives of $\rd_r^2 \psi$ can be controlled in absolute value by $-\frac{1}{2} \times$\eqref{EqPfRedShiftRight} in the region $\{ \rred \leq r \leq r_+\} \cap \{v_+ \geq v_0\}$.

\underline{\bf{Step 3: Estimating boundary terms.}}

We gather all the total derivatives $\rd_{v_+} \big(A(v_+,r,\rd_r^2\psi, \partial \rd_r^2 \psi)\big)$ and $\rd_r\big(B(v_+,r,\rd_{v_+}\rd_r \psi, \rd_r^2 \psi, \rd \rd_r^2 \psi)\big)$ appearing on the right hand side of \eqref{EqPfPropRHPMultiplier}, where we find
\begin{equation*}
\begin{split}
A(v_+,r,\rd_r^2 \psi, \rd \rd_r^2 \psi) &= v_+^{q_r} ( 1 + \lambda \Delta)\Big( -a^2 \sin^2 \theta  \Rea (\rd_{v_+} \rd_r^2 \psi \overline{ \rd_r^3 \psi}) - 2a \Rea(\rd_{\varphi_+}\rd_r^2 \psi \overline{\rd_r^3 \psi}) -(r^2 + a^2) |\rd_r^3 \psi|^2 \\ 
&+ \frac{1}{2}  a^2 \sin^2 \theta |\rd_{v_+} \rd_r^2 \psi|^2 -\frac{1}{2}\Delta |\rd_r^3 \psi|^2 + \frac{1}{2} (s + s^2)|\rd_r^2 \psi|^2 -\frac{1}{2} \sum_i |\zt_{i,+} \rd_r^2 \psi|^2 \Big)
\end{split}
\end{equation*}
and
\begin{equation*}
\begin{split}
B(v_+,r,\rd_{v_+}\rd_r \psi, \rd_r^2 \psi, \rd \rd_r^2 \psi) &= v_+^{q_r}(1 + \lambda \Delta) \Big( \frac{1}{2} a^2 \sin^2\theta |\rd_{v_+}\rd_r^2\psi|^2 + 2a\Rea(\rd_{\varphi_+} \rd_r^2 \psi \overline{\rd_{v_+} \rd_r^2 \psi}) - \frac{1}{2} \Delta |\rd_r^3 \psi|^2 \\
&-\frac{1}{2}(s + s^2)|\rd_r^2\psi|^2 + \frac{1}{2} \sum_i |\zt_{i,+}\rd_r^2 \psi|^2 + (r^2 + a^2)|\rd_{v_+} \rd_r^2 \psi|^2 \\
&+\Delta \Rea(\rd_r^3 \psi \overline{\rd_{v_+}\rd_r^2 \psi})\Big) +v_+^{q_r}\mu e^{\eta r} |\rd_r^2 \psi|^2 + v_+^{q_r} \mu e^{\eta r}|\rd_{v_+} \rd_r \psi|^2 \;.
\end{split}
\end{equation*}
The coercivity of $B$ and $B - A$ for $r$ close enough to $r_+$ is established in the same way as in the proof of Proposition \ref{PropLeftHp}, Step 3, by choosing $\mu(\eta)$ such that $\mu(\eta) e^{\eta r_+} = 2(s + s^2)$ to obtain
\begin{equation}\label{EqLowBoundAR}
B(v_+,r,\rd_{v_+} \rd_r \psi, \rd_r^2 \psi, \rd \rd_r^2 \psi) \gtrsim v_+^{q_r}(|\Delta| |\rd_r^3 \psi|^2 + |\rd_{v_+} \rd_r^2 \psi|^2 + \sum_i |\zt_{i,+}\rd_r^2\psi|^2 + |\rd_r^2 \psi|^2 + |\rd_{v_+} \rd_r \psi|^2)
\end{equation}
and 
\begin{equation}\label{EqLowBoundABR}
(B - A)(v_+,r,\rd_{v_+} \rd_r \psi, \rd_r^2 \psi, \rd \rd_r^2 \psi) \gtrsim v_+^{q_r} (|\rd_r^3 \psi|^2 + |\rd_{v_+} \rd_r^2 \psi|^2 + \sum_i |\zt_{i,+} \rd_r^2 \psi|^2  +|\rd_r^2\psi|^2 + |\rd_{v_+} \rd_r \psi|^2)
\end{equation}
for $r \in [\rred, r_+]$ with $\rred$ close enough to $r_+$.

\underline{\bf{Step 4: Estimating the remaining bulk terms.}}

The last two terms of each underbraced term in \eqref{EqPfPropRHPMultiplier} are estimated as in \eqref{EqEstAuxTerms} of Step 4 of the proof of Proposition \ref{PropLeftHp}, where we again choose $\eta>0$ so large that the resulting terms quadratic in derivatives of $\rd_r^2 \psi$ are absorbed by $\frac{1}{4}\times$ \eqref{EqPfRedShiftRight} and such that the terms $-\frac{1}{2} v_+^{q_r} \eta \mu e^{\eta r} (|\rd_r^2 \psi|^2 + |\rd_{v_+} \rd_r \psi|^2)$ control all the remaining bulk terms.

\underline{\bf{Step 5: Putting it all together.}}

We obtain from \eqref{EqPfPropRHPMultiplier} after integration over the spheres
\begin{equation*}
\begin{split}
\rd_{v_+} \big(A(v_+,r,\rd_r^2\psi,\rd \rd_r^2 \psi)\big) &+ \rd_r\big(B(v_+,r,\rd_{v_+}\rd_r \psi , \rd_r^2 \psi, \rd \rd_r^2 \psi)\big) \\
&\underset{{\mathrm{a.i.}}}{\gtrsim} v_+^{q_r}\big(|\rd_r^3 \psi|^2 + |\rd_{v_+}\rd_r^2\psi|^2 + \sum_i|\tilde{Z}_{i,+} \rd_r^2\psi|^2 + |\rd_r^2 \psi|^2 + |\rd_{v_+} \rd_r \psi|^2\big)
\end{split}
\end{equation*}
for $v_+ \geq v_0$ and $\rred \leq r \leq r_+$. Let now $r' \in [\rred , r_+)$. We integrate over the region $\{2v_0 \leq f^+ \leq v_1\} \cap \{r' \leq r \leq r_+\}$ with respect to $dv_+ \wedge dr \wedge \vols = \frac{1}{\rho^2} \vol$ and use that on a level set of $f^+$ we have $dr = dv_+$ to obtain
\begin{equation}\label{EqEnEstRight}
\begin{split}
\int\limits_{\mathclap{\substack{\{r=r'\} \\ \cap \{2v_0 \leq f^+ \leq v_1\}}}} B \, \vols dv_+ 
&+ \int\limits_{\mathclap{\substack{\{f^+=v_1\} \\ \cap \{r' \leq r \leq r_+\}}}} (B - A) \, \vols dv_+ \\
&\quad + c \int\limits_{\mathclap{\substack{\{2v_0 \leq f^+ \leq v_1\} \\ \cap \{r' \leq r \leq r_+\}}}} v_+^{q_r}\big(|\rd_r^3 \psi|^2 + |\rd_{v_+}\rd_r^2\psi|^2 + \sum_i|\tilde{Z}_{i,+} \rd_r^2\psi|^2 + |\rd_r^2 \psi|^2 + |\rd_{v_+} \rd_r \psi|^2\big)\, \vols dv_+dr \\
&\leq \int\limits_{\mathclap{\Hp_r \cap \{2v_0 \leq f^+ \leq v_1\}}} B \, \vols dv_+  + 
\int\limits_{\mathclap{\substack{\{f^+=2v_0\} \\ \cap \{r' \leq r \leq r_+\}}}} (B - A) \, \vols dv_+  \;,
\end{split}
\end{equation}
where $c>0$. Using the lower bounds \eqref{EqLowBoundAR} and \eqref{EqLowBoundABR}, the trivial upper bounds on $A$ and $B$, the Assumptions \ref{AssumptionReg} and \eqref{AssumpDecayRHp} on $\psi$ to control the boundary terms on the right hand side, we obtain from this
\begin{equation}\label{Eq3rDerPsi}
\begin{split}
&\sup_{r' \in [\rred, r_+]}\int\limits_{\{r=r'\} \cap \{f^+ \geq 2v_0\}} v_+^{q_r}\big(|\Delta||\rd_r^3 \psi|^2 + |\rd_{v_+}\rd_r^2\psi|^2 + \sum_i|\tilde{Z}_{i,+} \rd_r^2 \psi|^2 + |\rd_r^2 \psi|^2 + |\rd_{v_+}\rd_r \psi|^2\big) \, \vols dv_+ \\
&\qquad +\int\limits_{\{\rred \leq r \leq r_+\} \cap \{f_+ \geq 2v_0\}} v_+^{q_r}\big(|\rd_r^3 \psi|^2 + |\rd_{v_+}\rd_r^2\psi|^2 + \sum_i|\tilde{Z}_{i,+} \rd_r^2\psi|^2 + |\rd_r^2 \psi|^2 + |\rd_{v_+} \rd_r \psi|^2\big)\, \vols dv_+dr 
\leq C 
\end{split}
\end{equation}
for some $C>0$. Together with Assumption \ref{AssumptionReg} this in particular gives \eqref{EqPropRightH} with $f = \rd_r^2 \psi$.

\underline{\bf{Step 6: Estimating higher, lower, and other derivatives.}}

We can again just commute \eqref{EqTwiceDiffTeukP} with $\rd_{v_+}$, $\rd_{\varphi_+}$, and $\mathcal{Q}_{[s]}$ to obtain \eqref{EqPropRightH} also for $f  \in \{ \rd_{v_+}^a \rd_{\varphi_+}^b \rd_r^2\psi, $ $ \rd_{v_+} \rd_{v_+}^a \rd_{\varphi_+}^b\rd_r^2 \psi, \mathcal{Q}_{[s]} \rd_{v_+}^a \rd_{\varphi_+}^b \rd_r^2 \psi\}$ for $0 \leq a+b \leq 2$.\footnote{This gives control over some higher derivatives which are not stated in Proposition \ref{PropRightHP} and which are not needed. We are wasteful here with derivatives in order to streamline the presentation. Being a bit more careful one can safe a couple of derivatives here.}
The lower order terms are now estimated by integrating  in $r$ using the fundamental theorem of calculus together with Minkowski's inequality\footnote{Concretely, we use $$\Big(\int\limits_{\{r=r'\} \cap \\ \{v_+ \geq 1\}} h^2 \vols dv_+ \Big)^\frac{1}{2} \leq \Big(\int\limits_{\{r=r_+\} \cap \\ \{v_+ \geq 1\}} h^2 \vols dv_+ \Big)^\frac{1}{2} + \int_{r'}^{r_+} \Big(\int\limits_{\{r=\tilde{r}\} \cap \{v_+ \geq 1\}} (\rd_rh)^2 \vols dv_+ \Big)^\frac{1}{2} dr$$ for $h \in \{ \zt^{i_1}_{1,+} \zt^{i_2}_{2,+} \zt^{i_3}_{3,+} \rd_{v_+}^j\rd_r^k (\rd_{v_+}^d \mathcal{Q}_{[s]}^e \rd_{v_+}^a \rd_{\varphi_+}^b \rd_r \psi)\}$, $0 \leq i_1 + i_2 + i_3 + j +k \leq 1$, and $0 \leq a + b \leq 2$, $0 \leq d + e \leq 1$.} and using the assumptions \eqref{AssumpDecayRHp} on the right event horizon to obtain
\begin{equation*}
\sup_{r' \in [\rred, r_+]}\int\limits_{\{r=r'\} \cap \{v_+ \geq 1\}} v_+^{q_r}\big(|\rd_r^2 f|^2 + |\rd_{v_+}\rd_rf|^2 + \sum_i|\tilde{Z}_{i,+} \rd_r f|^2 + |\rd_r f|^2 \big) \, \vols dv_+ \leq C 
\end{equation*}
 for some $C>0$ and $f  \in \{ \rd_{v_+}^a \rd_{\varphi_+}^b \rd_r\psi, $ $ \rd_{v_+} \rd_{v_+}^a \rd_{\varphi_+}^b\rd_r \psi, \mathcal{Q}_{[s]} \rd_{v_+}^a \rd_{\varphi_+}^b \rd_r \psi\}$ for $0 \leq a+b \leq 2$.
Integrating once more  in this way concludes the proof of Proposition \ref{PropRightHP}.
\end{proof}

The following remark about the altered red-shift effect for the Teukolsky equation and Gaussian beams is not needed for the result of the paper but the reader might still find it instructive.
\begin{remark}\label{RemRedShift}
The red-shift effect along the event horizon for the scalar wave equation is by now a classic effect which has been used in various guises. To understand how it changes for the Teukolsky equation it is helpful to differentiate between the following three manifestations of the red-shift effect (one could easily consider more). We consider a family of observers with timelike velocity vector fields $N$ which are Lie-transported along the Hawking Killing vector field $T_{\Hp} = \rd_{v_+} + \frac{a}{r_+^2 + a^2} \rd_{\varphi_+}$ along the event horizon. 
\begin{enumerate}
\item The frequency, as measured by the family of observers, of a (Gaussian) beam propagating along the event horizon is shifted exponentially to the red. This could be seen as the original red-shift effect.
\item The energy of a (Gaussian) beam propagating along the event horizon decays exponentially, see \cite{Sbie13b}, \cite{Sbie14}. Note that this is a priori independent of the change of colour of the light, but it is the most relevant manifestation of the so-called red-shift effect on energy estimates.
\item Consider compactly supported initial data along the event horizon. Then the transversal derivative decays exponentially along the event horizon, see \eqref{TeukolskyStar}.
\end{enumerate}
For the Teukolsky equation, as we will see, it no longer makes sense to refer to those three effects collectively as the `red-shift effect'. Dividing \eqref{TeukolskyStar} by $\rho^2$ we obtain that the Teukolsky equation in $(v_+,r, \theta, \varphi_+)$ coordinates is of the form $\Box_g \psi + X\psi + f \psi =0$ with $$X= - \frac{1}{\rho^2}(4sr + 2isa \cos \theta) \rd_{v_+} - \frac{2s(r-M)}{\rho^2} \rd_r + \frac{2si}{\rho^2} \frac{\cos \theta}{\sin^2 \theta} \rd_{\varphi_+} \;, \quad f = - \frac{2s}{\rho^2} - \frac{1}{\rho^2}(s^2 \frac{\cos^2 \theta}{\sin^2 \theta} - s) \;.$$

Note that in the construction of Gaussian beams for wave equations with lower order terms, the lower order terms only impinge on the amplitude, but not on the phase function, see Appendix 3.D of \cite{Sbie14}. Thus the frequency/colour is still shifted to the red for all values of $s$.

We now consider the behaviour of the energy of Gaussian beams for which we refer the reader to Appendix 3.D of \cite{Sbie14}. It follows from $\nabla_{T_{\Hp}}T_{\Hp} = \kappa_+ T_{\Hp}$ that $e^{-\kappa_+ v_+} T_{\Hp}$ is a null geodesic velocity vector field along the event horizon. The $N$-energy of a Gaussian beam for the wave equation localised along one of the integral curves thus behaves like $e^{-\kappa_+v_+}$. Let us now choose either the integral curve at $\theta = 0$ or $\theta = \pi$ so that $g(X, e^{-\kappa_+ v_+} T_{\Hp_+}) = - e^{-\kappa_+ v_+} 2s \kappa_+$. With the terminology from \cite{Sbie14} we hence obtain the modulating factor $|m_X(v_+)|^2 = e^{2s \kappa_+ v_+}$    of the amplitude of the Gaussian beam for the Teukolsky equation compared to that for the wave equation. Hence, the $N$-energy of such a Gaussian beam for the Teukolsky equation behaves like $e^{(2s-1)\kappa_+ v_+}$.

In Appendix 3.E of \cite{Sbie14} it was obtained that an integrated local energy decay statement for the Teukolsky equation cannot hold in the exterior of a Kerr black hole without the `loss of a derivative' by considering Gaussian beams localised along trapped null geodesics away from the horizon. By considering a Gaussian beam along the event horizon as above it follows that not even a uniform energy boundedness statement for the Teukolsky equation for $s=+1,+2$ can hold without the `loss of a derivative'.

Finally, for compactly supported initial data along the right event horizon it directly follows from \eqref{TeukolskyStar} that for $s=+1$ the transversal derivative remains constant for large $v_+$ while for $s=+2$ it grows in general exponentially  (i.e., if it does not vanish). This shows very nicely how these three different effects decouple for the Teukolsky equation.
\end{remark}

\subsubsection{Corollaries}

Let $\chi : \R \to (0, \infty)$ be a fixed positive smooth function with $\chi(v_+) = v_+^{q_r}$ for $v_+ \geq 1$ and $\chi(v_+) = |v_+|^{q_l}$ for $v_+ \leq -1$. The next corollary will be our starting point for the estimates in the next section which are needed for the separation of the Teukolsky field. It combines the results of Proposition \ref{PropLeftHp} and \ref{PropRightHP}, but we can afford to discard uniformity up to the event horizons. 
\begin{corollary}\label{CorFinEnEstHorizons}
Under the assumptions of Section \ref{SecAssumptions} there exists an $\rred \in (r_-, r_+)$ such that for any $r_1 \in (r_\mathrm{red}, r_+)$ there exists a $C>0$ such that
\begin{equation}\label{CorConclusionForNoShift2}
\sup_{r' \in [r_\mathrm{red}, r_1]} \sum_{0 \leq i_1 + i_2 + i_3 + j + k \leq 1} \int_{\{r=r'\}} \chi(v_+) |\widetilde Z_{1,+}^{i_1}\widetilde Z_{2,+}^{i_2}\widetilde Z_{3,+}^{i_3}\rd_{v_+}^{j}\big(\rd_r|_+\big)^{k}f|^2 \,\vols\,dv_{+}  \leq C
\end{equation}
holds for $f \in \{ \rd_{v_+}^a \rd_{\varphi_+}^b(\rd_r|_+)^c \psi, \rd_{v_+}\rd_{v_+}^a \rd_{\varphi_+}^b(\rd_r|_+)^c \psi,  \mathcal{Q}_{[s]} \rd_{v_+}^a \rd_{\varphi_+}^b(\rd_r|_+)^c \psi\}$, $0 \leq a + b + c \leq 2$. 
\end{corollary}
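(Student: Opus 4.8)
The plan is to glue the two semi-global estimates of Propositions \ref{PropLeftHp} and \ref{PropRightHP} along each slice $\{r=r'\}$, decomposing the slice according to the value of $v_+$: the tail $\{v_+\geq 1\}$ is fed into Proposition \ref{PropRightHP}, the tail $\{v_+\leq -1\}$ into Proposition \ref{PropLeftHp} after a $+/-$ coordinate dictionary, and the bounded middle piece is controlled directly by the smoothness Assumption \ref{AssumptionReg}. First I would take $\rred$ to be the larger of the two radii produced by Propositions \ref{PropLeftHp} and \ref{PropRightHP}, and fix $r_1\in(\rred,r_+)$; on the compact interval $[\rred,r_1]$ the function $\Delta$ is bounded away from $0$ and $\infty$, which is exactly why we lose uniformity up to $\Hp_r$ and $\Hp_l$ and why $C$ is allowed to depend on $r_1$.

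On $\{r=r'\}\cap\{v_+\geq 1\}$ the weight $\chi(v_+)$ equals $v_+^{q_r}$ and, since $|\Delta|^k\gtrsim 1$ uniformly on $[\rred,r_1]$ for $k\in\{0,1\}$, the corresponding piece of the left-hand side of \eqref{CorConclusionForNoShift2} is bounded by a constant multiple of the left-hand side of \eqref{EqPropRightH}. Here the admissible set of $f$ and the summation range in \eqref{CorConclusionForNoShift2} coincide verbatim with those of Proposition \ref{PropRightHP}, so no conversion is needed.

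The region $\{r=r'\}\cap\{v_+\leq -1\}$ is where the work lies. On a slice of constant $r$ one has $v_++v_-=2r^*(r')$, so this region is $\{v_-\geq 2r^*(r')-1\}$ and, since $r^*$ is bounded on $[\rred,r_1]$, as $v_+\to-\infty$ one has $|v_+|=v_--2r^*(r')\simeq v_-$ uniformly in $r'$; hence $\chi(v_+)=|v_+|^{q_l}\simeq v_-^{q_l}$ there. I would then rewrite the integrand in the $(v_-,\varphi_-,r,\theta)$ picture. Combining \eqref{CoordTrafoStar} and \eqref{CoordTrafoStarMinus} gives, away from $r_\pm$, that $\rd_r|_+=\rd_r|_-+\tfrac{2(r^2+a^2)}{\Delta}\rd_{v_-}-\tfrac{2a}{\Delta}\rd_{\varphi_-}$ while $\rd_{v_+}=-\rd_{v_-}$ and $\rd_{\varphi_+}=\rd_{\varphi_-}$, so that $\{\rd_r|_+,\rd_{v_+},\rd_{\varphi_+}\}$ and $\{\rd_r|_-,\rd_{v_-},\rd_{\varphi_-}\}$ span the same module over the smooth functions, with transition coefficients that are smooth in $r$ and bounded on $[\rred,r_1]$ since they involve only $\Delta^{-1}$. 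The operators $\zt_{i,+}$ and $\zt_{i,-}$ likewise span the same space, differing by the rotation $\varphi_+=\varphi_-+2\overline r(r)$ with bounded coefficients, and $\mathcal Q_{[s]}$ is coordinate-independent by Definition \ref{DefCarterOp}. Finally, writing $\psi=\Delta^2\hp$ and noting that $\Delta^2$ and its $r$-derivatives are bounded on $[\rred,r_1]$ (and that $\mathcal Q_{[s]}(\Delta^2\hp)=\Delta^2\mathcal Q_{[s]}\hp$, as $\mathcal Q_{[s]}$ carries no $\rd_r$), each quantity $\zt_{\cdot,+}^{\ldots}\rd_{v_+}^{j}(\rd_r|_+)^{k}f$ entering \eqref{CorConclusionForNoShift2} becomes a finite linear combination, with coefficients bounded on $[\rred,r_1]$, of quantities $\zt_{\cdot,-}^{\ldots}\rd_{v_-}^{\ldots}(\rd_r|_-)^{\ldots}g$ with $g$ in the admissible set of Proposition \ref{PropLeftHp} and of the same or lower order. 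This piece is therefore bounded by a constant multiple of the left-hand side of \eqref{EqPropLeftH}.

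It remains to treat the compact middle region $\{r=r'\}\cap\{-1\leq v_+\leq 1\}$, together with the bounded leftover $v_-\in[2r^*(r')-1,1)$ that may occur when $2r^*(r')-1<1$, which I would absorb here. This is a compact subset of $\mathcal M$ on which the $(v_+,\varphi_+,r,\theta)$-coordinate system is regular and $\psi\in\mathscr I^\infty_{[2]}(\Mun)$ is smooth; hence $\chi$ and all the derivatives of $\psi$ entering \eqref{CorConclusionForNoShift2} are bounded there, and the integral is bounded uniformly in $r'\in[\rred,r_1]$ by compactness. Summing the three contributions yields \eqref{CorConclusionForNoShift2}. The only genuine obstacle is the bookkeeping in the $\{v_+\leq -1\}$ region: one must verify that the number of derivatives produced after converting $+$-derivatives of $\psi$ into $-$-derivatives of $\hp$ never exceeds what Proposition \ref{PropLeftHp} actually controls, and that every coefficient generated by the $\Delta^{-1}$-weighted change of frame and by the Leibniz rule acting on $\Delta^2$ stays bounded — which is precisely what forces $r_1<r_+$ and makes $C$ depend on $r_1$.
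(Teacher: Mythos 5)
Your proposal is correct and takes essentially the same route as the paper's own proof: the paper likewise converts the $+$-frame quantities into $-$-frame quantities of $\hp$ via $\rd_r|_+ = \rd_r|_- + \frac{2(r^2+a^2)}{\Delta}\rd_{v_-} - \frac{2a}{\Delta}\rd_{\varphi_-}$, $\rd_{v_+}=-\rd_{v_-}$, the rotation relation \eqref{EqZTildeTrafo} for the $\zt_{i,\pm}$, and $\psi=\Delta^2\hp$, converts the weight using $v_+=-v_-+2r^*$ with a constant that degenerates as $r'\to r_+$, and then invokes Propositions \ref{PropLeftHp} and \ref{PropRightHP} together with Assumption \ref{AssumptionReg}. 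Your explicit three-region decomposition of each slice, the removal of the $|\Delta|^k$ weight on $[\rred,r_1]$, and the absorption of the bounded leftover $v_-$-interval into the compact middle region are precisely the bookkeeping the paper leaves implicit.
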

Here we have employed the notation $\rd_r|_+$ to emphasise that this is a partial derivative in $r$ with respect to the $(v_+, r, \theta, \varphi_+)$-coordinate system. 

\begin{proof}
It follows from \eqref{CoordTrafoStar} and \eqref{CoordTrafoStarMinus} that we have $\rd_r|_+ = 2 \frac{r^2 + a^2}{\Delta} \rd_{v_-} - 2 \frac{a}{\Delta} \rd_{\varphi_-} + \rd_r|_-$. Recalling moreover that $\psi = \Delta^2 \hp$, we obtain
\begin{equation}\label{EqRPsiStillWeight}
\rd_r|_+ \psi = 2 \Delta (r^2 + a^2) \rd_{v_-} \hp - 2a \Delta \rd_{\varphi_-} \hp + \Delta^2 \rd_r|_- \hp + 2(\rd_r \Delta)\Delta \hp
\end{equation}
and
\begin{equation}\label{EqRPsiNotWeight}
\begin{split}
\big(\rd_r|_+\big)^2 \psi  &=4(r^2 + a^2)^2 \rd_{v_-}^2 \hp - 8(r^2 + a^2) a \rd_{v_-} \rd_{\varphi_-} \hp + 4 \Delta(r^2 + a^2)\rd_{v_-} \rd_r|_- \hp + 4(r^2 + a^2)(\rd_r \Delta) \rd_{v_-} \hp \\
&\qquad + 4a^2 \rd_{\varphi_-}^2 \hp - 4a \Delta \rd_{\varphi_-} \rd_r|_- \hp - 4a (\rd_r \Delta) \rd_{\varphi_-} \hp + 2\rd_r(\Delta(r^2 + a^2)) \rd_{v_-} \hp \\
&\qquad - 2a(\rd_r\Delta) \rd_{\varphi_-} \hp + \Delta^2 \big(\rd_r|_-\big)^2 \hp + 4\Delta(\rd_r \Delta)\rd_r|_- \hp + 2\rd_r((\rd_r \Delta) \Delta) \hp 
\end{split}
\end{equation}
and $(\rd_r|_+)^3 \psi$ is a linear combination of $\rd_{v_-}^a \rd_{\varphi_-}^b (\rd_r|_-)^c \hp$ with $0 \leq a + b + c \leq 3$, $a,b,c, \in \N_0$.
Moreover, using $\varphi_+ = \varphi_- + 2\overline{r}$ we directly compute
\begin{equation}
\label{EqZTildeTrafo}
\begin{aligned}
\zt_{1,+} & =  \cos (2\overline{r}) \cdot \zt_{1,-} + \sin(2\overline{r}) \cdot \zt_{2,-} \\
\zt_{2,+} & = \cos (2 \overline{r}) \cdot \zt_{2,-} - \sin(2\overline{r}) \cdot \zt_{1,-} \\
\zt_{3,+} & = \zt_{3,-} \;.
\end{aligned}
\end{equation}
We also observe $\rd_{v_+} = - \rd_{v_-}$. Moreover, it follows from $v_+ = - v_- + 2r^*$ that for $r' \in [r_\mathrm{red}, r_+)$ we have $|v_+| \leq C(r') |v_-|$ for $v_+ \leq -C(r')$ with the constant $C(r')$ blowing up for $r' \to r_+$. Now \eqref{CorConclusionForNoShift2} follows directly from the Propositions \ref{PropLeftHp} and \ref{PropRightHP} and the regularity Assumption \ref{AssumptionReg}.
\end{proof}

\begin{remark}
The constant on the right hand side of \eqref{CorConclusionForNoShift2} will in general blow up for $r_1 \to r_+$, because of the conversion of the $v_-$-weights from Proposition \ref{PropLeftHp} into $v_+$-weights. However, for $f = \psi, \rd_r \psi$, we do have exponential decay in $v_+$ for $v_+ \to -\infty$ approaching $\Hp_l$ by the regularity Assumption \ref{AssumptionReg}, which compensates for the blow up of the constant in the conversion and \eqref{CorConclusionForNoShift2} can actually be shown to hold uniformly up to $r_+$. Since $f = (\rd_r|_+)^2 \psi$ is in general regular and non-vanishing near the bottom bifurcation sphere $\Sp^2_b$ we do no longer have decay for $v_+ \to - \infty$ approaching $\Hp_l$ and so the constant blows up for $r_1 \to r_+$.
\end{remark}

The next corollary is needed in Section \ref{SecDetA} for passing to the limit $r \to r_+$ in the separated picture, in particular for Proposition \ref{PropAHP}   and Proposition \ref{PropLimitDis}.
\begin{corollary}\label{CorLimitHP}
Under the assumptions in Section \ref{SecAssumptions} we have for $r_+ > r \to r_+$
\begin{equation}\label{EqCorL2Limit}
\psi(v_+, r, \theta, \varphi_+) \to \psi(v_+, r_+, \theta, \varphi_+) \qquad \textnormal{ in } L^2_{v_+} L^2_{\Sp^2}
\end{equation}
and 
\begin{equation}\label{EqCorL2LocLimit}
\mathbbm{1}_{(v_0, \infty)}(v_-) \cdot (\rd_r|_+)^2 \psi (v_-, r, \theta, \varphi_-) \to \mathbbm{1}_{(v_0, \infty)}(v_-) \cdot (\rd_r|_+)^2 \psi (v_-, r_+, \theta, \varphi_-) \qquad \textnormal{ in } L^2_{v_-} L^2_{\Sp^2}
\end{equation}
for any $v_0 \in \R$.
\end{corollary}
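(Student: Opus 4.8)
The plan is to establish both limits by integrating in $r$ via the fundamental theorem of calculus and applying Minkowski's integral inequality, thereby reducing each statement to a uniform-in-$r$ $L^2$-bound on one additional $r$-derivative, which is then supplied by the energy estimates of this section. All pointwise identities that I integrate are legitimate because $\psi \in \mathscr{I}^\infty_{[2]}(\Mun)$ is smooth up to $\Hp_r \setminus \Sp^2_b$ in the $(v_+,r,\theta,\varphi_+)$-chart and $\hp$ is smooth up to $\Hp_l \setminus \Sp^2_b$ in the $(v_-,r,\theta,\varphi_-)$-chart.

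For \eqref{EqCorL2Limit}, I would first invoke Corollary \ref{CorFinEnEstHorizons} with $f = \psi$ and $k=1$, together with the Remark following it (which upgrades the estimate to hold uniformly up to $r_+$ for $f = \psi,\, \rd_r|_+\psi$, thanks to the exponential decay of these quantities towards $\Hp_l$), to obtain
\begin{equation*}
\sup_{r' \in [\rred, r_+)} \int_{\{r = r'\}} \chi(v_+)\, |\rd_r|_+ \psi|^2 \, \vols\, dv_+ \leq C \;.
\end{equation*}
Since $\chi$ is a fixed positive smooth function tending to $+\infty$ as $v_+ \to \pm\infty$, it is bounded below by a positive constant $c_0$ on all of $\R$, so the same bound holds with $\chi$ removed. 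Integrating the pointwise identity $\psi(v_+, r_+, \cdot) - \psi(v_+, r, \cdot) = \int_r^{r_+} \rd_r|_+ \psi(v_+, \tilde r, \cdot)\, d\tilde r$ and applying Minkowski's integral inequality in $L^2_{v_+}L^2_{\Sp^2}$ then gives
\begin{equation*}
\| \psi(\cdot, r_+, \cdot) - \psi(\cdot, r, \cdot)\|_{L^2_{v_+}L^2_{\Sp^2}} \leq \int_r^{r_+} \| \rd_r|_+ \psi(\cdot, \tilde r, \cdot)\|_{L^2_{v_+}L^2_{\Sp^2}}\, d\tilde r \lesssim (r_+ - r) \longrightarrow 0 \;,
\end{equation*}
which is \eqref{EqCorL2Limit} (and simultaneously shows $\psi(\cdot,r_+,\cdot) \in L^2$).

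For \eqref{EqCorL2LocLimit} the naive analogue breaks down, since $\rd_r|_+ = 2\frac{r^2+a^2}{\Delta}\rd_{v_-} - 2\frac{a}{\Delta}\rd_{\varphi_-} + \rd_r|_-$ is singular at $r_+$, so $(\rd_r|_+)^3 \psi$ carries a factor $\Delta^{-1}$ whose $L^2$-norm is not integrable in $r$ up to $r_+$. Instead I would work entirely in the $(v_-, r, \theta, \varphi_-)$-chart, in which $\rd_r|_-$ is regular up to $\Hp_l$. Using \eqref{EqRPsiNotWeight} I write $(\rd_r|_+)^2 \psi = \sum_j c_j(r)\, \rd_{v_-}^{a_j}\rd_{\varphi_-}^{b_j}(\rd_r|_-)^{c_j}\hp$ with $a_j+b_j+c_j \leq 2$ and coefficients $c_j$ smooth up to $r_+$. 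For each summand I apply the fundamental theorem of calculus in $r$ with the regular derivative $\rd_r|_-$ (which commutes with $\rd_{v_-}, \rd_{\varphi_-}$), producing the one-higher term $\rd_{v_-}^{a_j}\rd_{\varphi_-}^{b_j}(\rd_r|_-)^{c_j+1}\hp$ with total order $\leq 3$. By Proposition \ref{PropLeftHp} (weight $v_-^{q_l}$, $q_l>0$, uniform up to $r_+$) this term is bounded in $L^2(\{v_- \geq \max(1,v_0)\})$ uniformly in $r$, while on the bounded strip $\{v_0 < v_- \leq 1\}$, a compact subset of $\Mun$ bounded away from $\Sp^2_b$, it is controlled by smoothness and compactness. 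Minkowski's integral inequality then yields $L^2(\{v_- > v_0\})$-convergence of each $\rd_{v_-}^{a_j}\rd_{\varphi_-}^{b_j}(\rd_r|_-)^{c_j}\hp$; since these approximants are uniformly bounded in $L^2(\{v_->v_0\})$ their limits lie there as well, and as $c_j(r) \to c_j(r_+)$ the convergence of the sum $(\rd_r|_+)^2\psi$ follows, proving \eqref{EqCorL2LocLimit}. The cutoff $\mathbbm{1}_{(v_0,\infty)}(v_-)$ is indispensable: it excises the neighbourhood of $\Sp^2_b$ (at $v_- \to -\infty$), where $(\rd_r|_+)^2\psi$ is regular but does not decay, so that the limiting trace is square integrable.

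The main obstacle is exactly the singularity of $\rd_r|_+$ at $r = r_+$ in the second statement; the resolution is to trade $(\rd_r|_+)^2\psi$ for the regular field $\hp$ through \eqref{EqRPsiNotWeight} and to run the $r$-integration with the regular transversal derivative $\rd_r|_-$, whose higher derivatives are controlled up to $\Hp_l$ by the red-shift estimate of Proposition \ref{PropLeftHp}. The remaining work is purely bookkeeping: the conversion between $v_+$- and $v_-$-weights and the uniformity of constants up to $r_+$, both of which are already accounted for in the Remark following Corollary \ref{CorFinEnEstHorizons}.
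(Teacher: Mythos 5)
Your strategy is essentially the paper's: both limits are obtained by integrating in $r$ via the fundamental theorem of calculus and reducing to the energy estimates of Propositions \ref{PropLeftHp} and \ref{PropRightHP}, and the second limit is handled, exactly as in the paper, by trading $(\rd_r|_+)^2\psi$ for derivatives of $\hp$ through \eqref{EqRPsiNotWeight} and taking the extra $r$-derivative with the regular operator $\rd_r|_-$. The only difference in execution is that the paper applies Cauchy--Schwarz in $r$ and uses the uniformly bounded \emph{bulk} (spacetime) integrals of those propositions, whereas you apply Minkowski's inequality and use sup-in-$r$ \emph{slice} bounds; this creates two points you should tighten.

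First, for \eqref{EqCorL2LocLimit}, your claim that every term $\rd_{v_-}^{a_j}\rd_{\varphi_-}^{b_j}(\rd_r|_-)^{c_j+1}\hp$ of total order $\leq 3$ is bounded in $L^2$ on slices uniformly in $r$ is not literally what Proposition \ref{PropLeftHp} provides: for the pure transversal term $(\rd_r|_-)^3\hp$ (i.e.\ $f=(\rd_r|_-)^2\hp$ with one more $\rd_r$) the slice estimate in \eqref{EqPropLeftH} carries the degenerate weight $|\Delta|^{k}$ with $k=1$, so it only yields $\|(\rd_r|_-)^3\hp\|_{L^2(\{r=r'\}\cap\{v_-\geq 1\})}\lesssim |\Delta(r')|^{-1/2}\simeq (r_+-r')^{-1/2}$. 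This does not sink your argument: the singularity is integrable in $r$, so Minkowski still closes (with rate $(r_+-r)^{1/2}$); alternatively, note that this term enters $\rd_r|_-\big((\rd_r|_+)^2\psi\big)$ only with a $\Delta^2$ coefficient, or simply do what the paper does and apply Cauchy--Schwarz in $r$ so that the \emph{non-degenerate} bulk term of \eqref{EqPropLeftH} can be invoked. As written, though, the justification is incorrect for that one term.

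Second, for \eqref{EqCorL2Limit}, your uniform-up-to-$r_+$ slice bound on $\rd_r|_+\psi$ rests on the Remark following Corollary \ref{CorFinEnEstHorizons}, which the paper asserts (``can actually be shown'') but does not prove. The paper's own proof is self-contained here: near $\Hp_l$ it converts $\rd_r|_+\psi$ via \eqref{EqRPsiStillWeight} into $\Delta$-weighted derivatives of $\hp$ controlled by the bulk term of \eqref{EqPropLeftH}, near $\Hp_r$ it uses the bulk term of \eqref{EqPropRightH}, and it covers the remaining compact region by Assumption \ref{AssumptionReg}. If you want an argument that does not lean on an unproven remark, substitute this bulk-term reasoning; it fits your framework after one Cauchy--Schwarz in $r$.
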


\begin{proof}
We begin with proving \eqref{EqCorL2Limit}. The fundamental theorem of calculus gives $|\psi(v_+, r, \theta, \varphi_+) - \psi(v_+, r_+, \theta, \varphi_+)| \leq \int_{[r,r_+]} |\rd_r \psi(v_+, r', \theta, \varphi_+) |\, dr'$. Cauchy Schwarz yields $$|\psi(v_+, r, \theta, \varphi_+) - \psi(v_+, r_+, \theta, \varphi_+)|^2 \leq \int_{[r,r_+]} |\rd_r \psi(v_+, r', \theta, \varphi_+)|^2 \, dr' \cdot |r- r_+|$$ which thus gives
$$\int_{\R \times \Sp^2} |\psi(v_+, r, \theta, \varphi_+) - \psi(v_+, r_+, \theta, \varphi_+)|^2 \vols dv_+ \leq \int_{[r, r_+]} \int_{\R \times \Sp^2} |\rd_r \psi(v_+, r', \theta, \varphi_+)|^2 \, dr' \vols dv_+ \cdot |r - r_+| \;.$$
It follows from \eqref{EqRPsiStillWeight} together with \eqref{EqPropLeftH}, from the bulk term in \eqref{EqPropRightH}, as well as from the regularity Assumption \ref{AssumptionReg} that the spacetime integral is uniformly bounded. 
This shows \eqref{EqCorL2Limit}.

To prove \eqref{EqCorL2LocLimit} we compute in an analogous manner as before
\begin{equation}\label{EqIEstLoc}
\begin{split}
\int_{\R \times \Sp^2} \mathbbm{1}_{(v_0, \infty)}(v_-) &\Big|(\rd_r|_+)^2 \psi(v_-, r, \theta, \varphi_-) - (\rd_r|_+)^2\psi(v_-, r_+, \theta, \varphi_-)\Big|^2 \vols dv_- \\
&\leq \int_{[r, r_+]} \int_{\R \times \Sp^2}\mathbbm{1}_{(v_0, \infty)}(v_-) \Big|(\rd|_-)(\rd_r|_+)^2 \psi(v_-, r', \theta, \varphi_-)|^2 \, dr' \vols dv_- \cdot |r - r_+| \;.
\end{split}
\end{equation}
Differentiating \eqref{EqRPsiNotWeight} once in $(\rd_r|_-)$ we obtain that $(\rd_r|_-)(\rd_r|_+)^2 \psi$ is a linear combination (with uniformly bounded coefficients) of the terms $\rd_{v_-}^a \rd_{\varphi_-}^b (\rd_r|_-)^c \hp$ with $0 \leq a + b + c \leq 3$. For $v_0 \geq 1$ all those terms are controlled by the bulk term in \eqref{EqPropLeftH} -- and for $v_0 \leq 1$ we complement this bulk term by the regularity Assumption \ref{AssumptionReg}. Hence, the spacetime integral in \eqref{EqIEstLoc} is uniformly bounded. This shows \eqref{EqCorL2LocLimit}.
\end{proof}



\subsection{Estimates away from the event and Cauchy horizons} \label{SecEENoShift}

\begin{proposition} \label{PropEnergyEstNoShift}
Under the assumptions of Section \ref{SecAssumptions}, and with $\rred$ as in Corollary \ref{CorFinEnEstHorizons}, we have that for any $r_0 \in (r_-, \rred]$ there exists a constant $C>0$ (depending on $r_0$) such that
\begin{equation}\label{PropEqNoShift}
\sup_{r' \in [r_0, \rred]} \sum_{0 \leq i_1 + i_2 + i_3 + j + k \leq 1} \int_{\{r=r'\}} \chi(v_+)  |\widetilde Z_{1,+}^{i_1}\widetilde Z_{2,+}^{i_2}\widetilde Z_{3,+}^{i_3}\rd_{v_+}^{j}\rd_r^{k}f|^2 \,\vols\,dv_{+}  \leq C
\end{equation}
holds for $f \in \{\rd_r^c\psi, \rd_{v_+} \rd_r^c\psi, \mathcal{Q}_{[s]} \rd_r^c\psi\}$ for $c = 0,1,2$.
\end{proposition}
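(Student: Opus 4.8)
The plan is to propagate the bound already established on the slice $\{r=\rred\}$ in Corollary \ref{CorFinEnEstHorizons} inwards, from $r=\rred$ down to $r=r_0$, by a soft energy estimate. The decisive simplification compared to Propositions \ref{PropLeftHp} and \ref{PropRightHP} is that $[r_0,\rred]$ is a compact subinterval of $(r_-,r_+)$, so that $\Delta$ stays bounded away from zero there, $|\Delta|\geq\delta>0$. Hence the principal part of the Teukolsky operator is uniformly non-degenerate — recall from \eqref{TeukolskyWave} that $\rho^{-2}\mathcal{T}_{[s]}=\Box_g+(\text{lower order})$ — and there is neither a red-shift nor a blue-shift to exploit or to fight. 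Consequently I can discard both the red-shift enhancement (the factor $1+\lambda\Delta$) and the zeroth-order modification (the $\mu e^{\eta r}$ term) of the two preceding proofs, and close the estimate by a Grönwall argument in $r$ alone.

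Concretely, by \eqref{TeukolskyWave} the equation reads $\Box_g\psi+X\psi+f\psi=0$ with $X$ a first-order operator and $f$ a potential, both with coefficients uniformly bounded on $\{r_0\leq r\leq\rred\}$. I would run the standard energy estimate for $\Box_g$ by pairing the equation with $\chi(v_+)\,N\overline\psi$, where $N$ is a fixed smooth future-directed timelike vector field on this compact region (one exists since the spacetime is time-oriented). After integration over the spheres — using Proposition \ref{PropIntPartsSphere} and Lemma \ref{LemStrongerBoundPhiDer} to express the angular contribution of $\swl\psi$ as $\sum_i|\widetilde Z_{i,+}\psi|^2$ — the multiplier identity takes the schematic form $\rd_{v_+}(A)+\rd_r(B)\eai(\text{bulk})$. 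Because the level sets of $r$ are spacelike (as noted after \eqref{NormF}) and $N$ is future-directed timelike, the flux $B$ through $\{r=\mathrm{const}\}$ is coercive,
\[
B\;\gtrsim\;\chi(v_+)\Big(|\rd_r\psi|^2+|\rd_{v_+}\psi|^2+\sum_i|\widetilde Z_{i,+}\psi|^2+|\psi|^2\Big),
\]
uniformly on $[r_0,\rred]$, which is exactly where $|\Delta|\geq\delta$ is used (the zeroth-order term being recovered, as in Step 6 of Proposition \ref{PropRightHP}, by the fundamental theorem of calculus in $r$ and Minkowski's inequality). Crucially, no sign on the bulk is needed: since $\rd_r\chi=0$ while $|\rd_{v_+}\chi|\lesssim\chi$ on all of $\R$ (as $\chi(v_+)=v_+^{q_r}$ for $v_+\geq1$, $\chi(v_+)=|v_+|^{q_l}$ for $v_+\leq-1$, and $\chi$ is smooth and positive in between), every bulk term is bounded in absolute value by $C\chi(v_+)$ times the same energy density, with $C$ depending only on $r_0$ and the black-hole parameters.

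Integrating $\rd_{v_+}(A)+\rd_r(B)$ over the slab $\{r'\leq r\leq\rred\}$ with respect to $dv_+\wedge dr\wedge\vols$, the $\rd_{v_+}(A)$ term integrates to boundary contributions at $v_+\to\pm\infty$ that vanish by the finiteness of the weighted energies together with the decay built into Assumption \ref{AssumptionReg} and \eqref{AssumpDecayRHp}. Writing $E(\rho):=\int_{\{r=\rho\}}B$, I therefore obtain
\[
E(r')\;\leq\;E(\rred)+C\int_{r'}^{\rred}E(\tilde r)\,d\tilde r,
\]
and Grönwall's inequality yields $E(r')\leq C\,E(\rred)$ uniformly for $r'\in[r_0,\rred]$. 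Since $E(\rred)$ is finite by Corollary \ref{CorFinEnEstHorizons}, this proves \eqref{PropEqNoShift} for $f=\psi$. The remaining cases follow by commutation: $\rd_{v_+}$, $\rd_{\varphi_+}$ and $\mathcal{Q}_{[s]}$ commute exactly with $\mathcal{T}_{[s]}$, so the argument applies verbatim with $\psi$ replaced by $\rd_{v_+}^a\rd_{\varphi_+}^b\mathcal{Q}_{[s]}^d\psi$; commuting with $\rd_r$ as in \eqref{EqTwiceDiffTeukP} does not preserve the equation, but away from the horizons the extra terms it generates are of strictly lower order with coefficients uniformly bounded on $[r_0,\rred]$ (there are no $\Delta^{-1}$ singularities), so they are absorbed into the bulk and hence into the Grönwall loop. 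One iterates for $c=0,1,2$, at each step feeding in the previously established interior bound and the corresponding datum on $\{r=\rred\}$ from Corollary \ref{CorFinEnEstHorizons}.

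I expect no genuine obstacle here: this is the mildest of the energy estimates of Section \ref{SecEE}, precisely because the region stays uniformly away from both horizons, so that the degeneration of $\Delta$ — the source of all the delicate red-/blue-shift bookkeeping near $\Hp$ and $\CH$ — is entirely absent, and coercivity comes for free from the spacelike character of the $\{r=\mathrm{const}\}$ slices. The only points demanding (routine) care are the coercivity of the flux $B$, where the $\rd_{\varphi_+}$–$\rd_{v_+}$ cross terms are absorbed exactly as in \eqref{EqEstLeadTermHpl} using $|a|<M<r_+$, and the elementary weight bound $|\rd_{v_+}\chi|\lesssim\chi$ noted above.
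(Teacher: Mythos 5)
Your overall strategy is the paper's strategy in slightly different clothing: between the horizons $-\rd_r$ is future-directed timelike, so the paper's multiplier $-\chi(t)e^{\lambda r}\overline{\rd_r\psi}$ \emph{is} a weighted timelike multiplier, and taking $\lambda$ large in the exponential weight is precisely the integrated form of your Gr\"onwall argument; your flux coercivity through $\{r=\mathrm{const}\}$ is the paper's \eqref{EqCoerBNoShift}, proved via the determinant computation \eqref{EqDetSmallMinor} together with Lemma \ref{LemStrongerBoundPhiDer}, and your commutation scheme for $c=1,2$ matches the paper's Step 6. One of your simplifications, however, is not innocent even at the minor level: having discarded the $\mu e^{\eta r}$ modification, your flux $B$ is \emph{not} coercive as claimed, because the spherical integration by parts of $\swl\psi$ leaves a $-(s+s^2)|\psi|^2$ term in the flux, only partially compensated by the $s^2|\psi|^2$ contained in $\sum_i|\zt_{i,+}\psi|^2$; so $B$ controls the derivative terms only up to a negative multiple of $\chi|\psi|^2$. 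Your FTC-in-$r$ recovery can be used to close a Gr\"onwall loop for a suitably augmented energy, but the assertion that $B$ itself is coercive is false as written — the paper's $\mu e^{\eta r}$ term exists exactly to repair this.

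The serious gap is your treatment of the unbounded $v_+$-direction. The slab $\{r'\leq r\leq\rred\}$ is infinite in $t$, and you dispose of the $\rd_{v_+}(A)$ contribution by asserting that the boundary terms at $v_+\to\pm\infty$ vanish ``by the finiteness of the weighted energies together with the decay built into Assumption \ref{AssumptionReg} and \eqref{AssumpDecayRHp}''. This is circular: finiteness of the weighted energies on \emph{interior} constant-$r$ slices is exactly the conclusion of the proposition, and the assumptions you quote give decay only \emph{on the horizons}, not on an interior slice as $v_+\to+\infty$, i.e.\ towards the corner between $\Hp_r$ and $\CH_r$, where nothing is known a priori. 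The necessary device is the one in the paper's Steps 3 and 5: truncate the slab by the \emph{spacelike} level sets $\{f^\pm=t_0\}$, prove that the flux combinations $B\pm\frac{|\Delta|}{2Mr}A$ through them are non-negative — this is \eqref{EqCoerABNoShift}, i.e.\ the positive definiteness of the matrices $M_\pm$, a genuine computation because of the spin-weighted cross terms and the zeroth-order deficit above — and only then let $t_0\to\infty$. Your timelike-multiplier framework can absorb this repair (positivity of the flux of a timelike multiplier through a spacelike hypersurface is the morally correct statement, but it must be verified by hand here), yet as written the step is simply absent, and without it the multiplier identity cannot even be integrated over the region, so the argument does not close.
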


\begin{proof}
We use Boyer-Lindquist coordinates for the proof. Since the region under consideration in \eqref{PropEqNoShift} is bounded away from $r_-$ and $r_+$, we have that $\rd_r|_+$ is a bounded linear combination of $\rd_t, \rd_\varphi, \rd_r|_{\mathrm{BL}}$. Thus, it is straightforward to see that \eqref{PropEqNoShift} follows from
\begin{equation}\label{EqPropMainBoyer}
\sup_{r' \in [r_0, \rred]} \sum_{0 \leq i_1 + i_2 + i_3 + j + k \leq 1} \int_{\{r=r'\}} \chi(t)  |\widetilde Z_{1}^{i_1}\widetilde Z_{2}^{i_2}\widetilde Z_{3}^{i_3}\rd_{t}^{j}\rd_r^{k}f|^2 \,\vols\,dt  \leq C
\end{equation}
for $f \in \{\rd_t^a \rd_\varphi^b (\rd_r|_{\mathrm{BL}})^c\psi, \rd_t \rd_t^a \rd_\varphi^b (\rd_r|_{\mathrm{BL}})^c\psi,  \mathcal{Q}_{[s]} \rd_t^a \rd_\varphi^b (\rd_r|_{\mathrm{BL}})^c\psi\}$, with $0 \leq a + b + c \leq 2$.  In the following we will prove \eqref{EqPropMainBoyer}.

\underline{\bf{Step 1: The multiplier.}} We start out from the following multiplier identity, where $\lambda, \mu, \eta >0$ are constants to be chosen and $\chi$ is as above:
\begin{equation}\label{EqPropMultiplierNoShift}
0 = \Rea( \mathcal{T}_{[s]} \psi (-\chi(t) e^{\lambda r} \overline{\rd_r \psi})) +  \underbrace{\rd_r(\chi(t) \mu e^{\eta r}|\psi|^2) - \chi(t)\mu \eta e^{\eta r} |\psi|^2 - 2\chi(t) \mu e^{\eta r} \Rea(\overline{\psi}\rd_r \psi)}_{=0} \;.
\end{equation}
After integration over the spheres, and using the form \eqref{TeukolskyEquation3} of the Teukolsky equation, the right hand side of \eqref{EqPropMultiplierNoShift} equals the sum of
\begin{enumerate}
\item the sum of all the terms on the right hand side of \ref{ApBoyerR}
\item the terms
\begin{equation*}
\begin{split}
&-\chi(t) e^{\lambda r} 2(r-M)(1-s)|\rd_r\psi|^2 - \chi(t) e^{\lambda r} 2s\frac{a(r-M)}{\Delta} \Rea(\rd_\varphi \psi \overline{\rd_r \psi}) \\
&-\chi(t)e^{\lambda r} 2s \Big[\frac{M(r^2 - a^2)}{\Delta} - r -ia \cos \theta\Big]\Rea(\rd_t \psi \overline{\rd_r \psi}) + \chi(t) e^{\lambda r}2s \Rea(\psi \overline{\rd_r \psi})
\end{split}
\end{equation*}
\item the underbraced terms in \eqref{EqPropMultiplierNoShift}.
\end{enumerate}
As before, it will turn out that we can derive a boundedness statement if all the bulk terms are negative.

\underline{\bf{Step 2: Estimating all bulk terms that are quadratic in derivatives of $\psi$.}}
We collect the leading order terms in $\lambda$ from \ref{ApBoyerR}, which are the wavily underlined terms:
\begin{equation}\label{EqPropNoShiftLeadLambda}
\frac{1}{2} \chi(t) e^{\lambda r} \lambda \Big[ \Big( \frac{(r^2 + a^2)^2}{\Delta} - a^2 \sin^2 \theta\Big) |\rd_t \psi|^2 + \frac{4Mar}{\Delta} \Rea(\rd_\varphi \psi \overline{\rd_t \psi}) + \frac{a^2}{\Delta} |\rd_\varphi \psi|^2 + \Delta |\rd_r \psi|^2 - \sum_i|\zt_i \psi|^2\Big]
\end{equation}
It turns out that in order to control the non-definite second term it is actually not sufficient just to use the $|\rd_\varphi \psi|^2$ control of the last term. Instead, we need to use the strengthened control provided by Lemma \ref{LemStrongerBoundPhiDer} together with the $a^2 \sin^2 \theta$ contribution of the first term in \eqref{EqPropNoShiftLeadLambda}. Thus, using Lemma \ref{LemStrongerBoundPhiDer}, we rewrite \eqref{EqPropNoShiftLeadLambda} as
\begin{equation}\label{EqRewriteNoShiftLambda}
\begin{split}
\frac{1}{2} \chi(t) e^{\lambda r} \lambda \Big[ \Big( &\frac{(r^2 + a^2)^2}{\Delta} - a^2 \sin^2 \theta\Big) |\rd_t \psi|^2 + \frac{4Mar}{\Delta} \Rea([is \cos \theta \cdot \psi + \rd_\varphi \psi] \overline{\rd_t \psi}) + \frac{a^2}{\Delta} |is\cos \theta \cdot \psi + \rd_\varphi \psi|^2 \\
 &+ \Delta|\rd_r \psi|^2 - |\rd_\theta \psi|^2 - \frac{1}{\sin^2 \theta} | is \cos \theta \cdot \psi + \rd_\varphi \psi|^2  - s^2|\psi|^2\Big] \\
&- \underbrace{\frac{1}{2} \chi(t) e^{\lambda r} \lambda \Big[ \frac{4Mar}{\Delta} \Rea(is \cos \theta \cdot \psi \overline{\rd_t \psi}) +\frac{a^2}{\Delta} s^2 \cos^2 \theta | \psi|^2 + 2\frac{a^2}{\Delta} \Rea(is \cos \theta \cdot \psi \overline{\rd_\varphi \psi})\Big]}
\end{split}
\end{equation}
The underbraced terms will be treated as error terms. We show now that for $r \in [r_0, \rred]$ with $r_0 > r_-$ the remaining terms (modulo zeroth order terms) are uniformly bounded from above by
\begin{equation}\label{EqPropNoShiftUniformControlLambda}
-\frac{1}{2} \chi(t) e^{\lambda r} \lambda \cdot c \big( |\rd_t \psi|^2 + |\rd_r \psi|^2 + \sum_i|\zt_i \psi|^2\big)
\end{equation}
for some $c>0$ depending on $r_- <r_0< \rred < r_+$. For this it is clearly sufficient to show that the non-underbraced terms in \eqref{EqRewriteNoShiftLambda} are uniformly negative definite in $\rd_t \psi$ and $\frac{1}{\sin \theta} (is \cos \theta \cdot \psi + \rd_\varphi \psi)$. A straightforward computation gives
\begin{equation}\label{EqDetSmallMinor}
\det \begin{pmatrix}
\frac{(r^2 + a^2)^2 }{\Delta} - a^2 \sin^2 \theta & \frac{2Mar}{\Delta} \sin \theta \\ 
\frac{2Mar}{\Delta} \sin \theta & \frac{a^2}{\Delta}\sin^2 \theta -1 
\end{pmatrix}
= - \frac{1}{\Delta} \big(r^2 + a^2\cos^2 \theta\big)^2 >0 \;,
\end{equation}
which shows the claim.

We can now choose $\lambda >0$ large enough such that all bulk terms that are quadratic in derivatives of $\psi$ can be controlled in absolute value by $-\frac{1}{4}\times$\eqref{EqPropNoShiftUniformControlLambda}. The underbraced bulk terms in \eqref{EqRewriteNoShiftLambda} of the form $\Rea(\psi \overline{\rd \psi})$, which are also \emph{leading order} in $\lambda$, can be estimated by $|\Rea(\psi \overline{\rd \psi})| \leq \frac{1}{2} \varepsilon |\rd \psi|^2 + \frac{1}{2} \varepsilon^{-1} |\psi|^2$, where we choose $\varepsilon>0$ so small that the arising first term can be bounded by $-\frac{1}{4}\times$\eqref{EqPropNoShiftUniformControlLambda}. It thus only remains to estimate the zeroth order bulk terms, which will be done in Step 4.

\underline{\bf{Step 3: Estimating boundary terms.}}
We collect all the total derivatives appearing on the right hand side of \eqref{EqPropMultiplierNoShift}. They are of the form $\rd_t (A)$ and $\rd_r(B)$, where
\begin{equation*}
A = \chi(t) e^{\lambda r} \Big[ \Big( \frac{(r^2 + a^2)^2}{\Delta} - a^2 \sin^2 \theta\Big) \Rea(\rd_t \psi \overline{\rd_r \psi}) + \frac{2Mar}{\Delta} \Rea(\rd_\varphi \psi \overline{\rd_r \psi})\Big] 
\end{equation*}
and
\begin{equation}\label{EqBNoShift}
\begin{split}
B =\frac{1}{2} \chi(t) e^{\lambda r}\Big[&-\Big(\frac{(r^2 + a^2)^2}{\Delta} - a^2 \sin^2 \theta\Big)|\rd_t \psi|^2 - \frac{4Mar}{\Delta} \Rea(\rd_\varphi \psi \overline{\rd_t \psi}) - \frac{a^2}{\Delta} |\rd_\varphi \psi|^2 \\
&- \Delta | \rd_r \psi|^2 - (s + s^2) |\psi|^2 + \sum_i |\zt_i \psi|^2 \Big] + \chi(t) \mu e^{\eta r} |\psi|^2 \;.
\end{split}
\end{equation}
The coercivity of $B$,
\begin{equation}\label{EqCoerBNoShift}
B \gtrsim \chi(t) \big( |\rd_r \psi|^2 + |\rd_t \psi|^2 + \sum_i |\zt_i \psi|^2 + |\psi|^2\big) \;,
\end{equation}
in the region $r_0 \leq r \leq \rred$ is established using the same computation as in Step 2: First, we use Lemma \ref{LemStrongerBoundPhiDer} and moreover replace every $\rd_\varphi \psi$ in \eqref{EqBNoShift} by $\frac{1}{\sin \theta}(is \cos \theta \cdot \psi + \rd_\varphi \psi)$, thus obtaining again error terms. The lower bound of $B$ in $\chi(t)\big(|\rd_r \psi|^2 + |\rd_t \psi|^2 + \sum_i |\zt_i \psi|^2\big)$ then follows again from \eqref{EqDetSmallMinor} \emph{at the expense of a large zeroth order error term}. We now choose $\mu$ as a function of $\eta$ such that the last term in \eqref{EqBNoShift}, $\chi(t)\mu e^{\eta r}|\psi|^2$, is large enough in the region $r_0 \leq r \leq \rred$ to dominate this error term. This yields \eqref{EqCoerBNoShift}.

Next we establish the coercivity of $B \pm \frac{\Delta}{2Mr} A$,
\begin{equation}\label{EqCoerABNoShift}
B \pm \frac{\Delta}{2Mr} A \gtrsim \chi(t) \big( |\rd_r \psi|^2 + |\rd_t \psi|^2 + \sum_i |\zt_i \psi|^2 + |\psi|^2\big) \;,
\end{equation}
in the region $r_0 \leq r \leq \rred$. It follows again from Lemma \ref{LemStrongerBoundPhiDer} that we have
\begin{equation}\label{EqReplaceNoShift}
\begin{split}
B \pm \frac{\Delta}{2Mr} A &= \chi(t) e^{\lambda r} \Big[ - \frac{1}{2} \Big( \frac{(r^2 + a^2)^2}{\Delta} - a^2 \sin^2 \theta \Big) |\rd_t \psi|^2 - \frac{2Mar}{\Delta} \sin \theta \Rea\big(\frac{1}{\sin \theta} (is \cos \theta \cdot \psi + \rd_\varphi \psi) \overline{\rd_t \psi}\big) \\[3pt]
&\qquad \qquad \quad  - \frac{1}{2} \frac{a^2}{\Delta} \sin^2\theta \Big|\frac{1}{\sin \theta}(is \cos \theta \cdot \psi + \rd_\varphi \psi)\Big|^2  -\frac{1}{2}\Delta |\rd_r \psi|^2 - \frac{1}{2}(s + s^2) |\psi|^2 \\[3pt]
&\qquad \qquad \quad + \frac{1}{2}\big(|\rd_\theta \psi|^2 + \frac{1}{\sin^2 \theta}|is \cos \theta \cdot \psi + \rd_\varphi \psi|^2 + s^2 |\psi|^2\big)\Big] + \chi(t) \mu e^{\eta r} |\psi|^2 \\[3pt]
&\pm \chi(t) e^{\lambda r} \Big[\Big(\frac{(r^2+ a^2)^2}{2Mr} - \frac{\Delta a^2 \sin^2\theta}{2Mr}\Big) \Rea(\rd_t \psi \overline{\rd_r \psi}) + a \sin \theta \Rea\big( \frac{1}{\sin \theta}(is \cos \theta \cdot \psi + \rd_\varphi \psi)\overline{\rd_r \psi}\big) \Big] \\[3pt]
&-\underbrace{\chi(t) e^{\lambda r}\Big[ - \frac{2Mar}{\Delta} \Rea(is \cos \theta \cdot \psi \overline{\rd_t \psi}) - \frac{1}{2} \frac{a^2}{\Delta} s^2 \cos^2 \theta |\psi|^2 - \frac{a^2}{\Delta} \sin \theta \Rea(is \cos \theta \cdot \psi \overline{\rd_\varphi \psi})\Big]} \\[3pt]
&\mp \underbrace{\chi(t) e^{\lambda r}a \Rea(is \cos \theta \cdot \psi \overline{\rd_r \psi})}\;,
\end{split}
\end{equation}
where the underbraced terms are considered as error terms. Again, we consider the part of the above expression that is quadratic in $\{\rd_t \psi, \frac{1}{\sin \theta}(is \cos \theta \cdot \psi + \rd_\varphi \psi), \rd_r \psi\}$ as a quadratic form. Its associated matrix is
\begin{equation*}
M_{\pm} := \begin{pmatrix}
 - \frac{1}{2} \Big( \frac{(r^2 + a^2)^2}{\Delta} - a^2 \sin^2 \theta \Big) & -\frac{Mar}{\Delta} \sin \theta &  \pm \frac{1}{2} \Big(\frac{(r^2+ a^2)^2}{2Mr} - \frac{\Delta a^2 \sin^2\theta}{2Mr}\Big) \\
 -\frac{Mar}{\Delta} \sin \theta & - \frac{1}{2} \big(\frac{a^2}{\Delta} \sin^2 \theta -1\big) & \pm \frac{1}{2} a \sin \theta \\
 \pm \frac{1}{2} \Big(\frac{(r^2+ a^2)^2}{2Mr} - \frac{\Delta a^2 \sin^2\theta}{2Mr}\Big) & \pm \frac{1}{2} a \sin \theta & - \frac{1}{2} \Delta \;,
\end{pmatrix}
\end{equation*}
which  we claim is positive definite in the region $r_0 \leq r \leq \rred$: Obviously, the first main minor is positive, the  second main minor was computed in \eqref{EqDetSmallMinor} to be positive, and a computation gives 
\begin{equation*}
\det M_{\pm} = \frac{-\Delta}{32M^2 r^2} (r^2 + a^2 \cos^2 \theta)^2(r^2 + a^2 \cos^2\theta + 2Mr) >0  \qquad \textnormal{ for } r_0 \leq r \leq \rred\;,
\end{equation*}
from which the claim follows. Now, if necessary, choosing $\mu(\eta)$ even larger, we can control all the error terms in \eqref{EqReplaceNoShift} to obtain \eqref{EqCoerABNoShift}.

\underline{\bf{Step 4: Estimating the remaining bulk terms.}}
As familiar from the proof of Propositions \ref{PropLeftHp} and \ref{PropRightHP} we estimate the last two of the underbraced terms in \eqref{EqPropMultiplierNoShift} by
\begin{equation*}
-\chi(t) \mu \eta e^{\eta r} |\psi|^2 - 2 \chi(t) \mu e^{\eta r} \Rea(\overline{\psi} \rd_r \psi) \leq - \frac{1}{2} \chi(t) \mu \eta e^{\eta r} | \psi|^2 + 2 \chi(t) \eta^{-1} \mu e^{\eta r} |\rd_r \psi|^2 \;.
\end{equation*}
We now choose $\eta>0$ sufficiently large so that the last term can be controlled by $-\frac{1}{4} \times$ \eqref{EqPropNoShiftUniformControlLambda} and such that the first term controls all the zeroth order terms in the bulk (including those generated at the end of Step 2).

\underline{\bf{Step 5: Putting it all together.}}
After integration over the spheres we thus obtain from \eqref{EqPropMultiplierNoShift}
\begin{equation}\label{EqPutAllTogetherNoShift}
\rd_t(A) + \rd_r(B) \underset{\mathrm{a.i.}}{\gtrsim} \chi(t)\big(|\rd_r \psi|^2 + |\rd_t \psi|^2 + \sum_i|\zt_i \psi|^2 + |\psi|^2 \big) 
\end{equation}
for $r_0 \leq r \leq \rred$.
Let $r' \in [r_0, \rred)$. Integrating \eqref{EqPutAllTogetherNoShift} over $\{r' \leq r \leq \rred\} \cap \{f^- \leq t_0\} \cap \{f^+ \leq t_0\}$ with respect to $\frac{1}{\rho^2} \vol = dt \wedge dr \wedge \vols$, where $t_0 \gg 1$, and using that on the level sets of $f^-$ as well as on those of $f^+$ we have $\big|\frac{dr}{dt}\big| = \frac{|\Delta|}{2Mr}$, we obtain
\begin{equation*}
\begin{split}
\int\limits_{\mathclap{\substack{\{r=r'\}  \cap \{ f^- \leq t_0\} \\ \cap \{f^+ \leq t_0\} }}} B \, \vols dt &+ \int \limits_{\mathclap{\substack{\{f^- = t_0\} \\ \cap \{r' \leq r \leq \rred\}}}}\big(B + \frac{|\Delta|}{2Mr} A\big) \, \vols dt + \int\limits_{\mathclap{\substack{\{f^+ = t_0\} \\ \cap \{r' \leq r \leq \rred\}}}} \big( B - \frac{|\Delta|}{2Mr} A \big) \, \vols dt \\
&+ c \int\limits_{\mathclap{\substack{\{r' \leq r \leq \rred\} \cap \{f^- \leq t_0\} \\ \cap \{f^+ \leq t_0\}}}}\chi(t)\big(|\rd_r \psi|^2 + |\rd_t \psi|^2 + \sum_i |\zt_i \psi|^2 + |\psi|^2\big) \, \vols dt \,dr \\
&\leq \int\limits_{\mathclap{\substack{\{r=\rred\} \cap \{f^- \leq t_0\} \\ \cap \{f^+ \leq t_0\}}}} B \, \vols dt \;,
\end{split}
\end{equation*}
where $c>0$ is a constant depending on $r_0$. Using \eqref{EqCoerABNoShift} to infer the positivity of the second and third term, \eqref{EqCoerBNoShift}, and letting $t_0 \to \infty$, we obtain
\begin{equation}\label{EqFinEnEstNoShift}
\begin{split}
\int\limits_{\mathclap{\substack{\{r=r'\}  }}}  \chi(t)\big(|\rd_r \psi|^2 &+ |\rd_t \psi|^2 + \sum_i |\zt_i \psi|^2 + |\psi|^2\big) \, \vols dt + \int\limits_{\mathclap{\substack{\{r' \leq r \leq \rred\} }}}\chi(t)\big(|\rd_r \psi|^2 + |\rd_t \psi|^2 + \sum_i |\zt_i \psi|^2 + |\psi|^2\big) \, \vols dt \,dr \\
&\leq C \int\limits_{\mathclap{\substack{\{r=\rred\} }}} B(\psi) \, \vols dt \;,
\end{split}
\end{equation}
where $C>0$ is a constant depending on $r_0$. This, together with the trivial upper bounds on $B$ and Corollary \ref{CorFinEnEstHorizons} (note that $\rd_r|_{\mathrm{BL}}$ is a bounded linear combination of $\rd_{v_+}, \rd_{\varphi_+}, \rd_r|_+$), gives \eqref{EqPropMainBoyer} for $f = \psi$.

\underline{\bf{Step 6: Estimating higher derivatives.}} 
Because $\rd_t$, $\rd_\varphi$, $\mathcal{Q}_{[s]}$ commute with $\mathcal{T}_{[s]}$, it follows directly that \eqref{EqFinEnEstNoShift} also holds with $\psi$ replaced by $\rd_t^a \rd_\varphi^b \psi$, $\rd_t \rd_t^a \rd_\varphi^b \psi$, and $\mathcal{Q}_{[s]}\rd_t^a \rd_t^b \psi$ with $0 \leq a + b \leq 2$. The conclusion of Corollary \ref{CorFinEnEstHorizons} implies that the boundary term at $\{r = \rred\}$ is bounded, thus giving \eqref{EqPropMainBoyer} with $c =0$.

Moreover, we observe that
\begin{equation}\label{EqCommuBoyer}
\begin{split}
[\rd_r, \mathcal{T}_{[s]}] \psi &= \rd_r \mathcal{T}_{[s]} \psi - \mathcal{T}_{[s]} \rd_r \psi \\
&=- \rd_r\Big(\frac{(r^2 + a^2)^2}{\Delta}\Big) \rd_t^2 \psi - \rd_r\Big(\frac{4Mar}{\Delta}\Big) \rd_t \rd_\varphi \psi - \rd_r\Big(\frac{a^2}{\Delta}\Big) \rd_\varphi^2 \psi + \dotuline{(\rd_r \Delta) \rd_r^2 \psi} \\
&+ 2(1-s)\rd_r \psi + \rd_r\Big(2s \frac{a(r-M)}{\Delta}\Big) \rd_\varphi \psi + \rd_r(2s \Big(\frac{M(r^2 - a^2)}{\Delta} - r\Big))\rd_t \psi \;.
\end{split}
\end{equation}
Thus, all the additional bulk terms in the energy estimate
\begin{equation*}
\begin{split}
0 = \Rea( \mathcal{T}_{[s]} \rd_r\psi (-\chi(t) e^{\lambda r} \overline{\rd_r^2 \psi})) &+ \Rea([\rd_r, \mathcal{T}_{[s]}] \psi (-\chi(t) e^{\lambda r} \overline{\rd_r^2 \psi})) \\
&+  \underbrace{\rd_r(\chi(t) \mu e^{\eta r}|\rd_r\psi|^2) - \chi(t)\mu \eta e^{\eta r} |\rd_r\psi|^2 - 2\chi(t) \mu e^{\eta r} \Rea(\overline{\rd_r\psi}\rd_r^2 \psi)}_{=0} \;,
\end{split}
\end{equation*}
after Cauchy Schwarz, have either already been controlled by the integrated \eqref{EqPropMainBoyer} with $c=0$ (or the bulk term in \eqref{EqFinEnEstNoShift} with $\psi$  replaced by $\rd_t^a \rd_\varphi^b\psi$) or are at the level of energy for $\rd_r \psi$ (i.e., the dotted term in \eqref{EqCommuBoyer}). We thus also obtain \eqref{EqFinEnEstNoShift} with $\psi$ replaced by $\rd_r \psi$. Since $\rd_r|_{\mathrm{BL}}$ is a bounded linear combination of $\rd_{v_+}, \rd_{\varphi_+}, \rd_r|_+$, the boundary term at $\{r = \rred\}$ is bounded by Corollary \ref{CorFinEnEstHorizons}. We can again commute with $\rd_t, \rd_\varphi, \mathcal{Q}_{[s]}$ to obtain \eqref{EqPropMainBoyer} for $c = 1$.

Finally, commuting \eqref{TeukolskyEquation3} once more with $\rd_r$ we find that $[\rd_r^2, \mathcal{T}_{[s]}]$ is a bounded linear combination of the terms $\rd_r \rd_t^2 \psi, \rd_t^2 \psi, \rd_r\rd_t \rd_\varphi\psi, \rd_t \rd_\varphi \psi, \rd_r \rd_\varphi^2 \psi, \rd_\varphi^2 \psi, \dotuline{\rd_r^3 \psi}, \rd_r^2 \psi, \rd_\varphi \psi, \rd_t \psi, \rd_r \rd_t \psi$. We can repeat the same energy estimate, but now for $\rd_r^2 \psi$. The dotted term is again at the level of the energy for $\rd_r^2\psi$ and all the other terms have already been controlled. Commutation with $\rd_t$ and $\mathcal{Q}_{[s]}$ then concludes the proof.
\end{proof}

\subsubsection{Corollaries}

The following corollary is needed for Teukolsky's separation of variables in Theorem \ref{ThmSeparationVariables}.

\begin{corollary}\label{CorollaryNoShift}
Under the assumptions of Section \ref{SecAssumptions} and for $r_- < r_0 < r_1 < r_+$ there exists a constant $C>0$ (depending on $r_0, r_1$) such that for $f \in \{  \psi, \rd_r \psi, \rd_r^2 \psi\}$
\begin{equation}\label{EqCorH2NoShift}
\sup_{r' \in [r_0,r_1]} \sum_{0 \leq i_1 + i_2 + i_3 + j \leq 2} \int\limits_{\{r = r'\}} \chi(v_+) |\widetilde Z_{1,+}^{i_1}\widetilde Z_{2,+}^{i_2}\widetilde Z_{3,+}^{i_3}\rd_{v_+}^{j}f|^2 \,\vols\,dv_{+}  \leq C
\end{equation}
holds  and
\begin{equation}\label{EqCorSobolevNoShift}
|f(v_+, r,\theta, \varphi_+)| \leq \frac{C}{\sqrt{\chi(v_+)}}
\end{equation}
holds for $r \in (r_0, r_1)$ and all $(\theta, \varphi_+) \in \Sp^2 \setminus \{ \theta = 0, \pi\}$.
\end{corollary}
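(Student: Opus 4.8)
The plan is to combine the two energy estimates already at our disposal --- Proposition \ref{PropEnergyEstNoShift} on $\{r_0 \le r \le \rred\}$ and Corollary \ref{CorFinEnEstHorizons} on $\{\rred \le r \le r_1\}$, which together cover the whole slab $r \in [r_0, r_1]$ --- and to upgrade them from the ``combined first order'' control of $\zt^{i_1}_{1,+}\zt^{i_2}_{2,+}\zt^{i_3}_{3,+}\rd_{v_+}^{j}\rd_r^{k}(\cdot)$ with $i_1+i_2+i_3+j+k\le 1$ to the ``second order in $(\zt_{\cdot,+},\rd_{v_+})$'' control demanded in \eqref{EqCorH2NoShift}. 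Observe first that for each $c\in\{0,1,2\}$ these results already supply, with the weight $\chi(v_+)$, uniform $L^2_{v_+}L^2(\Sp^2)$ bounds on $\rd_r^c\psi$, on its first derivatives $\zt_{i,+}\rd_r^c\psi$ and $\rd_{v_+}\rd_r^c\psi$, on the pure second $v_+$-derivative $\rd_{v_+}^2\rd_r^c\psi$ and the mixed derivative $\zt_{i,+}\rd_{v_+}\rd_r^c\psi$ (apply the order-one estimate to $f=\rd_{v_+}\rd_r^c\psi$, which sits in the admissible list), and --- crucially --- on $\mathcal{Q}_{[s]}\rd_r^c\psi$ (take $f=\mathcal{Q}_{[s]}\rd_r^c\psi$, order zero). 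Thus the only second-order derivatives in \eqref{EqCorH2NoShift} not yet controlled are the purely angular ones $\zt_{j,+}\zt_{i,+}\rd_r^c\psi$.

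To obtain these I would exploit the Carter operator. From Definition \ref{DefCarterOp}, written in $(v_+,\varphi_+,r,\theta)$ coordinates, $\swl \rd_r^c\psi = \mathcal{Q}_{[s]}\rd_r^c\psi - a^2\sin^2\theta\,\rd_{v_+}^2\rd_r^c\psi + 2isa\cos\theta\,\rd_{v_+}\rd_r^c\psi$, and since $a^2\sin^2\theta$ and $2sa\cos\theta$ are bounded, the right-hand side is controlled in $L^2_{v_+}(\chi)L^2(\Sp^2)$ by the preceding step; hence so is $\swl\rd_r^c\psi$. Lemma \ref{LemSecondAngularDer} then turns this control of the spin-weighted Laplacian into control of all second angular derivatives: integrating the identity there over $\Sp^2$ gives $\int_{\Sp^2}\sum_{i,j}|\zt_{j,+}\zt_{i,+}\rd_r^c\psi|^2\,\vols = \int_{\Sp^2}|\swl\rd_r^c\psi|^2\,\vols + 2(s+s^2)\int_{\Sp^2}\sum_i|\zt_{i,+}\rd_r^c\psi|^2\,\vols - (s+s^2)^2\int_{\Sp^2}|\rd_r^c\psi|^2\,\vols$, and every term on the right is already bounded. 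Multiplying by $\chi(v_+)$ and integrating in $v_+$, uniformly in $r'\in[r_0,r_1]$, closes \eqref{EqCorH2NoShift} for $f\in\{\psi,\rd_r\psi,\rd_r^2\psi\}$.

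The pointwise bound \eqref{EqCorSobolevNoShift} is then a weighted Sobolev embedding. For fixed $r\in(r_0,r_1)$ set $h:=\sqrt{\chi(v_+)}\,f$; since $\chi(v_+)=v_+^{q_r}$ for $v_+\ge 1$ and $\chi(v_+)=|v_+|^{q_l}$ for $v_+\le -1$, we have $|\chi'/\chi|\lesssim 1$ and $|\chi''/\chi|\lesssim 1$, so each $v_+$-derivative falling on $\sqrt{\chi}$ costs only a bounded factor, and \eqref{EqCorH2NoShift} gives $h\in H^2$ in the combined variables $(v_+,\theta,\varphi_+)$ in the spin-weighted sense (the angular operators $\zt_{i,+}$ commute with the $v_+$-dependent weight). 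Applying Lemma \ref{LemSobolev} fibrewise in $v_+$, using that $\rd_{v_+}$ commutes through $e^{\pm is\varphi_+}$, shows $e^{\pm is\varphi_+}h\in H^2(\R_{v_+}\times\Sp^2_\pm)$, a three-dimensional domain, so the standard Sobolev embedding $H^2\hookrightarrow L^\infty$ yields $|h|\le C$ and hence $|f|\le C/\sqrt{\chi(v_+)}$, uniformly for $r\in(r_0,r_1)$ and $(\theta,\varphi_+)\in\Sp^2\setminus\{\theta=0,\pi\}$.

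The routine energy-estimate content having been discharged in Proposition \ref{PropEnergyEstNoShift} and Corollary \ref{CorFinEnEstHorizons}, the only genuine step here is the passage from control of $\mathcal{Q}_{[s]}\rd_r^c\psi$ to control of the full angular Hessian $\zt_{j,+}\zt_{i,+}\rd_r^c\psi$. I expect the main obstacle to be merely the bookkeeping of the lower-order terms in the commutator identity of Lemma \ref{LemSecondAngularDer} and the verification that the weighted Sobolev embedding is insensitive to the slowly varying factor $\chi$, rather than anything structurally difficult.
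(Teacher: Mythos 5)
Your proof is correct and follows essentially the same route as the paper: first-order control from Proposition \ref{PropEnergyEstNoShift} and Corollary \ref{CorFinEnEstHorizons} applied to $f$, $\rd_{v_+}f$ and $\mathcal{Q}_{[s]}f$, the Carter identity to recover $L^2$-control of $\swl \rd_r^c\psi$, Lemma \ref{LemSecondAngularDer} for the purely angular second derivatives, and then Lemma \ref{LemSobolev} together with the three-dimensional Sobolev embedding for the pointwise bound. The only (harmless) cosmetic difference is in the last step: you absorb the weight by setting $h=\sqrt{\chi(v_+)}\,f$ and apply a single global embedding on $\R_{v_+}\times \Sp^2_\pm$, whereas the paper applies the embedding on the tails $(v_0,\infty)\times\Sp^2_\pm$ and reads off the decay from the smallness $C/v_0^{q_r}$ of the tail integrals.
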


\begin{proof}
We begin by proving the bound \eqref{EqCorH2NoShift}; first for $f = \psi$. Then all the terms with $0 \leq i_1 + i_2 + i_3 + j \leq 1$ are controlled by \eqref{PropEqNoShift} and \eqref{CorConclusionForNoShift2} with $f = \psi$. Using $f = \rd_{v_+} \psi$ in \eqref{PropEqNoShift}  and \eqref{CorConclusionForNoShift2} extends control to all terms except those with $i_1 + i_2 + i_3 = 2$. We now use $f = \mathcal{Q}_{[s]} \psi$ in \eqref{PropEqNoShift} and \eqref{CorConclusionForNoShift2} and use just the $L^2$-control. By the Definition \ref{DefCarterOp} of the Carter operator and by the fact that we have already controlled $\rd_{v_+}\psi$ and $\rd_{v_+}^2 \psi$ in $L^2$, this gives us $L^2$-control of $\swl \psi$. Lemma \ref{LemSecondAngularDer}, together with the $L^2$-control of the first angular derivatives already obtained, now controls the remaining terms with $i_1 + i_2 + i_3 = 2$. The cases of $f = \rd_r \psi, \rd_r^2 \psi$ can be treated analogously using that \eqref{PropEqNoShift} and \eqref{CorConclusionForNoShift2} hold for $f \in \{\rd_r^k\psi, \rd_{v_+} \rd_r^k\psi, \mathcal{Q}_{[s]} \rd_r^k\psi\}$ for $k = 0,1,2$.

To prove \eqref{EqCorSobolevNoShift} we observe that for $1 \leq v_0 < \infty$ \eqref{EqCorH2NoShift} implies $$\sup_{r' \in [r_0,r_1]} \sum_{0 \leq i_1 + i_2 + i_3 + j \leq 2} \int_{v_0}^\infty \int_{\Sp^2} |\widetilde Z_{1,+}^{i_1}\widetilde Z_{2,+}^{i_2}\widetilde Z_{3,+}^{i_3}\rd_{v_+}^{j}f(v_+, r', \theta,\varphi_+)|^2 \,\vols\,dv_{+}  \leq \frac{C}{v_0^{q_r}} \;.$$ By Lemma \ref{LemSobolev} we thus have $$\sup_{r' \in [r_0,r_1]} \sum_{0 \leq i_1 + i_2 + i_3 + j \leq 2} \int_{v_0}^\infty \int_{\Sp^2_+} | Z_{1,+}^{i_1} Z_{2,+}^{i_2} Z_{3,+}^{i_3}\rd_{v_+}^{j} e^{is \varphi_+}f(v_+, r', \theta,\varphi_+)|^2 \,\vols\,dv_{+}  \leq \frac{C}{v_0^{q_r}} $$ and similarly for the southern hemisphere $\Sp^2_-$. A standard Sobolev inequality\footnote{See for example 8.8 Theorem in \cite{LiebLoss}. By choosing suitable coordinates for $\Sp^2_+$ the domain $(v_0, \infty) \times \Sp^2_+$ can be viewed as an open subset of $\R^3$ which satisfies a cone property that is uniform in $v_0$.} applied to $e^{\pm is \varphi_+} f$ thus gives $$\sup_{r' \in [r_0,r_1]} \sup_{(\theta, \varphi_+) \in \Sp^2\setminus \{ \theta = 0, \pi\}} |f(v_0, r',\theta, \varphi_+)| \leq \frac{C}{\sqrt{v_0^{q_r}}}$$ for $v_0 \geq 1$. We proceed similarly for $v_0 \leq -1$ and for $v_+ \in [-1,1], r \in [r_0, r_1]$ the field is uniformly bounded since it is regular. This shows \eqref{EqCorSobolevNoShift}.
\end{proof}

Note that the reason for why the constant $C>0$ in Corollary \ref{CorollaryNoShift} blows up when we let $r_1$ go to $r_+$ is because of the conversion of the $v_-$-weights to $v_+$-weights, which becomes worse and worse when $r_1 \to r_+$, cf.\ the proof of Corollary \ref{CorFinEnEstHorizons}. If we restrict to the region $v_+ \geq 1$ then the constant can be chosen uniformly up to $r = r_+$:
\begin{corollary} \label{CorUniformDecayNearHP}
Under the assumptions of Section \ref{SecAssumptions} and for given $r_- < r_0 < r_+$ there exists $C>0$ such that for $ f \in \{ \psi, \rd_r \psi, \rd_r^2 \psi\}$
\begin{equation}\label{EqSob22}
|f(v_+, r,\theta, \varphi_+)| \leq \frac{C}{v_+^\frac{q_r}{2}}
\end{equation}
holds for all $r \in [r_0, r_+], v_+ \geq 1, (\theta, \varphi) \in \Sp^2\setminus \{ \theta = 0, \pi\}$.
\end{corollary}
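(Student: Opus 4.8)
The plan is to upgrade the Sobolev bound \eqref{EqCorSobolevNoShift} of Corollary \ref{CorollaryNoShift} so that the constant no longer degenerates as $r_1 \to r_+$, at the price of restricting to $\{v_+ \geq 1\}$. Recall that the blow-up of the constant in Corollary \ref{CorollaryNoShift} stems solely from converting $v_-$-weights into $v_+$-weights near $\Hp_l$, where $v_+ \to -\infty$; on the region $\{v_+ \geq 1\}$ this conversion is absent, and Proposition \ref{PropRightHP} already provides weighted energy control that is uniform all the way up to $r_+$.

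First I would establish the weighted second-order estimate
\begin{equation}\label{EqUniformH2}
\sup_{r' \in [r_0, r_+]} \sum_{0 \leq i_1 + i_2 + i_3 + j \leq 2} \int\limits_{\{r=r'\} \cap \{v_+ \geq 1\}} v_+^{q_r} |\zt_{1,+}^{i_1}\zt_{2,+}^{i_2}\zt_{3,+}^{i_3}\rd_{v_+}^{j} f|^2 \,\vols\,dv_{+} \leq C
\end{equation}
for $f \in \{\psi, \rd_r \psi, \rd_r^2 \psi\}$. For $r' \in [r_0, \rred]$ this is exactly \eqref{EqCorH2NoShift} of Corollary \ref{CorollaryNoShift} restricted to $\{v_+\geq 1\}$ (its constant is uniform on the compact interval $[r_0,\rred]$, which is bounded away from both horizons, and $\chi(v_+) = v_+^{q_r}$ there). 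For $r' \in [\rred, r_+]$ I would instead invoke Proposition \ref{PropRightHP}, using only the $k=0$ summands of \eqref{EqPropRightH} so that no degenerate factor $|\Delta|$ appears and the weight $v_+^{q_r}$ survives uniformly up to $r_+$. The passage from the one-derivative control afforded by \eqref{EqPropRightH} to the two-derivative bound \eqref{EqUniformH2} proceeds exactly as in the proof of \eqref{EqCorH2NoShift}: the terms with $i_1+i_2+i_3+j \leq 1$ come from \eqref{EqPropRightH} applied to $f$; the second $\rd_{v_+}$-derivatives and the mixed terms $\zt_{i,+}\rd_{v_+}f$ come from applying it to $\rd_{v_+} f$; and the pure second angular derivatives $\zt_{j,+}\zt_{i,+}f$ are recovered by applying \eqref{EqPropRightH} to $\mathcal{Q}_{[s]} f$, using Definition \ref{DefCarterOp} together with the already-controlled $\rd_{v_+} f$ and $\rd_{v_+}^2 f$ to extract $L^2$-control of $\swl f$, and then Lemma \ref{LemSecondAngularDer}.

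The pointwise bound then follows by the same Sobolev argument as in Corollary \ref{CorollaryNoShift}. For any $v_0 \geq 1$, since $v_+^{q_r} \geq v_0^{q_r}$ on $\{v_+ \geq v_0\}$, the estimate \eqref{EqUniformH2} yields
\begin{equation*}
\sup_{r' \in [r_0, r_+]} \sum_{0 \leq i_1 + i_2 + i_3 + j \leq 2} \int_{v_0}^\infty \int_{\Sp^2} |\zt_{1,+}^{i_1}\zt_{2,+}^{i_2}\zt_{3,+}^{i_3}\rd_{v_+}^{j} f|^2 \,\vols\,dv_{+} \leq \frac{C}{v_0^{q_r}} \;.
\end{equation*}
By Lemma \ref{LemSobolev} the corresponding bound holds for $e^{\pm is \varphi_+} f$ in the standard Sobolev spaces on the hemispheres $\Sp^2_\pm$, and the standard $H^2 \hookrightarrow L^\infty$ Sobolev embedding on $(v_0, \infty) \times \Sp^2_\pm$, viewed as an open subset of $\R^3$ with a cone property uniform in $v_0$, gives
\begin{equation*}
\sup_{r' \in [r_0, r_+]} \sup_{(\theta, \varphi_+) \in \Sp^2 \setminus \{\theta = 0, \pi\}} |f(v_0, r', \theta, \varphi_+)| \leq \frac{C}{\sqrt{v_0^{q_r}}} \;.
\end{equation*}
As $v_0 \geq 1$ is arbitrary, renaming $v_0$ as $v_+$ establishes \eqref{EqSob22}.

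The main obstacle is the recovery, uniform up to $r_+$, of the second pure angular derivatives on $[\rred, r_+]$: one must be careful that the Carter-operator argument uses only the $\Delta$-free ($k=0$) part of \eqref{EqPropRightH}, since the radial portion of that estimate degenerates at $r_+$. Everything else is a routine adaptation of Corollary \ref{CorollaryNoShift}, the only genuinely new input being that the restriction to $\{v_+ \geq 1\}$ removes the weight-conversion loss responsible for the $r_1 \to r_+$ blow-up there.
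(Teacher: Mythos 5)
Your proof is correct and takes essentially the same route as the paper: the paper likewise obtains the uniform weighted second-order estimate on $\{v_+ \geq 1\}$ from Proposition \ref{PropRightHP} (together with Proposition \ref{PropEnergyEstNoShift}, which underlies your use of \eqref{EqCorH2NoShift} on $[r_0,\rred]$), arguing ``in the same way as in the proof of \eqref{EqCorH2NoShift}'', and then concludes by Sobolev embedding. Your explicit care in using only the $k=0$ ($\Delta$-free) summands of \eqref{EqPropRightH} and in recovering the second angular derivatives via the Carter operator and Lemma \ref{LemSecondAngularDer} is exactly the content the paper leaves implicit.
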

Indeed, the statement is only needed for $f= \rd_r^2 \psi$ (for Proposition \ref{PropLimitDis}).
\begin{proof}
In the same way as in the proof of \eqref{EqCorH2NoShift} in Corollary \ref{CorollaryNoShift} one obtains
$$\sup_{r' \in [r_0,r_+]} \sum_{0 \leq i_1 + i_2 + i_3 + j \leq 2} \int\limits_{\{r = r'\} \cap \{v_+ \geq 1\}} v_+^{q_r} |\widetilde Z_{1,+}^{i_1}\widetilde Z_{2,+}^{i_2}\widetilde Z_{3,+}^{i_3}\rd_{v_+}^{j}f|^2 \,\vols\,dv_{+}  \leq C $$
from Proposition \ref{PropRightHP} (and Proposition \ref{PropEnergyEstNoShift}) for $ f \in \{ \psi, \rd_r \psi, \rd_r^2 \psi\}$, but now with a constant which is uniform up to $r=r_+$.  As before one now proves \eqref{EqSob22} by Sobolev embedding.
\end{proof}

\subsection{Estimates near the Cauchy horizons} \label{SecEECauchy}

Recall that for the method of proof of Theorem \ref{Thm1} it is convenient to first establish the blow-up result \eqref{EqThm1} along the left Cauchy horizon and then to propagate it backwards. The estimates established in this section are used to show that 1) we can indeed extend $ \psi$ to the left Cauchy horizon (along with a convergence result); 2)  the $\chi$-weighted $L^2$-bound propagates all the way to the left Cauchy horizon; 3) the singularity can be propagated backwards from the left Cauchy horizon. All this is used in Section \ref{SecPfMainThm}.

\begin{proposition}\label{PropEnCauchyHorizons}
Under the assumptions of Section \ref{SecAssumptions} there exists an $r_{\mathrm{ered}} \in (r_-, \rred)$ and a constant $C>0$ such that the following holds
\begin{align}
\int\limits_{\mathclap{\{r_- < r \leq r_{\mathrm{ered}}\}}} \chi(v_+) \Big(|\rd_r \psi|^2 + |\rd_{v_+} \psi|^2 + \sum_i | \zt_{i,+} \psi|^2 + |\psi|^2\Big) \, \vols dv_+ dr &\leq C \label{PropCauchyEqILED} \\[4pt]
\sup_{r' \in [\ered, r_-)} \int\limits_{\{r = r'\}} \chi(v_+) \Big(|\Delta| |\rd_r \psi|^2 + |\rd_{v_+} \psi|^2 + \sum_i |\zt_{i,+} \psi|^2 + |\psi|^2\Big) \, \vols dv_+  &\leq C \;, \label{PropCauchyEqBoundaryTerms}
\end{align}
where the function $\chi$ is as in Corollary \ref{CorFinEnEstHorizons}.
\end{proposition}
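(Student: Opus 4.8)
The plan is to adapt the red-shift energy estimate of Proposition \ref{PropLeftHp} to the left Cauchy horizon $\CHl$, working throughout in the $(v_+, r, \theta, \varphi_+)$-coordinate system, which is regular at $\CHl$, and using the form \eqref{TeukolskyStar} of the Teukolsky equation. The decisive structural input is the \emph{effective red-shift for the energy of $\psi$} near $\CHl$: for $s=+2$ the transport term $2(r-M)(1-s)\rd_r\psi$ in \eqref{TeukolskyStar} equals $-2(r-M)\rd_r\psi$, and since $r-M<0$ near $r_-$ (so that $\rd_r\Delta(r_-) = 2(r_--M) < 0$), contracting the equation against a suitable transversal multiplier produces a bulk term in $|\rd_r\psi|^2$ of the favourable (negative) sign. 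Crucially, the estimate is needed only at the level of $\psi$ itself, so -- in contrast to Proposition \ref{PropRightHP}, where two commutations with $\rd_r$ are required to turn a blue-shift into a red-shift -- no commutation is needed here and no blue-shift appears. The estimate is thus the direct analogue of the red-shift at $\Hpl$ for $\hp$, and, as advertised, is in a sense easier than for the wave equation.

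First I would set up, in analogy with Proposition \ref{PropLeftHp}, the multiplier identity
\begin{equation*}
0 = \Rea\Big(\mathcal{T}_{[s]}\psi \cdot \chi(v_+)\big(-(1+\lambda\Delta)\rd_r + (1+\lambda\Delta)\rd_{v_+}\big)\overline{\psi}\Big) + \rd_r\big(\chi(v_+)\mu e^{\eta r}|\psi|^2\big) - \chi(v_+)\mu\eta e^{\eta r}|\psi|^2 - 2\chi(v_+)\mu e^{\eta r}\Rea(\overline{\psi}\,\rd_r\psi)\;,
\end{equation*}
where $\lambda, \mu, \eta > 0$ are to be chosen, the first bracket is the Dafermos--Rodnianski red-shift vector field adapted to $\CHl$ (timelike and future-directed for $\lambda$ suitable and $r$ in a left-neighbourhood of $r_-$, where $1+\lambda\Delta \approx 1$), and the last three terms sum to zero by the product rule and serve, exactly as in Proposition \ref{PropLeftHp}, to control the zeroth-order contributions. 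After integrating over the spheres and substituting \eqref{TeukolskyStar}, I would organise the right-hand side into total $\rd_{v_+}$- and $\rd_r$-derivatives on the one hand and bulk terms on the other.

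The core of the argument is then carried out in the familiar five-step pattern. In Step 2 I show the bulk terms quadratic in derivatives are negative definite for $r\in(r_-, \ered]$: the dashed effective red-shift term combines with the leading-in-$\lambda$ terms to dominate, the indefinite cross terms in $\rd_{\varphi_+}\psi$ are absorbed using $|a|<M$ as in \eqref{EqEstLeadTermHpl}, and near the axis I would invoke Lemma \ref{LemStrongerBoundPhiDer} to handle $\tfrac{1}{\sin\theta}(is\cos\theta + \rd_{\varphi_+})\psi$, precisely as in the proof of Proposition \ref{PropEnergyEstNoShift}. Step 3 establishes coercivity of the $\rd_r$-flux $B$ -- which carries the degenerate weight $|\Delta||\rd_r\psi|^2$ responsible for \eqref{PropCauchyEqBoundaryTerms} -- and of $B-A$, by completing squares and fixing $\mu(\eta)$ via $\mu e^{\eta r_-} = 2(s+s^2)$; Step 4 disposes of the residual zeroth-order bulk by taking $\eta$ large. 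Finally, in Step 5 I integrate the resulting divergence inequality $\rd_{v_+}(A) + \rd_r(B) \gtrsim \chi(v_+)\big(|\rd_r\psi|^2 + |\rd_{v_+}\psi|^2 + \sum_i|\zt_{i,+}\psi|^2 + |\psi|^2\big)$ over $\{r' \le r \le \ered\}$, truncated by level sets of $f^+$, and let the truncation recede; the incoming flux on the spacelike slice $\{r=\ered\}$ is controlled by the already-established weighted bound of Corollary \ref{CorollaryNoShift}. This yields the bulk bound \eqref{PropCauchyEqILED} and, taking the supremum over $r' \in [\ered, r_-)$, the boundary bound \eqref{PropCauchyEqBoundaryTerms}.

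The main obstacle I anticipate is not the sign of the leading red-shift term -- which the frame degeneration hands us for free -- but arranging the boundary fluxes so that the estimate is \emph{uniform up to} $\CHl$: the $\rd_r\psi$ flux degenerates like $|\Delta|$ at $r_-$, so the coercivity of $B$ and of $B-A$ must be extracted from the associated quadratic forms with the same care as in the determinant computations of Proposition \ref{PropEnergyEstNoShift}, and the weight $\chi(v_+)$ must be propagated consistently across the foliation. A secondary point is that the data on $\{r=\ered\}$ is only controlled after the $v_-$-to-$v_+$ weight conversion underlying Corollary \ref{CorollaryNoShift}, which is why the constant $C$ is permitted to depend on $\ered$ and on the black-hole parameters.
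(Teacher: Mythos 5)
Your proposal correctly identifies that no $\rd_r$-commutation is needed (the effective red-shift for $\psi$ at $\CHl$ operates already at the level of $\psi$), but as written the estimate cannot be closed, for two concrete reasons. First, your multiplier $\chi(v_+)(1+\lambda\Delta)\big(-\rd_r+\rd_{v_+}\big)$ omits the angular component: the paper's multiplier is $\chi_n(v_+)(1+\lambda\Delta)\big(-\rd_r+\rd_{v_+}+\tfrac{a}{r_-^2+a^2}\rd_{\varphi_+}\big)$, and the extra term is not cosmetic. Near $r=r_-$ the Killing field $\rd_{v_+}=\rd_t$ is \emph{spacelike} away from the axis (at $r=r_-$ one has $g(\rd_{v_+},\rd_{v_+})=a^2\sin^2\theta/\rho^2\geq 0$), and your vector field satisfies $g\big(-\rd_r+\rd_{v_+},-\rd_r+\rd_{v_+}\big)=(a^2\sin^2\theta/\rho^2)-2$ at $r_-$, which is \emph{positive} near the equator for slowly rotating Kerr (where $r_-\approx a^2/2M$), so the multiplier is spacelike there and the fluxes are indefinite. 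Quantitatively, the absorption of the cross term $2a\,\Rea(\rd_{\varphi_+}\psi\,\overline{\rd_{v_+}\psi})$ in the style of \eqref{EqEstLeadTermHpl} requires $r^2+a^2>2\alpha^{-1}a^2$ with $\alpha<1$, i.e.\ essentially $r>|a|$; this holds at $r_+>M>|a|$ but \emph{always fails} at $r_-$, since $r_-<|a|$ for every $0<|a|<M$. The paper's angular term contributes the extra pieces $\tfrac{a^2}{r_-^2+a^2}|\rd_{\varphi_+}\psi|^2$ and modified cross terms (Appendix \ref{ApPhi}), which is exactly what makes the matrix $Q_1$ in \eqref{EqCauchyDetQ} positive definite, with $\det Q_1(r_-)=\tfrac{1}{2(r_-^2+a^2)}(r_-^2+a^2\cos^2\theta)^2>0$. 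A related sign error: since $\rd_r\Delta(r_-)=2(r_--M)<0$ (opposite sign to $r_+$), the paper must take $\lambda<0$; with your $\lambda>0$ the leading-in-$\lambda$ bulk terms come out with the wrong sign near $r_-$.

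Second, you use the weight $\chi(v_+)$ unchanged and never address the bulk term $\chi'(v_+)(1+\lambda\Delta)(r^2+a^2+\tfrac12\Delta)|\rd_r\psi|^2$ arising from \ref{ApR} and \ref{ApV}, which for $v_+\geq 1$ is positive with a non-degenerate coefficient and can only be absorbed by the red-shift term $\sim-\chi|\rd_r\psi|^2$ when $|\chi'|\leq\delta\chi$ pointwise with $\delta$ small. Near the event horizons this was arranged by restricting to $|v|\geq v_0$ large and covering the remaining \emph{compact} region by Assumption \ref{AssumptionReg}; near the Cauchy horizon the set $\{|v_+|\leq v_0\}\cap\{r_-<r\leq\ered\}$ is \emph{not} compact (it limits onto $\CHl\cup\Sp^2_t\cup\CHr$, where no a priori control exists), so that route is unavailable. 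This is precisely why the paper introduces the flattened weight $\chi_n$ (constant on a long middle interval, with $|\chi_n'|\leq\delta\chi_n$ globally), fixes $n$ large \emph{before} choosing $\ered$, and only at the end converts back to $\chi$ using $\chi_n\simeq\chi$ with $n$-dependent constants. Finally, since the region $\{r'\leq r\leq\ered\}$ is unbounded in both directions $v_+\to\pm\infty$, the Stokes argument needs truncation by \emph{both} families $\{f^+\leq t_0\}$ and $\{f^-\leq t_0\}$, hence coercivity of three boundary combinations $B$, $B-A$, and $B+\tfrac{|\Delta|}{r^2+2Mr+a^2}A$, not just the two you list.
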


\begin{proof}
We use that for $s=+2$ there is an \emph{effective} red-shift for the energy operating close to the left Cauchy horizon. The red-shift is effective in the sense that while it persists after one commutation of $\mathcal{T}_{[s]} \psi = 0$ with $\rd_r$, after two commutations with $\rd_r$ it turns into a blue-shift for the energy, which becomes stronger with subsequent commutations. We use $\chi_n(v_+)(1 + \lambda \Delta)(-\rd_r + \rd_{v_+} + \frac{a}{r_-^2 + a^2} \rd_{\varphi_+})$ as a multiplier. Note that, compared to the multiplier used in the proof of Proposition \ref{PropRightHP}, the additional contribution in $\rd_{\varphi_+}$ makes the vector field timelike near the Cauchy horizons. 

\underline{\bf{Step 1: The multiplier.}} We start out from the following multiplier identity, where $\lambda <0$ and  $\eta, \mu >0$ are constants to be chosen:
\begin{equation}\label{EqPropCauchyMultiplier}
\begin{split}
0 &= \Rea\Big(\mathcal{T}_{[s]} \psi \cdot \chi_n(v_+)(1 + \lambda \Delta)(-\rd_r + \rd_{v_+} + \frac{a}{r_-^2 + a^2} \rd_{\varphi_+})\overline{\psi}\Big) \\
&\qquad + \underbrace{\rd_r(\chi_n(v_+) \mu e^{\eta r}| \psi|^2) - \chi_n(v_+)\mu \eta e^{\eta r} | \psi|^2 - 2\chi_n(v_+) \mu e^{\eta r} \Rea(\overline{\psi}\rd_r \psi)}_{=0} \;.
\end{split}
\end{equation}
Here, the function $\chi_n : \R \to (0, \infty)$ results from locally smoothing out the corners of the function
\begin{equation*}
v_+ \mapsto \begin{cases} |v_+|^{q_l} \qquad &\textnormal{ for } v_+ \leq -(n)^{\nicefrac{1}{q_l}} \\
n &\textnormal{ for } -(n)^{\nicefrac{1}{q_l}} \leq v_+ \leq n^{\nicefrac{1}{q_r}} \\
v_+^{q_r} &\textnormal{ for } v_+ \geq n^{\nicefrac{1}{q_r}} \;.
\end{cases}
\end{equation*}
Given $\delta >0$ it is easy to see that one can choose $n\geq 1$ large enough such that $|\chi'_n(v_+)| \leq \delta \chi_n(v_+)$ holds for all $v_+ \in \R$. The parameter $n$ will be fixed in the next step.

After integration over the spheres, the right hand side of \eqref{EqPropCauchyMultiplier} consists of the sum of the following terms
\begin{enumerate}
\item the sum of all the terms on the right hand sides of \ref{ApR}, \ref{ApV}, and \ref{ApPhi} with $\chi(v_+) = \chi_n(v_+)$
\item the real parts of the terms
\begin{equation*}
\begin{split}
2\big(r(1-2s) &- isa \cos \theta\big) \rd_{v_+} \psi \cdot \chi_n(v_+)(1 + \lambda \Delta)(-\rd_r + \rd_{v_+} + \frac{a}{r_-^2 + a^2} \rd_{\varphi_+})\overline{\psi}\\[4pt]
&\dashuline{-\chi_n(v_+) (1+ \lambda \Delta) 2(r-M)(1-s)|\rd_r \psi|^2} \\[4pt]
&+ \chi_n(v_+)(1+\lambda \Delta)2(r-M)(1-s) \rd_r \psi (\overline{\rd_{v_+} \psi + \frac{a}{r_-^2 + a^2} \rd_{\varphi_+} \psi})\\[4pt]
&-2s\psi \cdot \chi_n(v_+)(1 + \lambda \Delta)(-\rd_r + \rd_{v_+} + \frac{a}{r_-^2 + a^2} \rd_{\varphi_+})\overline{\psi}
\end{split}
\end{equation*}
\item the underbraced terms in \eqref{EqPropCauchyMultiplier}.
\end{enumerate}
Again, our desired boundedness statement requires all the bulk terms to yield a negative definite contribution. We also recall here that $(\rd_r \Delta)(r_-) = 2(r_- - M) <0$.

\underline{\bf{Step 2: Estimating all bulk terms that are quadratic in derivatives of $\psi$.}}

We first consider  all those terms that are quadratic in $\rd_r \psi$. The leading order terms are the dashed term from 2.\ in Step 1 and the dashed term from \ref{ApR}. Their sum at $r=r_-$ equals
\begin{equation*}
\chi_n(v_+) 2(r_- - M) \big(\frac{1}{2} - (1-s)\big)|\rd_r \psi|^2 \;,
\end{equation*}
which is negative for $s=+2$. The other two bulk terms quadratic in $\rd_r \psi$ from \ref{ApR} and \ref{ApV} sum to
\begin{equation}\label{EqLowerOrderRTerms}
\chi'_n(v_+) (1 + \lambda \Delta)(r^2 + a^2 + \frac{1}{2} \Delta)|\rd_r\psi|^2 \;.
\end{equation}
We can now choose $n \gg 1$ large enough and $r_- < \ered$ close enough to $r_-$  ($\ered$ depending in particular on $\lambda$ at this point) such that \eqref{EqLowerOrderRTerms} is controlled by $-\frac{1}{2}$ times the sum of the dashed terms in $r_- < r \leq \ered$. 

We next consider all those terms quadratic in $\rd \psi$ that are leading order in $\lambda$; these are all the wavily underlined terms from \ref{ApR}, \ref{ApV}, and \ref{ApPhi}. They sum to
\begin{equation}\label{EqPfPropCauchyWavy}
\begin{split}
-\chi_n(v_+)& \lambda 2(r-M) \Big[\big(\frac{1}{2}a^2 \sin^2\theta + (r^2 + a^2)\big) |\rd_{v_+} \psi|^2 + \big(2a + \frac{a(r^2 + a^2)}{r_-^2 + a^2}\big) \Rea(\rd_{\varphi_+} \psi \overline{\rd_{v_+} \psi}) \\[3pt]
&\qquad \qquad \quad + \frac{1}{2} \sum_i |\zt_{i,+} \psi|^2 + \frac{a^2}{r_-^2 + a^2} |\rd_{\varphi_+} \psi|^2 \Big] \\[3pt]
&= -\chi_n(v_+) \lambda 2(r-M) \Big[\big(\frac{1}{2}a^2 \sin^2\theta + (r^2 + a^2)\big) |\rd_{v_+} \psi|^2 \\[3pt]
&\qquad \qquad \quad+ \big(2a + \frac{a(r^2 + a^2)}{r_-^2 + a^2}\big)  \sin \theta \Rea\big(\frac{1}{\sin\theta}(is \cos \theta \cdot \psi + \rd_{\varphi_+} \psi) \overline{\rd_{v_+} \psi}\big) \\[3pt]
&\qquad \qquad \quad + \frac{1}{2} \big( |\rd_\theta \psi|^2 + \frac{1}{\sin^2 \theta} |is \cos \theta \cdot \psi + \rd_{\varphi_+} \psi|^2 + \underbrace{s^2 |\psi|^2}\big) \\[3pt]
&\qquad \qquad \quad + \frac{a^2}{r_-^2 + a^2} \sin^2 \theta \big|\frac{1}{\sin \theta}(is \cos \theta \cdot \psi + \rd_{\varphi_+} \psi)\big|^2 \Big] \\[3pt]
&\quad + \underbrace{\chi_n(v_+) \lambda 2(r-M) \Big[ \big(2a + \frac{a(r^2 + a^2)}{r_-^2 + a^2}\big)\Rea(is \cos \theta \cdot \psi \overline{\rd_{v_+} \psi}) }\\[3pt]
&\qquad \qquad \quad + \underbrace{\frac{a^2}{r_-^2 + a^2}\big( s^2 \cos^2\theta |\psi|^2 + 2\Rea(is \cos \theta \cdot \psi \overline{\rd_{\varphi_+} \psi})\big)\Big]} \;,
\end{split}
\end{equation}
where we have used again Lemma \ref{LemStrongerBoundPhiDer} and we consider the underbraced terms again as error terms.
Considering the non-underbraced terms as a quadratic form in $\big(\rd_{v_+} \psi, \frac{1}{\sin \theta}(is \cos \theta \cdot \psi + \rd_{\varphi_+} \psi)\big)$, the corresponding matrix is $-\chi_n(v_+) \lambda 2(r-M) Q_1$ with
\begin{equation}\label{EqCauchyDetQ}
Q_1 = \begin{pmatrix}
\frac{1}{2}a^2 \sin^2\theta + (r^2 + a^2) & \big(a + \frac{a(r^2 + a^2)}{2(r_-^2 + a^2)} \big) \sin \theta \\
\big(a + \frac{a(r^2 + a^2)}{2(r_-^2 + a^2)} \big) \sin \theta  & \frac{1}{2} + \frac{a^2}{r_-^2 + a^2} \sin^2\theta
\end{pmatrix} \;.
\end{equation}
The determinant of $Q_1$ evaluated at $r=r_-$ is easily computed to be $\det Q_1(r_-) =\frac{1}{2(r_-^2 + a^2)} (r_-^2 + a^2 \cos^2 \theta)^2 >0$, and hence $Q_1$ is positive definite at $r=r_-$. Recalling that $\lambda <0$ and $2(r_- - M) <0$ it now follows that for $\ered > r_-$ close enough to $r_-$ there exist constants $c>0, C>0$ such that the following holds in $r_-< r\leq \ered$:
\begin{equation*}
\textnormal{\eqref{EqPfPropCauchyWavy}} \leq - c \chi_n(v_+) | \lambda|\big(|\rd_{v_+} \psi|^2 + \sum_i|\zt_{i,+}\psi|^2\big) + \chi_n(v_+) C |\lambda|\cdot |\psi|^2 \;.
\end{equation*}
Together with our earlier estimates for $\rd_r \psi$ this shows that the dashed terms, the other terms quadratic in $\rd_r \psi$, and the wavily underlined terms are bounded from  above by
\begin{equation*}
- c \chi_n(v_+)\Big[ |\rd_r \psi|^2 +  | \lambda|\big(|\rd_{v_+} \psi|^2 + \sum_i|\zt_{i,+}\psi|^2\big)\Big] + \chi_n(v_+) C |\lambda|\cdot |\psi|^2
\end{equation*}
in the region $r_-<r\leq \ered$.
We can now choose $\lambda <0$ large enough in absolute value such that the sum of all the non-underbraced terms on the right hand side of \eqref{EqPropCauchyMultiplier} that are not total derivatives are estimated from above by
\begin{equation}\label{EqPfPropCauchyBulkTermsEst}
- c \chi_n(v_+)\Big[ |\rd_r \psi|^2 +  |\rd_{v_+} \psi|^2 + \sum_i|\zt_{i,+}\psi|^2\Big] + \chi_n(v_+) C |\psi|^2 
\end{equation}
in a region $r_- < r\leq \ered$, where $c>0$, $C>0$ are (new) constants.

\underline{\bf{Step 3: Estimating boundary terms.}}
We now gather all the total derivatives appearing on the right hand side of \eqref{EqPropCauchyMultiplier}. They are $\rd_r(B)$ and $\rd_{v_+}(A)$ with
\begin{equation*}
\begin{split}
A&= \chi_n(v_+) (1 + \lambda \Delta) \Big[ \frac{1}{2} a^2 \sin^2 \theta |\rd_{v_+} \psi|^2 + \frac{a^3 \sin^2 \theta}{r_-^2 +a^2} \Rea(\rd_{v_+} \psi \overline{\rd_{\varphi_+} \psi}) - \frac{1}{2 \sin^2 \theta}\big|is \cos \theta \cdot \psi + \rd_{\varphi_+} \psi\big|^2 \\[3pt] 
&\qquad + \frac{a^2}{r_-^2 + a^2} |\rd_{\varphi_+} \psi|^2 - a^2 \sin^2 \theta \Rea(\rd_{v_+} \psi \overline{\rd_r \psi}) + \big(\frac{a(r^2 + a^2)}{r_-^2 + a^2} - 2a\big) \Rea(\rd_{r} \psi \overline{ \rd_{\varphi_+} \psi}) \\[3pt]
&\qquad - (r^2 + a^2 + \frac{1}{2} \Delta) |\rd_r \psi|^2 - \frac{1}{2} |\rd_\theta \psi|^2 + \frac{1}{2} s |\psi|^2 \Big]
\end{split}
\end{equation*}
and
\begin{equation*}
\begin{split}
B&=\chi_n(v_+)(1 + \lambda \Delta)\Big[\dotuline{\big(\frac{1}{2} a^2 \sin^2 \theta + r^2 + a^2\big) |\rd_{v_+} \psi|^2 + \big(2a + \frac{a(r^2 + a^2)}{r_-^2 + a^2}\big) \Rea(\rd_{\varphi_+} \psi \overline{\rd_{v_+} \psi})} \\[3pt]
&\qquad + \dotuline{\frac{a}{r_-^2 + a^2} \Delta \Rea(\rd_r \psi \overline{\rd_{\varphi_+}\psi}) + \frac{1}{2\sin^2\theta}\big|is \cos \theta \cdot \psi + \rd_{\varphi_+} \psi \big|^2 + \frac{a^2}{r_-^2 + a^2} |\rd_{\varphi_+} \psi|^2} \\[3pt]
&\qquad + \dotuline{\Delta \Rea(\rd_r \psi \overline{ \rd_{v_+} \psi}) - \frac{1}{2} \Delta |\rd_r \psi|^2} + \frac{1}{2}|\rd_\theta \psi|^2 - \frac{1}{2}s |\psi|^2 \Big] \\[3pt]
&\qquad \quad + \chi_n(v_+) \mu e^{\eta r}|\psi|^2 \;,
\end{split}
\end{equation*}
where we have used Lemma \ref{LemStrongerBoundPhiDer}. We begin by establishing the coercivity of $B$. We first only consider the dotted terms and in a procedure already familiar by now  we complete all individual $\rd_{\varphi_+} \psi$ terms into $\frac{1}{\sin \theta} (is \cos\theta \cdot \psi + \rd_{\varphi_+} \psi)$ at the expense of adding error terms.  We treat the arising expression (without the error terms) as a quadratic form in $(\rd_{v_+} \psi, \frac{1}{\sin \theta}(is \cos \theta \cdot \psi + \rd_{\varphi_+} \psi), \sqrt{-\Delta} \rd_r \psi)$ (note the weight in front of the $\rd_r$ derivative) the corresponding matrix of which is easily seen to be
\begin{equation*}
\begin{pmatrix}
\big(\frac{1}{2} a^2 \sin^2 \theta + r^2 + a^2\big) &  \big(a + \frac{a(r^2 + a^2)}{2(r_-^2 + a^2)}\big)\sin \theta & - \frac{1}{2} \sqrt{-\Delta} \\
\big(a + \frac{a(r^2 + a^2)}{2(r_-^2 + a^2)}\big)\sin \theta & \frac{1}{2} + \frac{a^2 \sin^2 \theta}{r_-^2 + a^2} & - \frac{1}{2} \frac{a \sqrt{-\Delta}}{r_-^2 + a^2} \sin \theta \\
- \frac{1}{2} \sqrt{-\Delta} & - \frac{1}{2} \frac{a \sqrt{-\Delta}}{r_-^2 + a^2} \sin \theta & \frac{1}{2}
\end{pmatrix} \;.
\end{equation*}
The positive definiteness of this matrix in a region $r_- < r \leq \ered$ follows easily from noting that the left-upper $2$-$2$ matrix has already been shown (below \eqref{EqCauchyDetQ}) to be positive definite in such a region while the other off-diagonal terms vanish at $r=r_-$. Choosing now $\mu(\eta)$ such that $\mu(\eta) e^{\eta r}$ is large enough we can control all the error terms to obtain
\begin{equation}\label{EqPropCauchyCoerB}
B \gtrsim \chi_n(v_+) \big(|\Delta|\cdot |\rd_r \psi|^2 + |\rd_{v_+} \psi|^2 + \sum_i|\zt_{i,+} \psi|^2 + |\psi|^2 \big)
\end{equation}
in a region $r_- < r \leq \ered$.

We next establish the coercivity of $B-A$. We find
\begin{equation*}
\begin{split}
B-A &= \chi_n(v_+) (1 + \lambda \Delta) \Big[(r^2 + a^2) |\rd_{v_+} \psi|^2 + \Big(2a +\frac{a(r^2 + a^2 \cos^2 \theta)}{r_-^2 + a^2}\Big) \Rea(\rd_{v_+} \psi \overline{\rd_{\varphi_+} \psi})  \\[3pt]
&\qquad + \frac{1}{\sin^2 \theta}\big|is \cos \theta \cdot \psi + \rd_{\varphi_+} \psi\big|^2 + (\Delta + a^2 \sin^2 \theta) \Rea(\rd_{v_+} \psi \overline{\rd_r \psi}) + \big( 2a - \frac{2Mar}{r_-^2 + a^2}\big) \Rea(\rd_r \psi \overline{\rd_{\varphi_+} \psi}) \\[3pt]
&\qquad +(r^2 + a^2) |\rd_r \psi|^2 + |\rd_\theta \psi|^2 - s|\psi|^2\Big] + \chi_n(v_+) \mu e^{\eta r} |\psi|^2 \;.
\end{split}
\end{equation*}
Again, completing the $\rd_{\varphi_+} \psi$ terms to $\frac{1}{\sin \theta}(is \cos \theta \cdot \psi + \rd_{\varphi_+} \psi)$ terms by introducing error terms and considering those terms that are quadratic in $(\rd_{v_+} \psi, \frac{1}{\sin \theta}(is \cos \theta \cdot \psi + \rd_{\varphi_+} \psi), \rd_r \psi)$ as a quadratic form (note that this time we do not include a weight in the $\rd_r$ derivative), we need to establish the positive definiteness of the matrix
\begin{equation*}
Q_2 = \begin{pmatrix}
r^2 + a^2 & \Big(a + \frac{a(r^2 + a^2 \cos^2 \theta)}{2(r_-^2 + a^2)}\Big) \sin \theta & \frac{1}{2}(\Delta + a^2 \sin^2\theta) \\
\Big(a + \frac{a(r^2 + a^2 \cos^2 \theta)}{2(r_-^2 + a^2)}\Big) \sin \theta & 1 & \Big(a - \frac{Mar}{r_-^2 + a^2}\Big) \sin \theta \\ 
\frac{1}{2}(\Delta + a^2 \sin^2\theta) & \Big(a - \frac{Mar}{r_-^2 + a^2}\Big) \sin \theta  & r^2 + a^2
\end{pmatrix} \;.
\end{equation*} 
The first main minor is clearly positive, the second main minor at $r=r_-$ is found to be $$\frac{(r_-^2 + a^2 \cos^2 \theta)^2(r_-^2 + a^2 \cos^2 \theta + 6Mr_-)}{4(r_-^2 + a^2)^2} >0\;,$$ and we compute $$\det Q_2 (r_-) = \frac{4Mr_-(r_-^2 + a^2 \cos^2 \theta)^2(r_-^2 + a^2 \cos^2 \theta + 2Mr_-)}{4(r_-^2 + a^2)}\;.$$ Again, choosing $\mu(\eta) e^{\eta r}$ large enough we control all the error terms and conclude that
\begin{equation}\label{EqPropCauchyCoerAB}
B-A \gtrsim \chi_n(v_+) \big( |\rd_r \psi|^2 + |\rd_{v_+} \psi|^2 + \sum_i|\zt_{i,+} \psi|^2 + |\psi|^2 \big)
\end{equation}
holds for $r_- < r \leq \ered$ for $\ered$ close enough to $r_-$.

Finally, we need to establish the coercivity of $B + \frac{|\Delta|}{r^2 + 2Mr + a^2} A$ for $r$ close enough to $r_-$. We compute
\begin{equation*}
\begin{split}
B - \frac{\Delta}{r^2 + 2Mr + a^2} A &= \chi_n(v_+)(1+ \lambda \Delta) \Big[ 
\big(\frac{1}{2} a^2 \sin^2 \theta + r^2 + a^2 + \mathcal{O}(|\Delta|)\big) |\rd_{v_+} \psi|^2 \\[3pt]
&\qquad + \big(2a + \frac{a(r^2 + a^2)}{r_-^2 + a^2} + \mathcal{O}(|\Delta|)\big) \Rea(\rd_{\varphi_+} \psi \overline{\rd_{v_+} \psi}) \\[3pt]
&\qquad + \Delta\Big(\frac{a}{r_-^2 + a^2} - \frac{1}{r^2 + 2Mr + a^2}\big(\frac{a(r^2 + a^2)}{r_-^2 + a^2} - 2a\big)\Big) \Rea(\rd_r \psi \overline{\rd_{\varphi_+}\psi}) \\[3pt]
&\qquad + \Big(\frac{1}{2}+ \mathcal{O}(|\Delta|)\Big) \frac{1}{\sin^2 \theta} \big|is \cos \theta \cdot \psi + \rd_{\varphi_+} \psi \big|^2  + \Big(\frac{a^2}{r_-^2 + a^2} + \mathcal{O}(|\Delta|)\Big)|\rd_{\varphi_+} \psi|^2 \\[3pt]
&\qquad + \Delta \big(1 + \frac{a^2 \sin^2 \theta}{r^2 + 2Mr + a^2}\big) \Rea(\rd_r \psi \overline{ \rd_{v_+} \psi}) + \frac{\Delta^2}{r^2 + 2Mr + a^2} |\rd_r \psi|^2 \\[3pt]
&\qquad + \Big(\frac{1}{2} + \mathcal{O}(|\Delta|)\Big)|\rd_\theta \psi|^2 - \big(\frac{1}{2}s + \mathcal{O}(|\Delta|) \big) |\psi|^2 \Big] \\[3pt]
&\qquad \quad + \chi_n(v_+) \mu e^{\eta r}|\psi|^2 \;,
\end{split}
\end{equation*}
Again, we complete all isolated $\rd_{\varphi_+} \psi$ terms into $\frac{1}{\sin \theta}( is \cos \theta + \rd_{\varphi_+} \psi)$ by adding error terms and treat the part of the expression that is quadratic in $\{\rd_{v_+} \psi, \frac{1}{\sin \theta} (is \cos \theta \cdot \psi + \rd_{\varphi_+} \psi), \Delta \rd_r \psi\}$ as a quadratic form (note the weight in front of the $\rd_r$ derivative). Its corresponding matrix at $r=r_-$, modulo the factor  $\chi_n(v_+)$, is easily seen to be
\begin{equation*}
Q_3 =\begin{pmatrix}
\frac{1}{2} a^2 \sin^2 \theta + r_-^2 + a^2 & \frac{3}{2} a \sin \theta & \frac{1}{2} + \frac{a^2 \sin^2 \theta}{4(r_-^2 + a^2)} \\
\frac{3}{2} a \sin \theta & \frac{1}{2} + \frac{a^2 \sin^2 \theta}{r_-^2 + a^2} & \frac{3a}{4(r_-^2 +a^2)} \sin \theta \\
\frac{1}{2} + \frac{a^2 \sin^2 \theta}{4(r_-^2 + a^2)}  & \frac{3a}{4(r_-^2 +a^2)} \sin \theta & \frac{1}{2(r_-^2 + a^2)}
\end{pmatrix} \;,
\end{equation*}
where we have used $2Mr_- = r_-^2 + a^2$.
The left upper $2\times 2$ matrix is already known to be positive definite. Moreover, we compute $$\det Q_3 = \frac{(2r_-^2 + a^2 + a^2 \cos^2 \theta)(r_-^2 + a^2 \cos^2 \theta)^2}{16(r_-^2 + a^2)^3} >0  \;.$$
Hence, $Q_3$ is positive definite and after choosing $\mu(\eta) e^{\eta r}$ large enough we obtain
\begin{equation}\label{EqPropCauchyCoerAB2}
B + \frac{|\Delta|}{r^2 + 2Mr + a^2} A \gtrsim \chi_n(v_+) \big( \Delta^2 |\rd_r \psi|^2 + |\rd_{v_+} \psi|^2 + \sum_i |\zt_{i,+} \psi|^2 + |\psi|^2 \big)
\end{equation}
in $r_- < r \leq \ered$ for $\ered$ close enough to $r_-$.

\underline{\bf{Step 4: Estimating the remaining bulk terms.}}
The last two terms in \eqref{EqPropCauchyMultiplier} are estimated by
\begin{equation*}
-\chi_n(v_+) \mu \eta e^{\eta r} |\psi|^2 - 2 \chi_n(v_+) \mu e^{\eta r} \Rea(\overline{\psi} \rd_r \psi) \leq -\frac{1}{2} \chi_n(v_+) \mu \eta e^{\eta r} |\psi|^2 + 2 \chi_n(v_+) \eta^{-1}\mu e^{\eta r} |\rd_r \psi|^2 \;.
\end{equation*}
Choosing now $\eta >0$ large enough and recalling \eqref{EqPfPropCauchyBulkTermsEst} we finally obtain from \eqref{EqPropCauchyMultiplier}
\begin{equation}\label{EqPropCauchyFinalBulk}
\rd_{v_+} (A) + \rd_r(B) \underset{\mathrm{a.i.}}{\gtrsim} \chi_n(v_+)\big(|\rd_r \psi|^2 + |\rd_{v_+} \psi|^2 + \sum_i |\zt_{i,+} \psi|^2 + |\psi|^2\big) 
\end{equation}
in the region $r_- < r \leq \ered$.

\underline{\bf{Step 5: Putting it all together.}} Let $r' \in (r_-, \ered)$. We integrate \eqref{EqPropCauchyFinalBulk} over the region $\{r' \leq r \leq \ered\} \cap \{f^- \leq t_0\} \cap \{f^+ \leq t_0\}$, with $t_0 \gg 1$, with respect to  $ dv_+ \wedge dr \wedge \vols = \frac{1}{\rho^2} \vol$. Moreover, using that on a level set of $f^+$ we have $dr = dv_+$ and on a level set of $f^-$ we have $dr = \frac{\Delta}{r^2 + 2Mr + a^2} dv_+$, we obtain
\begin{equation}\label{EqPropCauchyFinalEneEst}
\begin{split}
\int\limits_{\mathclap{\substack{\{r = r'\} \cap \{f^+ \leq t_0\} \\ \cap \{f^- \leq t_0\}}}} &B \; \vols dv_+ +  \int\limits_{\mathclap{\substack{\{f^- = t_0\}  \\ \cap \{r' \leq r \leq \ered\}}}} (B + \frac{|\Delta|}{r^2 + 2Mr + a^2} A ) \; \vols dv_+ +  \int\limits_{\mathclap{\substack{\{f^+ = t_0\} \\ \cap \{r' \leq r \leq \ered\}}}} (B-A) \; \vols dv_+ \\[3pt]
&\qquad +c \int\limits_{\mathclap{\substack{\{r' \leq r \leq \ered\} \cap \{f^- \leq t_0\} \\ \cap \{f^+ \leq t_0\}}}} \chi_n(v_+)\big(|\rd_r \psi|^2 + |\rd_{v_+} \psi|^2 + \sum_i |\zt_{i,+} \psi|^2 + |\psi|^2\big)  \; \vols dv_+ dr \\[3pt]
&\leq \int\limits_{\mathclap{\substack{\{r = \ered\} \cap \{f^+ \leq t_0\} \\ \cap \{f^- \leq t_0\}}}} B \; \vols dv_+ \;,
\end{split}
\end{equation}
where $c>0$.
Using \eqref{EqPropCauchyCoerB}, \eqref{EqPropCauchyCoerAB}, and \eqref{EqPropCauchyCoerAB2}, the trivial upper bounds on $B$ for the right hand side together with Proposition \ref{PropEnergyEstNoShift}, letting $t_0 \to \infty$ and $r' \to r_-$, we conclude the proof of the proposition.
\end{proof}

\subsubsection{Extension of $\psi$ to the Cauchy horizon $\CH_l$}

\begin{proposition} \label{PropExtCauchyL2}
Under the assumptions of Section \ref{SecAssumptions} the limit
\begin{equation*}
\lim_{r \to r_-} \psi(v_+, \theta, \varphi_+; r) =: \psi(v_+, \theta, \varphi_+;r_-)
\end{equation*}
exists in $L^2(\R \times \Sp^2)$ and satisfies
\begin{equation}\label{EqWeightedPsiCauchy}
\int\limits_{\R \times \Sp^2} \chi(v_+) |\psi(v_+, \theta, \varphi_+;r_-)|^2 \; \vols dv_+  < \infty\;,
\end{equation}
where the function $\chi(v_+)$ is as in Proposition \ref{PropEnCauchyHorizons}.
\end{proposition}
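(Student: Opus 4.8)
The plan is to prove the existence of the limit in $L^2(\R \times \Sp^2)$ by a Cauchy-sequence argument, using the interior energy estimate \eqref{PropCauchyEqILED} from Proposition \ref{PropEnCauchyHorizons} to control the transversal variation of $\psi$ as $r \to r_-$. First I would establish the Cauchy property: for $r_- < r_1 < r_2 \leq \ered$ the fundamental theorem of calculus gives
\begin{equation*}
\psi(v_+, \theta, \varphi_+; r_2) - \psi(v_+, \theta, \varphi_+; r_1) = \int_{r_1}^{r_2} \rd_r \psi(v_+, \theta, \varphi_+; r') \, dr' \;,
\end{equation*}
and then Cauchy--Schwarz in $r'$ followed by integration over $\R \times \Sp^2$ with the weight $\chi(v_+)$ yields
\begin{equation*}
\int\limits_{\R \times \Sp^2} \chi(v_+) |\psi(\cdot;r_2) - \psi(\cdot;r_1)|^2 \, \vols dv_+ \leq |r_2 - r_1| \int\limits_{\{r_1 \leq r \leq r_2\}} \chi(v_+) |\rd_r \psi|^2 \, \vols dv_+ dr \;.
\end{equation*}
By \eqref{PropCauchyEqILED} the spacetime integral on the right is bounded by $C$ uniformly, so the right hand side tends to zero as $r_1, r_2 \to r_-$. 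Since $\chi(v_+) \geq c > 0$, this shows that $\{\psi(\cdot; r)\}$ is Cauchy in $L^2(\R \times \Sp^2)$ as $r \to r_-$, hence the limit $\psi(\cdot; r_-)$ exists there.

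Next I would verify the weighted bound \eqref{EqWeightedPsiCauchy}. The natural route is to observe that the limit $\psi(\cdot;r_-)$ is in fact the $L^2$-limit also in the \emph{$\chi$-weighted} space: the same computation shows $\{\sqrt{\chi}\,\psi(\cdot;r)\}$ is Cauchy in the unweighted $L^2(\R \times \Sp^2)$, so it converges, and its limit must agree a.e.\ with $\sqrt{\chi}\,\psi(\cdot;r_-)$. Lower semicontinuity of the $L^2$-norm under this convergence, combined with the uniform boundary-term bound \eqref{PropCauchyEqBoundaryTerms} from Proposition \ref{PropEnCauchyHorizons} (which gives $\sup_{r' \in [\ered, r_-)} \int_{\{r=r'\}} \chi(v_+) |\psi|^2 \, \vols dv_+ \leq C$), immediately yields
\begin{equation*}
\int\limits_{\R \times \Sp^2} \chi(v_+) |\psi(\cdot;r_-)|^2 \, \vols dv_+ \leq C < \infty \;,
\end{equation*}
which is \eqref{EqWeightedPsiCauchy}. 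The compact region $v_+ \in [-1,1]$ contributes a finite amount by the regularity Assumption \ref{AssumptionReg}, so only the tails $v_+ \to \pm \infty$ need the weighted estimate, and those are exactly what \eqref{PropCauchyEqBoundaryTerms} controls.

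I expect the main obstacle to be a minor technical point rather than a conceptual one: matching the two weight functions carefully. The interior estimate \eqref{PropCauchyEqILED} and the boundary estimate \eqref{PropCauchyEqBoundaryTerms} are both stated with the \emph{same} $\chi$, so the convergence in the weighted norm and the uniform bound at slices $\{r = r'\}$ fit together directly, and one only needs to justify passing the slice bound \eqref{PropCauchyEqBoundaryTerms} to the limit $r' \to r_-$. This is handled by weak lower semicontinuity (Fatou's lemma applied along a sequence $r_n \to r_-$ realising the $L^2$-limit, after passing to a subsequence converging a.e.). A secondary subtlety is that $\psi$ is only defined pointwise away from the poles $\{\theta = 0, \pi\}$; but since these form a null set for $\vols$ and all the integrands are in $L^1$, this causes no difficulty, and the limit is a genuine element of $L^2(\R \times \Sp^2)$.
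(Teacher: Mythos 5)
Your proposal is correct and follows essentially the same route as the paper: the fundamental theorem of calculus plus Cauchy--Schwarz combined with the bulk bound \eqref{PropCauchyEqILED} gives the Cauchy property in the $\chi$-weighted $L^2$ space, while \eqref{PropCauchyEqBoundaryTerms} guarantees each slice $\psi(\cdot;r)$ lies in that space, so the limit exists there and inherits the weighted bound. Your detour through $\sqrt{\chi}\,\psi$ and lower semicontinuity is just a rephrasing of the paper's observation that a Cauchy sequence in the complete space $L^2_{\chi(v_+)}(\R\times\Sp^2)$ converges to an element of that same space, from which both claims follow at once.
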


\begin{proof}
For $r_1, r_2 > r_-$ and for $\theta \neq 0, \pi$ by the fundamental theorem of calculus we have
\begin{equation*}
|\psi(v_+, \theta, \varphi_+;r_1) - \psi(v_+, \theta, \varphi_+;r_2)| \leq \int\limits_{[r_1,r_2]} |\rd_r \psi(v_+, \theta, \varphi_+;r)|\;dr \;.
\end{equation*}
Squaring and Cauchy-Schwarz gives
\begin{equation*}
|\psi(v_+, \theta, \varphi_+;r_1) - \psi(v_+, \theta, \varphi_+;r_2)|^2 \leq |r_1 - r_2| \cdot \int\limits_{[r_1, r_2]} |\rd_r \psi (v_+, \theta, \varphi_+; r)|^2 \; dr  \;.
\end{equation*}
Integrating with respect to $\chi(v_+) \vols dv_+$ gives 
\begin{equation}\label{EqConvergeCauchy}
\begin{split}
\int\limits_{\R \times \Sp^2} &|\psi(v_+, \theta, \varphi_+ ; r_1) - \psi(v_+, \theta, \varphi_+ ;r_2)|^2 \chi(v_+) \vols dv_+ \\
&\leq  |r_1 - r_2| \cdot \int\limits_{\R \times \Sp^2} \int\limits_{[r_1, r_2]} |\rd_r \psi (v_+, \theta, \varphi_+; r)|^2 \; \chi(v_+) dr  \vols dv_+  \;.
\end{split}
\end{equation}
Let $L^2_{\chi(v_+)}(\R \times \Sp^2)$ denote the $L^2$ space with respect to the measure $\chi(v_+) \vols dv_+$. By \eqref{PropCauchyEqBoundaryTerms}  we have $\psi(v_+, \theta, \varphi_+;r) \in L^2_{\chi(v_+)}(\R \times \Sp^2)$ for $r$ close enough to $r_-$ and by \eqref{PropCauchyEqILED} we have that the right hand side of \eqref{EqConvergeCauchy} is bounded by $|r_1 - r_2|\cdot C$.
This shows that $\psi(v_+, \theta, \varphi_+;r)$ is Cauchy in $L^2_{\chi(v_+)}(\R \times \Sp^2)$ for $r \to r_-$, from which both claims in the proposition follow.
\end{proof}

\subsubsection{Backwards propagation of the singularity}

\begin{proposition}\label{PropBackwards}
Under the assumptions of Section \ref{SecAssumptions}, and considering the hypersurface $\Sigma := \{f^- = v_0\}$ transversal to $\CH_r$ for some $v_0 \in \R$, there exists a constant $C>0$ such that we have for all $v' \gg 1$ large enough
\begin{equation*}
\Big| \int\limits_{\CH_l \cap \{v_+ \geq v'\}} |\psi(\cdot\; ; r_-)|^2 \; \vols dv_+ - \int\limits_{\Sigma \cap \{v_+ \geq v'\}} |\psi|^2 \;\vols dv_+ \Big| \leq C \cdot e^{\frac{1}{2} \kappa_- v'} \;,
\end{equation*} 
where $\psi(\cdot \; ;r_-)$ is the $L^2$-limit from Proposition \ref{PropExtCauchyL2}.\footnote{Also recall that $\kappa_- <0$.}
\end{proposition}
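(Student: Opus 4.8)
The statement compares the weighted $L^2$-mass of $\psi$ on the Cauchy horizon $\CH_l$ (where $\psi(\cdot\,;r_-)$ is the limit constructed in Proposition \ref{PropExtCauchyL2}) with its mass on the transversal hypersurface $\Sigma = \{f^- = v_0\}$, and asserts that the difference is controlled by $e^{\frac{1}{2}\kappa_- v'}$ — which is small (since $\kappa_- < 0$) as $v' \to \infty$. The key structural point is that $\CH_l = \{r = r_-\}$ and $\Sigma$ bound a compact-in-$r$ region near $r_-$, so the difference of the two integrals is a spacetime bulk integral obtained by integrating $\rd_r(|\psi|^2)$ (or rather $\rd_{v_-}(|\psi|^2)$ along the relevant null direction), and the effective red-shift for the energy near $\CH_l$ established in Proposition \ref{PropEnCauchyHorizons} provides the exponential gain.

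$\textbf{Step 1: Set up the divergence identity.}$ First I would work in $(v_+, r, \theta, \varphi_+)$-coordinates and apply the divergence theorem to the vector field built from $|\psi|^2$ over the region bounded by $\CH_l \cap \{v_+ \geq v'\}$, $\Sigma \cap \{v_+ \geq v'\}$, and the incoming null (or spacelike) piece $\{v_+ = v'\}$ lying between them. Concretely, I expect to write
\begin{equation*}
\int\limits_{\CH_l \cap \{v_+ \geq v'\}} |\psi|^2 \, \vols dv_+ - \int\limits_{\Sigma \cap \{v_+ \geq v'\}} |\psi|^2 \, \vols dv_+ = \int\limits_{R_{v'}} \rd_r\big(|\psi|^2\big) \, \vols dv_+ dr + (\text{lateral term at } \{v_+ = v'\}),
\end{equation*}
where $R_{v'}$ is the region $\{r_- < r \leq \ered\} \cap \{v_+ \geq v'\} \cap \{f^- \leq v_0\}$ (after first taking $\Sigma$ to sit inside $\{r \leq \ered\}$ — if $v_0$ places $\Sigma$ further out, one splits off a fixed compact region on which Corollary \ref{CorUniformDecayNearHP} gives boundedness and only the near-$\CH_l$ part needs the red-shift). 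The lateral term along $\{v_+ = v'\}$ is the piece connecting the two surfaces at the lower $v_+$-endpoint, and I must control it.

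$\textbf{Step 2: Estimate the bulk and lateral terms.}$ For the bulk, $\rd_r(|\psi|^2) = 2\Rea(\overline{\psi}\,\rd_r\psi)$, so by Cauchy–Schwarz it is bounded pointwise by $|\psi|^2 + |\rd_r\psi|^2$, and integrating against $dr$ over the thin slab and then over $\{v_+ \geq v'\}$, the key is that the integrand carries the weight $\chi(v_+) = v_+^{q_r}$ that Proposition \ref{PropEnCauchyHorizons} controls in the bulk estimate \eqref{PropCauchyEqILED}. The gain comes from observing that on $R_{v'}$ we have $r - r_- \simeq e^{2\kappa_- r^*} \simeq e^{\kappa_-(v_+ + v_-)}$ by \eqref{EqRStarCauchy}, so that $\Delta \simeq (r-r_-)$ is exponentially small, and since $f^- = v_- - r + r_-$ is fixed near $v_0$ on $\Sigma$ this translates the $r$-smallness into the factor $e^{\kappa_- v_+} \leq e^{\kappa_- v'}$ uniformly on $R_{v'}$; combined with the $\chi(v_+)$-weighted bulk bound from \eqref{PropCauchyEqILED} (which absorbs the polynomial weight), one extracts a factor $e^{\frac{1}{2}\kappa_- v'}$ after a Cauchy–Schwarz splitting that trades one power of the exponential for the weight. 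The lateral term at $\{v_+ = v'\}$ is estimated directly by the boundary bound \eqref{PropCauchyEqBoundaryTerms} evaluated near $v_+ = v'$, which again carries $\chi(v') = (v')^{q_r}$ and, after using the $r$-range exponential smallness, yields a contribution of the same order $e^{\frac{1}{2}\kappa_- v'}$.

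$\textbf{Main obstacle.}$ The hard part will be the careful bookkeeping of the exponential versus polynomial weights: the red-shift estimate \eqref{PropCauchyEqILED} controls the $\chi(v_+)$-weighted energy, but the target bound $e^{\frac{1}{2}\kappa_- v'}$ has \emph{no} polynomial weight, so I must make sure that the exponential smallness of $\Delta$ (equivalently of $r - r_-$) on the near-horizon slab genuinely dominates the growing weight $\chi(v_+)$. The clean way to do this is to retain only a fraction $e^{\frac{1}{2}\kappa_-v'}$ of the available exponential decay $e^{\kappa_- v'}$ and use the remaining $e^{\frac{1}{2}\kappa_- v'}$ to kill the polynomial weight $\chi(v_+)$ uniformly for $v_+ \geq v'$ (possible since $\kappa_- < 0$, so $e^{\frac{1}{2}\kappa_- v_+}\chi(v_+) \to 0$), thereby reducing the bulk integral to a constant multiple of the finite $\chi$-weighted energy from \eqref{PropCauchyEqILED}. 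A secondary technical point is justifying the divergence theorem up to the horizon $r = r_-$: this is handled by first integrating over $\{r \geq r'\}$ for $r' > r_-$, using the $L^2_{\chi(v_+)}$-convergence $\psi(\cdot\,;r') \to \psi(\cdot\,;r_-)$ from Proposition \ref{PropExtCauchyL2} to pass $r' \to r_-$ in the boundary term, and the bulk integrability \eqref{PropCauchyEqILED} to pass to the limit in the interior term by dominated convergence.
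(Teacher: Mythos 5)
Your proposal follows the same skeleton as the paper's proof: both compare the two hypersurfaces by integrating an $r$-derivative across the region between $\CH_l$ and $\Sigma$, both obtain the exponential from the width $r_\Sigma(v')-r_-\simeq e^{\kappa_- v'}$ of that region (the paper's Step 1, your observation via \eqref{EqRStarCauchy}), both rely on the near-$\CH_l$ estimates of Proposition \ref{PropEnCauchyHorizons}, and both justify working up to $r=r_-$ by first working at $r'>r_-$ and invoking Proposition \ref{PropExtCauchyL2}. Two set-up slips are harmless: since $\CH_l$ and $\Sigma$ are graphs $r=r_-$ and $r=r_\Sigma(v_+)$ over the \emph{same} parameter range $v_+\geq v'$, the fundamental theorem of calculus in $r$ at fixed $v_+$ connects them with \emph{no} lateral boundary term at $\{v_+=v'\}$; and the region between $\CH_l$ and $\Sigma$ is $\{f^-\geq v_0\}$, not $\{f^-\leq v_0\}$.

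The genuine gap is in how you extract the rate $e^{\frac{1}{2}\kappa_- v'}$. Your first mechanism — bound $\rd_r(|\psi|^2)$ pointwise by $|\psi|^2+|\rd_r\psi|^2$ and use the $\chi$-weighted bulk bound \eqref{PropCauchyEqILED} — yields only $\chi(v')^{-1}=(v')^{-q_r}$: thinness of the slab does \emph{not} make the bulk integral of the non-negative quantity $|\rd_r\psi|^2$ small, since nothing prevents that energy from concentrating at $r=r_-$; the only uniform-in-$r$ slice control of $\rd_r\psi$ available, \eqref{PropCauchyEqBoundaryTerms}, is the degenerate $|\Delta||\rd_r\psi|^2$, and $\int_{r_-}^{r_-+\epsilon} dr/|\Delta|$ diverges, so no exponential gain can be wrung from this factor. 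Polynomial decay $(v')^{-q_r}$ neither proves the stated bound nor suffices downstream, because Lemma \ref{LemPoly}/Corollary \ref{CorPropBackwa} are applied with $p=2p_0>q_r$. Your second mechanism ("retain half the exponential and use the rest to kill the polynomial weight") misdiagnoses the problem: the weight $\chi\geq c>0$ only \emph{helps} and never needs to be dominated. The correct accounting is an \emph{asymmetric} product Cauchy--Schwarz,
\begin{equation*}
\Big|\int_{\CH_l\cap\{v_+\geq v'\}}|\psi|^2\,\vols dv_+-\int_{\Sigma\cap\{v_+\geq v'\}}|\psi|^2\,\vols dv_+\Big|\leq 2\Big(\int_{R_{v'}}|\psi|^2\Big)^{\nicefrac{1}{2}}\Big(\int_{R_{v'}}|\rd_r\psi|^2\Big)^{\nicefrac{1}{2}}\;,
\end{equation*}
where the first factor is $\lesssim e^{\kappa_- v'}$ by Fubini (integrate in $v_+$ at fixed $r$ using the \emph{non-degenerate} $|\psi|^2$-part of \eqref{PropCauchyEqBoundaryTerms}, then over the $r$-range of length $\simeq e^{\kappa_- v'}$), while the second factor is merely \emph{bounded} via \eqref{PropCauchyEqILED}; the square root of the first factor produces the $\tfrac{1}{2}$ in the exponent. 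The paper sidesteps this asymmetry entirely by applying the fundamental theorem of calculus and Cauchy--Schwarz to $\psi$ itself, $|\psi(v_+,r_-)-\psi(v_+,r_\Sigma(v_+))|^2\leq|r_\Sigma(v_+)-r_-|\int|\rd_r\psi|^2\,dr$, so the slab width appears as an explicit prefactor multiplying only the $\rd_r\psi$-energy, and then compares the two surface integrals through the reverse triangle inequality in $L^2$ together with $|I^2-II^2|=|I-II|\,(I+II)$ — this needs nothing beyond \eqref{PropCauchyEqILED}, \eqref{Eqrvplus}, and Proposition \ref{PropExtCauchyL2}.
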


\begin{proof}
\underline{\textbf{Step 1:}} We recall that $f^-(v_+, r) = - v_+ + 2r^* - r + r_+$. Thus, on $\Sigma = \{f^- = v_0\}$ we have
\begin{equation}\label{EqSigma}
\begin{split}
v_+ &= 2r^* - r + r_+ - v_0\\
& = \frac{1}{\kappa_-} \log (r- r_-) + 2F_-(r) - r + r_+ - v_0 \;,
\end{split}
\end{equation}
where we have used \eqref{EqRStarCauchy} (recall that $F_-(r)$ extends regularly to $r_-$). The right hand side of \eqref{EqSigma} is clearly a strictly decreasing function in $r$ and thus the inverse function exists which we denote by $r_\Sigma$ so to obtain $r_\Sigma(v_+) = r$ on $\Sigma$. It is also immediate that we have $r_\Sigma(v_+) \to r_-$ for $v_+ \to \infty$.

Taking the exponential, we obtain from \eqref{EqSigma}
\begin{equation*}
e^{\kappa_- v_+} = (r-r_-) \cdot G(r)
\end{equation*}
 on $\Sigma$ with $\lim_{r \to r_-} G(r) >0$. Thus, for $v_+ \gg 1$ large enough we have
 \begin{equation}\label{Eqrvplus}
 r_\Sigma(v_+) - r_- \simeq e^{\kappa_- v_+} \;.
 \end{equation}

\underline{\textbf{Step 2:}}
Let now $r' > r_-$ be close to $r_-$. 
 \begin{figure}[h]
\centering
 \def\svgwidth{6cm}
    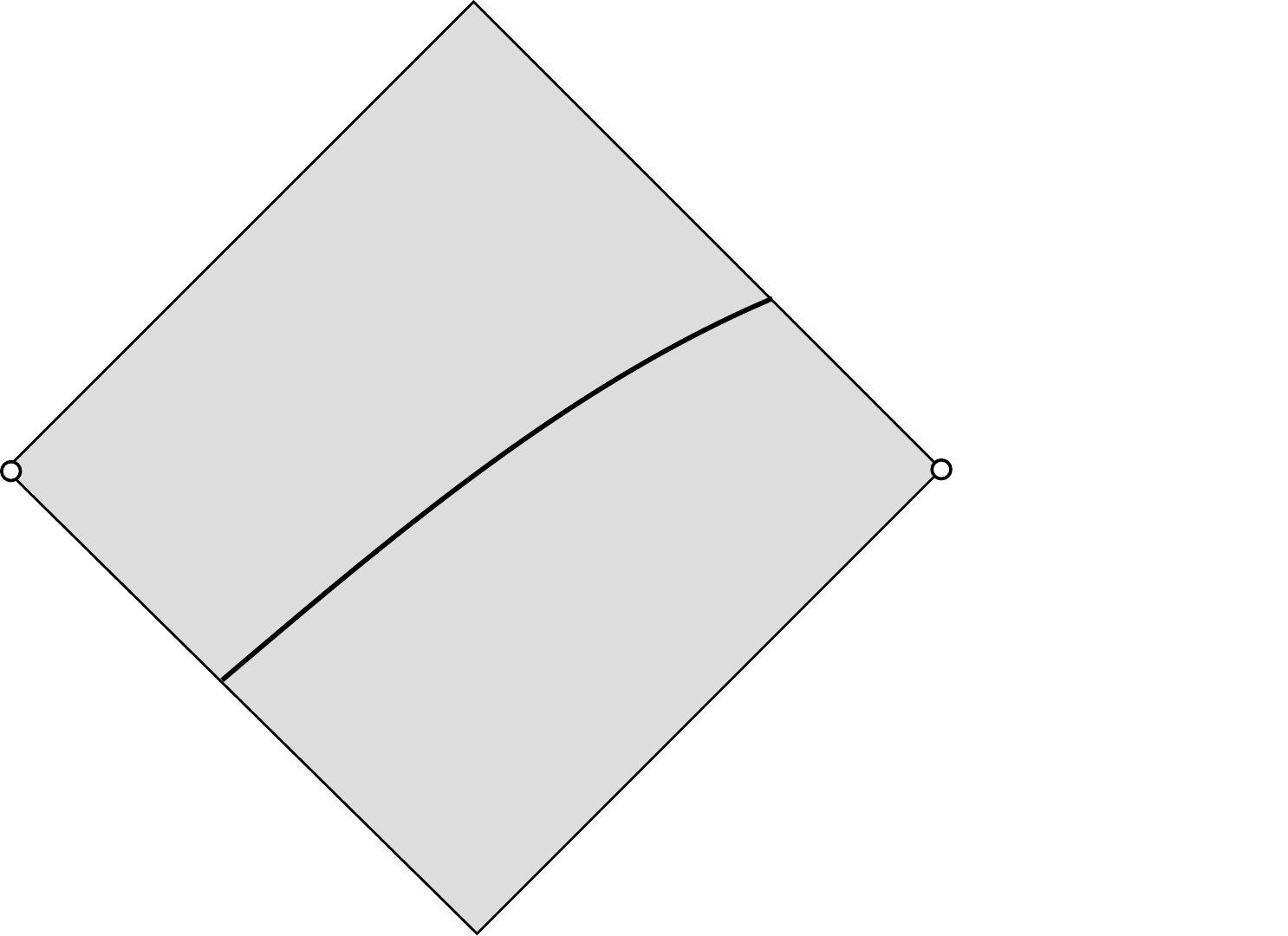
      \caption{The $L^2$-estimate} \label{FigBackwards}
\end{figure}
We now estimate, in a manner similar to the proof of Proposition \ref{PropExtCauchyL2}, as follows (see also Figure \ref{FigBackwards}):
\begin{equation*}
|\psi(v_+, r', \theta, \varphi_+) - \psi(v_+, r_\Sigma(v_+), \theta, \varphi_+)| \leq \int_{r'}^{r_\Sigma(v_+)} |\rd_r \psi(v_+, r, \theta, \varphi_+)|\; dr \;.
\end{equation*}
Squaring and Cauchy-Schwarz gives
\begin{equation*}
|\psi(v_+, r', \theta, \varphi_+) - \psi(v_+, r_\Sigma(v_+), \theta, \varphi_+)|^2 \leq |r_\Sigma(v_+) - r'| \cdot \int_{r'}^{r_\Sigma(v_+)} |\rd_r \psi(v_+, r, \theta, \varphi_+)|^2\; dr \;.
\end{equation*}
Let $v_{\{r = r'\} \cap \Sigma} = 2r^*(r') - r' + r_+ - v_0$ be the value of $v_+$ on $\Sigma$ where $r = r'$. For $v' < v_{\{r = r'\} \cap \Sigma}$ we  integrate to obtain
\begin{equation*}
\begin{split}
\int_{v'}^{v_{\{r = r'\} \cap \Sigma}} \int_{\Sp^2} &|\psi(v_+, r', \theta, \varphi_+) - \psi(v_+, r_\Sigma(v_+), \theta, \varphi_+)|^2 \; \vols dv_+ \\
&\leq |r_\Sigma(v') - r'| \cdot \int_{v'}^{v_{\{r = r'\} \cap \Sigma}} \int_{\Sp^2}  \int_{r'}^{r_\Sigma(v_+)} |\rd_r \psi(v_+, r, \theta, \varphi_+)|^2\; dr \vols dv_+ \;,
\end{split}
\end{equation*}
which gives
\begin{equation}\label{EqGHU}
\begin{split}
\Big| \Big(\int_{v'}^{v_{\{r = r'\} \cap \Sigma}} \int_{\Sp^2} &|\psi(v_+, r', \theta, \varphi_+) |^2 \; \vols dv_+\Big)^{\nicefrac{1}{2}} - \Big(\int_{v'}^{v_{\{r = r'\} \cap \Sigma}} \int_{\Sp^2}| \psi(v_+, r_\Sigma(v_+), \theta, \varphi_+)|^2 \; \vols dv_+ \Big)^{\nicefrac{1}{2}}\Big| \\
&\leq |r_\Sigma(v') - r'|^{\nicefrac{1}{2}} \cdot\Big( \int_{v'}^{v_{\{r = r'\} \cap \Sigma}} \int_{\Sp^2}  \int_{r'}^{r_\Sigma(v_+)} |\rd_r \psi(v_+, r, \theta, \varphi_+)|^2\; dr \vols dv_+\Big)^{\nicefrac{1}{2}} \;.
\end{split}
\end{equation}
We now let $r' \to r_-$ and note that this implies $v_{\{r = r'\} \cap \Sigma} \to \infty$. Moreover, we have
\begin{equation}\label{EqL2Conv}
\begin{split}
\big|\psi(v_+, \theta, \varphi_+; r') &\cdot \mathbbm{1}_{[v', v_{\{r = r'\} \cap \Sigma} ]} (v_+) - \psi(v_+, \theta, \varphi_+; r_-) \cdot\mathbbm{1}_{[v', \infty)}(v_+) \big| \\
&\leq \big|\mathbbm{1}_{[v', v_{\{r = r'\} \cap \Sigma} ]} (v_+) \cdot  \big[\psi(v_+, \theta, \varphi_+;r') - \psi(v_+, \theta, \varphi_+; r_-\big]\big| \\
&\quad + \big|\psi(v_+, \theta, \varphi_+ ; r_-) \cdot \big[\mathbbm{1}_{[v', v_{\{r = r'\} \cap \Sigma} ]} (v_+) - \mathbbm{1}_{[v', \infty)} (v_+)\big] \big| \;,
\end{split}
\end{equation}
where $\mathbbm{1}_A$ denotes the characteristic function of the set $A$. The first summand on the right hand side of \eqref{EqL2Conv} goes to zero in $L^2(\R \times \Sp^2)$ by Proposition \ref{PropExtCauchyL2} while the second goes to zero in $L^2(\R \times \Sp^2)$ by dominated convergence. We thus obtain from \eqref{EqGHU} after $r' \to r_-$ and for $v' \gg 1$ large enough
\begin{equation}\label{EqWERT}
\begin{split}
&\Big| \underbrace{\Big(\int_{v'}^\infty \int_{\Sp^2} |\psi(v_+, \theta, \varphi_+; r_-) |^2 \; \vols dv_+\Big)^{\nicefrac{1}{2}}}_{=:I} - \underbrace{\Big(\int_{v'}^\infty \int_{\Sp^2}| \psi(v_+, r_\Sigma(v_+), \theta, \varphi_+)|^2 \; \vols dv_+ \Big)^{\nicefrac{1}{2}}}_{=:II}\Big| \\
&\qquad \qquad \quad \leq |r_\Sigma(v') - r'|^{\nicefrac{1}{2}} \cdot\Big( \int_{v'}^\infty \int_{\Sp^2}  \int_{r_-}^{r_\Sigma(v_+)} |\rd_r \psi(v_+, r, \theta, \varphi_+)|^2\; dr \vols dv_+\Big)^{\nicefrac{1}{2}} \\
&\qquad \qquad \quad \leq C \cdot e^{\frac{1}{2}\kappa_- v'} \;,
\end{split}
\end{equation}
where we have used \eqref{Eqrvplus} and \eqref{PropCauchyEqILED} in the last step. Since $I$ is finite by Proposition \ref{PropExtCauchyL2}, $II$ is also finite. Multiplying \eqref{EqWERT} by $I + II \leq C$ concludes the proof.
\end{proof}

\begin{lemma}\label{LemPoly}
Let $f,g : [1, \infty) \to [0, \infty)$ be positive, integrable functions that satisfy for $v'$ sufficiently large 
\begin{equation*}
\Big|\int_{v'}^\infty f(v) \; dv - \int_{v'}^\infty g(v) \; dv \Big| \lesssim e^{-\kappa v'} \;,
\end{equation*}
with $\kappa >0$. For $p >0$ we then have
\begin{equation*}
\int_{1}^\infty v^p \cdot f(v) \; dv < \infty \qquad \textnormal{ if, and only if, } \qquad \int_{1}^\infty v^p g(v) \; dv < \infty \;.
\end{equation*}
\end{lemma}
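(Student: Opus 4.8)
The plan is to reduce both weighted conditions to a single statement about the tail integrals and then exploit the exponential closeness of those tails supplied by the hypothesis. I would set $F(w) := \int_w^\infty f(v)\, dv$ and $G(w) := \int_w^\infty g(v)\, dv$ for $w \geq 1$; these are finite and nonnegative because $f,g$ are nonnegative and integrable, and the assumption is precisely $|F(w) - G(w)| \lesssim e^{-\kappa w}$ for all $w$ sufficiently large.

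First I would rewrite the weighted integral of $f$ in terms of $F$. Using the identity $v^p = 1 + \int_1^v p u^{p-1}\, du$ for $v \geq 1$ together with Tonelli's theorem (all integrands are nonnegative, so the interchange is unconditionally valid), I would compute
\begin{equation*}
\int_1^\infty v^p f(v)\, dv = F(1) + \int_1^\infty \Big( \int_1^v p u^{p-1}\, du \Big) f(v)\, dv = F(1) + \int_1^\infty p u^{p-1} F(u)\, du,
\end{equation*}
where in the last step I swapped the order of integration over the region $\{1 \leq u \leq v\}$ and used $\int_u^\infty f(v)\, dv = F(u)$. Since $F(1) = \int_1^\infty f < \infty$, this yields
\begin{equation*}
\int_1^\infty v^p f(v)\, dv < \infty \quad \Longleftrightarrow \quad \int_1^\infty u^{p-1} F(u)\, du < \infty,
\end{equation*}
and the identical computation gives the analogous equivalence relating $\int_1^\infty v^p g(v)\, dv$ to $\int_1^\infty u^{p-1} G(u)\, du$.

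It then remains to show that $\int_1^\infty u^{p-1} F(u)\, du$ and $\int_1^\infty u^{p-1} G(u)\, du$ are simultaneously finite or infinite. I would split at the threshold $v_*$ beyond which the hypothesis holds: on the compact interval $[1, v_*]$ both integrals are finite, while on $[v_*, \infty)$ the difference is controlled by
\begin{equation*}
\int_{v_*}^\infty u^{p-1} |F(u) - G(u)|\, du \lesssim \int_{v_*}^\infty u^{p-1} e^{-\kappa u}\, du < \infty.
\end{equation*}
Because $F$ and $G$ are nonnegative, finiteness of either $\int_1^\infty u^{p-1} F$ or $\int_1^\infty u^{p-1} G$ forces finiteness of the other (they differ by a finite quantity), which completes the equivalence and hence the lemma. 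I do not anticipate a genuine obstacle here; the only points requiring care are the justification of the Tonelli interchange — licensed entirely by the nonnegativity of the integrands — and the observation that the tail estimate is only assumed for large arguments, which is harmless since the remaining compact piece contributes a finite amount to both sides.
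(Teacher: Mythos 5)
Your proof is correct, and it takes a genuinely different route from the paper. You reformulate the weighted integrals via the Tonelli identity $\int_1^\infty v^p f(v)\, dv = F(1) + p\int_1^\infty u^{p-1} F(u)\, du$, where $F(u) = \int_u^\infty f$, so that the hypothesis --- which is literally a pointwise estimate $|F - G| \lesssim e^{-\kappa u}$ on the tails --- can be applied directly: the pointwise bound $G \leq F + |F-G|$ integrated against $u^{p-1}$ settles the equivalence, with the compact piece $[1,v_*]$ harmless since $F,G$ are bounded there. The paper instead works directly in physical space with a dyadic decomposition: it converts the tail hypothesis into bounds $\big|\int_{2^n}^{2^{n+1}}(f-g)\big| \lesssim e^{-\kappa 2^n} + e^{-\kappa 2^{n+1}}$ on dyadic blocks, then estimates $\int_1^\infty v^p g$ block by block, freezing the weight at $(2^{n+1})^p$ on each block and absorbing the comparison $(2^{n+1})^p \leq 2^p (2^n)^p$ at the cost of a factor $2^p$. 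Your approach is arguably more systematic --- the weight is dissolved into an integral of the tail, and the exponential hypothesis enters exactly once, so the argument adapts immediately to any weight $w$ with $\int_1^\infty w'(u)\, e^{-\kappa u}\, du < \infty$; the paper's dyadic argument is more hands-on and makes the (irrelevant) multiplicative loss $2^p$ explicit, but requires the mild doubling-type observation that the weight is comparable across each dyadic block. One small point of care in your write-up: the claim that the two integrals "differ by a finite quantity" should be read as the pointwise domination $G \leq F + |F-G|$ (and symmetrically), since a difference of two possibly infinite integrals is not a priori meaningful --- but this is exactly how you use it, so the logic stands.
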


\begin{proof}
Let us assume $\int_{1}^\infty v^p \cdot f(v) \; dv < \infty$. By assumption we have
\begin{equation*}
\Big|\int_{2^n}^\infty \big(f(v) - g(v)\big) \; dv \Big|\leq C e^{-\kappa \cdot 2^n} 
\end{equation*} 
for all $n \in \N$. It follows that
\begin{equation}\label{EqLargeN}
\Big|\int_{2^n}^{2^{n+1}} \big(f(v) - g(v)\big) \; dv\Big| \leq C \big( e^{-\kappa \cdot 2^n}  + e^{-\kappa \cdot 2^{n+1}}\big) \;. 
\end{equation}
We then compute, using \eqref{EqLargeN},
\begin{equation*}
\begin{split}
\int_1^\infty v^p \cdot g(v) \; dv &\leq \sum_{n=0}^\infty \int_{2^n}^{2^{n+1}} (2^{n+1})^p g(v) \; dv \\
&\leq \underbrace{\sum_{n=0}^\infty (2^{n+1})^p \cdot C\big( e^{-\kappa \cdot 2^n}  + e^{-\kappa \cdot 2^{n+1}}\big)}_{< \infty}  +  \sum_{n=0}^\infty \int_{2^n}^{2^{n+1}} (2^{n+1})^p f(v) \; dv \\
&\leq C + 2^p \sum_{n=0}^\infty \int_{2^n}^{2^{n+1}} (2^{n})^p f(v) \; dv \\
&\leq C + 2^p \int_{1}^\infty v^p \cdot f(v) \; dv \;.
\end{split}
\end{equation*}
\end{proof}

Applying the lemma with $f(v_+) = \int_{\Sp^2}|\psi(v_+,\theta, \varphi_+; r_-)|^2 \;\vols$ and $g(v_+) = \int_{\Sp^2} |\psi(v_+, r_\Sigma(v_+), \theta, \varphi_+)|^2 \; \vols$ gives the following
\begin{corollary} \label{CorPropBackwa}
In the setting of Proposition \ref{PropBackwards} we have
\begin{equation*}
 \int\limits_{\CH_l \cap \{v_+ \geq 1\}} v_+^p|\psi(\cdot\; ; r_-)|^2 \; \vols dv_+ < \infty \qquad \textnormal{ if, and only if, } \qquad \int\limits_{\Sigma \cap \{v_+ \geq 1\}} v_+^p|\psi|^2 \;\vols dv_+ < \infty \;,
\end{equation*}
where $p>0$. It follows in particular from Proposition \ref{PropExtCauchyL2} that 
\begin{equation}
\label{EqAddOn}\int\limits_{\Sigma \cap \{v_+ \geq 1\}} \chi(v_+)|\psi|^2 \;\vols dv_+ < \infty \;.
\end{equation}
\end{corollary}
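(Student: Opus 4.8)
The plan is to invoke Lemma \ref{LemPoly} directly, the only real task being to verify its hypotheses for the two functions
\[
f(v_+) := \int_{\Sp^2} |\psi(v_+, \theta, \varphi_+; r_-)|^2 \, \vols, \qquad g(v_+) := \int_{\Sp^2} |\psi(v_+, r_\Sigma(v_+), \theta, \varphi_+)|^2 \, \vols,
\]
both viewed on $[1, \infty)$. Non-negativity is immediate. For integrability of $f$ I would appeal to Proposition \ref{PropExtCauchyL2}, which gives $\psi(\cdot\,; r_-) \in L^2(\R \times \Sp^2)$ and hence $\int_1^\infty f(v_+)\, dv_+ < \infty$; integrability of $g$ is the assertion that $\int_{\Sigma \cap \{v_+ \geq 1\}} |\psi|^2 \, \vols dv_+ < \infty$, which was already recorded as the finiteness of the quantity $II$ in the proof of Proposition \ref{PropBackwards} (and alternatively follows from the boundary estimate \eqref{PropCauchyEqBoundaryTerms} transported to $\Sigma$ via Proposition \ref{PropEnergyEstNoShift}).

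The exponential-closeness hypothesis of Lemma \ref{LemPoly} is exactly Proposition \ref{PropBackwards}. Since $r = r_\Sigma(v_+)$ on $\Sigma$, one has $\int_{v'}^\infty f = \int_{\CH_l \cap \{v_+ \geq v'\}} |\psi(\cdot\,; r_-)|^2 \, \vols dv_+$ and $\int_{v'}^\infty g = \int_{\Sigma \cap \{v_+ \geq v'\}} |\psi|^2 \, \vols dv_+$, so Proposition \ref{PropBackwards} reads $\bigl|\int_{v'}^\infty f - \int_{v'}^\infty g\bigr| \leq C e^{\frac{1}{2} \kappa_- v'}$, i.e. the required bound with decay rate $\kappa = -\tfrac{1}{2} \kappa_- > 0$ (recall $\kappa_- < 0$). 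Lemma \ref{LemPoly} then yields, for every $p > 0$, that $\int_1^\infty v_+^p f(v_+)\, dv_+ < \infty$ if and only if $\int_1^\infty v_+^p g(v_+)\, dv_+ < \infty$, which is precisely the claimed equivalence.

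For the final assertion \eqref{EqAddOn} I would apply this equivalence with $p = q_r$. As $\chi(v_+) = v_+^{q_r}$ for $v_+ \geq 1$, the weighted bound \eqref{EqWeightedPsiCauchy} of Proposition \ref{PropExtCauchyL2} gives $\int_{\CH_l \cap \{v_+ \geq 1\}} v_+^{q_r} |\psi(\cdot\,; r_-)|^2 \, \vols dv_+ < \infty$, i.e. $\int_1^\infty v_+^{q_r} f(v_+)\, dv_+ < \infty$; the equivalence then delivers $\int_1^\infty v_+^{q_r} g(v_+)\, dv_+ < \infty$, which is \eqref{EqAddOn}. No substantial obstacle remains: the entire analytic content is carried by the exponential estimate of Proposition \ref{PropBackwards}, already established, and the only points demanding care are the bookkeeping identification of the $\Sigma$-integrals (parametrised by $v_+$ through $r_\Sigma$) with $\int g$, and the verification of integrability of $g$ so that Lemma \ref{LemPoly} genuinely applies.
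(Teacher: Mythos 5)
Your proposal is correct and is essentially identical to the paper's own (one-line) proof: the paper likewise applies Lemma \ref{LemPoly} with $f(v_+) = \int_{\Sp^2}|\psi(v_+,\theta,\varphi_+;r_-)|^2\,\vols$ and $g(v_+) = \int_{\Sp^2}|\psi(v_+,r_\Sigma(v_+),\theta,\varphi_+)|^2\,\vols$, with Proposition \ref{PropBackwards} supplying the exponential-closeness hypothesis and Proposition \ref{PropExtCauchyL2} (via \eqref{EqWeightedPsiCauchy}) giving \eqref{EqAddOn}. Your additional verification of the integrability hypotheses of Lemma \ref{LemPoly} (finiteness of $\int f$ from Proposition \ref{PropExtCauchyL2} and of $\int g$ from the term $II$ in the proof of Proposition \ref{PropBackwards}) is exactly the bookkeeping the paper leaves implicit.
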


\section{Teukolsky's separation of variables}
\label{SecTeukSep}

In this section we use the upper bounds derived on the Teukosky field in Corollary \ref{CorollaryNoShift} to establish the separation of variables. We begin by a discussion of the spin $2$-weighted spheroidal harmonics, then introduce the Teukolsky transform, and prove a non-trivial result regarding the relation of physical space $v_+$-weights and frequency domain $\omega$-derivatives. We then derive the radial Teukolsky equation belonging to \eqref{TeukolskyStar}.

\subsection{Spin $2$-weighted spheroidal harmonics} \label{SecSpin2Harmonics}

For $\omega \in \R$ and $f \in \SWs $ we define\footnote{This differs from the analogous operator defined in Section 6.2.1 in \cite{DafHolRod17} by an overall minus sign.}
\begin{equation}\label{DefSpinWeightedLaplacianOmega}
\mathring{\slashed\Delta}_{[s]}(\omega) f := \mathring{\slashed\Delta}_{[s]} f  + (a \omega)^2 \cos^2\theta \cdot f - 2sa\omega \cos \theta \cdot f \;.
\end{equation}
Clearly, $\mathring{\slashed\Delta}_{[s]}(\omega)$ maps $\SWs$ into $\SWs$.
\begin{proposition}\label{PropEigenfunctionsSWL}
The operator $\mathring{\slashed\Delta}_{[s]}(\omega) : L^2(\Sp^2) \supseteq \SWs \to \SWs \subseteq L^2(\Sp^2)$ has a complete and orthonormal (with respect to $L^2(\Sp^2)$) set of eigenfunctions $Y^{[s]}_{ml}(\omega) \in \SWs$ indexed by $m \in \Z$, $\N \ni l \geq \max (|m|, |s|)$ and eigenvalues $\lambda_{ml}^{[s]}(\omega) \in \R$ satisfying
\begin{equation}
\label{EqEVSWSH}
\mathring{\slashed\Delta}_{[s]}(\omega) Y^{[s]}_{ml}(\omega) = \lambda_{ml}^{[s]}(\omega) Y^{[s]}_{ml}(\omega)  \;.
\end{equation}
The eigenfunctions are known as \emph{spin $2$-weighted spheroidal harmonics} and are of the form $$Y^{[s]}_{ml}(\theta, \varphi ; \omega) = S^{[s]}_{ml}( \cos \theta ; \omega) e^{im \varphi} \;, $$
where the $S^{[s]}_{ml}(\cos \theta; \omega)$ form a complete and orthonormal (with respect to $L^2([-1,1], d \cos \theta)$) set of eigenfunctions of the operator
$$L^{[s]}_{m}(\omega) S := \frac{1}{\sin \theta} \rd_\theta (\sin \theta \rd_\theta S) - \frac{m^2}{\sin^2 \theta} S - 2sm \frac{\cos \theta}{\sin^2 \theta} S - (s^2 \frac{\cos^2 \theta}{\sin^2 \theta} - s) S + (a \omega)^2 \cos^2 \theta \cdot S - 2sa\omega \cos \theta \cdot S $$
with eigenvalues $\lambda^{[s]}_{ml}(\omega)$:\footnote{This is the same equation as (4.10) in \cite{Teuk73} with $A = - \lambda_{ml}^{[s]}(\omega)$.}
\begin{equation}
\label{EqEFXEq}
L^{[s]}_{m}(\omega) S^{[s]}_{ml}(\omega) = \lambda_{ml}^{[s]}(\omega) S^{[s]}_{ml}(\omega)  \;.
\end{equation}
The eigenvalues $\lambda^{[s]}_{ml}(\omega)$ depend analytically on $\omega$ and the eigenfunctions $S^{[s]}_{ml}(\cos \theta; \omega)$ are analytic in $\omega$ and in $x= \cos \theta$ away from $x= \pm 1$.   Near $x = \pm 1$ we have the the following asymptotic expansions for all $k \in \N$: Near $x = -1$ we have
$$\rd_\omega^k S_{ml}^{[s]}(x ; \omega) = (1 + x)^{\frac{1}{2}|m-s|} a_k(x; \omega)\;,$$
where $a_{k}(x; \omega)$ is analytic in both arguments near $x= -1$; and  near $x=+1$ 
$$\rd_\omega^k S_{ml}^{[s]}(x ; \omega)  = (x - 1)^{\frac{1}{2}|m+s|} b_k(x; \omega)$$ is valid with $b_k(x; \omega)$ being analytic in both arguments near $x = +1$.

Moreover, we have $\lambda_{ml}^{[s]}(\omega) - s = \lambda^{[-s]}_{ml}(\omega) + s$ and for $\omega = 0$ the eigenvalues are given by $\lambda_{ml}^{[s]}(0) = -(l -s)(l+s+1) = -l(l+1) + s(s+1)$.
\end{proposition}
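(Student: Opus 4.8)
The plan is to reduce the two-dimensional eigenvalue problem on $\Sp^2$ to a one-dimensional singular Sturm--Liouville problem, to set up the spectral theory via a Friedrichs extension with compact resolvent, and then to read off analyticity and the boundary asymptotics from the regular-singular structure of the radial ODE \eqref{EqEFXEq}. First I would record that $\mathring{\slashed\Delta}_{[s]}(\omega)$ is symmetric: the integration-by-parts identity \eqref{EqSWLIntParts} (a consequence of Proposition \ref{PropIntPartsSphere}) shows that $\mathring{\slashed\Delta}_{[s]}$ is formally self-adjoint on $\SWs$, and the multiplication terms $(a\omega)^2 \cos^2\theta - 2sa\omega\cos\theta$ in \eqref{DefSpinWeightedLaplacianOmega} are real and bounded, hence symmetric. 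The quadratic form associated to $-\mathring{\slashed\Delta}_{[s]}(\omega)$, namely $f \mapsto \int_{\Sp^2}\big(\sum_i |\tilde{Z}_i f|^2 - (s+s^2)|f|^2 - [(a\omega)^2\cos^2\theta - 2sa\omega\cos\theta]|f|^2\big)\,\vols$, is bounded below and closable on the spin-weighted $H^1$-space; its Friedrichs extension is self-adjoint. Since $H^1_{[s]}(\Sp^2) \hookrightarrow L^2(\Sp^2)$ is compact (Rellich, using Lemma \ref{LemSobolev} to pass to genuine Sobolev spaces on the two hemispheres), the resolvent is compact, so the spectral theorem gives a complete orthonormal set of eigenfunctions in $L^2(\Sp^2)$ with real eigenvalues. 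Because all coefficients are $\varphi$-independent, $\mathring{\slashed\Delta}_{[s]}(\omega)$ commutes with $\tilde{Z}_3 = \partial_\varphi$, so one may simultaneously diagonalize; restricting to a fixed azimuthal mode $e^{im\varphi}$ produces the ansatz $Y^{[s]}_{ml} = S^{[s]}_{ml}(\cos\theta;\omega)e^{im\varphi}$ and the radial equation \eqref{EqEFXEq}. The range $l \geq \max(|m|,|s|)$, the completeness, and the value $\lambda^{[s]}_{ml}(0) = -(l-s)(l+s+1) = -l(l+1)+s^2+s$ all follow from the case $\omega = 0$, where \eqref{EqSWLV} identifies $\mathring{\slashed\Delta}_{[s]}(0) = \mathring{\slashed\Delta}_{[s]}$ with the Casimir $\sum_i \tilde{Z}_i^2 + s + s^2$ and the classical spin-weighted spherical harmonics are available.

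For analyticity in $\omega$ I would view $\omega \mapsto \mathring{\slashed\Delta}_{[s]}(\omega)$ as a bounded (in fact polynomial) perturbation of the fixed self-adjoint operator $\mathring{\slashed\Delta}_{[s]}$, hence an entire analytic family of type (A) in the sense of Kato. The Kato--Rellich theory for one-parameter real-analytic families with compact resolvent then permits the eigenvalues and a corresponding orthonormal eigenbasis to be chosen real-analytic in $\omega$ globally; fixing the labeling at $\omega = 0$ by the values $\lambda^{[s]}_{ml}(0)$ and continuing yields the analytic branches $\lambda^{[s]}_{ml}(\omega)$ and $Y^{[s]}_{ml}(\omega)$. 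Analyticity of $S^{[s]}_{ml}(x;\omega)$ in $x$ away from $x = \pm 1$ is the standard fact that solutions of a linear ODE with coefficients jointly analytic in $(x,\omega,\lambda)$ are jointly analytic, applied to \eqref{EqEFXEq} together with the analyticity of $\lambda^{[s]}_{ml}(\omega)$.

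The boundary asymptotics near $x = \pm 1$ are where I expect the main work. The points $x = \pm 1$ are regular singular points of \eqref{EqEFXEq}, and the indicial exponents — $\pm\tfrac12|m-s|$ at $x=-1$ and $\pm\tfrac12|m+s|$ at $x=+1$ — are governed by the terms $-\tfrac{m^2}{\sin^2\theta} - 2sm\tfrac{\cos\theta}{\sin^2\theta} - s^2\tfrac{\cos^2\theta}{\sin^2\theta}$, which are \emph{independent of $\omega$}. The requirement that $S^{[s]}_{ml}$ be a genuine spin-weighted function (Proposition \ref{PropFirstCharSpinWeighted}) selects the non-negative exponent, giving near $x=-1$ a Frobenius expansion $S^{[s]}_{ml}(x;\omega) = (1+x)^{\frac12|m-s|}\sum_{n\geq 0} c_n(\omega)(1+x)^n$, whose coefficients $c_n(\omega)$ solve a recurrence with inputs analytic in $\omega$ and in $\lambda^{[s]}_{ml}(\omega)$. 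The delicate points are to establish convergence of this series with an $\omega$-uniform radius and joint analyticity of the sum $a_0(x;\omega)$, and to control the $L^2$-normalisation constant (shown analytic in $\omega$ from the analyticity of the eigenprojections). The crucial observation for the $\omega$-derivative claim is that, since the leading exponent is $\omega$-independent, differentiation in $\omega$ cannot alter the power, so $\rd_\omega^k S^{[s]}_{ml} = (1+x)^{\frac12|m-s|}\,\rd_\omega^k a_0(x;\omega)$ with $a_k := \rd_\omega^k a_0$ still analytic; the argument at $x=+1$ is identical. This uniformity-in-$\omega$ of the regular-singular structure is the main obstacle.

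Finally, the symmetry $\lambda^{[s]}_{ml}(\omega) - s = \lambda^{[-s]}_{ml}(\omega) + s$ I would obtain by combining two operator symmetries. A direct computation from \eqref{EqSpinWeightedLap} and \eqref{DefSpinWeightedLaplacianOmega} gives the conjugation identity $\overline{\mathring{\slashed\Delta}_{[s]}(\omega)f} = \big(\mathring{\slashed\Delta}_{[-s]}(-\omega) + 2s\big)\bar f$, so that $\overline{Y^{[s]}_{ml}(\omega)}$ (which has azimuthal number $-m$ and spin weight $-s$) is an eigenfunction of $\mathring{\slashed\Delta}_{[-s]}(-\omega)$ with eigenvalue $\lambda^{[s]}_{ml}(\omega) - 2s$; matching labels by the ordering of eigenvalues yields $\lambda^{[-s]}_{-m,l}(-\omega) = \lambda^{[s]}_{ml}(\omega) - 2s$. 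The reflection $x \mapsto -x$ (i.e.\ $\theta \mapsto \pi - \theta$) transforms $L^{[-s]}_m(\omega)$ into $L^{[-s]}_{-m}(-\omega)$, giving $\lambda^{[-s]}_{ml}(\omega) = \lambda^{[-s]}_{-m,l}(-\omega)$. Chaining the two relations produces $\lambda^{[-s]}_{ml}(\omega) = \lambda^{[s]}_{ml}(\omega) - 2s$, which is exactly the asserted identity and which reduces at $\omega = 0$ to the explicit formula already obtained.
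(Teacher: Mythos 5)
Your proposal is correct in substance but takes a genuinely different route from the paper at its two structural steps. For existence and completeness you work directly with the two-dimensional operator: Friedrichs extension of the bounded-below form, compactness of the embedding of the spin-weighted $H^1$-space into $L^2(\Sp^2)$ via Lemma \ref{LemSobolev} and Rellich on the hemispheres, spectral theorem, then restriction to azimuthal sectors. The paper instead fixes $m$ from the start and, for one large real $\lambda_0$, builds the Green's function of $L^{[s]}_m(\omega)-\lambda_0$ out of the two regular Frobenius solutions $u_1,v_1$; this solution operator is a symmetric compact operator on $L^2([-1,1])$ with square-integrable kernel, and the spectral theorem is applied to it. For analyticity in $\omega$ you invoke Kato--Rellich perturbation theory for a self-adjoint holomorphic family of type (A) with compact resolvent, whereas the paper characterises the eigenvalues as the zeros in $\lambda$ of the modified Wronskian $W(u_1,v_1)(\lambda,\omega)$, shows these zeros are simple, and applies the analytic implicit function theorem. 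The paper's construction buys two things your softer argument does not give automatically: the selection of the regular Frobenius branch at both endpoints is built into the Green's function, and simplicity of the eigenvalues in each $m$-sector comes out as a by-product (which makes the identification of the eigenfunction with $u_1$, and hence the normalisation, immediate). Conversely, your route avoids constructing and estimating the Green's function; but note that the Frobenius analysis cannot be avoided either way, since the endpoint asymptotics for $\rd_\omega^k S^{[s]}_{ml}$ --- the part of the proposition actually consumed by Propositions \ref{PropEstimatesDerivativesEigenfunctions} and \ref{PropCharSlowDecay} --- are obtained exactly as you describe and exactly as in the paper: the indicial exponents are $\omega$-independent, so $\omega$-derivatives only hit the jointly analytic factor.

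Two points in your write-up need care. First, selecting the non-negative indicial exponent ``because $S^{[s]}_{ml}$ is a genuine spin-weighted function'' is slightly circular: a priori the Friedrichs eigenfunctions lie only in the form domain, not in $\SWs$ (membership in $\SWs$ is part of what the proposition asserts). The correct statement is that finiteness of the form energy $\int_{\Sp^2}\sum_i|\zt_i f|^2\,\vols$ already excludes the singular Frobenius branch at each endpoint (including the logarithmic case $|m\mp s|=0$), and $Y^{[s]}_{ml}(\omega)\in\SWs$ is then deduced a posteriori, either by checking the conditions of Proposition \ref{PropFirstCharSpinWeighted} against the Frobenius representation, or, as the paper does more elegantly, by passing to the associated trace-free symmetric $2$-tensor and invoking elliptic regularity. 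Second, your derivation of $\lambda^{[s]}_{ml}(\omega)-s=\lambda^{[-s]}_{ml}(\omega)+s$ by composing a conjugation identity with the reflection $x\mapsto -x$ is correct, but the paper gets it in one step: the reflection alone turns \eqref{EqXX} for spin $s$ at $(m,\omega)$ into the equation for spin $-s$ at the same $(m,\omega)$ shifted by the constant $2s$, and the constant shift preserves the ordering used for the labelling.
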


\begin{proof}
The result is standard, see for example \cite{Teuk73} or \cite{DafHolRod17}, although we do not know a reference that includes a proof. We will thus give an outline of the proof here.

Making the separation of variables ansatz $Y_{m}(\theta, \varphi) = S_m(\theta) e^{im \varphi}$ we obtain $$\mathring{\slashed\Delta}_{[s]}(\omega) Y_m(\theta, \varphi) = \big(L_m^{[s]}(\omega) S_m(\theta)\big) \cdot e^{im \varphi} \;.$$ We will now find an orthonormal basis of eigenfunctions for $L_m^{[s]}(\omega)$ using Sturm-Liouville theory. The substitution $x = \cos \theta$ yields
\begin{equation}\label{EqXX}
L_m^{[s]}S_m = \frac{d}{dx} \big( (1-x^2) \frac{d}{dx} S_m\big)  - \frac{(m+sx)^2}{1 - x^2} S_m + \big( s + (a\omega)^2 x^2 - 2sa\omega x\big) S_m \;.
\end{equation}
We now go over to $L_m^{[s]} - \lambda$ for $\lambda \in \C$. The points $x = \pm 1$ are regular singular points of the second order differential operator and, moreover, it depends analytically on $\omega$ and $\lambda$, even for complex $\omega$. The Frobenius method, see for example \cite{Teschl}, shows that there is a fundamental system of solutions of $(L_m^{[s]} - \lambda)S_m = 0$, normalised at $x =-1$, of the form 
\begin{equation*}
\begin{split}
u_1(x; \lambda, \omega) &= (1+x)^{\frac{1}{2}|m - s|} h_1(x; \lambda, \omega) \\
u_2(x ; \lambda, \omega) &= (1+x)^{-\frac{1}{2}|m-s|} h_2(x; \lambda, \omega) + c \log(1 +x) u_1(x ; \lambda, \omega) \;,
\end{split}
\end{equation*}
where $h_1$ and $h_2$ are analytic in $[-1,1) \times \R \times \R$ and the constant $c$ might be zero unless $m =s$. Similarly, there is a fundamental system of solutions normalised at $x=+1$:
\begin{equation*}
\begin{split}
v_1(x; \lambda, \omega) &= (x-1)^{\frac{1}{2}|m+s|} g_1(x; \lambda, \omega) \\
v_2(x ; \lambda, \omega) &= (x-1)^{-\frac{1}{2}|m+s|} g_2(x; \lambda, \omega) + c \log (x-1) v_1(x ; \lambda, \omega) \;,
\end{split}
\end{equation*}
where $g_1$ and $g_2$ are analytic in $(-1,1] \times \R \times \R$ and the constant $c$ might be zero unless $m = -s$. Note that $u_1$ is regular at $x = -1$ while $v_1$ is regular at $x =+1$. For $\lambda = \lambda_0 >0$ large enough one can show that $u_1$ and $v_1$ are linearly independent. Using this pair of solutions one constructs the Green's function in the same way as for a regular Sturm-Liouville problem, c.f.\ \cite{Teschl}. The above asymptotics imply that the Green's function is in $L^2([-1,1] \times [-1,1])$, and thus the solution operator $K_{\lambda_0}$ is a symmetric and compact operator on $L^2([-1,1])$. It is easy to show that the kernel vanishes and thus, by the spectral theorem, there is an orthonormal basis of eigenfunctions $S^{[s]}_{ml}(x)$ of $K_{\lambda_0}$ with real eigenvalues $\mu^{[s]}_{ml}$. Using the asymptotics of $u_1$ and $v_1$ in the Green's function one shows that $S^{[s]}_{ml}$ are continuous at $x = \pm 1$. Moreover, they satisfy $(L_m^{[s]} - \lambda_0) (\mu^{[s]}_{ml}S_{ml}^{[s]} ) = S^{[s]}_{ml}$ and thus $$L_m^{[s]} S^{[s]}_{ml} - (\underbrace{\lambda_0 + \frac{1}{\mu^{[s]}_{ml}}}_{=: \lambda^{[s]}_{ml}}) S^{[s]}_{ml} = 0 \;.$$ 
It follows that $S^{[s]}_{ml} \sim u_1 \sim v_1$ -- and thus in particular that the eigenvalues are simple.

To show the analytic dependence of the eigenvalues on $\omega$ we notice that they are exactly the zeros of the modified Wronskian $W(u_1, v_1)(\lambda, \omega) := u_1(x; \lambda, \omega) (1-x^2) v_1'(x; \lambda, \omega) - (1-x^2) u_1'(x; \lambda, \omega) v_1(x ; \lambda, \omega)$. One now shows that the zeros of the Wronskian in $\lambda$ are simple, i.e., $\partial_\lambda W(u_1, v_1) (\lambda^{[s]}_{ml}, \omega) \neq 0$. The analytic implicit function theorem then yields that the eigenvalues $\lambda^{[s]}(\omega)$ depend analytically on $\omega$. It follows that $u_1(x; \lambda^{[s]}_{ml}(\omega), \omega)$ depends analytically on $\omega$. Normalising it in $L^2([-1,1])$ then gives $S^{[s]}_{ml}(x; \omega)$, which shows in particular the regularity claimed in the proposition. 

It is straightforward to show that $Y^{[s]}_{lm}(\omega)$ is an orthonormal basis of $L^2(\Sp^2)$. To show $Y^{[s]}_{lm}(\omega) \in \SWs$, we can use the asymptotics of $S^{[s]}_{ml}(\omega)$ given by the Frobenius solutions above and tediously verify the conditions in Proposition \ref{PropFirstCharSpinWeighted}. Alternatively, and more elegantly, we can multiply  \eqref{EqEVSWSH} by $\overline{Y^{[s]}_{ml}(\omega)}$, integrate over the sphere and check that the asymptotics of $S_{ml}^{[s]}(\omega)$ allow us to do one integration by parts to conclude that $Y^{[s]}_{ml}(\omega) \in H^1_{[s]}(\Sp^2)$. We now go over the corresponding trace-free and symmetric $2$-covariant tensor field $\alpha_{ml}(\omega)$ on $\Sp^2$ which is smooth except possibly at the poles of the sphere.
Using \eqref{EqContDepSpin} we now rewrite \eqref{EqEVSWSH} as a standard elliptic equation for $\alpha_{ml}(\omega)$. It now follows from standard elliptic regularity theory that $\alpha_{ml}(\omega)$ is smooth on all of the sphere -- showing the claim.

The relation $\lambda_{ml}^{[s]}(\omega) - s = \lambda^{[-s]}_{ml}(\omega) + s$ follows from the substitution $x \to - x$ in \eqref{EqXX}. Finally, we refer the reader to \cite{GoMaNe67} for the evaluation of the eigenvalues at $\omega = 0$.
\end{proof}

The following quantitative result on the $\omega$-dependence of the eigenfunctions $Y^{[s]}_{ml}(\omega)$ is needed for the proof of Proposition \ref{PropCharSlowDecay}.

\begin{proposition} \label{PropEstimatesDerivativesEigenfunctions}
By Proposition \ref{PropEigenfunctionsSWL} we know that $\partial_\omega^k S_{ml}^{[s]}( \omega) \in L^2([-1,1])$ for all $k \in \N$. We can thus expand in $L^2([-1,1])$ $$\rd_\omega^k S_{ml}^{[s]}(\omega) = \sum_{l' \geq \max(|m|, |s|)} D^{[s]}_{mll';k} (\omega) S^{[s]}_{ml'}(\omega) $$
with $D_{mll';k}^{[s]}(\omega) \in \R$.

There exists $\varepsilon >0$ such that for $|\omega| \leq \varepsilon$ we have
\begin{equation}
\label{EqPropLittleL2}
\sum_{l,l' \geq \max(|m|, |s|)} |D^{[s]}_{mll';k}(\omega)|^2 \leq C(k) < \infty \;,
\end{equation}
where the constant is independent of $m$ and $|\omega| \leq \varepsilon$.
\end{proposition}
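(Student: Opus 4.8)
The plan is to exploit Parseval's identity. Since, for fixed $m$ and $\omega$, the set $\{S^{[s]}_{ml'}(\omega)\}_{l'}$ is a complete orthonormal basis of $L^2([-1,1], d\cos\theta)$ by Proposition \ref{PropEigenfunctionsSWL}, the coefficients $D^{[s]}_{mll';k}(\omega) = \langle \partial_\omega^k S^{[s]}_{ml}(\omega), S^{[s]}_{ml'}(\omega)\rangle$ satisfy $\sum_{l'}|D^{[s]}_{mll';k}(\omega)|^2 = \|\partial_\omega^k S^{[s]}_{ml}(\omega)\|^2_{L^2}$. Hence \eqref{EqPropLittleL2} is equivalent to the uniform bound $\sum_{l\ge\max(|m|,|s|)}\|\partial_\omega^k S^{[s]}_{ml}(\omega)\|^2 \le C(k)$, and it suffices to establish the pointwise decay $\|\partial_\omega^k S^{[s]}_{ml}(\omega)\| \le C(k)/l$ uniformly in $m$ and $|\omega|\le\varepsilon$ (we treat $k\ge 1$; for $k=0$ the matrix $D$ is the identity). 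The decisive structural observation is that the $\omega$-dependence of $L^{[s]}_m(\omega)$ enters only through the bounded multiplication operators $V_1 := -2sa\cos\theta$ and $V_2 := a^2\cos^2\theta$, since $\partial_\omega L^{[s]}_m(\omega) = V_1 + 2\omega V_2$ and $\partial_\omega^2 L^{[s]}_m(\omega) = 2V_2$, all higher $\omega$-derivatives vanishing. Crucially, the $m$-dependent and singular part of $L^{[s]}_m$ is $\omega$-independent and drops out, so $\|\partial_\omega L^{[s]}_m(\omega)\|$ and $\|\partial_\omega^2 L^{[s]}_m(\omega)\|$ are bounded independently of $m$.

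Next I would record the two uniform spectral facts on which the estimate rests. By Proposition \ref{PropEigenfunctionsSWL} one has $\lambda^{[s]}_{ml}(0) = -l(l+1)+s(s+1)$, so the nearest-neighbour gaps are $|\lambda^{[s]}_{ml}(0)-\lambda^{[s]}_{m,l\pm1}(0)|\ge 2l$, independent of $m$ and growing in $l$. By the Feynman-Hellmann formula $\partial_\omega\lambda^{[s]}_{ml} = \langle(\partial_\omega L^{[s]}_m)S^{[s]}_{ml},S^{[s]}_{ml}\rangle$, so $|\partial_\omega\lambda^{[s]}_{ml}|\le \|\partial_\omega L^{[s]}_m\|$ is bounded uniformly in $m,l,|\omega|\le\varepsilon$; thus every eigenvalue moves by at most $O(\varepsilon)$ on $[-\varepsilon,\varepsilon]$. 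Choosing $\varepsilon$ small enough (uniformly, using $l\ge\max(|m|,|s|)\ge 2$) the spectral ordering is preserved and the distance of $\lambda^{[s]}_{ml}(\omega)$ to the remaining spectrum of $L^{[s]}_m(\omega)$ stays $\ge l$. Consequently the reduced resolvent $R_{ml}(\omega) := (L^{[s]}_m(\omega)-\lambda^{[s]}_{ml}(\omega))^{-1}Q_{ml}$, with $Q_{ml}$ the spectral projection off the simple eigenvalue $\lambda^{[s]}_{ml}$, satisfies $\|R_{ml}(\omega)\|\le 1/l$, uniformly in $m$.

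The core is an induction on $k\ge1$ proving simultaneously $\|\partial_\omega^k S^{[s]}_{ml}\|\le C(k)/l$ and $|\partial_\omega^k\lambda^{[s]}_{ml}|\le \Lambda(k)$, uniformly in $m,l,|\omega|\le\varepsilon$. Differentiating \eqref{EqEFXEq} $k$ times by Leibniz, and using that only $\partial_\omega^0 L,\partial_\omega^1 L,\partial_\omega^2 L$ are nonzero, gives with $u_j := \partial_\omega^j S^{[s]}_{ml}$
$$(L^{[s]}_m(\omega)-\lambda^{[s]}_{ml})u_k = -k(\partial_\omega L^{[s]}_m)u_{k-1} - k(k-1)V_2 u_{k-2} + \sum_{j=1}^k\binom{k}{j}(\partial_\omega^j\lambda^{[s]}_{ml})u_{k-j}.$$
Applying $R_{ml}$ bounds the off-diagonal part $Q_{ml}u_k$ by $\|R_{ml}\|$ times the $L^2$-norm of the right-hand side; the inductive hypothesis gives $\|u_{k-j}\|\le C/l$ for $1\le j\le k-1$, while the single non-decaying contribution $u_0=S^{[s]}_{ml}$ (the $j=k$ term) carries the uniformly bounded factor $\partial_\omega^k\lambda^{[s]}_{ml}$, so $\|Q_{ml}u_k\|\le C(k)/l$. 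The diagonal coefficient $D^{[s]}_{mll;k}=\langle u_k,S^{[s]}_{ml}\rangle$ is determined by differentiating $\|S^{[s]}_{ml}\|^2\equiv1$, yielding $2\langle u_k,S^{[s]}_{ml}\rangle = -\sum_{j=1}^{k-1}\binom{k}{j}\langle u_j,u_{k-j}\rangle$ and hence $|D^{[s]}_{mll;k}|\le C(k)/l^2$; projecting the displayed equation onto $S^{[s]}_{ml}$ gives a closed expression for $\partial_\omega^k\lambda^{[s]}_{ml}$ in lower-order quantities, bounded by $\Lambda(k)$. Therefore $\|u_k\|^2 = \|Q_{ml}u_k\|^2 + |D^{[s]}_{mll;k}|^2\le C(k)^2/l^2$, closing the induction, and $\sum_{l\ge\max(|m|,|s|)}\|u_k\|^2\le C(k)^2\sum_{l\ge1}l^{-2}<\infty$ gives \eqref{EqPropLittleL2}.

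The main obstacle is establishing the uniformity in $m$ of the reduced-resolvent bound: one must verify that the small-$\omega$ perturbation never closes the spectral gap for any value of $m$, which hinges precisely on the $m$-independence of $\partial_\omega L^{[s]}_m$ together with the $m$-independent, growing gaps at $\omega=0$. Everything else is bookkeeping of the perturbation series, but this uniform spectral-gap estimate is exactly what makes the constant $C(k)$ independent of $m$.
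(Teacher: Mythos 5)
Your proof is correct, but it takes a genuinely different route from the paper's. Both arguments ultimately rest on the same two structural facts — the $\omega$-dependent part of $L^{[s]}_m(\omega)$ is a multiplication operator bounded uniformly in $m$, and the gaps $|\lambda^{[s]}_{ml}(0)-\lambda^{[s]}_{ml'}(0)|$ are $m$-independent and grow in $l$, which together with the Feynman--Hellmann bound $|\rd_\omega\lambda^{[s]}_{ml}|\leq \|\rd_\omega L^{[s]}_m\|$ keeps the gaps open for $|\omega|\leq\varepsilon$ — but they exploit them differently. The paper works componentwise: it derives the explicit first-order formula $D^{[s]}_{mll';1}=\langle 2a(a\omega x^2-sx)\Sml,\Smlp\rangle/(\lambda^{[s]}_{ml}-\lambda^{[s]}_{ml'})$, obtains higher $k$ by Leibniz-differentiating this formula together with the recursion \eqref{EqExprHigherD}, and closes the estimate with the double-sum gap bound \eqref{EqSumLittleL2}, $\sum_{l\neq l'}|\lambda^{[s]}_{ml}-\lambda^{[s]}_{ml'}|^{-2}\leq\pi^4/18$. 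You instead work at the operator level: Parseval reduces \eqref{EqPropLittleL2} to $\sum_l\|\rd_\omega^k\Sml\|^2_{L^2}$ (the paper notes this equivalence in \eqref{EqEquivL2Bound} but does not use it this way), and the uniform reduced-resolvent bound $\|R_{ml}\|\leq 1/l$ feeds an induction that simultaneously controls $|\rd_\omega^k\lambda^{[s]}_{ml}|$ (by projecting the differentiated eigenvalue equation onto $\Sml$) and the off-diagonal part $\|Q_{ml}\rd_\omega^k\Sml\|\leq C(k)/l$, while the diagonal part is recovered from differentiating the normalisation $\|\Sml\|^2\equiv 1$; summing $l^{-2}$ finishes. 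Your route buys a strictly stronger intermediate statement — pointwise decay $\|\rd_\omega^k\Sml(\omega)\|_{L^2}\lesssim 1/l$ uniformly in $m$, not merely square-summability — at the cost of invoking spectral projections and resolvent estimates for the self-adjoint Sturm--Liouville operator; the paper's route is more elementary and self-contained, needing only orthonormal expansions and explicit recursions, and is in effect the componentwise shadow of the same first-order perturbation theory (your $Q_{ml}u_1=R_{ml}[\,\cdots]$ is exactly the paper's formula \eqref{EqD1} read in the basis $\{\Smlp\}$). Two small points of hygiene, both of which you handle correctly: the proposition concerns $k\in\N$ with $k\geq 1$ (for $k=0$ the double sum is $\sum_l 1=\infty$, so your parenthetical about the identity matrix should not be read as a claim that the bound extends to $k=0$), and the non-decaying term $(\rd_\omega^k\lambda^{[s]}_{ml})\Sml$ on the right-hand side of the differentiated equation must pick up its factor $1/l$ from the resolvent rather than from any decay of its own, which is precisely how you route it.
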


Let us remark that \eqref{EqPropLittleL2} is equivalent to \begin{equation}
\label{EqEquivL2Bound}
\sum_{l \geq \mms} ||\rd_\omega^k \Sml(\omega)||_{L^2}^2 \leq C(k) < \infty \;.
\end{equation} 

\begin{proof}
Differentiating \eqref{EqEFXEq} in $\omega$ gives
\begin{equation}\label{EqDiffEq}
L_m^{[s]}(\omega) \rd_\omega S_{ml}^{[s]}(\omega) + 2a(a \omega x^2 - sx)\Sml(\omega) = \rd_\omega \lambda^{[s]}_{ml}(\omega) \cdot \Sml(\omega) + \lambda_{ml}^{[s]} (\omega)\cdot \rd_\omega \Sml(\omega) \;.
\end{equation}
Note that since $||\Sml(\omega)||_{L^2} = 1$ for all $\omega$, we have $\langle \Sml(\omega), \rd_\omega \Sml(\omega)\rangle_{L^2} = 0$. Multiplying \eqref{EqDiffEq} by $\Sml(\omega)$ and integrating gives
$$\langle L_m^{[s]}(\omega) \rd_\omega \Sml(\omega), \Sml(\omega) \rangle_{L^2} + \langle 2a(a \omega x^2 - sx) \Sml(\omega), \Sml(\omega)\rangle_{L^2} = \rd_\omega \lambda^{[s]}_{ml}(\omega) \;. $$
We now integrate by parts in the first term\footnote{Note that the arising boundary terms are $$\Big[(1 - x^2) \frac{d}{dx} \rd_\omega \Sml(\omega) \cdot \Sml(\omega)\Big]^1_{-1} - \Big[ \rd_\omega \Sml(\omega) \cdot (1- x^2) \frac{d}{dx} \Sml(\omega)\Big]^1_{-1}\;,$$ which vanishes given the asymptotics of $\rd_\omega^k \Sml(\omega)$ from Proposition \ref{PropEigenfunctionsSWL}.}
to obtain $$\langle L_m^{[s]}(\omega) \rd_\omega \Sml(\omega), \Sml(\omega) \rangle_{L^2} = \langle \rd_\omega \Sml(\omega), L_m^{[s]}(\omega) \Sml(\omega) \rangle_{L^2} = \langle\rd_\omega \Sml(\omega), \lambda_{ml}^{[s]}(\omega) \Sml(\omega) \rangle_{L^2} = 0 \;.$$
This finally leaves us with 
\begin{equation}
\label{EqOmegaDerLambda}
\rd_\omega \lambda^{[s]}_{ml}(\omega) = \langle 2a(a \omega x^2 - sx) \Sml(\omega), \Sml(\omega) \rangle_{L^2([-1,1])} \;.
\end{equation}
Multiplying \eqref{EqDiffEq} by $\Smlp(\omega)$, $l \neq l'$, and integrating over $[-1,1]$ in $x$ gives, after the integration by parts as before, 
\begin{equation*}
\langle\rd_\omega \Sml(\omega), \lambda^{[s]}_{ml'}(\omega) \Smlp(\omega) \rangle_{L^2} + \langle 2a(a \omega x^2 - sx) \Sml(\omega), \Smlp(\omega) \rangle_{L^2} = \lambda_{ml}^{[s]}(\omega) \langle \rd_\omega \Sml(\omega), \Smlp(\omega) \rangle_{L^2} \;.
\end{equation*}
We thus obtain for $l \neq l'$
\begin{equation}
\label{EqD1}
\begin{split}
D^{[s]}_{mll';1}(\omega) &= \langle \rd_\omega \Sml(\omega), \Smlp(\omega) \rangle_{L^2} \\
&= \frac{\langle 2a(a \omega x^2 - sx) \Sml(\omega), \Smlp(\omega) \rangle_{L^2}}{\lambda^{[s]}_{ml}(\omega) - \lambda^{[s]}_{ml'}(\omega)} \;.
\end{split}
\end{equation}
For $l = l'$ we have $D^{[s]}_{mll;1}(\omega) = 0$. To derive an expression for $D^{[s]}_{mll';k}$ we first note that by the regularity of $\rd_\omega^i\Sml(\omega)$ from Proposition \ref{PropEigenfunctionsSWL} we have that $\langle\rd_\omega \Sml(\omega), \Smlp(\omega)\rangle_{L^2}$ is smooth in $\omega$. We now compute
\begin{equation}
\label{EqExprHigherD}
\begin{split}
\rd_\omega^{(k-1)} D^{[s]}_{mll';1}(\omega) &= \rd_\omega^{(k-1)} \langle \rd_\omega \Sml(\omega), \Smlp\rangle_{L^2} \\
&=\sum_{n= 0}^{k-1} \binom{k-1}{n} \langle \rd_\omega^{(k-n)} \Sml(\omega), \rd_\omega^n \Smlp(\omega)\rangle_{L^2} \\
&=\langle\rd_\omega^k \Sml(\omega), \Smlp \rangle_{L^2} + \sum_{n = 1}^{k-1} \binom{k-1}{n} \langle\rd_\omega^{(k-n)}\Sml(\omega), \rd_\omega^n \Smlp(\omega)\rangle_{L^2} \\
&= D^{[s]}_{mll';k}(\omega) + \sum_{n=1}^{k-1} \binom{k-1}{n} \langle \sum_i D^{[s]}_{mli;k-n}(\omega) S^{[s]}_{mi}(\omega), \sum_j D^{[s]}_{ml'j;n} (\omega) S^{[s]}_{mj}(\omega) \rangle_{L^2} \\
&= D^{[s]}_{mll';k}(\omega) + \sum_{n = 1}^{k-1} \binom{k-1}{n} \underbrace{\sum_{i \geq \mms} D^{[s]}_{mli;k-n}(\omega) \cdot D^{[s]}_{ml'i;n} (\omega)}\;.
\end{split}
\end{equation}
We also need to estimate the eigenvalues for small $|\omega|$: for $|\omega| \leq 1$ it follows directly from \eqref{EqOmegaDerLambda} that
$$|\rd_\omega \lambda^{[s]}_{ml}(\omega)| \leq 2a(a + |s|) \;.$$
We now choose $1>\varepsilon>0$ such that for $|\omega| \leq \varepsilon$ we have
\begin{equation}
\label{EqEVSmallOmega}
|\lambda^{[s]}_{ml}(\omega) - \lambda^{[s]}_{ml}(0)|\leq \frac{1}{4} 
\end{equation}
uniformly in $m$ and $l$.

We now prove \eqref{EqPropLittleL2} by induction in $k$. We start with $k = 1$ and estimate \eqref{EqD1}. We have $|\langle 2a (a \omega x^2 - sx) \Sml(\omega), \Smlp(\omega)\rangle_{L^2}| \leq 2a(a \varepsilon +|s|)$. We now estimate the denominator using \eqref{EqEVSmallOmega} and $\lambda^{[s]}_{ml}(0) = -l(l+1) + s(s+1)$:
\begin{equation}
\label{EqSumLittleL2}
\begin{split}
\sum_{\substack{l, l' \geq \mms \\ l \neq l'}} \frac{1}{|\lambda^{[s]}_{ml}(\omega) - \lambda^{[s]}_{ml'}(\omega)|^2} &\leq \sum_{\substack{l, l' \geq \mms \\ l \neq l'}} \frac{1}{(|-l(l+1) + l'(l'+1)| - \frac{1}{2})^2} \\
&= \sum_{l \geq \mms } \sum_{\substack{k \in \Z \setminus\{0\} \\ k \geq -l + \mms}} \frac{1}{(|k(k+1 + 2l)| - \frac{1}{2})^2} \qquad \textnormal{with $l' = l+k$} \\
&\leq \sum_{l \geq \mms} \sum_{\substack{k \in \Z \setminus\{0\} \\ k \geq -l + \mms}} \frac{1}{|k|^2|k+ 2l|^2} \\
&\leq  \sum_{l \geq \mms} \sum_{\substack{k \in \Z \setminus\{0\} \\ k \geq -l + \mms}} \frac{1}{k^2 l^2} \\
&\leq \sum_{l \in \N} \frac{1}{l^2}\sum_{k \in \Z \setminus\{0\} } \frac{1}{k^2} \\
&= \frac{\pi^4}{18} \;.
\end{split}
\end{equation}
This proves the claim for $k = 1$. 

We now assume that \eqref{EqPropLittleL2} holds up to and including $k-1$. We first show that for $1 \leq j \leq k-1$ and $|\omega| \leq \varepsilon$ we have
\begin{equation}
\label{EqBoundDerLambda}
|\rd_\omega^j \lambda^{[s]}_{ml}(\omega)| \leq C(j) < \infty \;,
\end{equation}
where the constant is independent of $m,l$.
Let $f^{[s]}(x; \omega) := 2a(a \omega x^2 - sx)$. Thus $\rd_\omega \lambda^{[s]}_{ml}(\omega) = \langle f^{[s]}(\omega) \Sml(\omega), \Sml(\omega) \rangle_{L^2}$ and thus
\begin{equation} \label{EqAsBefore}
|\rd_\omega^j \lambda^{[s]}_{ml} (\omega)| = |\sum_{1 \leq i_1 + i_2 + i_3 \leq j-1} \binom{j-1}{i_1,i_2,i_3} \langle \rd_\omega^{i_1} f^{[s]}(\omega) \rd_\omega^{i_2} \Sml(\omega), \rd_\omega^{i_3} \Sml(\omega)\rangle_{L^2}| \;.
\end{equation}
Clearly, $\rd_\omega^{i_1} f^{[s]}(\omega)$ is bounded in $L^\infty_x([-1,1])$ by a constant only depending on $s$ for $|\omega| \leq \varepsilon$, and $\rd^i_\omega \Sml(\omega)$ is bounded in $L^2_x([-1,1])$ by a constant independent of $m,l$ by the induction hypothesis and \eqref{EqEquivL2Bound}. This shows \eqref{EqBoundDerLambda}.

We now use \eqref{EqExprHigherD} to show that $D^{[s]}_{mll';k}(\omega)$ is bounded in $\ell^2(l,l')$.
The induction hypothesis shows directly that the $\ell^2$-norm in $l,l'$ of the underbraced terms in \eqref{EqExprHigherD} is bounded. It thus remains to show that the $\ell^2$-norm of the left hand side of \eqref{EqExprHigherD} is bounded. Note that for $l = l'$ it vanishes identically. For $l \neq l'$ we compute using \eqref{EqD1}
\begin{equation*}
\begin{split}
|\rd_\omega^{(k-1)} D_{mll';1}^{[s]}(\omega)| = | \sum_{1 \leq i_1 + i_2 + i_3 + i_4 \leq k-1} \binom{k-1}{i_1,i_2,i_3,i_4} &\underbrace{\langle \rd_\omega^{i_1}f^{[s]}(\omega) \rd_\omega^{i_2} \Sml(\omega), \rd_\omega^{i_3} \Smlp (\omega) \rangle_{L^2}} \\
&\cdot \rd_\omega^{i_4}\Big( \frac{1}{\lambda^{[s]}_{ml}(\omega) - \lambda^{[s]}_{ml'}(\omega)}\Big)| \;.
\end{split}
\end{equation*}
The underbraced terms are bounded uniformly in $m,l,l'$ and $|\omega| \leq \varepsilon$ as in \eqref{EqAsBefore}. Using \eqref{EqBoundDerLambda}, we can bound $$|\rd_\omega^{i_4}\Big( \frac{1}{\lambda^{[s]}_{ml}(\omega) - \lambda^{[s]}_{ml'}(\omega)}\Big)| \leq \frac{C}{|\lambda^{[s]}_{ml}(\omega) - \lambda^{[s]}_{ml'}(\omega)|}\;,$$ where the constant is independent of $m,l,l'$. It now follows from \eqref{EqSumLittleL2} that the $\ell^2(l,l')$ norm of  the left hand side of \eqref{EqExprHigherD} is bounded. This concludes the proof.
\end{proof}

\subsection{Teukolsky's expansion} \label{SecTeukExp}

For $f(v_+, \theta, \varphi_+) \in L^1_{v_+} L^2_{\Sp^2}$ we define the Fourier transform $\widecheck{f}$ of $f$ by
\begin{equation}\label{EqDefFT}
\widecheck{f}(\theta, \varphi_+; \omega) := \frac{1}{\sqrt{2\pi}} \int\limits_\R f(v_+, \theta, \varphi_+) e^{i \omega v_+} \, dv_+ \;.
\end{equation}
It can be easily checked that this is a map $\widecheck{( \cdot)} : L^1_{v_+} L^2_{\Sp^2} \to C^0_\omega L^2_{\Sp^2}$. It gives rise in the standard way to an isometry $\widecheck{( \cdot)} : L^2_{v_+} L^2_{\Sp^2} \to L^2_\omega L^2_{\Sp^2}$ which we denote again in the same way.

For $g \in L^2_\omega L^2_{\Sp^2}$ we define the map $(\cdot)_{ml} : L^2_\omega L^2_{\Sp^2} \to L^2_\omega \ell^2_{m,l}$ by
\begin{equation}\label{EqDefDecompositionSphere}
g_{ml}(\omega) := \int_{\Sp^2} g(\theta, \varphi_+; \omega) \overline{Y^{[s]}_{ml}(\theta, \varphi_+; \omega)} \, \vols \;,
\end{equation}
which is also an isometry since for each $\omega \in \R$ the $Y^{[s]}_{ml}(\omega)$ form an orthonormal basis of $L^2(\Sp^2)$. The summation in $\ell^2_{m,l}$ is over $m \in \Z$ and $\N \ni l \geq \max(|m|, |s|)$.

For $f \in L^1_{v_+} L^2_{\Sp^2} \cap L^2_{v_+} L^2_{\Sp^2}$ the composite map $\widecheck{(\cdot)}_{ml} := ( \cdot)_{ml} \circ \widecheck{(\cdot)}$, which we call the \emph{Teukolsky transform}, is given by
\begin{equation}\label{EqDefCompositeMap}
\widecheck{f}_{ml} (\omega) = \frac{1}{\sqrt{2\pi}} \int_{\Sp^2} \int_\R f(v_+, \theta, \varphi_+) e^{i \omega v_+} \overline{Y^{[s]}_{ml}(\theta, \varphi_+; \omega)} dv_+ \vols \;.
\end{equation}
Note that by $$\int_\R \int_{\Sp^2} |f(v_+, \theta, \varphi_+)| \cdot | \overline{Y^{[s]}_{ml}(\theta, \varphi_+; \omega)}| \vols dv_+ \leq \int_\R \Big( \int_{\Sp^2} | f(v_+, \theta, \varphi_+)|^2 \, \vols \Big)^\frac{1}{2} dv_+ < \infty$$ the order of integration in \eqref{EqDefCompositeMap} does not matter.

The inverse map of \eqref{EqDefDecompositionSphere} is given by
\begin{equation*}
g(\theta, \varphi_+; \omega) = \sum_{m,l} g_{ml}(\omega) Y^{[s]}_{ml}(\theta, \varphi_+ ; \omega)
\end{equation*}
and the inverse map of $\widecheck{(\cdot)} : L^2_{v_+} L^2_{\Sp^2} \to L^2_\omega L^2_{\Sp^2}$ is given by
\begin{equation*}
f(v_+, \theta, \varphi_+) =  \frac{1}{\sqrt{2\pi}}\int_\R \widecheck{f}(\theta, \varphi_+; \omega)  e^{-i \omega v_+}\, d\omega \;,
\end{equation*}
where this can be taken literally for $\widecheck{f} \in L^1_\omega L^2_{\Sp^2} \cap L^2_\omega L^2_{\Sp^2}$ and serves as notation for $\widecheck{f} \in L^2_\omega L^2_{\Sp^2}$ in the standard way, which is then defined via approximation by functions in $L^1_\omega L^2_{\Sp^2} \cap  L^2_\omega L^2_{\Sp^2}$. In particular for $\widecheck{f}_{ml} \in L^1_\omega \ell^2_{m,l} \cap L^2_\omega \ell^2_{m,l}$ we have $\widecheck{f} \in L^1_\omega L^2_{\Sp^2} \cap L^2_\omega L^2_{\Sp^2}$ and thus we have literally
\begin{equation*}
f(v_+, \theta, \varphi_+) = \frac{1}{\sqrt{2 \pi}} \int_\R \sum_{m,l} \widecheck{f}_{ml}(\omega) Y^{[s]}_{ml}(\theta, \varphi_+ ; \omega) e^{-i \omega v_+} \, d \omega 
\end{equation*}
as a map $L^1_\omega \ell^2_{m,l} \cap L^2_\omega \ell^2_{m,l} \to L^2_{v_+} L^2_{\Sp^2}$.

For $f \in L^2_{v_+} L^2_{\Sp^2}$ we have the Plancherel relation
\begin{equation}\label{EqPlancherel}
\begin{split}
\int_\R \int_{\Sp^2} |f(v_+, \theta, \varphi_+)|^2 \, \vols dv_+ &= ||f||^2_{L^2_{v_+} L^2_{\Sp^2}} \\
&= ||\widecheck{f}||^2_{L^2_\omega L^2_{\Sp^2}} \\ 
&= ||\widecheck{f}_{ml}||^2_{L^2_\omega \ell^2_{ml}} = \int_\R \sum_{m,l} |\widecheck{f}_{ml}|^2 \, d\omega \;.
\end{split}
\end{equation}

We also have\footnote{Proof as in \cite{LiebLoss} 7.9 Theorem.}
\begin{equation}\label{EqRelationsFTWeightDerivative}
\begin{aligned}
\widecheck{\rd_{v_+} f} &= - i \omega \widecheck{f} \qquad &&\textnormal{ in } L^2_\omega L^2_{\Sp^2} \quad \textnormal{ if } f, \rd_{v_+} f \in L^2_{v_+} L^2_{\Sp^2} \\
\widecheck{ v_+ f} &= -i \rd_\omega \widecheck{f} \qquad &&\textnormal{ in } L^2_\omega L^2_{\Sp^2} \quad \textnormal{ if } f, v_+ f \in L^2_{v_+} L^2_{\Sp^2}
\end{aligned}
\end{equation}
and
\begin{equation*}
(\widecheck{\rd_{v_+} f})_{ml} = -i \omega \widecheck{f}_{ml} \qquad \textnormal{ in } L^2_\omega \ell^2_{m,l} \quad \textnormal{ if } f, \rd_{v_+} f \in L^2_{v_+} L^2_{\Sp^2}  \;.
\end{equation*}
Note, however, that in general $(\widecheck{v_+ f})_{ml} \neq -i \rd_\omega \widecheck{f}_{ml}$, since the orthonormal basis functions $Y^{[s]}_{ml}$ of $L^2(\Sp^2)$ in \eqref{EqDefDecompositionSphere} are $\omega$-dependent.

In the following we address this point and show that under suitable assumptions we can still infer limited decay of $f(v_+, \theta, \varphi_+)$ for $|v_+| \to \infty$ from limited regularity of $\widecheck{f}_{ml}$ in $\omega$.

\subsubsection{Slow decay in $v_+$ of $f$ in terms of limited regularity of $\widecheck{f}_{ml}$}

\begin{proposition}
\label{PropCharSlowDecay}
Let $\varepsilon >0$ be as in Proposition \ref{PropEstimatesDerivativesEigenfunctions}, let $\widecheck{f} \in L^2_{(-\varepsilon, \varepsilon)} L^2_{\Sp^2}$ and let $q_0 \in \N_0$. Then $\rd_\omega^q \widecheck{f} \in L^2_{(-\varepsilon, \varepsilon)} L^2_{\Sp^2}$, i.e., 
\begin{equation}
\label{EqHJW1}
\int_{(-\varepsilon, \varepsilon)} \int_{\Sp^2} | \rd_\omega^q \widecheck{f}(\omega, \theta, \varphi_+)|^2 \, \vols d\omega < \infty 
\end{equation} 
for all $0 \leq q \leq q_0$, $q \in \N_0$ if, and only if, $\rd_\omega^q (\widecheck{f}_{ml}) \in L^2_{(-\varepsilon, \varepsilon)} \ell^2_{ml}$, i.e., 
\begin{equation}
\label{EqHJW2}
\int_{(-\varepsilon, \varepsilon)} \sum_{m,l} | \rd_\omega^q ( \widecheck{f}_{ml})(\omega)|^2 \, d \omega < \infty
\end{equation} 
for all $0 \leq q \leq q_0$, $q \in \N$.
Here, all derivatives are weak derivatives.\footnote{For this paper only the `only if' direction, i.e., `$\implies$', is needed.}
\end{proposition}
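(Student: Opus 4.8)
The plan is to reduce the equivalence to the Leibniz rule for the $\omega$-derivative of the pairing $\widecheck{f}_{ml}(\omega) = \int_{\Sp^2} \widecheck{f}(\cdot\,;\omega)\,\overline{Y^{[s]}_{ml}(\omega)}\,\vols$, combined with the structural fact that differentiating a spin-weighted spheroidal harmonic in $\omega$ stays within the same azimuthal class and expands in the eigenbasis with coefficients controlled by Proposition \ref{PropEstimatesDerivativesEigenfunctions}. Since $Y^{[s]}_{ml}(\theta,\varphi_+;\omega) = \Sml(\cos\theta;\omega)\,e^{im\varphi_+}$ and the factor $e^{im\varphi_+}$ is $\omega$-independent, the $L^2$-expansion of $\rd_\omega^k\Sml$ from Proposition \ref{PropEstimatesDerivativesEigenfunctions} immediately yields
$$\rd_\omega^k Y^{[s]}_{ml}(\omega) = \sum_{l'\geq\mms} D^{[s]}_{mll';k}(\omega)\, Y^{[s]}_{ml'}(\omega)\,,$$
in which only the $l$-index is shuffled, and which carries the crucial uniform-in-$m$ bound $\sup_m\sum_{l,l'}|D^{[s]}_{mll';k}(\omega)|^2\leq C(k)$ for $|\omega|\leq\varepsilon$ coming from \eqref{EqPropLittleL2}, \eqref{EqEquivL2Bound}.

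I would focus on the `only if' direction (the only one needed). Assuming \eqref{EqHJW1} for all $0\leq q\leq q_0$, I would first establish that $\widecheck{f}_{ml}$ has weak $\omega$-derivatives up to order $q_0$ satisfying the Leibniz identity: using that each $Y^{[s]}_{ml}(\omega)$ is analytic in $\omega$ (Proposition \ref{PropEigenfunctionsSWL}) with all $\omega$-derivatives locally bounded in $L^2(\Sp^2)$, one obtains
$$\rd_\omega^q \widecheck{f}_{ml}(\omega) = \sum_{j=0}^q \binom{q}{j}\int_{\Sp^2}\rd_\omega^{q-j}\widecheck{f}(\cdot\,;\omega)\,\overline{\rd_\omega^j Y^{[s]}_{ml}(\omega)}\,\vols = \sum_{j=0}^q\binom{q}{j}\sum_{l'\geq\mms} D^{[s]}_{mll';j}(\omega)\,\big(\rd_\omega^{q-j}\widecheck{f}\big)_{ml'}(\omega)\,,$$
where the reality of the $\Sml$ and of the $D^{[s]}_{mll';j}$ was used, and $\big(\rd_\omega^{q-j}\widecheck{f}\big)_{ml'}$ denotes the fixed-$\omega$ projection of $\rd_\omega^{q-j}\widecheck{f}$ onto $Y^{[s]}_{ml'}(\omega)$.

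The estimate then follows from two applications of Cauchy--Schwarz together with the Parseval isometry of $(\cdot)_{ml}$ at fixed $\omega$. For fixed $m$ and $j$, Cauchy--Schwarz in $l'$ and the $D$-bound give $\sum_l\big|\sum_{l'}D^{[s]}_{mll';j}(\rd_\omega^{q-j}\widecheck{f})_{ml'}\big|^2 \leq \big(\sum_{l,l'}|D^{[s]}_{mll';j}|^2\big)\sum_{l'}|(\rd_\omega^{q-j}\widecheck{f})_{ml'}|^2$; summing over $m$, invoking the uniform constant $C(j)$ and Parseval $\sum_{m,l'}|(\rd_\omega^{q-j}\widecheck{f})_{ml'}(\omega)|^2=\|\rd_\omega^{q-j}\widecheck{f}(\cdot\,;\omega)\|_{L^2(\Sp^2)}^2$, and using the triangle inequality in $\ell^2_{ml}$, I would arrive at
$$\Big(\sum_{m,l}|\rd_\omega^q\widecheck{f}_{ml}(\omega)|^2\Big)^{1/2}\leq \sum_{j=0}^q\binom{q}{j}\sqrt{C(j)}\,\big\|\rd_\omega^{q-j}\widecheck{f}(\cdot\,;\omega)\big\|_{L^2(\Sp^2)}\,.$$
Squaring, integrating over $(-\varepsilon,\varepsilon)$, and using \eqref{EqHJW1} for each order $0\leq q-j\leq q_0$ gives \eqref{EqHJW2}. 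The `if' direction is entirely symmetric: one differentiates the inversion formula $\widecheck{f}(\omega)=\sum_{m,l}\widecheck{f}_{ml}(\omega)Y^{[s]}_{ml}(\omega)$, substitutes the same expansion of $\rd_\omega^j Y^{[s]}_{ml}$, and exploits the orthonormality of $\{Y^{[s]}_{ml'}(\omega)\}$ in $L^2(\Sp^2)$ in place of the $\ell^2$-isometry.

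The main obstacle is not the algebra but the analytic justification: making rigorous that weak $\omega$-derivatives of $\widecheck{f}$ in $L^2_\omega L^2_{\Sp^2}$ transfer to weak $\omega$-derivatives of the coefficient sequence $\widecheck{f}_{ml}$ through the $\omega$-dependent, non-commuting eigenbasis, and that the interchange of $\rd_\omega$ with the infinite $l'$-summation is legitimate. I would handle this by mollifying $\widecheck{f}$ in $\omega$, deriving the Leibniz identity and the estimate for the smooth approximants where all manipulations are classical, and then passing to the limit; the uniform-in-$m$ control of the $D$-coefficients from Proposition \ref{PropEstimatesDerivativesEigenfunctions} is precisely what makes the tails of the $l'$-sum vanish uniformly and lets the limit pass without loss.
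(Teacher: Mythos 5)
Your proposal is correct and follows essentially the same route as the paper's proof: the identical Leibniz identity expressing $\rd_\omega^q \widecheck{f}_{ml}$ in terms of the projections $(\rd_\omega^{q-j}\widecheck{f})_{ml'}$ and the coefficients $D^{[s]}_{mll';j}$, followed by the identical Cauchy--Schwarz/Plancherel estimate using the uniform-in-$m$ bound of Proposition \ref{PropEstimatesDerivativesEigenfunctions}; the paper merely justifies the weak-derivative manipulations directly (product rule against the smooth $Y^{[s]}_{ml}$, pulling weak derivatives under the sphere integral) rather than by mollification, which is an inessential difference. One small correction: the bound \eqref{EqPropLittleL2} is only available (and only true) for $j \geq 1$, since $D^{[s]}_{mll';0}(\omega) = \delta_{ll'}$ makes $\sum_{l,l'} |D^{[s]}_{mll';0}(\omega)|^2$ diverge, so the $j=0$ term in your final displayed inequality must be handled by Parseval alone --- it equals $\|\rd_\omega^{q}\widecheck{f}(\cdot\,;\omega)\|_{L^2(\Sp^2)}$ exactly, i.e.\ one takes $\sqrt{C(0)}=1$ --- which is precisely how the paper organizes the argument by keeping $(\rd_\omega^q \widecheck{f})_{ml}$ as a separate term.
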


\begin{proof}
Assume first that $\widecheck{f}$ has $q_0$ weak $\omega$-derivatives in $L^2_{(-\varepsilon, \varepsilon)}L^2_{\Sp^2}$. We then have for $0 \leq q \leq q_0$
\begin{equation}\label{EqRelateBothWeakDerivatives}
\begin{split}
(\rd_\omega^{q} \widecheck{f})_{ml} &= \int_{\Sp^2} \rd_\omega^{q} \widecheck{f}(\theta, \varphi_+; \omega) \cdot \overline{Y^{[s]}_{ml}(\theta, \varphi_+; \omega) } \, \vols \\
&= \int_{\Sp^2} \rd_\omega^{q} \big( \widecheck{f}(\theta, \varphi_+ ; \omega) \overline{Y^{[s]}_{ml}(\theta, \varphi_+ ; \omega)} \big) \, \vols - \sum_{q' = 1}^{q} \binom{q}{q'} \int_{\Sp^2} \rd_\omega^{q - q'} \widecheck{f}(\theta, \varphi_+ ; \omega) \cdot \overline{\rd_\omega^{q'} Y^{[s]}_{ml}(\theta, \varphi_+; \omega) } \, \vols \\
&= \rd_\omega^{q} \widecheck{f}_{ml}(\omega) - \sum_{q' = 1}^{q} \binom{q}{q'} \int_{\Sp^2} \rd_\omega^{q - q'} \widecheck{f}(\theta, \varphi_+ ; \omega) \cdot \sum_{l'} \overline{D^{[s]}_{mll';q'}(\omega) Y^{[s]}_{ml'}(\theta, \varphi_+;  \omega) } \, \vols \\
&= \rd_\omega^{q} \widecheck{f}_{ml}(\omega) - \sum_{q' = 1}^{q} \binom{q}{q'} \sum_{l'} \overline{D^{[s]}_{mll'; q'}(\omega)} (\rd_\omega^{q - q'} \widecheck{f})_{ml'}(\omega) \;,
\end{split}
\end{equation}
where, in the second equality we have used the smoothness of the $Y^{[s]}_{ml}$ in $\omega$ and the product rule $$\rd_\omega^{q}(a \cdot b) = \sum_{q' = 0}^{q} \binom{q}{q'} \rd_\omega^{q - q'} a \cdot \rd_\omega^{q'} b $$ which of course also holds for weak derivatives if $b$ is smooth, in the third equality we have used that we can pull out weak derivatives from under the integral\footnote{Let $g(\theta, \varphi_+; \omega), \rd_\omega g(\theta, \varphi_+; \omega) \in L^2_\omega L^1_{\Sp^2}$ and let $h(\omega) := \int_{\Sp^2} g(\theta, \varphi_+; \omega) \, \vols$. Then the weak derivative of $h$ is given by $\int_{\Sp^2} \rd_\omega g(\theta, \varphi_+ ; \omega) \, \vols$: for $\chi(\omega) \in C^\infty_0(\R)$ we compute
\begin{equation*}
\begin{split}
-\int_\R h(\omega) \rd_\omega \chi(\omega) \, d\omega &= -\int_\R \int_{\Sp^2} g(\theta, \varphi_+; \omega) \, \vols \, \rd_\omega \chi(\omega) \, d\omega \\
&= - \int_\R \int_{\Sp^2} g(\theta, \varphi_+ ; \omega) \rd_\omega \chi(\omega) \, \vols d \omega \\
&= \int_\R \int_{\Sp^2} \rd_\omega g(\theta, \varphi_+ ; \omega) \chi(\omega) \vols d \omega \;.
\end{split}
\end{equation*}} and the representation of $\rd_\omega^{q'} Y^{[s]}_{ml}$ from Proposition \eqref{PropEstimatesDerivativesEigenfunctions}, and in the fourth equation we just used that the limit in $l'$ is an $L^2(\Sp^2)$ limit, so we can pull it out of the integral.

Now by \eqref{EqHJW1} and Plancherel we have $$\infty > \int_{-\varepsilon}^\varepsilon | \rd_\omega^{q} \widecheck{f}(\theta, \varphi_+; \omega)|^2 \, \vols d\omega = \int_{-\varepsilon}^\varepsilon \sum_{m,l} |( \rd_\omega^{q} \widecheck{f})_{ml}(\omega)|^2 \, d \omega \;.$$
Thus \eqref{EqHJW2} follows if we show $$\int_{-\varepsilon}^\varepsilon \sum_{m,l} | \sum_{l'} D^{[s]}_{mll';q'} (\rd_\omega^{q - q'}\widecheck{f})_{ml'}(\omega)|^2 \, d \omega < \infty$$ for $0 \leq q' \leq q$.
By Cauchy-Schwarz, Proposition \ref{PropEstimatesDerivativesEigenfunctions}, Plancherel, and \eqref{EqHJW1} we compute
\begin{equation*}
\begin{split}
\int_{-\varepsilon}^\varepsilon \sum_{m,l} | \sum_{l'} D^{[s]}_{mll';q'} (\rd_\omega^{q - q'}\widecheck{f})_{ml'}(\omega)|^2 \, d \omega &\leq \int_{-\varepsilon}^\varepsilon \sum_{m,l} \Big(\sum_{l'} |D^{[s]}_{mll';q'}(\omega)|^2\Big) \cdot \Big( \sum_{l'} | (\rd_\omega^{q - q'} \widecheck{f})_{ml'}(\omega)|^2\Big) \, d \omega \\
&\leq \int_{-\varepsilon}^\varepsilon \sum_m C(q') \Big( \sum_{l'} | (\rd_\omega^{q - q'} \widecheck{f})_{ml'}(\omega)|^2\Big) \, d \omega  \\
&= C(q') \int_{-\varepsilon}^\varepsilon \int_{\Sp^2} | \rd_\omega^{q - q'} \widecheck{f}(\theta, \varphi_+; \omega) |^2 \, \vols d \omega \\ &< \infty \;.
\end{split}
\end{equation*}
To prove the reverse direction, we now assume that $\widecheck{f}_{ml}$ has $q_0$ weak $\omega$-derivatives satisfying \eqref{EqHJW2}. Let $0 \leq q \leq q_0$ and $\chi \in C^\infty_0\big((-\varepsilon, \varepsilon) \times \Sp^2\big)$. Then
\begin{equation*}
\begin{split}
\int\limits_{(-\varepsilon, \varepsilon)} \int_{\Sp^2} \widecheck{f}(\omega, &\theta, \varphi_+) \overline{ \rd_\omega^q \chi(\omega, \theta, \varphi_+)} \, \vols d\omega = \int_{(-\varepsilon, \varepsilon)} \langle \widecheck{f}(\omega), \rd_\omega^q \chi (\omega) \rangle_{L^2(\Sp^2)} \, d \omega \\
&= \int\limits_{(-\varepsilon, \varepsilon)} \underbrace{\sum_{l,m} \widecheck{f}_{ml}(\omega) \langle Y_{ml}^{[s]}(\omega), \rd_\omega^q \chi(\omega) \rangle_{L^2(\Sp^2)}}_{|-"-| \leq \Big( \sum_{m,l} | \widecheck{f}_{ml} (\omega)|^2 \Big)^{\nicefrac{1}{2}} \Big( \sum_{m', l'} |\rd_\omega^q \chi_{m'l'}(\omega)|^2 \Big)^{\nicefrac{1}{2}}} \, d \omega \\
&= \sum_{l,m} \int\limits_{(-\varepsilon, \varepsilon)} \widecheck{f}_{ml}(\omega) \sum_{j=0}^q (-1)^j \binom{q}{j} \rd_\omega^{q - j} \underbrace{\langle \rd_\omega^j Y^{[s]}_{ml}(\omega), \chi(\omega)\rangle_{L^2(\Sp^2)}}_{\in C^\infty_0\big((-\varepsilon, \varepsilon)\big)} \, d \omega \\
&= \sum_{j = 0}^q (-1)^q \binom{q}{j} \sum_{l,m} \int\limits_{(-\varepsilon, \varepsilon)} \rd_{\omega}^{q - j} \widecheck{f}_{ml}(\omega) \cdot \langle \rd_\omega^j Y_{ml}^{[s]}(\omega), \chi(\omega)\rangle_{L^2(\Sp^2)} \, d \omega \\
&= (-1)^q \int\limits_{(-\varepsilon, \varepsilon)} \int_{\Sp^2} \underbrace{\sum_{j = 0}^q \binom{q}{j} \sum_{l,m} \rd_\omega^{q - j} \widecheck{f}_{lm}(\omega) \sum_{l'} D_{mll'; j}(\omega) Y_{ml'}^{[s]}(\theta, \varphi_+; \omega) }_{= \rd_\omega^q \widecheck{f}(\omega, \theta, \varphi_+)} \cdot \overline{\chi(\omega, \theta, \varphi_+)} \, \vols d\omega \;,
\end{split}
\end{equation*}
where we introduced $\langle \cdot , \cdot \rangle_{L^2(\Sp^2)}$ for the standard Hermitian product on $L^2(\Sp^2)$ for brevity, used Plancherel in the second line, dominated convergence in the third line as well as the combinatorial formula $$\langle Y, \rd_\omega^q \chi \rangle = \sum_{j = 0}^q (-1)^j \binom{q}{j} \rd_\omega^{q - j} \langle \rd_\omega^j Y, \chi\rangle $$ which can be proved easily via induction; we used that $\widecheck{f}_{ml}$ admits $q_0$ weak $\omega$-derivatives in $L^2_{(-\varepsilon, \varepsilon)}$ in the fourth line and finally Proposition \ref{PropEstimatesDerivativesEigenfunctions}, \eqref{EqHJW2}, and dominated convergence again in the last line. Proposition \ref{PropEstimatesDerivativesEigenfunctions} and \eqref{EqHJW2} together now also show \eqref{EqHJW1}.
\end{proof}

\subsection{Application of Teukolsky's separation to the Teukolsky field $\psi$}

\begin{theorem}\label{ThmSeparationVariables}
Under the assumptions of Section \ref{SecAssumptions} and for every $r \in (r_-, r_+)$ the Teukolsky transform
\begin{equation}\label{EqThmSepVar2}
\widecheck{\psi}_{ml}(r; \omega) = \frac{1}{\sqrt{2\pi}} \int_\R \int_{\Sp^2} \psi(v_+, r ,\theta, \varphi_+) e^{i \omega v_+} \overline{Y^{[s]}_{ml}(\theta, \varphi_+; \omega) }\, dv_+ \vols
\end{equation}
of the Teukolsky field $\psi(v_+,r, \theta, \varphi_+)$ is well-defined and we have $\widecheck{\psi}_{ml}(r; \cdot) \in L^2_\omega \ell^2_{m,l}$. Moreover, for every $r \in (r_-, r_+)$, $m \in \Z$, and  $\N \ni l \geq \max\{|m|, |s|\}$ we have $\widecheck{\psi}_{ml}(r; \cdot) \in C^0_\omega(\R)$.

For fixed $\omega, m, l$ the Teukolsky transform $\widecheck{\psi}_{ml}(r; \omega)$ is twice continuously differentiable in $r \in (r_-, r_+)$ and we also have $\frac{d}{dr}\widecheck{\psi}_{ml}(r; \cdot), \frac{d^2}{dr^2} \widecheck{\psi}_{ml}(r; \cdot) \in C^0_\omega(\R)$ for every $r \in (r_-, r_+)$ and $m, l$.\footnote{We only need $\widecheck{\psi}_{ml}(r; \cdot), \frac{d}{dr}\widecheck{\psi}_{ml}(r; \cdot)  \in C^0_\omega(\R)$ (for Lemma \ref{LemRegAA}).} Moreover, the Teukolsky transform satisfies
\begin{equation}\label{EqThmSepVar3}
\begin{split}
\Delta \frac{d^2}{dr^2} \widecheck{\psi}_{ml}(r; \omega) &+ 2 \Big( - (r^2 + a^2) i \omega + i a m + (r-M) (1-s)\Big) \frac{d}{dr} \widecheck{\psi}_{ml}(r; \omega) \\
&+ \Big( \lambda_{ml}^{[s]}(\omega) - (a \omega)^2 + 2 \omega m a - 2i \omega r (1-2s) - 2s\Big) \widecheck{\psi}_{ml}(r; \omega) = 0
\end{split}
\end{equation}
for all $\omega \in\R$, $m \in \Z$, $ \N \ni l \geq \max\{|m|, |s|\}$. Since we have  $\widecheck{\psi}_{ml}(r; \cdot) \in L^2_\omega \ell^2_{m,l}$ the 
representation
\begin{equation} \label{EqThmSepVar1}
\psi(v_+,r, \theta, \varphi_+) = \frac{1}{\sqrt{2\pi}} \int_\R \sum_{m,l} \widecheck{\psi}_{ml}(r; \omega)  Y^{[s]}_{ml}(\theta, \varphi_+; \omega) e^{-i\omega v_+} \, d\omega
\end{equation}
is valid for every $r \in (r_-, r_+)$ in particular in $L^2_{v_+} L^2_{\Sp^2}$.
\end{theorem}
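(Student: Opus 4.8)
The plan is to establish Teukolsky's separation of variables by leveraging the energy estimates already proven in Section \ref{SecEE}, which give us the crucial $\chi$-weighted $L^2$-control needed to justify taking Fourier transforms and passing derivatives under the integral sign. The overall strategy mirrors the spherically symmetric case sketched in the introduction: first show the Teukolsky transform is well-defined with the claimed regularity, then verify it solves the radial ODE \eqref{EqThmSepVar3}, and finally assemble the inversion formula \eqref{EqThmSepVar1} as an $L^2$-statement. The workhorse throughout is Corollary \ref{CorollaryNoShift}, which for each fixed $r \in (r_-, r_+)$ gives uniform control of $\psi$ and its derivatives in the weighted norm $\int \chi(v_+) |\cdot|^2 \vols dv_+$.

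First I would establish that $\widecheck{\psi}_{ml}(r; \cdot)$ is well-defined and lies in $L^2_\omega \ell^2_{m,l}$. Fix $r \in (r_-, r_+)$. By \eqref{EqCorH2NoShift} with $f = \psi$ we have $\psi(\cdot, r, \cdot) \in L^2_{v_+} L^2_{\Sp^2}$, so the Fourier transform $\widecheck{\psi}(r; \cdot) \in L^2_\omega L^2_{\Sp^2}$ exists by the isometry property \eqref{EqPlancherel}, and the subsequent projection $(\cdot)_{ml}$ onto the spheroidal harmonics (also an isometry for each $\omega$) yields $\widecheck{\psi}_{ml}(r; \cdot) \in L^2_\omega \ell^2_{m,l}$. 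To upgrade to $\widecheck{\psi}_{ml}(r; \cdot) \in C^0_\omega(\R)$, I would use that the weight $\chi(v_+) \simeq v_+^{q_r}$ with $q_r > 2$ for large $v_+$ (and $\simeq |v_+|^{q_l}$ with $q_l \geq 2p_0 > 2$ for large negative $v_+$) gives, via Cauchy--Schwarz, that $\psi(\cdot, r, \cdot) \in L^1_{v_+} L^2_{\Sp^2}$ as well; then the map $\widecheck{(\cdot)} : L^1_{v_+} L^2_{\Sp^2} \to C^0_\omega L^2_{\Sp^2}$ and the continuity of $\omega \mapsto Y^{[s]}_{ml}(\omega)$ (Proposition \ref{PropEigenfunctionsSWL}) give continuity in $\omega$. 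The $r$-differentiability statements follow the same pattern applied to $f = \rd_r \psi$ and $f = \rd_r^2 \psi$, which are likewise controlled by Corollary \ref{CorollaryNoShift}; differentiation under the integral in \eqref{EqThmSepVar2} is justified by the uniform-in-$r$ weighted bounds, which dominate the difference quotients.

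Next I would derive the radial ODE \eqref{EqThmSepVar3}. The idea is to start from the Teukolsky equation in $(v_+, \varphi_+, r, \theta)$-coordinates, namely \eqref{TeukolskyStar}, apply the Teukolsky transform \eqref{EqDefCompositeMap}, and convert each term. The $\rd_{v_+}$-derivatives become multiplication by $-i\omega$ via \eqref{EqRelationsFTWeightDerivative}; the $\rd_{\varphi_+}$-derivative becomes multiplication by $im$ after projection; and the angular part---the terms $a^2 \sin^2\theta\, \rd_{v_+}^2 - 2isa\cos\theta\, \rd_{v_+} + \swl$ assembling into the spin-weighted Carter-type operator---acts on $Y^{[s]}_{ml}(\omega)$ as the eigenvalue combination, since precisely at frequency $\omega$ the operator $\mathring{\slashed\Delta}_{[s]}(\omega)$ from \eqref{DefSpinWeightedLaplacianOmega} has $Y^{[s]}_{ml}(\omega)$ as eigenfunction with eigenvalue $\lambda^{[s]}_{ml}(\omega)$ by \eqref{EqEVSWSH}. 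Collecting coefficients and matching against \eqref{EqIntroODE} gives \eqref{EqThmSepVar3}. The delicate point here is the interchange of the integrations/summations with the differential operators: one must justify that applying the transform commutes with $\Delta \rd_r^2$, $\rd_r$, and the angular operators. This is where the uniform weighted control over all the derivatives in Corollary \ref{CorollaryNoShift}---in particular over $\mathcal{Q}_{[s]} \rd_r^c \psi$ and the mixed $v_+$- and angular derivatives---is essential, and it is precisely the reason the assumption $q_r > 2$ is invoked (as flagged in footnote to \eqref{AssumpDecayRHp}).

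I expect the main obstacle to be this last interchange, specifically handling the $\omega$-dependence of the spheroidal harmonics when transforming the angular terms and passing the radial derivatives through the transform. Unlike the spherically symmetric model, the basis $Y^{[s]}_{ml}(\theta, \varphi_+; \omega)$ depends on $\omega$, so naive manipulations can fail; the clean resolution is to apply the full operator $\mathcal{T}_{[s]}$ to $\psi$ (which vanishes) and project, rather than transforming term-by-term and recombining. Concretely, I would test the equation $\mathcal{T}_{[s]}\psi = 0$ against $e^{i\omega v_+} \overline{Y^{[s]}_{ml}(\omega)}$, integrate over $v_+$ and $\Sp^2$, and move the angular operator onto the eigenfunction using the integration-by-parts formula from Proposition \ref{PropIntPartsSphere} together with \eqref{EqSWLV}---the boundary terms at the poles vanish because both $\psi$ and $Y^{[s]}_{ml}(\omega)$ are genuine spin-weighted functions. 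Once \eqref{EqThmSepVar3} holds pointwise in $\omega$ and the regularity statements are in hand, the representation \eqref{EqThmSepVar1} is simply the statement that the inverse Teukolsky transform recovers $\psi$, which follows from the Plancherel/inversion theory of Section \ref{SecTeukExp} applied to $\widecheck{\psi}_{ml}(r; \cdot) \in L^2_\omega \ell^2_{m,l}$.
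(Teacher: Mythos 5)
Your proposal is correct and follows essentially the same route as the paper's proof: well-definedness and the $L^2_\omega\ell^2_{m,l}$ statement from Corollary \ref{CorollaryNoShift} via the isometries of Section \ref{SecTeukExp}, continuity in $\omega$ and $r$-differentiation under the integral from the weighted/pointwise bounds (since $q_r>2$), and the radial ODE obtained by testing $\mathcal{T}_{[s]}\psi=0$ against $e^{i\omega v_+}\overline{Y^{[s]}_{ml}(\omega)}$, integrating by parts in $v_+$ and on the sphere via Proposition \ref{PropIntPartsSphere} and \eqref{EqSWLV}, and using the eigenvalue equation \eqref{EqEVSWSH}. In particular, your resolution of the $\omega$-dependence of the spheroidal harmonics—projecting the full equation rather than transforming term-by-term—is exactly what the paper does.
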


\begin{proof}
Corollary \ref{CorollaryNoShift} in particular states that for each $r \in (r_-,r_+)$ we have $\psi(v_+,r, \theta, \varphi_+)$ is in $L^2_{v_+}L^2_{\Sp^2}$. It now follows from Section \ref{SecTeukExp} that the Teukolsky transform is well-defined with $\widecheck{\psi}_{ml}(r; \cdot) \in L^2_\omega \ell^2_{m,l}$ and also that \eqref{EqThmSepVar1} holds. 

By \eqref{EqCorSobolevNoShift} for $f = \psi$, and since $q_r > 2$, we obtain that $\psi(v_+,r, \theta, \varphi_+) \in L^1(\R \times \Sp^2)$. Together with the boundedness of $Y^{[s]}_{ml}(\theta, \varphi; \omega)$ and its continuity in $\omega$ we obtain from \eqref{EqThmSepVar2}  that $\widecheck{\psi}_{ml}(r; \cdot) \in C^0_\omega(\R)$ for fixed $r,m,l$. By \eqref{EqCorSobolevNoShift} for $f = \rd_r\psi, \rd_r^2 \psi$ we also obtain that we can continuously differentiate in $r$ twice under the integral in \eqref{EqThmSepVar2} for fixed $\omega, m,l$ and also the continuous dependence of the derivatives on $\omega$ as before.

In order to derive \eqref{EqThmSepVar3} we recall the coordinate expression \eqref{TeukolskyStar} of $\T_{[s]} \psi = 0$ to see that
\begin{equation*}
\begin{split}
0 &= \int_\R \int_{\Sp^2} \Big[ a^2 \sin^2 \theta \,\partial_{v_+}^2\psi + 2a \,\partial_{v_+}\rd_{\varphi_+} \psi + 2(r^2 + a^2)\, \partial_{v_+}\partial_r \psi 
+2 a\, \rd_{\varphi_+}\partial_r \psi \\
&\qquad \quad + \Delta \,\partial_r^2 \psi 
 + 2\Big( r(1-2s) - isa\cos \theta\Big)\, \partial_{v_+} \psi  \\
 &\qquad \quad +2(r-M)(1-s) \,\partial_r \psi +  \mathring{\slashed{\Delta}}_{[s]} \psi - 2s \psi \Big]
e^{i \omega v_+} \underbrace{S^{[s]}_{ml}(\cos \theta; \omega) e^{-im \varphi_+}}_{= \overline{Y^{[s]}_{ml}(\theta, \varphi_+; \omega)}} \, dv_+ \vols \\
&= \int_\R \int_{\Sp^2} - (\omega a)^2 \sin^2\theta \, \psi + 2am \omega  \, \psi - 2 i \omega(r^2 +a^2) \, \rd_r \psi +2ima \rd_r \psi \\
&\qquad \quad + \Delta \rd_r^2 \psi -2i \omega \Big( r(1-2s) - isa \cos \theta\Big) \psi \\
&\qquad \quad + 2(r-M)(1-s) \rd_r \psi + \mathring{\slashed{\Delta}}_{[s]} \psi - 2s \psi \Big] e^{i \omega v_+} \overline{Y^{[s]}_{ml}(\theta, \varphi_+; \omega)} dv_+ \vols
\end{split}
\end{equation*}
holds for all $r \in (r_-, r_+)$ and all $\omega \in \R$, where we have used \eqref{EqCorH2NoShift}, which in particular implies\footnote{We use $$\int_\R \Big( \int_{\Sp^2} |\rd_{v_+}^a \rd_{\varphi_+}^b \psi(r')|^2 \vols \Big)^{\frac{1}{2}} dv_+ \leq \underbrace{\Big(\int_\R \frac{1}{\chi(v_+)} \, dv_+\Big)^\frac{1}{2}}_{< \infty} \Big( \int_{\R} \int_{\Sp^2} \chi(v_+) | \rd_{v_+}^a \rd_{\varphi_+}^b \psi(r') |^2 \vols dv_+\Big)^\frac{1}{2} \;.$$ } $\rd_{v_+}^a\rd_{\varphi_+}^{b_1} \rd_r^{b_2} \psi (r) \in L^1_{v_+} L^2_{\Sp^2}$ for $0 \leq a + b_1 + b_2 \leq 2$, $a,  b_1, b_2 \in \N$ which we use to do the integration by parts in $v_+$.
We assemble $\mathring{\slashed{\Delta}}_{[s]}(\omega)$ from \eqref{DefSpinWeightedLaplacianOmega} to find
\begin{equation*}
\begin{split}
0 &= \int_\R \int_{\Sp^2} \Big[ \Delta \rd_r^2 \psi + 2\Big( -(r^2 + a^2) i \omega + iam + (r-M)(1-s)\Big) \rd_r \psi \\
&\qquad \quad + \Big( -(a \omega)^2  + 2 \omega m a - 2i \omega r(1-2s) - 2s\Big) \psi  + \mathring{\slashed{\Delta}}_{[s]}(\omega) \psi \Big] e^{i \omega v_+} \overline{Y^{[s]}_{ml}(\theta, \varphi_+; \omega)} dv_+ \vols \;.
\end{split}
\end{equation*}
We now use that $\psi$ and $Y^{[s]}_{ml}$ are smooth spin $2$-weighted functions so that by Proposition \ref{PropIntPartsSphere} and \eqref{EqSWLV} we can do the integration by parts to bring $\mathring{\slashed{\Delta}}_{[s]}(\omega)$ over to obtain a term of the form  $\psi \cdot e^{i \omega v_+} \overline{\mathring{\slashed{\Delta}}_{[s]}(\omega) Y^{[s]}_{ml}(\theta, \varphi_+; \omega)} = \psi  \cdot e^{i \omega v_+} \lambda^{[s]}_{ml}(\omega) \overline{Y^{[s]}_{ml}(\theta, \varphi_+; \omega)} $, where we used that the eigenvalues $\lambda^{[s]}_{ml}(\omega)$ are real. Finally, by \eqref{EqCorSobolevNoShift} for $f = \rd_r \psi, \rd_r^2 \psi$ and the boundedness of $Y^{[s]}_{ml}(\theta, \varphi; \omega)$ dominated convergence allows us to pull the $r$-derivatives out of the integral to obtain \eqref{EqThmSepVar3}. 
\end{proof}

\section{Analysis of the Heun equation and transmission and reflection coefficients for $\omega = 0$} \label{SecAnaHeun}

This section analyses the radial Teukolsky equation \eqref{EqThmSepVar3}. We show that it is of the Heun-form and that the limit $\omega = 0$ is a hypergeometric equation. We introduce specific fundamental systems of solutions along with the corresponding transmission and reflection coefficients and investigate their regularity and their behaviour for $\omega \to 0$.

\subsection{The Heun equation} \label{SecHeun}

Setting $x := \frac{r - r_-}{r_+ - r_-}$ in \eqref{EqThmSepVar3} so that we have $x = 0$ for $r = r_-$ and $x = 1$ for $r = r_+$ the equation \eqref{EqThmSepVar3} transforms to the Heun equation
\begin{equation}\label{EqSepTeukEqX}
(1-x)x \frac{d^2}{dx^2} v(x) + \big( \alpha x^2 + \beta x + \gamma\big) \frac{d}{dx} v(x) + \big( \delta x + \varepsilon\big) v(x) = 0 \;,
\end{equation}
where we have just written $\widecheck{\psi}_{ml}(r(x); \omega) = v(x)$ for brevity and generality and where
\begin{equation} \label{EqGreek}
\begin{aligned}
\alpha &=2 i \omega (r_+ - r_-) \qquad \qquad && \delta = 2i \omega (1-2s)(r_+-r_-) \\
\beta &= 4ir_- \omega +2(s-1) && \varepsilon = - \lambda^{[s]}_{ml}(\omega) +(a \omega)^2 - 2 \omega ma +2s + 2i \omega(1 - 2s)r_- \\
\gamma &= \frac{4iMr_-}{r_+ - r_-} (\omega - \omega_- m) + 1 - s
\end{aligned}
\end{equation}
and $\omega_\pm := \frac{a}{2Mr_\pm}$.

Setting $y :=1 - x = \frac{r_+ - r}{r_+ - r_-}$ so that we have $y = 0$ for $r = r_+$ and $y= 1$ for $r = r_-$ the equation \eqref{EqSepTeukEqX} transforms to the Heun equation
\begin{equation}\label{EqSepTeukEqY}
(1-y)y \frac{d^2}{dy^2} v(y) + \big( \tilde{\alpha} y^2 + \tilde{\beta} y + \tilde{\gamma} \big) \frac{d}{dy} v(y) + \big( \tilde{\delta} y + \tilde{\varepsilon} \big) v(y) = 0 \;,
\end{equation}
where
\begin{align*}
\tilde{\alpha} &= - \alpha = -2 i \omega (r_+ - r_-) \\
\tilde{\beta} &= \beta + 2 \alpha = 4i \omega r_+ + 2(s-1) \\
\tilde{\gamma} &=  -(\beta + \gamma + \alpha) = -\frac{4iMr_+}{r_+ - r_-}(\omega - \omega_+ m) + 1-s \\
\tilde{\delta} &= - \delta = - 2i \omega (1-2s)(r_+-r_-) \\
\tilde{\varepsilon} &= \delta + \varepsilon= - \lambda^{[s]}_{ml}(\omega) +(a \omega)^2 - 2 \omega ma +2s + 2i \omega(1 - 2s)r_+ \;.
\end{align*}

\subsubsection{The hypergeometric equation arising as the limit $\omega = 0$ of the Heun equation} \label{SecHypergeom}

We compute the values of the Greek parameters $\alpha, \ldots, \varepsilon$ for $\omega = 0$, where we also use $\lambda_{ml}^{[s]}(0) = -l(l+1) + s(s+1)$ from Proposition \ref{PropEigenfunctionsSWL}:
\begin{equation}\label{EqParaOmega0}
\begin{aligned}
\alpha|_{\omega =0} &=0\qquad \qquad && \delta|_{\omega =0} = 0 \\
\beta|_{\omega =0} &= 2(s-1) && \varepsilon|_{\omega =0} = (l-s)(l+s+1) + 2s\\
\gamma|_{\omega =0} &= -\frac{2iam}{r_+ - r_-}+ 1 - s \;.
\end{aligned}
\end{equation}
A straightforward computation then shows that for $\omega = 0$ the Heun equation \eqref{EqSepTeukEqX} turns into the hypergeometric equation
\begin{equation}\label{EqHypergeometricX}
(1-x) x \frac{d^2}{dx^2} v(x) + \big(\underline{c} - (\underline{a} + \underline{b} +1)x \big) \frac{d}{dx} v(x) - \underline{a} \underline{b} \cdot v(x) = 0 
\end{equation}
with
\begin{equation}\label{EqParaHypergeom}
\underline{a} = l + 1 - s \qquad \quad \underline{b} = -s -l \qquad \quad \underline{c} = \gamma|_{\omega = 0} = -\frac{2iam}{r_+ - r_-}+ 1 - s  \;.
\end{equation}
Setting again $y = 1-x$, \eqref{EqHypergeometricX} transforms into
\begin{equation}\label{EqHypergeometricY}
(1-y) y \frac{d^2}{dy^2} v(y) + \big(\underline{\tilde{c}} - (\underline{a} + \underline{b} +1)y \big) \frac{d}{dy} v(y) - \underline{a} \underline{b} \cdot v(y) = 0 
\end{equation}
with $$\underline{\tilde{c}} = \underline{a} + \underline{b} +1 - \underline{c} = \frac{2iam}{r_+ - r_-} + 1-s \;. $$

\subsection{Fundamental systems of solutions and reflection and transmission coefficients}

We now recall the Frobenius method to determine the possible asymptotics  of solutions of the radial ODE \eqref{EqThmSepVar3} at the regular singular points $r = r_+$ and $r=r_-$ and to construct fundamental systems of solutions with these prescribed asymptotics. We only provide a sketch of the derivation, full details of this textbook material are found for example in Chapter 4 of \cite{Teschl}.

We begin with the discussion of the regular singular point $x=0$ in \eqref{EqSepTeukEqX}, which corresponds to the Cauchy horizon $r=r_-$. The asymptotics at the event horizon, which is $y=0$ for \eqref{EqSepTeukEqY}, then follow directly from this discussion by replacing the Greek parameters $\alpha, \ldots, \varepsilon$ by their tilded versions.

We make the ansatz $x^\sigma \sum_{j=0}^\infty d_j(\omega, m,l) x^j$ for a solution of \eqref{EqSepTeukEqX}. Entering this ansatz into \eqref{EqSepTeukEqX} and comparing powers of $x$ yields
\begin{equation}
\label{EqPowerSeriesSol}
d_{j+1}(\sigma + j + 1)(\sigma + j + \gamma) = d_j\big((\sigma + j)(\sigma + j -1 - \beta) - \varepsilon\big) + d_{j-1}\big(- \alpha (\sigma + j -1) - \delta\big) \;.
\end{equation} 
For $j = -1$ we obtain the indicial equation $\sigma(\sigma - 1 + \gamma) = 0$ which has the two solutions $\sigma = 0$ and $\sigma = 1 - \gamma$.

Consider first $\sigma=1 - \gamma$ and set $d_0 = 1$. It then follows from \eqref{EqPowerSeriesSol} with $\sigma = 1- \gamma$ that the coefficients are recursively determined by 
\begin{equation}\label{EqRecRel1}
d_{j+1} = \frac{1}{(2 + j - \gamma)(1+j)} \Big[d_j\big((1 - \gamma + j)(- \gamma + j - \beta) - \varepsilon\big) + d_{j-1}\big(- \alpha(j - \gamma) - \delta\big)\Big]
\end{equation}
Note that for $s= 2$ we have $\gamma = -1 + i \frac{4Mr_-}{r_+ - r_-}(\omega - \omega_-m)$ and thus the denominator in \eqref{EqRecRel1} is non-vanishing for all $j \in \N$ and for all $\omega \in \R$. It can be shown that this power series converges absolutely for $x \in [0,1)$. Also note that since the coefficients $\alpha, \ldots, \varepsilon$ depend analytically on $\omega$, so do all the $d_j(\omega, m,l)$. The radius of convergence of the power series of $d_j(\omega, m, l)$ in $\omega$ is uniformly lower bounded in $j$ (it essentially depends on the radius of convergence of the power series in $\omega$ of $\lambda^{[s]}_{ml}(\omega)$ and on $\frac{4Mr_-}{r_+ - r_-}$).  Since the convergence is uniform, we obtain that the arising power series\footnote{Note that the multiplying factor $x^{1 - \gamma}$ is not analytic at $x = 0$ if $\omega \neq \omega_-m$.} is also analytic in $\omega$ for all $\omega \in \R$. We label this solution by $B^{[s]}_{\CH_r, ml}(x ; \omega) := x^{1 - \gamma} \sum_{j = 0}^\infty d_j^{[s]}(\omega, m ,l) x^j$.

To construct a second linearly independent solution we make the other choice $\sigma = 0$, i.e., we are looking for a solution of the form $\sum_{j=0}^\infty c_j(\omega, m,l) x^j$, which we normalise by $c_0 =1$. From \eqref{EqPowerSeriesSol} with the $d's$ replaced by $c's$ we obtain the recursive relation
\begin{equation} \label{EqRecTeukXC}
c_{j+1} = \frac{1}{(j+1)(j + \gamma)} \Big[ c_j \big( j(j-1 - \beta) - \varepsilon\big) + c_{j-1} \big(- \alpha(j-1) - \delta\big) \Big] \;.
\end{equation}
Since $\gamma = \frac{4iMr_-}{r_+ - r_-} (\omega - \omega_-m) + 1-s$, the denominator vanishes for $j=1$ and $\omega = \omega_-m$, but for all other $\omega$ one can show as before that the power series converges absolutely for $x \in [0,1)$ and is analytic in $\omega \in \R \setminus \{\omega_- m\}$. We label this solution by $B^{[s]}_{\CH_l, ml}(x; \omega) := \sum_{j = 0}^\infty c_j^{[s]}(\omega, m, l) x^j$. From \eqref{EqRecTeukXC} we compute $c_1(\omega, m, l) = - \frac{\varepsilon}{\gamma}$ and $c_2(\omega, m, l) = \frac{1}{2(1 + \gamma)} \Big[ \frac{\varepsilon}{ \gamma} (\beta + \varepsilon) - \delta\Big]$ for later.

The Frobenius solutions of \eqref{EqSepTeukEqY} normalised at $y=0$ are obtained in the analogous way by replacing $\alpha, \ldots, \varepsilon$ by $\tilde{\alpha}, \ldots, \tilde{\varepsilon}$. We summarise this discussion in the following

\begin{proposition} \label{PropFrobSolutions}
\hfill
\begin{enumerate}
\item
For $ \omega \neq \omega_- m$ equation \eqref{EqSepTeukEqX} has a fundamental system of solutions  $\{B^{[s]}_{\CH_l, ml}(x; \omega), B^{[s]}_{\CH_r, ml}(x ;\omega)\}$ which are of the form
\begin{equation*}
\begin{aligned}
B^{[s]}_{\CH_l, ml}(x; \omega) &= \sum_{j = 0}^\infty c_j^{[s]}(\omega, m, l) x^j \\
B^{[s]}_{\CH_r, ml}(x ; \omega) &= x^{1 - \gamma} \sum_{j = 0}^\infty d_j^{[s]}(\omega, m ,l) x^j
\end{aligned}
\end{equation*}
and are normalised by $c_0^{[s]}(\omega, m,l) = 1$ and $d_0^{[s]}(\omega, m ,l) = 1$. The power series $\sum_{j = 0}^\infty d_j^{[s]}(\omega, m ,l) x^j$ is analytic in $[0,1) \times \R$ (and thus $B^{[s]}_{\CH_r, ml}(x; \omega)$ is in particular also a solution for $\omega = \omega_- m$) while the solution $B^{[s]}_{\CH_l, ml}(x ; \omega)$ is only analytic and defined on $[0,1) \times (\R \setminus \{\omega_- m\})$. The coefficients are determined recursively and we find in particular $c_1^{[s]}(\omega, m, l) = - \frac{\varepsilon}{\gamma}$ and $c_2^{[s]}(\omega, m, l) = \frac{1}{2(1 + \gamma)} \Big[ \frac{\varepsilon}{ \gamma} (\beta + \varepsilon) - \delta\Big]$.\footnote{Note that $\gamma(\omega) \to 1-s = -1$ for $\omega \to \omega_- m$. This shows that $B^{[s]}_{\CH_l, ml} (x;\omega)$ is in general not regular for $\omega \to \omega_- m$.} 

\item For $ \omega \neq \omega_+ m$ equation \eqref{EqSepTeukEqY} has a fundamental system of solutions $\{ A^{[s]}_{\Hp_r, ml}(y ;\omega) , A^{[s]}_{\Hp_l, ml}(y ;\omega) \}$ which are of the form
\begin{equation*}
\begin{aligned}
A^{[s]}_{\Hp_r, ml}(y ;\omega) &= \sum_{j= 0}^\infty a_j^{[s]}(\omega, m,l) y^j   \\
A^{[s]}_{\Hp_l, ml}(y; \omega) &= y^{1 - \tilde{\gamma}} \sum_{j= 0}^\infty b_j^{[s]}(\omega, m, l) y^j 
\end{aligned}
\end{equation*}
and are normalised by $a_0^{[s]}(\omega, m, l) = 1 $ and $b_0^{[s]}(\omega, m,l) = 1$. The power series $\sum_{j= 0}^\infty b_j^{[s]}(\omega, m, l) y^j $ is analytic in $[0,1) \times \R$ (and  thus $A^{[s]}_{\Hp_l, ml}(y; \omega)$ is in particular also a solution for $\omega = \omega_+ m$) while the solution $A^{[s]}_{\Hp_r, ml}(y ;\omega)$ is only analytic and defined on $[0,1) \times (\R \setminus \{\omega_+ m\})$.
The coefficients are determined recursively and we find in particular $a_1^{[s]}(\omega, m, l) = - \frac{\tilde{\varepsilon}}{\tilde{\gamma}} $ and $a_2^{[s]}(\omega, m, l ) = \frac{1}{2(1 + \tilde{\gamma})} \Big[ \frac{\tilde{\varepsilon}}{\tilde{\gamma}}(\tilde{\beta} + \tilde{\varepsilon}) - \tilde{\delta}\Big] $. \footnote{Note that $\tilde{\gamma}(\omega) \to 1-s = -1$ for $\omega \to \omega_+ m$. This shows that $A^{[s]}_{\Hp_r, ml} (y;\omega)$ is in general not regular for $\omega \to \omega_+ m$.} 
\end{enumerate}
\end{proposition}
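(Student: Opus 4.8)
The statement to prove is Proposition \ref{PropFrobSolutions}, which asserts the existence of the Frobenius fundamental systems at the two regular singular points together with their analyticity properties in $(x,\omega)$ (resp.\ $(y,\omega)$).

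\textbf{Overall approach.} The plan is to carry out the standard Frobenius analysis for the singular point $x=0$ of \eqref{EqSepTeukEqX} and then transfer everything to $y=0$ of \eqref{EqSepTeukEqY} by the purely formal substitution $\alpha,\dots,\varepsilon \mapsto \tilde\alpha,\dots,\tilde\varepsilon$, so that only one case needs genuine work. Since the indicial equation is $\sigma(\sigma-1+\gamma)=0$ with roots $\sigma=0$ and $\sigma=1-\gamma$, and for $s=+2$ we have $1-\gamma = 2-i\frac{4Mr_-}{r_+-r_-}(\omega-\omega_-m)$, the two roots differ by a non-integer for generic $\omega$ but the difference $1-\gamma$ can hit a negative integer precisely at $\omega=\omega_-m$. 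This is exactly the value the proposition excludes for $B^{[s]}_{\CH_l,ml}$, so I would make the role of this exceptional frequency explicit from the start.

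\textbf{Key steps, in order.} First I would substitute the power-series ansatz $x^\sigma\sum_{j\ge 0} d_j x^j$ into \eqref{EqSepTeukEqX} and read off the three-term recursion \eqref{EqPowerSeriesSol}, isolating the indicial equation at $j=-1$. Second, taking the root $\sigma=1-\gamma$ with $d_0=1$, I would observe that the prefactor $(\sigma+j+1)(\sigma+j+\gamma)=(2-\gamma+j)(1+j)$ is non-vanishing for every $j\in\N_0$ and every $\omega\in\R$ (using $\gamma=-1+i\frac{4Mr_-}{r_+-r_-}(\omega-\omega_-m)$ for $s=2$, so $2-\gamma$ has strictly positive real part), hence the recursion \eqref{EqRecRel1} determines all $d_j$ unambiguously; this gives $B^{[s]}_{\CH_r,ml}$ and shows its power-series factor is defined for all $\omega$. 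Third, taking $\sigma=0$ with $c_0=1$, the prefactor is $(j+1)(j+\gamma)$, which vanishes only at $j=1$, $\omega=\omega_-m$; for $\omega\neq\omega_-m$ the recursion \eqref{EqRecTeukXC} determines all $c_j$, yielding $B^{[s]}_{\CH_l,ml}$ and the explicit $c_1=-\varepsilon/\gamma$, $c_2$. Fourth, I would establish absolute convergence of both series on $[0,1)$: from the three-term recursion one gets $|d_{j+1}|\lesssim |d_j|+|d_{j-1}|$ with ratios tending to the reciprocal of the nearest other singular point (here the ratio $d_{j+1}/d_j\to 1$ reflecting convergence radius $1$), which is the textbook estimate from Chapter 4 of \cite{Teschl}. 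Fifth, for the analyticity in $\omega$: since $\alpha,\dots,\varepsilon$ are analytic in $\omega$ (the only non-polynomial ingredient being $\lambda^{[s]}_{ml}(\omega)$, analytic by Proposition \ref{PropEigenfunctionsSWL}), each $d_j(\omega,m,l)$ and $c_j(\omega,m,l)$ is analytic, and I would argue that the convergence radius of $\sum_j d_j(\omega)x^j$ as a series in $\omega$ is bounded below uniformly in $j$ (controlled by the radius of convergence of $\lambda^{[s]}_{ml}(\omega)$ and by $\frac{4Mr_-}{r_+-r_-}$), so the double series converges to a jointly analytic function of $(x,\omega)$ on $[0,1)\times\R$. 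Finally I would state that linear independence of $\{B^{[s]}_{\CH_l,ml},B^{[s]}_{\CH_r,ml}\}$ is immediate from the distinct leading exponents $x^0$ versus $x^{1-\gamma}$ whenever $1-\gamma\notin\N_0$, and then obtain part (2) verbatim by the tilde-substitution.

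\textbf{Main obstacle.} The routine part is the existence and convergence of the series; the genuinely delicate point is the \emph{joint} analyticity in $(x,\omega)$ together with the uniform-in-$j$ lower bound on the $\omega$-radius of convergence, which is what lets us conclude that $B^{[s]}_{\CH_r,ml}$ extends analytically through $\omega=\omega_-m$ (only its scalar prefactor $x^{1-\gamma}$ is singular there) while $B^{[s]}_{\CH_l,ml}$ genuinely fails at $\omega=\omega_-m$ because $c_1=-\varepsilon/\gamma$ blows up as $\gamma\to 1-s=-1$. I would handle the uniform radius estimate by feeding the analytic bounds on $\lambda^{[s]}_{ml}(\omega)$ into the recursion and running an induction showing $\sup_j\sup_{|\omega|\le\rho}|d_j(\omega)|\,\rho'^{\,j}<\infty$ for suitable $\rho,\rho'$, rather than tracking constants explicitly; this is where the bulk of the care lies, and it is precisely the step the proposition's phrasing (``uniformly lower bounded in $j$'') is flagging.
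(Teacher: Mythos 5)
Your proposal follows essentially the same route as the paper's proof: the Frobenius ansatz producing the three-term recursion \eqref{EqPowerSeriesSol} and the indicial equation, the observation that for $s=+2$ the denominator $(2+j-\gamma)(1+j)$ in \eqref{EqRecRel1} never vanishes while the denominator $(j+1)(j+\gamma)$ in \eqref{EqRecTeukXC} vanishes exactly at $j=1$, $\omega=\omega_-m$, the uniform-in-$j$ lower bound on the $\omega$-radius of convergence of the coefficients (driven by the analyticity of $\lambda^{[s]}_{ml}(\omega)$) to obtain joint analyticity, and the tilde substitution to dispatch part (2).

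One slip in your closing paragraph: since $\gamma\to 1-s=-1\neq 0$ as $\omega\to\omega_-m$, the coefficient $c_1=-\varepsilon/\gamma$ stays finite there; what generically blows up is $c_2=\frac{1}{2(1+\gamma)}\big[\frac{\varepsilon}{\gamma}(\beta+\varepsilon)-\delta\big]$, precisely because of the vanishing denominator at the $j=1$ step of the recursion which you had correctly identified earlier (similarly, the root difference $1-\gamma$ equals $+2$, a positive integer, at $\omega=\omega_-m$; it is $\gamma-1$ that hits $-2$). This does not affect the structure of your argument, but the justification of why $B^{[s]}_{\CH_l,ml}$ fails to extend through $\omega=\omega_-m$ should point at $c_2$, as the paper's footnote implicitly does.
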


Our reason for labelling the solutions with $\Hp_r, \Hp_l, \CH_l, \CH_r$ will become apparent in Sections \ref{SecDetA} and \ref{SecPfMainThm}. It follows that we can write for $\omega \neq \omega_+ m$ the Teukolsky transform $\widecheck{\psi}_{ml}$ from Theorem \ref{ThmSeparationVariables}, with the $r$-coordinate replaced by the $y$-coordinate, as
\begin{equation}\label{EqDefAA}
\widecheck{\psi}_{ml}(y; \omega) =: a_{\Hp_r, ml}(\omega) A_{\Hp_r, ml}(y; \omega) + a_{\Hp_l, ml}(\omega) A_{\Hp_l, ml}(y; \omega) \;,
\end{equation}
where $a_{\Hp_r, ml}, a_{\Hp_l, ml} : \R \setminus \{\omega_+m\} \to \C$ are functions which will be determined later in Section  \ref{SecDetA}. 
\begin{lemma}\label{LemRegAA}
Under the assumptions of Section \ref{SecAssumptions} we have $a_{\Hp_r, ml}, a_{\Hp_l, ml} \in C^0(\R \setminus\{ \omega_+m\}, \C)$.
\end{lemma}

\begin{proof}
Differentiating \eqref{EqDefAA} in $y$ we obtain for $\omega \neq \omega_+ m$
\begin{equation}\label{EqMatAA} \begin{pmatrix}
\widecheck{\psi}_{ml}(y; \omega) \\ \frac{d}{dy} \widecheck{\psi}_{ml}(y; \omega) 
\end{pmatrix} = \begin{pmatrix}
A_{\Hp_r, ml}(y; \omega) & A_{\Hp_l, ml}(y; \omega)   \\ \frac{d}{dy} A_{\Hp_r, ml}(y; \omega) & \frac{d}{dy} A_{\Hp_l, ml}(y; \omega) 
\end{pmatrix} \begin{pmatrix}
a_{\Hp_r, ml}(\omega)  \\  a_{\Hp_l, ml}(\omega)
\end{pmatrix} \;.\end{equation}
Fix $y \in (0,1)$. Since $\{ A_{\Hp_r, ml}(y; \omega) , A_{\Hp_l, ml}(y; \omega) \}$ are linearly independent, the matrix has an inverse which is also analytic in $\omega$ for $\omega \neq \omega_+m$. We can thus solve for $a_{\Hp_r, ml}(\omega),  a_{\Hp_l, ml}(\omega)$ and thus they inherit the regularity of the left hand side of \eqref{EqMatAA}, which is continuous by Theorem \ref{ThmSeparationVariables}.
\end{proof}

\subsubsection{Alternative representation of second Frobenius solution} \label{SecAltRepFrob}

Let us also recall a different way of constructing the second Frobenius solution $B^{[s]}_{\CH_l, ml}(x;\omega)$ which will be useful later on in Section \ref{SecM0}. This is the variation of constant ansatz, see for example Chapter 4 of \cite{Teschl} for full details.

To obtain a second linearly independent solution we make the variation of constants ansatz $$v(x) = e(x) \cdot B^{[s]}_{\CH_r, ml}(x; \omega)\;.$$ Entering this into equation \eqref{EqSepTeukEqX} gives
\begin{equation}\label{EqVarConstants}
x(1-x)\big[ e''(x) B^{[s]}_{\CH_r, ml}(x; \omega) + 2e'(x) \frac{d}{dx}B^{[s]}_{\CH_r, ml}(x; \omega) \big] + ( \alpha x^2 + \beta x + \gamma) e'(x) \cdot B^{[s]}_{\CH_r, ml}(x; \omega) = 0 \;.
\end{equation}
Here, the prime stands for $\frac{d}{dx}$. This is a first order equation for $e'(x)$. Again, making a power series ansatz one can show that \eqref{EqVarConstants} has a unique solution of the form 
\begin{equation}\label{EqAltFrobDer}
e'(x) = x^{\gamma - 2} \sum_{j=0}^\infty e_j x^j
\end{equation}
which we normalise by $e_0 = \gamma - 1$ and where the coefficients are determined recursively by an algebraic expression which involves $\alpha, \beta, \gamma, e_j, d_j$. In particular, each $e_j(\omega, m, l)$ is an analytic function of $\omega$ for all $\omega \in \R$.

As before let us now assume that $\omega \neq \omega_- m$, so the parameter $\gamma$ has an imaginary part. In particular $e'(x)$ does not have a term proportional to $\frac{1}{x}$. An integral of $e'(x)$ is thus given by
\begin{equation}\label{EqAltFrobDer2}
e(x) = x^{\gamma - 1} \underbrace{\sum_{j = 0}^\infty \frac{e_j}{\gamma - 1 + j} x^j }\;.
\end{equation}
The underbraced power series converges absolutely on $x \in [0,1)$ and is analytic in $\omega$ for $\omega \neq \omega_-m$.  Since we have chosen $e_0 = \gamma -1$ we see that the coefficient in the power series in front of $x^0$ is $1$. Thus, 
\begin{equation} \label{EqAltRep}
B^{[s]}_{\CH_l, ml} (x; \omega) := e(x) \cdot B^{[s]}_{\CH_r, ml}(x; \omega)
\end{equation} 
is a solution of \eqref{EqHypergeometricX} of the form
$$B^{[s]}_{\CH_l, ml} (x; \omega) =\sum_{j = 0}^\infty c_j(\omega, m, l) x^j$$ with $c_0(\omega, m, l) = 1$. The coefficients $c_j(\omega, m, l)$ can of course be computed from those of $e(x)$ and those of $B^{[s]}_{\CH_r, ml}(x; \omega) = x^{1 - \gamma} \sum_{j = 0}^\infty d_j(\omega, m ,l) x^j$, for example we have
\begin{equation*}
c_0 = 1 \qquad \qquad c_1 = \frac{e_1}{\gamma} + d_1 \qquad \qquad c_2 = \frac{e_2}{\gamma +1} + \frac{e_1}{\gamma} d_1 + d_2 \;.
\end{equation*}
On the other hand it follows from the asymptotics that the solution \eqref{EqAltRep} we have constructed here must agree with $B^{[s]}_{\CH_l, ml} (x; \omega)$ from Proposition \ref{PropFrobSolutions}, for which we have already obtained the explicit values of $c_1$ and $c_2$. We thus find
$$c_2 = \frac{1}{2(1+ \gamma)} \big[ \frac{\varepsilon}{\gamma}(\beta + \varepsilon) - \delta \big] = \frac{e_2}{\gamma + 1} + \frac{e_1}{\gamma} d_1 + d_2 \;.$$
Multiplying by $(1+\gamma)$ and setting $\omega = \omega_-m$ we obtain
\begin{equation}\label{EqE2OmegaM}
e_2(\omega = \omega_- m) = \frac{1}{2} \big[ \frac{\varepsilon}{\gamma}(\beta + \varepsilon) - \delta \big]  (\omega =\omega_-m ) \;.
\end{equation}

Similarly, we find an alternative expression of $A^{[s]}_{\Hp_r, ml}(y ;\omega)$ as
\begin{equation*}
A^{[s]}_{\Hp_r, ml}(y ;\omega) = \tilde{e}(y) \cdot A^{[s]}_{\Hp_l, ml}(y ;\omega) = \Big(\sum_{j = 0}^\infty \frac{\tilde{e}_j}{\tilde{\gamma} - 1 + j} y^j \Big) \cdot \Big(\sum_{k=0}^\infty b_k y^k \Big)\;,
\end{equation*}
where
$$\tilde{e}(y) = y^{\tilde{\gamma} - 1} \underbrace{\sum_{j = 0}^\infty \frac{\tilde{e}_j(\omega, m, l)}{\tilde{\gamma} - 1 + j} y^j} $$
with
\begin{equation}\label{Eq2OT}
\tilde{e}_2(\omega = \omega_+ m) = \frac{1}{2} \big[ \frac{\tilde{\varepsilon}}{\tilde{\gamma}}(\tilde{\beta} + \tilde{\varepsilon}) - \tilde{\delta} \big]  (\omega =\omega_+m ) \;.
\end{equation}
The underbraced power series converges absolutely on $y \in [0,1)$ and is analytic in $\omega$ for $\omega \neq \omega_+m$.

\subsubsection{Reflection and transmission coefficients}

Since $\{B^{[s]}_{\CH_l, ml},B^{[s]}_{\CH_r, ml}\}$ forms a fundamental system of solutions, we can express each of the two solutions $A^{[s]}_{\Hp_r, ml}, A^{[s]}_{\Hp_l, ml}$ as a linear combination thereof, i.e., we can write for each $\omega \in \R \setminus \{\omega_-m, \omega_+m\}$
\begin{equation}\label{EqDefTransReflCoeff}
\begin{split}
A^{[s]}_{\Hp_r, ml}(1-x; \omega) &= \mathfrak{T}^{[s]}_{\Hp_r, ml}(\omega) \cdot B^{[s]}_{\CH_l, ml}(x; \omega) + \mathfrak{R}^{[s]}_{\Hp_r, ml} (\omega) \cdot B^{[s]}_{\CH_r, ml} (x; \omega) \\
A^{[s]}_{\Hp_l, ml}(1-x; \omega) &= \mathfrak{T}^{[s]}_{\Hp_l, ml}(\omega) \cdot B^{[s]}_{\CH_r, ml}(x; \omega) + \mathfrak{R}^{[s]}_{\Hp_l, ml} (\omega) \cdot B^{[s]}_{\CH_l, ml} (x; \omega) 
\end{split}
\end{equation}
with $\mathfrak{T}^{[s]}_{\Hp_r, ml}, \mathfrak{R}^{[s]}_{\Hp_r, ml}, \mathfrak{T}^{[s]}_{\Hp_l, ml},  \mathfrak{R}^{[s]}_{\Hp_l, ml} : \R \setminus \{\omega_-m, \omega_+m\} \to \C$, where we call $\mathfrak{T}^{[s]}_{\Hp_r, ml}(\omega), \mathfrak{T}^{[s]}_{\Hp_l, ml}$  the \emph{transmission coefficients of the right and left even horizon}, respectively, and $\mathfrak{R}^{[s]}_{\Hp_r, ml}, \mathfrak{R}^{[s]}_{\Hp_l, ml}$ the \emph{reflection coefficients of the right and left event horizon}, respectively.

\subsubsection{The case $m \neq 0$} \label{SecMneq0}

\begin{proposition} \label{PropTRmNeq0}
Let $m \neq 0$. Then the transmission and reflection coefficients $\mathfrak{T}^{[s]}_{\Hp_r, ml}(\omega), \mathfrak{R}^{[s]}_{\Hp_r, ml}(\omega)$, $\mathfrak{T}^{[s]}_{\Hp_l, ml}(\omega),  \mathfrak{R}^{[s]}_{\Hp_l, ml}(\omega)$ are (defined and) analytic for $\omega \in (-|\omega_+|, |\omega_+|)$ and we have $\mathfrak{T}^{[s]}_{\Hp_r, ml}(0) \neq 0$.
\end{proposition}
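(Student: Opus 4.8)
The plan is to prove the two assertions of Proposition \ref{PropTRmNeq0} separately: first the analyticity of all four coefficients on the interval $(-|\omega_+|, |\omega_+|)$, and then the non-vanishing $\mathfrak{T}^{[s]}_{\Hp_r, ml}(0) \neq 0$. For the analyticity, recall from the remarks following Proposition \ref{PropFrobSolutions} that the only points where the relevant Frobenius solutions $A^{[s]}_{\Hp_r, ml}$ and $B^{[s]}_{\CH_l, ml}$ fail to be analytic are $\omega = \omega_+ m$ and $\omega = \omega_- m$, respectively. Since $m \neq 0$ and $|\omega_- m| \geq |\omega_+ m| = |m| \cdot |\omega_+| \geq |\omega_+|$ (using $\omega_+ = \frac{a}{r_+^2 + a^2}$, $\omega_- = \frac{a}{r_-^2+a^2}$, and $r_- < r_+$ so that $|\omega_-| \geq |\omega_+|$), both singular frequencies lie outside the open interval $(-|\omega_+|, |\omega_+|)$. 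Thus on this interval all four Frobenius solutions $A^{[s]}_{\Hp_r, ml}, A^{[s]}_{\Hp_l, ml}, B^{[s]}_{\CH_l, ml}, B^{[s]}_{\CH_r, ml}$ are jointly analytic in $(x, \omega)$, and the coefficients are obtained by solving the linear system \eqref{EqDefTransReflCoeff} (together with its $x$-derivative) for the four unknowns. Since $\{B^{[s]}_{\CH_l, ml}, B^{[s]}_{\CH_r, ml}\}$ is a fundamental system, the associated Wronskian matrix is invertible and depends analytically on $\omega$, so by Cramer's rule the coefficients inherit analyticity on $(-|\omega_+|, |\omega_+|)$.

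For the non-vanishing $\mathfrak{T}^{[s]}_{\Hp_r, ml}(0) \neq 0$, the strategy is to use that at $\omega = 0$ the Heun equation \eqref{EqSepTeukEqX} degenerates to the hypergeometric equation \eqref{EqHypergeometricX} with the explicit parameters \eqref{EqParaHypergeom}, namely $\underline{a} = l+1-s$, $\underline{b} = -s-l$, $\underline{c} = -\frac{2iam}{r_+ - r_-} + 1 - s$. First I would identify the four Frobenius solutions at $\omega = 0$ with standard hypergeometric functions: $B^{[s]}_{\CH_l, ml}(x; 0)$ is the solution regular at $x = 0$ normalised to value $1$, i.e.\ ${}_2F_1(\underline{a}, \underline{b}; \underline{c}; x)$, and $A^{[s]}_{\Hp_r, ml}(1-x; 0)$ is the solution regular at $x = 1$ (i.e.\ $y = 0$) normalised to value $1$ there, i.e.\ ${}_2F_1(\underline{a}, \underline{b}; \underline{\tilde c}; 1-x)$ with $\underline{\tilde c} = \underline{a} + \underline{b} + 1 - \underline{c}$. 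The transmission coefficient $\mathfrak{T}^{[s]}_{\Hp_r, ml}(0)$ is then precisely the $B^{[s]}_{\CH_l, ml}$-component of $A^{[s]}_{\Hp_r, ml}$ in \eqref{EqDefTransReflCoeff}, which is read off from the classical connection formula for hypergeometric functions relating the basis at $x=1$ to the basis at $x=0$. The relevant coefficient is
\begin{equation*}
\mathfrak{T}^{[s]}_{\Hp_r, ml}(0) = \frac{\Gamma(\underline{\tilde c})\, \Gamma(\underline{c} - \underline{a} - \underline{b})}{\Gamma(\underline{c} - \underline{a})\, \Gamma(\underline{c} - \underline{b})} \;,
\end{equation*}
and I would verify that none of the Gamma functions in the numerator has a pole and none in the denominator forces a zero, for the specific parameter values above. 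Substituting $\underline{c} - \underline{a} - \underline{b} = \underline{\tilde c} - (\underline{a}+\underline{b}) - 1 + \dots$ one computes $\underline{c} - \underline{a} - \underline{b} = -\frac{2iam}{r_+-r_-} + 1 - s - (l+1-s) - (-s-l) = s - \frac{2iam}{r_+-r_-}$, which for $s = 2$ and $m \neq 0$ has non-zero real and imaginary parts, so $\Gamma(\underline{c}-\underline{a}-\underline{b})$ is finite and non-zero; similarly $\underline{c} - \underline{a} = -\frac{2iam}{r_+-r_-} - 2l - 2 + s$ and $\underline{c} - \underline{b} = -\frac{2iam}{r_+-r_-} + 1 + 2s$ have non-vanishing imaginary part for $m \neq 0$, so $\Gamma$ is finite there and the denominator is non-zero.

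The main obstacle I anticipate is the careful bookkeeping in the second part: making sure the normalisations of the Frobenius solutions $A^{[s]}_{\Hp_r, ml}$ and $B^{[s]}_{\CH_l, ml}$ from Proposition \ref{PropFrobSolutions} match exactly the standard normalisations of ${}_2F_1$ at the respective singular points (so that the connection coefficient is literally the transmission coefficient and not an extra factor), and correctly handling the indicial exponents $0$ and $1 - \gamma|_{\omega=0} = 1 - \underline{c}$ to confirm which branch at each endpoint is the ``regular'' one entering the definition \eqref{EqDefTransReflCoeff}. A subtlety is that at $\omega = 0$ the exponent difference $1 - \underline{c} = s + \frac{2iam}{r_+-r_-}$ is genuinely complex for $m \neq 0$, which is in fact what \emph{rescues} the argument: it guarantees the two indicial roots differ by a non-integer, so the connection formula applies in its generic (non-logarithmic) form and the Gamma-function arguments avoid the poles. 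I would double-check this genericity explicitly, since it is precisely the place where the hypothesis $m \neq 0$ is used, and contrast it with the degenerate behaviour at $m = 0$ treated separately later in the paper.
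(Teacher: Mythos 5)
Your first part (analyticity on $(-|\omega_+|,|\omega_+|)$) is exactly the paper's argument: for $m\neq 0$ the singular frequencies $\omega_+m$, $\omega_-m$ lie outside the open interval, so all four Frobenius solutions are analytic there, and one solves the linear system \eqref{EqTRAnalytic} by inverting the analytic, invertible Wronskian-type matrix of $\{B^{[s]}_{\CH_l,ml},B^{[s]}_{\CH_r,ml}\}$; nothing to add.

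For the non-vanishing of $\mathfrak{T}^{[s]}_{\Hp_r,ml}(0)$ your route is genuinely different from the paper's proof, though it coincides with what the paper records afterwards in Remark \ref{RemExplicitT}. The paper's proof avoids the connection formula altogether: it observes that $\underline{b}=-s-l$ is a \emph{negative integer}, so by \eqref{EqFirstHyper} both $B^{[s]}_{\CH_l,ml}(x;0)$ and $A^{[s]}_{\Hp_r,ml}(1-x;0)$ are polynomials in $x$, while $B^{[s]}_{\CH_r,ml}(x;0)$ and $A^{[s]}_{\Hp_l,ml}(\cdot;0)$ are not polynomials because the exponent $1-\underline{c}$ (resp.\ $1-\underline{\tilde{c}}$) has non-zero imaginary part for $m\neq 0$; feeding this into the first line of \eqref{EqDefTransReflCoeff} forces $\mathfrak{R}^{[s]}_{\Hp_r,ml}(0)=0$ and hence $\mathfrak{T}^{[s]}_{\Hp_r,ml}(0)\neq 0$. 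Your approach buys more, namely the explicit value and $|\mathfrak{T}^{[s]}_{\Hp_r,ml}(0)|=1$; the paper's buys economy, since ``polynomial versus non-polynomial'' needs no special-function identities and no discussion of degenerate parameter values.

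Three corrections to your version are needed. First, the connection coefficient is miswritten: applying the standard formula to $F(\underline{a},\underline{b},\underline{\tilde{c}};1-x)$ gives
\begin{equation*}
\mathfrak{T}^{[s]}_{\Hp_r,ml}(0)=\frac{\Gamma(\underline{\tilde{c}})\,\Gamma(\underline{\tilde{c}}-\underline{a}-\underline{b})}{\Gamma(\underline{\tilde{c}}-\underline{a})\,\Gamma(\underline{\tilde{c}}-\underline{b})}=\frac{\Gamma(\underline{a}+\underline{b}+1-\underline{c})\,\Gamma(1-\underline{c})}{\Gamma(\underline{b}+1-\underline{c})\,\Gamma(\underline{a}+1-\underline{c})}\;,
\end{equation*}
i.e.\ all four arguments involve $\underline{\tilde{c}}$, not $\underline{c}$ as in three of yours (this is the formula of Remark \ref{RemExplicitT}); your conclusion survives only because $\underline{c}=\overline{\underline{\tilde{c}}}$, so the mixed-up arguments are still non-real. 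Second, your arithmetic slips: $\underline{c}-\underline{a}=-l-\frac{2iam}{r_+-r_-}$ and $\underline{c}-\underline{b}=1+l-\frac{2iam}{r_+-r_-}$, not the expressions you wrote (again harmless, since only the non-vanishing imaginary parts matter). Third, and most substantively: you verify that $1-\underline{c}\notin\Z$, but you never note that $\underline{b}=-s-l$ \emph{is} a negative integer, so the series $F(\underline{a},\underline{b},\underline{\tilde{c}};\cdot)$ terminates. You must justify the connection formula in this degenerate case, e.g.\ by continuity in $\underline{b}$ or via the regularized form, where $1/\Gamma(\underline{b})=0$ annihilates the second (reflection) term. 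This is not fatal — the formula does hold there — but the termination of the series is precisely the structural fact at play: it is the reason $\mathfrak{R}^{[s]}_{\Hp_r,ml}(0)=0$, and it is what the paper's own proof is built on.
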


\begin{proof}
Note that $0 < |\omega_+| < |\omega_-|$. Hence for $m \neq 0$ the fundamental solutions in \eqref{EqDefTransReflCoeff} are defined for $\omega \in (-|\omega_+|, |\omega_+|)$ and thus so are the transmission and reflection coefficients. 

Combining the first line of \eqref{EqDefTransReflCoeff} with its differentiated version in $x$ we obtain the vector equation
\begin{equation}\label{EqTRAnalytic} \begin{pmatrix}  A^{[s]}_{\Hp_r, ml}(1-x; \omega) \\ \frac{d}{dx} A^{[s]}_{\Hp_r, ml}(1-x; \omega) \end{pmatrix} = 
\begin{pmatrix}
B^{[s]}_{\CH_l, ml}(x; \omega)  & B^{[s]}_{\CH_r, ml} (x; \omega) \\
\frac{d}{dx} B^{[s]}_{\CH_l, ml}(x; \omega)  & \frac{d}{dx} B^{[s]}_{\CH_r, ml} (x; \omega)
\end{pmatrix} 
\begin{pmatrix}
\mathfrak{T}^{[s]}_{\Hp_r, ml}(\omega)  \\ \mathfrak{R}^{[s]}_{\Hp_r, ml} (\omega)
\end{pmatrix} \;. 
\end{equation}
Fix $x \in (0,1)$. Note that the matrix on the right hand side is invertible (since $B^{[s]}_{\CH_l, ml}(x; \omega) $ and $B^{[s]}_{\CH_r, ml} (x; \omega)$ are linearly independent) and analytic in $\omega$ for $\omega \in (-|\omega_+|, |\omega_+|)$. The left hand side is analytic for $\omega \in (-|\omega_+|, |\omega_+|)$ as well.   We can thus solve for $\mathfrak{T}^{[s]}_{\Hp_r, ml}(\omega)  $ and $\mathfrak{R}^{[s]}_{\Hp_r, ml} (\omega)$  and obtain that they are analytic in $\omega \in (-|\omega_+|, |\omega_+|)$. Similarly one obtains that the transmission and reflection coefficients of the left event horizon are analytic in $(-|\omega_+|, |\omega_+|)$.

To show $\mathfrak{T}^{[s]}_{\Hp_r, ml}(0) \neq 0$ we begin by noticing that $ A^{[s]}_{\Hp_r, ml}(y; 0)$ and $A^{[s]}_{\Hp_l, ml}(y; 0)$ solve \eqref{EqHypergeometricY} and $B^{[s]}_{\CH_l, ml} (x; 0)$ and $B^{[s]}_{\CH_r, ml} (x; 0)$  solve \eqref{EqHypergeometricX}. For the hypergeometric equation there are convenient closed expressions for the Frobenius solutions, which we recall in the following, see for example Chapter 8 of \cite{BealsWong}, but they can also be verified directly.

For $\underline{a} \in \C$ and $n \in \N_0$ we define $(\underline{a})_n := \underline{a}(\underline{a} +1) (\underline{a} + 2 ) \cdots (\underline{a} + n-1) = \frac{\Gamma(\underline{a} +n)}{\Gamma(\underline{a})}$, where $\Gamma$ is the Gamma function. Then, for $-\underline{c} \notin \N_0$
\begin{equation}
\label{EqFirstHyper}
F(\underline{a}, \underline{b}, \underline{c}; x) := \sum_{n=0}^\infty \frac{(\underline{a})_n (\underline{b})_n}{(\underline{c})_n n!} x^n 
\end{equation} 
is a solution of \eqref{EqHypergeometricX} with $
  F(\underline{a}, \underline{b}, \underline{c}; 0) = 1$. And for $\underline{c} -2 \notin \N_0$
\begin{equation}\label{EqSecSolHyper} 
 x^{1 - \underline{c}} F(\underline{a} + 1 - \underline{c}, \underline{b} + 1 - \underline{c}, 2 - \underline{c};x)
 \end{equation} 
 is also a solution of \eqref{EqHypergeometricX}. Clearly, for $\underline{c} \neq 1$ these two solutions are linearly independent. 
Recall from Section \ref{SecHypergeom} that $B^{[s]}_{\CH_l, ml} (x; 0)$ and $B^{[s]}_{\CH_r, ml} (x; 0)$ are solutions of the hypergeometric equation \eqref{EqHypergeometricX} with $\underline{c} = \gamma(\omega = 0) = -1 - \frac{2iam}{r_+ - r_-} $.  For $m \neq 0$ we thus obtain, by comparison of the asymptotics, that we must have
$$ B^{[s]}_{\CH_l, ml} (x; 0) = F(\underline{a}, \underline{b}, \underline{c}; x)  \qquad \textnormal{ and } \qquad B^{[s]}_{\CH_r, ml} (x; 0) = x^{1 - \underline{c}} F(\underline{a} + 1 - \underline{c}, \underline{b} + 1 - \underline{c}, 2 - \underline{c};x) $$
with $\underline{a}$, $\underline{b}$ and $\underline{c}$ as in Section \ref{SecHypergeom}. Similarly we obtain
$$A^{[s]}_{\Hp_r, ml}(y; 0) = F(\underline{a}, \underline{b}, \tilde{\underline{c}}; y) \qquad \textnormal{ and } \qquad A^{[s]}_{\Hp_l, ml} (y;0) = y^{1 - \tilde{\underline{c}}} F(\underline{a} + 1 - \tilde{\underline{c}}, \underline{b} + 1 - \tilde{\underline{c}}, 2 - \tilde{\underline{c}};y)$$ with $\underline{\tilde{c}}$ as in Section \ref{SecHypergeom}.

\emph{Now note that \eqref{EqFirstHyper} is a polynomial in $x$ if, and only if, $\underline{a}$ or $\underline{b}$ are negative integers.} Since we have $\underline{b} = -s -l = -2 -l$ and $\N \ni l \geq \max(|m|, |s|)$, $\underline{b}$ is a negative integer and thus $B^{[s]}_{\CH_l, ml} (x; 0)$ and $A^{[s]}_{\Hp_r, ml}(1-x; 0)$ are polynomials in $x$. Moreover, since $\underline{c}$ and $\tilde{\underline{c}}$ have non-vanishing imaginary parts it is straightforward to see that $B^{[s]}_{\CH_r, ml} (x; 0) $ and $A^{[s]}_{\Hp_l, ml} (x;0)$ are not polynomials in $x$. Entering with this information into \eqref{EqDefTransReflCoeff} gives directly that $\mathfrak{R}^{[s]}_{\Hp_r, ml} (0)$ has to vanish and thus  $\mathfrak{T}^{[s]}_{\Hp_r, ml}(0) \neq 0$.
\end{proof}

\begin{remark} \label{RemExplicitT}
Indeed, all the transmission and reflection coefficients at $\omega = 0$ for $m \neq 0$ can be  computed explicitly using the classical theory of linear relations of solutions of the hypergeometric ODE, see for instance Chapter 8 of \cite{BealsWong}. For example one obtains $\mathfrak{T}^{[s]}_{\Hp_r, ml}(0) = \frac{\Gamma(\underline{a} + \underline{b} + 1 - \underline{c}) \Gamma(1 - \underline{c})}{\Gamma(\underline{a} + 1 - \underline{c}) \Gamma(\underline{b} + 1 - \underline{c})}$. Setting $\xi(0) := - \frac{2iam}{r_+ - r_-}$ and plugging in the exact values of the parameters for $s=2$ from \eqref{EqParaHypergeom} we obtain $$\mathfrak{T}^{[s]}_{\Hp_r, ml}(0) = \frac{\Gamma\big(-1-\xi(0)\big) \Gamma\big(2 - \xi(0)\big)}{\Gamma\big(l +1 -  \xi(0)\big) \Gamma\big(-l -  \xi(0)\big)} = \frac{\big(-l -  \xi(0)\big) \cdot \big(-l+1 -\xi(0)\big) \cdot \ldots \cdot (1-  \xi(0)\big)}{\big(l -  \xi(0)\big) \cdot \big(l - 1 - \xi(0)\big) \cdot \ldots \cdot (-1 -\xi(0)\big)}$$ from which it also follows that $|\mathfrak{T}^{[s]}_{\Hp_r, ml}(0)|=1$.
\end{remark}

\subsubsection{The case $m=0$ via the Teukolsky-Starobinsky conservation law} \label{SecM0}

\begin{proposition}\label{PropTRM0}
The transmission coefficient $\mathfrak{T}^{[s]}_{\Hp_r, 0l}$ of the right event horizon and the reflection coefficient $\mathfrak{R}^{[s]}_{\Hp_l, 0l}$ of the left event horizon, as well as $\omega \cdot \mathfrak{R}^{[s]}_{\Hp_r, 0l}$ (all of which are a priori not defined at $\omega = 0$) extend analytically to $\omega \in \R$. Moreover, we have $\mathfrak{R}^{[s]}_{\Hp_l, 0l} (0) = 0$. 
\end{proposition}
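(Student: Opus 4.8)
The plan is to analyze the $m=0$ case where the frequency $\omega=0$ can no longer be avoided, and where the Frobenius solution $B^{[s]}_{\CH_l,0l}$ develops a degeneracy because $\gamma(\omega=0)=1-s=-1$ becomes a (non-positive) integer, so the two indicial roots $\sigma=0$ and $\sigma=1-\gamma$ coincide modulo integers. First I would set $m=0$ in the Greek parameters \eqref{EqGreek}, so that $\gamma = \frac{4iMr_-}{r_+-r_-}\omega + 1 - s$ and $\tilde\gamma = -\frac{4iMr_+}{r_+-r_-}\omega + 1-s$ both tend to $-1$ as $\omega\to 0$. The strategy for extracting the transmission and reflection coefficients is to invert the linear system \eqref{EqDefTransReflCoeff} (together with its $x$-differentiated version, as in \eqref{EqTRAnalytic}) using the Wronskian of $\{B^{[s]}_{\CH_l,0l}, B^{[s]}_{\CH_r,0l}\}$; the analytic behaviour of each coefficient as $\omega\to 0$ is governed by how the numerators and this Wronskian vanish or blow up. Since $B^{[s]}_{\CH_l,0l}$ is only defined and analytic on $\R\setminus\{\omega_-\cdot 0\}=\R\setminus\{0\}$, I would use the alternative variation-of-constants representation from Section \ref{SecAltRepFrob}, where $B^{[s]}_{\CH_l,0l}=e(x)\cdot B^{[s]}_{\CH_r,0l}$ with $e(x)=x^{\gamma-1}\sum_j \frac{e_j}{\gamma-1+j}x^j$, to track precisely the origin of the singularity at $\omega=0$: the factor $x^{\gamma-1}=x^{\gamma-1}$ has exponent tending to $-2$, and the coefficient $\frac{e_j}{\gamma-1+j}$ develops a pole at $j$ such that $\gamma-1+j=0$, i.e.\ $j=2$ when $\omega=0$ (since $\gamma\to -1$). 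This is exactly the resonance that must be disentangled.

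The key computational input is the Teukolsky-Starobinsky identity, which plays the role that the $\rd_t$-conservation law plays for the scalar wave on Reissner-Nordström (cf.\ the remark below \eqref{EqRNRem} and the discussion in Section \ref{SecRN}). I would invoke this conservation law to relate the transmission/reflection coefficients for $s=+2$ to those for $s=-2$, using the relation $\lambda^{[s]}_{ml}(\omega)-s=\lambda^{[-s]}_{ml}(\omega)+s$ from Proposition \ref{PropEigenfunctionsSWL} and the symmetry of the radial ODE under $s\mapsto -s$ combined with the Teukolsky-Starobinsky differential operators. The conservation law yields a bilinear (Wronskian-type) relation that is constant in $r$, and evaluating it at both horizons produces an algebraic identity among $\mathfrak{T}^{[s]}_{\Hp_r,0l},\mathfrak{R}^{[s]}_{\Hp_r,0l},\mathfrak{T}^{[s]}_{\Hp_l,0l},\mathfrak{R}^{[s]}_{\Hp_l,0l}$. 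The crucial point is that this identity should force $\omega\cdot\mathfrak{R}^{[s]}_{\Hp_r,0l}$ to remain bounded (hence, with the a priori analyticity off $\omega=0$, to extend analytically) while the other three coefficients stay bounded, and it should pin down the value $\mathfrak{R}^{[s]}_{\Hp_l,0l}(0)=0$. To get the vanishing of $\mathfrak{R}^{[s]}_{\Hp_l,0l}(0)$, I would compare asymptotic orders in $x$: at $\omega=0$ both $A^{[s]}_{\Hp_l,0l}(1-x;0)$ and $B^{[s]}_{\CH_l,0l}(x;0)$ are genuine logarithmic/degenerate Frobenius solutions, and matching the leading (polynomial versus non-polynomial, regular versus $\log$) parts in the second line of \eqref{EqDefTransReflCoeff} should leave no room for a $B^{[s]}_{\CH_l,0l}$-contribution, forcing the reflection coefficient to zero.

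For the analytic extension of $\mathfrak{T}^{[s]}_{\Hp_r,0l}$, $\mathfrak{R}^{[s]}_{\Hp_l,0l}$ and $\omega\cdot\mathfrak{R}^{[s]}_{\Hp_r,0l}$, I would argue as follows. The potential obstructions to analyticity at $\omega=0$ come solely from the solution $B^{[s]}_{\CH_l,0l}$, whose defining recursion \eqref{EqRecTeukXC} has a denominator $(j+1)(j+\gamma)$ vanishing at $j=1$, $\omega=0$ (since then $\gamma=0$... here for $s=+2$, $\gamma\to -1$, so the vanishing denominator is at $j$ with $j+\gamma=0$, i.e.\ $j=1$ gives $j+\gamma = 1-1=0$). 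Tracking the simple pole of $c_1=-\varepsilon/\gamma$ and of higher coefficients in $\omega$, one sees that $B^{[s]}_{\CH_l,0l}$ has at worst a simple pole in $\omega$ at $\omega=0$; equivalently $\omega\cdot B^{[s]}_{\CH_l,0l}$ extends analytically. Feeding this into the inverse of \eqref{EqTRAnalytic} — where the Wronskian of $\{B^{[s]}_{\CH_l,0l},B^{[s]}_{\CH_r,0l}\}$ correspondingly has a simple pole while $A^{[s]}_{\Hp_r,0l}$ is analytic — shows that $\mathfrak{T}^{[s]}_{\Hp_r,0l}$ (which multiplies the singular $B^{[s]}_{\CH_l,0l}$ branch) picks up a compensating zero and extends analytically, whereas $\mathfrak{R}^{[s]}_{\Hp_r,0l}$ multiplying the analytic $B^{[s]}_{\CH_r,0l}$ branch is the one carrying the pole, so that only $\omega\cdot\mathfrak{R}^{[s]}_{\Hp_r,0l}$ extends. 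I expect \textbf{the main obstacle} to be the careful bookkeeping of the Teukolsky-Starobinsky conservation law at the degenerate frequency $\omega=0$: one must correctly normalise the $s=+2$ and $s=-2$ Frobenius bases, identify the constant of the conservation law in terms of the transmission coefficients via the $r\to r_\pm$ limits (using the asymptotics in Proposition \ref{PropFrobSolutions}), and ensure the cancellation of the logarithmic and resonant $x^{\gamma-1}$ terms is consistent — this is where the interplay between the special-function degeneracy at $\gamma=-1$ and the conservation law must be reconciled, and where a naive order-counting could give the wrong power of $\omega$.
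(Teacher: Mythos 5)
Your proposal assigns the central role to the Teukolsky--Starobinsky conservation law, but that law cannot do the job you give it, and this inverts the paper's logic. The conservation law (the paper's \eqref{EqPrelimConsLaw}) is a bilinear identity involving \emph{only} the right-horizon coefficients $\mathfrak{T}^{[s]}_{\Hp_r,0l}$ and $\mathfrak{R}^{[s]}_{\Hp_r,0l}$; the left-horizon coefficients never enter it, so it cannot ``pin down'' $\mathfrak{R}^{[s]}_{\Hp_l,0l}(0)=0$. Moreover, in the paper the conservation law is a \emph{consumer}, not a producer, of Proposition \ref{PropTRM0}: it is used afterwards to prove the separate statement $\mathfrak{T}^{[s]}_{\Hp_r,0l}(0)\neq 0$, and passing to the limit $\omega \to 0$ in \eqref{EqPrelimConsLaw} requires as input precisely the continuity of $\mathfrak{T}^{[s]}_{\Hp_r,0l}$ and of $\omega\,\mathfrak{R}^{[s]}_{\Hp_r,0l}$ that Proposition \ref{PropTRM0} provides. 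Even for the right-horizon coefficients, a conservation law at best controls absolute values on the real axis; converting boundedness into \emph{analytic} extension needs meromorphy at $\omega=0$, which must anyway be extracted from the Frobenius series --- at which point the conservation law is superfluous for the extension statements.

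The missing idea is the paper's explicit subtraction of the resonance: one defines $U_{\CH,0l} := B_{\CH_l,0l} - \frac{e_2}{\gamma+1}B_{\CH_r,0l}$ and $U_{\Hp,0l} := A_{\Hp_r,0l} - \frac{\tilde{e}_2}{\tilde{\gamma}+1}A_{\Hp_l,0l}$, obtaining fundamental systems analytic for \emph{all} $\omega\in\R$, whose connection coefficients $X,Y$ are analytic by the Wronskian inversion as in \eqref{EqTRAnalytic}; since at $\omega=0$ both $B_{\CH_r,0l}(\cdot\,;0)$ and $A_{\Hp_l,0l}(\cdot\,;0)$ are polynomials while $U_{\CH,0l}(\cdot\,;0)$ carries a genuine $\log$ (because $e_2(0,0,l)\neq 0$), one gets $Y_{\Hp_l,0l}(0)=0$, and all four claims follow by the algebra in \eqref{EqTRForM0}. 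Your fallback argument for the vanishing --- matching polynomial versus $\log$ parts ``at $\omega=0$'' directly in \eqref{EqDefTransReflCoeff} --- cannot be run as stated, because \eqref{EqDefTransReflCoeff} itself degenerates there: $B^{[s]}_{\CH_l,0l}(\cdot\,;0)$ does not exist (it is not a logarithmic solution; it blows up, with singular part proportional to $B^{[s]}_{\CH_r,0l}(\cdot\,;0)$), $A^{[s]}_{\Hp_l,0l}(\cdot\,;0)$ is a polynomial rather than a $\log$ solution, and since the pole of $B^{[s]}_{\CH_l,0l}$ points in the $B^{[s]}_{\CH_r,0l}$-direction, cancellations between the two terms of \eqref{EqDefTransReflCoeff} are possible --- indeed exactly this cancellation is what makes $\mathfrak{T}^{[s]}_{\Hp_l,0l}$ finite. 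Your pole bookkeeping in the final paragraph also has the structure backwards: for $m=0$ it is $A^{[s]}_{\Hp_r,0l}$ that carries the simple pole (Proposition \ref{PropFrobSolutions}: this branch is singular at $\omega=\omega_+m=0$), while $W_x\big(B^{[s]}_{\CH_l,0l},B^{[s]}_{\CH_r,0l}\big)$ stays bounded and non-vanishing as $\omega\to 0$, precisely because the residue of $B^{[s]}_{\CH_l,0l}$ is parallel to $B^{[s]}_{\CH_r,0l}$; and the first resonant Frobenius coefficient is $c_2$ (denominator $2(1+\gamma)$ in \eqref{EqRecTeukXC}), not $c_1=-\varepsilon/\gamma$, which is finite at $\gamma=-1$. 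A corrected Wronskian pole-count along these lines can be made to work, but its key input --- that the residues of the singular branches are proportional to the regular branches, and that $A^{[s]}_{\Hp_l,0l}(1-x;0)\propto B^{[s]}_{\CH_r,0l}(x;0)$ by the polynomial-versus-$\log$ dichotomy --- is exactly the mechanism of the paper's proof, repackaged.
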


\begin{proof}
We construct a set of fundamental solutions which is regular for all $\omega \in \R$. Recall from Proposition \ref{PropFrobSolutions} that $B^{[s]}_{\CH_r, ml}(x;\omega)$ is defined for all $\omega \in \R$. For $\omega \neq 0$ we now use the alternative representation of the second Frobenius solution $B^{[s]}_{\CH_l, 0l}(x; \omega)$ from Section \ref{SecAltRepFrob}  and define
\begin{equation*}
\begin{split}
U_{\CH,0l}(x; \omega) :&= B_{\CH_l,0l}(x ;\omega) - \frac{e_2(\omega,0,l)}{\gamma +1} B_{\CH_r,0l}(x;\omega) \\
&=x^{\gamma - 1} \Big( \sum_{j = 0}^\infty \frac{e_j(\omega, 0, l)}{\gamma - 1 + j} x^j \Big) B_{\CH_r,0l}(x;\omega) - \frac{e_2(\omega,0,l)}{\gamma +1} B_{\CH_r,0l}(x;\omega) \\
&=x^{\gamma - 1} \Big( \sum_{\substack{j = 0 \\ j \neq 2}}^\infty \frac{e_j(\omega, 0 , l)}{\gamma - 1 + j} x^j \Big) B_{\CH_r,0l}(x;\omega) + e_2(\omega,0,l)\cdot \underbrace{\frac{x^{\gamma +1} -1}{\gamma +1}}_{= \sum_{k=1}^\infty \frac{(\gamma + 1)^{k-1} (\log x)^k}{k!}} \cdot B_{\CH_r,0l}(x;\omega) \;.
\end{split}
\end{equation*}
We thus see that $U_{\CH,0l}(x; \omega)$ extends analytically as a solution\footnote{Note that the above construction corresponds to choosing as an integral of  \eqref{EqAltFrobDer} not \eqref{EqAltFrobDer2} but $$x^{\gamma - 1} \Big( \sum_{\substack{j = 0 \\ j \neq 2}}^\infty \frac{e_j}{\gamma - 1 + j} x^j \Big) + e_2 \frac{x^{\gamma +1} - 1}{\gamma + 1} \;,$$ which differs from \eqref{EqAltFrobDer2} by an $\omega$-dependent constant and makes it analytic for all $\omega \in \R$.} of \eqref{EqSepTeukEqX} to $\omega = 0$ for which we have
\begin{equation}\label{EqU1}
U_{\CH,0l}(x; 0) = x^{-2} \Big( \sum_{\substack{j = 0 \\ j \neq 2}}^\infty \frac{e_j(0, 0 , l)}{-2 + j} x^j \Big) B_{\CH_r,0l}(x;0) + e_2(0,0,l) (\log x )\cdot B_{\CH_r,0l}(x;0) \;.
\end{equation}
Hence, $\{ B_{\CH_r,0l}(x;\omega), U_{\CH,0l}(x; \omega)\}$ is a fundamental system of solutions of \eqref{EqSepTeukEqX} for $m = 0$ which is defined and analytic for all $\omega \in \R$. (The linear independence of the solutions is shown below.)

Similarly we set
$$U_{\Hp,0l}(y; \omega) := A_{\Hp_r,0l}(y ;\omega) - \frac{\tilde{e}_2(\omega,0,l)}{\tilde{\gamma} +1} A_{\Hp_l,0l}(y;\omega)$$
to obtain a fundamental system $\{A_{\Hp_l,0l}(y;\omega), U_{\Hp,0l}(y; \omega)\}$ of solutions of \eqref{EqSepTeukEqY} for $m =0$ which is defined and analytic for all $\omega \in \R$. Moreover, $U_{\Hp,0l}(y; 0)$ is of the form
\begin{equation}\label{EqU2}
U_{\Hp,0l}(y; 0) = y^{-2} \Big( \sum_{\substack{ j = 0 \\ j \neq 2}}^\infty \frac{\tilde{e}_j(0,0,l)}{-2 + j} y^j\Big) A_{\Hp_l,0l}(y;0) + \tilde{e}_2(0,0,l) (\log y) \cdot A_{\Hp_l,0l}(y;0) \;.
\end{equation}
Let us also note that it follows from \eqref{EqE2OmegaM}, \eqref{Eq2OT} and  Section \ref{SecHeun} that $$e_2(0,0,l) = \tilde{e}_2(0,0,l) = -\frac{1}{2} \Big(\big[ (l-2)(l+3) + 4\big] \big[ (l-2)(l+3) + 6\big] \Big) \neq 0 $$ for $l \geq 2$ and thus the solutions \eqref{EqU1} and \eqref{EqU2} do indeed have $\log$ terms and are linearly independent from $ B_{\CH_r,0l}(x;0)$ and $A_{\Hp_l,0l}(y ;0)$, respectively.

We can now expand for all $\omega \in \R$
\begin{equation}\label{EqDEFXY}
\begin{split}
U_{\Hp, 0l}(1-x ; \omega) &= X_{\Hp,0l}(\omega) U_{\CH, 0l}(x ;\omega) + Y_{\Hp, 0l} ( \omega) B_{\CH_r, 0l}(x ; \omega) \\
A_{\Hp_l, 0l}(1-x; \omega) &= X_{\Hp_l, 0l}(\omega) B_{\CH_r, 0l}(x; \omega) + Y_{\Hp_l, 0l}(\omega) U_{\CH, 0l}(x ; \omega)
\end{split}
\end{equation}
where $X_{\Hp,0l},  Y_{\Hp, 0l}, X_{\Hp_l, 0l}, Y_{\Hp_l, 0l}$ are complex valued functions. It follows as in \eqref{EqTRAnalytic} that they are analytic on all of $\R$.

We now show that we have $Y_{\Hp_l, 0l}(0) = 0$. Recall that for $m=0$ and $\omega =0$ the coefficients of the hypergeometric equations which $ \{ B_{\CH_r,0l}(x;0), U_{\CH,0l}(x; 0)\}$  and  $\{U_{\Hp, 0l}(y ; 0), A_{\Hp_l, 0l}(y; 0)\}$ are satisfying are $\underline{c} = \tilde{\underline{c}} = -1$, $\underline{a} = l -1$, and $\underline{b} = -2-l$. It thus follows that \eqref{EqSecSolHyper} still defines a solution to the hypergeometric equation which is clearly linearly independent to $U_{\CH,0l}(x; 0)$ ($U_{\Hp, 0l}(y ; 0)$), since the latter contains a non-vanishing $\log$-term. By comparison of the leading order coefficients we thus obtain
$$B_{\CH_r,0l}(x;0) = x^{1 - \underline{c}} F(\underline{a} +1 - \underline{c}, \underline{b} + 1 - \underline{c}, 2 - \underline{c}; x) \qquad \textnormal{ and } \qquad A_{\Hp_l, 0l}(y; 0) = y^{1 - \underline{\tilde{c}}} F(\underline{a} +1 - \underline{\tilde{c}}, \underline{b} + 1 - \underline{\tilde{c}}, 2 - \underline{\tilde{c}}; y) \;.$$
Note that $\underline{b} +1 - \underline{c} =  \underline{b} +1 - \underline{\tilde{c}} =-l \in -\N$ and thus, as we observed in the proof of Proposition \ref{PropTRmNeq0}, $B_{\CH_r,0l}(x;0)$ and $A_{\Hp_l, 0l}(y; 0) $ are polynomials. Since $U_{\CH,0l}(x; 0)$ is clearly not a polynomial because of the $\log$-term, it directly follows from \eqref{EqDEFXY} that we must have $Y_{\Hp_l, 0l}(0) = 0$.

Expanding \eqref{EqDEFXY} in terms of our original systems of fundamental solutions gives
\begin{equation*}
A_{\Hp_l, 0l}(1-x; \omega) = \underbrace{\Big(X_{\Hp_l, 0l}(\omega) - Y_{\Hp_l, 0l}(\omega) \frac{e_2(\omega, 0, l)}{\gamma + 1} \Big)}_{= \mathfrak{T}_{\Hp_l, 0l}(\omega)} B_{\CH_r, 0l}(x ; \omega) + \underbrace{Y_{\Hp_l, 0l}(\omega)}_{= \mathfrak{R}_{\Hp_l, 0l}(\omega)} B_{\CH_l, 0l}(x; \omega) \;,
\end{equation*}
which directly shows that $\mathfrak{R}_{\Hp_l, 0l}(\omega)$ is analytic on $\R$ and vanishes at $\omega = 0$, and
\begin{equation}\label{EqTRForM0}
\begin{split}
A_{\Hp_r, 0l}&(1-x; \omega) = U_{\Hp, 0l}(1-x; \omega) + \frac{\tilde{e}_2(\omega, 0, l)}{\tilde{\gamma } +1} A_{\Hp_l, 0l}(1-x; \omega) \\
&=X_{\Hp, 0l}(\omega) U_{\CH, 0l}(x ; \omega) + Y_{\Hp, 0l}(\omega) B_{\CH_r, 0l}(x; \omega)  \\
&\qquad + \frac{\tilde{e}_2(\omega, 0 , l)}{\tilde{\gamma} +1} \Big( X_{\Hp_l, 0l}(\omega) B_{\CH_r, 0l}(x; \omega) + Y_{\Hp_l, 0l}(\omega) U_{\CH, 0l}(x; \omega)\Big) \\
&= \Big(\underbrace{X_{\Hp, 0l}(\omega) + \frac{\tilde{e}_2(\omega, 0, l)}{\tilde{\gamma} +1} Y_{\Hp_l, 0l}(\omega)}_{= \mathfrak{T}_{\Hp_r, 0l}(\omega)}\Big) B_{\CH_l, 0l}(x; \omega) \\
&\quad + \Big(\underbrace{Y_{\Hp, 0l}(\omega) + \frac{\tilde{e}_2(\omega, 0, l)}{\tilde{\gamma} + 1} X_{\Hp_l, 0l}(\omega) - \frac{e_2(\omega, 0, l)}{\gamma +1} \big[X_{\Hp, 0l}(\omega) + \frac{\tilde{e}_2(\omega, 0, l)}{\tilde{\gamma} +1} Y_{\Hp_l, 0l}(\omega)\big]}_{= \mathfrak{R}_{\Hp_r, 0l}(\omega)}\Big) B_{\CH_r, 0l}(x; \omega) \;.
\end{split}
\end{equation}
Since we have shown that $Y_{\Hp_l, 0l}(0) =0$ it follows that $\mathfrak{T}_{\Hp_r, 0l}(\omega)$ is analytic on all of $\R$. Moreover, we have $\mathfrak{R}_{\Hp_r, 0l}(\omega) = Y_{\Hp, 0l}(\omega) + \frac{\tilde{e}_2(\omega, 0, l)}{\tilde{\gamma} + 1} X_{\Hp_l, 0l}(\omega) - \frac{e_2(\omega, 0, l)}{\gamma +1} \mathfrak{T}_{\Hp_r, 0l}(\omega)$, from which it follows that $\omega \cdot \mathfrak{R}_{\Hp_r, 0l}(\omega)$ extends analytically to $\omega  = 0$.
\end{proof}

Note that \eqref{EqTRForM0} directly shows that our previous approach for $m \neq 0$ of showing that $\mathfrak{T}_{\Hp_r, ml}(0) \neq 0$, namely by computing the transmission coefficient for the simpler hypergeometric equation, does not directly transfer to $m =0$, since here we actually need to know the value of $\rd_\omega Y_{\Hp_l, 0l}(0)$, which is a statement that goes beyond the hypergeometric equation. The omega derivative can be computed -- however, it seems easier to use the Teukolsky-Starobinsky conservation law instead which has been made use of recently and developed in much detail in \cite{Teix20}, \cite{ShlaTeix20}\footnote{The Teukolsky-Starobinsky conservation law also provides an alternative approach to showing that $\mathfrak{T}_{\Hp_r, ml}(0) \neq 0$ for $m \neq 0$, cf.\ Remark \ref{RemMneq00}}. 

\subsubsection*{The Teukolsky-Starobinsky conservation law}

What is needed of the Teukolsky-Starobinsky conservation law for this paper can be  developed quite quickly, which keeps the paper self-contained. To make contact with \cite{Teix20}, \cite{ShlaTeix20} we begin by noting that $\widecheck{\psi}_{ml}^{[s]}(r; \omega)$ satisfies \eqref{EqThmSepVar3} if, and only if, $R_{ml}^{[s]}(r; \omega) := e^{im\overline{r}} e^{-i \omega r^*} \frac{1}{ \Delta^s} \widecheck{\psi}_{ml}^{[s]}(r; \omega)$ satisfies
\begin{equation}\label{EqTeukStandard}
\begin{split}
\Delta^{-s} \frac{d}{dr} \big( \Delta^{s+1} \frac{d R_{ml}^{[s]}}{dr} \big)  + \Big( &\big[ ( r^2 + a^2)^2 \omega^2 - 4aMr \omega m + a^2 m^2 + 2ia(r-M) ms - 2iM(r^2 - a^2) \omega s\big] \cdot \frac{1}{\Delta} \\
&+ 2ir \omega s + \lambda_{ml}^{[s]}(\omega) - a^2 \omega^2 \Big) R_{ml}^{[s]} = 0 \;.
\end{split}
\end{equation}
Note that \eqref{EqTeukStandard} is the radial Teukolsky equation in its most common form as it also appears for example in (150) of \cite{DafHolRod17} where, however, their $\lambda_{ml}^{[s]}$ differs from ours here by a minus sign. A direct computation, see also \cite{DafHolRod17}, gives furthermore that $R_{ml}^{[s]}(r; \omega)$ satisfies \eqref{EqTeukStandard} if, and only if, $u_{ml}^{[s]}(r; \omega) := \Delta^{\nicefrac{s}{2}} (r^2 + a^2)^{\nicefrac{1}{2}} R_{ml}^{[s]}(r; \omega)$ satisfies
\begin{equation}
\label{EqRescaledTeuk}
\frac{d^2}{(dr^*)^2} u_{ml}^{[s]}(r; \omega) + V^{[s]}_{ml}(r; \omega) u_{ml}^{[s]}(r; \omega) = 0 
\end{equation}
with 
\begin{equation*}
\begin{split}
V^{[s]}_{ml}(r; \omega) = &\frac{\Delta}{(r^2 + a^2)^2} \Big( \frac{\big((r^2 + a^2) \omega - am\big)^2 - 2is(r-M)\big((r^2 + a^2) \omega - am\big)}{\Delta} + 4is \omega r + \lambda_{ml}^{[s]}(\omega) -s - a^2 \omega^2 + 2 a m \omega\Big) \\
&\qquad -\frac{s^2 (r-M)^2}{(r^2 + a^2)^2} + \frac{\Delta}{(r^2 + a^2)^3} \Big(-2(r-M) r - \Delta + \frac{3r^2 \Delta}{r^2 + a^2}\Big)  \;.
\end{split}
\end{equation*}
Note that one has $V^{[s]}_{ml}(r; \omega) = \overline{V^{[-s]}_{ml}(r; \omega)}$, for which we recall $\lambda_{ml}^{[s]}(\omega) - s = \lambda_{ml}^{[-s]}(\omega) + s$ from Proposition \ref{PropEigenfunctionsSWL}. It follows that if $u_{ml}^{[-2]}$ is a solution of \eqref{EqRescaledTeuk} with $s = -2$, then $\overline{u_{ml}^{[-2]}}$ is a solution of \eqref{EqRescaledTeuk} with $s = +2$. Unwinding the above relations we find that if $\widecheck{\psi}_{ml}^{[-2]}$ satisfies $\eqref{EqThmSepVar3}$ (or \eqref{EqSepTeukEqX}) with $s = -2$ then $\overline{(r^2 + a^2)^{\frac{1}{2}} e^{im \overline{r}} e^{-i \omega r^*} \Delta \widecheck{\psi}_{ml}^{[-2]}}$ satisfies \eqref{EqRescaledTeuk} with $s = +2$ and thus 
\begin{equation}\label{EqStoMinusS} 
\Delta^2 e^{-2im \overline{r}} e^{2 i \omega r^*} \overline{\widecheck{\psi}_{ml}^{[-2]}}
\end{equation}
satisfies $\eqref{EqThmSepVar3}$ (or \eqref{EqSepTeukEqX}) with $s = +2$. 

Moreover, we observe that since \eqref{EqRescaledTeuk} does not have any first order terms, the Wronskian $$W_{r^*}(u_{ml}^{[s]}, w_{ml}^{[s]} ) := \big(\frac{d}{dr^*} u_{ml}^{[s]} \big) w_{ml}^{[s]} - u_{ml}^{[s]}\big(\frac{d}{dr^*}w_{ml}^{[s]}  \big) $$ is conserved in $r$ for any two solutions $u_{ml}^{[s]}(r^*; \omega)$ and $w_{ml}^{[s]}(r^*; \omega)$ of \eqref{EqRescaledTeuk}. Hence, if $v_{1, ml}^{[+2]}(x; \omega)$ and $v_{2,ml}^{[+2]}(x; \omega)$ are two solutions of \eqref{EqSepTeukEqX} with $s = +2$ then
\begin{equation}
\label{EqWronskianX}
\begin{split}
\mathrm{const} &= W_{r^*}\big((r^2 + a^2)^{\frac{1}{2}} \frac{1}{\Delta} e^{im \overline{r}} e^{-i \omega r^*} v_{1, ml}^{[+2]}, (r^2 + a^2)^{\frac{1}{2}} \frac{1}{\Delta} e^{im \overline{r}} e^{-i \omega r^*} v_{2, ml}^{[+2]}\big) \\
&= (r^2 + a^2) \frac{1}{\Delta^2} e^{2im \overline{r}} e^{-2i \omega r^*} W_{r^*}(v_{1, ml}^{+2]}, v_{2,ml}^{[+2]} ) \\
&=\frac{1}{\Delta} e^{2im \overline{r}} e^{-2i \omega r^*} \frac{1}{r_+ - r_-} \Big( \underbrace{\frac{d}{dx} v_{1, ml}^{[+2]} \cdot v_{2,ml}^{[+2]} - v_{1,ml}^{[+2]} \frac{d}{dx} v_{2, ml}^{[+2]} }_{=:W_x(v_{1,ml}^{[+2]}, v_{2,ml}^{[+2]})}\Big) \;,
\end{split}
\end{equation}
where we have used $\frac{d}{dr^*} = \frac{\Delta}{r^2 + a^2} \frac{1}{r_+ - r_-} \frac{d}{dx}$.

The Teukolsky-Starobinsky identities allow us to produce a solution for the $s=-2$ equation from one of the $s=+2$ equation -- and vice versa. Here, only the first direction is needed which is straightforward to establish for the radial Teukolsky equation in the form \eqref{EqSepTeukEqX}. We claim that if $v^{[+2]}$ is a solution to \eqref{EqSepTeukEqX} with $s=+2$, then $\frac{d^4}{dx^4} v^{[+2]}$ is a solution to \eqref{EqSepTeukEqX} with $s=-2$. To prove this we first note that \eqref{EqGreek} gives the following relation of the parameters of the Heun equation for $s\pm 2$
\begin{equation*}
\begin{aligned}
\alpha^{[-2]} &= \alpha^{[+2]} \qquad \qquad \qquad \qquad &&\delta^{[-2]} = \delta^{[+2]} + 8 \alpha^{[+2]} \\
\beta^{[-2]} &= \beta^{[+2]} - 8  &&\varepsilon^{[-2]} = \varepsilon^{[+2]} + 4 \beta^{[+2]} - 12 \\
\gamma^{[-2]} &= \gamma^{[+2]} + 4
\end{aligned}
\end{equation*}
where we have used again $\lambda_{ml}^{[s]}(\omega) - s = \lambda_{ml}^{[-s]}(\omega) + s$. Taking $\frac{d^4}{dx^4}$ of \eqref{EqSepTeukEqX} now gives
\begin{equation*}
\begin{split}
0 &= \sum_{j = 0}^4 \binom{4}{j} \Big[ \frac{d^j}{dx^j} \big((1-x)x \big) \frac{d^{4-j}}{dx^{4-j}} \frac{d^2}{dx^2} v^{[+2]} + \frac{d^j}{dx^j} \big( \alpha^{[+2]} x^2 + \beta^{[+2]} x + \gamma^{[+2]} \big) \frac{d^{4-j}}{dx^{4-j}} \frac{d}{dx} v^{[+2]} \\
&\qquad \qquad \qquad  + \frac{d^j}{dx^j} \big(\delta^{[+2]} x + \varepsilon^{[+2]}\big) \frac{d^{4-j}}{dx^{4-j}} v^{[+2]} \Big]\\
&= (1-x) x \frac{d^6}{dx^6} v^{[+2]} + 4(-2x + 1) \frac{d^5}{dx^5} v^{[+2]} + 6(-2) \frac{d^4}{dx^4} v^{[+2]}  +(\alpha^{[+2]} x^2 + \beta^{[+2]} x + \gamma^{[+2]}) \frac{d^5}{dx^5} v^{[+2]} \\
&\quad + 4(2 \alpha^{[+2]} x + \beta^{[+2]}) \frac{d^4}{dx^4} v^{[+2]} + 6 \cdot 2 \alpha^{[+2]} \frac{d^3}{dx^3} v^{[+2]} +(\delta^{[+2]} x + \varepsilon^{[+2]}) \frac{d^4}{dx^4} v^{[+2]} + 4 \delta^{[+2]} \frac{d^3}{dx^3} v^{[+2]} \\
&= (1-x)x\frac{d^2}{dx^2} \frac{d^4}{dx^4}v^{[+2]} + \big( \alpha^{[-2]} x^2 + \beta^{[-2]} x + \gamma^{[-2]} \big) \frac{d}{dx} \frac{d^4}{dx^4}v^{[+2]} + \big(\delta^{[-2]} x + \varepsilon^{[-2]}\big) \frac{d^{4}}{dx^{4}} v^{[+2]} \;,
\end{split}
\end{equation*}
where we have used $12 \alpha^{[+2]} + 4 \delta^{[+2]} = 0$.

We now apply this to the Frobenius solutions for $\omega \neq \omega_+m, \omega_- m$. Recalling $\frac{d}{dy} = -  \frac{d}{dx}$ we have
\begin{equation}
\label{EqPhiDeltaP}
\begin{split}
\frac{d^4}{dy^4} A_{\Hp_r, ml}^{[+2]} (y; \omega) &= \sum_{j=0}^\infty (j+4)(j+3)(j+2)(j+1) a_{j+4}^{[+2]}(\omega, m, l) y^j \\
\frac{d^4}{dx^4} B_{\CH_l, ml}^{[+2]} (x; \omega) &= \sum_{j=0}^\infty (j+4)(j+3)(j+2)(j+1) c_{j+4}^{[+2]}(\omega, m, l) x^j \\
\frac{d^4}{dx^4} B_{\CH_r, ml}^{[+2]} (x; \omega) &=\sum_{j=0}^\infty (j+1 - \gamma^{[+2]})(j - \gamma^{[+2]}) (j - 1 - \gamma^{[+2]})(j-2 - \gamma^{[+2]}) d_j^{[+2]}(\omega, m, l) x^{j-3- \gamma^{[+2]}} \;.
\end{split}
\end{equation}
With the notation from Section \ref{SecRelTwoPhis} and \ref{SecDelta} we find near $r = r_+$
\begin{equation*}
\begin{split}
e^{-2im \overline{r}} e^{2i \omega r^*} &= e^{2i r^*(\omega - \omega_+ m)} e^{2im \phi_+(r)} \\
&= (r_+ - r)^{\frac{i}{\kappa_+}(\omega - \omega_+m)} e^{2i F_+(r) \cdot ( \omega - \omega_+ m)} e^{2im \phi_+(r)} \\
&= \underbrace{(r_+ - r_-)^{\frac{4iMr_+}{r_+ - r_-}(\omega - \omega_+ m)} e^{2iF_+(r) \cdot (\omega - \omega_+m)} e^{2im \phi_+(r)}}_{=: \mathcal{D}_+(r; m, \omega)} \cdot \underbrace{y^{\frac{4iMr_+}{r_+ - r_-}(\omega - \omega_+m)}}_{= y^{-(1 + \tilde{\gamma}^{[+2]})}}
\end{split}
\end{equation*}
and near $r=r_-$
\begin{equation*}
e^{-2im \overline{r}} e^{2i \omega r^*} = \underbrace{(r_+ - r_-)^{\frac{-4iMr_-}{r_+ - r_-}(\omega - \omega_- m)} e^{2iF_-(r) \cdot (\omega - \omega_-m)} e^{2im \phi_-(r)}}_{=: \mathcal{D}_-(r; m, \omega)} \cdot \underbrace{x^{\frac{-4iMr_-}{r_+ - r_-}(\omega - \omega_-m)}}_{= x^{-(1 + \gamma^{[+2]})}} \;.
\end{equation*}
Note that $\mathcal{D}_\pm$ is regular at $r= r_\pm$ and that we have $|\mathcal{D}_{\pm}| = 1$. We also recall that $\Delta = (r_+ - r_-)^2 (x-1)x = (r_+ - r_-)^2 (y-1)y$.

By \eqref{EqStoMinusS}
\begin{equation*}
\begin{split}
\Delta^2 e^{-2im \overline{r}} &e^{2 i \omega r^*} \overline{\frac{d^4}{dx^4} B_{\CH_l, ml}^{[+2]} } \\
&= (r_+ - r_-)^4 (x-1)^2x^2 \mathcal{D}_-\big(r(x); m, \omega\big) x^{-(1 + \gamma^{[+2]})} \cdot \sum_{j=0}^\infty \prod_{k=1}^4(j+k) \overline{c_{j+4}^{[+2]}(\omega, m, l)} x^j
\end{split}
\end{equation*}
is a solution of \eqref{EqSepTeukEqX} with $s = +2$. Comparing asymptotics we find
\begin{equation}\label{EqBL}
\Delta^2 e^{-2im \overline{r}} e^{2 i \omega r^*} \overline{\frac{d^4}{dx^4} B_{\CH_l, ml}^{[+2]} } = (r_+ - r_-)^4 \mathcal{D}_-\big(r_-; m, \omega\big) 4!\cdot  \overline{c_4^{[+2]}(\omega, m, l)} \cdot B^{[+2]}_{\CH_r, ml}(x; \omega) \;.
\end{equation}
Similarly we have
\begin{equation}
\label{EqAHRY}
\Delta^2 e^{-2im \overline{r}} e^{2 i \omega r^*} \overline{\frac{d^4}{dy^4} A_{\Hp_r, ml}^{[+2]} } = (r_+ - r_-)^4 \mathcal{D}_+\big(r_+; m, \omega\big) 4! \cdot \overline{a_4^{[+2]}(\omega, m, l)} \cdot A^{[+2]}_{\Hp_l, ml}(y; \omega) \;.
\end{equation}
Again by \eqref{EqStoMinusS}
\begin{equation*}
\begin{split}
\Delta^2 e^{-2im \overline{r}} &e^{2 i \omega r^*} \overline{\frac{d^4}{dx^4} B_{\CH_r, ml}^{[+2]} } \\
&= (r_+ - r_-)^4 (x-1)^2x^2 \mathcal{D}_-\big(r(x); m, \omega\big) x^{-(1 + \gamma^{[+2]})} \cdot \sum_{j=0}^\infty \overline{\prod_{k=-2}^1(j+k-\gamma^{[+2]})) d_{j}^{[+2]}(\omega, m, l) x^{j-3-\gamma^{[+2]}}}
\end{split}
\end{equation*}
is a solution of \eqref{EqSepTeukEqX} with $s = +2$. Noting that $\overline{\gamma^{[+2]}} = - \gamma^{[+2]} -2$ and comparing asymptotics we find
\begin{equation}\label{EqBR}
\Delta^2 e^{-2im \overline{r}} e^{2 i \omega r^*} \overline{\frac{d^4}{dx^4} B_{\CH_r, ml}^{[+2]} } = (r_+ - r_-)^4 \mathcal{D}_-(r_-; m, \omega) \overline{\prod_{k=-2}^1 (k - \gamma^{[+2]})} B^{[+2]}_{\CH_l, ml}(x; \omega) \;.
\end{equation}
We now apply \eqref{EqWronskianX} to $A_{\Hp_r, ml}^{[+2]}(x; \omega)$ and $\Delta^2 e^{-2im \overline{r}} e^{2 i \omega r^*} \overline{\frac{d^4}{dy^4} A_{\Hp_r, ml}^{[+2]} }$. 
Using again \eqref{EqPhiDeltaP} and \eqref{EqAHRY} we obtain
\begin{equation}
\label{EqFirstHalf}
\begin{split}
\mathrm{const} &= \frac{1}{\Delta} e^{2im \overline{r}} e^{-2i \omega r^*} \frac{1}{r_+ - r_-} W_x\big( A^{[+2]}_{\Hp_r, ml}, \Delta^2 e^{-2im \overline{r}} e^{2 i \omega r^*} \overline{\frac{d^4}{dy^4} A_{\Hp_r, ml}^{[+2]} }\big) \\
&= \frac{1}{(r_+ - r_-)^3} \frac{1}{(y-1)y} \overline{\mathcal{D}_+(r; m, \omega)} y^{(\tilde{\gamma}^{[+2]}+1)} (r_+-r_-)^4 \mathcal{D}_+(r_+;m, \omega) 4! \cdot \overline{a_4^{[+2]}(\omega, m, l)} W_x(A^{[+2]}_{\Hp_r, ml}, A^{[+2]}_{\Hp_l, ml}) \\
&\to -(r_+ - r_-) 4! \cdot \overline{a_4^{[+2]}(\omega, m, l)}(1 - \tilde{\gamma}^{[+2]}) \;,
\end{split}
\end{equation}
for $y \to 0$, where we have used
\begin{equation*}
\begin{split}
-y^{\tilde{\gamma}^{[+2]}} W_x(A^{[+2]}_{\Hp_r, ml}, A^{[+2]}_{\Hp_l, ml}) &= \underbrace{y^{\tilde{\gamma}^{[+2]}} \Big( \frac{d}{dy} A^{[+2]}_{\Hp_r, ml} \cdot A^{[+2]}_{\Hp_l, ml}}_{\to 0} - A^{[+2]}_{\Hp_r, ml} \underbrace{\frac{d}{dy} A^{[+2]}_{\Hp_l, ml}}_{\sim (1- \tilde{\gamma}^{[+2]}) y^{- \tilde{\gamma}^{[+2]}}} \Big) \\
&\to - (1 - \tilde{\gamma}^{[+2]})
\end{split}
\end{equation*}
for $y \to 0$.
We now evaluate \eqref{EqFirstHalf} for $x \to 0$. Note that it follows from differentiating $$A^{[+2]}_{\Hp_r, ml}(1-x; \omega) = \mathfrak{T}^{[+2]}_{\Hp_r, ml}(\omega)B^{[+2]}_{\CH_l, ml}(x; \omega) + \mathfrak{R}^{[+2]}_{\Hp_r}(\omega) B^{[+2]}_{\CH_r, ml}(x; \omega)$$  and from \eqref{EqBL}, \eqref{EqBR} that
\begin{equation*}
\begin{split}
\Delta^2 e^{-2im \overline{r}} e^{2 i \omega r^*} \overline{\frac{d^4}{dx^4} A_{\Hp_r, ml}^{[+2]} } = &\overline{\mathfrak{T}^{[+2]}_{\Hp_r, ml}(\omega)} (r_+ - r_-)^4 \mathcal{D}_-(r_-;m, \omega) 4! \cdot \overline{c_4^{[+2]}(\omega, m,l)} B^{[+2]}_{\CH_r, ml}(x; \omega) \\
&+ \overline{ \mathfrak{R}^{[+2]}_{\Hp_r}(\omega)} (r_+ - r_-)^4 \mathcal{D}_-(r_-;m, \omega) \overline{\prod_{k=-2}^1(k - \gamma^{[+2]})} B^{[+2]}_{\CH_l, ml}(x; \omega) \;.
\end{split}
\end{equation*}
Hence, the constant from \eqref{EqFirstHalf} is also given by
\begin{equation}
\label{EqSecHalf}
\begin{split}
\mathrm{const} &= \frac{1}{\Delta} e^{2im \overline{r}} e^{-2i \omega r^*} \frac{1}{r_+ - r_-} W_x\Big( \mathfrak{T}^{[+2]}_{\Hp_r, ml}(\omega)B^{[+2]}_{\CH_l, ml}+ \mathfrak{R}^{[+2]}_{\Hp_r}(\omega) B^{[+2]}_{\CH_r, ml}, \\
&\qquad \quad  (r_+ - r_-)^4 \mathcal{D}_-(r_-; m, \omega) \big[ \overline{\mathfrak{T}^{[+2]}_{\Hp_r, ml}(\omega)} 4! \cdot \overline{c_4^{[+2]}(\omega, m,l)} B^{[+2]}_{\CH_r, ml} + \overline{ \mathfrak{R}^{[+2]}_{\Hp_r}(\omega)} \overline{\prod_{k=-2}^1(k - \gamma^{[+2]})} B^{[+2]}_{\CH_l, ml}\big] \Big) \\
&= \frac{r_+ - r_-}{x-1} \overline{\mathcal{D}_-(r; m, \omega)} \mathcal{D}_-(r_-; m, \omega) x^{\gamma^{[+2]}} W_x \big(B^{[+2]}_{\CH_l, ml}, B^{[+2]}_{\CH_r, ml}\big) \\
&\quad \cdot \Big( 4! \cdot \overline{c_4^{[+2]}(\omega, m,l)}  |\mathfrak{T}^{[+2]}_{\Hp_r, ml}(\omega)|^2 - \overline{\prod_{k=-2}^1(k - \gamma^{[+2]})}  | \mathfrak{R}^{[+2]}_{\Hp_r}(\omega)|^2 \Big) \\
&\to (r_+ - r_-) (1 - \gamma^{[+2]}) \Big( 4! \cdot \overline{c_4^{[+2]}(\omega, m,l)}  |\mathfrak{T}^{[+2]}_{\Hp_r, ml}(\omega)|^2 - \overline{\prod_{k=-2}^1(k - \gamma^{[+2]})}  | \mathfrak{R}^{[+2]}_{\Hp_r}(\omega)|^2 \Big) \;,
\end{split}
\end{equation}
for $x \to 0$, where we have used
\begin{equation*}
\begin{split}
x^{\gamma^{[+2]}} W_x \big(B^{[+2]}_{\CH_l, ml}, B^{[+2]}_{\CH_r, ml}\big) &= \underbrace{x^{\gamma^{[+2]}}  \Big( \frac{d}{dx} B^{[+2]}_{\CH_l, ml} \cdot B^{[+2]}_{\CH_r, ml}}_{\to 0} -   B^{[+2]}_{\CH_l, ml} \cdot \underbrace{\frac{d}{dx} B^{[+2]}_{\CH_r, ml}}_{\sim (1- \gamma^{[+2]}) x^{-\gamma^{[+2]}}} \Big) \\
&\to -(1 - \gamma^{[+2]}) 
\end{split}
\end{equation*}
as $x \to 0$.
From \eqref{EqFirstHalf} and \eqref{EqSecHalf} we now obtain the conservation law
\begin{equation}
\label{EqPrelimConsLaw}
-4! \cdot \overline{a_4^{[+2]}(\omega, m, l)}(1 - \tilde{\gamma}^{[+2]}) = (1 - \gamma^{[+2]}) \cdot \Big( 4! \cdot \overline{c_4^{[+2]}(\omega, m,l)}  |\mathfrak{T}^{[+2]}_{\Hp_r, ml}(\omega)|^2 - \overline{\prod_{k=-2}^1(k - \gamma^{[+2]})}  | \mathfrak{R}^{[+2]}_{\Hp_r, ml}(\omega)|^2 \Big) 
\end{equation}
which is valid for $\omega \neq \omega_+m, \omega_-m$. From now on again we will drop the superscript $s=+2$.

We evaluate the coefficients next.
We set $\xi:= \frac{4iMr_-}{r_+ -r_-}(\omega - \omega_-m)$ and $\tilde{\xi} := - \frac{4iMr_+}{r_+ - r_-}(\omega - \omega_+m)$. Let us agree that $\xi(0)$ and $\tilde{\xi}(0)$ refer to $\xi(\omega = 0) = -\frac{2iam}{r_+ -r_-} = -\tilde{\xi}(\omega = 0)$. Then $\tilde{\gamma} = -1 + \tilde{\xi}$ and $\gamma = -1 + \xi$. We observe that
\begin{equation}\label{EqProd}
\overline{\prod_{k=-2}^1(k - \gamma)} = \overline{(-1 - \xi)(-\xi)(1-\xi)(2-\xi)}= \overline{\xi(2-\xi) |1+\xi|^2} = -\xi(2+\xi)|1+\xi|^2\;.
\end{equation}
The recursion relation \eqref{EqRecTeukXC} gives
\begin{equation}\label{EqRec22}
\begin{aligned}
c_2 &= \frac{1}{2(1 + \gamma)} \Big[ \frac{\varepsilon}{\gamma}(\beta + \varepsilon) - \delta\Big] \\
c_3 &= \frac{1}{3(2 + \gamma)} \Big[ c_2\big(2(1 - \beta) - \varepsilon\big) + c_1(-\alpha - \delta)\Big] \\
c_4 &= \frac{1}{4(3 + \gamma)} \Big[ c_3\big(3(2- \beta) - \varepsilon\big) + c_2(- 2 \alpha - \delta) \Big]\;,
\end{aligned}
\end{equation}
and similarly for the $a_i$, where all parameters are replaced by their tilded analogues.

\textbf{We now proceed by setting $m=0$. However, see Remark \ref{RemMneq00} for $m \neq 0$.} For $m=0$ \eqref{EqPrelimConsLaw} is valid for $\omega \neq 0$. We will show that if we multiply by $\omega$ then both sides extend analytically to $\omega = 0$.

From \eqref{EqRec22} and \eqref{EqParaOmega0} we obtain successively 
\begin{equation*}
\begin{aligned}
\lim_{\omega \to 0} (1 + \gamma) c_2 &= - \frac{1}{2} \big[ \varepsilon(0) \big(2 + \varepsilon(0)\big)\big] \\
\lim_{\omega \to 0} (1 + \gamma) c_3 &= \frac{1}{6} \big[ \varepsilon(0) \big(2 + \varepsilon(0)\big)^2\big] \\
\lim_{\omega \to 0} (1 + \gamma) c_4 &= -\frac{1}{48} \big(\varepsilon(0)\big)^2 \big(2 + \varepsilon(0)\big)^2\;,
\end{aligned}
\end{equation*}
where $\varepsilon(0) = (l-2)(l+3) +4 >0$. Thus
\begin{equation}\label{Eq111}
\begin{split}
\lim_{\omega \to 0} \omega \overline{c_4(\omega, 0, l)} &= \lim_{\omega \to 0} \frac{r_+ - r_-}{4iM r_-} \xi \overline{c_4(\omega, 0, l)} = - \lim_{\omega \to 0} \frac{r_+ - r_-}{4iMr_-} \overline{\xi c_4(\omega, 0, l)} = - \lim_{\omega \to 0} \frac{r_+ - r_-}{4iMr_-} \overline{(1 + \gamma) c_4(\omega, 0, l)} \\
&= \frac{r_+ - r_-}{4iMr_-} \frac{1}{48} \big(\varepsilon(0)\big)^2 \big(2 + \varepsilon(0)\big)^2
\end{split}
\end{equation}
and similarly
\begin{equation}\label{Eq222}
\lim_{\omega \to 0} \omega \overline{a_4(\omega, 0, l)} =  \lim_{\omega \to 0} - \frac{r_+ - r_-}{4iMr_+} \tilde{\xi} \overline{a_4(\omega, 0, l)} = - \frac{r_+ - r_-}{4iMr_+} \frac{1}{48} \big(\varepsilon(0)\big)^2 \big( 2 + \varepsilon(0)\big)^2 \;.
\end{equation}
Multiplying \eqref{EqPrelimConsLaw} by $\omega$ and using \eqref{EqProd} we get
\begin{equation}\label{Eqhjk}
-4! \cdot \omega \overline{a_4(\omega, 0, l)}(1 - \tilde{\gamma}) = (1 - \gamma) \cdot \Big( 4! \cdot \omega \overline{c_4(\omega, 0,l)}  |\mathfrak{T}_{\Hp_r, 0l}(\omega)|^2 + \frac{4iMr_-}{r_+ - r_-} (2 + \xi)|1+\xi|^2 |\omega \mathfrak{R}_{\Hp_r, 0l}(\omega)|^2 \Big) \;.
\end{equation}
By Proposition \ref{PropTRM0} $\mathfrak{T}_{\Hp_r, 0l}(\omega)$ and $\omega \mathfrak{R}_{\Hp_r, 0l}(\omega)$ extend analytically to $\omega =0$. By the above, $\omega \cdot c_4(\omega, 0, l)$ and $\omega \cdot a_4(\omega, 0, l)$ do as well. We may thus take the limit $\omega \to 0$ in \eqref{Eqhjk} to obtain
\begin{equation*}
\frac{r_+ - r_-}{4Mr_+} \big(\varepsilon(0)\big)^2 \big(2 + \varepsilon(0)\big)^2 = \frac{r_+ - r_-}{4Mr_-} \big(\varepsilon(0)\big)^2\big(2 + \varepsilon(0)\big)^2 |\mathfrak{T}_{\Hp_r, 0l}(0)|^2 - \frac{16 M r_-}{r_+ -r_-} | \lim_{\omega \to 0} \omega \mathfrak{R}_{\Hp_r, 0l}(\omega)|^2
\end{equation*}
where we used \eqref{Eq111} and \eqref{Eq222}. Brining the last term over to the left hand side this in particular implies the following

\begin{proposition}
We have $\mathfrak{T}^{[s]}_{\Hp_r, 0l}(0) \neq 0$.
\end{proposition}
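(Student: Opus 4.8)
The plan is to extract the non-vanishing of $\mathfrak{T}^{[s]}_{\Hp_r, 0l}(0)$ directly from the conservation law \eqref{Eqhjk}, which has just been derived in the limit $\omega \to 0$. Concretely, rearranging the limiting identity obtained above, one finds
\begin{equation*}
\frac{r_+ - r_-}{4Mr_-} \big(\varepsilon(0)\big)^2 \big(2 + \varepsilon(0)\big)^2 |\mathfrak{T}_{\Hp_r, 0l}(0)|^2 = \frac{r_+ - r_-}{4Mr_+} \big(\varepsilon(0)\big)^2 \big(2 + \varepsilon(0)\big)^2 + \frac{16 M r_-}{r_+ -r_-} \big| \lim_{\omega \to 0} \omega \mathfrak{R}_{\Hp_r, 0l}(\omega)\big|^2 \;.
\end{equation*}
The right-hand side is a sum of two manifestly non-negative terms, and the first of them is \emph{strictly positive}: indeed $r_+ > r_- > 0$ in the sub-extremal range $0 < |a| < M$, and $\varepsilon(0) = (l-2)(l+3) + 4 > 0$ for all $l \geq 2$ (this was already recorded below \eqref{EqU2}). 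Hence the right-hand side is bounded below by a strictly positive quantity independent of the reflection coefficient, which forces $|\mathfrak{T}_{\Hp_r, 0l}(0)|^2 > 0$, and therefore $\mathfrak{T}^{[s]}_{\Hp_r, 0l}(0) \neq 0$.

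The key point making this argument rigorous is that every object appearing in the limit is under control. By Proposition \ref{PropTRM0} both $\mathfrak{T}_{\Hp_r, 0l}(\omega)$ and $\omega \cdot \mathfrak{R}_{\Hp_r, 0l}(\omega)$ extend analytically to $\omega = 0$, so the limits $\mathfrak{T}_{\Hp_r, 0l}(0)$ and $\lim_{\omega \to 0} \omega \mathfrak{R}_{\Hp_r, 0l}(\omega)$ exist; and the limits $\lim_{\omega \to 0} \omega \overline{c_4(\omega,0,l)}$ and $\lim_{\omega \to 0} \omega \overline{a_4(\omega,0,l)}$ were just computed in \eqref{Eq111} and \eqref{Eq222} via the recursion \eqref{EqRec22} and the $\omega = 0$ parameter values \eqref{EqParaOmega0}. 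Thus one may pass to the limit $\omega \to 0$ in \eqref{Eqhjk} term by term, as the excerpt does, and the resulting identity is an equation between genuine complex (indeed, after taking moduli, real) numbers.

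I expect the main obstacle to be entirely upstream of this short final deduction: the real work is in establishing the conservation law \eqref{EqPrelimConsLaw} itself and then justifying the multiplication by $\omega$ and the subsequent limit. That in turn rests on the Teukolsky--Starobinsky machinery — producing, from an $s=+2$ solution $v^{[+2]}$, the $s=-2$ solution $\frac{d^4}{dx^4} v^{[+2]}$ via the parameter relations, matching the resulting solutions against the Frobenius bases at $r_\pm$ (equations \eqref{EqBL}, \eqref{EqAHRY}, \eqref{EqBR}), and evaluating the conserved Wronskian \eqref{EqWronskianX} at both regular singular points $x \to 0$ and $y \to 0$. Once \eqref{Eqhjk} is in hand, the non-vanishing of $\mathfrak{T}_{\Hp_r, 0l}(0)$ is a one-line positivity argument; the content is that the reflection term enters \eqref{Eqhjk} with the \emph{same} sign as the strictly positive horizon term, so it cannot conspire to cancel the transmission contribution. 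I would therefore present the proof simply as: rearrange the limiting form of \eqref{Eqhjk}, observe that both terms on the right are $\geq 0$ with the first strictly positive by sub-extremality and $\varepsilon(0) > 0$, and conclude $\mathfrak{T}^{[s]}_{\Hp_r, 0l}(0) \neq 0$.
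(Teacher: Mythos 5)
Your proposal is correct and coincides with the paper's own argument: the paper likewise passes to the limit $\omega \to 0$ in \eqref{Eqhjk} using Proposition \ref{PropTRM0}, \eqref{Eq111}, and \eqref{Eq222}, and then ``brings the last term over to the left hand side'' to obtain precisely your rearranged identity, from which $|\mathfrak{T}_{\Hp_r,0l}(0)|^2>0$ follows since $r_+>r_->0$ and $\varepsilon(0)=(l-2)(l+3)+4>0$. The only cosmetic difference is the citation for $\varepsilon(0)>0$ (it is recorded just below \eqref{EqRec22}, not below \eqref{EqU2}), which does not affect the argument.
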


We conclude this section with the following

\begin{remark} \label{RemMneq00}
For $m \neq 0$ the conservation law \eqref{EqPrelimConsLaw} may be evaluated directly at $\omega = 0$. A direct computation using \eqref{EqRec22} gives $\overline{c_4(0,m,l)} = \frac{[\varepsilon(0)]^2 [2+ \varepsilon(0)]^2}{24} \frac{1}{|1 + \xi|^2 (2 - \xi) \xi}$ and $\overline{a_4(0,m,l) } = \frac{[\varepsilon(0)]^2 [2+ \varepsilon(0)]^2}{24} \frac{1}{|1 + \tilde{\xi}|^2 (2 - \tilde{\xi}) \tilde{\xi}}$. Plugging those values into \eqref{EqPrelimConsLaw}, together with \eqref{EqProd}, gives
\begin{equation*}
\begin{split}
[\varepsilon(0)]^2 [2+ \varepsilon(0)]^2 &\frac{r_+-r_-}{2am \cdot |1+\tilde{\xi}(0)|^2} + \frac{2am}{r_+ - r_-} |2+\xi(0)|^2 |1+\xi(0)|^2 | \mathfrak{R}_{\Hp_r, ml}(0)|^2 \\
&= [\varepsilon(0)]^2 [2+ \varepsilon(0)]^2 \frac{r_+-r_-}{2am \cdot |1+\xi(0)|^2} |\mathfrak{T}_{\Hp_r, ml}(0)|^2 \;.
\end{split}
\end{equation*}
This would have been another way of showing that $\mathfrak{T}_{\Hp_r, ml}(0) \neq 0$ for $m \neq 0$. However, the approach taken in Section \ref{SecMneq0} is more direct. Note that if we use the additional information that $\mathfrak{R}_{\Hp_r, ml}(0) = 0$, which was shown in Section \ref{SecMneq0}, then we recover that $|\mathfrak{T}_{\Hp_r, ml}(0)| = 1$, which is of course compatible with Remark \ref{RemExplicitT}.
\end{remark}

\section{Determination of the coefficients $a_{\Hp_l, ml}(\omega)$ and $a_{\Hp_r, ml}(\omega)$ in terms of the initial data on $\Hp_l$ and $\Hp_r$}
\label{SecDetA}

We will replace in this section the $r$-coordinate by the $y$-coordinate for convenience. Recall that $y = \frac{r_+ - r}{r_+ - r_-}$.
Also recall from Theorem \ref{ThmSeparationVariables} and \eqref{EqDefAA} the representation 
\begin{equation}\label{EqRecallExp}
\psi(v_+,y, \theta, \varphi_+) = \frac{1}{\sqrt{2\pi}} \int_\R \sum_{m,l} \big[\underbrace{ a_{\Hp_r, ml}(\omega) A_{\Hp_r, ml}(y; \omega) + a_{\Hp_l, ml}(\omega) A_{\Hp_l, ml}(y; \omega) }_{=\widecheck{\psi}_{ml}(y; \omega)} \big]  Y^{[s]}_{ml}(\theta, \varphi_+; \omega) e^{-i\omega v_+} \, d\omega \;.
\end{equation}
Note that while we know that for example $A_{\Hp_r, ml}(y; \omega)$ has a pole at $\omega = \omega_+ m$, we know that the terms in the linear combination conspire so that the total underbraced term is more regular, in particular continuous  in $\omega$ for $y \in (0,1)$.
In this section we will relate the coefficients $a_{\Hp_r, ml}(\omega)$ and $a_{\Hp_l, ml}(\omega)$ to the initial data on $\Hp_r$ and $\Hp_l$  (at least in a neighbourhood of $\omega = 0$).

\subsection{Passing to the limit $r \to r_+$ in $(v_+,r, \theta, \varphi_+)$ coordinates}
\label{SecPassLimitHPR}

We begin with the following

\begin{lemma} \label{LemPsiFTC1}
Under the assumptions from Section \ref{SecAssumptions} we have $\widecheck{(\psi|_{\Hp_r})}_{ml}(\omega) \in C^0(\R, \C)$.
\end{lemma}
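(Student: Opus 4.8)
The plan is to reduce the statement to the mapping property $\widecheck{(\cdot)} : L^1_{v_+} L^2_{\Sp^2} \to C^0_\omega L^2_{\Sp^2}$ recorded in Section \ref{SecTeukExp}, together with the continuous ($\omega$-)dependence of the spin $2$-weighted spheroidal harmonics from Proposition \ref{PropEigenfunctionsSWL}. The first thing I would establish is that the trace $\psi|_{\Hp_r}$, which is a well-defined smooth spin $2$-weighted function by Assumption \ref{AssumptionReg}, lies in $L^1_{v_+} L^2_{\Sp^2}$. I split the $v_+$-integral into $(-\infty,-1]$, $[-1,1]$, and $[1,\infty)$. On $[1,\infty)$ I use the $a=b=c=0$, $i_1=i_2=i_3=j=0$ instance of \eqref{AssumpDecayRHp}, namely $\int_{\Hp_r \cap \{v_+ \geq 1\}} v_+^{q_r} |\psi|^2 \, \vols dv_+ < \infty$ with $q_r > 2$, combined with Cauchy--Schwarz:
\begin{equation*}
\int_1^\infty \|\psi|_{\Hp_r}(v_+)\|_{L^2(\Sp^2)} \, dv_+ \leq \Big( \int_1^\infty v_+^{-q_r} \, dv_+ \Big)^{\frac12} \Big( \int_1^\infty \int_{\Sp^2} v_+^{q_r} |\psi|^2 \, \vols dv_+ \Big)^{\frac12} < \infty \;,
\end{equation*}
where the first factor is finite precisely because $q_r > 2 > 1$. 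On the compact interval $[-1,1]$ the trace is bounded by smoothness. On $(-\infty,-1]$ I use the second bullet of Assumption \ref{AssumptionReg}: since $\frac{1}{(V^+_{r_+})^2}\psi$ is a smooth (hence locally bounded) spin $2$-weighted function on $\Mun$, we have $|\psi|_{\Hp_r}| \lesssim (V^+_{r_+})^2 = e^{2\kappa_+ v_+}$ in a neighbourhood of $\Sp^2_b$, which yields $\|\psi|_{\Hp_r}(v_+)\|_{L^2(\Sp^2)} \lesssim e^{2\kappa_+ v_+}$ and thus exponential decay (and integrability) as $v_+ \to -\infty$, since $\kappa_+ > 0$.

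Having $\psi|_{\Hp_r} \in L^1_{v_+} L^2_{\Sp^2}$, the mapping property of Section \ref{SecTeukExp} gives that the $v_+$-Fourier transform $\omega \mapsto \widecheck{\psi|_{\Hp_r}}(\,\cdot\,;\omega)$ is a continuous map $\R \to L^2(\Sp^2)$. By definition \eqref{EqDefDecompositionSphere}, the quantity to control is the $L^2(\Sp^2)$-pairing
\begin{equation*}
\widecheck{(\psi|_{\Hp_r})}_{ml}(\omega) = \big\langle \widecheck{\psi|_{\Hp_r}}(\,\cdot\,;\omega), Y^{[s]}_{ml}(\,\cdot\,;\omega) \big\rangle_{L^2(\Sp^2)} \;.
\end{equation*}
The final step is then to observe that both arguments depend continuously on $\omega$: the first by the preceding paragraph, and the second because Proposition \ref{PropEigenfunctionsSWL} gives that $\omega \mapsto Y^{[s]}_{ml}(\,\cdot\,;\omega)$ is analytic (in particular continuous) as an $L^2(\Sp^2)$-valued map, with $\|Y^{[s]}_{ml}(\,\cdot\,;\omega)\|_{L^2(\Sp^2)} = 1$ for every $\omega$. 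Joint continuity of the Hermitian pairing on $L^2(\Sp^2)$ (a two-line Cauchy--Schwarz estimate, using that $\|Y^{[s]}_{ml}(\,\cdot\,;\omega_n)\|_{L^2} = 1$ stays bounded along any convergent sequence $\omega_n \to \omega$) then yields $\widecheck{(\psi|_{\Hp_r})}_{ml} \in C^0(\R,\C)$.

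I do not expect a genuine obstacle here; the only point requiring care is that, unlike the plain Fourier transform, the projection is onto $\omega$-dependent basis functions, so continuity in $\omega$ cannot be read off directly and must be fed through the regularity of $Y^{[s]}_{ml}(\,\cdot\,;\omega)$ supplied by Proposition \ref{PropEigenfunctionsSWL}. The $L^1$-integrability in the left tail $v_+ \to -\infty$ is the one place where Assumption \ref{AssumptionReg} (rather than the polynomial-weight assumptions) is essential, and it is worth stating explicitly that the exponential decay there makes the tail harmless.
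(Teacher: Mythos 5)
Your proof is correct and takes essentially the same route as the paper's: both arguments establish integrability of the trace in $v_+$ (exponential decay towards the bifurcation sphere from Assumption \ref{AssumptionReg}, and the weighted bound \eqref{AssumpDecayRHp} with $q_r>1$ for $v_+\to+\infty$) and then deduce continuity in $\omega$ from the $\omega$-regularity of the spin-weighted spheroidal harmonics. The only difference is organizational: the paper applies dominated convergence directly to the double integral, using the $L^\infty$-boundedness and $\omega$-continuity of $S^{[s]}_{ml}(\cos\theta;\omega)$, whereas you factor through the mapping property $\widecheck{(\cdot)}:L^1_{v_+}L^2_{\Sp^2}\to C^0_\omega L^2_{\Sp^2}$ and the $L^2(\Sp^2)$-pairing with $Y^{[s]}_{ml}(\cdot;\omega)$ --- a repackaging of the same estimates.
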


\begin{proof}
Recall that 
\begin{equation}
\label{EqDiffOmegaIntegral}
\widecheck{(\psi|_{\Hp_r})}_{ml}(\omega) = \frac{1}{\sqrt{2\pi}} \int_{\Sp^2} \int_{\R} \psi|_{\Hp_r}(v_+, \theta, \varphi_+) e^{i \omega v_+} S_{ml}^{[s]}(\cos \theta; \omega) e^{-im \varphi_+} \, dv_+ \vols \;.
\end{equation}
It follows from the exponential decay of $\psi|_{\Hp_r}$ in $v_+$ for $v_+ \to - \infty$ (by Assumption \ref{AssumptionReg}) together with \eqref{AssumpDecayRHp} and $q_r$ being in particular bigger than $1$ that
\begin{equation*}
\begin{split}
\int\limits_{\R}\int\limits_{\Sp^2} \big| &\psi|_{\Hp_r}(v_+, \theta, \varphi_+)\big| \, \vols  dv_+ \leq 2 \sqrt{\pi} \int\limits_{\R} \Big( \int\limits_{\Sp^2} \big| \psi|_{\Hp_r}(v_+, \theta, \varphi_+)\big|^2 \, \vols \Big)^{\nicefrac{1}{2}} \, dv_+ \\
&\leq 2 \sqrt{\pi} \Big( \int\limits_{\R} \frac{1}{(1+|v_+|)^{q_r}} \, dv_+ \Big)^{\nicefrac{1}{2}} \cdot \Big( \int\limits_{\R} \int\limits_{\Sp^2} (1+|v_+|)^{q_r} \big|\psi|_{\Hp_r}(v_+, \theta, \varphi_+)\big|^2 \, \vols dv_+ \Big)^{\nicefrac{1}{2}} < \infty \;.
\end{split}
\end{equation*}
Together with the boundedness of $S_{ml}^{[s]}(\cos \theta; \omega)$ and its continuous dependence on $\omega$ the result now follows from dominated convergence.
\end{proof}

\begin{proposition}\label{PropAHP}
Let $\psi$ satisfy the assumptions from Section \ref{SecAssumptions}. We then have $a_{\Hp_r, ml}(\omega) = (\widecheck{\psi|_{\Hp_r}})_{ml}(\omega)$ for all $\omega \neq \omega_+m$. In particular $a_{\Hp_r, ml}$ extends as a $C^0$ function to $\omega = \omega_+m$.
\end{proposition}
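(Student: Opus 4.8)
The plan is to establish the identity $a_{\Hp_r, ml}(\omega) = \widecheck{(\psi|_{\Hp_r})}_{ml}(\omega)$ by passing to the limit $r \to r_+$ (equivalently $y \to 0$) in the separated representation \eqref{EqRecallExp}, keeping $v_+$ fixed. The key structural fact is that in the $(v_+, r, \theta, \varphi_+)$ coordinate system the boundary $\Hp_r$ is at $y = 0$, and the two radial branches have the asymptotics $A_{\Hp_r, ml}(y;\omega) \to 1$ and $A_{\Hp_l, ml}(y;\omega) = y^{1-\tilde\gamma}(\ldots) \to 0$ as $y \to 0$ (since for $s = +2$ the exponent is $1 - \tilde\gamma = 2 - \tilde\xi$ with positive real part $2$). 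Thus formally the $A_{\Hp_l, ml}$ branch drops out in the limit, leaving $\widecheck\psi_{ml}(y;\omega) \to a_{\Hp_r, ml}(\omega)$, and on the physical-space side $\psi(v_+, y, \cdot) \to \psi|_{\Hp_r}(v_+, \cdot)$, whose Teukolsky transform is $\widecheck{(\psi|_{\Hp_r})}_{ml}(\omega)$. The task is to make this rigorous in an appropriate functional-analytic sense.

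First I would use Corollary \ref{CorLimitHP}, specifically \eqref{EqCorL2Limit}, which gives $\psi(v_+, r, \theta, \varphi_+) \to \psi(v_+, r_+, \theta, \varphi_+)$ in $L^2_{v_+}L^2_{\Sp^2}$ as $r \to r_+$. Applying the Teukolsky transform $\widecheck{(\cdot)}_{ml}$, which is an isometry $L^2_{v_+}L^2_{\Sp^2} \to L^2_\omega \ell^2_{m,l}$ by \eqref{EqPlancherel}, I obtain that $\widecheck\psi_{ml}(r; \cdot) \to \widecheck{(\psi|_{\Hp_r})}_{ml}(\cdot)$ in $L^2_\omega \ell^2_{m,l}$. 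Hence along a subsequence $r_n \to r_+$ we get convergence for almost every $\omega$. On the other hand, from \eqref{EqDefAA} the quantity $\widecheck\psi_{ml}(y;\omega) = a_{\Hp_r,ml}(\omega) A_{\Hp_r,ml}(y;\omega) + a_{\Hp_l,ml}(\omega) A_{\Hp_l,ml}(y;\omega)$ converges pointwise in $\omega$ (for $\omega \neq \omega_+ m$, where the coefficients and the branch asymptotics are individually well-defined) to $a_{\Hp_r,ml}(\omega)$, using $A_{\Hp_r,ml}(0;\omega) = 1$ and $A_{\Hp_l,ml}(0;\omega) = 0$ from Proposition \ref{PropFrobSolutions}. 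Matching the two limits gives $a_{\Hp_r,ml}(\omega) = \widecheck{(\psi|_{\Hp_r})}_{ml}(\omega)$ for almost every $\omega \neq \omega_+ m$; since both sides are continuous on $\R \setminus \{\omega_+ m\}$ (the left by Lemma \ref{LemRegAA}, the right by Lemma \ref{LemPsiFTC1}), the identity holds for every $\omega \neq \omega_+ m$. The final sentence of the proposition, that $a_{\Hp_r,ml}$ extends continuously to $\omega = \omega_+ m$, then follows immediately, because the right-hand side $\widecheck{(\psi|_{\Hp_r})}_{ml}$ is continuous on all of $\R$ by Lemma \ref{LemPsiFTC1}.

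The main obstacle I anticipate is the justification of the interchange of the $y \to 0$ limit with the representation, done carefully enough to go from the $L^2_{v_+}L^2_{\Sp^2}$ convergence to a genuine statement about the coefficient functions $a_{\Hp_r,ml}(\omega)$. The subtlety is that the decomposition \eqref{EqDefAA} is into two individually singular branches near $\omega = \omega_+ m$, so the pointwise-in-$\omega$ argument for the radial side must be confined to $\omega \neq \omega_+ m$, where both branches and coefficients are finite; one cannot naively take the limit at $\omega = \omega_+ m$ at the level of the separate summands. The resolution is exactly the two-step matching described above: do the rigorous limit in $L^2$ via the isometry to identify the limit object, extract an a.e.\ pointwise convergent subsequence, compare with the elementary branch asymptotics away from $\omega_+ m$, and then invoke continuity of both sides (Lemmas \ref{LemRegAA} and \ref{LemPsiFTC1}) to upgrade from ``almost every'' to ``every'' and to obtain the continuous extension across $\omega_+ m$. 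A minor point to verify along the way is that $\mathrm{Re}(1 - \tilde\gamma) = 2 > 0$ for $s = +2$, guaranteeing $A_{\Hp_l,ml}(y;\omega) \to 0$; this is where the sign of $s$ and the choice of regular frame at $\Hp_r$ enter.
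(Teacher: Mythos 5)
Your proposal is correct and follows essentially the same route as the paper's proof: $L^2$ convergence from Corollary \ref{CorLimitHP}, Plancherel to pass to the Teukolsky transforms, extraction of an a.e.\ convergent subsequence, matching against the Frobenius-solution asymptotics away from $\omega = \omega_+ m$, and the continuity Lemmas \ref{LemRegAA} and \ref{LemPsiFTC1} to upgrade to every $\omega \neq \omega_+ m$ and obtain the continuous extension. Your additional check that $\mathrm{Re}(1-\tilde\gamma) = s = 2 > 0$ is a correct (and welcome) explicit justification of the vanishing of the $A_{\Hp_l, ml}$ branch, which the paper attributes simply to the normalisation of the Frobenius solutions.
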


\begin{proof}
By \eqref{EqCorL2Limit} of Corollary \ref{CorLimitHP} we have $\int_{\R \times \Sp^2}|\psi(v_+, y, \theta, \varphi_+) - \psi(v_+, 0, \theta, \varphi_+)|^2 \, \vols dv_+ \to 0$ for $y \to 0$. By Plancherel \eqref{EqPlancherel} this gives
$$\int_{\R} \sum_{m,l} |(\widecheck{\psi})_{ml}(y; \omega) - (\widecheck{\psi|_{\Hp_r}})_{ml}(\omega)|^2 \, d \omega \to 0 \quad \textnormal{ for } y \to 0 \;. $$
Fix $m$ and $l$. It now follows that there is a sequence $y_n \to 0$ with $\widecheck{\psi}_{ml}(y_n; \omega) \to (\widecheck{\psi|_{\Hp_r}})_{ml}(\omega)$ for almost every $\omega \in \R$. For $\omega \neq \omega_+ m$ we have
$$\widecheck{\psi}_{ml}(y_n; \omega) = a_{\Hp_r, ml}(\omega) A_{\Hp_r, ml}(y_n; \omega) + a_{\Hp_l, ml}(\omega) A_{\Hp_l, ml}(y_n; \omega) \to a_{\Hp_r, ml}(\omega)$$ as $y_n \to 0$ by the normalisation of the Frobenius solutions -- and thus we have $a_{\Hp_r, ml}(\omega) = (\widecheck{\psi|_{\Hp_r}})_{ml}(\omega)$ for a.e.\ $\omega \in \R \setminus \{\omega_+m\}$. Since both functions are continuous on $\R \setminus \{\omega_+m\}$ by Lemmas \ref{LemRegAA} and \ref{LemPsiFTC1} they agree everywhere.
\end{proof}

\begin{proposition} \label{PropFourierAssump}
The assumptions \eqref{AssumpP0}, \eqref{AssumpP0SphHarm}, and \eqref{AssumpP0Better} from Section \ref{SecAssumptions}, together with the regularity Assumption \ref{AssumptionReg}, imply that $\rd_\omega^q(\widecheck{\psi|_{\Hp_r}})_{m_0l_0} \in L^2_\omega([-2,2])$ for any $0 \leq q < p_0$, $q \in \N_0$ and $\rd_\omega^{p_0} (\widecheck{\psi|_{\Hp_r}})_{m_0l_0}(\omega) \notin L^2_{\omega}(-\varepsilon, \varepsilon)$ for any $\varepsilon>0$.
\end{proposition}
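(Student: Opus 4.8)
The plan is to route everything through the $v_+$-Fourier transform of the \emph{spherical}-harmonic projection of the data, where the weight-to-derivative dictionary \eqref{EqRelationsFTWeightDerivative} applies cleanly (the spherical harmonic $Y^{[s]}_{m_0l'}(\cdot;0)$ is $\omega$-independent), and only at the very end to transfer the conclusions to the $\omega$-dependent spheroidal Teukolsky transform using Propositions \ref{PropEstimatesDerivativesEigenfunctions} and \ref{PropCharSlowDecay}. Write $f:=\psi|_{\Hp_r}$; by Assumption \ref{AssumptionReg} $f$ decays exponentially as $v_+\to-\infty$, so any bound $\int_{\{v_+\ge1\}}v_+^{2q}|h|^2<\infty$ upgrades to $(1+|v_+|)^q h\in L^2_{v_+}L^2_{\Sp^2}$. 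I set $g_{l'}(\omega):=\int_{\Sp^2}\widecheck f(\cdot;\omega)\overline{Y^{[s]}_{m_0l'}(\cdot;0)}\,\vols$, so that $g_{l_0}$ is exactly the Fourier transform (in the convention of \eqref{EqDefFT}) of the spherical projection $(\psi|_{\Hp_r})_{S(m_0l_0)}$, and, since $Y^{[s]}_{m_0l'}(\cdot;0)$ is $\omega$-independent, one has $\rd_\omega^q g_{l'}=\langle\rd_\omega^q\widecheck f,Y^{[s]}_{m_0l'}(\cdot;0)\rangle_{L^2(\Sp^2)}$ and the one-dimensional equivalence $v_+^q(\psi|_{\Hp_r})_{S(m_0l')}\in L^2_{v_+}\iff\rd_\omega^q g_{l'}\in L^2_\omega$.

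For the lower-order claim I would use that $p_0$ is the \emph{smallest} integer in \eqref{AssumpP0} to get $(1+|v_+|)^q f\in L^2_{v_+}L^2_{\Sp^2}$, hence $\rd_\omega^q\widecheck f\in L^2_\omega L^2_{\Sp^2}$ by \eqref{EqRelationsFTWeightDerivative}, for every $0\le q\le p_0-1$. The forward direction of Proposition \ref{PropCharSlowDecay} then gives $\rd_\omega^q\widecheck f_{m_0l_0}\in L^2((-\varepsilon,\varepsilon))$ for these $q$. To reach the whole interval $[-2,2]$ I would run the single-mode version of the estimate on the compact set $\{\varepsilon\le|\omega|\le2\}$: for fixed $m_0$ the eigenvalues $\lambda^{[s]}_{m_0l}(\omega)$ remain pairwise separated there and grow like $l(l+1)$, so the coefficients $D^{[s]}_{m_0l_0l';k}(\omega)$ and their $\ell^2_{l'}$-sums are bounded on $[-2,2]$ by the computation of \eqref{EqSumLittleL2}; feeding the global bound on $\rd_\omega^q\widecheck f$ into the identity \eqref{EqRelateBothWeakDerivatives} then yields $\rd_\omega^q\widecheck f_{m_0l_0}\in L^2([-2,2])$.

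For the blow-up at order $p_0$ I would first prove it for the scalar quantity $g_{l_0}$ and then transfer. From \eqref{AssumpP0SphHarm} (plus exponential decay) $v_+^{p_0}(\psi|_{\Hp_r})_{S(m_0l_0)}\notin L^2_{v_+}$, so $\rd_\omega^{p_0}g_{l_0}\notin L^2(\R)$; from \eqref{AssumpP0Better}, projecting onto the fixed harmonic gives $v_+^{p_0}\rd_{v_+}(\psi|_{\Hp_r})_{S(m_0l_0)}\in L^2_{v_+}$, hence $\rd_\omega^{p_0}(\omega g_{l_0})\in L^2(\R)$, and together with $\rd_\omega^{p_0-1}g_{l_0}\in L^2$ this forces $\omega\,\rd_\omega^{p_0}g_{l_0}\in L^2(\R)$. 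The second fact confines the $L^2$-failure of $\rd_\omega^{p_0}g_{l_0}$ to $\omega=0$, i.e. $\rd_\omega^{p_0}g_{l_0}\notin L^2((-\varepsilon,\varepsilon))$ for every $\varepsilon>0$. It then remains to show $\rd_\omega^{p_0}\widecheck f_{m_0l_0}$ and $\rd_\omega^{p_0}g_{l_0}$ differ by an $L^2((-\varepsilon,\varepsilon))$ function. Using \eqref{EqRelateBothWeakDerivatives} to replace $\rd_\omega^{p_0}\widecheck f_{m_0l_0}$ by $(\rd_\omega^{p_0}\widecheck f)_{m_0l_0}$ modulo lower-order (hence $L^2$) terms, and expanding the spheroidal harmonic in the spherical basis, $Y^{[s]}_{m_0l_0}(\cdot;\omega)=\sum_{l'}E_{m_0l_0l'}(\omega)Y^{[s]}_{m_0l'}(\cdot;0)$, I would write $(\rd_\omega^{p_0}\widecheck f)_{m_0l_0}=\sum_{l'}\overline{E_{m_0l_0l'}(\omega)}\,\rd_\omega^{p_0}g_{l'}$. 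The structural point is that spheroidal and spherical harmonics coincide at $\omega=0$, so $E_{m_0l_0l_0}(\omega)-1=O(\omega)$ and $E_{m_0l_0l'}(\omega)=O(\omega)$ for $l'\ne l_0$, with $\ell^2_{l'}$-summable linearisation by Proposition \ref{PropEstimatesDerivativesEigenfunctions}; each such factor of $\omega$ is absorbed by the ``one-better'' bound $\omega\,\rd_\omega^{p_0}\widecheck f\in L^2_\omega L^2_{\Sp^2}$, obtained from \eqref{AssumpP0Better} exactly as for $g_{l_0}$. A Cauchy--Schwarz in $l'$ then gives $(\rd_\omega^{p_0}\widecheck f)_{m_0l_0}-\rd_\omega^{p_0}g_{l_0}\in L^2((-\varepsilon,\varepsilon))$, and the stated blow-up follows.

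The main obstacle is precisely this final transfer. The hypotheses are phrased with the $\omega$-independent spherical harmonic, where Plancherel is exact, whereas the Teukolsky transform pairs with the $\omega$-dependent spheroidal harmonic, so the two projections of $\rd_\omega^{p_0}\widecheck f$ must be compared even though $\rd_\omega^{p_0}\widecheck f$ itself is only controlled after multiplication by $\omega$. What makes it work, and what I would verify quantitatively, is the matching of two facts at $\omega=0$: the change-of-basis coefficients vanish there to first order, and \eqref{AssumpP0Better} supplies exactly one power of $\omega$ to cancel the otherwise uncontrolled $\rd_\omega^{p_0}\widecheck f$, the $\ell^2$-summability in $l'$ being guaranteed near $\omega=0$ by Proposition \ref{PropEstimatesDerivativesEigenfunctions}. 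I would also take care that all order-$p_0$ manipulations are read pointwise a.e. on $\R\setminus\{0\}$, where $\rd_\omega^{p_0}\widecheck f$ is a genuine $L^2_{\Sp^2}$-valued function thanks to $\omega\,\rd_\omega^{p_0}\widecheck f\in L^2$, with the behaviour near $0$ handled entirely through the $\omega$-weight.
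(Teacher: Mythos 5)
Your proof is correct and follows essentially the same route as the paper's: pass to the spherical-harmonic projections, where the $v_+$-weight/$\omega$-derivative dictionary \eqref{EqRelationsFTWeightDerivative} is exact, confine the $L^2$-failure of the $p_0$-th derivative to $\omega=0$ using the extra power of $\omega$ supplied by \eqref{AssumpP0Better}, and transfer to the spheroidal projection via change-of-basis coefficients that are $O(\omega)$ off-diagonal and $1+O(\omega)$ on the diagonal (spheroidal $=$ spherical at $\omega=0$), with $\ell^2_{l'}$-summability from Proposition \ref{PropEstimatesDerivativesEigenfunctions}. The only cosmetic difference is organizational: the paper Leibniz-expands $\widecheck{\psi}_{m_0l_0}=\sum_l E^{[s]}_{m_0l_0l}\,\widecheck{\psi}_{S(m_0l)}$ directly, and for the $[-2,2]$ claim it needs only the bound \eqref{EqBoundDE} on the single fixed harmonic $S^{[s]}_{m_0l_0}$ (smoothness on compact $\omega$-sets), so your appeal to eigenvalue separation in the spirit of \eqref{EqSumLittleL2} on all of $[-2,2]$ --- which the paper establishes only for $|\omega|\le\varepsilon$ --- is more machinery than required and is most cleanly replaced by that simpler single-row bound.
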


\begin{proof}
We drop the $|_{\Hp_r}$ from $\psi|_{\Hp_r}$ here to ease the notation. We only consider $\psi$ restricted to the event horizon. 
It follows from  the regularity Assumption \ref{AssumptionReg}, which ensures exponential decay of $\psi$ for $v_+ \to -\infty$, that \eqref{AssumpP0}, \eqref{AssumpP0SphHarm}, \eqref{AssumpP0Better} imply
\begin{equation}\label{EqPhysSpaceHp}
\begin{split}
&\int_\R \int_{\Sp^2} \big|v_+^q \psi\big|^2 \, \vols dv_+ < \infty \\[3pt]
&\int_\R \int_{\Sp^2} \big|v_+^{p_0}\rd_{v_+} \psi\big|^2 \, \vols dv_+ < \infty  \\[3pt]
&\int_\R  \big|v_+^{p_0} \psi_{S(m_0l_0)}\big|^2 \,  dv_+ = \infty  
\end{split}
\end{equation}
for all $0 \leq q < p_0$, $q \in \N_0$. 
Recall from Section \ref{SecAssumptions} that $$\widecheck{\psi}_{S(ml)}(\omega) := \int_{\Sp^2} \widecheck{\psi}(\omega, \theta, \varphi_+) \overline{Y_{ml}^{[s]}(\theta, \varphi_+;0)} \, \vols$$ denotes the projection of the Fourier transform $\widecheck{\psi} \in L^2_\omega L^2_{\Sp^2}$ of $\psi \in L^2_{v_+}L^2_{\Sp^2}$ (see \eqref{EqDefFT}) onto the spin $2$-weighted \underline{spherical} harmonic $Y^{[s]}_{ml}(\theta, \varphi_+; 0)$. The map $(\cdot)_{S(ml)}$ is clearly an isometry $L^2_\omega L^2_{\Sp^2} \to L^2_\omega \ell^2_{ml}$.
By Plancherel \eqref{EqPhysSpaceHp} is thus equivalent to
\begin{align}
&\int_\R \sum_{m,l} \big|\rd_{\omega}^q \widecheck{\psi}_{S(ml)}\big|^2 \, d\omega < \infty  \label{EqFTH1} \\[3pt]
&\int_\R \sum_{m,l} \big|\rd_{\omega}^{p_0}(\omega \widecheck{\psi}_{S(ml)})\big|^2 \, d\omega < \infty  \label{EqFTH2} \\[3pt]
&\rd_\omega^{p_0} (\widecheck{\psi}_{S(m_0l_0)} )\notin L^2(\R)  \label{EqFTH3} 
\end{align}
for all $0 \leq q < p_0$, $q \in \N_0$. It follows from
$$\rd_{\omega}^{p_0} ( \omega \widecheck{\psi}_{S(ml)}) = \rd_{\omega}^{p_0-1} ( \widecheck{\psi}_{S(ml)} + \omega \rd_\omega \widecheck{\psi}_{S(ml)}) = p_0 \rd_\omega^{p_0-1} \widecheck{\psi}_{S(ml)} + \omega \rd_\omega^{p_0} \widecheck{\psi}_{S(ml)} $$
and \eqref{EqFTH1} and \eqref{EqFTH2} that
\begin{equation}\label{EqFTH4}
\int_\R  \sum_{m,l}\big|\omega \rd_{\omega}^{p_0} \widecheck{\psi}_{S(ml)}\big|^2 \, d\omega < \infty \;.
\end{equation}
Together with \eqref{EqFTH3} this gives in particular
\begin{equation}\label{EqNotL2SS}
\rd_\omega^{p_0} (\widecheck{\psi}_{S(m_0l_0)}) \notin L^2_{(-\varepsilon, \varepsilon)}  \; \textnormal{ for any } \; \varepsilon >0 \;.
\end{equation} 
We relate the projection onto the spin $2$-weighted spheroidal harmonics to that onto the spin $2$-weighted spherical harmonics in the next step.

We expand in $L^2([-1,1], d \cos \theta)$
$$S_{ml}^{[s]}(\cos \theta; \omega) = \underbrace{\int_{[-1,1]} S_{ml}^{[s]}(\cos \theta; \omega) S_{ml'}^{[s]}(\cos \theta; 0) \, d \cos \theta}_{=:E^{[s]}_{mll'}(\omega)} \;\cdot\, S^{[s]}_{ml'}(\cos \theta; 0) \;,$$
where $E_{mll'}^{[s]}(\omega) : \ell^2_{l'} \to \ell^2_{l}$ is a change of orthonormal basis map for every $\omega \in \R$ and it is also smooth in $\omega$. We have
\begin{equation}\label{EqBoundDE}
\sum_{l'} |\rd_\omega^q E_{mll'}^{[s]}(\omega)|^2 = \sum_{l'} \Big|\int_{[-1,1]} \rd_\omega^q S^{[s]}_{ml}(\cos \theta; \omega) S^{[s]}_{ml'}(\cos \theta; 0) \, d \cos \theta\Big|^2 = ||\rd_\omega^q S^{[s]}_{ml}(\omega)||_{L^2([-1,1])}^2 \leq C(\omega, m,l)
\end{equation}
where, for fixed $m,l$, the constant can be chosen uniform on compact subsets of $\omega$ by the smoothness of $S^{[s]}_{ml}(\omega)$ in $\omega$, see Proposition \ref{PropEigenfunctionsSWL}.
We have $\widecheck{\psi}_{m_0l_0}(\omega) = \sum_l \widecheck{\psi}_{S(m_0l)}(\omega) \cdot E^{[s]}_{m_0l_0l}(\omega)$ in $L^2_\omega(\R)$ and weak differentiation gives
\begin{equation}\label{EqDiffE}
\rd_\omega^{p_0} \widecheck{\psi}_{m_0l_0}(\omega) = \sum_{l } \sum_{{q'} = 0}^{p_0} \binom{p_0}{{q'}} \rd_\omega^{q'} E^{[s]}_{m_0l_0l}(\omega) \cdot \rd_\omega^{p_0 -{q'}}\widecheck{\psi}_{S(m_0l)}(\omega)  \;.
\end{equation}
Consider first all the terms  $\sum_l \rd_\omega^{q'} E^{[s]}_{m_0l_0l}(\omega) \cdot \rd_\omega^{p_0-{q'}} \widecheck{\psi}_{S(m_0l)}(\omega)$ for ${q'} \geq 1$. We estimate those on the compact subset $[-2,2] \subseteq \R$ using \eqref{EqFTH1} and \eqref{EqBoundDE} as follows:
\begin{equation*}
\int\limits_{[-2,2]} \Big| \sum_l \rd_\omega^{q'} E^{[s]}_{m_0l_0l}(\omega) \cdot \rd_\omega^{p_0 - {q'}} \widecheck{\psi}_{S(m_0l)}(\omega) \Big|^2 \, d \omega \leq \int\limits_{[-2,2]}  \underbrace{\Big(\sum_l \big| \rd_\omega^{q'} E^{[s]}_{m_0l_0l}(\omega)\big|^2 \Big)}_{\leq C(m_0,l_0)} \Big( \sum_l \big| \rd_\omega^{p_0 - {q'}} \widecheck{\psi}_{S(m_0l)}(\omega) \big|^2 \Big) \, d \omega \leq C \;.
\end{equation*}
Note in particular that if we replace $p_0$ in \eqref{EqDiffE} by $0 \leq q <p_0$ then all terms can be estimated in this way. This proves the first claim in the proposition. We go back to \eqref{EqDiffE} with $p_0$ and consider next those terms with ${q'}=0$ and $l \neq l_0$. We note that for $l \neq l_0$ we have $E^{[s]}_{m_0l_0l}(0) = 0$ and thus $E^{[s]}_{m_0l_0l}(\omega) = \int_0^\omega \rd_\omega E_{m_0l_0l}^{[s]}(\omega') \, d \omega'$. Using this we estimate  
\begin{equation}\label{Eq765}
\int\limits_{[-2,2]} \Big| \sum_{l \neq l_0} E^{[s]}_{m_0l_0l}(\omega) \cdot \rd_\omega^{p_0} \widecheck{\psi}_{S(m_0l)}(\omega) \Big|^2 \, d \omega \leq \int\limits_{[-2,2]} \Big( \sum_{l \neq l_0} \big| \frac{1}{\omega} \int_0^\omega \rd_\omega E^{[s]}_{m_0l_0l}(\omega') \, d\omega' \big|^2 \Big) \cdot \Big( \sum_{l \neq l_0} \big| \omega \rd_\omega^{p_0} \widecheck{\psi}_{S(m_0l)}(\omega)\big|^2 \Big) \, d \omega \;.
\end{equation}
We continue estimating the first factor on the right hand side for $\omega \in [-2,2]$ using \eqref{EqBoundDE}
\begin{equation*}
\sum_{l \neq l_0} \big| \frac{1}{\omega} \int_0^\omega \rd_\omega E^{[s]}_{m_0l_0l}(\omega') \, d\omega' \big|^2 \leq \frac{1}{|\omega|} \sum_{l \neq l_0} \int\limits_{[0, \omega]} |\rd_\omega E_{m_0l_0l}^{[s]}(w')|^2 \, d \omega' = \frac{1}{|\omega|} \int\limits_{[0, \omega]} \underbrace{\sum_{l \neq l_0}  |\rd_\omega E_{m_0l_0l}^{[s]}(w')|^2}_{\leq C(m_0,l_0)} \, d \omega'  \leq C \;.
\end{equation*}
Using this in \eqref{Eq765} together with \eqref{EqFTH4} gives $$\int\limits_{[-2,2]} \Big| \sum_{l \neq l_0} E^{[s]}_{m_0l_0l}(\omega) \cdot \rd_\omega^{p_0} \widecheck{\psi}_{S(m_0l)}(\omega) \Big|^2 \, d \omega  \leq C \;.$$
It remains the term with ${q'}=0$ and $l=l_0$ in \eqref{EqDiffE}, which is $E^{[s]}_{m_0l_0l_0}(\omega) \cdot \rd_\omega^{p_0} \widecheck{\psi}_{S(m_0l_0)}$. Note that we have $E^{[s]}_{m_0l_0l_0}(0)=1$ and thus we can find $\varepsilon'>0$ such that $E^{[s]}_{m_0l_0l_0}(\omega) \geq \frac{1}{2}$ for $|\omega|\leq \varepsilon'$. It thus follows from \eqref{EqNotL2SS} that this term is not in $L_{\omega}^2(-\varepsilon, \varepsilon)$ for any $\varepsilon>0$. Entering all this information into \eqref{EqDiffE} concludes the proof.
\end{proof}

\begin{corollary} \label{CorConHPR}
Under the assumption from Section \ref{SecAssumptions} we have $\rd_\omega^qa_{\Hp_r, m_0l_0} \in L^2_\omega([-2,2])$ for any $0 \leq q < p_0$, $q \in \N_0$ and $\rd_\omega^{p_0} a_{\Hp_r, m_0l_0}(\omega) \notin L^2_{\omega}(-\varepsilon, \varepsilon)$ for any $\varepsilon>0$.
\end{corollary}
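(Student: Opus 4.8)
The plan is to reduce the statement entirely to the corresponding statement for the Teukolsky transform of the horizon data, which has already been established in Proposition \ref{PropFourierAssump}; the bridge between the two is the identification of coefficients in Proposition \ref{PropAHP}. First I would recall from Proposition \ref{PropAHP} that $a_{\Hp_r, m_0l_0}(\omega) = (\widecheck{\psi|_{\Hp_r}})_{m_0l_0}(\omega)$ holds for every $\omega \neq \omega_+ m_0$, and that $a_{\Hp_r, m_0l_0}$ extends to a continuous function across $\omega = \omega_+ m_0$. Since $(\widecheck{\psi|_{\Hp_r}})_{m_0l_0}$ is continuous on all of $\R$ by Lemma \ref{LemPsiFTC1}, the two continuous functions agree on the dense open set $\R \setminus \{\omega_+ m_0\}$ and therefore coincide identically on $\R$. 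In particular they define the same element of $L^2_{\mathrm{loc}}(\R)$.

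Second, because two functions that coincide as elements of $L^2_{\mathrm{loc}}$ have the same weak derivatives of every order, the assertions about $\partial_\omega^q a_{\Hp_r, m_0l_0}$ are verbatim the same assertions about $\partial_\omega^q (\widecheck{\psi|_{\Hp_r}})_{m_0l_0}$. Proposition \ref{PropFourierAssump} states precisely that $\partial_\omega^q (\widecheck{\psi|_{\Hp_r}})_{m_0l_0} \in L^2_\omega([-2,2])$ for all $0 \leq q < p_0$ with $q \in \N_0$, and that $\partial_\omega^{p_0}(\widecheck{\psi|_{\Hp_r}})_{m_0l_0} \notin L^2_\omega(-\varepsilon,\varepsilon)$ for every $\varepsilon > 0$, which then immediately yields the claim.

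The only point requiring care — rather than a genuine obstacle — is the identification at the frequency $\omega = \omega_+ m_0$, where $a_{\Hp_r, m_0l_0}$ is a priori undefined because the Frobenius branch $A_{\Hp_r, m_0l_0}$ is singular there. This matters precisely in the case $m_0 = 0$, for which $\omega_+ m_0 = 0$ is exactly the frequency governing the lower bound $\partial_\omega^{p_0} a_{\Hp_r, m_0l_0} \notin L^2_\omega(-\varepsilon, \varepsilon)$; were the identification to fail at this single point, modifying $a_{\Hp_r, m_0l_0}$ on a null set could in principle alter the conclusion. It is handled entirely by the continuous-extension clause of Proposition \ref{PropAHP} together with the continuity of the Teukolsky transform from Lemma \ref{LemPsiFTC1}, so no separate argument at $\omega = 0$ is needed.
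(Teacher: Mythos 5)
Your proposal is correct and is essentially the paper's own proof: the paper simply states that the corollary ``follows directly from Propositions \ref{PropAHP} and \ref{PropFourierAssump}'', which is exactly the reduction you carry out. Your additional care about the identification at $\omega = \omega_+ m_0$ (in particular for $m_0 = 0$) fills in the detail the paper leaves implicit, but it does not change the route.
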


\begin{proof}
This follows directly from Propositions \ref{PropAHP} and \ref{PropFourierAssump}.
\end{proof}

\subsection{Passing to the limit $r \to r_+$ in $(v_-,r, \theta, \varphi_-)$ coordinates}

The determination of $a_{\Hp_l, ml}(\omega)$ is more complicated. Note that $\psi$ vanishes on $\Hp_l$, so in order to take a non-vanishing limit we consider $(\rd_r|_+)^2 \psi = \frac{1}{(r_+ - r_-)^2} (\rd_y|_+)^2 \psi =: \frac{1}{(r_+ - r_-)^2} \rd_y^2 \psi$ instead of $\psi$. \textbf{Here, and throughout this section, we have made the convention that $\rd_y^2 \psi$ is always with respect to the $(v_+, y, \theta, \varphi_+)$-coordinate system, even if we otherwise use $(v_-, y, \theta, \varphi_-)$-coordinates.} This is simply to ease the amount of notation.
There are now two main differences to the limiting procedure of Section \ref{SecPassLimitHPR}. The first one is that $\rd_y^2 \psi$ does not vanish at the bottom bifurcation sphere, so one cannot hope to take an $L^2$-limit $y \to 0$ in $(v_-, y, \theta, \varphi_-)$-coordinates. We will instead take a limit in the sense of distributions. The second difference is that the branch $A_{\Hp_r, ml}(y; \omega)$ in \eqref{EqRecallExp} in general also gives a non-vanishing contribution under this limit (see Footnote \ref{FootnDelta}) -- by choosing the support of the test functions suitably though and, in the case of $m_0 =0$, also using that the reflection coefficient of the left event horizon vanishes at $\omega = 0$ (see Corollary \ref{CorConcAHL} and  Section \ref{SecPfMainThm}), we can circumvent this second difficulty.   We begin with introducing our test functions.

\begin{lemma} \label{LemXiFasterPolynom}
Let $\xi \in C^{\infty}_0(\R)$ and set $$\tau_{\xi, ml}(v, \theta) := \frac{1}{\sqrt{2\pi}} \int_\R \xi(\omega) S_{ml}^{[s]}(\cos \theta ; \omega) e^{i \omega v} \, d \omega \;.$$
Then $||v^j \rd_{v}^k \tau_{\xi, ml} ||_{L^\infty\big(\R \times (- \pi, \pi)\big)} \leq C(j,k) < \infty$ for all $j,k \in \N_0$, where $C(j,k)$ also depends on $m,l$ and $\xi$.
\end{lemma}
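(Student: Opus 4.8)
The plan is to exploit the standard Fourier-analytic principle that the inverse Fourier transform (in $\omega$) of a compactly supported function that is smooth in $\omega$ decays faster than any polynomial in $v$, together with all its $v$-derivatives. The only genuinely non-trivial input is that the $\omega$-dependence of the eigenfunctions $\Sml(\cos\theta;\omega)$ does not spoil this, \emph{uniformly} in $\theta$, which is supplied by the regularity statement in Proposition \ref{PropEigenfunctionsSWL}.

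First I would justify differentiating under the integral sign. Since $\xi \in C^\infty_0(\R)$ and, for fixed $\theta$, the $k$-th $v$-derivative of the integrand has absolute value bounded by $|\xi(\omega)|\,|\omega|^k\,|\Sml(\cos\theta;\omega)|$, which is integrable (compact $\omega$-support, and $\Sml \in L^\infty$), the standard theorem on differentiation under the integral gives, using $\rd_v e^{i\omega v} = i\omega e^{i\omega v}$,
\[
\rd_v^k \tau_{\xi,ml}(v,\theta) = \frac{1}{\sqrt{2\pi}} \int_\R \xi(\omega)(i\omega)^k \Sml(\cos\theta;\omega)\, e^{i\omega v}\, d\omega \;.
\]
To produce the weight $v^j$ I would use $v^j e^{i\omega v} = (-i)^j \rd_\omega^j e^{i\omega v}$ and integrate by parts $j$ times in $\omega$; the boundary terms vanish because the integrand is compactly supported in $\omega$, yielding
\[
v^j \rd_v^k \tau_{\xi,ml}(v,\theta) = \frac{i^j}{\sqrt{2\pi}} \int_\R \rd_\omega^j\big[\xi(\omega)(i\omega)^k \Sml(\cos\theta;\omega)\big] e^{i\omega v}\, d\omega \;.
\]
Taking absolute values and using $|e^{i\omega v}|=1$ gives a pointwise bound already independent of $v$:
\[
\big|v^j \rd_v^k \tau_{\xi,ml}(v,\theta)\big| \le \frac{1}{\sqrt{2\pi}} \int_\R \big|\rd_\omega^j[\xi(\omega)(i\omega)^k \Sml(\cos\theta;\omega)]\big|\, d\omega \;,
\]
so it remains only to bound the right-hand side uniformly in $\theta \in (-\pi,\pi)$.

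The main (and only real) obstacle is controlling $\rd_\omega^b \Sml(\cos\theta;\omega)$ uniformly in $\theta$ on the compact set $\mathrm{supp}\,\xi =: K$. Expanding the integrand by the Leibniz rule, each summand is a product of $\rd_\omega^a\xi$, a derivative of the polynomial $(i\omega)^k$, and $\rd_\omega^b\Sml(\cos\theta;\omega)$ with $a+b\le j$; the first two factors are continuous and supported in $K$. For the last factor I invoke Proposition \ref{PropEigenfunctionsSWL}: away from $x=\pm1$ the eigenfunctions are jointly analytic in $(x,\omega)$, while the expansions $\rd_\omega^b \Sml(x;\omega)=(1+x)^{\frac12|m-s|}a_b(x;\omega)$ near $x=-1$ and $\rd_\omega^b \Sml(x;\omega)=(x-1)^{\frac12|m+s|}b_b(x;\omega)$ near $x=+1$, with $a_b,b_b$ analytic in both arguments, show that $\rd_\omega^b\Sml$ extends to a continuous function on $[-1,1]\times\R$. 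Hence it is bounded on the compact set $[-1,1]\times K$ by a constant $M_b(m,l,\xi)$, and since $\cos\theta\in[-1,1]$ for every $\theta\in(-\pi,\pi)$ this bound is uniform in $\theta$. Collecting the finitely many Leibniz terms and integrating the resulting bounded, compactly supported integrand over $K$ produces a finite constant $C(j,k)$ depending on $m,l,\xi$ but not on $v$ or $\theta$, which is precisely the asserted estimate.
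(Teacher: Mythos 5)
Your proof is correct and takes essentially the same route as the paper's: differentiate under the integral, trade the weight $v^j$ for $j$ integrations by parts in $\omega$ (with vanishing boundary terms by compact support of $\xi$), and then bound the resulting integrand uniformly in $\theta$ using the fact, from Proposition \ref{PropEigenfunctionsSWL}, that $\rd_\omega^b S^{[s]}_{ml}(\cos\theta;\omega)$ is continuous on $[-1,1]\times\R$ and hence bounded on $[-1,1]\times\mathrm{supp}\,\xi$. You merely make explicit the justifications (differentiation under the integral, boundary terms, the continuity up to $x=\pm1$ via the Frobenius expansions) that the paper's two-line proof leaves implicit.
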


\begin{proof}
Differentiating under the integral we compute
\begin{equation*}
\begin{split}
(iv)^j (-i \rd_{v})^k \tau_{\xi, ml} &= \frac{1}{\sqrt{2\pi}} \int_\R \xi(\omega) S_{ml}^{[s]}(\cos \theta; \omega) \omega^k \rd_\omega^j e^{i \omega v} \, d \omega \\
&= \frac{1}{\sqrt{2\pi}} \int_\R (-1)^j \rd_\omega^j \big( \omega^k \xi(\omega) S_{ml}^{[s]}(\cos \theta ; \omega) \big) e^{i \omega v} \, d\omega \;.
\end{split}
\end{equation*}
Since $S_{ml}^{[s]}(\cos \theta ;\omega)$ and all its $\omega$-derivatives are continuous on $[-1,1] \times \R$ and since $\xi(\omega)$ is smooth and of compact support, the $L^\infty$ norm in $ \theta$ of the integrand is absolutely integrable.
\end{proof}

\begin{proposition} \label{PropLimitDis}
Let $\xi \in C^\infty_0(\R)$ and consider the assumptions in Section \ref{SecAssumptions}. Then as $y \to 0$
\begin{equation}\label{EqPropLimitDis}
\begin{split}
\int_{\R} \int_{\Sp^2} (\rd_y|_+)^2 &\psi (v_-, y, \theta, \varphi_-) \overline{\tau_{\xi, ml}(v_-, \theta) \cdot e^{im \varphi_-}} \, \vols dv_- \\ 
&\to \int_{\R} \int_{\Sp^2} (\rd_y|_+)^2 \psi (v_-, 0, \theta, \varphi_-) \overline{\tau_{\xi, ml}(v_-, \theta) \cdot e^{im \varphi_-}} \, \vols dv_-  \;.
\end{split}
\end{equation}
\end{proposition}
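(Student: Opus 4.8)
The plan is to establish the distributional convergence \eqref{EqPropLimitDis} by fixing an arbitrary $v_0 \in \R$ and splitting the $v_-$-integration into the outer region $\{v_- > v_0\}$ and the inner region $\{v_- \leq v_0\}$, which contains the bottom bifurcation sphere $\Sp^2_b$. These two regions are governed by different mechanisms: on $\{v_- > v_0\}$ the field $(\rd_y|_+)^2\psi$ is square-integrable and I can invoke the $L^2$-convergence already established in \eqref{EqCorL2LocLimit} of Corollary \ref{CorLimitHP}, whereas on $\{v_- \leq v_0\}$ the field is regular but does \emph{not} decay (it is non-vanishing at $\Sp^2_b$), so there I would instead trade this lack of decay against the rapid decay of the test function $\tau_{\xi,ml}$ provided by Lemma \ref{LemXiFasterPolynom}. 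Throughout I would use the identity $(\rd_y|_+)^2 = (r_+ - r_-)^2 (\rd_r|_+)^2$, so that the constant factor is harmless.

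First I would treat the outer region. Using Cauchy--Schwarz over $\{v_- > v_0\} \times \Sp^2$, the difference between the integral in \eqref{EqPropLimitDis} at level $y$ and at $y = 0$, restricted to this region, is bounded by
\begin{equation*}
\begin{split}
\Big| \int\limits_{\{v_- > v_0\}} &\int_{\Sp^2} \big[ (\rd_y|_+)^2 \psi(v_-, y, \theta, \varphi_-) - (\rd_y|_+)^2 \psi(v_-, 0, \theta, \varphi_-)\big] \overline{\tau_{\xi, ml}(v_-, \theta) e^{im \varphi_-}} \, \vols dv_- \Big| \\
&\leq \big\| \mathbbm{1}_{(v_0, \infty)}(v_-) \big[ (\rd_y|_+)^2 \psi(\cdot; y) - (\rd_y|_+)^2 \psi(\cdot; 0)\big] \big\|_{L^2_{v_-} L^2_{\Sp^2}} \cdot \big\| \tau_{\xi, ml} e^{im \varphi_-} \big\|_{L^2(\{v_- > v_0\} \times \Sp^2)} \;.
\end{split}
\end{equation*}
The first factor tends to $0$ as $y \to 0$ precisely by \eqref{EqCorL2LocLimit}, and the second factor is finite since by Lemma \ref{LemXiFasterPolynom} (with $k = 0$) the function $\tau_{\xi,ml}$ is bounded and decays faster than any polynomial in $v_-$, hence is in $L^2$. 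Thus the contribution of $\{v_- > v_0\}$ to the difference vanishes in the limit $y \to 0$.

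Second, on the inner region $\{v_- \leq v_0\}$ I would argue by dominated convergence. The key input is the regularity statement following Assumption \ref{AssumptionReg}: $(\rd_r|_+)^2\psi$ extends to a smooth spin $2$-weighted function on $\Mun$, and in particular it is continuous up to and including $\Hp_r$ and $\Sp^2_b$. Since the closure of $\{v_- \leq v_0,\ 0 \leq y \leq y_1\}$ (for any fixed small $y_1 > 0$) is a compact neighbourhood of $\Sp^2_b$ in $\Mun$ — it is bounded in the Kruskal coordinates $V^\pm_{r_+}$ — the field $(\rd_y|_+)^2\psi(v_-, y, \theta, \varphi_-)$ is bounded there uniformly in $y$ by some constant $C_0$, and converges pointwise as $y \to 0$ to $(\rd_y|_+)^2\psi(v_-, 0, \theta, \varphi_-)$ for $\theta \neq 0, \pi$. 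The dominating function $C_0 \, |\tau_{\xi,ml}(v_-, \theta)|$ is integrable over $\{v_- \leq v_0\} \times \Sp^2$, again by the super-polynomial decay of $\tau_{\xi,ml}$ as $v_- \to -\infty$ from Lemma \ref{LemXiFasterPolynom}. Dominated convergence then yields convergence of the inner-region integral at level $y$ to that at $y = 0$. Adding the two regions gives \eqref{EqPropLimitDis}.

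The main obstacle is exactly this inner region $\{v_- \leq v_0\}$: the naive $L^2$ approach of Section \ref{SecPassLimitHPR} fails there because $(\rd_y|_+)^2\psi$ is regular and non-vanishing at $\Sp^2_b$ and hence not square-integrable in $v_-$ down to $-\infty$. The crux is therefore to justify the uniform-in-$y$ bound and the pointwise convergence of $(\rd_y|_+)^2\psi$ up to the bifurcation sphere using the regularity of the field on $\Mun$, and then to let the super-polynomial decay of the test function $\tau_{\xi,ml}$ absorb the loss of decay of the field near $\Sp^2_b$.
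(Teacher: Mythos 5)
Your outer-region argument (Cauchy--Schwarz against \eqref{EqCorL2LocLimit} plus $\tau_{\xi,ml}\in L^2$) is exactly the paper's, and your overall splitting at $v_0$ mirrors the paper's proof. The gap is in the inner region: your claim that the closure of $\{v_-\leq v_0,\ 0\leq y\leq y_1\}$ is compact in $\Mun$ because it is ``bounded in the Kruskal coordinates $V^\pm_{r_+}$'' is false. Indeed $V^-_{r_+}=e^{\kappa_+ v_-}\leq e^{\kappa_+ v_0}$ is bounded, but by \eqref{r-r_+} one has $V^+_{r_+}=(r_+-r)\,e^{2\kappa_+F_+(r)}\,e^{-\kappa_+ v_-}$, which tends to $+\infty$ as $v_-\to-\infty$ at any fixed $r<r_+$; equivalently, the closure of your region contains all of $\Hp_r$, which is itself non-compact ($V^+_{r_+}\in[0,\infty)$). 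The region thus reaches up to timelike infinity $i^+$, and there the uniform bound $C_0$ on $(\rd_y|_+)^2\psi$ is \emph{not} a consequence of smoothness on $\Mun$ --- it is a quantitative decay statement. This is precisely the non-trivial input your argument is missing, and it cannot be waved away: boundedness of $\rd_r^2\psi$ near $i^+$ is exactly what the decay assumptions on $\Hp_r$ and the energy estimates are for.

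The repair is to invoke Corollary \ref{CorUniformDecayNearHP}, which gives $|\rd_r^2\psi|\leq C\,v_+^{-q_r/2}$ uniformly for $r\in[r_0,r_+]$ and $v_+\geq 1$ (resting on Proposition \ref{PropRightHP} and Proposition \ref{PropEnergyEstNoShift}), and to use compactness/regularity (Assumption \ref{AssumptionReg}) only on the genuinely bounded piece $\{v_+\leq 1\}\cap\{v_-\leq v_0\}\cap\{r_0\leq r\leq r_+\}$, where both Kruskal coordinates are bounded. This is exactly how the paper obtains the uniform bound; it then concludes not by dominated convergence but by choosing $v_0\ll-1$ so that $C\int_{-\infty}^{v_0}\int_{\Sp^2}|\tau_{\xi,ml}|\,\vols\,dv_-<\varepsilon/2$, which is an equivalent (and slightly more elementary) way to finish. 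With the corollary cited, your dominated-convergence variant on the inner region also goes through, since the pointwise convergence you need follows from the continuity of $\rd_r^2\psi$ on $\Mun$; but as written, the proof rests on a false compactness claim that conceals the main analytic ingredient.
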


\begin{proof}
Let $\varepsilon >0$ be given. Then, using Lemma \ref{LemXiFasterPolynom} and for a $v_0 \leq 0 $ to be chosen later, we estimate 
\begin{equation*}
\begin{split}
\int_{\R} \int_{\Sp^2} &\Big|  (\rd_y|_+)^2 \psi (v_-, y, \theta, \varphi_-) -  (\rd_y|_+)^2 \psi (v_-, 0, \theta, \varphi_-) \Big| \cdot \big| \tau_{\xi, ml}(v_-, \theta) \cdot e^{im \varphi_-}\big| \, \vols dv_- \\
&\leq \Big(\int\limits_{v_0}^\infty \int\limits_{\Sp^2} \Big|  (\rd_y|_+)^2 \psi (v_-, y, \theta, \varphi_-) -  (\rd_y|_+)^2 \psi (v_-, 0, \theta, \varphi_-) \Big|^2 \, \vols dv_-\Big)^{\nicefrac{1}{2}}  \underbrace{\Big(\int\limits_{ v_0}^\infty \int\limits_{\Sp^2} \big| \tau_{\xi, ml}(v_-, \theta)\big|^2 \, \vols dv_-\Big)^{\nicefrac{1}{2}}}_{\leq ||\tau_{\xi,ml}||_{L^2(\R \times \Sp^2)}} \\
&\quad + \int\limits_{-\infty}^{v_0} \int\limits_{\Sp^2} C  \cdot|\tau_{\xi, ml}(v_-, \theta)| \, \vols dv_- \;,
\end{split}
\end{equation*}
where we have used Corollary \ref{CorUniformDecayNearHP} and the regularity assumption \ref{AssumptionReg} to infer that $\rd_y^2 \psi$ is uniformly bounded in $\{r_0 \leq r \leq r_+\} \cap \{v_- \leq v_0\}$ for some $r_- < r_0 < r_+$.
 By Lemma \ref{LemXiFasterPolynom} $\tau_{\xi,ml}$ is integrable, so we can choose $v_0 \ll -1$ such that the last term is less than $\frac{\varepsilon}{2}$. By \eqref{EqCorL2LocLimit} we have that for all $y$ close enough to $0$ the first summand on the right hand side is less than $\frac{\varepsilon}{2}$.
\end{proof}

We now compute both sides of \eqref{EqPropLimitDis}. We start with the right hand side and recall the convention $\rd_y = \rd_y|_+$.
We use $(V^-_{r_+}, \theta, \Phi_{r_+})$ coordinates on $\Hp_l$ and write
\begin{equation}\label{EqSeparatePsi}
\begin{split}
\rd_y^2 \psi|_{\Hp_l}(V^-_{r_+}, \theta, \Phi_{r_+}) &= \rd_y^2 \psi|_{\Hp_l}(0, \theta, \Phi_{r_+}) \cdot \mathbbm{1}_{(-\infty, 0)}(v_-) \\
&\qquad + \underbrace{\int_0^{V^-_{r_+}} \frac{\rd}{\rd V^-_{r_+}} \rd_y^2 \psi|_{\Hp_l} (V^-_{r_+}, \theta, \Phi_{r_+}) \, dV^-_{r_+} + \mathbbm{1}_{[0, \infty)}(v_-) \cdot \rd_y^2 \psi|_{\Hp_l}(0, \theta, \Phi_{r_+})}_{=:\Omega(V^-_{r_+}, \theta, \Phi_{r_+})} \;.
\end{split}
\end{equation}

We first consider the contribution of $\Omega$ to the right hand side of \eqref{EqPropLimitDis}. Note that we have $\Omega(v_-, \theta, \varphi_-) = \rd_y^2 \psi|_{\Hp_l}(v_-, \theta, \varphi_-)$ for $v_- \geq 0$ and also $|\Omega(V^-_{r_+}, \theta, \Phi_{r_+})| \leq C \cdot  V^-_{r_+}$ for $0 \leq V^-_{r_+} \leq 1$. Since we have $V^-_{r_+} = e^{\kappa_+ v_-}$ this gives us exponential decay in $v_-$ towards the bottom bifurcation sphere, i.e., $|\Omega(v_-, \theta, \varphi_-)| \leq C \cdot e^{\kappa_+ v_-}$ for $v_- \leq 0$. By assumption \eqref{AssumpLHpWeaker} and \eqref{EqRPsiNotWeight} we can thus use Fubini (or Plancherel) to obtain
\begin{equation*}
\begin{split}
\int_{\R_{v_-}} \int_{\Sp^2} \Omega(v_-, \theta, \varphi_-) & \overline{\tau_{\xi, ml}(v_-, \theta) \cdot e^{im \varphi_-}} \, \vols dv_-  \\
&=\int_{\R_{v_-}} \int_{\Sp^2} \Omega(v_-, \theta, \varphi_-)  \frac{1}{\sqrt{2\pi}} \int_{\R_\omega} \overline{\xi(\omega)} \Sml(\cos \theta; \omega) e^{-i \omega v_-} e^{-im \varphi_-} \,d \omega \vols dv_-\\
&= \int_{\R_\omega} \frac{1}{\sqrt{2\pi}} \int_{\R_{v_-}} \int_{\Sp^2} \Omega^-(v_-, \theta, \varphi_-) e^{i \omega v_-} \Sml(\cos \theta; \omega) e^{-im \varphi_-} \, dv_- \vols \cdot \overline{\xi(\omega)} \, d \omega \\
&= \int_{\R} (\widecheck{\Omega^-})_{ml} (\omega) \overline{\xi(\omega)} \, d \omega \;,
\end{split}
\end{equation*}
where we introduced $\Omega^-(v_-, \theta, \varphi_-) := \Omega(-v_-, \theta, \varphi_-)$ to  account for the different sign in the phase of the Fourier transform compared to our convention \eqref{EqDefCompositeMap}. 

By assumption \eqref{AssumpLHpWeaker} and \eqref{EqRPsiNotWeight} we have $\int_{\R} \int_{\Sp^2} (1 + |v_-|^{q_l}) | \Omega^-(v_-, \theta, \varphi_-)|^2 \, \vols dv_- < \infty$ and thus \linebreak $\int_{\R} \int_{\Sp^2} |\rd_\omega^q \widecheck{\Omega^-}(\omega, \theta, \varphi_-)|^2 \, \vols d\omega < \infty$ for all $0 \leq q \leq \frac{q_l}{2}$, $q \in \N_0$. Proposition \ref{PropCharSlowDecay} now gives
$$\int_{(-\varepsilon, \varepsilon)} \sum_{m,l} | \rd_\omega^q(\widecheck{\Omega^-})_{ml}|^2 \,d\omega < \infty$$
for all $0 \leq q \leq \frac{q_l}{2}$, $q \in \N_0$, where $\varepsilon >0$ is as in Proposition \ref{PropCharSlowDecay}. 
\footnote{We only need this statement for the single mode $m_0l_0$, for which we do not need to appeal to Proposition \ref{PropCharSlowDecay}.}

We now come to the contribution of the first term in \eqref{EqSeparatePsi} to \eqref{EqPropLimitDis}. We compute
\begin{equation}\label{EqDefPsiML}
\begin{split}
\int\limits_{\R_{v_-}} &\int\limits_{\Sp^2} \rd_y^2 \psi|_{\Hp_l}(0, \theta, \Phi_{r_+}) \cdot \mathbbm{1}_{(-\infty, 0)}(v_-) \underbrace{ \frac{1}{\sqrt{2\pi}} \int_{\R_\omega} \overline{\xi(\omega)} \Sml(\cos \theta; \omega) e^{-i \omega v_-} e^{-im \varphi_-} \, d \omega}_{= \overline{\tau_{\xi, ml}(v_-, \theta) \cdot e^{im \varphi_-}} } \, \sin \theta d \theta d \varphi_- dv_- \\
&=\int\limits_{-\infty}^0\int_{\Sp^2} \rd_y^2 \psi|_{\Hp_l}(0, \theta, \Phi_{r_+})\frac{1}{\sqrt{2\pi}} \int_{\R_\omega} \overline{\xi(\omega)} \Sml(\cos \theta; \omega) e^{-i \omega v_-} e^{-im \Phi_{r_+}} e^{im \omega_+ v_-} e^{-im \phi_+(r_+)} \, d \omega \, \sin \theta d \theta d \Phi_{r_+} dv_-  \\
&=e^{-im \phi_+(r_+)} \cdot \lim_{L \to \infty} \frac{1}{\sqrt{2\pi}} \int_{\R_\omega}  \overline{\xi(\omega)} \int\limits_{-L}^0 \underbrace{ \int\limits_{\Sp^2} \rd_y^2 \psi|_{\Hp_l}(0, \theta, \Phi_{r_+}) \Sml(\cos \theta; \omega) e^{-im \Phi_{r_+}}\, \vols}_{=: \Psi_{ml}(\omega)}  \cdot e^{-i \omega v_-} e^{im \omega_+ v_-} \, dv_- d\omega \\
&= e^{-im \phi_+(r_+)} \cdot \lim_{L \to \infty} \frac{1}{\sqrt{2\pi}} \int_{\R_\omega}  \overline{\xi(\omega)} \Psi_{ml}(\omega) \int\limits_{-L}^0  e^{-i v_-(\omega - \omega_+m)} \, dv_- d\omega  \\
&= e^{-im \phi_+(r_+)} \cdot \lim_{L \to \infty} \frac{1}{\sqrt{2\pi}} \int_{\R_\omega}  \overline{\xi(\omega)} \Psi_{ml}(\omega)  \frac{i}{\omega - \omega_+ m} [ 1 - e^{iL(\omega - \omega_+ m)}] \, d \omega
\end{split}
\end{equation}
Let us note that $\Psi_{ml}(\omega)$ is clearly smooth in $\omega$. We now divide the domain of integration into $|\omega - \omega_+ m| \leq \delta$ and its complement for some $\delta >0$. We first compute 
\begin{equation*}
\begin{split}
\lim_{L \to  \infty} &\int_{|\omega - \omega_+ m| \leq \delta} \overline{\xi}(\omega) \Psi_{ml}(\omega)  \frac{i}{\omega - \omega_+ m} [ 1 - e^{iL(\omega - \omega_+ m)}] \, d \omega \\
&= \lim_{L \to  \infty} \int_{|\omega - \omega_+ m| \leq \delta} \overline{\xi}(\omega_+m) \Psi_{ml}(\omega_+m)  \frac{i}{\omega - \omega_+ m} [ 1 - e^{iL(\omega - \omega_+ m)}] \, d \omega  \\
&\qquad + \underbrace{\lim_{L \to  \infty} \int_{|\omega - \omega_+ m| \leq \delta} \underbrace{\frac{ (\overline{\xi} \Psi_{ml})(\omega) - (\overline{\xi}\Psi_{ml})(\omega_+m)}{\omega - \omega_+m}}_{= \mathcal{O}(1)} [ 1 - e^{iL(\omega - \omega_+ m)}] \, d \omega  }_{= \mathcal{O}(\delta)} \\
&= \lim_{L \to  \infty} \int_{|\tilde{\omega} | \leq \delta}  \overline{\xi}(\omega_+m) \Psi_{ml}(\omega_+m) \frac{i}{\tilde{\omega}} [ 1 - \cos (L \tilde{\omega}) - i \sin (L \tilde{\omega})] \, d \tilde{\omega} + \mathcal{O}(\delta) \qquad \qquad  \textnormal{ with $\tilde{\omega} = \omega - \omega_+m$} \\
&= \lim_{L \to  \infty} 2\int_0^\delta  \overline{\xi}(\omega_+m) \Psi_{ml}(\omega_+m)  \frac{\sin(L \tilde{\omega})}{\tilde{\omega}} \, d \tilde{\omega} + \mathcal{O}(\delta)  \\
&= 2 \overline{\xi}(\omega_+m) \Psi_{ml}(\omega_+m)  \int_0^\infty \frac{\sin(\hat{\omega})}{\hat{\omega}} \, d \hat{\omega} + \mathcal{O}(\delta) \qquad \qquad  \textnormal{ with $\hat{\omega} = L \tilde{\omega}$} \\
&= \pi \overline{\xi}(\omega_+m) \Psi_{ml}(\omega_+m)  + \mathcal{O}(\delta) \;.
\end{split}
\end{equation*}
For the domain $|\omega - \omega_+m| > \delta$ we compute using Riemann-Lebesgue
\begin{equation}\label{EqDefMathringPsi}
\begin{split}
\lim_{L \to \infty}& \int_{|\omega - \omega_+ m| > \delta} \overline{\xi(\omega)} \Psi_{ml}(\omega)  \frac{i}{\omega - \omega_+ m} [ 1 - e^{iL(\omega - \omega_+ m)}] \, d \omega \\
&= \int_{|\omega - \omega_+ m| > \delta} \overline{\xi(\omega)} \Psi_{ml}(\omega)  \frac{i}{\omega - \omega_+ m} \, d \omega \\
&=  \int_{|\omega - \omega_+ m| > \delta} \overline{\xi(\omega)} \Psi_{ml}(\omega_+m)  \frac{i}{\omega - \omega_+ m} \, d \omega + \int_{|\omega - \omega_+ m| > \delta} \overline{\xi(\omega)}   i\underbrace{\frac{\Psi_{ml}(\omega) -\Psi_{ml}(\omega_+m)}{\omega - \omega_+ m}}_{=:\mathring{\Psi}_{ml}(\omega)} \, d \omega \;.
\end{split}
\end{equation}
Clearly $\mathring{\Psi}_{ml}(\omega)$ is smooth in $\omega$. Combining everything and letting $\delta$ go to zero we obtain
\begin{equation*}
\begin{split}
\int\limits_{\R_{v_-}} &\int\limits_{\Sp^2} \rd_y^2 \psi|_{\Hp_l}(0, \theta, \Phi_{r_+}) \cdot \mathbbm{1}_{(-\infty, 0)}(v_-) \cdot \overline{\tau_{\xi, ml}(v_-, \theta) \cdot e^{im \varphi_-}}  \, \vols dv_- \\
&= e^{-im \phi_+(r_+)} \Big( \sqrt{\frac{\pi}{2}} \overline{\xi(\omega_+m}) \Psi_{ml}(\omega_+m) + \frac{1}{\sqrt{2\pi}} \big[\lim_{\delta \to 0} \int\limits_{|\omega - \omega_+m| > \delta} \overline{\xi(\omega)} \frac{i \Psi_{ml}(\omega_+m)}{\omega - \omega_+m} \, d \omega + \int_{\R} \overline{\xi(\omega)} i \mathring{\Psi}_{ml}(\omega) \, d \omega \big] \Big)
\end{split}
\end{equation*}

We now claim that we have 
\begin{equation}
\label{EqClaimPsi}
\int_{-\varepsilon}^\varepsilon \sum_{m,l} |\rd_\omega^q \mathring{\Psi}_{ml}|^2 \, d \omega < \infty
\end{equation} for all $q \in \N_0$ for some $\varepsilon >0$. 
To see this, we first recall that $\Psi_{ml}(\omega) = \int_{\Sp^2} \rd_y^2 \psi|_{\Hp_l}(0, \theta, \Phi_{r_+}) \Sml(\cos \theta ; \omega) e^{-im \Phi_{r_+}} \, \vols$ and thus, using the notation from Proposition \ref{PropEstimatesDerivativesEigenfunctions},
\begin{equation*}
\begin{split}
\rd_\omega^q \Psi_{ml}(\omega) &= \int_{\Sp^2} \rd_y^2 \psi|_{\Hp_l}(0, \theta, \Phi_{r_+}) \rd_\omega^q \Sml(\cos \theta ; \omega) e^{-im \Phi_{r_+}} \, \vols \\
&= \int_{\Sp^2} \rd_y^2 \psi|_{\Hp_l}(0, \theta, \Phi_{r_+}) \sum_{l'} D^{[s]}_{mll'; q}(\omega) S^{[s]}_{ml'}(\cos \theta; \omega) e^{im \Phi_{r_+}}  \, \vols \\
&= \sum_{l'} D^{[s]}_{mll'; q}(\omega) \Psi_{ml'}(\omega) \;.
\end{split}
\end{equation*}
Hence, we obtain
\begin{equation}
\label{EqEstPsiSphere}
|\rd_\omega^q \Psi_{ml}(\omega)| \leq \Big(\sum_{l'} |D_{mll';q}^{[s]}(\omega)|^2 \Big)^{\nicefrac{1}{2}} \Big(\sum_{l'} |\Psi_{ml'}(\omega)|^2\Big)^{\nicefrac{1}{2}} \;.
\end{equation}
The claim \eqref{EqClaimPsi} with the sum restricted to $m \neq 0$ then follows directly: if necessary we choose $\varepsilon >0$ from Proposition \ref{PropEstimatesDerivativesEigenfunctions} even smaller than $|\omega_+|$ and then differentiate  $\mathring{\Psi}_{ml}(\omega) = \frac{\Psi_{ml}(\omega) - \Psi_{ml}(\omega_+m)}{\omega - \omega_+ m}$ in the region $(-\varepsilon, \varepsilon)$, which is disjoint from $\omega = \omega_+m$, and apply \eqref{EqEstPsiSphere} and Proposition \ref{PropEstimatesDerivativesEigenfunctions}.

To see that the contribution from $m=0$ to the sum in \eqref{EqClaimPsi} is also finite we observe that $\mathring{\Psi}_{0l}(\omega) = \frac{\int_0^\omega \rd_\omega \Psi_{0l}(\omega') \, d \omega'}{\omega} = \int_0^1 \rd_\omega \Psi_{0l}(\tau \omega) \, d \tau$ with $\omega' = \tau \omega$ and thus $$\rd_\omega^q \mathring{\Psi}_{0l}(\omega) = \int_0^1 \rd_\omega^{q+1} \Psi_{0l}(\tau \omega) \tau^q \, d \tau \;.$$
Using \eqref{EqEstPsiSphere} we continue to estimate
\begin{equation*}
\begin{split}
|\rd_\omega^q \mathring{\Psi}_{0l}(\omega)| &\leq \int_0^1 \Big( \sum_{l'}|D_{0ll';q+1}^{[s]}(\tau \omega)|^2\Big)^{\nicefrac{1}{2}} \Big(\sum_{l'} |\Psi_{0l'}(\tau \omega)|^2 \Big)^{\nicefrac{1}{2}} \, d \tau \\
&\leq \Big( \int_0^1 \sum_{l'}|D_{0ll';q+1}^{[s]}(\tau \omega)|^2 \, d \tau \Big)^{\nicefrac{1}{2}} \Big( \int_0^1 \Big(\sum_{l'} |\Psi_{0l'}(\tau \omega)|^2  \, d \tau\Big)^{\nicefrac{1}{2}} 
\end{split}
\end{equation*}
and for $\omega \in (-\varepsilon, \varepsilon)$
\begin{equation}\label{EqBeforeOmegaInt}
\begin{split}
\sum_{l}|\rd_\omega^q \mathring{\Psi}_{0l}(\omega)|^2 &\leq \int_0^1 \sum_{l,l'}  |D_{0ll';q+1}^{[s]}(\tau \omega)|^2 \, d \tau \cdot \int_0^1 \sum_{l'} |\Psi_{0l'}(\tau \omega)|^2  \, d \tau \\
&\leq C(q+1) \cdot \int_0^1 \underbrace{||\int_{\Sp^1} \rd_y^2 \psi|_{\Hp_l}(0, \theta, \Phi_{r_+}) \, d \Phi_{r_+} ||^2_{L^2_{\cos \theta}(- \pi, \pi)}} \, d \tau
\end{split}
\end{equation}
Note that the underbraced term is independent of $\tau \omega$. Thus the integration in $\tau$ is trivial and we can also trivially integrate \eqref{EqBeforeOmegaInt} in $\omega$ over $(-\varepsilon, \varepsilon)$. This finally proves the claim \eqref{EqClaimPsi}.

We summarise what we have shown in the following
\begin{proposition} \label{PropAHPL}
Under the assumptions from Section \ref{SecAssumptions} and  for every $\xi \in C^\infty_0(\R, \C)$ we have
\begin{equation*}
\begin{split}
\int_{\R} &\int_{\Sp^2} \rd_y^2 \psi|_{\Hp_l} (v_-, \theta, \varphi_-) \overline{\tau_{\xi, ml}(v_-, \theta) \cdot e^{im \varphi_-}} \, \vols dv_- \\
&= e^{-im \phi_+(r_+)} \Big( \sqrt{\frac{\pi}{2}} \overline{\xi(\omega_+m}) \Psi_{ml}(\omega_+m) + \frac{1}{\sqrt{2\pi}} \big[\lim_{\delta \to 0} \int\limits_{|\omega - \omega_+m| > \delta} \overline{\xi(\omega)} \frac{i \Psi_{ml}(\omega_+m)}{\omega - \omega_+m} \, d \omega + \int_{\R} \overline{\xi(\omega)} i \mathring{\Psi}_{ml}(\omega) \, d \omega \big] \Big)\\
&\qquad + \int_{\R} (\widecheck{\Omega^-})_{ml} (\omega) \overline{\xi(\omega)} \, d \omega   \;,
\end{split}
\end{equation*}
where $\Omega^-(v_-, \theta, \varphi_-) = \Omega(-v_-, \theta, \varphi_-)$ and $\Omega$ is defined in  \eqref{EqSeparatePsi}, $\Psi_{ml}$ is defined in \eqref{EqDefPsiML}, $\mathring{\Psi}_{ml}$ is defined in \eqref{EqDefMathringPsi}, and there exists an $\varepsilon >0$ such that $\rd_\omega^q (\widecheck{\Omega^-})_{ml}(\omega), \rd_\omega^q\mathring{\Psi}_{ml}(\omega) \in L^2_{(-\varepsilon, \varepsilon)} \ell^2_{lm}$ for all $0 \leq q \leq \frac{q_l}{2}$, $q \in \N_0$, with $q_l$ as in Section \ref{SecAssumptions}.
\end{proposition}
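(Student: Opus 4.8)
The plan is to assemble the proposition from three ingredients already prepared in this section. First I would invoke Proposition \ref{PropLimitDis}: since $\psi$ vanishes on $\Hp_l$ while $\rd_y^2\psi$ does not, the content of the statement is the $y\to0$ limit of the pairing against the test function $\tau_{\xi,ml}(v_-,\theta)\,e^{im\varphi_-}$, and Proposition \ref{PropLimitDis} guarantees this limit equals the same integral evaluated at $y=0$, that is, on $\Hp_l$. The problem thus reduces to evaluating
\begin{equation*}
\int_{\R}\int_{\Sp^2}\rd_y^2\psi|_{\Hp_l}(v_-,\theta,\varphi_-)\,\overline{\tau_{\xi,ml}(v_-,\theta)\cdot e^{im\varphi_-}}\,\vols\,dv_-
\end{equation*}
and establishing the asserted regularity of the resulting coefficient functions.

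Second, I would split $\rd_y^2\psi|_{\Hp_l}$ as in \eqref{EqSeparatePsi} into the piece $\Omega$ — which agrees with $\rd_y^2\psi|_{\Hp_l}$ for $v_-\geq0$ and decays exponentially towards $\Sp^2_b$ as $v_-\to-\infty$ by virtue of $V^-_{r_+}=e^{\kappa_+v_-}$ together with the bound $|\Omega|\lesssim V^-_{r_+}$ — and the remaining term carrying the bifurcation-sphere value $\rd_y^2\psi|_{\Hp_l}(0,\cdot)$ times $\mathbbm{1}_{(-\infty,0)}(v_-)$. For $\Omega$ the exponential decay plus assumption \eqref{AssumpLHpWeaker} (via \eqref{EqRPsiNotWeight}) permit Fubini/Plancherel, identifying its contribution as $\int_\R(\widecheck{\Omega^-})_{ml}(\omega)\overline{\xi(\omega)}\,d\omega$ with $\Omega^-(v_-,\cdot)=\Omega(-v_-,\cdot)$ absorbing the sign convention; the weighted bound $\int\int(1+|v_-|^{q_l})|\Omega^-|^2<\infty$ gives $\rd_\omega^q\widecheck{\Omega^-}\in L^2_\omega L^2_{\Sp^2}$ for $0\leq q\leq q_l/2$, and Proposition \ref{PropCharSlowDecay} transfers this to the $\ell^2_{lm}$-regularity of $(\widecheck{\Omega^-})_{ml}$ near $\omega=0$.

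Third, and more delicately, I would compute the bifurcation-sphere contribution. Using $\Phi_{r_+}=\varphi_-+\omega_+v_--\phi_+(r_+)$ on $\Hp_l$ and performing the $\theta,\varphi_-$ integrals produces the scalar $\Psi_{ml}(\omega)$; the $v_-$-integral over $(-L,0)$ then yields the oscillatory factor $\frac{i}{\omega-\omega_+m}[1-e^{iL(\omega-\omega_+m)}]$. The real work is the $L\to\infty$ limit: splitting the $\omega$-integral at $|\omega-\omega_+m|\leq\delta$, I would extract via the Dirichlet integral $\int_0^\infty\frac{\sin\hat\omega}{\hat\omega}\,d\hat\omega=\frac{\pi}{2}$ and the smoothness of $\overline\xi\,\Psi_{ml}$ the delta-type term $\sqrt{\pi/2}\,\overline{\xi(\omega_+m)}\Psi_{ml}(\omega_+m)$, while on the complementary region Riemann--Lebesgue annihilates the $e^{iL(\cdot)}$ piece and leaves the principal value of $\frac{i\Psi_{ml}(\omega_+m)}{\omega-\omega_+m}$ together with the smooth remainder $i\mathring\Psi_{ml}(\omega)$ of \eqref{EqDefMathringPsi}. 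Sending $\delta\to0$ and multiplying by $e^{-im\phi_+(r_+)}$ assembles the displayed identity.

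The main obstacle I anticipate is the regularity claim \eqref{EqClaimPsi} for $\mathring\Psi_{ml}$ near $\omega=0$. The bound on $\rd_\omega^q\Psi_{ml}$ follows by expanding $\rd_\omega^q S^{[s]}_{ml}$ in the $\omega$-basis $S^{[s]}_{ml'}$ and combining the uniform estimate \eqref{EqPropLittleL2} of Proposition \ref{PropEstimatesDerivativesEigenfunctions} with Cauchy--Schwarz. For $m\neq0$ this is immediate, since $\omega=0$ lies away from the singular point $\omega_+m$ and one simply differentiates $\mathring\Psi_{ml}=(\Psi_{ml}(\omega)-\Psi_{ml}(\omega_+m))/(\omega-\omega_+m)$ on $(-\varepsilon,\varepsilon)$. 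The subtle case is $m=0$, where $\omega_+m=0$ coincides with the frequency of interest; here I would write $\mathring\Psi_{0l}(\omega)=\int_0^1\rd_\omega\Psi_{0l}(\tau\omega)\,d\tau$, so that $\rd_\omega^q\mathring\Psi_{0l}(\omega)=\int_0^1\rd_\omega^{q+1}\Psi_{0l}(\tau\omega)\,\tau^q\,d\tau$, and then bound the $\ell^2_l$-norm uniformly on $(-\varepsilon,\varepsilon)$ by Minkowski and \eqref{EqPropLittleL2}, the $\tau$- and $\omega$-integrations being harmless because the resulting bound is independent of $\tau\omega$. Collecting the $\Omega$- and bifurcation-sphere contributions yields the proposition.
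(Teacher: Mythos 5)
Your proposal is correct and follows essentially the same route as the paper: the decomposition \eqref{EqSeparatePsi}, Fubini/Plancherel for the $\Omega$-part combined with Proposition \ref{PropCharSlowDecay}, the Dirichlet-integral/Riemann--Lebesgue evaluation of the bifurcation-sphere contribution producing the $\sqrt{\pi/2}$-term, the principal value, and $\mathring{\Psi}_{ml}$, and finally the integral representation $\mathring{\Psi}_{0l}(\omega)=\int_0^1 \rd_\omega \Psi_{0l}(\tau\omega)\,d\tau$ for the delicate $m=0$ case of the regularity claim. The only cosmetic difference is your opening appeal to Proposition \ref{PropLimitDis}, which is not actually needed here, since the proposition concerns the integral on $\Hp_l$ itself rather than the $y\to 0$ limit (that limit is only combined with this evaluation afterwards, in deriving \eqref{EqFinalExprAHPL}).
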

Let us remark that we only need the statement of this proposition for test functions $\xi$ which are supported away from $\omega = \omega_+m$. This would slightly shorten the proof -- the delta distribution term would be absent. Moreover, we only need the statement for the mode $m_0l_0$.
We next evaluate the left hand side of \eqref{EqPropLimitDis}.

\begin{proposition}\label{PropLimitHPL}
Under the assumptions of Section \ref{SecAssumptions} and for  $\xi \in C^\infty_0(\R \setminus \{ \omega_+m\}, \C)$ we have
\begin{equation*}
\begin{split}
\lim_{y \to 0}\int_{\R} \int_{\Sp^2} \rd_y^2 &\psi (v_-, y, \theta, \varphi_-) \overline{\tau_{\xi, ml}(v_-, \theta) \cdot e^{im \varphi_-}} \, \vols dv_- \\
&= \int_\R a_{\Hp_l, ml} (\omega) \Big(\frac{e^{- 2 \kappa_+ F_+(r_+)}}{r_+ - r_-}\Big)^{i \frac{\omega - \omega_+ m}{\kappa_+}} (\tilde{\gamma} - 1)\tilde{\gamma} e^{- 2 im \phi_+(r_+)} \overline{\xi(\omega)} \, d \omega \;,
\end{split}
\end{equation*}
where $F_+$ and $\phi_+$ are as in Sections \ref{SecRelTwoPhis} and \ref{SecDelta}.
\end{proposition}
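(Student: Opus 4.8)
The plan is to reduce the two-dimensional pairing on the left-hand side to a single $\omega$-integral and then pass to the limit $y\to 0$ directly on the separated representation \eqref{EqRecallExp}. First I would convert that representation of $\rd_y^2\psi$ from $(v_+,\varphi_+)$ to $(v_-,\varphi_-)$ coordinates on the fixed-$r$ surface. Using $v_+ = 2r^* - v_-$ and $\varphi_+ = \varphi_- + 2\overline{r}$, together with $\overline{r} = \omega_+ r^* - \phi_+$ from Section \ref{SecRelTwoPhis}, one has $Y^{[s]}_{m'l'}(\theta,\varphi_+;\omega) = Y^{[s]}_{m'l'}(\theta,\varphi_-;\omega)\,e^{2im'\overline{r}}$ and $e^{-i\omega v_+} = e^{-2i\omega r^*}e^{i\omega v_-}$, so that
\[
\rd_y^2\psi = \frac{1}{\sqrt{2\pi}}\int_\R \sum_{m',l'} \rd_y^2\widecheck{\psi}_{m'l'}(y;\omega)\, e^{2im'\overline{r}}e^{-2i\omega r^*}\, Y^{[s]}_{m'l'}(\theta,\varphi_-;\omega)\,e^{i\omega v_-}\,d\omega .
\]
This representation holds in $L^2_{v_-}L^2_{\Sp^2}$: it is obtained by differentiating \eqref{EqThmSepVar1} twice under the integral (justified by the uniform pointwise bound \eqref{EqCorSobolevNoShift} for $\rd_r\psi,\rd_r^2\psi$, exactly as in the proof of Theorem \ref{ThmSeparationVariables}), and $\int_\R\sum_{m',l'}|\rd_y^2\widecheck{\psi}_{m'l'}(y;\omega)|^2\,d\omega = \|\rd_y^2\psi(\cdot,y,\cdot)\|^2_{L^2_{v_-}L^2_{\Sp^2}}<\infty$ by Corollary \ref{CorollaryNoShift} (the phase factors have modulus one). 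Applying Plancherel in $v_-$ to the pairing and then using the $L^2(\Sp^2)$-orthonormality of $\{Y^{[s]}_{ml}(\omega)\}_{m,l}$ \emph{at the single frequency $\omega$}, the pairing collapses cleanly to
\[
\int_{\R} \rd_y^2\widecheck{\psi}_{ml}(y;\omega)\, e^{2im\overline{r}(r)}e^{-2i\omega r^*(r)}\,\overline{\xi(\omega)}\,d\omega ,\qquad r=r(y).
\]

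On $\operatorname{supp}\xi$, which is a compact subset of $\R\setminus\{\omega_+ m\}$, I may use \eqref{EqDefAA} to split $\widecheck{\psi}_{ml}=a_{\Hp_r,ml}A_{\Hp_r,ml}+a_{\Hp_l,ml}A_{\Hp_l,ml}$, with $a_{\Hp_r,ml},a_{\Hp_l,ml}$ continuous (Lemma \ref{LemRegAA}) and the Frobenius branches analytic there (Proposition \ref{PropFrobSolutions}). Rewriting the phase via $e^{2im\overline{r}}e^{-2i\omega r^*}=e^{-2i(\omega-\omega_+ m)r^*}e^{-2im\phi_+(r)}$ and $r^*=\tfrac{1}{2\kappa_+}\log(r_+-r)+F_+(r)$ from Section \ref{SecDelta}, the $A_{\Hp_r}$ contribution carries the factor $(r_+-r)^{-i(\omega-\omega_+ m)/\kappa_+}=e^{-i(\omega-\omega_+m)\log(r_+-r)/\kappa_+}$, with $\log(r_+-r)\to-\infty$ as $y\to 0$; since $\rd_y^2A_{\Hp_r,ml}(y;\omega)\to 2a_2(\omega)$ is bounded and the remaining amplitude converges in $L^1(\operatorname{supp}\xi)$, the Riemann--Lebesgue lemma gives that this term vanishes in the limit.

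For the $A_{\Hp_l}$ branch I would use $A_{\Hp_l,ml}(y;\omega)=y^{1-\tilde{\gamma}}\sum_{j\geq0}b_j y^j$, whence $\rd_y^2A_{\Hp_l,ml}(y;\omega)=y^{-1-\tilde{\gamma}}\big[(1-\tilde{\gamma})(-\tilde{\gamma})+y\,\widetilde R(y;\omega)\big]$ with $\widetilde R$ bounded on $\operatorname{supp}\xi\times[0,\tfrac12]$. The crucial point is the algebraic identity $-1-\tilde{\gamma}=\tfrac{4iMr_+}{r_+-r_-}(\omega-\omega_+ m)=\tfrac{i}{\kappa_+}(\omega-\omega_+ m)$, which uses $2Mr_+=r_+^2+a^2$; writing $y^{-1-\tilde{\gamma}}=(r_+-r)^{-1-\tilde{\gamma}}(r_+-r_-)^{1+\tilde{\gamma}}$, the singular factor $(r_+-r)^{-1-\tilde{\gamma}}=e^{\,i(\omega-\omega_+m)\log(r_+-r)/\kappa_+}$ exactly cancels the $\log(r_+-r)$ oscillation from $e^{-2i(\omega-\omega_+m)r^*}$, leaving only $(r_+-r_-)^{-i(\omega-\omega_+m)/\kappa_+}e^{-2i(\omega-\omega_+m)F_+(r)}e^{-2im\phi_+(r)}$, which is bounded and converges pointwise as $y\to0$. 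Dominated convergence (with dominating bound from the continuity of $a_{\Hp_l,ml}$, $F_+$, $\phi_+$ and $\widetilde R$) kills the $y\widetilde R$ term and sends $r\to r_+$, producing
\[
\int_\R a_{\Hp_l,ml}(\omega)\,(\tilde{\gamma}-1)\tilde{\gamma}\,(r_+-r_-)^{-i(\omega-\omega_+m)/\kappa_+}e^{-2i(\omega-\omega_+m)F_+(r_+)}e^{-2im\phi_+(r_+)}\,\overline{\xi(\omega)}\,d\omega .
\]
The claimed form then follows from $(1-\tilde{\gamma})(-\tilde{\gamma})=(\tilde{\gamma}-1)\tilde{\gamma}$ and the repackaging $(r_+-r_-)^{-i(\omega-\omega_+m)/\kappa_+}e^{-2i(\omega-\omega_+m)F_+(r_+)}=\big(\tfrac{e^{-2\kappa_+F_+(r_+)}}{r_+-r_-}\big)^{i(\omega-\omega_+m)/\kappa_+}$.

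I expect the \emph{main obstacle} to be the rigorous justification of the first reduction, namely interchanging the $\omega$-integration with the $v_-$, $\theta$ and $\varphi_-$ integrations: a naive Fubini argument fails because the frequency $\omega$ carried by $\psi$ and the frequency $\omega'$ carried by the test function $\tau_{\xi,ml}$ do not coincide, and the $\theta$-integral $\int S^{[s]}_{ml'}(\cos\theta;\omega)S^{[s]}_{ml}(\cos\theta;\omega')\,d\cos\theta$ is \emph{not} an orthogonality relation until the $v_-$-integration has been performed first to force $\omega'=\omega$. Routing the computation through $L^2$-Plancherel in $v_-$ (rather than pointwise interchange) is precisely what makes the $v_-$-integration happen first and renders the subsequent spheroidal orthonormality legitimate; the remaining steps (Riemann--Lebesgue for the $A_{\Hp_r}$ branch and the power-cancellation plus dominated convergence for the $A_{\Hp_l}$ branch) are then comparatively routine, the conceptual heart being the exact cancellation of the $y$-dependence enforced by $-1-\tilde{\gamma}=\tfrac{i}{\kappa_+}(\omega-\omega_+ m)$.
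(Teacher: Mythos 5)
Your proposal is correct, and its core coincides with the paper's argument: after reducing the pairing to $\int_{\R}\rd_y^2\widecheck{\psi}_{ml}(y;\omega)\,e^{-2ir^*(\omega-\omega_+m)}e^{-2im\phi_+(r)}\,\overline{\xi(\omega)}\,d\omega$, the paper also splits into the two Frobenius branches on $\operatorname{supp}\xi$ (legitimate since $\xi$ is supported away from $\omega_+m$), kills the $A_{\Hp_r,ml}$ contribution by Riemann--Lebesgue as $r^*\to-\infty$ (writing $A_{\Hp_r,ml}''=2a_2+\mathcal{O}(y)$ uniformly on $\operatorname{supp}\xi$), and obtains the $A_{\Hp_l,ml}$ limit from precisely your cancellation $y^{-1-\tilde{\gamma}}=y^{i(\omega-\omega_+m)/\kappa_+}$ against the phase $e^{-2ir^*(\omega-\omega_+m)}$, followed by dominated convergence. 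Where you differ is the initial reduction. The paper never expands $\rd_y^2\psi$ in its separated representation at this stage: it substitutes the definition of $\tau_{\xi,ml}$, changes variables on the fixed-$r$ slice from $(v_-,\varphi_-)$ to $(v_+,\varphi_+)$, and applies Fubini (justified exactly as in Theorem \ref{ThmSeparationVariables}, via $\rd_y^2\psi\in L^1_{v_+}L^2_{\Sp^2}$ from Corollary \ref{CorollaryNoShift} and boundedness of $S^{[s]}_{ml}$), so that the inner $(v_+,\Sp^2)$-integral is literally the Teukolsky transform $(\widecheck{\rd_y^2\psi})_{ml}(y;\omega)=\rd_y^2\widecheck{\psi}_{ml}(y;\omega)$; no Plancherel and no spheroidal orthonormality is invoked. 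Your route --- expanding $\rd_y^2\psi$ via the separation theorem, rewriting that representation in $(v_-,\varphi_-)$, and then using Plancherel in $v_-$ together with fixed-$\omega$ orthonormality of $Y^{[s]}_{ml}(\omega)$ --- reaches the same intermediate formula and is equally rigorous, resting on the same estimates. One remark: the obstacle you single out as the ``main'' one (two distinct frequencies obstructing a naive Fubini) is an artifact of your choice to put $\psi$ into frequency form first; in the paper's arrangement only the test function carries an $\omega$-integral, so the interchange is a routine Fubini and the double-frequency issue never arises.
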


It is important for the validity of the proposition as stated that one chooses  the support  of $\xi$ away from $\omega_+ m$.\footnote{\label{FootnDelta} One can evaluate the limit also for $\xi$ which are supported on $\omega_+m$; one then picks up a delta distribution term at $\omega_+ m$. With additional work it can be shown that it exactly agrees with the delta distribution term appearing in Proposition \ref{PropAHPL}, i.e., $a_{\Hp_l, ml}(\omega)$ does not contain a delta distribution, but only poles. This, however, is not needed for the method of proof of the main theorem chosen in this paper.}
\begin{proof}
Recall that we have $\varphi_- = \varphi_+ - 2 \overline{r}$ and $v_- = 2r^* - v_+$ and $\overline{r} = \omega_+ r^* - \phi_+(r)$. We thus obtain
\begin{equation}\label{EqTwoSummands1}
\begin{split}
\int_{\R} &\int_{\Sp^2} \rd_y^2 \psi (v_-, y, \theta, \varphi_-) \underbrace{\overline{\tau_{\xi, ml}(v_-, \theta) \cdot e^{im \varphi_-}}}_{= \frac{1}{\sqrt{2\pi}} \int_{\R_\omega} \overline{\xi(\omega)} \Sml(\cos \theta; \omega) e^{-i \omega v_-} e^{-im \varphi_-} \, d \omega} \, \vols dv_-  \\
&= \int_{\R_{v_+}} \int_{\Sp^2} \rd_y^2 \psi (v_+, y, \theta, \varphi_+) \, \frac{1}{\sqrt{2\pi}} \int_{\R_\omega} \overline{\xi(\omega)} \Sml(\cos \theta; \omega) e^{i \omega v_+} e^{-im \varphi_+} e^{-2ir^*(\omega - m \omega_+) }e^{-2im \phi_+(r)} \, d \omega \vols dv_+ \\
&= \int_{\R} (\widecheck{\rd_y^2 \psi})_{ml}(y; \omega) \overline{\xi(\omega)} e^{-2ir^*(\omega - \omega_+m)} e^{-2im \phi_+(r)} \, d \omega \\
&= \int_{\R} \rd_y^2 \widecheck{\psi}_{ml}(y; \omega) \overline{\xi(\omega)} e^{-2ir^*(\omega - \omega_+m)} e^{-2im \phi_+(r)} \, d \omega  \\
&= \int_{\R} [ a_{\Hp_r, ml}(\omega) A_{\Hp_r, ml}''(y ; \omega) + a_{\Hp_l, ml} (\omega) A_{\Hp_l, ml}'' (y; \omega) ]\overline{\xi(\omega)} e^{-2ir^*(\omega - \omega_+m)} e^{-2im \phi_+(r)} \, d \omega  \;,
\end{split}
\end{equation}
where we have used the same kind of reasoning as in the proof of Theorem \ref{ThmSeparationVariables}, $'$ denotes $\frac{d}{dy}$, and we consider $r^*$ and $r$ as functions of $y$.  Recall from Lemma \ref{LemRegAA} that $a_{\Hp_r, ml}, a_{\Hp_l, ml} \in C^0(\R \setminus\{ \omega_+m\}, \C)$. Since we have chosen the support of $\xi$ to be disjoint from $\omega_+m$ it is immediate that we can evaluate the integrals of the two summands separately. We begin with the first one.

We have $A_{\Hp_r, ml}''(y; \omega) = \sum_{j = 0}^\infty j(j-1) a_j(\omega, m, l) y^{j -2} = 2a_2(\omega, m, l) + \mathcal{O}(y)$. Note that the $\mathcal{O}(y)$ is uniform in $\omega$ on the support of $\xi$; and $a_2(\omega, m,l)$ is also uniformly bounded on $\mathrm{supp}(\xi)$. Moreover we have $r^*(y) \to - \infty$ for $y \to 0$. We thus obtain from Riemann-Lebesgue and direct estimation
\begin{equation*}
\begin{split}
\lim_{y \to 0}\int_{\R} &a_{\Hp_r, ml}(\omega) A_{\Hp_r, ml}''(y ; \omega) \overline{\xi(\omega)} e^{-2ir^*(\omega - \omega_+m)} e^{-2im \phi_+(r)} \, d \omega  \\
&= \lim_{y \to 0}\int_{\R} a_{\Hp_r, ml}(\omega) \big[2a_2(\omega, m, l) + \mathcal{O}(y)\big] \overline{\xi(\omega)} e^{-2ir^*(\omega - \omega_+m)} e^{-2im \phi_+(r)} \, d \omega  \\
&=0 \;.
\end{split}
\end{equation*}
In order to evaluate the second summand in \eqref{EqTwoSummands1} we first note that for $s=2$
\begin{equation*}
A_{\Hp_l, ml}''(y; \omega) = \sum_{j=0}^\infty (j+1 - \tilde{\gamma})(j - \tilde{\gamma}) b_j(\omega, m, l) y^{j-1-\tilde{\gamma}} = \underbrace{y^{\frac{4iMr_+}{r_+ - r_-}(\omega - \omega_+m)}}_{= y^{\frac{i (\omega - \omega_+m)}{\kappa_+}}} \sum_{j = 0}^\infty \underbrace{(j+1 - \tilde{\gamma})(j - \tilde{\gamma}) b_j(\omega, m, l)}_{=: \tilde{b}_j(\omega,m,l)} y^j \;.
\end{equation*}
Also recall that $y = \frac{r_+ - r}{r_+ - r_-}$ and, from Section \ref{SecDelta}, $r_+ - r= e^{2 \kappa_+ r^*} e^{- 2 \kappa_+ F_+(r)}$. This gives $$y^{i \frac{ \omega - \omega_+ m}{\kappa_+}} = \Big(\frac{e^{-2 \kappa_+ F_+(r)}}{r_+ - r_-}\Big)^{i \frac{\omega - \omega_+m}{\kappa_+}} \cdot e^{2ir^*(\omega - \omega_+m)} \;.$$
Thus
\begin{equation*}
\begin{split}
\lim_{y \to 0}\int_{\R}  &a_{\Hp_l, ml} (\omega) A_{\Hp_l, ml}'' (y; \omega) \overline{\xi(\omega)} e^{-2ir^*(\omega - \omega_+m)} e^{-2im \phi_+(r)} \, d \omega \\
&=\lim_{y \to 0}\int_{\R} a_{\Hp_l, ml} (\omega) \Big(\frac{e^{-2 \kappa_+ F_+(r)}}{r_+ - r_-}\Big)^{i \frac{\omega - \omega_+m}{\kappa_+}} \sum_{j = 0}^\infty \tilde{b}_j(\omega, m ,l) y^j \cdot \overline{\xi(\omega)} e^{-2im \phi_+(r)} \, d \omega \\
&= \int_\R a_{\Hp_l, ml} (\omega) \Big(\frac{e^{- 2 \kappa_+ F_+(r_+)}}{r_+ - r_-}\Big)^{i \frac{\omega - \omega_+ m}{\kappa_+}} \underbrace{\tilde{b}_0(\omega,m,l)}_{=(1- \tilde{\gamma} )(-\tilde{\gamma})} e^{- 2 im \phi_+(r_+)} \overline{\xi(\omega)} \, d \omega \;.
\end{split}
\end{equation*}
\end{proof}

It now follows from Propositions \ref{PropLimitDis}, \ref{PropAHPL}, and \ref{PropLimitHPL} that for $\xi \in C^\infty_0(\R \setminus \{ \omega_+m\}, \C)$ we have
\begin{equation*}
\begin{split}
 \int_\R a_{\Hp_l, ml} (\omega) &\Big(\frac{e^{- 2 \kappa_+ F_+(r_+)}}{r_+ - r_-}\Big)^{i \frac{\omega - \omega_+ m}{\kappa_+}} (\tilde{\gamma} - 1)\tilde{\gamma} e^{- 2 im \phi_+(r_+)} \overline{\xi(\omega)} \, d \omega \\
 &= \int_\R \Big((\widecheck{\Omega^-})_{ml}(\omega) + e^{-im\phi_+(r_+)} \big[ \frac{1}{\sqrt{2\pi}} \frac{i \Psi_{ml}(\omega_+m)}{\omega - \omega_+m} + i \mathring{\Psi}_{ml}(\omega) \big] \Big) \overline{\xi(\omega)} \, d \omega \;.
\end{split}
\end{equation*}
Note that  $(\widecheck{\Omega^-})_{ml}(\omega)$ is continuous in $\omega$ (this follows as in the proof of Lemma \ref{LemPsiFTC1}). Thus, all terms multiplying $\overline{\xi(\omega)}$ on each side are (at least) continuous in $\omega$ away from $\omega_+m$. We thus conclude that for $\omega \neq \omega_+m$
\begin{equation}
\label{EqFinalExprAHPL}
\begin{split}
a_{\Hp_l, ml} (\omega) = (r_+ - r_-)^{i \frac{\omega - \omega_+m}{\kappa_+}} &e^{2i F_+(r_+)(\omega - \omega_+m)} e^{2im \phi_+(r_+)} \frac{1}{\tilde{\gamma}(\tilde{\gamma} -1)} \\
&\cdot  \Big((\widecheck{\Omega^-})_{ml}(\omega) + e^{-im\phi_+(r_+)} \big[ \frac{1}{\sqrt{2\pi}} \frac{i \Psi_{ml}(\omega_+m)}{\omega - \omega_+m} + i \mathring{\Psi}_{ml}(\omega) \big] \Big)  \;.
\end{split}
\end{equation}

\begin{corollary} \label{CorConcAHL}
Under the assumptions from Section \ref{SecAssumptions} there exists $\varepsilon_0 >0$ such that
\begin{enumerate}
\item for $m \neq 0$ we have $\rd_\omega^q a_{\Hp_l, ml} \in L^2_{(-\varepsilon_0, \varepsilon_0)} \ell^2_{\substack{m,l \\ m \neq 0}}$ for all $0 \leq q \leq \frac{q_l}{2}$, $q \in \N_0$.
\item for $m = 0$ we have that $\omega \cdot a_{\Hp_l, 0l}$ extends continuously to $\omega =0$ and, moreover, we have $\rd_\omega^q (\omega a_{\Hp_l, 0l}) \in L^2_{(-\varepsilon_0, \varepsilon_0)} \ell^2_{l} $ for all $0 \leq q \leq \frac{q_l}{2}$, $q \in \N_0$.
\end{enumerate}
\end{corollary}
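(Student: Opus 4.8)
The plan is to read the claimed regularity directly off the closed form \eqref{EqFinalExprAHPL}. Write $a_{\Hp_l, ml}(\omega) = P_{m}(\omega) \cdot B_{ml}(\omega)$, where $P_{m}(\omega) := (r_+-r_-)^{i(\omega-\omega_+m)/\kappa_+}\, e^{2iF_+(r_+)(\omega-\omega_+m)}\, e^{2im\phi_+(r_+)}\,\frac{1}{\tilde\gamma(\tilde\gamma-1)}$ collects the explicit prefactor (which depends on $m$ but not on $l$) and $B_{ml}(\omega) := (\widecheck{\Omega^-})_{ml}(\omega) + e^{-im\phi_+(r_+)}\big[ \tfrac{i}{\sqrt{2\pi}}\tfrac{\Psi_{ml}(\omega_+m)}{\omega - \omega_+m} + i\mathring\Psi_{ml}(\omega)\big]$. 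First I would fix $0 < \varepsilon_0 < \min\{\varepsilon, |\omega_+|/2\}$ with $\varepsilon$ as in Proposition \ref{PropAHPL}, so that $(-\varepsilon_0,\varepsilon_0)$ is disjoint from $\{\omega_+ m : m \neq 0\}$ and in fact $|\omega - \omega_+ m| \geq \frac{|\omega_+|}{2}|m|$ there for all $m\neq 0$. I would then check that $P_m$ and each of its $\omega$-derivatives are bounded uniformly in $m$ and in $\omega \in (-\varepsilon_0,\varepsilon_0)$: the three exponential factors have modulus one and depend on $\omega$ only through the $m$-independent constants $\tfrac{\log(r_+-r_-)}{\kappa_+}$ and $2F_+(r_+)$, while for $s = +2$ one has $\tilde\gamma = -1 - \tfrac{4iMr_+}{r_+-r_-}(\omega-\omega_+m)$, so that $\Rea\tilde\gamma \equiv -1$ forces $|\tilde\gamma(\tilde\gamma-1)| \geq 2$ and all $\omega$-derivatives of $\tfrac{1}{\tilde\gamma(\tilde\gamma-1)}$ remain bounded uniformly in $m$ (with additional $|m|^{-2}$ decay that I will not even need). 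By the Leibniz rule it then suffices to control the $\omega$-derivatives of $B_{ml}$ in the relevant $\ell^2$-spaces.

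For $m\neq 0$ (part (1)), Proposition \ref{PropAHPL} already supplies $\rd_\omega^q (\widecheck{\Omega^-})_{ml}, \rd_\omega^q \mathring\Psi_{ml} \in L^2_{(-\varepsilon_0,\varepsilon_0)}\ell^2_{ml}$ for $0\leq q\leq q_l/2$, and restricting the summation to $m\neq 0$ is harmless. The only genuinely new term is the simple pole. Since $\Psi_{ml}(\omega_+m)$ is a constant in $\omega$, one has $\rd_\omega^q\tfrac{\Psi_{ml}(\omega_+m)}{\omega-\omega_+m} = (-1)^q q!\,\Psi_{ml}(\omega_+m)\,(\omega-\omega_+m)^{-(q+1)}$, which on $(-\varepsilon_0,\varepsilon_0)$ is bounded by $C_q\,|\Psi_{ml}(\omega_+m)|\,|m|^{-(q+1)}$. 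The key input — which I regard as the heart of the argument — is the $\omega$-independent Parseval identity $\sum_l|\Psi_{ml}(\omega)|^2 = \|g_m\|^2_{L^2([-1,1])}$, valid because $\{S^{[s]}_{ml}(\cdot;\omega)\}_l$ is an orthonormal basis of $L^2([-1,1])$ for every $\omega$ (this is exactly the step used in \eqref{EqBeforeOmegaInt}); here $g_m(\theta):=\int_{\Sp^1}\rd_y^2\psi|_{\Hp_l}(0,\theta,\Phi_{r_+})e^{-im\Phi_{r_+}}\,d\Phi_{r_+}$ is the $m$-th azimuthal mode of the bifurcation-sphere datum $g := \rd_y^2\psi|_{\Hp_l}(0,\cdot)$, which is smooth by Assumption \ref{AssumptionReg}. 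Evaluating at $\omega = \omega_+m$ and summing then gives $\sum_{m\neq 0,\,l}|\Psi_{ml}(\omega_+m)|^2\,|m|^{-2(q+1)} \leq \sum_{m\neq 0}\|g_m\|^2 \lesssim \|g\|^2_{L^2(\Sp^2)} < \infty$, so that after integrating the ($\omega$-bounded) summand over the finite interval the pole contribution lies in $L^2_{(-\varepsilon_0,\varepsilon_0)}\ell^2_{\substack{m,l\\ m\neq 0}}$. Combining this with the uniform boundedness of $P_m$ and Leibniz yields part (1).

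For $m=0$ (part (2)), here $\omega_+m = 0$ and the pole becomes $\tfrac{\Psi_{0l}(0)}{\omega}$, so that $\omega\,a_{\Hp_l,0l}(\omega) = P_{0}(\omega)\big(\omega\,(\widecheck{\Omega^-})_{0l}(\omega) + \tfrac{i}{\sqrt{2\pi}}\Psi_{0l}(0) + i\,\omega\,\mathring\Psi_{0l}(\omega)\big)$. The constant term contributes $\sum_l|\Psi_{0l}(0)|^2 = \|g_0\|^2 < \infty$ and has vanishing derivatives; the terms $\omega(\widecheck{\Omega^-})_{0l}$ and $\omega\mathring\Psi_{0l}$ are handled by Leibniz, each $\omega$-derivative landing on a factor already in $L^2_{(-\varepsilon_0,\varepsilon_0)}\ell^2_l$ by Proposition \ref{PropAHPL} (using that $q,\,q-1 \leq q_l/2$). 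Since every factor is continuous at $\omega=0$ — continuity of $(\widecheck{\Omega^-})_{0l}$ following as in Lemma \ref{LemPsiFTC1} — and the pole has been cancelled by the factor $\omega$, the product $\omega\,a_{\Hp_l,0l}$ extends continuously to $\omega=0$, and the asserted $L^2$-bounds on its derivatives follow exactly as above. The only real obstacle throughout is the pole term: one must recognise that its numerator, although frozen at the $m$-dependent frequency $\omega_+m$, still admits the uniform $\ell^2$-bound coming from Parseval in $l$ and in $\Phi_{r_+}$, which is precisely what makes the sum over $m,l$ converge.
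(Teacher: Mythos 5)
Your proof is correct and takes essentially the same route as the paper, whose own proof of Corollary \ref{CorConcAHL} is the one-liner that it is immediate from \eqref{EqFinalExprAHPL} and Proposition \ref{PropAHPL}. Your write-up simply supplies the details that make it immediate — the uniform boundedness (in $m$ and $\omega$) of the unimodular prefactor and of $1/(\tilde\gamma(\tilde\gamma-1))$ together with their $\omega$-derivatives, the separation of the pole $\omega_+m$ from $(-\varepsilon_0,\varepsilon_0)$ when $m\neq 0$, its cancellation by the factor $\omega$ when $m=0$, and the $\omega$-independent Parseval bound $\sum_l|\Psi_{ml}(\omega)|^2=\|g_m\|^2_{L^2([-1,1])}$, which is the same device the paper itself employs in \eqref{EqBeforeOmegaInt} — all of which sit squarely inside the paper's framework.
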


\begin{proof}
This is immediate from \eqref{EqFinalExprAHPL} and Proposition \ref{PropAHPL}.
\end{proof}

\section{Proof of the main theorems}
\label{SecPfMainThm}

\subsection{Proof of Theorem \ref{Thm1}}

We are now in a position to prove the main theorem.
\begin{proof}[Proof of Theorem \ref{Thm1}:]
Recall from Proposition \ref{PropExtCauchyL2} that the $L^2(\R \times \Sp^2)$-limit of $\psi(v_+,r, \theta, \varphi_+)$ for $r \to r_-$ exists and that we labelled it suggestively by $\psi(v_+, r_-, \theta, \varphi_+)$. Taking the Teukolsky transform we have
$$\psi(v_+, r_-, \theta, \varphi_+) = \frac{1}{\sqrt{2 \pi}} \int_{\R} \sum_{m,l} \widecheck{\psi}_{ml}(r_-; \omega) Y^{[s]}_{ml}(\theta, \varphi_+; \omega) e^{-i \omega v_+} \, d \omega \;,$$
where $\widecheck{\psi}_{ml}(r_-; \omega)$ is continuous in $\omega$ by \eqref{EqWeightedPsiCauchy}, cf.\ the proof of Lemma \ref{LemPsiFTC1}.\footnote{Note that $\widecheck{\psi}_{ml}(r_-; \omega)$ is a priori not related in any way to $\widecheck{\psi}_{ml}(r; \omega)$ for $r \in (r_-,r_+)$. The choice of terminology is justified by hindsight. However, it should not confuse the reader into believing that there is nothing to show in the following.} By Proposition \ref{PropExtCauchyL2} and Plancherel we have for $r \to r_-$
$$0 \leftarrow || \psi(v_+, r, \theta, \varphi_+) - \psi(v_+, r_-, \theta, \varphi_+)||^2_{L^2(\R \times \Sp^2)} = \int_{\R} \sum_{m,l} || \widecheck{\psi}_{ml}(r ; \omega) - \widecheck{\psi}_{ml}(r_-; \omega)|^2 \, d\omega \;. $$
Fix $m$ and $l$. Then there exists a sequence $r_n \to r_-$ such that $\widecheck{\psi}_{ml}(r_n; \omega) \to \widecheck{\psi}_{ml}(r_-; \omega)$ for almost all $\omega \in \R$.

It now follows from \eqref{EqDefAA} and \eqref{EqDefTransReflCoeff} that for $\omega \neq \omega_+m, \omega_-m$
\begin{equation*}
\begin{split}
\widecheck{\psi}_{ml}(x; \omega) &= \big[\mathfrak{R}_{\Hp_l, ml}(\omega) a_{\Hp_l, ml}(\omega) + \mathfrak{T}_{\Hp_r, ml}(\omega) a_{\Hp_r, ml}(\omega)\big] B_{\CH_l, ml}(x ; \omega) \\
&\quad + \big[\mathfrak{T}_{\Hp_l, ml}(\omega) a_{\Hp_l, ml}(\omega) + \mathfrak{R}_{\Hp_r, ml}(\omega) a_{\Hp_r, ml}(\omega) \big] B_{\CH_r, ml}(x; \omega) \;.
\end{split}
\end{equation*}
The asymptotics of the Frobenius solutions from Proposition \ref{PropFrobSolutions} imply  $\lim_{n \to \infty} \widecheck{\psi}_{ml}(x(r_n); \omega) = \mathfrak{R}_{\Hp_l, ml}(\omega) a_{\Hp_l, ml}(\omega) + \mathfrak{T}_{\Hp_r, ml}(\omega) a_{\Hp_r, ml}(\omega)$ for $\omega \neq \omega_+m, \omega_-m$ and thus we obtain
\begin{equation}\label{EqPsiCheckCH}
\widecheck{\psi}_{ml}(r_-; \omega) = \mathfrak{R}_{\Hp_l, ml}(\omega) a_{\Hp_l, ml}(\omega) + \mathfrak{T}_{\Hp_r, ml}(\omega) a_{\Hp_r, ml}(\omega) 
\end{equation}
for almost every $\omega \in \R \setminus\{ \omega_+m, \omega_-m\} $.
We claim that there is an $\varepsilon >0$ such that the right hand side is continuous for $\omega \in (-\varepsilon, \varepsilon)$. For $m \neq 0$ this follows directly from Lemma \ref{LemRegAA} and Proposition \ref{PropTRmNeq0}. For $m = 0$ Propositions \ref{PropAHP} and \ref{PropTRM0} imply that $\mathfrak{T}_{\Hp_r, 0l}(\omega) a_{\Hp_r, 0l}(\omega) $ is continuous and, moreover, Proposition \ref{PropTRM0} implies that $\mathfrak{R}_{\Hp_l, 0l}(\omega)$ is of the form $\mathfrak{R}_{\Hp_l, 0l}(\omega) =: \omega \cdot \widehat{\mathfrak{R}_{\Hp_l, 0l}}(\omega)$ with $\widehat{\mathfrak{R}_{\Hp_l, 0l}}(\omega)$ analytic for all $\omega \in \R$. Hence, Corollary \ref{CorConcAHL} implies that also $\mathfrak{R}_{\Hp_l, 0l}(\omega) a_{\Hp_l, 0l}(\omega) = \widehat{\mathfrak{R}_{\Hp_l, 0l}}(\omega) (\omega \cdot a_{\Hp_l, 0l}(\omega))$ is continuous in $\omega$. We thus obtain \eqref{EqPsiCheckCH} for all $\omega \in (-\varepsilon, \varepsilon)$.

We consider \underline{\textbf{the case} $m_0 \neq 0$} first and compute
\begin{equation}\label{EqLeib1}
\begin{split}
\rd_\omega^{p_0} \widecheck{\psi}_{m_0l_0} (r_-; \omega) &= \rd_\omega^{p_0} \big( \mathfrak{R}_{\Hp_l, m_0l_0}(\omega) a_{\Hp_l, m_0l_0}(\omega) + \mathfrak{T}_{\Hp_r, m_0l_0}(\omega) a_{\Hp_r, m_0l_0}(\omega)\big) \\
&= \sum_{q=0}^{p_0} \binom{p_0}{q} \big( \underbrace{\rd_\omega^{q} \mathfrak{R}_{\Hp_l, m_0l_0}(\omega)}_{|\cdot| \leq C} \cdot \rd_\omega^{p_0- q} a_{\Hp_l, m_0l_0}(\omega) + \underbrace{\rd_\omega^{q} \mathfrak{T}_{\Hp_r, m_0l_0}(\omega)}_{|\cdot | \leq C} \cdot \rd_\omega^{p_0 - q} a_{\Hp_r, m_0l_0}(\omega) \big) \;.
\end{split}
\end{equation}
The derivatives of the transmission and reflection coefficients are uniformly bounded on $(-\varepsilon, \varepsilon)$ by Proposition \ref{PropTRmNeq0}. It thus follows from Corollary \ref{CorConcAHL} and Corollary \ref{CorConHPR} that all terms on the right hand side of \eqref{EqLeib1}, with the exception of $\mathfrak{T}_{\Hp_r, m_0l_0}(\omega) \cdot \rd_\omega^{p_0} a_{\Hp_r, m_0l_0}(\omega)$, are in $L^2_\omega(\big(-\varepsilon_0,\varepsilon_0)\big)$ for some $\varepsilon_0 >0$. On the other hand  $\mathfrak{T}_{\Hp_r, m_0l_0}(\omega) $ is strictly bounded away from $0$ in a small neighbourhood of $\omega = 0$ by Proposition \ref{PropTRmNeq0}. It thus follows from Corollary \ref{CorConHPR} that 
$\mathfrak{T}_{\Hp_r, m_0l_0}(\omega) \cdot \rd_\omega^{p_0} a_{\Hp_r, m_0l_0}(\omega) \notin L^2_\omega\big((-\varepsilon, \varepsilon)\big)$ for any $\varepsilon >0$. Hence we obtain that $\rd_\omega^{p_0} \widecheck{\psi}_{m_0l_0} (r_-; \omega) \notin L^2_{\omega}\big((-\varepsilon, \varepsilon)\big)$ for any $\varepsilon >0$.

We proceed with \underline{\textbf{the case} $m_0 = 0$} and compute
\begin{equation}\label{EqLeib2}
\begin{split}
\rd_\omega^{p_0} \widecheck{\psi}_{0l} (r_-; \omega) &= \rd_\omega^{p_0} \big( \widehat{\mathfrak{R}_{\Hp_l, 0l}}(\omega)\big[ \omega \cdot a_{\Hp_l, 0l}(\omega) \big] + \mathfrak{T}_{\Hp_r, 0l}(\omega) a_{\Hp_r, 0l}(\omega)\big) \\
&= \sum_{q=0}^{p_0} \binom{p_0}{q} \big( \underbrace{\rd_\omega^{q} \widehat{\mathfrak{R}_{\Hp_l, 0l}}(\omega)}_{|\cdot| \leq C} \cdot \rd_\omega^{p_0- q}\big[ \omega \cdot  a_{\Hp_l, 0l}(\omega)\big] + \underbrace{\rd_\omega^{q} \mathfrak{T}_{\Hp_r, 0l}(\omega)}_{|\cdot | \leq C} \cdot \rd_\omega^{p_0 - q} a_{\Hp_r, 0l}(\omega) \big) \;.
\end{split}
\end{equation}
It follows again from Corollary \ref{CorConcAHL} and Corollary \ref{CorConHPR} that all terms on the right hand side of \eqref{EqLeib2}, with the exception of $\mathfrak{T}_{\Hp_r, 0l_0}(\omega) \cdot \rd_\omega^{p_0} a_{\Hp_r, 0l_0}(\omega)$, are in $L^2_\omega(\big(-\varepsilon_0,\varepsilon_0)\big)$ for some $\varepsilon_0 >0$. On the other hand  $\mathfrak{T}_{\Hp_r, 0l_0}(\omega) $ is strictly bounded away from $0$ in a small neighbourhood of $\omega = 0$ this time by Proposition \ref{PropTRM0}. It thus follows from Corollary \ref{CorConHPR} that 
$\mathfrak{T}_{\Hp_r, 0l_0}(\omega) \cdot \rd_\omega^{p_0} a_{\Hp_r, 0l_0}(\omega) \notin L^2_\omega\big((-\varepsilon, \varepsilon)\big)$ for any $\varepsilon >0$. Hence we obtain that $\rd_\omega^{p_0} \widecheck{\psi}_{0l_0} (r_-; \omega) \notin L^2_{\omega}\big((-\varepsilon, \varepsilon)\big)$ for any $\varepsilon >0$.

Taking the two cases together we have shown that 
\begin{equation}
\label{EqNotInL2BlowUp}
\rd_\omega^{p_0} \widecheck{\psi}_{m_0l_0} (r_-; \omega) \notin L^2_{\omega}\big((-\varepsilon, \varepsilon)\big) \; \textnormal{ for any } \;\varepsilon >0\;.
\end{equation}
We claim that this implies
\begin{equation}
\label{EqNearlyDone}
\int_{\R} \int_{\Sp^2} |v_+^{p_0} \psi(v_+, r_-, \theta, \varphi_+)|^2 \, \vols dv_+ = \infty \;.\end{equation}
If \eqref{EqNearlyDone} was finite then together with $\int_{\R}\int_{\Sp^2} | \psi(v_+, r_-, \theta, \varphi_+)|^2 \, \vols dv_+ < \infty$ from \eqref{EqWeightedPsiCauchy} we would have $\int_\R \int_{\Sp^2} |\rd_\omega^q \widecheck{\psi}( \omega, r_-, \theta, \varphi_+)|^2 \, \vols d \omega < \infty$ for all $0 \leq q \leq p_0$. Proposition \ref{PropCharSlowDecay} then gives $\int_{(-\varepsilon, \varepsilon)}\sum_{m,l} |\rd_\omega^q \widecheck{\psi}_{ml}( r_-; \omega)|^2 \, d \omega <\infty$ for all $0 \leq q \leq p_0$, where $\varepsilon>0$ is as in Proposition \ref{PropCharSlowDecay}. This clearly contradicts \eqref{EqNotInL2BlowUp} and we thus infer \eqref{EqNearlyDone}.

On the other hand by \eqref{EqWeightedPsiCauchy} we know that $$\int\limits_{-\infty}^1 \int_{\Sp^2} |v_+^{p_0} \psi(v_+, r_-, \theta, \varphi_+)|^2 \, \vols dv_+ < \infty$$ since $\frac{q_l}{2} \geq p_0$ and thus we must have 
$$\int\limits_{1}^\infty \int_{\Sp^2} |v_+^{p_0} \psi(v_+, r_-, \theta, \varphi_+)|^2 \, \vols dv_+ = \infty \;.$$
The theorem now follows from Corollary \ref{CorPropBackwa}.
\end{proof}

\subsection{Extension theorem and proof of Theorem \ref{Thm2}}
\label{SecExtension}

\begin{theorem} \label{ThmExt}
Let $v_0, v_1 \in \R$ and let $\psi \in \mathscr{I}^\infty_{[2]}(\Mun \cap \{f^+ \geq v_0\} \cap \{f^- \leq v_1\})$ be a solution of the Teukolsky equation $\mathcal{T}_{[2]} \psi = 0$ in $\Mun \cap \{f^+ \geq v_0\} \cap \{f^- \leq v_1\}$ satisfying the assumptions \eqref{AssumpP0}, \eqref{AssumpP0SphHarm}, \eqref{AssumpP0Better}, \eqref{AssumpDecayRHp} along the right event horizon\footnote{With $\Hp_r \cap \{v_+ \geq 1\}$ replaced by $\Hp \cap \{v_+ \geq v_0\}$ when appropriate.}. Then there exists a  $\chi \in \mathscr{I}^\infty_{[2]}(\Mun)$ which extends $\psi$ (i.e.\ $\chi|_{\Mun \cap \{f^+ \geq v_0\} \cap \{f^- \leq v_1\}} = \psi$) which moreover satisfies the Assumptions \ref{AssumptionReg}  and condition \eqref{AssumpLHp}.
\end{theorem}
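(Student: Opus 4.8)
The plan is to build the global extension $\chi$ by solving \emph{two} initial value problems for the Teukolsky equation and gluing them across the spacelike slice $\Sigma_1 := \{f^- = v_1\}$, never isolating a piece of $\psi$ near the bifurcation sphere (which would fail because of the effective blue-shift along $\Hp_r$). First I record the causal geometry of $\Mun$: the level sets of $f^\pm$ are spacelike, both $f^+$ and $f^-$ increase towards the future, and $f^-|_{\Hp_r} = -\infty$ while $f^-|_{\Hp_l} = v_- - (r_+-r_-)$. Hence $\Sigma_1$ is spacelike, the region $R_1 := \Mun \cap \{f^- \le v_1\}$ is its past and contains all of $\Hp_r$, the sphere $\Sp^2_b$, and the lower portion $\Hp_l \cap \{v_- \le v_1 + r_+ - r_-\}$, whereas $R_2 := \Mun \cap \{f^- \ge v_1\}$ contains the upper portion of $\Hp_l$ running to $i^+$. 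The given patch $P := \Mun \cap \{f^+ \ge v_0\} \cap \{f^- \le v_1\}$ lies in $R_1$, and one checks from Figure~\ref{FigExt} that $P$ lies in the past domain of dependence of $\Sigma_1 \cap \{f^+ \ge v_0\}$ within $\Mun$.

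Step 1 (backward evolution filling $R_1$). On $\Sigma_1 \cap \{f^+ \ge v_0\}$ the field $\psi$ and its transversal jet give smooth Cauchy data; as $\Sigma_1 \subset \mathcal{M}$ where $\Delta \ne 0$ this data may equivalently be read for $\hat\psi = \Delta^{-2}\psi$. I would extend it smoothly to all of $\Sigma_1 \cap \Mun$, choosing the extension to vanish for $f^+ \le v_0 - 1$, and then solve the backward Cauchy problem on $R_1$ using the well-posedness of the Teukolsky IVP from Appendix~\ref{CoordTransformation}. That theory produces a smooth spin $2$-weighted solution $\chi_1$ that is regular at $\Hp_r$ in the coordinates of \eqref{TeukolskyStar}, whose rescaling $\Delta^{-2}\chi_1$ is regular at $\Hp_l \setminus \Sp^2_b$ in the coordinates of \eqref{TeukolskyEquationHat}, and which is smooth at $\Sp^2_b$ in the regular coordinates of Appendix~\ref{CoordTransformation}; thus $\chi_1$ satisfies Assumption~\ref{AssumptionReg} on $R_1$. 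Since the prescribed data agree with those of $\psi$ on $\Sigma_1 \cap \{f^+ \ge v_0\}$ and $P \subseteq D^-(\Sigma_1 \cap \{f^+ \ge v_0\})$, uniqueness forces $\chi_1 = \psi$ on $P$; in particular $\chi_1|_{\Hp_r} = \psi|_{\Hp_r}$ for $v_+ \ge v_0$, so \eqref{AssumpP0}, \eqref{AssumpP0SphHarm}, \eqref{AssumpP0Better}, \eqref{AssumpDecayRHp}, which only constrain large $v_+$, are inherited.

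Step 2 (mixed evolution filling $R_2$ with compact support on $\Hp_l$). The region $R_2$ is bounded by the spacelike $\Sigma_1$ and the null $\Hp_l \cap \{f^- \ge v_1\}$. I would solve the mixed characteristic--Cauchy problem there with Cauchy data on $\Sigma_1$ supplied by the smooth solution $\chi_1$ of Step 1 and with free characteristic data prescribed for the regular field $\hat\chi$ on $\Hp_l \cap \{f^- \ge v_1\}$. I choose this free data to agree to infinite order at the corner $\Hp_l \cap \Sigma_1$ with the jet inherited from $\chi_1$ (guaranteeing smoothness and smooth matching across $\Sigma_1$) and to vanish for $v_- \ge v_0'$ with $v_0' > v_1 + r_+ - r_-$; such a function exists by a smooth interpolation (Borel) argument. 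Appendix~\ref{CoordTransformation} again yields a smooth spin $2$-weighted solution $\chi_2$ with $\Delta^{-2}\chi_2$ regular at $\Hp_l$. As $\chi_1,\chi_2$ share Cauchy data on the non-characteristic $\Sigma_1$ and both solve $\mathcal{T}_{[2]}(\cdot) = 0$, the equation determines all higher normal derivatives, so they glue to a single $\chi \in \mathscr{I}^\infty_{[2]}(\Mun)$ with $\mathcal{T}_{[2]}\chi = 0$ and $\Delta^{-2}\chi$ regular at $\Hp_l$; hence Assumption~\ref{AssumptionReg} holds globally, while the vanishing of $\hat\chi$ for $v_- \ge v_0'$ on $\Hp_l$ gives \eqref{AssumpLHp}. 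Combined with $\chi = \psi$ on $P$ this proves the theorem.

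I expect the main obstacle to be care rather than new ideas, on two fronts. First, the causal bookkeeping: verifying that $\Sigma_1$ is a Cauchy hypersurface for $R_1$ (so Step 1 fills all of $R_1$, including $\Hp_r$ for large $v_+$), that $P \subseteq D^-(\Sigma_1 \cap \{f^+ \ge v_0\})$ (so uniqueness pins $\chi = \psi$ on $P$), and that $R_1 \cup R_2 = \Mun$. Second, and more delicate, ensuring that $\chi$ genuinely meets Assumption~\ref{AssumptionReg}: namely that $\Delta^{-2}\chi$ extends regularly across $\Hp_l$ — this is precisely why both steps solve the horizon-regular formulation \eqref{TeukolskyEquationHat} near $\Hp_l$ rather than \eqref{TeukolskyStar} — and that everything is smooth at $\Sp^2_b$, for which the regular coordinates of Appendix~\ref{CoordTransformation} are used (the choice of data vanishing for $f^+ \le v_0 - 1$ in fact makes $\chi_1$ vanish near $\Sp^2_b$). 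The corner compatibility at $\Hp_l \cap \Sigma_1$ and the smooth gluing across $\Sigma_1$ are then routine once the IVP well-posedness is granted.
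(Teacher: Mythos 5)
There is a genuine gap, and it is fatal to your Step 1: the claim that $P \subseteq D^-\big(\Sigma_1 \cap \{f^+ \ge v_0\}\big)$ is false. To see this concretely, recall that the ingoing principal null vector field is $\hat{e}_4 = -\partial_r\big|_-$ in $(v_-, r, \theta, \varphi_-)$ coordinates, so along its (future-directed) integral curves $v_-, \theta, \varphi_-$ are constant and $r$ decreases to $r_-$, i.e.\ the curve leaves $\Mun$ through $\CH_r$. Along such a curve $f^- = v_- - r + r_-$ increases only up to the limiting value $v_-$, and on $\CH_r$ one has $f^-\big|_{\CH_r} = v_-$. Hence from any point $p \in P$ with $v_-(p) < v_1$ there is a future-inextendible causal curve in $\Mun$ that reaches $\CH_r$ with $f^- < v_1$ throughout, never meeting $\Sigma_1 = \{f^- = v_1\}$. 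Since $P$ contains $\Hp_r \cap \{v_+ \ge v_0\}$, where $v_- = -\infty$, the bulk of $P$ (everything near $\Hp_r$ and near $i^+$) consists of such points. Consequently the backward development of data on $\Sigma_1$ is only $D^-(\Sigma_1)$, which contains no point of $\Hp_r$, of $\Sp^2_b$, or of the lower part of $\Hp_l$ — so your $\chi_1$ neither fills $R_1$ nor is it defined on $P$, and the uniqueness argument ``$\chi_1 = \psi$ on $P$'' collapses: two solutions can induce identical data on $\Sigma_1$ while differing on $P$, the difference exiting through $\CH_r$ without being controlled by $\Sigma_1$. This is not causal bookkeeping to be checked; it is the reason the construction must be organized differently.

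The paper's proof avoids exactly this trap: it never attempts to re-derive $\psi$ on $P$. Instead, the backward problem is posed with data on the \emph{bent} hypersurface consisting of $\{f^+ = v_0\} \cap \{f^- \le v_1\}$ (the left boundary of $P$, carrying the data induced by $\psi$ itself) together with $\{f^- = v_1\} \cap \{f^+ \le v_0\}$ (a smooth extension of the data induced by $\psi$ on $\{f^- = v_1\} \cap \{f^+ \ge v_0\}$), and one solves backwards only into $\{f^+ \le v_0\} \cap \{f^- \le v_1\}$, keeping $\psi$ untouched on $P$. Since $f^+$ and $f^-$ are both strictly increasing along future-directed causal curves, any future-inextendible causal curve from that past region must cross one of the two data pieces before it can escape (escape toward $\CH_r$ forces $f^+ \to \infty$, toward $\CH_l$ forces $f^- \to \infty$), so this backward problem is genuinely well-posed, and the corner condition at $\{f^+ = v_0\} \cap \{f^- = v_1\}$ holds because both data pieces carry the jets of $\psi$ there. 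Your Step 2 is essentially the paper's second step and would go through once Step 1 is replaced by this construction; one further (minor) correction is that the well-posedness theory of Appendix \ref{CoordTransformation} is formulated for the regular field $\tilde{\psi} = (V^+_{r_+})^{-2}\psi$ and equation \eqref{TeukEqRegKruskal} on $\Mun$ — which is what guarantees smoothness across $\Sp^2_b$ — rather than for $\hat{\psi} = \Delta^{-2}\psi$ and \eqref{TeukolskyEquationHat}, since $\hat{\psi}$ is not a regular variable at the bifurcation sphere.
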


\begin{proof}
The idea of the proof is outlined in Figure \ref{FigExt}. Also recall that the level sets of the functions $f^-$ and $f^+$ are spacelike hypersurfaces.

\begin{figure}[h]
\centering
 \def\svgwidth{6cm}
    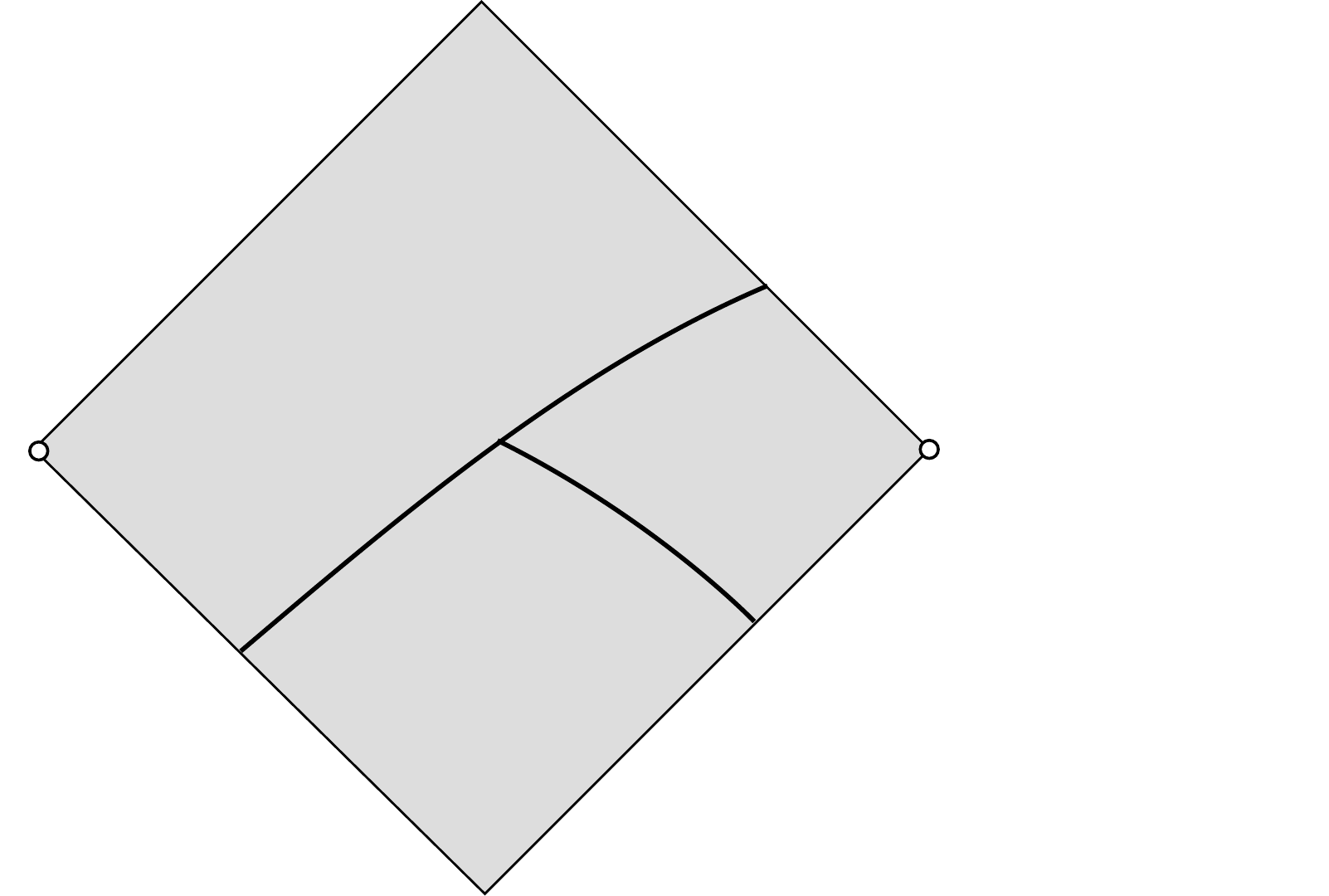
      \caption{Extending the Teukolsky field globally} \label{FigExt}
\end{figure}
We consider the Teukolsky equation \eqref{TeukEqRegKruskal} which is regular in $\underline{\mathcal{M}}$. Recall that $\tilde{\chi}$ satisfies \eqref{TeukEqRegKruskal} if, and only if, $\chi = (V^+_{r_+})^2 \tilde{\chi}$ satisfies \eqref{TeukolskyStar}. We now consider the induced initial data of $\tilde{\psi} := \frac{1}{(V^+_{r_+})^2} \psi$ on $\{f^+ = v_0\} \cap \{f^- \leq v_1\}$ and, moreover, extend the induced initial data on $\{f^- = v_1\} \cap \{f^+ \geq v_0\}$ smoothly to $\{f^- = v_1\} \cap \{f^+ \leq v_0\}$. Since $\tilde{\psi}$ is a solution of \eqref{TeukEqRegKruskal} in $\{f^+ \geq v_0\} \cap \{f^- \leq v_1\}$, it is clear that this choice of initial data satisfies the appropriate corner condition. In the appendix \ref{SecAppendixIVPTeuk} it is shown that the initial value problem for \eqref{TeukEqRegKruskal} is well-posed. We can thus solve backwards to obtain a smooth solution $\tilde{\chi}$ of \eqref{TeukEqRegKruskal} in the region $\{f^- \leq v_1\} \cap \{f^+ \leq v_0\}$ that attains the prescribed initial data.

In the second step we consider the initial value problem for \eqref{TeukEqRegKruskal} with compactly supported initial data on $\Hp_l \cap \{f^- \geq v_1\}$, which is a smooth extension of the induced initial data of $\tilde{\chi}$ on $\Hp_l \cap \{f^- \leq v_1\}$, and, moreover, with the induced initial data of $\tilde{\chi}$ on $\{f^- = v_1\}$. Again, the corner condition is satisfied and we obtain a solution in the region $\{f^- \geq v_1\}$. Patching these three solutions together proves the extension theorem.
\end{proof}

\begin{proof}[Proof of Theorem \ref{Thm2}:]
This is immediate from Theorem \ref{ThmExt} and Theorem \ref{Thm1}.
\end{proof}

\appendix 

\section{Kruskal-like coordinate transformations} \label{CoordTransformation} 

We express the $\{v_+, \varphi_+, r, \theta\}$ coordinate vector fields in terms of the $\{V^+_{r_+}, V^-_{r_+}, \theta, \Phi_{r_+}\}$ coordinate vector fields.

\begin{equation} \label{KruskalCoordTrafo}
\begin{split}
\partial_{v_+} &= \kappa_+V^+_{r_+} \partial_{V^+_{r_+}} - \kappa_+ V^-_{r_+}\partial_{V^-_{r_+}} - \frac{a}{r_+^2 + a^2} \partial_{\Phi_{r_+}} \\
\partial_{\varphi_+} &= \partial_{\Phi_{r_+}} \\
\partial_r &=2\kappa_+ \frac{r^2 + a^2}{\Delta} V^-_{r_+} \partial_{V^-_{r_+}} + \frac{a}{\Delta}\frac{r^2 - r_+^2}{r_+^2 + a^2} \partial_{\Phi_{r_+}} \\
\partial_\theta &=\partial_\theta \;.
\end{split}
\end{equation}
Note that the vector field $\partial_{v_+}$ does not vanish at the bottom bifurcation sphere $\Sp^2_b$.

Using \eqref{KruskalCoordTrafo} we thus compute
\begin{equation} \label{Eqe4BSphere}
e_4 = 2(r^2 + a^2) \kappa_+ V^+_{r_+} \partial_{V^+_{r_+}} + a\frac{r^2 - r_+^2}{r_+^2 + a^2} \partial_{\Phi_{r_+}} \;.
\end{equation}
It now follows from \eqref{r-r_+} that $\frac{1}{V^+_{r_+}} e_4$ is a regular and non-degenerate vector field at $\Hp_r \cup \Hp_l \cup \Sp^2_b$. Note, however, that compared to\footnote{Here $\mathcal{O}(1)$ is with respect to $r\to r_+$ and $c>0$.} $\hat{e}_4 = -\frac{1}{\Delta} e_4 = (\frac{c}{V^+_{r_+}V^-_{r_+}} + \mathcal{O}(1)) e_4$, which blows up at $\Sp^2_b$, $\frac{1}{V^+_{r_+}} e_4$ grows exponentially in $v_-$ for $v_- \to +\infty$.

\subsection{The regular Teukolsky equation in $\uline{\mathcal{M}}$ }

The above suggests that if $\psi$ satisfies $\mathcal{T}_s \psi = 0$, then the quantity $\tilde{\psi}_s := \frac{1}{(V^+_{r_+})^s} \psi_s$ should satisfy a regular equation in $\underline{\mathcal{M}}$, thus in particular near the bottom bifurcation sphere. In order to show this claim, we start by rewriting \eqref{TeukolskyStar} in $(v_+, r, \theta, \varphi_+)$ coordinates as
\begin{equation}\label{EqTeukStarBox}
\begin{split}
\frac{1}{\rho^2} \mathcal{T}_{[s]} \psi_s &= \Box_g \psi_s - \frac{2s}{\rho^2}(r-M) \rd_r \psi_s + \frac{2si}{\rho^2} \frac{\cos \theta}{\sin^2 \theta} \rd_{\varphi_+} \psi_s \\
&\quad - \frac{2s}{\rho^2}(2r + ia \cos \theta) \rd_{v_+} \psi_s - \frac{1}{\rho^2}\big( s + s^2 \frac{\cos^2 \theta}{\sin^2\theta}\big) \psi_s = 0 \;.
\end{split}
\end{equation}
A straightforward computation shows that $\psi_s \in \mathscr{I}^\infty_{[s]}(\mathcal{M})$ satisfies \eqref{EqTeukStarBox} if, and only if, $\tilde{\psi}_s$ satisfies
\begin{equation}\label{EqTeukTildeStar}
\begin{split}
\tilde{\mathcal{T}}_{[s]} \tilde{\psi}_s :&= \Box_g \tilde{\psi}_s + \frac{2s}{\rho^2}( \kappa_+ a^2 \sin^2\theta - 2r - ia\cos \theta) \rd_{v_+} \tilde{\psi}_s + \frac{2s}{\rho^2}\big( \kappa_+(r^2 + a^2) - (r-M)\big) \rd_r \tilde{\psi}_s \\
&\quad + \frac{2s}{\rho^2}\big(\kappa_+ a + i \frac{\cos\theta}{\sin^2\theta}\big) \rd_{\varphi_+} \tilde{\psi}_s + \frac{s \kappa_+}{\rho^2}(s \kappa_+ a^2 \sin^2\theta + 2r) \tilde{\psi}_s - \frac{2s^2 \kappa_+}{\rho^2}(2r + ia \cos\theta) \tilde{\psi}_s \\
&\quad - \frac{1}{\rho^2}(s + s^2 \frac{\cos^2\theta}{\sin^2\theta})\tilde{\psi}_s =0\;.
\end{split}
\end{equation}
Rewriting \eqref{EqTeukTildeStar} in terms of $\{V^+_{r_+}, V^-_{r_+}, \theta, \Phi_{r_+}\}$ coordinates (in the following we will drop the $r_+$, i.e., we will only write $\{V^+, V^-, \theta, \Phi\}$) gives
\begin{equation}\label{TeukEqRegKruskal}
\begin{split}
0 = \tilde{\mathcal{T}}_{[s]} \tilde{\psi}_s &= \Box_g \tilde{\psi}_s + \frac{2si}{\rho^2} \frac{\cos \theta}{\sin^2 \theta} \rd_{\Phi} \tilde{\psi}_s - \frac{1}{\rho^2}\big(s^2 \frac{\cos^2 \theta}{\sin^2 \theta} -s\big) \tilde{\psi}_s \\
&\qquad + \tilde{X}_s^{V^+} \rd_{V^+} \tilde{\psi}_s + \tilde{X}_s^{V^-} \rd_{V^-} \tilde{\psi}_s + \tilde{X}_s^{\Phi} \rd_\Phi \tilde{\psi}_s + \tilde{f}_s \tilde{\psi}_s \;,
\end{split}
\end{equation}
where
\begin{equation*}
\begin{split}
\tilde{X}_s^{V^+} &= \frac{2s}{\rho^2}(\kappa_+ a^2 \sin^2 \theta - 2r - ia \cos \theta)\kappa_+ V^+ \\
\tilde{X}_s^{V^-} &=\frac{2s}{\rho^2} \kappa_+ V^-\Big( - \kappa_+ a^2 \sin^2 \theta + 2r + ia \cos \theta + \frac{2(r^2 + a^2)}{\Delta}\big(\dashuline{\kappa_+(r^2 + a^2) - (r-M)}\big)\Big) \\
\tilde{X}_s^{\Phi} &= \frac{2s}{\rho^2}\Big(- \frac{a}{r_+^2 + a^2} ( \kappa_+ a^2 \sin^2\theta - 2r - ia \cos \theta) + \frac{a}{\Delta} \dashuline{\frac{r^2 - r_+^2}{r_+^2 + a^2}}\big[\dashuline{\kappa_+(r^2 + a^2) - (r-M)}\big] + \kappa_+ a\Big) \\
\tilde{f}_s &= \frac{s \kappa_+}{\rho^2} \Big( s(-4r -2ia\cos \theta + \kappa_+ a^2 \sin^2\theta) + 2r - \frac{2}{\kappa_+}\Big) \;.
\end{split}
\end{equation*}
Note that the dashed terms are $\mathcal{O}(r_+-r)$. To see this we recall that $r_+ - r_- = 2(r_+ -M)$ and compute
\begin{equation*}
 \kappa_+(r^2 + a^2) - (r-M) = \frac{r_+ - r_-}{2(r_+^2 + a^2)}(r^2 + a^2) - (r-M) = \frac{(r_+ - M)}{r_+^2 + a^2}(r^2 + a^2) - (r-M) = \mathcal{O}(r_+-r) \;.
 \end{equation*}
Hence, we have $\tilde{X}_s^{V^+} , \tilde{X}_s^{V^-} , \tilde{X}_s^{\Phi}, \tilde{f}_s \in C^\infty(\underline{\mathcal{M}})$.

\subsection{The initial value problem for the Teukolsky equation \eqref{TeukEqRegKruskal}}
\label{SecAppendixIVPTeuk}

In this section we show that the initial value problem for the Teukolsky equation \eqref{TeukEqRegKruskal} is well-posed by reducing it to an initial value problem for a tensorial wave equation. In the following we restrict to $s = +2$ and drop the subscript $s$ from $\tilde{\psi}_s$. For $\tilde{\psi} \in \mathscr{I}^\infty_{[2]}(\underline{\mathcal{M}}) $ there exists, by Remark \ref{RemSpinWSpacetimeS2Tensor}, a unique $\alpha \in \Gamma^\infty\big(S^2T^*\mathcal{M}\big)$ with $\alpha(m,m) = \tilde{\psi}$ that is trace-free with respect to $\slashed{g}_{\Sp^2} = d\theta^2 + \sin^2\theta \, d\varphi^2$ and that is an $\Sp^2$ tensor\footnote{I.e.\ we have $\alpha(\rd_{V^+}, \cdot) = \alpha(\rd_{V^-}, \cdot) =0$.}. We rewrite \eqref{TeukEqRegKruskal} as
\begin{equation}\label{EqTeukBoxToG}
\begin{split}
0 &= \tilde{\mathcal{T}}_{[s]} \big(\alpha(m,m)) \\
&= g^{\mu \nu} \mathcal{L}_{\rd_\mu} \mathcal{L}_{\rd_\nu} \big(\alpha(m,m)\big)  + (\Box_g x^{\mu}) \mathcal{L}_{\rd_\mu} \big(\alpha(m,m)\big)  + \frac{2si}{\rho^2} \frac{\cos \theta}{\sin^2 \theta} \mathcal{L}_{\rd_{\Phi}}\big(\alpha(m,m)\big)\\ 
&\qquad - \frac{1}{\rho^2}\big(s^2 \frac{\cos^2 \theta}{\sin^2 \theta} -s\big) \big(\alpha(m,m)\big)   + \tilde{X}_s^{V^+} \mathcal{L}_{\rd_{V^+}} \big(\alpha(m,m)\big)  + \tilde{X}_s^{V^-} \mathcal{L}_{\rd_{V^-}} \big(\alpha(m,m)\big)\\
&\qquad + \tilde{X}_s^{\Phi} \mathcal{L}_{\rd_\Phi} \big(\alpha(m,m)\big) + \tilde{f}_s \big(\alpha(m,m)\big) \;.
\end{split}
\end{equation}
The differentials of the Kruskal-like coordinate system are
\begin{equation*}
\begin{aligned}
dV^+ &= \kappa_+ V^+(dt + \frac{r^2 + a^2}{\Delta}dr) \qquad \qquad &&dV^- = \kappa_+ V^-(\frac{r^2 + a^2}{\Delta} dr - dt) \\
d\theta &= d\theta \qquad \qquad &&\;\;\, d\Phi = d\varphi - \frac{a}{r_+^2 + a^2} dt \;,
\end{aligned}
\end{equation*}
which, together with \eqref{gInverse}, easily yields that the components of the inverse metric in Kruskal-like coordinates satisfy
\begin{equation}\label{EqRegKruskalMetric}
\begin{split}
g^{\theta \theta} &= \frac{1}{\rho^2} \qquad  \textnormal{and} \qquad  g^{\theta \mu} = 0 \textnormal{ for } \mu \neq \theta \\
g^{\Phi \Phi} &= \frac{1}{\rho^2 \sin^2 \theta} + g^{\Phi \Phi}_{\mathrm{rem}} \qquad \textnormal{with }\; \; g^{\Phi \Phi}_{\mathrm{rem}} \in C^{\infty}(\Mun) \\
g^{\Phi V^+} &, g^{\Phi V^-}  \in C^{\infty}(\Mun)\quad  (\textnormal{ i.e., they do not have poles in $\theta$}).
\end{split}
\end{equation}
Moreover, we note that $\Box_g \Phi = 0$ and $\Box_g \theta = \frac{\cos \theta}{\rho^2 \sin \theta}$ away from the axis $\theta = 0, \pi$.  Also using \eqref{EqSpinWeightedLap} we obtain from \eqref{EqTeukBoxToG}
\begin{equation}\label{EqRewriteTeukReg}
\begin{split}
0 &= \tilde{\mathcal{T}}_{[s]} \big(\alpha(m,m)) \\
&= \sum_{\substack{(\mu, \eta) \notin \\ \{(\theta, \theta), (\Phi, \Phi)\}}} g^{\mu \nu} \mathcal{L}_{\rd_\mu} \mathcal{L}_{\rd_\nu} \big(\alpha(m,m)\big)  + \sum_{\mu \neq \theta, \Phi} (\Box_g x^{\mu}) \mathcal{L}_{\rd_\mu} \big(\alpha(m,m)\big)  + \frac{1}{\rho^2} \swl\big(\alpha(m,m)\big) \\ 
&\qquad +g^{\Phi \Phi}_{\mathrm{rem}} \mathcal{L}_{\rd_\Phi} \mathcal{L}_{\rd_\Phi} \big(\alpha(m,m)\big)   + \tilde{X}_s^{V^+} \mathcal{L}_{\rd_{V^+}} \big(\alpha(m,m)\big)  + \tilde{X}_s^{V^-} \mathcal{L}_{\rd_{V^-}} \big(\alpha(m,m)\big)\\
&\qquad + \tilde{X}_s^{\Phi} \mathcal{L}_{\rd_\Phi} \big(\alpha(m,m)\big) + \tilde{f}_s \big(\alpha(m,m)\big) \;.
\end{split}
\end{equation}
Using now  \eqref{EqSWLV}, \eqref{LieZTilde}, $\mathcal{L}_{\rd_{V^+}} m = \mathcal{L}_{\rd_{V^-}} m = \mathcal{L}_{\rd_{\Phi}} m =0$, and again Remark \ref{RemSpinWSpacetimeS2Tensor} yields that $\tilde{\psi} = \alpha(m,m)$ satisfies \eqref{EqRewriteTeukReg} if, and only if, $\alpha$ satisfies
\begin{equation}\label{TeukTensorEq}
\begin{split}
0 &=  \sum_{\substack{(\mu, \eta) \notin \\ \{(\theta, \theta), (\Phi, \Phi)\}}} g^{\mu \nu} \mathcal{L}_{\rd_\mu} \mathcal{L}_{\rd_\nu} \alpha + \sum_{\mu \neq \theta, \Phi} (\Box_g x^{\mu}) \mathcal{L}_{\rd_\mu} \alpha + \frac{1}{\rho^2} (\mathcal{L}_{Z_{1, r_+}}^2 + \mathcal{L}_{Z_{2, r_+}}^2 + \mathcal{L}_{Z_{3, r_+} }^2+ s + s^2) \alpha\\ 
&\qquad +g^{\Phi \Phi}_{\mathrm{rem}} \mathcal{L}_{\rd_\Phi} \mathcal{L}_{\rd_\Phi} \alpha   + \tilde{X}_s^{V^+} \mathcal{L}_{\rd_{V^+}} \alpha + \tilde{X}_s^{V^-} \mathcal{L}_{\rd_{V^-}} \alpha + \tilde{X}_s^{\Phi} \mathcal{L}_{\rd_\Phi} \alpha + \tilde{f}_s \alpha \;.
\end{split}
\end{equation}
Here, the vector fields $Z_{i, r_+}$ are the vector fields from \eqref{DefEqVectorFieldZ} with $\varphi$ replaced by $\Phi_{r_+}$. Note that firstly the equation \eqref{TeukTensorEq} extends regularly to the axis $\theta \in \{0, \pi\}$ and secondly it also extends regularly to all of $\Mun$ by virtue of 
$\tilde{X}_s^{V^+} , \tilde{X}_s^{V^-} , \tilde{X}_s^{\Phi}, \tilde{f}_s \in C^\infty(\underline{\mathcal{M}})$ and \eqref{EqRegKruskalMetric}.

It is now easy to see that \eqref{TeukTensorEq} is a tensorial wave equation with principal symbol $g^{-1}$ and thus the initial value problem is well-posed\footnote{For example one can reduce it to the initial value problem for a scalar wave equation as follows: Choose a frame field $(f_1, f_2, f_3, f_4)$ for $T\Mun$ that is smooth away from $\theta = 0$ and another one, $(\hat{f}_1, \hat{f}_2, \hat{f}_3, \hat{f}_4)$, that is smooth away from $\theta = \pi$. Equation \eqref{TeukTensorEq} yields now induced scalar equations for the components of $\alpha$ with respect to the frame  $(f_1, f_2, f_3, f_4)$, which, by putting the principal symbol back together, are manifestly  wave equations with principal symbol $g^{-1}$ on $\Mun \setminus \{\theta = 0\}$ -- and analogously for the components of $\alpha$ with respect to the hatted frame field. Given geometric initial data for \eqref{TeukTensorEq} one can now solve for the components of $\alpha$ with respect to $(f_1, f_2, f_3, f_4)$, and also with respect to the hatted frame field, in their corresponding domains of dependence (recall that $\theta = 0$, $\theta = \pi$ is removed from $\Mun$, respectively). By virtue of \eqref{TeukTensorEq} being a geometric equation, each set of  solutions transforms to solutions of the other set under the change of frame -- whenever they are both defined. By uniqueness of the initial value problem, the untransformed and the transformed sets have to agree and we can now patch the two sets of solutions together to obtain a local solution $\alpha$ of \eqref{TeukTensorEq}. We then iterate this procedure. Hence, the main point of this part of the appendix was to show that the Teukolsky equation \eqref{TeukEqRegKruskal} is the scalarisation of a regular geometric equation -- which is not surprising at all given its derivation...}. Taking the trace of \eqref{TeukTensorEq} with respect to $\slashed{g}_{\Sp^2}$ shows that if $\alpha \in \Gamma^\infty(S^2T^*\Mun)$ satisfies \eqref{TeukTensorEq}, then the trace of $\alpha$ satisfies a homogeneous wave equation. Similarly, inserting $\rd_{V^+}$ or $\rd_{V^-}$ into one of the components of \eqref{TeukTensorEq} (and using that the Lie-bracket of coordinate vector fields vanishes) shows that  $\alpha(\rd_{V^+}, \cdot)$ and $\alpha(\rd_{V^-}, \cdot)$ satisfy homogeneous wave equations. The same holds for the antisymmetric part of $\alpha$. Thus, symmetric and trace-free $\Sp^2$ initial data (cf.\ Remark \ref{RemSpinWSpacetimeS2Tensor}) for \eqref{TeukTensorEq} gives rise to a symmetric and trace-free $\Sp^2$ solution. Finally, we recall that by Remark \ref{RemSpinWSpacetimeS2Tensor} initial data for $\tilde{\psi}$ for equation \eqref{TeukEqRegKruskal} uniquely determines geometric symmetric and trace-free $\Sp^2$ initial data for $\alpha$ for \eqref{TeukTensorEq}. This establishes well-posedness for the Teukolsky equation \eqref{TeukEqRegKruskal} in $\Mun$.

\newpage
\section{Commutator computations for \eqref{TeukolskyEquationHat}}
\label{AppendixCommTeukHat}

The second order terms of $ \hat{\mathcal{T}}_{[s]}$ are 
\begin{equation*}
 a^2 \sin^2 \theta \,\partial_{v_-}^2\hat{\psi} - 2a \,\partial_{v_-}\partial_{\varphi_-} \hat{\psi} + 2(r^2 + a^2)\, \partial_{v_-}\partial_r \hat{\psi} -2 a\, \partial_{\varphi_-}\partial_r \hat{\psi} 
 + \Delta \,\partial_r^2 \hat{\psi} + \mathring{\slashed{\Delta}}_{[s]} \hat{\psi} \;.
\end{equation*}
We use $v_-^q \big(-(1+ \lambda \Delta)\partial_r + (1 + \lambda \Delta)\rd_{v_-}\big) \overline{\hp}$ as a multiplier and compute the commutator expressions in the following individually for the $\rd_r$ component and the $\rd_{v_-}$ component of the multiplier, term by term. We will use the notation $\underset{a.i.}{=}$ to denote equality after integration over the spheres with respect to $\vols$.

\subsection{The multiplier $-v_-^q(1 + \lambda \Delta)\overline{\partial_r \hat{\psi}}$} 
\label{ApHatR}
\begin{align*}
-v_-^q(1+\lambda \Delta) a^2 \sin^2 \theta \, \Rea(\partial_{v_-}^2\hp \overline{\rd_r \hp}) &= -\rd_{v_-} \Big(a^2 \sin^2 \theta v_-^q (1+ \lambda \Delta) \Rea(\partial_{v_-} \hp \overline{\rd_r \hp})\Big) \\[4pt]
&\quad + qv_-^{q-1}a^2 \sin^2 \theta (1 + \lambda \Delta) \Rea(\rd_{v_-}\hp \overline{\rd_r \hp})\\[4pt]
&\quad + \rd_r\Big(\frac{1}{2}a^2 \sin^2 \theta v_-^q(1+ \lambda\Delta)|\rd_{v_-}\hp|^2\Big)
- \uwave{\frac{1}{2} a^2 \sin^2\theta v_-^q \lambda \partial_r \Delta |\rd_{v_-}\hp|^2} \\[1cm]
v_-^q(1 + \lambda \Delta) 2a \Rea( \rd_{v_-}\rd_{\varphi_-} \hp  \overline{\rd_r \hp}) &\eai\rd_{v_-} \Big(a v_-^q (1 + \lambda \Delta) \Rea(\rd_{\varphi_-} \hp \overline{\rd_r \hp})\Big) - aqv_-^{q-1}(1 + \lambda \Delta)\Rea(\rd_{\varphi_-}\hp \overline{\rd_r\hp}) \\[4pt]
&\quad -\rd_r\Big(av_-^q(1+\lambda \Delta)\Rea(\rd_{\varphi_-}\hp \overline{\rd_{v_-}\hp})\Big) + \uwave{av_-^q\lambda(\rd_r\Delta)\Rea(\rd_{\varphi_-}\hp \overline{\rd_{v_-} \hp})} \\[1cm]
-v_-^q(1+ \lambda \Delta)2(r^2 + a^2) \Rea(\rd_{v_-}\rd_r\hp \overline{\rd_r\hp}) &= - \rd_{v_-}\Big(v_-^q(1+\lambda \Delta)(r^2 + a^2)|\rd_r\hp|^2\Big) + qv_-^{q-1}(r^2 + a^2)(1+ \lambda \Delta)|\rd_r\hp|^2 \\[1cm]
v_-^q(1+ \lambda \Delta)2a\Rea(\rd_r\rd_{\varphi_-}\hp \overline{\rd_r \hp}) &\eai 0 \\[1cm]
-v_-^q(1+ \lambda \Delta)\Delta \Rea(\rd_r^2\hp \overline{\rd_r\hp} )&= - \rd_r\Big(\frac{1}{2}v_-^q(1+ \lambda \Delta)\Delta|\rd_r\hp|^2\Big) + \dashuline{\frac{1}{2}v_-^q \rd_r\Delta(1+2\lambda \Delta)|\rd_r\hp|^2}\\[1cm]
-v_-^q(1+ \lambda \Delta)\Rea(\mathring{\slashed{\Delta}}_{[s]} \hat{\psi} \overline{\rd_r\hp}) &\eai -\rd_r\Big(\frac{1}{2} v_-^q(1+ \lambda \Delta)(s+s^2)|\hp|^2\Big) +\dotuline{\frac{1}{2}v_-^q\lambda\rd_r\Delta(s+s^2)|\hp|^2} \\
&\quad + \rd_r\Big(\frac{1}{2}v_-^q(1+ \lambda \Delta)\sum_i|\widetilde{Z}_{i,-}\hp|^2\Big) 
- \uwave{\frac{1}{2} v_-^q\lambda \rd_r \Delta \sum_i|\widetilde{Z}_{i,-}\hp|^2} 
\end{align*}

\newpage
\subsection{The multiplier $v_-^q(1 + \lambda \Delta)\overline{\partial_{v_-} \hat{\psi}}$} \label{ApHatV}
\begin{align*}
 v_-^q(1+ \lambda \Delta)a^2 \sin^2 \theta \, \Rea(\rd_{v_-}^2\hp \overline{\rd_{v_-}\hp})&= \rd_{v_-}\Big(\frac{1}{2} v_-^q(1+ \lambda \Delta)a^2 \sin^2\theta \, |\rd_{v_-}\hp|^2\Big) - \frac{1}{2}qv_-^{q-1}(1 + \lambda \Delta)a^2 \sin^2 \theta\, |\rd_{v_-}\hp|^2 \\[1cm]
 - v_-^q(1+ \lambda \Delta)2a\Rea(\rd_{v_-}\rd_{\varphi_-}\hp \overline{\rd_{v_-}\hp}) &\eai 0 \\[1cm]
  v_-^q(1+ \lambda \Delta)2(r^2 + a^2)\Rea(\rd_{v_-}\rd_r\hp \overline{\rd_{v_-}\hp}) &= \rd_r\Big( v_-^q(1+ \lambda \Delta)(r^2 + a^2)|\rd_{v_-}\hp|^2\Big) \\[4pt]
  &\quad - \uwave{v_-^q\big((r^2 + a^2)\lambda \partial_r \Delta + 2r(1 + \lambda \Delta)\big)|\rd_{v_-}\hp|^2}\\[1cm]
  - v_-^q(1+ \lambda \Delta)2a \Rea(\rd_r\rd_{\varphi_-}\hp\overline{\rd_{v_-}\hp}) &\eai   \rd_{v_-} \Big(a v_-^q (1 + \lambda \Delta) \Rea(\rd_{\varphi_-} \hp \overline{\rd_r \hp})\Big) - aqv_-^{q-1}(1 + \lambda \Delta)\Rea(\rd_{\varphi_-}\hp \overline{\rd_r\hp}) \\[4pt]
&\quad -\rd_r\Big(av_-^q(1+\lambda \Delta)\Rea(\rd_{\varphi_-}\hp \overline{\rd_{v_-}\hp})\Big) + \uwave{av_-^q\lambda(\rd_r\Delta)\Rea(\rd_{\varphi_-}\hp \overline{\rd_{v_-} \hp})} \\[1cm]
v_-^q(1+ \lambda \Delta)\Delta\Rea(\rd_r^2\hp \overline{\rd_{v_-} \hp}) &= \rd_r\Big( v_-^q(1+ \lambda \Delta) \Delta \Rea(\rd_r\hp \overline{\rd_{v_-}\hp})\Big) - v_-^q\rd_r\Delta(1+2\lambda \Delta)\Rea(\rd_r\hp \overline{\rd_{v_-}\hp}) \\[4pt]
&-\rd_{v_-}\Big(\frac{1}{2} v_-^q(1+ \lambda \Delta)\Delta|\rd_r\hp|^2\Big) + \frac{1}{2}qv_-^{q-1}(1 + \lambda \Delta)\Delta |\rd_r \hp|^2 \\[1cm]
 v_-^q(1+ \lambda \Delta)\Rea(\mathring{\slashed{\Delta}}_{[s]} \hat{\psi} \overline{\rd_{v_-}\hp}) &\eai \rd_{v_-}\Big(\frac{1}{2}  v_-^q(1+ \lambda \Delta)(s+s^2)|\hp|^2\Big) - \frac{1}{2}qv_-^{q-1}(1 + \lambda \Delta)(s+ s^2)|\hp|^2 \\[4pt]
 &-\rd_{v_-}\Big(\frac{1}{2}  v_-^q(1+ \lambda \Delta)\sum_i |\widetilde{Z}_{i,-} \hp|^2\Big) + \frac{1}{2}qv_-^{q-1}(1 + \lambda \Delta)\sum_i |\widetilde{Z}_{i,-} \hp|^2
\end{align*}

\section{Commutator computations for \eqref{TeukolskyStar}} \label{AppendixTeukStar}

The second order terms of $ \mathcal{T}_{[s]}$  in $\{v_+, r, \theta, \varphi_+\}$ coordinates are 
\begin{equation*}
 a^2 \sin^2 \theta \,\partial_{v_+}^2\psi + 2a \,\partial_{v_+}\partial_{\varphi_+} \psi + 2(r^2 + a^2)\, \partial_{v_+}\partial_r \psi +2 a\, \partial_{\varphi_+}\partial_r \psi 
 + \Delta \,\partial_r^2 \psi + \mathring{\slashed{\Delta}}_{[s]} \psi \;.
\end{equation*}
We use $\chi(v_+) \big(-(1+ \lambda \Delta)\partial_r + (1 + \lambda \Delta)\rd_{v_+}\big) \overline{\psi}$ as a multiplier and compute the commutator expressions in the following individually for the $\rd_r$ component and the $\rd_{v_+}$ component of the multiplier, term by term. Note that due to formal similarity all these expressions can be easily inferred from the computations in Appendix \ref{AppendixCommTeukHat} (or vice versa). They are listed here nevertheless for the convenience of the reader.
Again we use the notation $\underset{a.i.}{=}$ to denote equality after integration over the spheres with respect to $\vols$.

We also use $\chi(v_+)(1 + \lambda \Delta)(-\rd_r + \rd_{v_+} + \frac{a}{r_-^2 + a^2} \rd_{\varphi_+})\overline{\psi}$ as a multiplier. The $\rd_{\varphi_+} \overline{\psi}$ component is also computed here separately.

\newpage
\subsection{The multiplier $-\chi(v_+)(1 + \lambda \Delta)\overline{\partial_r \psi}$} 
\label{ApR}
\begin{align*}
-\chi(v_+)(1+\lambda \Delta) a^2 \sin^2 \theta \, \Rea(\partial_{v_+}^2\psi \overline{\rd_r \psi}) &= -\rd_{v_+} \Big(a^2 \sin^2 \theta \chi(v_+) (1+ \lambda \Delta) \Rea(\partial_{v_+} \psi \overline{\rd_r \psi})\Big)  \\[4pt]
&\quad+ \chi'(v_+)a^2 \sin^2 \theta (1 + \lambda \Delta) \Rea(\rd_{v_+}\psi \overline{\rd_r \psi})\\[4pt]
&\quad+ \rd_r\Big(\frac{1}{2}a^2 \sin^2 \theta \chi(v_+)(1+ \lambda\Delta)|\rd_{v_+}\psi|^2\Big) \\[4pt]
&\quad - \uwave{\frac{1}{2} a^2 \sin^2\theta \chi(v_+) \lambda \partial_r \Delta |\rd_{v_+}\psi|^2} \\[1cm]
-\chi(v_+)(1 + \lambda \Delta) 2a \Rea( \rd_{v_+}\rd_{\varphi_+} \psi  \overline{\rd_r \psi}) &\eai-\rd_{v_+} \Big(a \chi(v_+)(1 + \lambda \Delta) \Rea(\rd_{\varphi_+} \psi \overline{\rd_r \psi})\Big)\\[4pt] 
&\quad + a\chi'(v_+)(1 + \lambda \Delta)\Rea(\rd_{\varphi_+}\psi \overline{\rd_r\psi}) \\[4pt]
&\quad +\rd_r\Big(a\chi(v_+)(1+\lambda \Delta)\Rea(\rd_{\varphi_+}\psi \overline{\rd_{v_+}\psi})\Big)\\ 
&\quad - \uwave{a\chi(v_+)\lambda(\rd_r\Delta)\Rea(\rd_{\varphi_+}\psi \overline{\rd_{v_+} \psi})} \\[1cm]
-\chi(v_+)(1+ \lambda \Delta)2(r^2 + a^2) \Rea(\rd_{v_+}\rd_r\psi \overline{\rd_r\psi}) &= - \rd_{v_+}\Big(\chi(v_+)(1+\lambda \Delta)(r^2 + a^2)|\rd_r\psi|^2\Big) \\[4pt] 
&\quad + \chi'(v_+)(r^2 + a^2)(1+ \lambda \Delta)|\rd_r\psi|^2 \\[1cm]
-\chi(v_+)(1+ \lambda \Delta)2a\Rea(\rd_r\rd_{\varphi_+}\psi \overline{\rd_r \psi}) &\eai 0 \\[1cm]
-\chi(v_+)(1+ \lambda \Delta)\Delta\Rea(\rd_r^2\psi \overline{\rd_r\psi}) &= - \rd_r\Big(\frac{1}{2}\chi(v_+)(1+ \lambda \Delta)\Delta|\rd_r\psi|^2\Big) + \dashuline{\frac{1}{2}\chi(v_+) \rd_r\Delta(1+2\lambda \Delta)|\rd_r\psi|^2}\\[1cm]
-\chi(v_+)(1+ \lambda \Delta)\Rea(\mathring{\slashed{\Delta}}_{[s]} \psi \overline{\rd_r\psi}) &\eai -\rd_r\Big(\frac{1}{2} \chi(v_+)(1+ \lambda \Delta)(s+s^2)|\psi|^2\Big) +\dotuline{\frac{1}{2}\chi(v_+)\lambda\rd_r\Delta(s+s^2)|\psi|^2} \\
&\quad + \rd_r\Big(\frac{1}{2}\chi(v_+)(1+ \lambda \Delta)\sum_i|\widetilde{Z}_{i,+}\psi|^2\Big) 
- \uwave{\frac{1}{2} \chi(v_+)\lambda \rd_r \Delta \sum_i|\widetilde{Z}_{i,+}\psi|^2} 
\end{align*}

\newpage
\subsection{The multiplier $\chi(v_+)(1 + \lambda \Delta)\overline{\partial_{v_+} \psi}$} \label{ApV}
\begin{align*}
 \chi(v_+)(1+ \lambda \Delta)a^2 \sin^2 \theta \, \Rea(\rd_{v_+}^2\psi \overline{\rd_{v_+}\psi})&= \rd_{v_+}\Big(\frac{1}{2} \chi(v_+)(1+ \lambda \Delta)a^2 \sin^2\theta \, |\rd_{v_+}\psi|^2\Big) \\[4pt]
 &\quad - \frac{1}{2}\chi'(v_+)(1 + \lambda \Delta)a^2 \sin^2 \theta\, |\rd_{v_+}\psi|^2 \\[1cm]
  \chi(v_+)(1+ \lambda \Delta)2a\Rea(\rd_{v_+}\rd_{\varphi_+}\psi \overline{\rd_{v_+}\psi}) &\eai 0 \\[1cm]
  \chi(v_+)(1+ \lambda \Delta)2(r^2 + a^2)\Rea(\rd_{v_+}\rd_r\psi \overline{\rd_{v_+}\psi}) &= \rd_r\Big( \chi(v_+)(1+ \lambda \Delta)(r^2 + a^2)|\rd_{v_+}\psi|^2\Big) \\[4pt]
  &\quad - \uwave{\chi(v_+)\big((r^2 + a^2)\lambda \partial_r \Delta + 2r(1 + \lambda \Delta)\big)|\rd_{v_+}\psi|^2}\\[1cm]
   \chi(v_+)(1+ \lambda \Delta)2a \Rea(\rd_r\rd_{\varphi_+}\psi\overline{\rd_{v_+}\psi}) &\eai-\rd_{v_+} \Big(a \chi(v_+)(1 + \lambda \Delta) \Rea(\rd_{\varphi_+} \psi \overline{\rd_r \psi})\Big)\\[4pt] 
&\quad + a\chi'(v_+)(1 + \lambda \Delta)\Rea(\rd_{\varphi_+}\psi \overline{\rd_r\psi}) \\[4pt]
&\quad +\rd_r\Big(a\chi(v_+)(1+\lambda \Delta)\Rea(\rd_{\varphi_+}\psi \overline{\rd_{v_+}\psi})\Big)\\ 
&\quad - \uwave{a\chi(v_+)\lambda(\rd_r\Delta)\Rea(\rd_{\varphi_+}\psi \overline{\rd_{v_+} \psi})} \\[1cm]
\chi(v_+)(1+ \lambda \Delta)\Delta\Rea(\rd_r^2\psi \overline{\rd_{v_+} \psi}) &= \rd_r\Big( \chi(v_+)(1+ \lambda \Delta) \Delta \Rea(\rd_r\psi \overline{\rd_{v_+}\psi})\Big)\\[4pt] 
&\quad - \chi(v_+)\rd_r\Delta(1+2\lambda \Delta)\Rea(\rd_r\psi \overline{\rd_{v_+}\psi}) \\[4pt]
&\quad -\rd_{v_+}\Big(\frac{1}{2} \chi(v_+)(1+ \lambda \Delta)\Delta|\rd_r\psi|^2\Big) + \frac{1}{2}\chi'(v_+)(1 + \lambda \Delta)\Delta |\rd_r \psi|^2 \\[1cm]
 \chi(v_+)(1+ \lambda \Delta)\Rea(\mathring{\slashed{\Delta}}_{[s]} \psi \overline{\rd_{v_+}\psi}) &\eai \rd_{v_+}\Big(\frac{1}{2}  \chi(v_+)(1+ \lambda \Delta)(s+s^2)|\psi|^2\Big) - \frac{1}{2}\chi'(v_+)(1 + \lambda \Delta)(s+ s^2)|\psi|^2 \\[4pt]
 &\quad -\rd_{v_+}\Big(\frac{1}{2}  \chi(v_+)(1+ \lambda \Delta)\sum_i |\widetilde{Z}_{i,+} \psi|^2\Big) + \frac{1}{2}\chi'(v_+)(1 + \lambda \Delta)\sum_i |\widetilde{Z}_{i,+} \psi|^2
\end{align*}

\newpage
\subsection{The multiplier $\chi(v_+)(1 + \lambda \Delta)\frac{a}{r_-^2 + a^2}\overline{\partial_{\varphi_+} \psi}$} \label{ApPhi}
\begin{align*}
\chi(v_+)(1 + \lambda \Delta) \frac{a}{r_-^2 + a^2} a^2 \sin^2 \theta \Rea(\rd_{v_+}^2 \psi \overline{\rd_{\varphi_+} \psi}) &\eai \rd_{v_+} \big( \chi(v_+)(1 + \lambda \Delta) \frac{a^3 \sin^2 \theta}{r_-^2 + a^2} \Rea(\rd_{v_+} \psi \overline{\rd_{\varphi_+} \psi}) \\[4pt]
&\quad  - \chi'(v_+)(1 + \lambda \Delta) \frac{a^3 \sin^2 \theta}{r_-^2 + a^2} \Rea(\rd_{v_+} \psi \overline{\rd_{\varphi_+} \psi}) \\[1cm]
\chi (v_+)(1 + \lambda \Delta) \frac{a}{r_-^2 + a^2} 2a \Rea(\rd_{v_+} \rd_{\varphi_+} \psi \overline{\rd_{\varphi_+} \psi}) &= \rd_{v_+}\big( \chi(v_+)(1 + \lambda \Delta) \frac{a^2}{r_-^2 + a^2} | \rd_{\varphi_+} \psi|^2\big) \\[4pt]
&\quad - \chi'(v_+)(1 + \lambda \Delta) \frac{a^2}{r_-^2 + a^2} |\rd_{\varphi_+} \psi|^2 \\[1cm]
\chi(v_+)(1 + \lambda \Delta) \frac{a}{r_-^2 + a^2} 2 (r^2 + a^2) \Rea(\rd_{v_+} \rd_r \psi \overline{\rd_{\varphi_+} \psi}) &\eai \rd_{v_+} \big( \chi(v_+)(1+ \lambda \Delta) \frac{a(r^2 + a^2)}{r_-^2 + a^2} \Rea( \rd_r \psi \overline{\rd_{\varphi_+} \psi}) \big) \\[4pt]
&\quad  -\chi'(v_+)(1 + \lambda \Delta) \frac{a(r^2 + a^2)}{r_-^2 + a^2} \Rea(\rd_r \psi \overline{\rd_{\varphi_+} \psi}) \\[4pt]
&\quad + \rd_r\big( \chi(v_+)(1+ \lambda \Delta) \frac{a(r^2 +a^2)}{r_-^2 +a^2} \Rea(\rd_{\varphi_+} \psi \overline{\rd_{v_+} \psi}) \big)\\[4pt]
&\quad -\uwave{ \chi(v_+) \lambda (\rd_r \Delta) \frac{a(r^2 + a^2)}{r_-^2 + a^2} \Rea(\rd_{\varphi_+} \psi \overline{ \rd_{v_+} \psi})} \\[4pt]
&\quad - \chi(v_+)(1 + \lambda \Delta) \frac{2ar}{r_-^2 + a^2} \Rea( \rd_{\varphi_+} \psi \overline{\rd_{v_+} \psi})\\[1cm]
\chi(v_+)(1 + \lambda \Delta) \frac{a}{r_-^2 + a^2} 2a \Rea(\rd_{\varphi_+} \rd_r \psi \overline{\rd_{\varphi_+} \psi}) &= \rd_r\big(\chi(v_+)(1+ \lambda \Delta) \frac{a^2}{r_-^2 + a^2} |\rd_{\varphi_+} \psi|^2\big) \\[4pt]
&\quad - \uwave{\chi(v_+) \lambda (\rd_r \Delta) \frac{a^2}{r_-^2 + a^2} |\rd_{\varphi_+} \psi|^2} \\[1cm]
\chi(v_+)(1+ \lambda \Delta)\frac{a}{r_-^2 + a^2} \Delta \Rea(\rd_r^2 \psi \overline{ \rd_{\varphi_+} \psi}) &\eai \rd_r\big(\chi(v_+) \frac{a}{r_-^2 + a^2} (1+ \lambda \Delta) \Delta \Rea(\rd_r \psi \overline{ \rd_{\varphi_+} \psi})\big) \\[4pt]
&\quad -\chi(v_+) \frac{a}{r_-^2 + a^2} \lambda (\rd_r \Delta) \Delta \Rea(\rd_r \psi \overline{\rd_{\varphi_+} \psi}) \\[4pt]
&\quad -\chi(v_+) \frac{a}{r_-^2 + a^2}(1+ \lambda \Delta)(\rd_r \Delta)\Rea(\rd_r \psi \overline{\rd_{\varphi_+} \psi}) \\[1cm]
\chi(v_+) (1+ \lambda \Delta)\frac{a}{r_-^2 + a^2}\Rea(\mathring{\slashed{\Delta}}_{[s]} \psi \overline{ \rd_{\varphi_+} \psi}) &\eai 0
\end{align*}

\newpage

\section{Commutator computations for \eqref{TeukolskyEquation3}} \label{AppendixBL}

The second order terms of $\mathcal{T}_{[s]}$ in Boyer-Lindquist coordinates are
\begin{equation*}
-\Big[\frac{(r^2 + a^2)^2}{\Delta} - a^2 \sin^2 \theta\Big] \partial_t^2\psi - \frac{4Mar}{ \Delta} \partial_t \partial_\varphi \psi - \frac{a^2}{ \Delta}\partial^2_\varphi \psi + \Delta \partial_r^2 \psi  +\mathring{\slashed\Delta}_{[s]} \psi \;.
\end{equation*}
We use $-\chi(t) e^{\lambda r} \overline{\rd_r \psi}$ as a multiplier and compute the commutator expressions again term by term.

\subsection{The multiplier $-\chi(t) e^{\lambda r} \overline{\rd_r \psi}$}\label{ApBoyerR}

\begin{align*}
\chi(t) e^{\lambda r} \Big[\frac{(r^2 + a^2)^2}{\Delta} - a^2 \sin^2 \theta\Big] \Rea(\partial_t^2\psi \overline{\rd_r \psi}) &= \rd_t\Big( \chi(t) e^{\lambda r}\Big[\frac{(r^2 + a^2)^2}{\Delta} - a^2 \sin^2 \theta\Big] \Rea(\partial_t\psi \overline{\rd_r \psi})\Big) \\[4pt]
&\quad -\chi'(t) e^{\lambda r}\Big[\frac{(r^2 + a^2)^2}{\Delta} - a^2 \sin^2 \theta\Big] \Rea(\partial_t\psi \overline{\rd_r \psi}) \\[4pt]
&\quad -\frac{1}{2} \rd_r\Big(\chi(t) e^{\lambda r}\Big[\frac{(r^2 + a^2)^2}{\Delta} - a^2 \sin^2 \theta\Big] |\partial_t\psi|^2\Big) \\[4pt]
&\quad+\uwave{\frac{1}{2} \chi(t) e^{\lambda r}\Big[\lambda\Big(\frac{(r^2 + a^2)^2}{\Delta} - a^2 \sin^2 \theta\Big)} + \rd_r \Big(\frac{(r^2 + a^2)^2}{\Delta}\Big)\Big] \uwave{|\rd_t \psi|^2 }\\[1cm]
\chi(t)e^{\lambda r} \frac{4Mar}{\Delta} \Rea(\rd_t\rd_\varphi \psi \overline{\rd_r \psi}) &\eai \frac{1}{2} \rd_t\big(\chi(t) e^{\lambda r} \frac{4Mar}{\Delta} \Rea(\rd_\varphi \psi \overline{\rd_r \psi})\big) \\[4pt]
&\quad -\frac{1}{2} \chi'(t) e^{\lambda r} \frac{4Mar}{\Delta} \Rea(\rd_\varphi \psi \overline{\rd_r \psi}) \\[4pt]
&\quad -\frac{1}{2} \rd_r \big(\chi(t)e^{\lambda r} \frac{4Mar}{\Delta} \Rea(\rd_\varphi \psi \overline{\rd_t \psi})\big)\\[4pt]
&\quad +\uwave{\frac{1}{2} \chi(t)e^{\lambda r} \big(\lambda \frac{4Mar}{\Delta} } +\rd_r\Big(\frac{4Mar}{\Delta}\Big)\big)\uwave{\Rea(\rd_\varphi \psi \overline{\rd_t \psi})}\\[1cm]
\chi(t)e^{\lambda r} \frac{a^2}{\Delta} \Rea(\rd_{\varphi}^2 \psi \overline{\rd_r \psi}) &\eai -\rd_r \big( \frac{1}{2} \chi(t) e^{\lambda r} \frac{a^2}{\Delta}|\rd_\varphi\psi|^2\big) + \uwave{\frac{1}{2} \chi(t) e^{\lambda r}\Big[ \lambda \frac{a^2}{\Delta}} + \rd_r\Big(\frac{a^2}{\Delta}\Big)\Big]\uwave{|\rd_\varphi \psi|^2}\\[1cm]
-\chi(t) e^{\lambda r}{\Delta} \Rea(\rd_r^2 \psi \overline{\rd_r \psi}) &= - \rd_r\big(\frac{1}{2} \chi(t) e^{\lambda r} \Delta |\rd_r \psi|^2\big) + \uwave{\frac{1}{2}\chi(t)e^{\lambda r}(\lambda \Delta } + \rd_r \Delta)\uwave{|\rd_r\psi|^2}\\[1cm]
-\chi(t) e^{\lambda r} \Rea (\mathring{\slashed\Delta}_{[s]}\psi \overline{\rd_r \psi}) &\eai - \rd_r \big( \frac{1}{2} \chi(t) e^{\lambda r}(s + s^2) |\psi|^2 \big) + \frac{1}{2} \chi(t) (s+s^2)\lambda e^{\lambda r} |\psi|^2 \\[4pt]
&\quad + \rd_r \big(\frac{1}{2} \chi(t) e^{\lambda r} \sum_i |\zt_i \psi|^2\big) - \uwave{\frac{1}{2} \chi(t) \lambda e^{\lambda r} \sum_i |\zt_i \psi|^2}
\end{align*}

\bibliographystyle{acm}
\bibliography{Bibly}

\end{document}